\numberwithin{equation}{section}
\newtheorem{theorem}{Theorem}[section]
\newtheorem{lemma}[theorem]{Lemma}
\newtheorem{corollary}[theorem]{Corollary}
\newtheorem{proposition}[theorem]{Proposition}
\allowdisplaybreaks \numberwithin{equation}{section}
\def\Tr{\mathop{\rm Tr}\nolimits}
\def\res{\mathop{\rm Res}}
\newcommand{\PP}{{\mathbb P}}
\def\div{\mathop{\rm Div}\nolimits}
\def\Jac{\mathop{\rm Jac}\nolimits}
\def\diag{\mathop{\rm Diag}\nolimits}
\def\adj{\mathop{\rm Adj}\nolimits}
\def\Adj{\mathop{\rm  Adj}\nolimits}
\def\im{\mathop{\imath}}
\def\sn{\mathop{\rm sn}\nolimits}
\def\cn{\mathop{\rm cn}\nolimits}
\def\dn{\mathop{\rm dn}\nolimits}
\let\@wraptoccontribs\wraptoccontribs\makeatother
\begin{document}

\title[Charge $2$ Monopole]{The Charge $2$ Monopole via the ADHMN construction}

\author{ H.W. Braden}
\address{School of Mathematics and Maxwell Institute of Mathematical Sciences, University of Edinburgh, Edinburgh.}
\email{hwb@ed.ac.uk}
\author{V.Z. Enolski\vspace{0.15in}}
\address{National University of Kyiv-Mohyla Academy, 
Hryhoriya Skovorody St, 2,
Kyiv, Ukraine. 
}
\email{ enolsky@ukma.edu.ua, venolski@googlemail.com}

\contrib[(with an Appendix by]{David E. Braden, Peter Braden and H.W. Braden)}

\begin{abstract}
Recently we have shown how one may use use integrable systems techniques to implement the ADHMN
construction and obtain general analytic formulae for the charge $n$ $su(2)$ Euclidean monopole. 
Here we do this for the case of charge $2$: so answering an open problem of some 30 years standing. A comparison with known results and other approaches
is made and new results presented.

\end{abstract}
\thanks{EMPG-19-05}
\maketitle
\tableofcontents

\section{Introduction}
This paper describes the exact solution of the gauge and Higgs fields for charge two $su(2)$ Euclidean monopoles. Despite BPS monopoles having been studied for over 35 years, and having uncovered extraordinarily beautiful structures, such analytic reconstruction has (with the exception of some partial results that will later be recalled) proved too hard. We often know more about the moduli space of these solutions than we do the actual fields. This is particularly true in the charge two setting: the Atiyah-Hitchin manifold, the moduli space of the centred charge two monopoles, is a well-studied and rich object
and yet the analytic solution of the fields has proved elusive. Recently a general program for reconstructing the gauge theory data for $su(2)$ Euclidean monopoles of general charge has been given, circumventing a number of
previously intractable steps. This lowest charge case is a useful testing ground and will produce a
number of new results. (The spherically symmetric  case for charge one and coincident charge $n$ monopoles is amenable to other approaches.) We will compare our results with some of the numerical studies that have been
undertaken. 
Although constructing exact solutions -- be they of gravity or gauge theory -- is often viewed as a rather recondite area of research analytic solutions give at the very least some control over 
numerical results.

The algebro-geometric construction of $su(2)$ Euclidean monopoles described here is built upon the substantive work of a number of authors. Particularly relevant (with more detail following) are:
\begin{enumerate}[(i)]
\item  Nahm's modification of the ADHM construction of instantons \cite{Nahm:1979yw, nahm_80}. 
This introduces $n\times n$ matrices $T_i(z)$ ($j=1,\ldots,4$) that satisfy a system of ordinary differential equations (Nahm's equations) and an operator $\Delta$. 

\item Nahm's equations may be written as a Lax pair $\dot L=[L,M]$. Here there is a spectral parameter
$L=L(\zeta)$, $\zeta\in \mathbb{P}\sp1$, and the characteristic equation $P(\eta,\zeta):=\det(\eta-L(\zeta))=0$ defines a spectral curve $\mathcal{C}\subset T\mathbb{P}\sp1$ where the mini-twistor space
$T\mathbb{P}\sp1$ is the geometric setting for Hitchin's description of monopoles \cite{nahm82c, hitchin_82}.

\item The gauge and Higgs fields are then constructed from integrals (over $z$) of bilinears involving the two normalizable solutions to the Weyl equation $\Delta\sp\dagger \boldsymbol{v}=0$.  Hitchin proved 
\cite{hitchin_83} that the regularity of these fields placed certain constraints on the curve $\mathcal{C}$. We shall describe a curve
satisfying Hitchin's constraints as a \emph{monopole spectral curve}.

\item These integrals may in fact be performed using formulae of Panagopoulos \cite{panagopo83, Braden2018b}.

\item Ercolani and Sinha showed how one could use integrable systems techniques to solve for a gauge transform of the Nahm data in terms of a Baker-Akhiezer function $\Phi_{BA}$ associated to $\mathcal{C}$
\cite{ercolani_sinha_89}. Here one of Hitchin's constraints on the curve is reexpressed in terms of the direction $\mathbf{U}$ of flow on the Jacobian  $\Jac(\mathcal{C})$. The Ercolani-Sinha vector $\mathbf{U}$ is a half-period \cite{houghton_manton_romao_00, Braden2010d}.

\item Using a lesser known ansatz of Nahm the authors showed how one might solve for $\boldsymbol{v}$
in terms of the Baker-Akhiezer function $\Phi_{BA}$ and the same (unkown) gauge transformation 
employed by Ercolani and Sinha \cite{nahm82c, Braden2018b}.

\item Finally it has been shown how to eliminate the unknown gauge transformation to reconstruct the
gauge and Higgs fields  \cite{Braden2018b}.

\end{enumerate}
At this stage one has a way of analytically constructing the gauge and Higgs fields given a monopole
spectral curve. Several remarks are however in order.
The number of known monopole spectral curves is few: although Hitchin's constraints on a curve are algebro-geometric in nature a constructive solution is still lacking \cite{Braden2018c}. 
The construction outlined does not yet
provide a solution to the Nahm equations in standard ($T_4=0$) gauge. Notwithstanding such questions
we may now in principle analytically construct solutions.

To provide the context to the contents and new results of this paper we must first recall the various analytic
approaches to studying BPS monopoles.

\subsection{Three Analytic Approaches}

There have been three approaches to constructing analytic solutions of the $su(2)$ monopole equations on
$\mathbb{R}^3$: via the $\mathcal{A}_k$ ansatz of Atiyah-Ward; via an ansatz of {Forg{\'a}cs,  Horv{\'a}th  and Palla } that emerged from their study of axially symmetric monopoles and the Ernst equation; and via
Nahm's modification of the ADHM construction of instantons. 
We shall briefly describe these. 
In all three approaches the spectral curve of the monopole appears and the importance of this curve was gradually elucidated. Further, most authors focussed on calculating the Higgs field and the gauge invariant quantity $\frac12 \Tr\Phi^2$. With appropriate
choices points on the spatial axes are related to points on the $n=2$ spectral curve by biquadratic
equations rather than the more general quartic equation and this meant the Higgs field on the coordinate axes was more amenable to study. One early result \cite{bpp82}[(7.2)] was that the Higgs field at the origin 
gave (in units described in the sequel)
\begin{equation}\label{higgsorigin}
-\frac12 \Tr\Phi^2 \big|_{(0,0,0)}= \frac{  (K(1+{k'}^2)-2E)^2}{K^2k^4}.
\end{equation}
One of the simplifying features of monopoles is that the energy density $\mathcal{E}(\boldsymbol{x})$ is
related to $\frac12 \Tr\Phi^2$ via Ward's formula \cite{Ward1981b}
\begin{equation*}
\mathcal{E}(\boldsymbol{x})=-\frac12 \nabla^2 \Tr \Phi^2.
\end{equation*}
Once one could calculate $\frac12 \Tr\Phi^2$ in any of these approaches it was possible to numerically calculate the Laplacian and subsequently the energy density: the culmination of these (amalgams of analytic and numerical)  studies were plots and a video using an early supercomputer (see below).

\subsubsection{$\mathcal{A}_k$ ansatz}
Based on Ward's identification \cite{Ward1977} of self-dual solutions to the Yang-Mills equations and appropriate vector bundles over twistor space,  Atiyah and Ward \cite{atiyah_ward_77} developed
a series of ans\"atze, the $\mathcal{A}_k$ ansatz, that reduced the construction of $su(2)$ instantons  to constructing  patching functions $g$ for gauge bundles. In terms of this data Corrigan, Fairlie, Yates and Goddard \cite{corrigan_fairlie_yates_goddard} showed how to reconstruct the gauge fields making connection with
Yang's  study of $su(2)$ instantons \cite{yang77} and Yang's equation.

Now Manton in \cite{manton78} had noted that the field equations for
BPS monopoles corresponded to the equations of static self-duality and Ward in \cite{Ward1981b} described how to
modify the patching function data to reproduce such solutions. Ward's initial ansatz produced axially symmetric\footnote{One of the surprises discovered about BPS monopoles was that an axial symmetric monopole corresponded to coincident
charges \cite{hor80}. } charge 2 monopoles and for a particular choice of constant he saw regular solutions.
Prasad and Rossi \cite{prasad_rossi_81} then produced the appropriate Atiyah-Ward patching function for the axially symmetric charge $n$ monopole.
Ward  \cite{Ward1981a} subsequently generalized his ansatz to account for separated charge $2$ monopoles and
Corrigan and Goddard \cite{corrigan_goddard_81} extended this to the general charge $n$ monopole with $4n-1$ degrees of freedom. One shortcoming with this approach was that the regularity of the gauge fields was left unproven: although the spectral curve of the monopole makes its appearance the full conditions for regularity were not obtained until Hitchin's work \cite{hitchin_83}.

Ward concludes in \cite{Ward1981a}:
\lq\lq It seems likely that the expressions for general $n$-monopole solutions, as
functions of $x$, $y$  and $z$ are so complicated that there would be little point in
trying to write them out. Of course, since we have explicit formulae, the fields
could be computed numerically to any desired degree of accuracy. One attraction
of the technique presented here is that the matrices $g$ are relatively simple,
even when the corresponding space-time fields $A_\mu$ are extremely complicated.
So one can deduce much about instantons and monopoles (such as their
existence!) without having to write down space-time expressions for them.\rq\rq\

There have been a few works that have sought to apply the Atiyah-Ward construction. Brown, Prasad and
Rossi \cite{MR632527} explored the uniqueness and assumptions of \cite{Ward1981a, corrigan_goddard_81}; their results differed in cases of non-regular monopoles. In \cite{MR658809}
O'Raifeartaigh, Rouhani and Singh looked at solving the Corrigan-Goddard constraints for $n$ monopoles
while in \cite{ors82} they studied the $n=2$ monopole in detail. This latter work presents the Higgs field
in terms of various infinite sums and their derivatives: their  \lq very complicated\rq\ expression was evaluated numerically for the
axis joining the monopoles where the zero was found to be \lq very close\rq\ and \rq barely distinguishable\rq\ from $\pm kK(k)/2$ (in our later notation) \cite[\S6, \S9]{ors82}. They write that they \lq\lq cannot guess a \lq natural\rq\ analytic expression\rq\rq\ describing this position. Brown \cite{brown83}  later evaluates these infinite sums in terms of elliptic functions. Brown in fact evaluates the Higgs field on each of the axes using
the Corrigan-Fairlie-Goddard-Yates formalism reproducing for one axis the earlier result of Brown, Prasad and Panagopoulos \cite{bpp82} (see below) obtained via the Nahm equations with the corresponding
value of $\frac12 \Tr\Phi^2$ at the origin (\ref{higgsorigin}). Without denying the importance of the Brown's work we believe that there are errors in his formulae describing behaviour on the other axes, in particular his values of the Higgs field at the origin of his $y$ and $z$ axes differ from (\ref{higgsorigin}).

\subsubsection{The Forg{\'a}cs,  Horv{\'a}th and Palla Ansatz}
Again in \cite{manton78} Manton introduced an ansatz for axially symmetric BPS monopoles that
he was unable to solve. In a series of papers Forg{\'a}cs,  Horv{\'a}th and Palla 
\cite{forgacs_horvath_palla_80, forgacs_horvath_palla_81ym, MR610412}
used the Ernst equation to study such monopoles separate to the developments of the Atiyah-Ward construction. In \cite{forgacs_horvath_palla_81} they obtained a suitable B\"acklund transformation 
reproducing\footnote{Compare (8), (9) of \cite{Ward1981b} with (22), (23) of 
\cite{forgacs_horvath_palla_81}.}  Ward's results while in \cite{forgacs_horvath_palla_81b} they look at
$n=2$, $3$, $5$ giving determinantal expressions for $\Tr\,\Phi^2$ and from this  plots for
the energy density evaluated numerically.

Forg{\'a}cs,  Horv{\'a}th and Palla subsequently
generalized their ansatz \cite{MR610412, forgacs_horvath_palla_82b, forgacs_horvath_palla_83b} to account for separated monopoles; this also made connection to Yang's equation. In  
\cite{forgacs_horvath_palla_82b, forgacs_horvath_palla_83b} their ansatz gives the Higgs field, and from this they numerically calculate the energy density plotting this for the (in our conventions) $x_2=0$ plane.
Based on the numerical evaluation of their ansatz Forg{\'a}cs,  Horv{\'a}th and Palla \cite[(21)] {forgacs_horvath_palla_82b} gave the zeros of the Higgs field
to be (in our units) $\pm kK(k)/2$ while in their later work\footnote{
This followed two analytic works: the already noted  \cite[\S6, \S9]{ors82} where 
the zero was numerically  found to be very close to $\pm kK(k)/2$;  and in \cite{bpp82} expansions for the zeros of the Higgs field were given for $k$
near $0$ and $1$, the latter being situated near $\pm kK(k)/2$.}
 \cite[\S6] {forgacs_horvath_palla_83b}
they expressed that their earlier result was to be viewed as a very good approximation of the zeros.

Using the Forg{\'a}cs,  Horv{\'a}th and Palla ansatz for the $n=2$ Higgs field Hey, Merlin, Ricketts, Vaughn and Williams \cite{hey_maerlin_ricketts_vaughn_williams, merlin_ricketts_87} made use of a very early supercomputer to determine the Higgs field and consequently (numerically) the
energy density over a region of $\mathbb{R}\sp3$ for various monopole separations. Together with
Atiyah and Hitchin this was used to produce a video describing monopole collisions
\cite{atiyah_hitchin_merlin_pottinger_ricketts}.

\subsubsection{Nahm's modification of the ADHM construction}
Nahm's modification of the ADHM construction was developed in \cite{Nahm:1979yw, nahm_80, Nahm:1981xg} and in \cite{nahm82c} he described the algebraic geometry underlying this together with his
\lq\lq lesser known\rq\rq\ ansatz. Brown, Prasad and Panagopoulos \cite{bpp82} used Nahm's formalism to
explicitly solve for the Higgs field on a portion of the axis joining two separated monopoles. This was possible because Nahm's $4\times4$ matrix equation $\Delta\sp\dagger v=0$ (see below) actually factorizes into two
$2\times2$ matrix equations. We shall show that this holds true for each axis and indeed the same Lam\'e equation results with appropriate shifts for each axis.
A significant early step in tying Nahm's work with integrable systems was then made by Ercolani and Sinha \cite{ercolani_sinha_89} who first made connection with the Baker-Akhiezer function; Houghton, Manton and
Rom\~ao  \cite{houghton_manton_romao_00} revisted this making connection with the Corrigan-Goddard
constraints \cite{corrigan_goddard_81} and the $\mathcal{A}_k$ ansatz. In a number of works culminating in
\cite{Braden2018b} the authors have shown how given a spectral curve one may solve for the monopole gauge data; this paper will, amongst other things, do this for the $n=2$ case.

\subsubsection{Spectral Curves}
As noted above a spectral curve underlies each of the analytic approaches just described. Hitchin \cite{hitchin_82}[Theorem 7.6] shows this curve determines the bundle described by the $\mathcal{A}_k$ ansatz and in \cite{hitchin_83} that it is the spectral curve of Nahm's integrable system. Also in
 \cite{hitchin_83}  Hitchin gives the necessary and sufficient conditions for a spectral curve to yield a  nonsingular monopole. Hurtubise \cite{hurtubise_83} then evaluated these constraints to produce the 
 $n=2$ spectral curve. 
\subsection{Overview and Principal Results}
While it has been known for a long time then that the spectral curve fully determines a monopole it has remained less clear how to implement this. Our approach here is to follow Nahm's construction: in
\cite{Braden2018b} we have described this for general $n$ and here we will do this concretely for $n=2$.
We will review this approach in Section \ref{background}. 
Whatever approach is adopted one needs an understanding of the spectral curve and the integrals of certain meromorphic forms on it. Sections 3-6 will determine many of the basic properties of the $n=2$ curve, its parameterizations and needed integrals. A given point $\boldsymbol{x}\in \mathbb{R}\sp3$
corresponds to (generically) $4$ points on the curve (by what we describe below as the Atiyah-Ward
constraint). We uncover a number of new special addition theorems for $\theta$-functions whose
arguments are the Abelian images of these  points as well new relations for sums of non-complete first and second kind integrals. The explicit answers and derivations for the charge two monopole depend significantly on these.
The results  of these sections will enable us to make contact with earlier results.
(Appendix A will relate the different forms of this curve used by workers over the years.)
As remarked upon above, the coordinate axes (under appropriate choices) have a number of
simplifying properties and these are described in \ref{sectionloci}.
Section \ref{sectionbitangency} describes spatial points whose twistor lines are bitangent to the spectral curve: these points will also be distinguished in various ways described in the sequel.

Only in sections 7-9 do we come to the data in Nahm's modification of the ADHM construction: section
\ref{sectionW} describes a fundamental matrix  $W$ of solutions to a matrix first order differential equation
 $\Delta W=0$ in terms of the function theory on the curve (in particular the Baker-Akhiezer function); section \ref{sectionV} describes the adjoint of
this equation  $\Delta\sp\dagger V=0$; and section \ref{sectionprojector} describes the projector from $V$
to two normalizable solutions $\Delta\sp\dagger \boldsymbol{v}=0$. (Both $\Delta$ and its adjoint
$\Delta\sp\dagger$ are describe more fully below.) From $\boldsymbol{v}$ one may construct the gauge and Higgs field. We illustrate this by constructing the Higgs field in the simpler setting for the $x_2$ axis in section \ref{sectionx2axis} recovering (\ref{higgsorigin}). In Appendix \ref{applame}
we  show this yields the result of Brown, Prasad and Panagopoulos \cite{bpp82} obtained via Lam\'e's
equation.
In section \ref{sectionHiggs} we turn to general formulae for the Higgs field and energy density where we obtain the new result  for the energy density at the origin in Proposition \ref{energy_density_origin},
\begin{align*} 
\mathcal{E}_{\boldsymbol{x}=0}(k)= \frac{32}{k^8{k'}^2K^4} \left[ k^2(K^2{k'}^2 +E^2-4EK+2K^2+k^2)
-2(E-K)^2    \right]^2,
\end{align*}
with the known limiting values $\mathcal{E}_{\boldsymbol{x}=0}(0) =\frac{8}{\pi^4}(\pi^2-8)^2$ at $k=0$ (coincident  monopoles) and  $ \mathcal{E}_{\boldsymbol{x}=0}(1)=0$ at $k=1$.
Section \ref{sectionHiggsonaxes} evaluates the general formulae for the Higgs field on the coordinate axes;
here we are able to give the equation (\ref{higgszeroposition}) describing the zero of the Higgs field.
(Again in Appendix \ref{applame} we obtain these solutions via Lam\'e's equation.) Figure \ref{HFCombined}
illustrates these results for different scales.
Finally in section \ref{sectionk0}  we take the  $k=0$ limit of our results reproducing Ward's expressions
\cite{Ward1981b}  amongst others. Throughout the text we will defer a significant number of proofs and computations to the five Appendices.
\begin{figure}
\centering
\begin{subfigure}[b]{0.49\linewidth}
\includegraphics[width=0.7\textwidth]{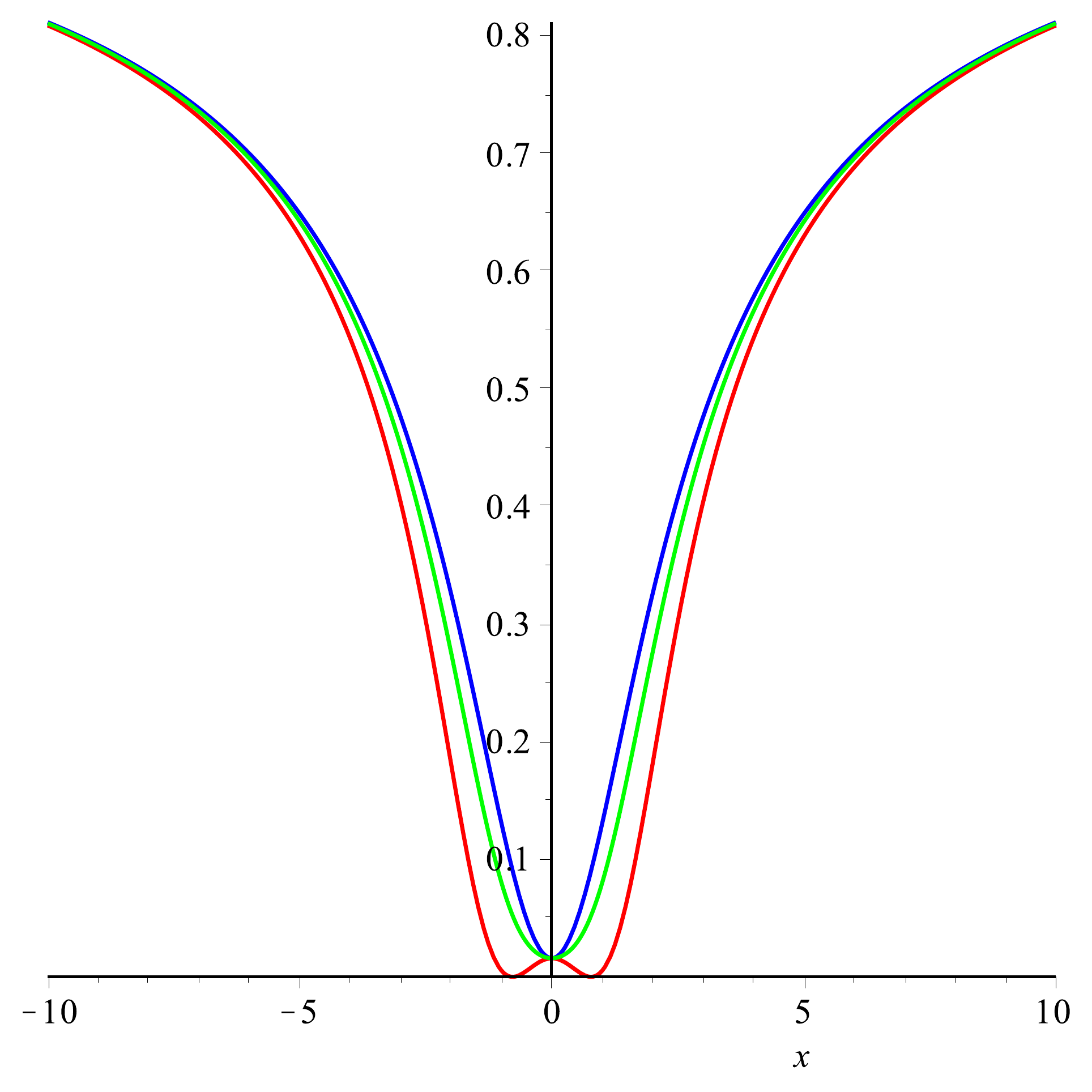}
\end{subfigure}
\begin{subfigure}[b]{0.49\linewidth}
\includegraphics[width=0.7\textwidth]{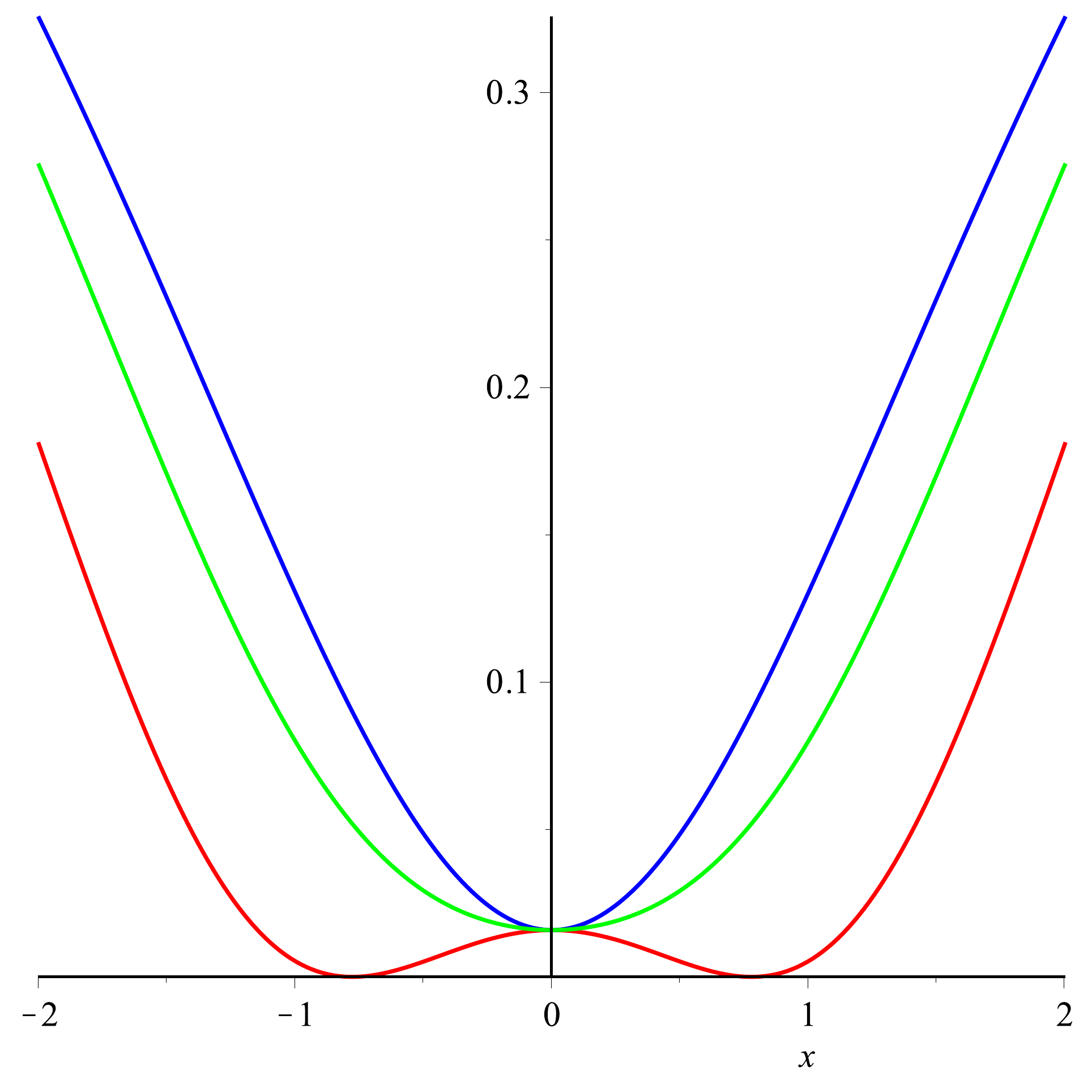}
\end{subfigure}
\caption{$-\frac12 \Tr\Phi^2$:   $x_1$-axis red; $x_2$-axis blue; $x_3$-axis green.  $k=0.8$. }
 \label{HFCombined} 
\end{figure}

We conclude this introduction by comparing our analytic results with numeric computations. Figures
\ref{AnalvsNumerHiggs} and \ref{EnergyDensityX1axis} compare our results with the numerical results
underlying\footnote{We thank Paul Sutcliffe for making these available for comparison.} the charge $2$ results of \cite{manton_sutcliffe_book} for $-\frac12 \Tr\Phi^2$ and the energy density respectively.
The clear lesson is how well these results agree. From Figure \ref{AnalvsNumerHiggs} the values of
 $- \frac12 \Tr\Phi^2$ are essentially indistinguishable for $x_\star<5$; for larger $x_\star$ one sees a
 divergence (attributable to the large intermediate quantities involved in the calculation) but in the range
of the second plot the is only $1\sim2\%$.
Figure \ref{EnergyDensityX1axis} compares the energy density: again these are essentially indistinguishable. Closer inspection of the analytic result (red) shows $4$ anomalous evaluations: these are
close to points of bitangency mentioned above (and described more fully below); these could be removed by
using l'Hopital's rule, but we have included them here to illustrate their presence.

\begin{figure}
\centering
\begin{subfigure}[t]{0.49\linewidth}
\includegraphics[width=0.9\textwidth]{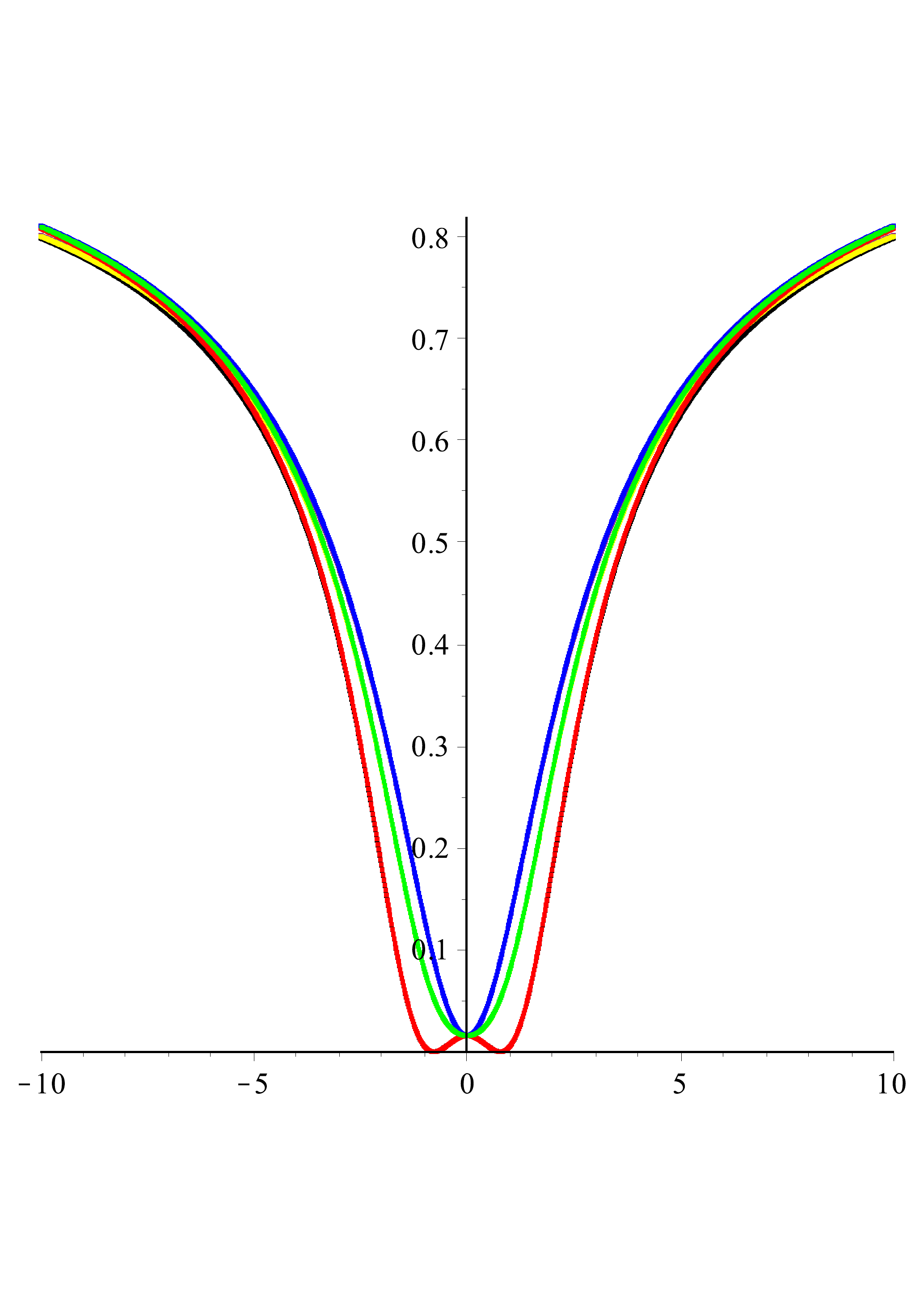}
\end{subfigure}
\begin{subfigure}[t]{0.49\linewidth}
\includegraphics[width=0.9\textwidth]{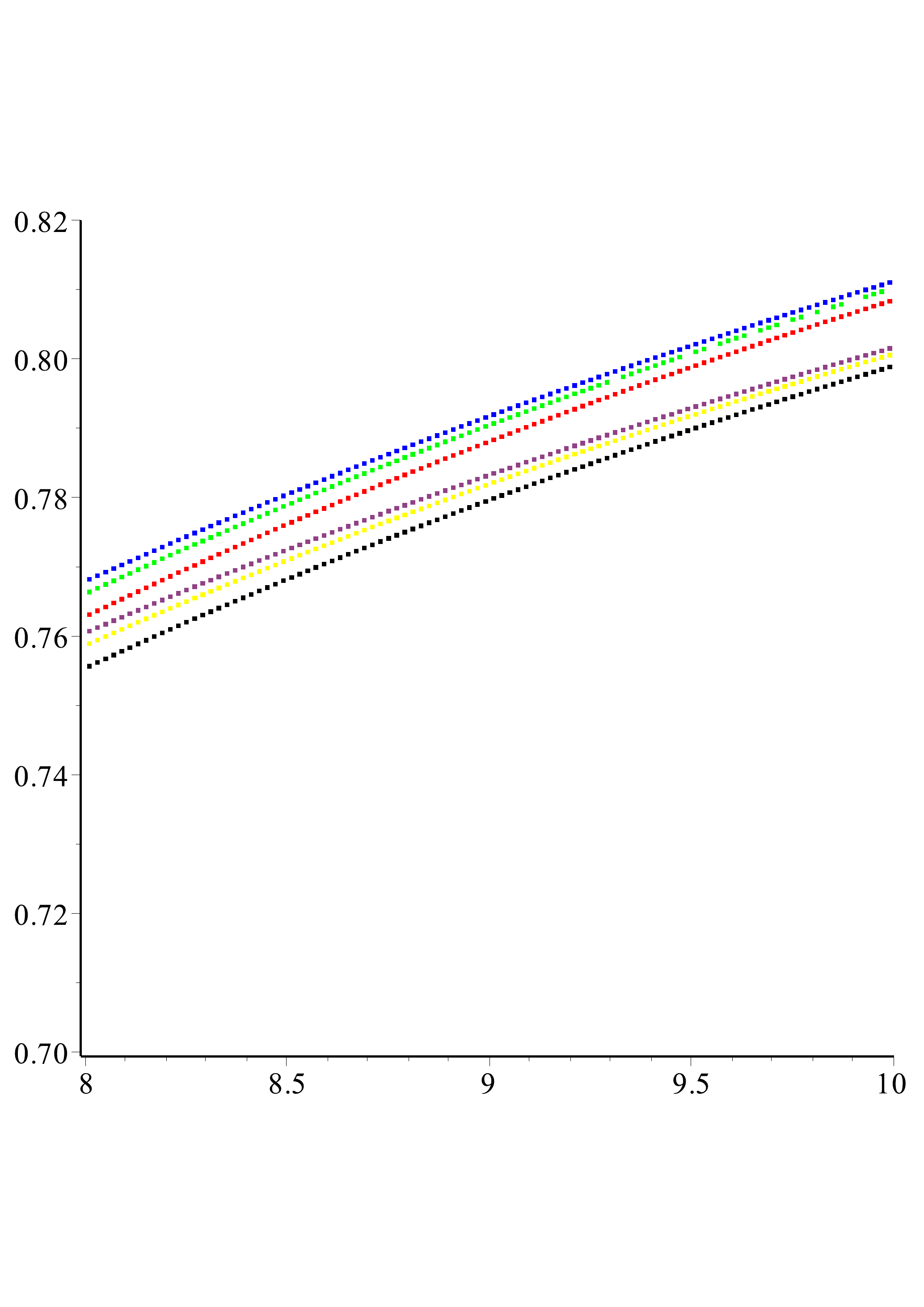}
\end{subfigure}
\caption{$-\frac12 \Tr\Phi^2$ Analytic vs Numerical:   $x_1$-axis red vs black; $x_2$-axis blue vs violet; $x_3$-axis green vs yellow. The second plot focusses on a smaller interval. $k=0.8$.  }
 \label{AnalvsNumerHiggs} 
\end{figure}

\begin{figure}
\centering
\begin{subfigure}[t]{0.3\linewidth}
\includegraphics[width=0.9\textwidth]{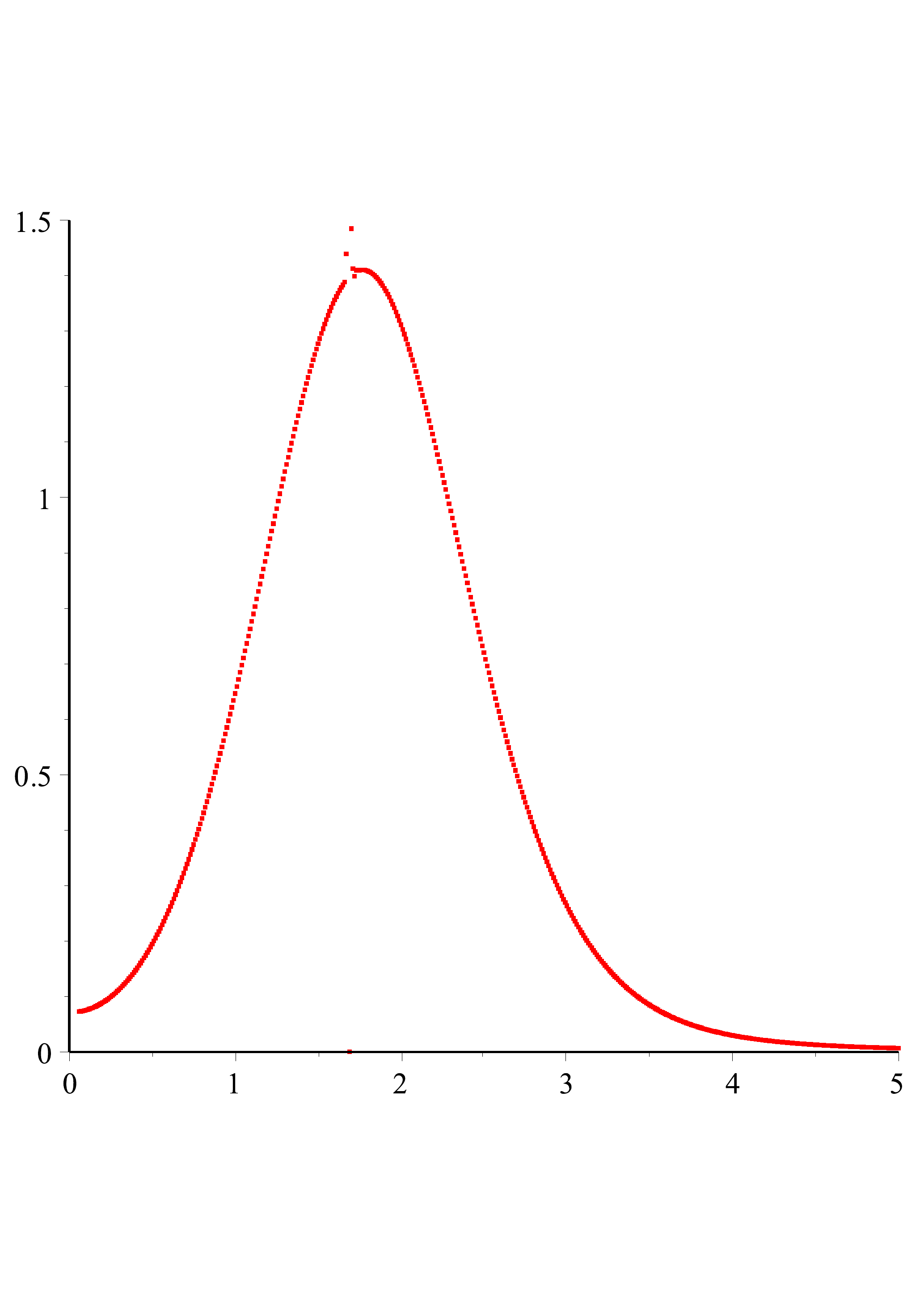}
\end{subfigure}
\begin{subfigure}[t]{0.3\linewidth}
\includegraphics[width=0.9\textwidth]{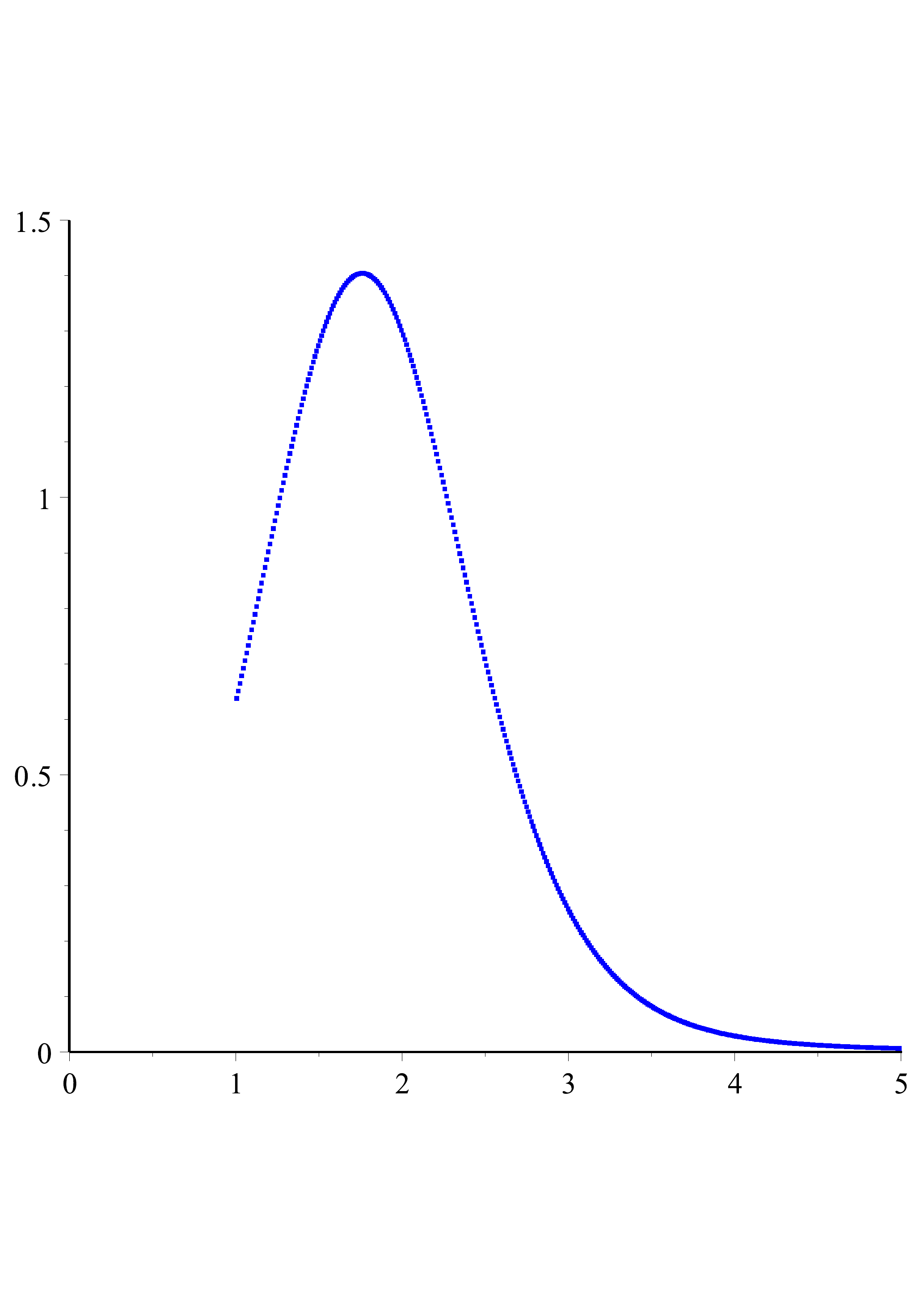}
\end{subfigure}
\begin{subfigure}[t]{0.3\linewidth}
\includegraphics[width=0.9\textwidth]{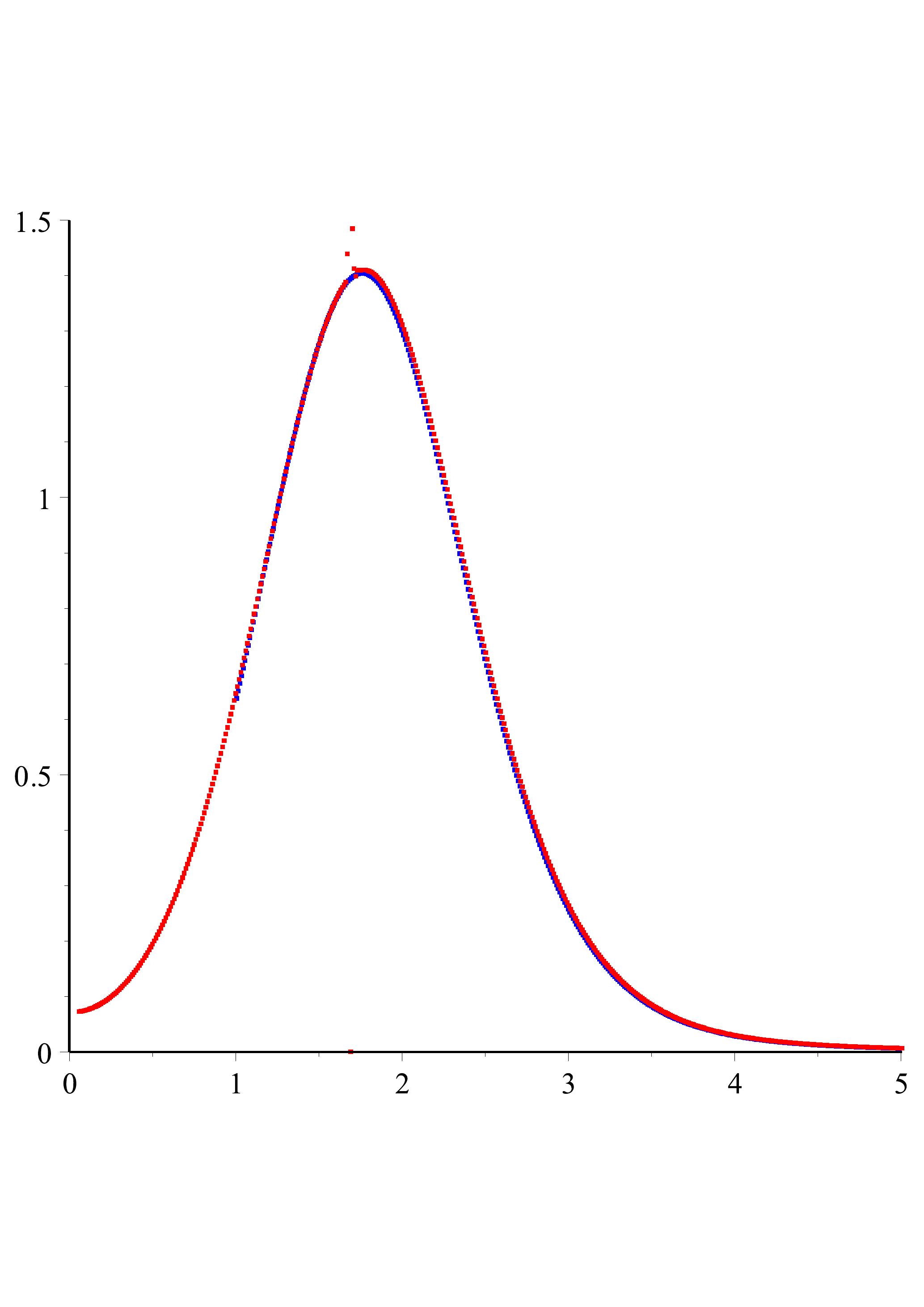}
\end{subfigure}
\caption{Energy Density on the  $x_1$-axis: analytic (red), numerical (blue) and comparison for  $k=0.99$. }
 \label{EnergyDensityX1axis} 
\end{figure}

Figures \ref{energydensityk=80}-\ref{energydensityk=99} give a number of views of the energy density as a function of $k$ ($k=0$ being coincident, and $k=1$ infinitely separated). In Appendix \ref{appBBB}
(by David E. Braden, Peter Braden and H.W. Braden) we describe and give links to both the scripts that generate the monopole numerics and tools to enable their visualisation. Three tools are given: two are interactive, and the third graphical. The first visualiser encodes energy density as opacity while the
second defines a energy density threshold above which to consider as solid (the mesh can be visualised with many mesh viewers, or even 3D printed). Screenshots of these are given in Figure \ref{bothvisualisers}. The third method of visualizing the data is a \lq Tomogram\rq\ that takes slices through the volume. We can plot the contours on
these images, or use colour to represent the density at that slice (see Figure \ref{bothtoms}).
The second last column of these figures correspond to the $k$ value of Figure  \ref{bothvisualisers}.

\begin{figure}
\centering
\begin{subfigure}[t]{0.49\linewidth}
\includegraphics[width=1.5\textwidth]{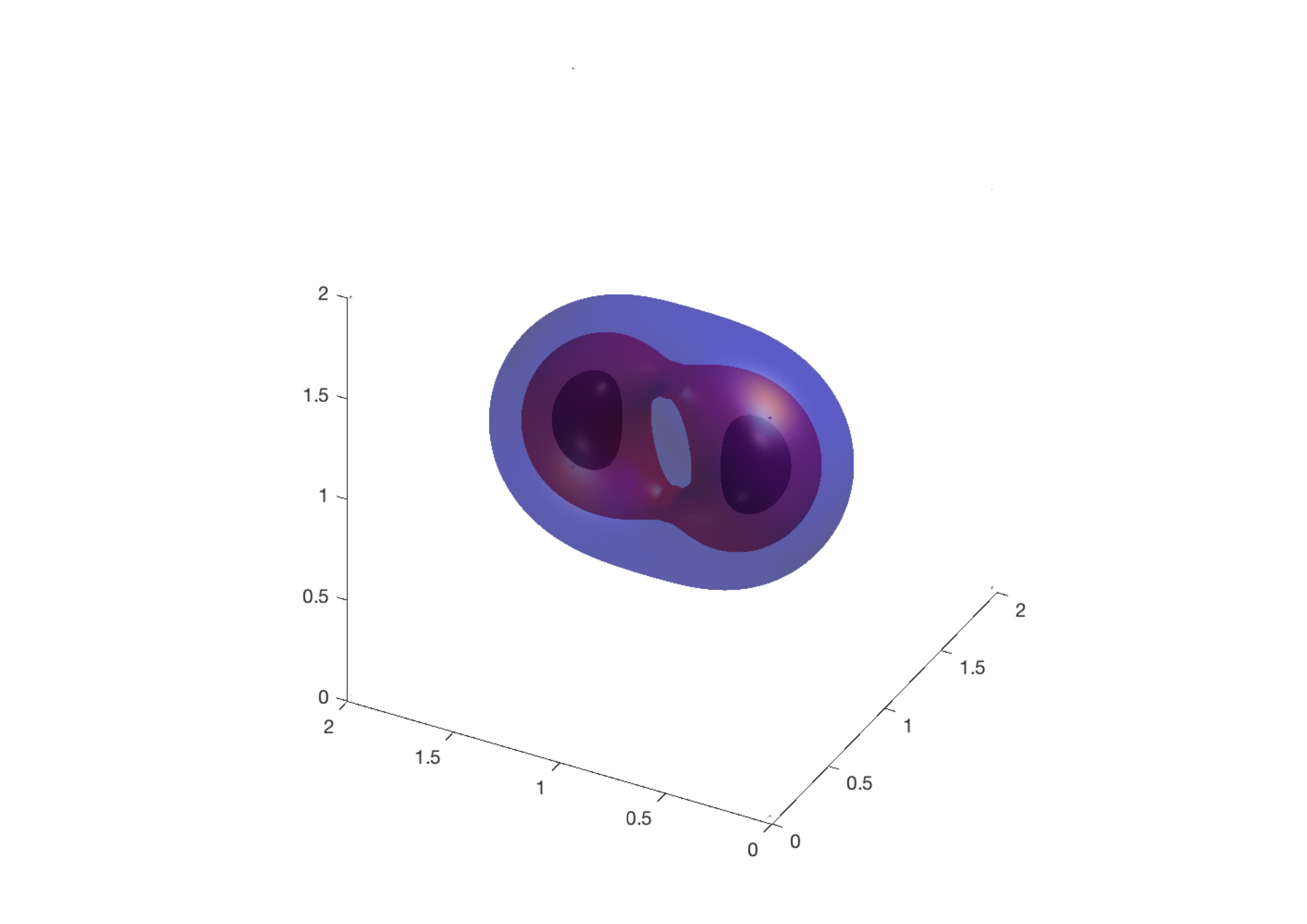}
\end{subfigure}
\begin{subfigure}[t]{0.49\linewidth}
\includegraphics[width=1.5\textwidth]{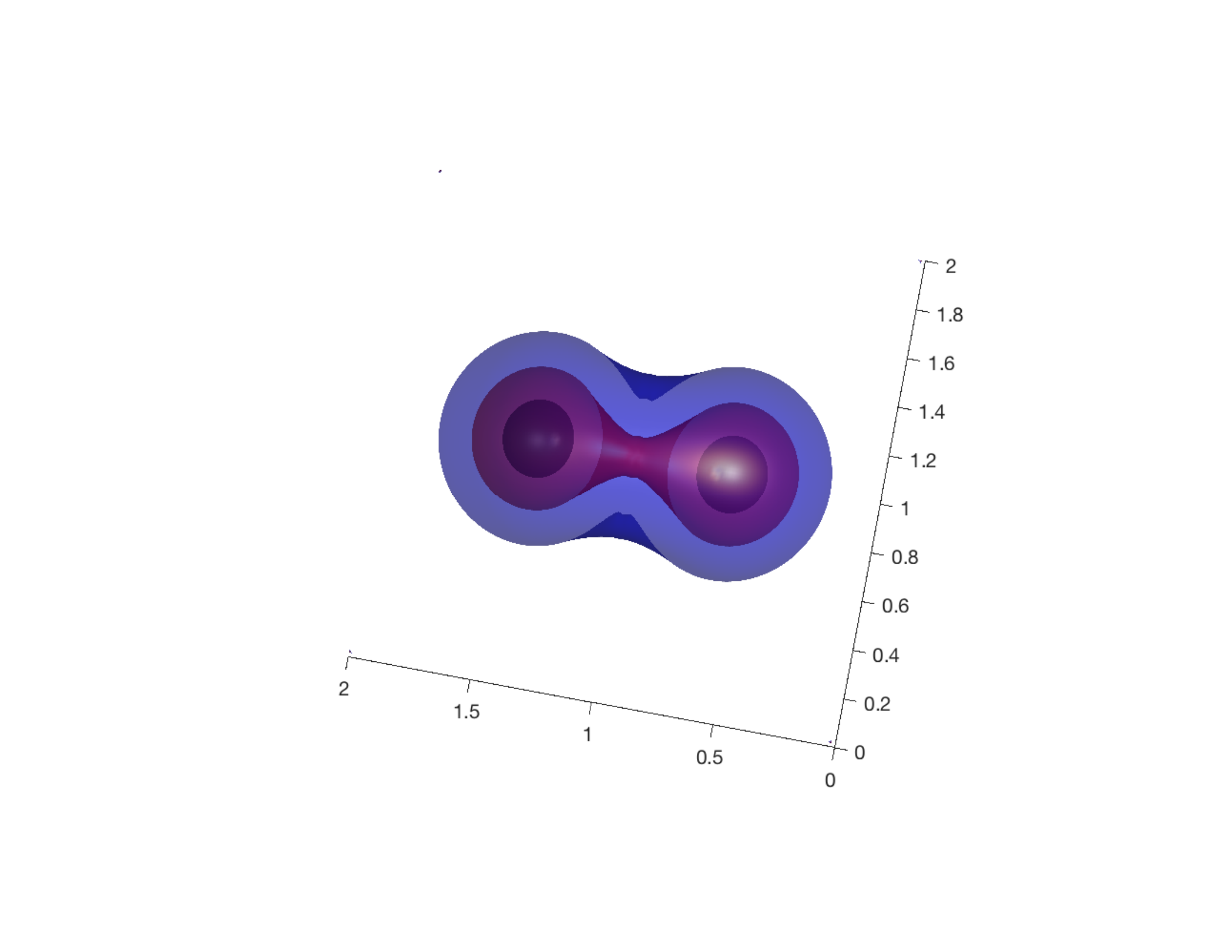}
\end{subfigure}
\caption{Two views of the Energy density $\mathcal{E}(x)$  for $k=0.8$.  Blue corresponds to the isocontour
$\mathcal{E}(x)=0.2$, red to $\mathcal{E}(x)=0.42$,  and dark red to $\mathcal{E}(x)=0.7$. } 
 \label{energydensityk=80}
\end{figure}

\begin{figure}
\centering
\begin{subfigure}[t]{0.49\linewidth}
\includegraphics[width=1.5\textwidth]{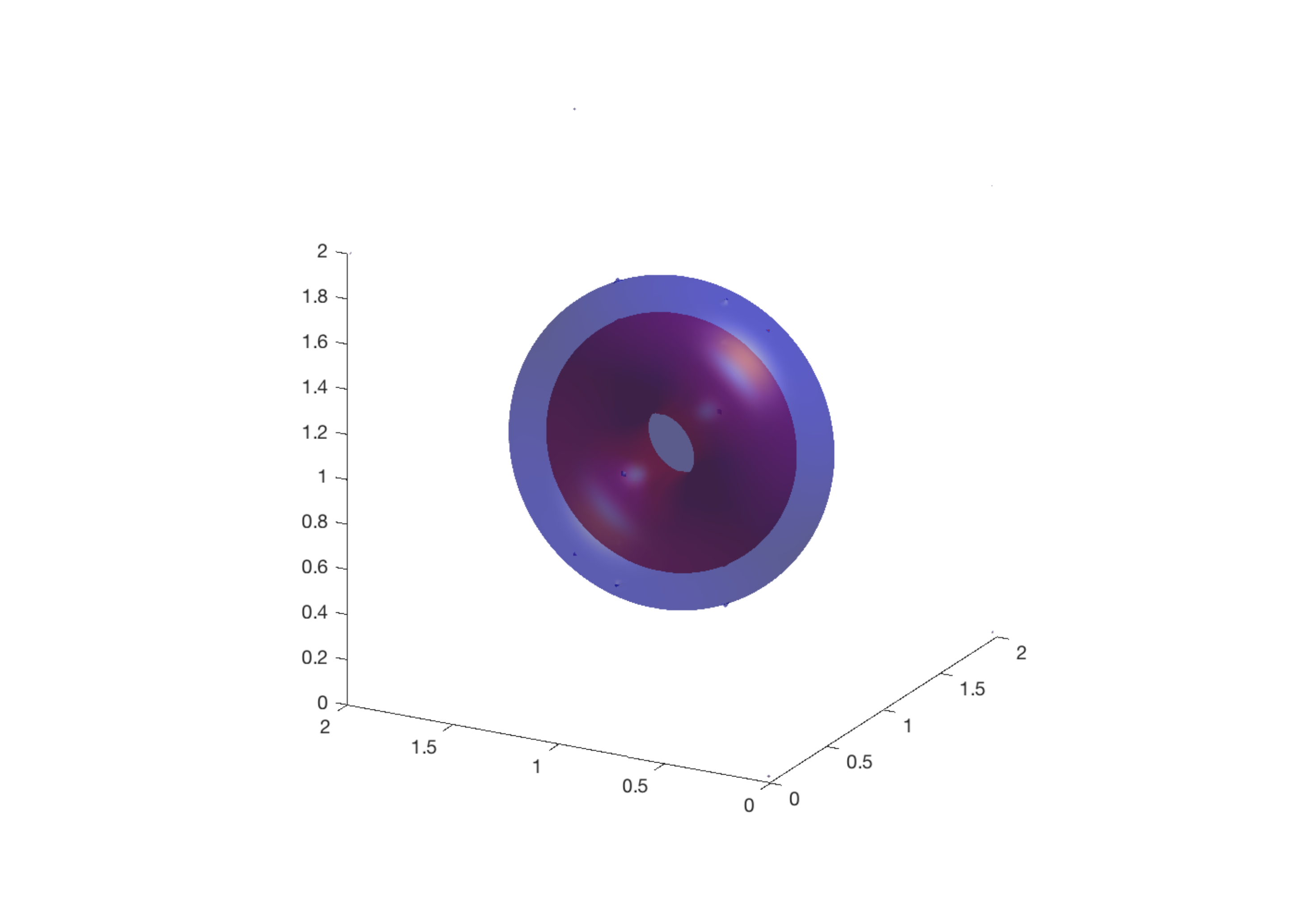}
\end{subfigure}
\begin{subfigure}[t]{0.49\linewidth}
\includegraphics[width=1.5\textwidth]{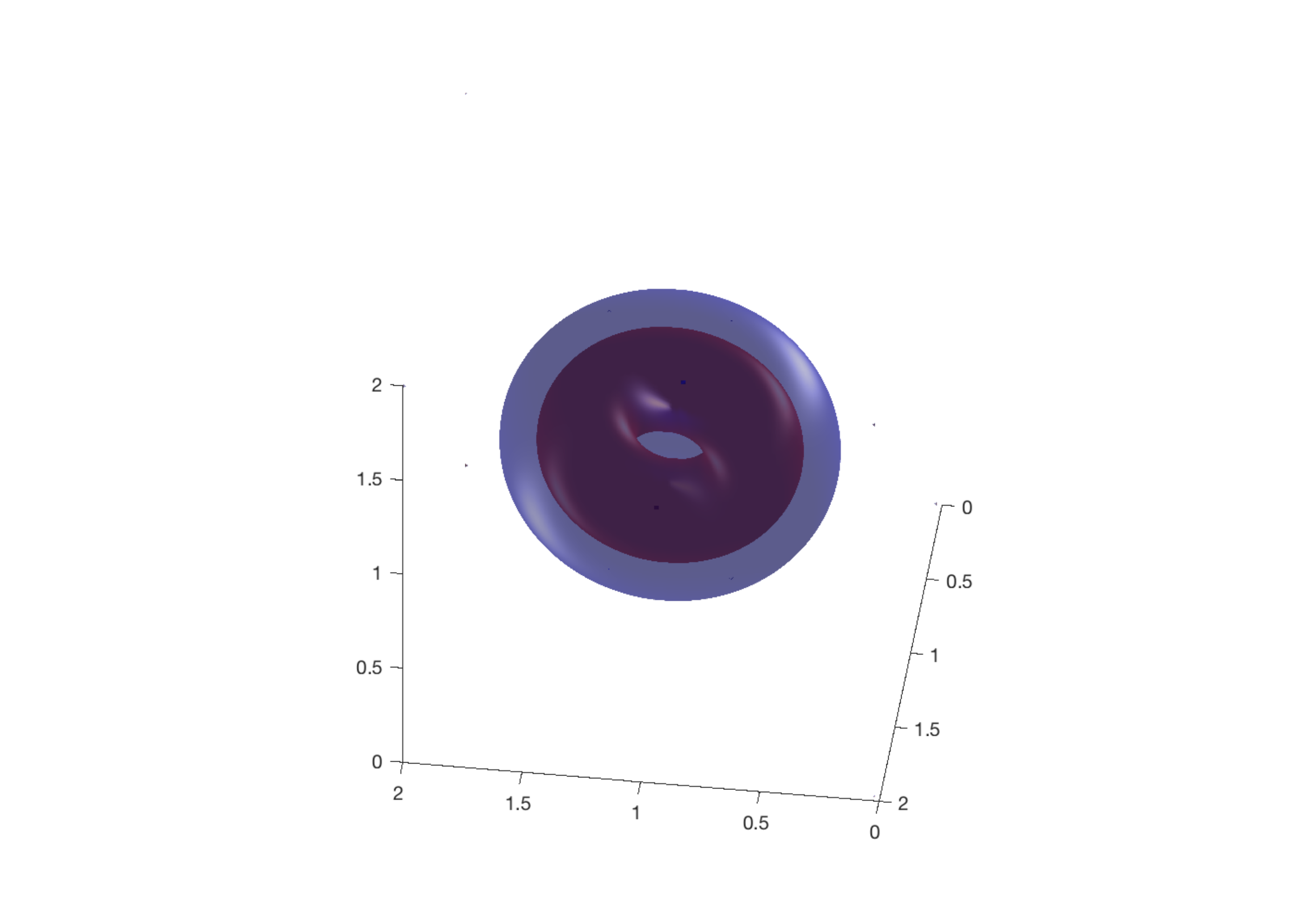}
\end{subfigure}
\caption{Two views of the Energy density $\mathcal{E}(x)$  for $k=0.05$.  Blue corresponds to the isocontour
$\mathcal{E}(x)=0.2$, red to $\mathcal{E}(x)=0.42$. The energy density $\mathcal{E}(x)=0.7$ is not
achieved. } 
 \label{energydensityk=05}
\end{figure}

\begin{figure}
\centering
\begin{subfigure}[t]{0.49\linewidth}
\includegraphics[width=1.5\textwidth]{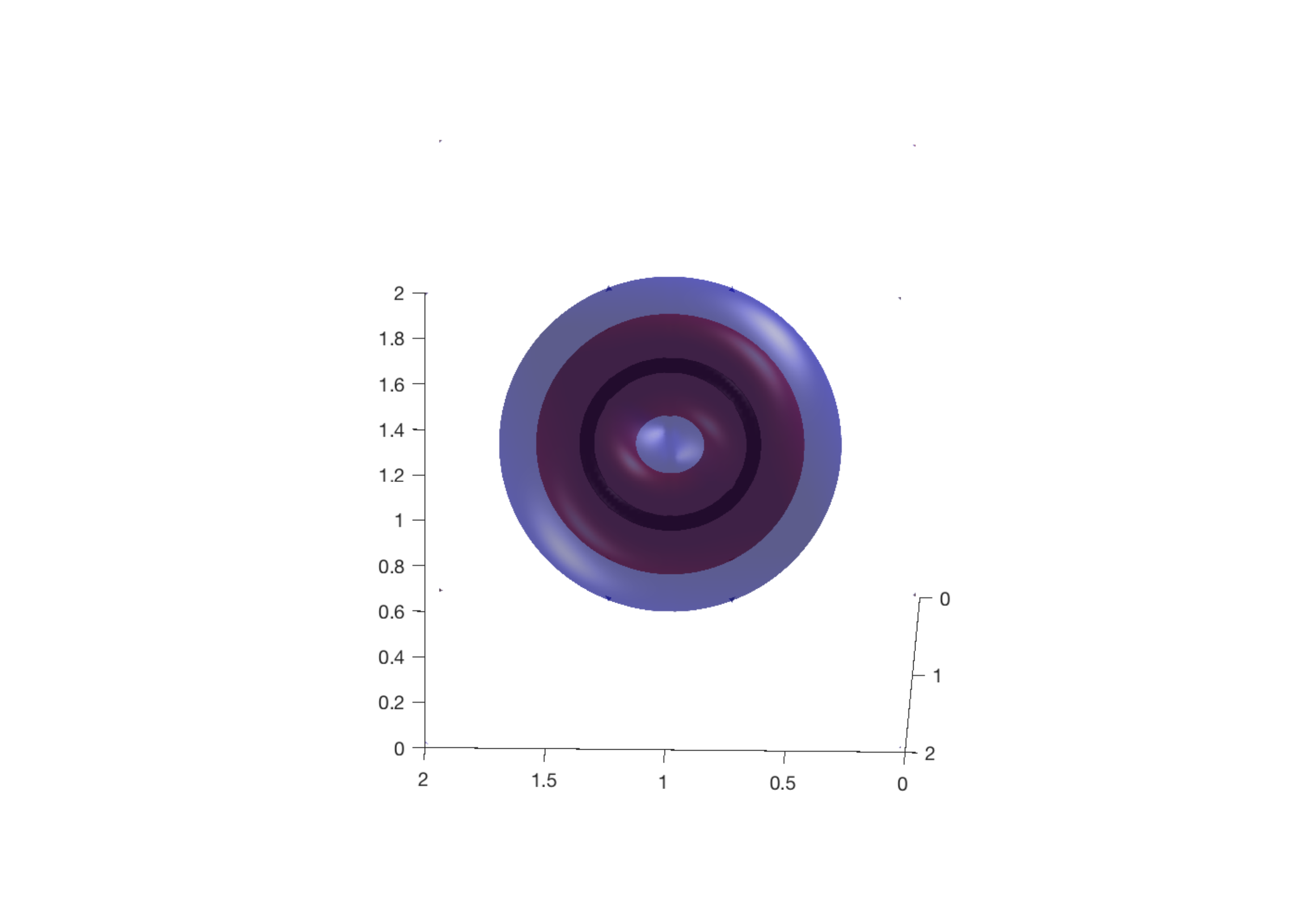}
\end{subfigure}
\begin{subfigure}[t]{0.49\linewidth}
\includegraphics[width=1.5\textwidth]{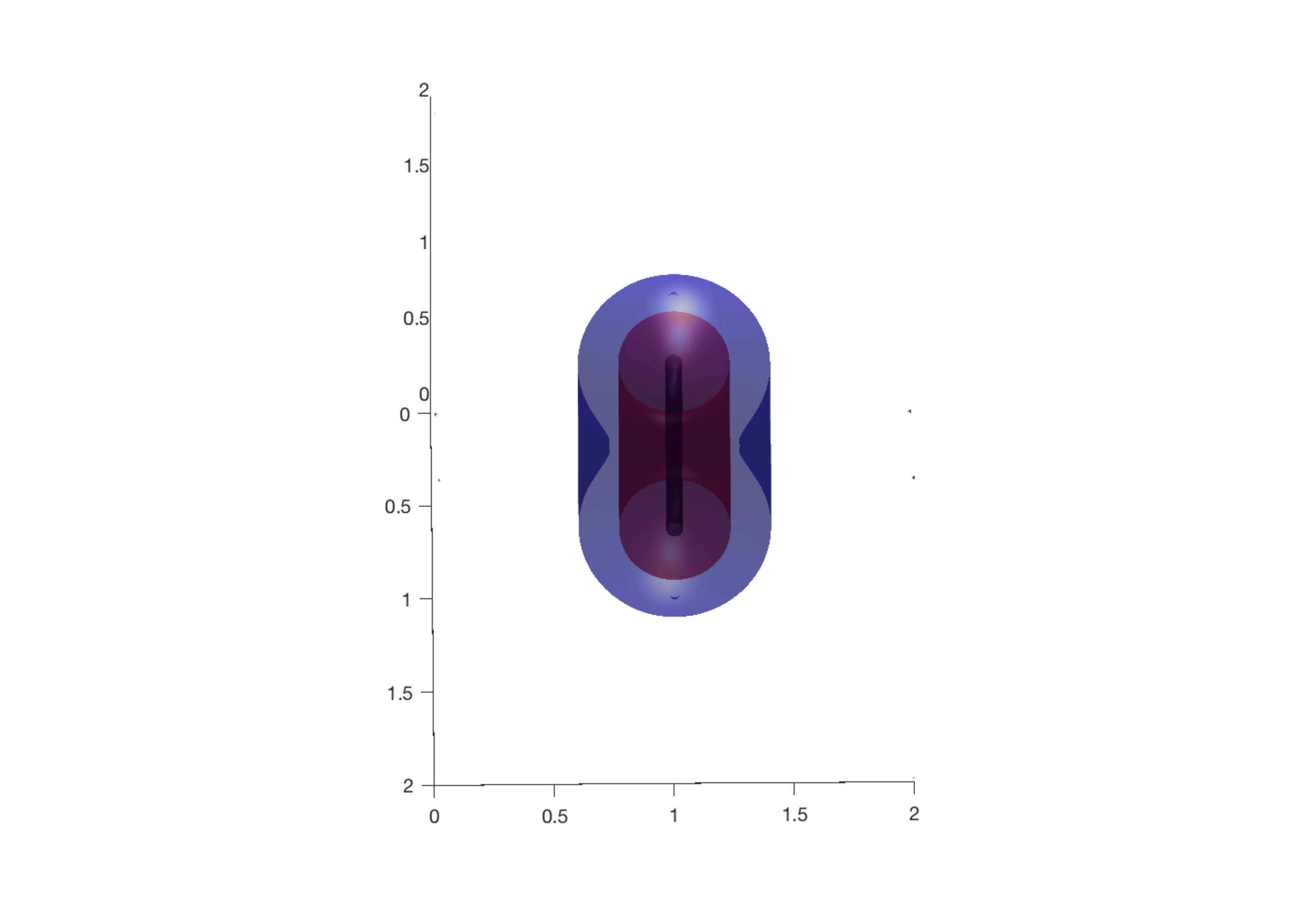}
\end{subfigure}
\caption{Two views of the Energy density $\mathcal{E}(x)$  for $k=0.25$.  Blue corresponds to the isocontour
$\mathcal{E}(x)=0.2$, red to $\mathcal{E}(x)=0.42$, dark red to $\mathcal{E}(x)=0.65$. } 
 \label{energydensityk=02}
\end{figure}

\begin{figure}
\centering
\includegraphics[width=1.0\textwidth]{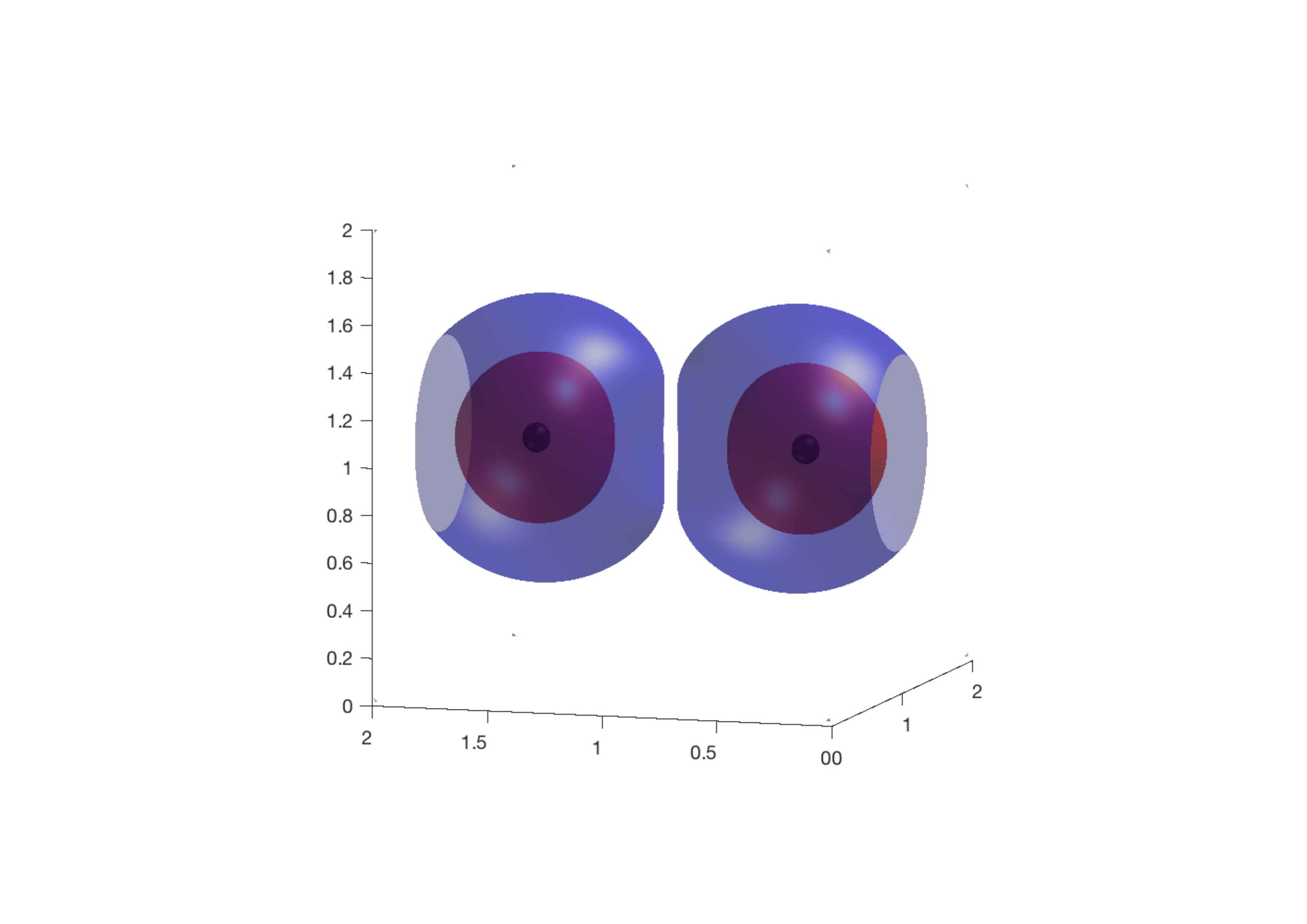}
\caption{The Energy density $\mathcal{E}(x)$  for $k=0.99$.  Blue corresponds to the isocontour
$\mathcal{E}(x)=0.09$, red to $\mathcal{E}(x)=0.42$, dark red to $\mathcal{E}(x)=1.35$. } 
 \label{energydensityk=99}
\end{figure}

\begin{figure}
\centering
\begin{subfigure}[t]{0.49\linewidth}
\includegraphics[width=1.5\textwidth]{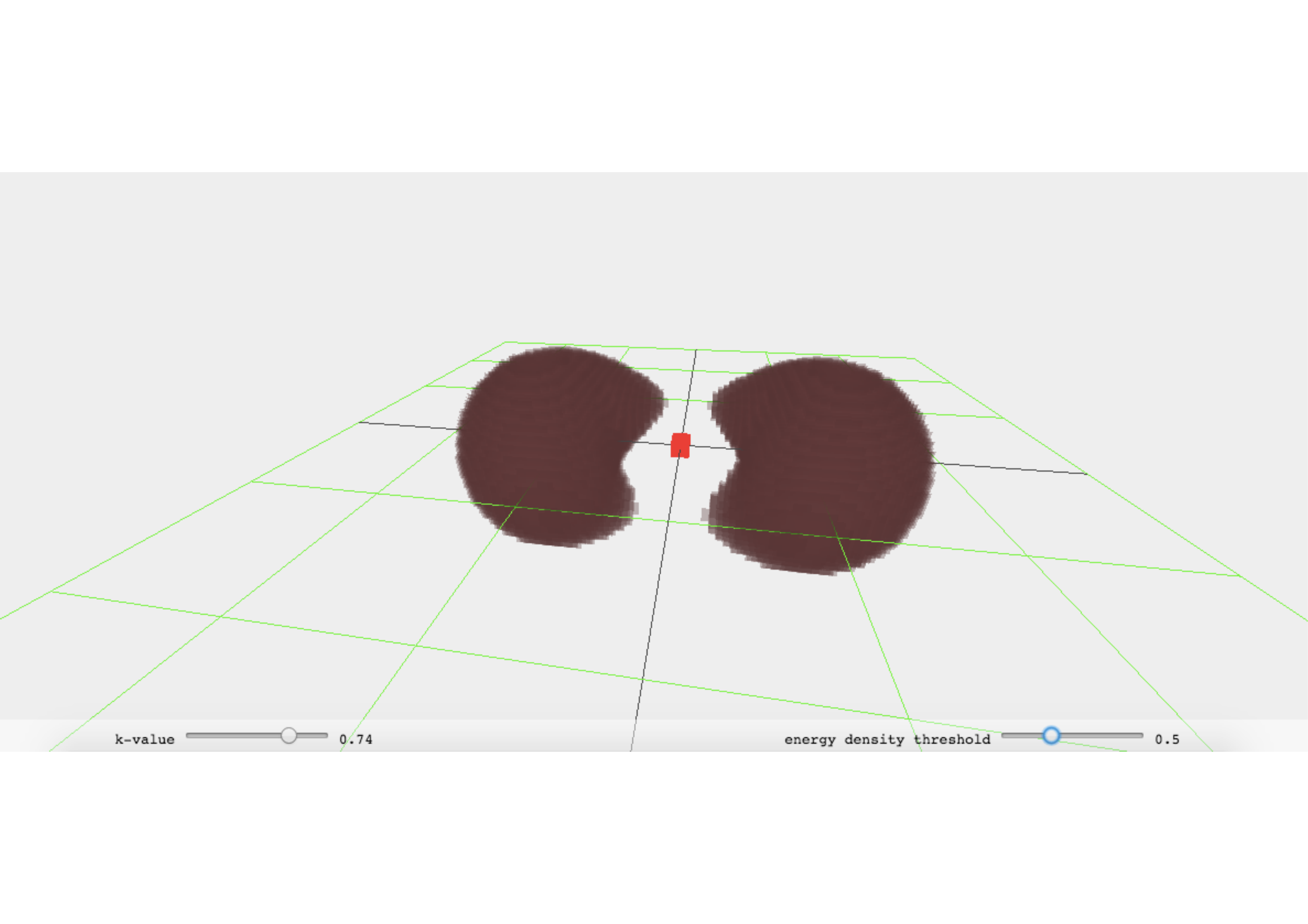}
\end{subfigure}
\begin{subfigure}[t]{0.49\linewidth}
\includegraphics[width=1.5\textwidth]{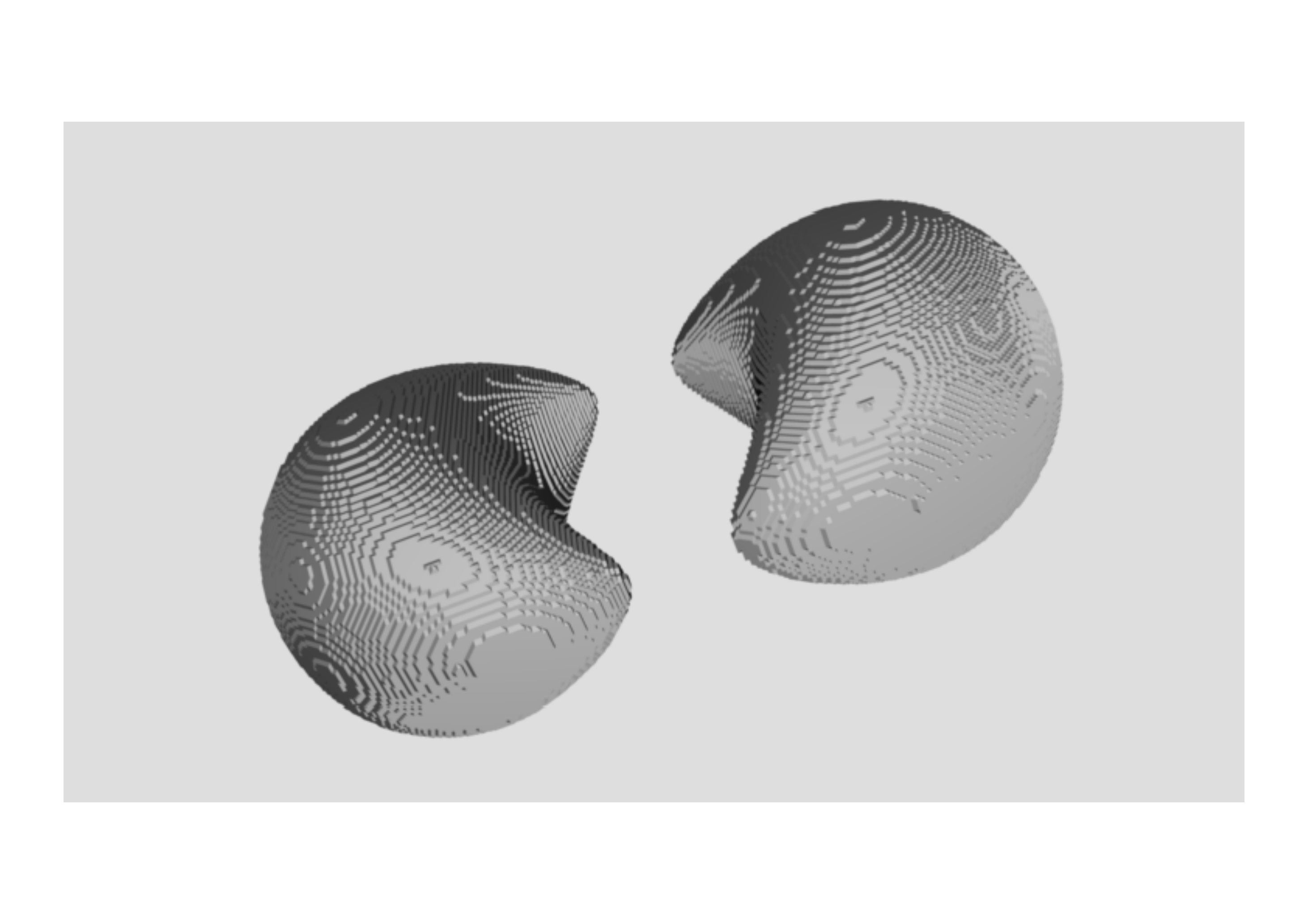}
\end{subfigure}
\caption{Two interactive visualisers: the first represents energy density   by opacity while the second uses
energy density to give a threshold producing a solid above the
threshold. Here $k=0.74$ and the
energy density threshold is $0.5$. } 
 \label{bothvisualisers}
\end{figure}

\begin{figure}
\centering
\begin{subfigure}[t]{0.49\linewidth}
\includegraphics[width=1.3\textwidth]{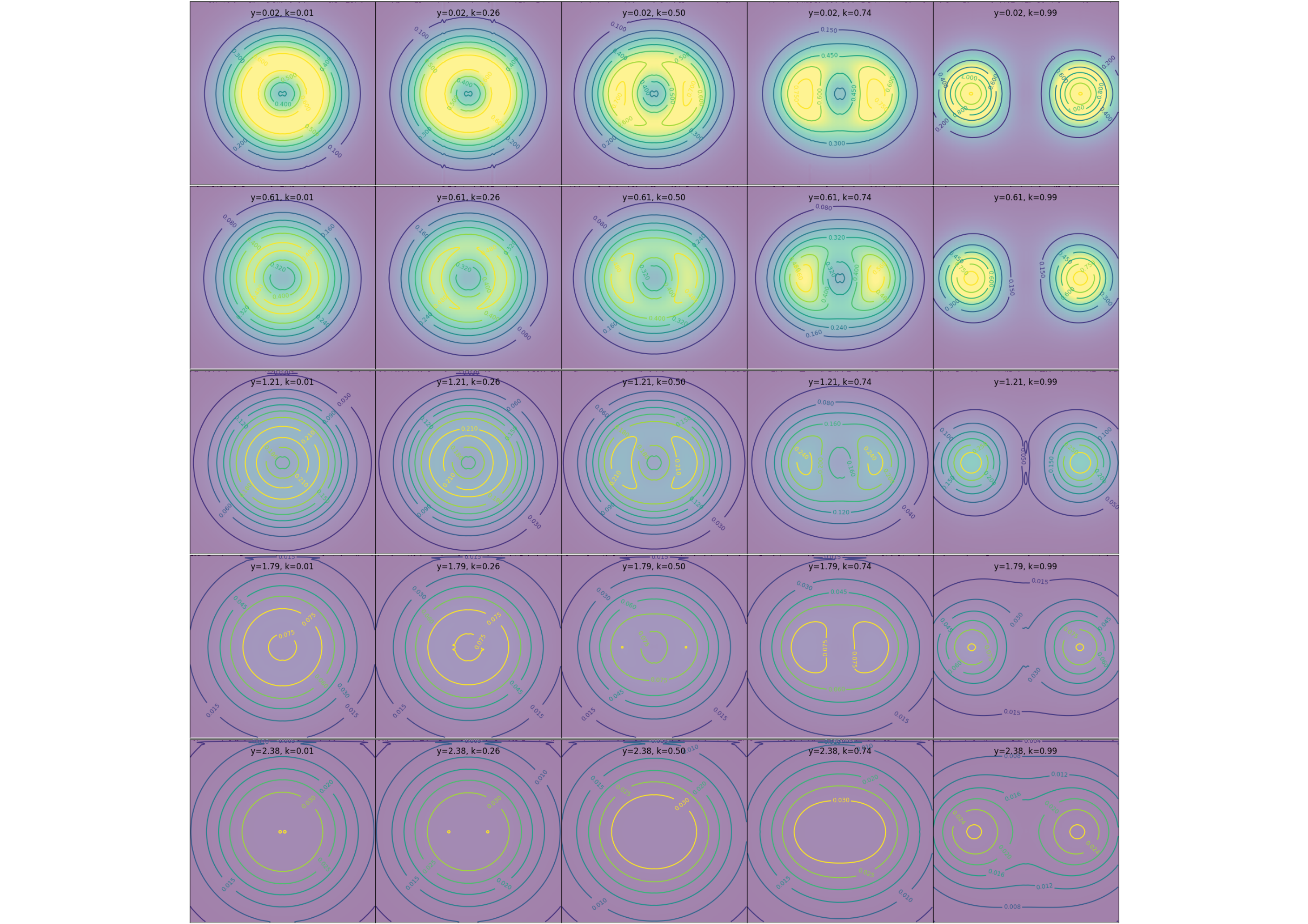}
\end{subfigure}
\begin{subfigure}[t]{0.49\linewidth}
\includegraphics[width=1.3\textwidth]{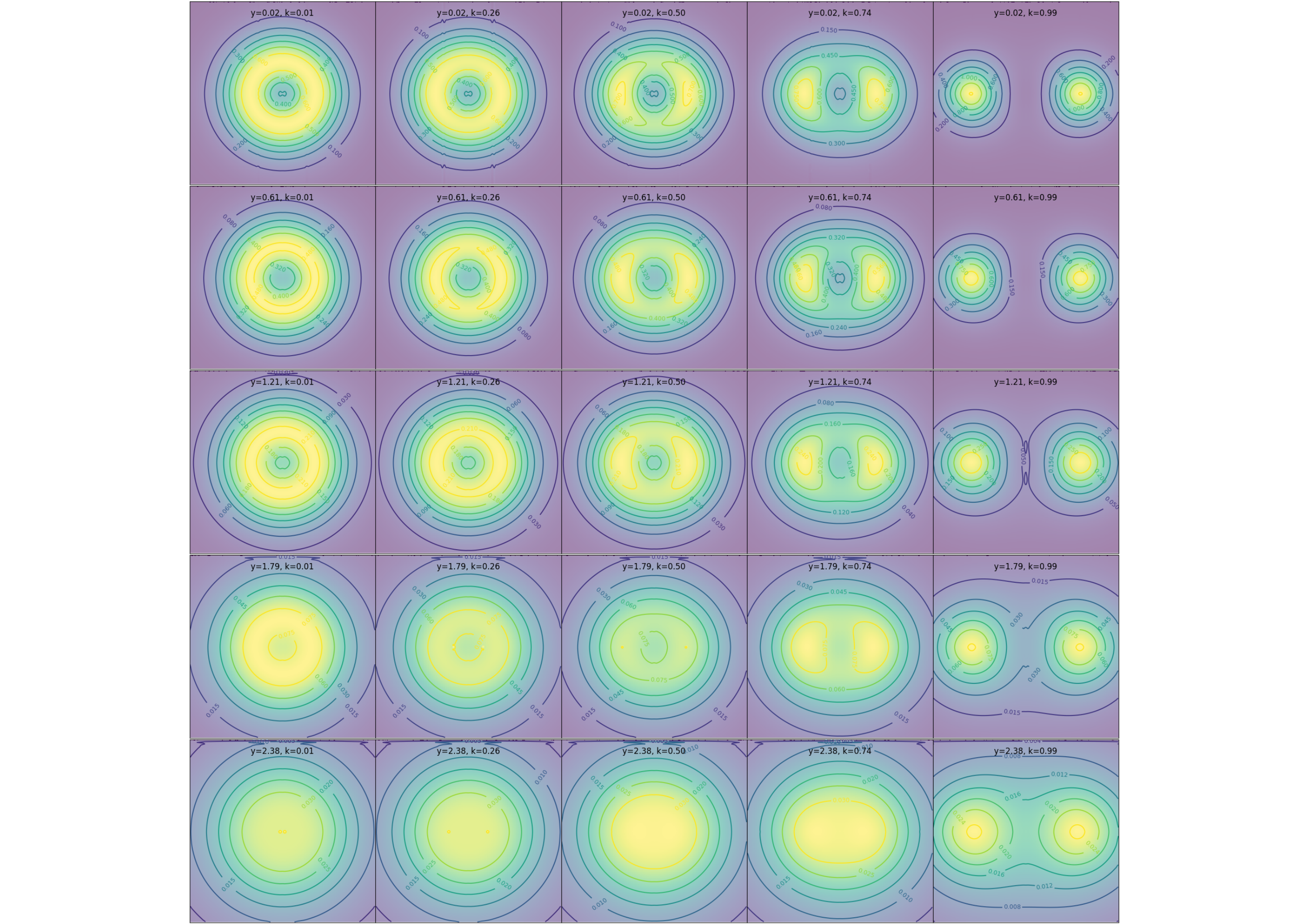}
\end{subfigure}
\caption{A Monopole Tomogram with (a) uniform colouring and (b) nonuniform colouring.} 
 \label{bothtoms}
\end{figure}

\section{Background}\label{background}

To make this paper more self-contained we will elaborate a little on the points noted in the construction:
the ADHM construction and Panagopoulos formulae; the spectral curve and Hitchin's constraints;
the Ercolani-Sinha Baker-Akhiezer function for the curve; and Nahm's lesser known ansatz. Here we will 
simply cite the critical formulae.

The field equations for the three dimensional Yang-Mills-Higgs Lagrangian with gauge group $SU(2)$
\[L= -\frac12 \mathrm{Tr}\, F_{ij} F^{ij}+\mathrm{Tr}\,D_{i}\Phi\, D^{i}\Phi,\]
are
\begin{equation} D_i\Phi= \frac12 \sum_{j,k=1}^3\epsilon_{ijk}
F_{jk},\quad i=1,2,3.\label{bogomolny}\end{equation} 
Here  $\Phi$ is the Higgs field, $
F_{ij}=\partial_i A_j -\partial_j A_i+[A_i,A_j]$ is the
curvature of the (spatial) connection of the gauge field
$A_i(\boldsymbol{x})$ and $D_i$  the covariant derivative
$D_i\Phi=\partial_i\Phi+[A_i,\Phi]$,
$\boldsymbol{x}=(x_1,x_2,x_3)\in \mathbb{R}^3$.
These equations may be viewed as a reduction of the self-dual Yang Mills equations to three dimensions under the assumption that all fields are independent of time. Configurations minimizing the energy of the system are given by the {\it Bogomolny equation} (\ref{bogomolny}).
 A solution with the boundary conditions
\[\left. \sqrt{-\frac12 \mathrm{Tr}\,\Phi(r)^2}\right|_{r\rightarrow\infty}\sim
1-\frac{n}{2r}+O(r^{-2}),\quad r=\sqrt{x_1^2+x_2^2+x_3^2} ,\] 
is called a {\it monopole} of  charge $n$.
The aim is to construct the Higgs and gauge field satisfying the Bogomolny equation and this
boundary condition.

\subsection{The ADHM construction and Panagopoulos formulae}
Nahm, in modifying 
the Atiyah-Drinfeld-Hitchin-Manin (ADHM) construction of instanton solutions to the (Euclidean) self-dual Yang-Mills equations, introduced the operator
\begin{align}\label{defdelta}
\Delta&=\im \dfrac{d}{dz}+x_4-\im T_4+\sum_{j=1}\sp{3}   \sigma_j\otimes( T_j+\im x_j 1_n),
\end{align}
where the $T_j(z)$ are $n\times n$ matrices and $\sigma_j$ the Pauli matrices. Here $n$ is the charge
of the $su(2)$ monopole.
Following the instanton construction the operator $\Delta\sp\dagger\Delta$  must commute with quaternions which happens if and only if ${T_i}\sp\dagger=-T_i$, $T_4\sp\dagger =-T_4$ and
\begin{equation}\label{fullnahm}
\dot{{T}_i} =[T_4,T_i]+\frac12\sum_{j,k=1}^3\epsilon_{ijk}[T_j(z),T_k(z)].
\end{equation}
Equations (\ref{fullnahm}) are known as Nahm's equations; one often encounters them in the more familiar gauge with $T_4=0$. When $\Delta\sp\dagger\Delta$ commutes\footnote{Throughout the superscript $\dagger$ means conjugated and transposed. We will at times emphasise the vectorial nature of an object by 
 printing this in bold,  e.g for vector
$\boldsymbol{a}^{\dagger}=\overline{\boldsymbol{a}}^T$. }
 with quaternions it is a positive operator; in particular this means that $\left(\Delta\sp\dagger\Delta\right)(z)$ is an invertible operator
and consequently $\Delta$ has no zero modes. 
The ADHM construction further requires $\Delta$ to be quaternionic linear, which means that
$T_i(z)=-{\overline{T}}_i(-z)$,  $T_4(z)=-{\overline{T}}_4(-z)$.
To describe monopoles the matrices  $T_j(z)$ are further required to be regular for  $z\in(-1,1)$ and have simple poles at $z=\pm1$, the residues of which define an
irreducible $n$-dimensional representation of the $su(2)$ algebra. Hitchin's analysis
\cite{hitchin_83}[\S2]  of the equation 
$\Delta\sp\dagger \boldsymbol{v}=0$ tells us this
has two normalizable solutions and it is in terms of these that  the
Atiyah-Drinfeld-Hitchin-Manin-Nahm (ADHMN) construction gives the gauge and Higgs field solutions.
\begin{theorem} [\bf ADHMN] \label{ADHMN} The charge $n$
monopole solution of the Bogomolny equation (\ref{bogomolny}) is given by
\begin{align}
\Phi_{ab}(\boldsymbol{x})&=\im\int_{-1}^1\mathrm{d}z\,z \boldsymbol{v}_a\sp
\dagger(z,\boldsymbol{x})\boldsymbol{v}_b(z,\boldsymbol{x})
,\quad a,b=1,2,\label{higgs}\\
A_{i\, ab}(\boldsymbol{x})&=\int_{-1}^1\mathrm{d}z\, \boldsymbol{v}_a\sp\dagger(z,\boldsymbol{x})
\frac{\partial}{\partial x_i} \boldsymbol{v}_b(z,\boldsymbol{x}),
\quad
i=1,2,3,\quad a,b=1,2.\label{gauges}
\end{align}
Here the two ($a=1,2$) $2n$-column vectors ${{\boldsymbol{v}}}_{a}(z,\boldsymbol{x})
=({v}_1^{(a)}(z,\boldsymbol{x}),\ldots,
{v}_{2n}^{(a)}(z,\boldsymbol{x}))^T$ form an orthonormal basis on
the interval $z\in[-1,1]$
\begin{equation}\int_{-1}^1\mathrm{d}z\, \boldsymbol{v}_a\sp\dagger(z,\boldsymbol{x})
\boldsymbol{v}_b(z,\boldsymbol{x})
=\delta_{ab}
,\label{norm}
\end{equation}
for the normalizable solutions to the Weyl equation
\begin{align}\Delta\sp\dagger \boldsymbol{v}=0,\label{deltadagger}
\end{align}
where 
\begin{align}
\Delta^{\dagger}&=\im \dfrac{d}{dz}+x_4-\im T_4-\sum_{j=1}\sp{3}   \sigma_j\otimes( T_j+\im x_j 1_n).
\label{weylequ}\end{align} 
The normalizable solutions form a two-dimensional subspace of the full $2n$-dimensional solution space
to the formal adjoint equation (\ref{deltadagger}).
The $n\times n$-matrices $T_j(z)$, $T_4(z)$, called Nahm data,  satisfy
Nahm's equation (\ref{fullnahm}) and the $T_j(z)$ are
required to be regular for  $z\in(-1,1)$ and have simple poles at $z=\pm1$, the residues of which define an
irreducible $n$-dimensional representation of the $su(2)$ algebra;
further
\begin{equation}
T_i(z)=-T_i^{\dagger}(z),\quad
T_4(z)=-T_4^{\dagger}(z),\quad
T_i(z)=T_i^{T}(-z),\quad
T_4(z)=T_4^{T}(-z)
.\label{constraint}\end{equation}
\end{theorem}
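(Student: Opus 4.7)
My plan is to verify the theorem in four stages, following the now-standard ADHM/N proof template. The foundational observation is that, under the reality conditions (\ref{constraint}), Nahm's equations (\ref{fullnahm}) are \emph{equivalent} to the quaternionic condition that $\Delta\sp\dagger\Delta$ commute with every Pauli matrix $\sigma_k\otimes 1_n$, i.e.\ takes the scalar form $\Delta\sp\dagger\Delta = f(z,\boldsymbol{x})\otimes 1_2$ on the Pauli factor. I would establish this by direct computation: writing $\Delta = A + B$ with $A = \im\partial_z + x_4 - \im T_4$ scalar in the Pauli tensor slot and $B = \sum_j \sigma_j\otimes(T_j+\im x_j\,1_n)$, expansion gives
\begin{equation*}
\Delta\sp\dagger\Delta = A^2 - \sum_j (T_j+\im x_j)^2 + \sum_l \sigma_l\otimes\Bigl[\im(\dot T_l - [T_4,T_l]) - \im\sum_{j<k}\epsilon_{jkl}[T_j,T_k]\Bigr],
\end{equation*}
and the bracketed Pauli-vector piece vanishes precisely when (\ref{fullnahm}) holds. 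Self-adjointness and strict positivity of the surviving scalar operator then yield invertibility of $\Delta\sp\dagger\Delta$ and, in particular, the absence of normalizable $\Delta$-zero modes.

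Second, I would appeal to Hitchin's index computation \cite{hitchin_83} guaranteeing that, under the pole/residue hypothesis at $z=\pm 1$, the formal adjoint equation (\ref{deltadagger}) admits a two-dimensional space of $L^2$-solutions on $[-1,1]$. Fixing an orthonormal basis $\boldsymbol{v}_1,\boldsymbol{v}_2$ satisfying (\ref{norm}) gives the resolution of the identity
\begin{equation*}
1 = \sum_{a=1}^{2}\boldsymbol{v}_a\,\boldsymbol{v}_a\sp\dagger \;+\; \Delta\,G\,\Delta\sp\dagger,\qquad G := (\Delta\sp\dagger\Delta)^{-1},
\end{equation*}
in which $G$ is a scalar on the Pauli factor by the first step. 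Anti-Hermiticity of $A_i$ in (\ref{gauges}) follows on differentiating the orthonormality (\ref{norm}) in $x_i$, while anti-Hermiticity of $\Phi$ in (\ref{higgs}) is immediate from $(\im z)\sp\dagger = -\im z$.

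Third, I would verify the Bogomolny equation (\ref{bogomolny}) by a computation that reduces to Pauli algebra. Differentiating (\ref{higgs})--(\ref{gauges}) and using $\partial_{x_i}\Delta = \im\,\sigma_i\otimes 1_n$, one can express both $F_{ij}$ and $D_k\Phi$ as bilinears of the form $\int\boldsymbol{v}_a\sp\dagger(\sigma_i\otimes 1_n)\,G\,(\sigma_j\otimes 1_n)\,\boldsymbol{v}_b$ (with appropriate antisymmetrisations, and a $z$-insertion in the $\Phi$ case). Because $G$ commutes with each $\sigma_k\otimes 1_n$, we may slide Pauli matrices through $G$ and collapse $\sigma_i\sigma_j = \delta_{ij} + \im\epsilon_{ijk}\sigma_k$; antisymmetrising in $(i,j)$ isolates the $\epsilon_{ijk}\sigma_k$ term and matches it with precisely $\tfrac12\epsilon_{ijk}D_k\Phi$, giving (\ref{bogomolny}).

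Finally, the boundary behaviour $\sqrt{-\tfrac12\Tr\Phi^2}\sim 1 - \tfrac{n}{2r}$ must be extracted from the $z\to\pm 1$ asymptotics of $\boldsymbol{v}_a$: the simple poles of $T_j$ with residues forming an irreducible $n$-dimensional $su(2)$-representation force the leading $1/r$-coefficient in $|\Phi|$ to be $-n/2$ via the Casimir. I expect the genuinely hard step to be Hitchin's index result --- establishing the existence and exact dimension of the normalizable kernel of $\Delta\sp\dagger$ --- since this requires the analytic theory of $\Delta\sp\dagger$ as an unbounded operator with the prescribed singular behaviour at $z=\pm 1$, rather than the algebra carrying the rest of the proof. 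Once that kernel is in hand, the remaining verification is purely algebraic, driven throughout by the quaternionic identity established in the first step.
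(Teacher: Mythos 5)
You should know at the outset that the paper offers no proof of this theorem: it is stated as imported background, with the construction attributed to Nahm \cite{Nahm:1979yw, nahm_80, nahm82c} and the analytic substance --- the existence of exactly two normalizable solutions of $\Delta^\dagger\boldsymbol{v}=0$ and the regularity of the resulting fields --- attributed to Hitchin \cite{hitchin_83}. The only proof-adjacent material in the text is the local analysis at $z=\pm1$ in the ``Remarks'' subsection of \S 2, where the eigenvalues $(n-1)/2$ (multiplicity $n+1$) and $-(n+1)/2$ (multiplicity $n-1$) of $\sum_j\sigma_j\otimes l_j$ yield $n-1$ constraints at each endpoint and hence a two-dimensional normalizable kernel; that is precisely the counting your second stage delegates to Hitchin. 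Against this backdrop your outline is the standard ADHMN argument and its algebraic core is correct: the expansion of $\Delta^\dagger\Delta=A^2-B^2+[A,B]$ and the equivalence of the vanishing of its Pauli-vector part with (\ref{fullnahm}) check out (using the reality conditions (\ref{constraint}) to get $A^\dagger=A$, $B^\dagger=-B$); the completeness relation $1=\sum_a\boldsymbol{v}_a\boldsymbol{v}_a^\dagger+\Delta G\Delta^\dagger$ with $G$ scalar on the Pauli factor is exactly what collapses the curvature bilinears onto $\epsilon_{ijk}\sigma_k$; and identifying the $z$-insertion in (\ref{higgs}) with $\partial_{x_4}$ (via the phase $e^{\im z x_4}$ carried by $\boldsymbol{v}$) is the right way to see $\Phi$ as the reduced fourth component of the connection.

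Two points deserve more care than you give them. First, ``self-adjointness and strict positivity'' of $\Delta^\dagger\Delta$ is not automatic from the quaternionic identity alone: positivity gives $\langle u,\Delta^\dagger\Delta u\rangle=\|\Delta u\|^2\ge 0$, but invertibility of $G$ requires excluding normalizable zero modes of $\Delta$, which one does by writing $\Delta^\dagger\Delta$ as a sum of manifestly non-negative operators and using the simple poles at $z=\pm1$ to rule out a covariantly constant kernel element. Second, the theorem implicitly asserts smoothness of $\Phi$ and $A_i$ on all of $\mathbb{R}^3$, which needs the two-dimensional kernel to vary smoothly in $\boldsymbol{x}$ (no dimension jumping); this, like the index count and the $1-\tfrac{n}{2r}$ asymptotics you sketch via the Casimir, lives inside the Hitchin citation. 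Since the paper itself treats all of these as quoted results, your proposal is at essentially the same level of completeness as the source it would replace, and its verifiable algebraic steps are sound.
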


Although the integrations in (\ref{higgs}, \ref{gauges}) look intractable work of Panagopoulos enables their evaluation. Define the Hermitian matrices
\begin{equation}\label{pandefs}
\mathcal{H}=-\sum_{j=1}^3 x_j \sigma_j\otimes 1_n,\qquad 
\mathcal{F}=\imath \sum_{j=1}^3 \sigma_j\otimes T_j, \qquad
\mathcal{Q}=\frac{1}{r^2} \mathcal{H}\mathcal{F}\mathcal{H}-\mathcal{F}.
\end{equation}
Then
\begin{proposition}[Panagopoulos \cite{panagopo83, Braden2018b}]
 \label{panagopoulos} 
\begin{align}\label{pannorm}
 \int \mathrm{d}z\, \boldsymbol{{v}}_a\sp\dagger \boldsymbol{{v}}_b
&= \boldsymbol{{v}}_a\sp\dagger \mathcal{Q}^{-1}
\boldsymbol{{v}}_b.\\
\int \mathrm{d}z\, z \boldsymbol{{v}}_a\sp\dagger \boldsymbol{{v}}_b
&=
\boldsymbol{{v}}_a\sp\dagger\mathcal{Q}^{-1} \left( z+\mathcal{H}\,\frac{x_{i}}{r^2}\frac{\partial}{\partial x_{i}}   \right)\boldsymbol{{v}}_b.
\label{panhiggs}\\
\int \boldsymbol{{v}}_a\sp\dagger\frac{\partial}{\partial
x_i}\boldsymbol{{v}}_b \mathrm{d}z
&=\boldsymbol{{v}}_a\sp\dagger\mathcal{Q}^{-1} \left[
\frac{\partial}{\partial x_i}+\mathcal{H}\frac{z}{r^2}\,
x_i+\mathcal{H}\frac{\imath}{r^2} \left(
\boldsymbol{x}\times\boldsymbol{\nabla}\right)_i \right]\boldsymbol{{v}}_b.\label{pangauge}
\end{align}
\end{proposition}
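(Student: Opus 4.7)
The three identities are all antiderivative relations; I plan to verify each by differentiating the right-hand side with respect to $z$ and checking equality with the corresponding integrand. The starting point is the Weyl equation (\ref{deltadagger}): in the gauge $T_4 = 0$, $x_4 = 0$ it rearranges, using $\mathcal{F} = \im\sum_j\sigma_j\otimes T_j$ and $\mathcal{H}= -\sum_j x_j\sigma_j\otimes 1_n$, to
\begin{equation*}
\dot{\boldsymbol{v}} = -(\mathcal{F}+\mathcal{H})\boldsymbol{v}, \qquad \dot{\boldsymbol{v}}^\dagger = -\boldsymbol{v}^\dagger(\mathcal{F}+\mathcal{H}),
\end{equation*}
the second using Hermiticity of $\mathcal{F}$ and $\mathcal{H}$. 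Thus each $z$-derivative of $\boldsymbol{v}$ can be traded for algebraic action of $\mathcal{F}+\mathcal{H}$.

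Applied to $\boldsymbol{v}_a^\dagger \mathcal{Q}^{-1}\boldsymbol{v}_b$ and followed by multiplication by $\mathcal{Q}$ on both sides, the first identity (\ref{pannorm}) reduces to the operator equation $\dot{\mathcal{Q}} + \{\mathcal{F}+\mathcal{H},\mathcal{Q}\} + \mathcal{Q}^2 = 0$. I would verify this via two structural facts: (i) $\mathcal{H}^2 = r^2 I$ (from $\{\sigma_i,\sigma_j\}=2\delta_{ij}$), which together with $\mathcal{Q} = \mathcal{H}\mathcal{F}\mathcal{H}/r^2 - \mathcal{F}$ gives $\{\mathcal{Q},\mathcal{H}\}=0$ in two lines and collapses the anticommutator to $\{\mathcal{F},\mathcal{Q}\}$; and (ii) using Nahm's equations (\ref{fullnahm}) together with $\sigma_i\sigma_j = \delta_{ij}+\im\epsilon_{ijk}\sigma_k$, the combination
\begin{equation*}
\dot{\mathcal{F}} + \mathcal{F}^2 = -\sum_{j=1}^3 1_2\otimes T_j^2
\end{equation*}
is spin-scalar and so commutes with $\mathcal{H}$. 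The expansions $\{\mathcal{F},\mathcal{Q}\}+\mathcal{Q}^2 = \mathcal{H}\mathcal{F}^2\mathcal{H}/r^2 - \mathcal{F}^2$ and $\dot{\mathcal{Q}} = (\mathcal{H}\dot{\mathcal{F}}\mathcal{H} - r^2\dot{\mathcal{F}})/r^2$ then sum to zero by (i) and (ii).

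The remaining identities follow the same template, but additionally require spatial derivatives of $\boldsymbol{v}$. These I obtain by differentiating the Weyl equation with respect to $x_i$: since $\partial_{x_i}\Delta^\dagger = -\im\sigma_i\otimes 1_n$, this gives $\Delta^\dagger(\partial_i\boldsymbol{v}) = \im\sigma_i\boldsymbol{v}$ and hence $\tfrac{d}{dz}(\partial_i\boldsymbol{v}) = -(\mathcal{F}+\mathcal{H})\partial_i\boldsymbol{v} + \sigma_i\boldsymbol{v}$. In (\ref{panhiggs}) the explicit $z$ inside $(z+\mathcal{H}x_i\partial_i/r^2)$ differentiates to produce the required $z\,\boldsymbol{v}^\dagger\boldsymbol{v}$, while the remaining pieces collapse via the operator equation above together with $\{\mathcal{H},\sigma_i\}=-2x_iI$; for (\ref{pangauge}) the angular-momentum combination $(\boldsymbol{x}\times\boldsymbol{\nabla})_i$ is exactly what is needed to absorb the residual off-diagonal spin contributions from $\partial_i\mathcal{Q}^{-1}$. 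The main obstacle will be keeping the Pauli--tensor bookkeeping in (\ref{panhiggs}) and (\ref{pangauge}) under control as $z$- and $x_i$-derivatives interleave, but the two structural identities $\mathcal{H}^2 = r^2 I$ and the spin-scalar form of $\dot{\mathcal{F}}+\mathcal{F}^2$ should continue to drive every cancellation as in the norm case.
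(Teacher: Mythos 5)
The paper gives no proof of Proposition \ref{panagopoulos} at all --- it is imported from \cite{panagopo83,Braden2018b} --- so the comparison here is with the cited sources, whose argument is exactly your differentiate-the-antiderivative strategy. Your treatment of (\ref{pannorm}) is correct and complete in outline: with $T_4=x_4=0$ the Weyl equation is $\dot{\boldsymbol v}=-(\mathcal F+\mathcal H)\boldsymbol v$, the identity reduces to $\dot{\mathcal Q}+\{\mathcal F+\mathcal H,\mathcal Q\}+\mathcal Q^2=0$, and both structural inputs check: $\mathcal H^2=r^2$ gives $\{\mathcal Q,\mathcal H\}=0$, while $\{\mathcal F,\mathcal Q\}+\mathcal Q^2=\mathcal H\mathcal F^2\mathcal H/r^2-\mathcal F^2$ combines with $\dot{\mathcal Q}=\mathcal H\dot{\mathcal F}\mathcal H/r^2-\dot{\mathcal F}$ into $\mathcal H(\dot{\mathcal F}+\mathcal F^2)\mathcal H/r^2-(\dot{\mathcal F}+\mathcal F^2)$, which vanishes because $\dot{\mathcal F}+\mathcal F^2=-\sum_j1_2\otimes T_j^2$ is spin-scalar. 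Identity (\ref{panhiggs}) also closes as you sketch: the coefficient of $\partial_i\boldsymbol v_b$ vanishes because $-(\mathcal F+\mathcal H)\mathcal Q^{-1}\mathcal H+\tfrac{d}{dz}(\mathcal Q^{-1}\mathcal H)-\mathcal Q^{-1}\mathcal H(\mathcal F+\mathcal H)=\mathcal H-\mathcal Q^{-1}[\mathcal H,\mathcal F]=0$ (this is precisely Theorem \ref{constancythm}, which you may as well quote), and the residual $\mathcal Q^{-1}\mathcal H\frac{x_i}{r^2}\sigma_i\boldsymbol v_b=-\mathcal Q^{-1}\mathcal H^2\boldsymbol v_b/r^2=-\mathcal Q^{-1}\boldsymbol v_b$ cancels the term produced by the explicit $z$.

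The one place you should not expect the bookkeeping to "drive every cancellation" automatically is (\ref{pangauge}), which you have only sketched. Carrying out your plan with the paper's convention $\mathcal H=-\sum_jx_j\sigma_j\otimes1_n$, the three residual terms are $\mathcal Q^{-1}(\sigma_i\otimes1_n)$ from the $\partial_i$ piece (since $\partial_i\mathcal H=-\sigma_i\otimes1_n$), $\mathcal Q^{-1}\frac{x_i}{r^2}\mathcal H$ from the explicit $z$ piece, and, from the angular-momentum piece, $\mathcal Q^{-1}\mathcal H\frac{\imath}{r^2}\epsilon_{ijk}x_j\sigma_k=\mathcal Q^{-1}\bigl(\sigma_i\otimes1_n+\frac{x_i}{r^2}\mathcal H\bigr)$; these sum to $2\mathcal Q^{-1}\bigl(\sigma_i\otimes1_n+\frac{x_i}{r^2}\mathcal H\bigr)$, not zero. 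The cancellation occurs only if the $\imath(\boldsymbol x\times\boldsymbol\nabla)_i$ term enters with the opposite sign (equivalently, with the opposite sign convention for $\mathcal H$, as used in the cited sources). So either re-derive the correct relative sign from your own computation --- your method is exactly the right tool and identities (\ref{pannorm}), (\ref{panhiggs}) pin down all the other conventions --- or be prepared to find that (\ref{pangauge}) as printed carries a sign slip; note the paper never actually uses (\ref{pangauge}), computing the energy density via Ward's formula instead.
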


At this stage we see that to reconstruct the gauge and Higgs fields we need knowledge of the normalizable solutions to $\Delta\sp\dagger \boldsymbol{v}=0$ at the endpoints $z=\pm1$. We will construct a fundamental matrix $V=(\boldsymbol{v}_1,\ldots,\boldsymbol{v}_{2n})$  of solutions to this equation
and then extract the normalizable solutions using a ($2n\times 2$ matrix) projector $\mu$
$$V\mu=(\boldsymbol{v}_1,\boldsymbol{v}_2).$$
The work of \cite{Braden2018b} shows that $\mu$ is $z$-independent and so may be removed from 
the integrals; thus for example the matrix 
$$
\left(  \int \mathrm{d}z\, \boldsymbol{{v}}_a\sp\dagger \boldsymbol{{v}}_b\right)
=\int dz\,\mu\sp{\dagger}V\sp{\dagger}V\mu=
\mu\sp{\dagger}\left(\int dz\,V\sp{\dagger}V\right)\mu
= \mu\sp{\dagger} V\sp{\dagger}\mathcal{Q}^{-1} V \mu
.$$
We also note a further result of \cite{Braden2018b} that will prove useful:
\begin{theorem}\label{constancythm} With the notation above, and for $W=(V^\dagger)^{-1}$
\begin{equation}
\left(V\sp\dagger \mathcal{Q}^{-1}\mathcal{H} V\right)(z)=\textrm{constant},
\qquad
\left(W\sp\dagger \mathcal{Q}\mathcal{H} W\right)(z)=\textrm{constant}.
\label{constancy}
\end{equation}
\end{theorem}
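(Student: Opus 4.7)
The plan is to establish the first identity by direct differentiation of $V^\dagger\mathcal{Q}^{-1}\mathcal{H}V$ in $z$, and then derive the second identity from the first by inversion. The central algebraic input is the anticommutation $\{\mathcal{H},\mathcal{Q}\}=0$. This follows from $\mathcal{H}^2=r^2 I$ (which is in turn a consequence of $\sigma_i\sigma_j=\delta_{ij}+i\epsilon_{ijk}\sigma_k$ and the (anti)symmetry of $x_ix_j$), since
\begin{equation*}
\mathcal{H}\mathcal{Q}\mathcal{H}=\mathcal{F}\mathcal{H}^2-\mathcal{H}\mathcal{F}\mathcal{H}=-r^2\mathcal{Q},
\end{equation*}
so equivalently $\mathcal{Q}^{-1}\mathcal{H}=-\mathcal{H}\mathcal{Q}^{-1}$.

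From $\Delta^\dagger V=0$ I read off $V'=(iA-\mathcal{F}-\mathcal{H})V$ and $(V^\dagger)'=V^\dagger(-iA-\mathcal{F}-\mathcal{H})$, where $A=\mathbf{1}_2\otimes(x_4\mathbf{1}_n-iT_4)$ commutes with $\mathcal{H}$ because the two live in opposite tensor factors. Differentiating $V^\dagger\mathcal{Q}^{-1}\mathcal{H}V$ and grouping contributions, the two $-\mathcal{H}$ pieces combine to $-\mathcal{H}\mathcal{Q}^{-1}\mathcal{H}-\mathcal{Q}^{-1}\mathcal{H}^2=r^2\mathcal{Q}^{-1}-r^2\mathcal{Q}^{-1}=0$ by the anticommutation and $\mathcal{H}^2=r^2I$, the $A$ terms collapse to $-i[A,\mathcal{Q}^{-1}\mathcal{H}]=-i[A,\mathcal{Q}^{-1}]\mathcal{H}$ since $[A,\mathcal{H}]=0$, and the $\mathcal{F}$ terms assemble into $-\{\mathcal{F},\mathcal{Q}^{-1}\mathcal{H}\}$. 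Thus the task reduces to verifying
\begin{equation*}
(\mathcal{Q}^{-1})'\mathcal{H}=i[A,\mathcal{Q}^{-1}]\mathcal{H}+\{\mathcal{F},\mathcal{Q}^{-1}\mathcal{H}\}.
\end{equation*}

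The third step is to check this using Nahm's equation, which on unpacking gives
\begin{equation*}
\dot{\mathcal{F}}=i[A,\mathcal{F}]-\mathcal{F}^2-\textstyle(\sum_j T_j^2)\otimes\mathbf{1}_2,
\end{equation*}
the last scalar-in-$\sigma$ piece dropping out of $\mathcal{Q}'=\tfrac{1}{r^2}\mathcal{H}\dot{\mathcal{F}}\mathcal{H}-\dot{\mathcal{F}}$ because it commutes with $\mathcal{H}$. Substituting into $(\mathcal{Q}^{-1})'=-\mathcal{Q}^{-1}\mathcal{Q}'\mathcal{Q}^{-1}$, the $i[A,\mathcal{F}]$ piece reproduces $i[A,\mathcal{Q}^{-1}]\mathcal{H}$ via the expansion $[A,\mathcal{Q}]=\tfrac{1}{r^2}\mathcal{H}[A,\mathcal{F}]\mathcal{H}-[A,\mathcal{F}]$, and the $\mathcal{F}^2$ piece matches $\{\mathcal{F},\mathcal{Q}^{-1}\mathcal{H}\}$ after rewriting $\mathcal{F}^2=\tfrac12\{\mathcal{F},\mathcal{F}\}$ and invoking the anticommutation $\mathcal{H}\mathcal{Q}^{-1}=-\mathcal{Q}^{-1}\mathcal{H}$ together with $\mathcal{H}^2=r^2I$ to slide $\mathcal{H}$'s through. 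The \emph{main obstacle} is the combinatorial bookkeeping here: verifying that the Pauli-algebra identities $\sigma_j\sigma_k=\delta_{jk}+i\epsilon_{jkl}\sigma_l$, the $\epsilon$-symbol structure of Nahm's equation, and the anticommutation with $\mathcal{H}$ all mesh to produce exact cancellation.

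For the second identity, suppose the first gives $V^\dagger\mathcal{Q}^{-1}\mathcal{H}V=X_0$, a $z$-independent matrix that is invertible for generic $\boldsymbol{x}$ (as $V$, $\mathcal{Q}$, $\mathcal{H}$ are). Inverting both sides,
\begin{equation*}
V^{-1}(\mathcal{Q}^{-1}\mathcal{H})^{-1}(V^\dagger)^{-1}=X_0^{-1},
\end{equation*}
and using $(\mathcal{Q}^{-1}\mathcal{H})^{-1}=\mathcal{H}^{-1}\mathcal{Q}=\tfrac{1}{r^2}\mathcal{H}\mathcal{Q}=-\tfrac{1}{r^2}\mathcal{Q}\mathcal{H}$ together with $W^\dagger=V^{-1}$ rearranges this to $W^\dagger\mathcal{Q}\mathcal{H}W=-r^2 X_0^{-1}$, which is manifestly $z$-independent. (The same identity can alternatively be proved directly by noting that $W$ satisfies the dual equation $W'=iM^\dagger W$ and repeating the argument with $\mathcal{Q}$ in place of $\mathcal{Q}^{-1}$, serving as a cross-check.)
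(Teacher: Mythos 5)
Your argument is correct, and it is worth noting that the paper itself does not prove Theorem \ref{constancythm} — it is quoted from the earlier reference \cite{Braden2018b} — so your derivation is a self-contained verification rather than a reproduction of an in-text proof. All the ingredients check out: $\mathcal{H}^2=r^2 1_{2n}$ and hence $\mathcal{H}\mathcal{Q}\mathcal{H}=-r^2\mathcal{Q}$, so $\mathcal{Q}^{-1}\mathcal{H}=-\mathcal{H}\mathcal{Q}^{-1}$ (as the paper itself uses in \S10); the Weyl equation does give $V'=(iA-\mathcal{F}-\mathcal{H})V$ with $A$, $\mathcal{F}$, $\mathcal{H}$ Hermitian and $[A,\mathcal{H}]=0$; the $\mathcal{H}$-terms cancel exactly as you say; and unpacking Nahm's equation gives $\dot{\mathcal{F}}=i[A,\mathcal{F}]-\mathcal{F}^2-1_2\otimes\sum_jT_j^2$ (your tensor factors are written in the opposite order to the paper's $\sigma_j\otimes T_j$ convention, but the relevant point — that this last piece commutes with $\mathcal{H}$ and drops out of $\mathcal{Q}'$ — survives). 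The one step you leave compressed is the final matching of the $\mathcal{F}^2$ piece; it is cleaner to phrase it without the $\tfrac12\{\mathcal{F},\mathcal{F}\}$ rewriting: with $\Psi(X):=\tfrac{1}{r^2}\mathcal{H}X\mathcal{H}-X$ one has $\mathcal{Q}'=i[A,\mathcal{Q}]-\Psi(\mathcal{F}^2)$, and the required identity $\mathcal{Q}^{-1}\Psi(\mathcal{F}^2)\mathcal{Q}^{-1}\mathcal{H}=\{\mathcal{F},\mathcal{Q}^{-1}\mathcal{H}\}$ is equivalent, after multiplying by $\mathcal{Q}$ on the left and $\mathcal{H}^{-1}\mathcal{Q}$ on the right, to
\begin{equation*}
\Psi(\mathcal{F}^2)=\mathcal{Q}\mathcal{F}+\tfrac{1}{r^2}\mathcal{H}\mathcal{F}\mathcal{H}\,\mathcal{Q},
\end{equation*}
which follows in two lines from expanding $\mathcal{Q}=\tfrac{1}{r^2}\mathcal{H}\mathcal{F}\mathcal{H}-\mathcal{F}$ and using $\mathcal{H}^2=r^2$. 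Your inversion argument for the second identity, $W^\dagger\mathcal{Q}\mathcal{H}W=-r^2(V^\dagger\mathcal{Q}^{-1}\mathcal{H}V)^{-1}$, is valid wherever $\mathcal{Q}$ is invertible (generically true, as the paper's computation of $|\mathcal{Q}|$ confirms), and the direct cross-check via $W'=(iA+\mathcal{F}+\mathcal{H})W$ also goes through.
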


Towards constructing the fundamental matrix $V$  we next turn to the spectral curve.

\subsection{The Spectral Curve and Hitchin's constraints} One may readily associate an integrable system and spectral curve to Nahm's equations. Hitchin's seminal work \cite{hitchin_83} provided a geometric setting for this, the global geometry yielding necessary and sufficient conditions for such to be a monopole spectral curve. Here we will recall the salient features.

Upon setting (with ${T_i}\sp\dagger=-T_i$, $T_4\sp\dagger =-T_4$)
\begin{equation*}
\alpha=T_4+\im T_3,\quad \beta = T_1+iT_2,
\quad
L=L(\zeta):=\beta -(\alpha+\alpha\sp\dagger)\zeta-\beta\sp\dagger \zeta^2, \quad
M=M(\zeta):=-\alpha-\beta\sp\dagger \zeta,
\end{equation*}
one finds
\begin{equation}
\begin{split}\label{integrability}
\dot{{T}_i} =[T_4,T_i]+\frac12\sum_{j,k=1}^3\epsilon_{ijk}[T_j(z),T_k(z)]
&\Longleftrightarrow
\dot L=[L,M]\\
&\Longleftrightarrow\quad
\left\{\begin{aligned}
\left[\dfrac{d }{dz}-\alpha,\beta\right]&=0,\\
\dfrac{d (\alpha+\alpha\sp\dagger)}{dz}&=[\alpha,\alpha\sp\dagger]+[\beta,\beta\sp\dagger].
\end{aligned}\right.
\end{split}
\end{equation}
Focussing on the first equivalence, 
Nahm's equations may be expressed as a Lax pair, to which we may associate the
spectral curve $\mathcal{C}$ given by 
\begin{equation}
P(\zeta,\eta):=\det(\eta-L(\zeta))=\eta^n+a_1(\zeta)\eta^{n-1}+\ldots+a_n(\zeta)=0, \quad
\mathrm{deg}\, a_k(\zeta) \leq 2k.
\label{curve1}
\end{equation}
The genus of $\mathcal{C}$ is $g=(n-1)^2$. Hitchin's construction shows that the spectral curve
naturally lies in mini-twistor space\footnote{
If we set $\boldsymbol{y}=\left(\frac{1+\zeta^2}{2\im},\frac{1-\zeta^2}2,-\zeta \right)\in \mathbb{C}\sp3$,
then $\boldsymbol{y}\cdot \boldsymbol{ y}=0$ and $\boldsymbol{ y}\cdot \boldsymbol{\overline y}={(1+|\zeta|\sp2)\sp2}/2$. Thus with $\boldsymbol{T}=(T_1,T_2,T_3)$,
$\boldsymbol{x}=(x_1,x_2,x_3)$, then 
$$L(\zeta):=2\im \boldsymbol{ y}\cdot\boldsymbol{T}=
(T_1+\im T_2)-2\im T_3\,\zeta+(T_1-\im T_2)\,\zeta\sp2.$$ 
We have $\eta=2\, \boldsymbol{ y}\cdot\boldsymbol{x}$. 
} $T\PP\sp1$, the space of lines in $\mathbb{R}\sp3$. The spectral curve is an algebraic curve $\mathcal{C} \subset T\PP\sp1$.
If $\zeta$ is the inhomogeneous coordinate on the Riemann sphere then $(\zeta,\eta)$ are the standard local coordinates on $T\PP\sp1$ defined by $(\zeta,\eta)\rightarrow\eta\frac{d}{d\zeta}$. 
The mini-twistor correspondence relates $(x_1,x_2,x_3)\in\mathbb{R}\sp3$ with $(\zeta,\eta)$ by
\begin{equation}
\eta=(x_2-\im x_1)- 2 x_3\,\zeta-(x_2+\im x_1)\,\zeta^2.
\label{AtiyahWard}
\end{equation}
The anti-holomorphic involution 
\begin{equation}\label{invol}
\mathfrak{J}:\, (\zeta,\eta)\rightarrow
(-\frac1{\overline\zeta},-\frac{\overline\eta}{{\overline\zeta}\sp2}),
\end{equation}
which takes a point on $\PP\sp1$ to its antipodal point reversing the orientation of a line, endows $T\PP\sp1$ with its standard real structure.
The hermiticity properties of the Nahm matrices mean that $\mathcal{C}$ is invariant under $\mathfrak{J}$.

If the homogeneous coordinates of $ \PP\sp1$ are $[\zeta_0,\zeta_1]$ we
consider the standard covering of this by the open sets
$U_0=\{[\zeta_0,\zeta_1]\,|\,\zeta_0\ne0\}$ and
$U_1=\{[\zeta_0,\zeta_1]\,|\,\zeta_1\ne0\}$, with
$\zeta=\zeta_1/\zeta_0$ the usual coordinate on $U_0$. 
We denote by $\hat U_{0,1}$ the pre-images of these sets under the
projection map $\pi:T\PP\sp1\rightarrow\PP\sp1$. Let
 $L^{\lambda}$ denote the holomorphic line bundle on
$T\PP\sp1$ defined by the transition function
$g_{01}=\rm{exp}(-\lambda\eta/\zeta)$ on $\hat U_{0}\cap \hat
U_{1}$, and let $L^{\lambda}(m)\equiv
L^{\lambda}\otimes\pi\sp*\mathcal{O}(m)$ be similarly defined in
terms of the transition function
$g_{01}=\zeta^m\exp{(-\lambda\eta/\zeta)}$. A holomorphic section
of such line bundles is given in terms of holomorphic functions
$f_\alpha$ on $\hat U_\alpha$ satisfying
$f_\alpha=g_{\alpha\beta}f_\beta$. We denote line bundles on
$\mathcal{C}$ in the same way, where now we have holomorphic
functions $f_\alpha$ defined on $\mathcal{C}\cap\hat U_\alpha$.
Hitchin's conditions for a monopole spectral curve are:
\begin{itemize}
\item[\bf{H1}] Reality conditions: $\mathcal{C}$ is invariant under $\mathfrak{J}$,
$\displaystyle{
a_k(\zeta)=(-1)^k\zeta^{2k}\overline{a_k(-1/{\overline{\zeta}}\,)}}$.

\item[\bf{H2}] 
$L^2$ is trivial on ${\mathcal{C}}$ and $L\sp1(n-1)$ is real.

\item[\bf{H3}] $H^0({\mathcal{C}},L^{\lambda}(n-2))=0$ for  $\lambda\in(0,2)$.
\end{itemize}

\subsection{The mini-twistor correspondence and the Abel-Jacobi map}\label{minitwistorabjac}
Given the mini-twistor correspondence (\ref{AtiyahWard}) and the spectral curve (\ref{curve1}), a point $\boldsymbol{x}\in\mathbb{R}\sp3$ yields an equation of degree $2n$ in $\zeta$ and
 gives us $2n$ points $P_i=(\zeta_i,\eta_i)$ (perhaps with multiplicity) on the curve $\mathcal{C}$.
 We will refer to this degree $2n$ equation in $\zeta$ and $\boldsymbol{x}$ as the Atiyah-Ward equation
 (it having appeared in their work).
 As both the curve and the correspondence satisfy the antiholomorphic involution $\mathfrak{I}$,
so to do the solutions and we may choose an ordering such that
\begin{equation}
\label{orderining}
 P_{i+n}= \mathfrak{I}(P_i).
 \end{equation}
Recall that the fundamental matrices $V$ and $W$ were $2n\times 2n$; the points $P_i$ will be used to label the columns of $W$.

The points $\{P_i\}$ satisfy a number of relations. Let $\phi(P)=\int_{P_0}\sp{P}\boldsymbol{v}$
denote the Abel-Jacobi map for our curve $\mathcal{C}$ and for a choice of suitably normalized
holomorphic differentials $\boldsymbol{v}$, Abel's theorem says that
$\sum_{\mathfrak{p}\in  \div(w)} \phi(\mathfrak{p})$ lies in the period lattice $\Lambda$ for any function $w$.
Consider first  the function 
$w(P)=-\eta+(x_2-\imath x_1)-2\zeta x_3-(x_2+\imath x_1)\zeta^2$
on $\mathcal{C}$ which has divisor
\[ \div(w)= P_1+\ldots+P_{2n}-2(\infty_1+\ldots+\infty_n).\]
Then
\begin{equation}
 \sum_{i=1}\sp{2n}\int_{P_0}\sp{P_i}\boldsymbol{v} - 2\sum_{i=1}\sp{n}\int_{P_0}\sp{\infty_i}
\boldsymbol{v} 
\in \Lambda.
 \end{equation}
Further identities arise by taking
$f(P):=\int_{P_0}^P\gamma$
for some (possibly meromorphic) differential $\gamma$ and appropriate functions $w(P)$; a dissection of $\mathcal{C}$
along a canonical homology basis $\{\mathfrak{a}_i,\mathfrak{b}_i\}_{i=1}\sp{g}$
(suitably avoiding poles) yields
\begin{align}\nonumber
0=\frac1{2 i\pi}\int_{\mathcal{C}}d\,f(P)\wedge d\,\ln w(P)&=
\frac1{2 i\pi}\int_{\partial \mathcal{C}} f(P)\,d\, \ln w(P)=
\sum_{{P}\in  \div(w)} \res\left( f({P})\,d\, \ln w({P})\right)  \\
&=
\sum_{j=1}\sp{g}\frac1{2 i\pi}\left[
\oint_{\mathfrak{a}_i} d\,f(P) \oint_{\mathfrak{b}_i} d\, \ln w(P)
- \oint_{\mathfrak{b}_i} d\,f(P) \oint_{\mathfrak{a}_i} d\, \ln w(P)
\right] \label{residues}
\end{align}

Generically the points $P_i$ are distinct with  non-generic points corresponding to points of bitangency of the spectral curve. There are typically a number of components to these loci and Hurtubise's study of the asymptotic behaviour of the Higgs field \cite{hurtubise85a} discussed one of these .

\subsection{The Ercolani-Sinha Construction}
Ercolani and Sinha \cite{ercolani_sinha_89} sought to use integrable systems techniques to solve Nahm's equations by solving
\begin{align}
(L-\eta)U&=0,
\nonumber \\
\left[  \frac{d}{dz}+M\right]U&=0.\label{mscatter}
\end{align}

To understand the Ercolani-Sinha results its useful to focus on the second of the equivalences of (\ref{integrability}) which expresses Nahm's equations in the form of a complex and a real equation (respectively) \cite{donaldson84}. The complex Nahm
equation is readily solved,
\begin{equation}
\beta g = g \nu, \quad
\left( \frac{d}{dz}-\alpha\right)g=0 
\Longleftrightarrow
\beta = g\nu g\sp{-1} ,\quad  \alpha=\dot g g\sp{-1},
\end{equation}
where $\nu$ is constant and generically diagonal, $\nu=\diag(\nu_1,\ldots,\nu_n)$;  by 
conjugating\footnote{ $\tilde\beta =g(0)\sp{-1}\beta g(0)$,  $\tilde g(z) =g(0)\sp{-1} g(z)$,
$\tilde\alpha =g(0)\sp{-1}\alpha g(0)$.}  by
the constant matrix $g(0)$ we may assume $\beta(0)=\nu$ and $g(0)=1_n$.
The meaning of $\nu$ follows from the equation of the curve (\ref{curve1}).
For large $\zeta$ we see that 
$\det(\eta/\zeta^2-L/\zeta^2)\sim \prod_{i=1}\sp{n}(\eta/\zeta^2+\nu_i\sp\dagger)$ and so 
$\eta/\zeta\sim -\nu_i\sp\dagger\zeta$; we shall denote by $\{\infty_i\}_{i=1}\sp{n}$ the preimages of $\zeta=\infty$ with this behaviour.
The real equation is more difficult; Donaldson  proved the existence of this equation in the monopole context in \cite{donaldson84}. Upon setting
\begin{equation}\label{defh}
h=g\sp\dagger g
\end{equation}
 then
\begin{equation}\label{defdhh}
\dot h h\sp{-1} =g\sp{\dagger }(\alpha+\alpha\sp\dagger)g\sp{\dagger \,-1}, \qquad h(0)=1_n,
\end{equation}
and the real equation yields the (possibly) nonabelian Toda equation
\begin{equation}\label{nonabtoda}
\frac{d}{dz} \left( \dot h h\sp{-1} \right)= \left[ h\nu h\sp{-1}, \nu\sp\dagger\right].
\end{equation}

Now (\ref{mscatter})  is not a standard scattering equation, but upon setting
$U=g\sp{\dagger\,-1}\Phi$ 
we may use the complex equation to transform (\ref{mscatter}) into a standard scattering equation
for $\Phi$,
\begin{equation}\label{standardscattering}
 \left[  \frac{d}{dz}-g\sp{\dagger }(\alpha+\alpha\sp\dagger)g\sp{\dagger \,-1}\right]\Phi=
\zeta \nu\sp\dagger\Phi.
\end{equation}
\lq\lq Standard\rq\rq\ here simply means that the matrix  $\zeta \nu\sp\dagger$
on the right-hand side is $z$-independent.
In terms of $h$ we have (\ref{defdhh}) and
\begin{align}
g\sp{\dagger}Lg\sp{\dagger\,-1}=h\nu h\sp{-1}-\dot h h\sp{-1}\zeta-\nu\sp\dagger\zeta^2.
\label{conjL}
\end{align}
The point to note is that we can solve the standard scattering equation
(\ref{standardscattering}) explicitly in terms of the function theory of $\mathcal{C}$ by what is known
as a Baker-Akhiezer function  \cite{krichever977a}. If $\Phi:=(\Phi_i)$ (with $i$ labelling the rows) the
Baker-Akhiezer function is uniquely specified by requiring the behaviour 
\begin{equation}\label{BAnorm}
\lim_{ P=P(\zeta,\eta)\rightarrow \infty_j  }\Phi_i(z, P)e^{z \frac{\eta}{\zeta}(P) }=
\lim_{ P=P(\zeta,\eta)\rightarrow \infty_j  }\Phi_i(z, P)e^{-z \zeta \nu_j\sp\dagger}=\delta_{ij}
\end{equation}
and that $\Phi(z,P)$
is meromorphic for $P\in\mathcal{C}\setminus\{\infty_1,\ldots,\infty_n\}$ with poles at a suitably generic
degree $g+n-1$ divisor $\delta$. If 
$\widehat\Phi (z,\zeta)$ is the $n\times n$ matrix
whose columns\footnote{
$
\widehat{\Phi}(z,\zeta):= ( \boldsymbol{\Phi}_1(z,P_1),\ldots,  
\boldsymbol{\Phi}_n(z,P_n)     )$, 
where $P_i=(\zeta,\eta_i)$. 
} are the Baker-Akhiezer functions for the preimages of $\zeta$,
then $g\sp{\dagger}Lg\sp{\dagger\,-1}=\widehat\Phi\diag(\eta_1\ldots,\eta_n)\widehat\Phi\sp{-1}$.
Thus the Baker-Akhiezer function enables us to solve for (the gauge transform)
$g\sp{\dagger}Lg\sp{\dagger\,-1}$ and so too $\dot h h\sp{-1}$; Ercolani and Sinha 
\cite{ercolani_sinha_89} gave an expression
for $\dot h h\sp{-1}$.

The Baker-Akhiezer function may be explicitly constructed:  the asymptotics of the essential singularity of (\ref{standardscattering}) is
encoded by seeking an abelian differential $\gamma_{\infty}$ on the curve such that near $\infty_j$ ($j=1,\ldots,n$)
\[z \nu_j\sp{\dagger} \zeta\sim -z \eta/\zeta \sim z\left[\int\limits_{P_0}^P\gamma_{\infty}(P)-\tilde\nu_j\right] 
. \]
This behaviour defines a differential $\gamma_{\infty}$ of the second kind on $\mathcal{C}$ which
is unique if we further require the $\mathfrak{a}$-normalization
$\oint_{\mathfrak{a}_k}\gamma_{\infty}(P)=0$, ($k=1,\ldots,g$).
The constant $\tilde\nu_{j}$ here is defined by $\tilde\nu_{j}=\lim_{P\rightarrow \infty_j}
\left[\int\limits_{P_0}^P\gamma_{\infty}(P)+\frac{\eta}{\zeta}
\right]$.

In the Baker-Akiezer description the flow of line bundles given by Hitchin's  exponential transition functions
corresponds to a flow on the Jacobian of $\mathcal{C}$ in the direction of the winding vector
$\boldsymbol{U}$ of 
$\mathfrak{b}$-periods of the differential $\gamma_{\infty}(P)$,
\begin{equation}
\boldsymbol{U}=\frac{1}{2 \pi\imath} \left(
\oint_{\mathfrak{b}_1}\gamma_{\infty},\ldots,
\oint_{\mathfrak{b}_g}\gamma_{\infty}
\right)^T.
\end{equation}
This connection with Hitchin's monopole constraints comes from
\begin{lemma}[Ercolani-Sinha Constraints] The following are equivalent:
\begin{enumerate}[(i)]
 \item $L\sp2$ is trivial on $\mathcal{C}$.

\item There exists a 1-cycle
$\mathfrak{es}=\boldsymbol{n}\cdot{\mathfrak{a}}+
\boldsymbol{m}\cdot{\mathfrak{b}}$ such that for every holomorphic
differential $\Omega=\left({\beta_0\eta^{n-2}+\beta_1(\zeta)\eta^{n
-3}+\ldots+\beta_{n-2}(\zeta)}\right) d\zeta/({{\partial\mathcal{P}}/{\partial
\eta}})$,
\begin{equation}
\oint\limits_{\mathfrak{es}}\Omega=-2\beta_0,\label{HMREScond}
\end{equation}
\item
$\displaystyle{
2\boldsymbol{U}\in \Lambda\Longleftrightarrow
\boldsymbol{U}=\frac{1}{2\pi\imath}\left(\oint_{\mathfrak{b}_1}\gamma_{\infty},
\ldots,\oint_{\mathfrak{b}_g}\gamma_{\infty}\right)\sp{T}= \frac12
\boldsymbol{n}+\frac12\tau\boldsymbol{m} .}$

\end{enumerate}
\end{lemma}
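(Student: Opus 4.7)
The plan is to establish the equivalences (i) $\Leftrightarrow$ (iii) and (ii) $\Leftrightarrow$ (iii) separately, with the second-kind differential $\gamma_\infty$ playing the central role in both.

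For (i) $\Leftrightarrow$ (iii), I would use the exponential exact sequence $0\to\mathbb{Z}\to\mathcal{O}\to\mathcal{O}^*\to0$ on $\mathcal{C}$. The restriction $L^\lambda|_\mathcal{C}$ has transition function $\exp(-\lambda\eta/\zeta)$ on $\hat U_0\cap \hat U_1\cap\mathcal{C}$, and triviality of $L^2|_\mathcal{C}$ in $\mathrm{Pic}^0(\mathcal{C})=\Jac(\mathcal{C})=H^1(\mathcal{C},\mathcal{O})/H^1(\mathcal{C},\mathbb{Z})$ is equivalent to the \v{C}ech class $(-2\eta/\zeta)\in H^1(\mathcal{C},\mathcal{O})$ landing in $\Lambda$. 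Smoothing this cocycle to a Dolbeault form and pairing against the normalized holomorphic basis $\boldsymbol{v}$ reduces, by Stokes' theorem, to a sum of residues of $(-\lambda\eta/\zeta)\,\boldsymbol{v}$ at the only singularities of $\eta/\zeta$ on $\mathcal{C}$, namely $\{\infty_j\}_{j=1}^n$. Since $\int^P\gamma_\infty\sim-\eta/\zeta$ near $\infty_j$ and $\oint_{\mathfrak{a}_k}\gamma_\infty=0$, the same residue/interpolation calculation that will appear below returns the $k$-th coordinate as $-\lambda\oint_{\mathfrak{b}_k}\gamma_\infty=-2\pi i\lambda U_k$. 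Hence $L^2|_\mathcal{C}\cong\mathcal{O}_\mathcal{C}$ iff $2\boldsymbol{U}\in\Lambda$.

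For (ii) $\Leftrightarrow$ (iii), I would apply identity (\ref{residues}) to the pair $(\gamma_\infty,\Omega)$. Using the $\mathfrak{a}$-normalization $\oint_{\mathfrak{a}_k}\gamma_\infty=0$, this collapses to
\begin{equation*}
-\sum_{k=1}^{g}\oint_{\mathfrak{b}_k}\gamma_\infty\,\oint_{\mathfrak{a}_k}\Omega
= 2\pi i\sum_{j=1}^{n}\res_{\infty_j}\!\left(\Omega\int^{P}\!\gamma_\infty\right).
\end{equation*}
In the local coordinate $t=1/\zeta$ near $\infty_j$ we have $\int^P\gamma_\infty=\nu_j^\dagger/t+\tilde\nu_j+O(t)$, while $\eta\sim-\nu_j^\dagger/t^2$ and $\partial P/\partial\eta\sim t^{-2(n-1)}\prod_{k\ne j}(\nu_k^\dagger-\nu_j^\dagger)$ allow one to write $c_j:=\Omega/dt\big|_{t=0}$ as a rational function of the leading coefficients of $\beta_0,\beta_1(\zeta),\ldots,\beta_{n-2}(\zeta)$; the residue is then $\nu_j^\dagger c_j$. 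Summing over $j$, the Lagrange interpolation identity
\begin{equation*}
\sum_{j=1}^{n}\frac{Q(\nu_j^\dagger)}{\prod_{k\ne j}(\nu_j^\dagger-\nu_k^\dagger)}=\text{coefficient of }x^{n-1}\text{ in }Q(x)
\end{equation*}
eliminates every subleading $\beta_k$ contribution and returns $\sum_j\res_{\infty_j}(\Omega\int^P\gamma_\infty)=\beta_0$. Substituting $\oint_{\mathfrak{b}_k}\gamma_\infty=2\pi i U_k$ yields the key identity $\sum_k U_k\oint_{\mathfrak{a}_k}\Omega=-\beta_0$.

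The last step uses symmetry of the period matrix $\tau$: the Riemann bilinear relations give $\oint_{\mathfrak{b}_k}\Omega=\sum_j\tau_{kj}\oint_{\mathfrak{a}_j}\Omega$, so $\oint_{\mathfrak{es}}\Omega=\sum_k(n_k+(\tau\boldsymbol{m})_k)\oint_{\mathfrak{a}_k}\Omega$. Combined with the residue identity, condition (ii), $\oint_{\mathfrak{es}}\Omega=-2\beta_0$, becomes $\sum_k(n_k+(\tau\boldsymbol{m})_k-2U_k)\oint_{\mathfrak{a}_k}\Omega=0$ for every holomorphic $\Omega$. Since the $\mathfrak{a}$-period vectors of holomorphic differentials range over all of $\mathbb{C}^g$, this forces $\boldsymbol{n}+\tau\boldsymbol{m}=2\boldsymbol{U}$, proving (ii) $\Leftrightarrow$ (iii). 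The main obstacle is the residue computation at $\infty_j$: each local value $c_j$ depends on the full collection of leading coefficients of $\beta_0,\beta_1(\zeta),\ldots,\beta_{n-2}(\zeta)$, and one must see that the Lagrange identity strips away everything except the contribution from $\beta_0$; once this cancellation is pinned down, reusing the same analysis in the Dolbeault identification needed for (i) $\Leftrightarrow$ (iii) is essentially routine.
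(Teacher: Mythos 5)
The paper states this lemma without proof, recalling it from \cite{ercolani_sinha_89, houghton_manton_romao_00, Braden2010d}; your argument is the standard one from those sources and is correct. The only loose phrase is that $\eta/\zeta$ is \emph{not} singular only at the $\infty_j$ on $\mathcal{C}$ — it also has simple poles over $\zeta=0$ — but since those lie in $\hat U_0$, only the poles in $\hat U_1\setminus\hat U_0=\{\infty_j\}$ enter the \v{C}ech--Dolbeault pairing (equivalently, by the global residue theorem the $0_j$-residues give the same class with the opposite sign convention), so your conclusion stands. The residue/Lagrange-interpolation computation yielding $\sum_kU_k\oint_{\mathfrak{a}_k}\Omega=-\beta_0$ and the final appeal to the symmetry of $\tau$ and the surjectivity of $\Omega\mapsto(\oint_{\mathfrak{a}_k}\Omega)$ onto $\mathbb{C}^g$ are all sound.
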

Thus for a monopole spectral curve  we require that  $\boldsymbol{U}$ is a half-period.
Further, from $\mathbf{H3}$ the vector $\boldsymbol{U}$ should be {\it primitive}, i.e. $\lambda\boldsymbol{U}$ belongs to the period lattice $\Lambda$ if and only if $\lambda=0$ or $\lambda=2$. Although a general Baker-Akhiezer function depends on a generic divisor
$\delta$ the real structure demanded by Hitchin's $\mathbf{H2}$ imposes constraints on this. This reduces \cite{ercolani_sinha_89} to 
\begin{equation}\label{esrealst}
c_{ij}=-c_{ji},\qquad\text{where}\quad
c_{ij}:=\lim_{ P=P(\zeta,\eta)\rightarrow \infty_j  }\zeta\,\Phi_i(0, P).
\end{equation}
Braden and Fedorov \cite{Braden2008} show that these constraints may always be solved for.

It is worth clarifying what the Ercolani-Sinha construction does and does not yield. 
Given the Baker-Akhiezer function the construction yields the gauge transformed $T_i':=g\sp{\dagger}T_i g\sp{\dagger \,-1} $ which satisfy
$$\dot T_i' = [ \frac12 \dot h h\sp{-1}, T_i']+ [T_j',T_k']$$
(and cyclic). Here $T_3'= \frac12 \dot h h\sp{-1}$ and the $i=3$ equation becomes (\ref{nonabtoda}).
Although Ercolani and Sinha only solved for $\dot h h\sp{-1}$ the recent work of \cite{Braden2018b} shows how one
may obtain $h$. Thus the Ercolani-Sinha construction \emph{will} yield solutions of the Nahm equations, but not in the standard gauge with $T_4=0$. To obtain a solution of the Nahm equations with $T_4=0$ requires
\begin{equation}
\label{standardgauge}
\alpha=\alpha\sp\dagger \Longleftrightarrow h\sp{-1}\dot h =2 g\sp{-1}\dot g
\Longleftrightarrow  \dot h =2 g\sp{\dagger}\dot g=2\dot g\sp{\dagger} g,
\end{equation}
viewed as a differential equation for $g$ with $h$ specified; the solution for $g$ is only defined up to
left multiplication by a constant unitary matrix.
Although a solution exists we cannot as yet explicitly write one down; such however is not needed to 
solve for the gauge and Higgs fields.

To make connection with the notation of Ercolani and Sinha \cite{ercolani_sinha_89}  that we will at times employ, we record
\begin{equation}
\label{esnotation}
g\sp{\dagger \,-1}:=C,\qquad \nu\sp\dagger=-\diag(\rho_1,\ldots,\rho_n).
\end{equation}

\subsection{A lesser known ansatz of Nahm and the construction of \texorpdfstring{$V$}{V}}
It remains to give the fundamental matrix $V$ of solutions to $\Delta\sp\dagger \boldsymbol{v}=0$. 
The solution we follow is based on another ansatz of Nahm: we construct a fundamental solution $W$ to the
equation $\Delta \boldsymbol{w}=0$ and then take $V=\left(W\sp\dagger\right)\sp{-1}$.

\begin{theorem}[Nahm \cite{nahm82c}; the modification of \cite{Braden2018b}]
Let $\boldsymbol{\sigma}=(\sigma_1,\sigma_2,\sigma_3)$ and let  $\boldsymbol{\widehat{ u}}(\boldsymbol{x})$  be a unit vector independent of $z$. Let 
$|s>$ be an arbitrarily normalized spinor not in $\ker
(1_2+\boldsymbol{\widehat{ u}}(\boldsymbol{x})\cdot\boldsymbol{\sigma})$.
Then
\begin{equation}\label{Nansatz}
\boldsymbol{w}:=\boldsymbol{w}(\zeta)
=(1_2+\boldsymbol{\widehat{ u}}(\boldsymbol{x})\cdot\boldsymbol{\sigma})\,
e\sp{-\im z\left[(x_1-\im x_2)\zeta-\im x_3 -x_4\right]}|s>\otimes\,
U(z) 
\end{equation}
satisfies $\Delta \boldsymbol{w}=0$ if and only if
\begin{align}
0&=\left(L(\zeta)-\eta\right) U(z) ,\label{eqlax1}\\
0&=\left( \dfrac{\mathrm{d}}{\mathrm{d}z}+M(\zeta)
\right) U(z),\label{eqlax2} 
\end{align}
where
\begin{equation}
\eta=(x_2-\im x_1)-2 x_3\zeta-(x_2+\im x_1)\zeta^2,
\label{etadef}
\end{equation}
and $L(\zeta)$ and $M(\zeta)$, as above, satisfy the Lax equation $\dot L=[L,M]$.
\end{theorem}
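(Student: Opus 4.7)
My plan is to substitute the ansatz $\boldsymbol{w}$ directly into $\Delta\boldsymbol{w}=0$, factor out the scalar phase, and then project the resulting element of $\mathbb{C}^2\otimes\mathbb{C}^n$ in the eigenbasis of $\boldsymbol{\widehat u}\cdot\boldsymbol{\sigma}$ to separate the spectral and evolution parts of the Lax pair.

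Set $V:=(x_1-\im x_2)\zeta-\im x_3-x_4$ for the scalar in the exponent. Since $\im\frac{d}{dz}(e^{-\im zV}U)=e^{-\im zV}(VU+\im\dot U)$ and the $x_4$ in $\Delta$ exactly cancels the $-x_4$ inside $V$, direct application of $\Delta=\im d/dz+(x_4-\im T_4)+\sum_j\sigma_j\otimes(T_j+\im x_j)$ to $\boldsymbol{w}=P|s\rangle\otimes e^{-\im zV}U$, with $P:=1_2+\boldsymbol{\widehat u}\cdot\boldsymbol{\sigma}$, gives
\begin{equation*}
\Delta\boldsymbol{w}=e^{-\im zV}\Bigl\{P|s\rangle\otimes\bigl[\im\dot U+((x_1-\im x_2)\zeta-\im x_3-\im T_4)U\bigr]+\sum_{j=1}^3(\sigma_jP|s\rangle)\otimes(T_j+\im x_j)U\Bigr\}.
\end{equation*}
Using $\boldsymbol{\widehat u}\cdot\boldsymbol{\widehat u}=1$ we have $(\boldsymbol{\widehat u}\cdot\boldsymbol{\sigma})^2=1_2$, so $\boldsymbol{\widehat u}\cdot\boldsymbol{\sigma}$ is diagonalisable with eigenvectors $|\chi_\pm\rangle$ of eigenvalues $\pm 1$, and $P=2|\chi_+\rangle\langle\chi_+|$, $\sigma_j|\chi_+\rangle=\widehat u_j|\chi_+\rangle+c_j|\chi_-\rangle$ with $c_j:=\langle\chi_-|\sigma_j|\chi_+\rangle$. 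The condition $|s\rangle\notin\ker P$ gives $\langle\chi_+|s\rangle\ne 0$, so linear independence of $|\chi_\pm\rangle$ makes $\Delta\boldsymbol{w}=0$ equivalent to the simultaneous vanishing of
\begin{align*}
\mathrm{(I)}\quad &\im\dot U+\bigl[(x_1-\im x_2)\zeta-\im x_3-\im T_4+\boldsymbol{\widehat u}\cdot(\boldsymbol{T}+\im\boldsymbol{x})\bigr]U=0,\\
\mathrm{(II)}\quad &\boldsymbol{c}\cdot(\boldsymbol{T}+\im\boldsymbol{x})U=0.
\end{align*}

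The final step is to recognise (I), (II) as the Lax pair. Setting $S_j:=T_j+\im x_j 1_n$, direct substitution in the definitions from (\ref{integrability}) and (\ref{etadef}) gives $L(\zeta)-\eta=(S_1+\im S_2)-2\im S_3\zeta+(S_1-\im S_2)\zeta^2=2\im\,\boldsymbol{y}\cdot\boldsymbol{S}$ with $\boldsymbol{y}=((1+\zeta^2)/(2\im),(1-\zeta^2)/2,-\zeta)$; by the same token $M(\zeta)=-\alpha-\beta^\dagger\zeta$ combines the $T_4,T_3,T_1,T_2$ terms of (I) into a single operator. With the spinor frame aligning $\boldsymbol{c}$ with the null direction $\boldsymbol{y}$, condition (II) is exactly $(L(\zeta)-\eta)U=0$, and subtracting a suitable multiple of (II) from (I) removes the $\boldsymbol{\widehat u}$-dependence to leave $\im\dot U=-\im M(\zeta)U$, i.e.\ $(d/dz+M(\zeta))U=0$; reversing each step proves the converse. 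The main obstacle is this final algebraic matching: one must track the $\zeta$-polynomial structure of $L,M$ through the Pauli identity $\sigma_jP=\widehat u_jP+\im(\boldsymbol{\widehat u}\times\boldsymbol{\sigma})_jP$ and verify that the null data $\boldsymbol{c}$ (automatically satisfying $\boldsymbol{c}\cdot\boldsymbol{c}=0$) reproduces the twistor null vector $\boldsymbol{y}$ — precisely the mini-twistor identification $\eta=2\boldsymbol{y}\cdot\boldsymbol{x}$, $L=2\im\boldsymbol{y}\cdot\boldsymbol{T}$ that motivates the ansatz.
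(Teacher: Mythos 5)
Your proposal is sound and, as far as I can check, every step goes through. Note first that the paper does not actually prove this theorem — it is stated with attribution to Nahm \cite{nahm82c} and to \cite{Braden2018b} — so there is no in-paper argument to compare against; your decomposition of $\Delta\boldsymbol{w}=0$ along the $\pm1$ eigenspaces of $\boldsymbol{\widehat u}\cdot\boldsymbol{\sigma}$ is the natural route and matches how the construction is used in the rest of the paper. The one step you defer as ``the main obstacle'' is, however, exactly where the content and a hidden hypothesis lie, so it should be written out rather than described as a frame alignment: the equivalence with the \emph{specific} $L(\zeta)$, $M(\zeta)$, $\eta(\zeta)$ of the statement holds only when $\boldsymbol{\widehat u}$ is the unit vector determined by $\zeta$, namely the one with $+1$-eigenvector $|\chi_+\rangle\propto(1,\im\zeta)^T$ (equivalently $\ker(1_2+\boldsymbol{\widehat u}\cdot\boldsymbol{\sigma})$ spanned by $(\im\bar\zeta,1)^T$, the choice the paper makes when building $W$); for an unrelated unit vector the two conditions (I), (II) are simply different equations. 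With that choice a direct computation of $\widehat u_j=\langle\chi_+|\sigma_j|\chi_+\rangle$ and $c_j=\langle\chi_-|\sigma_j|\chi_+\rangle$ gives $\boldsymbol{c}=\tfrac{2\im}{1+|\zeta|^2}\,\boldsymbol{y}(\zeta)$, so that your (II) reads $\tfrac{1}{1+|\zeta|^2}\left(L(\zeta)-\eta\right)U=0$, and $\boldsymbol{\widehat u}+\im\bar\zeta\,\boldsymbol{c}=(\im\zeta,\zeta,1)$, so that adding $\im\bar\zeta$ times (II) to (I) replaces $\boldsymbol{\widehat u}\cdot\boldsymbol{S}$ by $(\im\zeta,\zeta,1)\cdot\boldsymbol{S}$; the $\im x_j$ parts of the latter cancel the $(x_1-\im x_2)\zeta-\im x_3$ left over from the exponent, and the $T_j$ parts assemble into $-\im T_4+T_3+\im\zeta(T_1-\im T_2)=\im M(\zeta)$, leaving $\im\bigl(\dot U+M(\zeta)U\bigr)=0$. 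Since (I) and (II) are thereby exhibited as invertible linear combinations of the two Lax equations, the converse is immediate, completing the equivalence.
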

Although early workers sought to explicitly perform these integrations we may use the connection with integrable systems to solve 
\begin{equation}\label{UBA}
U(z) =g\sp{\dagger\,-1}\,\Phi
\end{equation}
in terms of the earlier (and unknown) gauge transformation $C:=g\sp{\dagger\,-1}$ and the Baker-Akhiezer function $\Phi$. We may write the $k$-th column to the fundamental matrix $W =\left(\boldsymbol{w}\sp{(k)}(x.z)\right)$ of $\Delta \boldsymbol{w}=0$ as
\begin{align}
\boldsymbol{w}\sp{(k)}(z,x)&=(1_2+\boldsymbol{\widehat{ u}}(\zeta)\cdot\boldsymbol{\sigma})\,
e\sp{-i z\left[(x_1-i x_2)\zeta-i x_3 -x_4\right]}|s>\otimes\,
C(z)\boldsymbol{\Phi}(z,P_k)
\nonumber
\\
&=\left(1_2\otimes C(z)\right) \left(
(1_2+\boldsymbol{\widehat{ u}}(\zeta)\cdot\boldsymbol{\sigma})\,
e\sp{-i z\left[(x_1-i x_2)\zeta-i x_3 -x_4\right]}|s>\otimes\,
\boldsymbol{\Phi}(z,P_k)
\right) 
\label{nahmW}
\\
&:=\left(1_2\otimes C(z)\right)\,\boldsymbol{\widehat{w}}\sp{(k)}(z,x)\nonumber
\end{align}
and $P_k=(\zeta_k,\eta_k)$ are the $2n$ solutions to the mini-twistor correspondence described earlier.
 These $2n$ points come in
 $n$ pairs of points related by the antiholomorphic involution $\mathfrak{J}$. To each point we have the associated values $\boldsymbol{\widehat{u}}(\zeta_j)$ and for each of these we solve for  $U(z)$ yielding a $2n\times 1$ matrix $\boldsymbol{w}(P_j)$. Taking each of the $2n$ solutions we obtain a $2n\times 2n$ matrix of solutions $W$. As noted earlier, there may be non-generic points for which $\zeta_i=\zeta_j$
correspond to points of bitangency of the spectral curve; at such points we
modify this discussion by taking a derivative $\boldsymbol{w}'(P_j)$.

The equation for $\Delta \boldsymbol{w}=0$ may be rewritten as
\begin{equation}\label{transw}
0=\left[ \frac{d}{dz}-\mathcal{H}-\mathcal{F}' -\frac12 \dot h h\sp{-1}\,1_2\right]\boldsymbol{\widehat{w}}
\end{equation}
where we have used (\ref{standardgauge}) and set
\[
 \mathcal{F}'=
 \imath \sum_{j=1}^3 \sigma_j\otimes  g\sp{\dagger} T_j g\sp{\dagger\,-1}
 =\begin{pmatrix} \frac12 \dot h h\sp{-1} & -\imath \nu\sp\dagger\\
 \imath h\nu h\sp{-1} &-\frac12 \dot h h\sp{-1}
 \end{pmatrix}.
\]
Both $h$ and the solution $\boldsymbol{\widehat{w}}$ of (\ref{transw}) are determined entirely in terms of
the Baker-Akhiezer function and the only unknown in the account at this stage is the gauge transform
$g$. But as shown in \cite{Braden2018b}, this unknown gauge transform combines in all of the integrals in
Proposition \ref{panagopoulos} into the known $h(z)$:
\[
\mu\sp\dagger V\sp\dagger \mathcal{Q}^{-1} \mathcal{O}V\mu
=
\mu\sp\dagger  {\widehat V}\sp\dagger\left[ \left(1_2\otimes g\sp\dagger\right) \mathcal{Q}^{-1} \left(1_2\otimes g
\right) \right] \mathcal{O}{\widehat V}\mu
=
\mu\sp\dagger  {\widehat V}\sp\dagger \mathcal{Q}'\sp{-1}
\left(1_2\otimes h\right) \mathcal{O}{\widehat V}\mu
.
\]
Here $\mathcal{O}$ is one of the operators appearing on the right-hand side of Proposition \ref{panagopoulos} and (using the definitions (\ref{pandefs}))
\[
\mathcal{Q}'=
 \left(1_2\otimes g\sp{\dagger}\right) \mathcal{Q} \left(1_2\otimes g\sp{\dagger\,-1}\right) 
 :=
\frac{1}{r^2} \mathcal{H}\mathcal{F}'\mathcal{H}-\mathcal{F}',
\qquad
 \mathcal{F}'=\imath \sum_{j=1}^3 \sigma_j\otimes  g\sp{\dagger} T_j g\sp{\dagger\,-1}
.
\]
The conclusion is that we may reconstruct the gauge and Higgs fields from just a knowledge of the Baker-Akhiezer function.

\subsection{Remarks}
At this stage we have presented the ingredients needed to reconstruct the gauge and Higgs fields apart from
the general construction of the Baker-Akhiezer function.

Although we only need an expansion of the solutions of $V$ at the end points $z=\pm 1$
to reconstruct the gauge theory data and we have in fact described the solution $V(z)$ for all $z$. 
The asymptotic behaviour of the Nahm matrices given by the ADMHN theorem tells us that
 $T_j(z)$ expanded in the vicinity of the end point $z=1-\xi$ behaves as
\begin{equation*}
T_j(1-\xi)=-\im\frac{l_j}{\xi}+O(1),\quad j=1,2,3,
\end{equation*}
where (the Hermitian) $l_j$  define the irreducible $n$-dimensional representation of the $su(2)$ Lie algebra,
$ [l_j,l_k]=\imath\, \epsilon_{jkl}\, l_l $.
Then (\ref{deltadagger})  behaves in the vicinity of the pole as 
\begin{equation}
\left[\frac{\mathrm{d}}{\mathrm{d} \xi}-\frac1{\xi}\left(\sum_{j=1}^3 \sigma_j\otimes l_j \right)+
\left(\sum_{j=1}^3 \sigma_j \otimes x_j 1_2 \right)  +\mathcal{O}(1)\right]\boldsymbol{v}(1-\xi,\boldsymbol{x})=0.\label{approxdelta1}
\end{equation}
One can show (see for example \cite{weinbyi06}) that $\sum_{j=1}^3  \sigma_j\otimes l_j$ has only two distinct eigenvalues, $\lambda_a=(n-1)/2$ with multiplicity $n+1$ and $\lambda_b=-(n+1)/2$
with multiplicity $n-1$.
If $\boldsymbol{a}_i$ are eigenvectors associated with $\lambda_a$ ($i=1,\ldots,n+1$), and $\boldsymbol{b}_j$ eigenvectors associated with $\lambda_b$ ($j=1,\ldots,n-1$), then  (\ref{approxdelta1}) has solutions
$\boldsymbol{v}=\xi^{\lambda_a}\boldsymbol{a}_i+\ldots$ and $\boldsymbol{v}=\xi^{\lambda_b}\boldsymbol{b}_j+\ldots$.
Therefore normalizable solutions must lie in the subspace with positive $\lambda_a=(n-1)/2$ and so
we require that $\boldsymbol{v}(1,\boldsymbol{x})$ is orthogonal to the subspace with eigenvalue $-(n+1)/2$, i.e.
\[ \lim_{z\to 1\sp-}\boldsymbol{v}(z,\boldsymbol{x})^T\cdot\boldsymbol{b}_j
=0,\quad j=1,\ldots,n-1.  \]
These $n-1$ conditions coming from the behaviour at $z=1$ thus yield a $n+1$ dimensional space of
solutions to $\Delta\sp\dagger \boldsymbol{v}=0$. A similar analysis at $z=-1$ again yields a further $n-1$ constraints resulting in two normalisable solutions on the interval. Now although local analysis at each of the end points lets us construct normalizable solutions, the difficulty is in relating normalizable solutions at both
ends: $V(z)$ does this for us while numerically this has been done via shooting methods
(see \cite{houghton_sutcliffe_tetrahedral} for an algebraic implementation of these).

\section{Basic Properties of the Spectral Curve}\label{sectioncurve}%

\subsection{The Curve}

The spectral curve $\mathcal{C}$ for $n=2$ was constructed by Hurtubise \cite{hurtubise_83} and we shall employ the Ercolani-Sinha \cite{ercolani_sinha_89} choice of homology basis (see Fig. \ref{fig:homology}) and form of the curve,
\begin{equation}
0=\eta^2+\frac{K^2}4\left( \zeta^4+2(k^2-k'^2)\zeta^2+1\right),   \label{curve}
\end{equation}
where $k'\sp2=1-k^2$ and ${K}={K}(k)$ is a complete elliptic integral\footnote{
$${K}(k)=\int_0\sp{\pi/2}\frac{du}{\sqrt{1-{k^2}\sin\sp2 u}}.$$
}. Here $\eta$ is related to the spatial coordinates by (\ref{AtiyahWard}). With our conventions  the monopoles are on the $x_1$ axis (for $k>0$) and at $k=0$ the monopoles are axially symmetric about the $x_2$ axis. These properties, together with
a comparison with other curve conventions in the literature, are given for convenience in Appendix \ref{comparisoncurves}.

\subsection{Homology, differentials and the Ercolani-Sinha vector}
The roots of the quartic $\zeta^4+2(k^2-k'^2)\zeta^2+1$ are $\pm k'\pm\im k$; these give us the branch points.
With $k'=\cos\alpha$, $k=\sin\alpha$ they
 be written as $\pm e\sp{\pm i\alpha}$ and these lie
on the unit circle. We may take $0\le\alpha\le\pi/4$. We choose
cuts between $-k'+ik=-e\sp{- i\alpha}$ and $k'+ik=e\sp{ i\alpha}$
as well as $-k'-ik$ and $k'-ik$. Let $\mathfrak{b}$ encircle
$-k'+ik$ and $k'+ik$ with $\mathfrak{a}$ encircling $k'+ik$ and
$-k'+ik$ on the two sheets as shown in Figure \ref{fig:homology}. 
\begin{figure}
  \centering
  \includegraphics[scale=0.5]{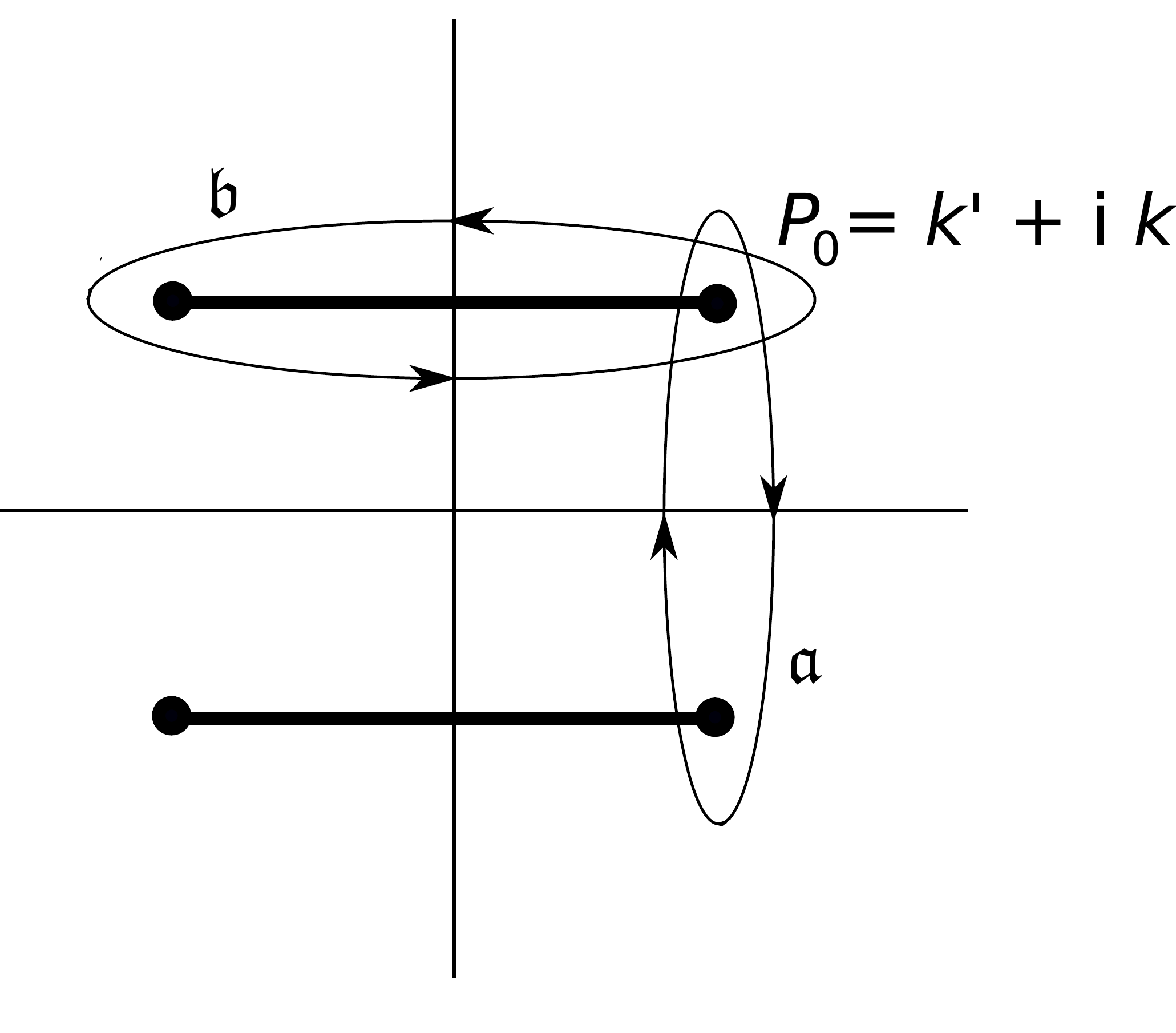}
  \caption{The homology basis for the curve with dark lines representing the cuts.}
  \label{fig:homology}
\end{figure}
We take as our
assignment of sheets ($j=1$, $2$, with analytic continuation from
$\zeta=0$ avoiding the cuts) to be
$$\eta_j
=(-1)\sp{j}\,i\frac{K}{2}\sqrt{\zeta^4+2(k^2-k'^2)\zeta\sp2+1}.$$
With our choice of homology basis the normalized holomorphic differential is then\footnote{
See Appendix \ref{appendixperiods}.
}
\begin{equation}\label{holdiff}
\boldsymbol{v}= \frac{d\zeta} {4\eta}
\end{equation}
and the period matrix  for $\mathcal{C}$ is  $\tau=i\mathbf{K}'/\mathbf{K}$. 
Comparison with (\ref{HMREScond}) shows that the Ercolani-Sinha constraint is satisfied for 
for the normalization of our curve (\ref{curve})
 and $\mathfrak{es}=-\mathfrak{a}$. Thus we have
\begin{equation}
\boldsymbol{U}=-1/2.
\end{equation} 
Also from the (\ref{esnotation}) and with our assignment of sheets for $\mathcal{C}$ we have that
$$\rho_1=-\frac{i}{2}\,\mathbf{K},\qquad
\rho_2=\frac{i}{2}\,\mathbf{K}.$$
The normalized second kind differential  $\gamma_{\infty}$ written in the curve coordinates is
\begin{equation}\label{gammainftdef}
\gamma_{\infty}(P) = \frac{K^2}{4\eta}\left( \zeta^2 - \frac{2E-K}{K}  \right)\mathrm{d} \,\zeta.
\end{equation}

\subsection{Abel Maps and Notation.}

In \cite{Braden2010d} we used the Abel map  $\phi(P)=\int_{P_0}\sp{P}\boldsymbol{v}$
with respect to a fixed base point $P_0=(k'+i k, 0_1)$ of the $\mathfrak{a}$-normalized differential $\boldsymbol{v}$, 
and using symmetry established that
$$\begin{matrix} \boldsymbol{\phi}(\infty_1)=\dfrac{1+\tau}{4}=-\boldsymbol{\phi}(\infty_2),&&
\boldsymbol{\phi}(0_1)=\dfrac{1-\tau}{4}=-\boldsymbol{\phi}(0_2).
\end{matrix}
$$
For a degree zero divisor the choice of $P_0$ doesn't matter. Here we shall often use $\infty_1$ as a limit of our integrals and to distinguish this we introduce
 \[\alpha(P)=\int_{\infty_1}\sp{P}\boldsymbol{v}=\phi(P)-\phi(\infty_1);\quad
 \phi(P)=\alpha(P)-\alpha(P_0).\qquad
 \]
It will be convenient to introduce 
the shorthand notation 
\[
\theta_i[D]:=\theta_i\left(\sum_{P\in D}\alpha(P) \right).
\]
Upon using the identification of $\theta(z)$ with the Jacobi theta
function $\theta_3(z):=\theta_3(z|\tau)=\sum_{n\in\mathbb{Z}}\exp(\im\pi[ n^2\tau +2nz])$ and the periodicities of the Jacobi theta
functions $\theta_*(z)$ we note that
\begin{align*}
\theta(-[z+1]/2 - 1-\tau)&=-e\sp{-i\pi(z+\tau)}\,\theta_4(z/2),\\
\theta(- [z+1]/2 -
(1+\tau)/2)&=e\sp{-i\pi(z/2+\tau/4)}\,\theta_2(z/2).
\end{align*}

We record for later use:
\begin{align}\phi(\infty_1)-\phi(0_1)&= \frac{\tau}{2},
&\phi(\infty_1)-\phi(\infty_2)&= \frac{1+\tau}{2}, \nonumber \\
\label{abelimages}\\
\phi(\infty_1)-\phi(0_2)&= \frac{1}{2},
&\phi(\infty_1)=-\alpha(P_0)&=\frac{1+\tau}{4}.\nonumber
\label{abelimages}
\end{align}

\subsection{Parameterization of the Curve} We establish in Appendix \ref{appparcuve}  that
\begin{lemma}\label{parcurve}
 With $\theta_i:=\theta_i(0)$ 
the curve (\ref{curve}) is parameterized by
\begin{equation}\label{eqparam}
\zeta=-i\,\frac{\theta_2[P]\theta_4[P]}{\theta_1[P]\theta_3[P]},\quad
\eta=  
\frac{i\pi\,\theta_3\theta_2^2\theta_4^2}4\,\frac{\theta_3[2 P]}{\theta_1[P]^2\theta_3[P]^2}.
\end{equation}
\end{lemma}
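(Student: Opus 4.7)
The plan is to prove each identity in (\ref{eqparam}) by recognizing both sides as meromorphic functions on the elliptic curve $\mathcal{C}$ (of genus $(n-1)^2=1$) with the same divisor, then fixing the overall multiplicative constant by a leading-order expansion at $\infty_1$. Since the Abel map $\alpha(P)=\int_{\infty_1}^{P}\boldsymbol{v}$ is a biholomorphism from $\mathcal{C}$ onto the Jacobian $\mathbb{C}/\Lambda$, divisor matching determines each function up to a nonzero scalar.

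For the $\zeta$-identity, the quasi-periodicity factors picked up by $\theta_1,\theta_2,\theta_3,\theta_4$ under lattice shifts of $\alpha(P)$ cancel in the ratio $\theta_2[P]\theta_4[P]/(\theta_1[P]\theta_3[P])$, so the proposed right-hand side is a well-defined meromorphic function on $\mathcal{C}$. Reading off the zero loci of the four theta functions and combining with the Abel images in (\ref{abelimages}): $\theta_1[P]$ vanishes at $\infty_1$ (where $\alpha\equiv 0$), $\theta_3[P]$ at $\infty_2$ (where $\alpha\equiv (1+\tau)/2$), $\theta_4[P]$ at $0_1$ (where $\alpha\equiv \tau/2$), and $\theta_2[P]$ at $0_2$ (where $\alpha\equiv 1/2$). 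This gives the divisor $0_1+0_2-\infty_1-\infty_2$, coinciding with the divisor of $\zeta$ on $\mathcal{C}$. The prefactor $-i$ is fixed by comparing leading behaviour as $P\to\infty_1$: from $\boldsymbol{v}\sim d\zeta/(-2\im K\zeta^2)$ one reads off $\alpha(P)\sim -\im/(2K\zeta)$, while the proposed right-hand side yields $\zeta\sim -\im/(\pi\theta_3^2\alpha(P))$ upon using Jacobi's derivative identity $\theta_1'(0)=\pi\theta_2\theta_3\theta_4$; consistency of the two asymptotics is the classical relation $K=(\pi/2)\theta_3(0|\tau)^2$.

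For the $\eta$-identity, the divisor of $\eta$ on $\mathcal{C}$ is $\sum_{i=1}^{4}B_i-2\infty_1-2\infty_2$, where the $B_i=\pm k'\pm\im k$ are the four branch points. To locate the zeros of $\theta_3[2P]$ on $\mathcal{C}$, use the hyperelliptic involution $\iota:(\zeta,\eta)\mapsto(\zeta,-\eta)$, which swaps $\infty_1\leftrightarrow\infty_2$ and satisfies $\iota^*\boldsymbol{v}=-\boldsymbol{v}$; hence $\alpha(\iota P)+\alpha(P)\equiv (1+\tau)/2\pmod{\Lambda}$. Since $\iota B_i=B_i$, this forces $2\alpha(B_i)\equiv (1+\tau)/2$, precisely the zero locus of $\theta_3$. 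A degree count — the function $\theta_3[2P]/(\theta_1[P]^2\theta_3[P]^2)$ has four poles, all accounted for at $\infty_1,\infty_2$ — shows that these four branch points exhaust the zero set. Divisors match, and the overall constant $\im\pi\theta_3\theta_2^2\theta_4^2/4$ is then pinned down by matching $\eta/\zeta^2\to -\im K/2$ as $P\to\infty_1$ (on sheet one), which reproduces $K=(\pi/2)\theta_3^2$ once again.

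The main technical hurdle is the careful bookkeeping of signs and theta-nullwerte when fixing the two multiplicative constants; everything else reduces to a divisor-matching argument, guaranteed to close on $\mathcal{C}$ because $g=1$ forces any two meromorphic functions with the same divisor to be proportional.
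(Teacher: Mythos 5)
Your argument is correct, and both multiplicative constants come out right (I checked: $\alpha(P)\sim-\im/(2K\zeta)$ at $\infty_1$ together with $\theta_1'(0)=\pi\theta_2\theta_3\theta_4$ and $K=\tfrac{\pi}{2}\theta_3^2$ gives exactly $-\im$ for the $\zeta$-prefactor, and matching $\eta/\zeta^2\to-\im K/2$ on sheet one yields $\im\pi\theta_3\theta_2^2\theta_4^2/4$). But your route is genuinely different from the paper's. The paper proves the lemma by brute-force theta algebra: using the duplication identities $\theta_1[P]^2\theta_3[P]^2+\theta_2[P]^2\theta_4[P]^2=\theta_2^2\theta_4\,\theta_4[2P]$ and $2\theta_1[P]\theta_2[P]\theta_3[P]\theta_4[P]=\theta_2\theta_3\theta_4\,\theta_1[2P]$ together with $\theta_2^4+\theta_4^4=\theta_3^4$, it substitutes the proposed $\zeta$ into the quartic and shows directly that $\zeta^4+2(k^2-k'^2)\zeta^2+1=\frac{\theta_2^4\theta_4^4}{\theta_3^2}\,\frac{\theta_3[2P]^2}{\theta_1[P]^4\theta_3[P]^4}$, whence the $\eta$-formula follows (up to sign) from $\eta^2=-\tfrac{K^2}{4}(\cdots)$. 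Your divisor-matching argument buys conceptual transparency and requires no theta addition formulae beyond Jacobi's derivative identity; it also pins down the \emph{sign} of $\eta$ (i.e., the sheet) via the normalization at $\infty_1$, which the paper's squaring argument leaves implicit. What it costs is reliance on the Abel images (\ref{abelimages}) of $0_{1,2},\infty_{1,2}$ and the involution relation $\alpha(\iota P)+\alpha(P)\equiv(1+\tau)/2$ to locate the zeros of $\theta_3[2P]$ at the branch points — all of which you supply correctly, and the last of which is a nice observation not made explicit in the paper. The paper's computation, by contrast, is self-contained modulo standard theta identities and simultaneously establishes the identity $\zeta^4+2(k^2-k'^2)\zeta^2+1$ as a perfect-square theta quotient, which is reused implicitly elsewhere.
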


The following $\theta$-quotients are also expressible in terms of
coordinates and parameters of the curve
\begin{align}\begin{split}
\frac{\theta_1[P]^2}{\theta_4[P]^2}&=\frac{K(\zeta^2-1)-2\imath\eta}{2Kk\zeta^2},\\
\frac{\theta_2[P]^2}{\theta_4[P]^2}&=\frac{K(\zeta^2(k^2-{k'}^2)+1)+2\imath\eta}{2Kkk'\zeta^2},\\
\frac{\theta_3[P]^2}{\theta_4[P]^2}&=\frac{K(\zeta^2+1)+2\imath\eta}{2Kk'\zeta^2}.\end{split}\label{uniform1}
\end{align}

\subsection{The Baker-Akhiezer function}
We shall now gather together a number of functions on $\mathcal{C}$ including the Baker-Akhiezer
function. (The construction of these from first principles is described in \cite{Braden2010d}.) 

The unique meromorphic functions $g_{1,2}(P)$ on the curve such that 
\[ g_j(\infty_l)=\delta_{jl}\]
and with poles for $P$ such that $\alpha(P)=\pm 1/4$ are
\begin{align*}
g_1(P)&=\frac{\theta_2[P]\theta_3[P]}{\theta_2[P]\theta_3[P]-\theta_1[P]\theta_4[P]}
=
\frac{1+\zeta^2+2 i \eta/\boldsymbol{K}}{1+\zeta^2+2 i \eta/\boldsymbol{K}+2 i k' \zeta},\\
g_2(P)
&=\frac{\theta_1[P]\theta_4[P]}{\theta_1[P]\theta_4[P]-\theta_2[P]\theta_3[P]}
=
\frac{2 i k' \zeta}{1+\zeta^2+2 i \eta/\boldsymbol{K}+2 i k' \zeta}.
\end{align*}
The pole behaviour of these may be seen from
 \begin{align*}
\theta_1(\alpha(P)-1/4)\theta_4(\alpha(P)+1/4)\,\theta_2\theta_3&=
\theta_1[P]\theta_4[P]\,\theta_2(1/4)\theta_3(1/4)-
\theta_1(1/4)\theta_4(1/4)\,\theta_2[P]\theta_3[P]\\
&=-\theta_1(1/4)\theta_4(1/4)\left(
\theta_2[P]\theta_3[P]-\theta_1[P]\theta_4[P]\right)
\end{align*}
which holds as a consequence of
\[ \theta_1(x+y)\theta_4(x-y)\,\theta_2\theta_3=
\theta_1(x)\theta_4(x)\,\theta_2(y)\theta_3(y)+
\theta_1(y)\theta_4(y)\,\theta_2(x)\theta_3(x)\]
and
\[ \theta_1(1/4)=\theta_2(1/4),\qquad \theta_3(1/4)=\theta_4(1/4). \]

There is a unique $\mathfrak{a}$-normalized differential $\gamma_\infty$ on $\mathcal{C}$ with second order poles at $\infty_{1,2}$ such that in the vicinity of $P=\infty_{1,2}$ we have 
$\int_{P_0}\sp{P}\gamma_{\infty}\sim - {\eta}/{\zeta}$. We set
\begin{align*}
\tilde{ \nu}_i:=\tilde {\nu}_i(P_0)&=\lim_{P\to\infty_i}\left(\int_{P_0}\sp{P}\gamma_{\infty}(P')+\frac{\eta}
{\zeta}(P)\right).
\end{align*}
The following lemma (proved in Appendix \ref{proofintgammainf}) will be useful.
\begin{lemma}\label{intgammainf}
\begin{align}
\int_{P_0}^P\gamma_{\infty}(P')&=\frac14\left\{ \frac{\theta_1'[P]}{\theta_1[P]}+
\frac{\theta_1'[P-\infty_2]}{\theta_1[P-\infty_2]}\right\} =
\frac14\left\{ \frac{\theta_1'[P]}{\theta_1[P]}+
\frac{\theta_3'[P]}{\theta_3[P]}-\imath\pi\right\} , \label{gammatheta}\\
 \oint_{\mathfrak{b}}\gamma_{\infty}&=2\pi i \boldsymbol{U}=-i\pi ,\\
\tilde\nu_1&=-\frac{i\pi}{4},\quad \tilde\nu_2=\frac{i\pi}{4}, \quad \tilde\nu_2-\tilde\nu_1=\frac{i\pi}{2},\quad 
\tilde\nu_1+\tilde\nu_2=0.\label{nus}
\end{align}
\end{lemma}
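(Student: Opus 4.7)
The plan is to exploit the uniqueness of the $\mathfrak{a}$-normalised second-kind differential $\gamma_{\infty}$ with prescribed double-pole principal parts at $\infty_{1,2}$, then harvest the $\mathfrak{b}$-period and the regularised constants $\tilde\nu_j$ directly from the theta representation. Writing $F(P)$ for the first expression on the RHS of (\ref{gammatheta}), the strategy is to show $dF=\gamma_{\infty}$ and then specialise.

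The equality of the two forms of $F(P)$ comes from $\alpha(\infty_2)=-(1+\tau)/2$ (see (\ref{abelimages})) together with the quasi-period $\theta_1(z+(1+\tau)/2)=e^{-i\pi(\tau/4+z)}\theta_3(z)$: logarithmic differentiation gives $\theta_1'[P-\infty_2]/\theta_1[P-\infty_2]=\theta_3'[P]/\theta_3[P]-i\pi$. To identify $dF$ with $\gamma_{\infty}$ I would check three things. First, $dF$ is $\mathfrak{a}$-normalised, since each summand depends only on $\alpha(P)$ and $\theta_1(z+1)=-\theta_1(z)$ renders the logarithmic derivatives $\mathfrak{a}$-periodic. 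Second, the only poles of $dF$ lie at $\infty_1$ and $\infty_2$, arising from the simple zeros of $\theta_1[P]$ and $\theta_1[P-\infty_2]$ respectively. Third, the principal parts are correct: expanding $\alpha(P)\sim -i/(2K\zeta)$ near $\infty_1$ from $\boldsymbol{v}=d\zeta/(4\eta)$ together with the sheet assignment $\eta\sim -iK\zeta^2/2$ yields $\tfrac14\theta_1'[P]/\theta_1[P]\sim 1/(4\alpha(P))\sim iK\zeta/2=-\eta/\zeta$, and symmetrically at $\infty_2$. The overall additive constant is fixed by the base-point convention at $P_0$, so $dF=\gamma_{\infty}$ follows by uniqueness.

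The $\mathfrak{b}$-period drops out of $\mathfrak{b}$-monodromy $\alpha(P)\mapsto\alpha(P)+\tau$: from $\theta_1(z+\tau)=-e^{-i\pi(\tau+2z)}\theta_1(z)$ each summand of $F$ decreases by $2i\pi$, so $\oint_{\mathfrak{b}}\gamma_{\infty}=-i\pi$, consistent with $2\pi i\boldsymbol{U}$ using $\boldsymbol{U}=-1/2$. For $\tilde\nu_j$ I would combine the definition with the evaluations $\theta_1((1+\tau)/2)=e^{-i\pi\tau/4}\theta_3$ and $\theta_1'((1+\tau)/2)=-i\pi e^{-i\pi\tau/4}\theta_3$ (the latter exploiting $\theta_3'(0)=0$), giving $\theta_1'((1+\tau)/2)/\theta_1((1+\tau)/2)=-i\pi$. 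By oddness of $\theta_1$ its Laurent expansion $\theta_1'(z)/\theta_1(z)=1/z+O(z)$ carries no constant term, so at $P=\infty_1$ the pole in $\tfrac14\theta_1'[P]/\theta_1[P]$ exactly absorbs $\eta/\zeta$ leaving no finite contribution, while the second summand contributes $\tfrac14(-i\pi)$, so $\tilde\nu_1=-i\pi/4$. A mirror-symmetric computation at $\infty_2$ using $\theta_1(-z)=-\theta_1(z)$ gives $\tilde\nu_2=i\pi/4$, and the remaining identities in (\ref{nus}) follow.

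The main obstacle will be the principal-part matching at $\infty_{1,2}$: one must carry the expansions of $\alpha(P)$ and $\eta/\zeta$ to sufficient orders in $1/\zeta$ to see that the double-pole structure of $d(\theta_1'/\theta_1)$ aligns with $-d(\eta/\zeta)$ without an anomalous finite remainder. The clean theta-identity $\theta_1'((1+\tau)/2)/\theta_1((1+\tau)/2)=-i\pi$ then delivers the neat values in (\ref{nus}); everything else is bookkeeping via standard quasi-periodicity.
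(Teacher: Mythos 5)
Your proposal is correct and follows essentially the same route as the paper's proof in Appendix C.3: both arguments identify the theta expression with $\int_{P_0}^P\gamma_\infty$ by matching $\mathfrak{a}$- and $\mathfrak{b}$-periodicity, the double-pole principal parts $-\eta/\zeta$ at $\infty_{1,2}$, and the normalisation at $P_0$, and both extract $\tilde\nu_1=\frac14\,\theta_1'[\infty_1-\infty_2]/\theta_1[\infty_1-\infty_2]=-i\pi/4$ from the regular summand at the pole. The only cosmetic difference is that you phrase the identification at the level of differentials via uniqueness of the $\mathfrak{a}$-normalised second-kind differential and then fix the additive constant, whereas the paper compares the two functions directly.
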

It is important to note  that this lemma relates the choice of contours on each side of the identity by the
vanishing of each side at $P=P_0$; adding a $\mathfrak{b}$-cycle then to one side is compensated by adding a $\mathfrak{b}$-cycle to the other and so on.

If we define
$
\beta_i(P)=\int_{P_0}^P \gamma_{\infty}-\tilde\nu_i, 
$
then $\beta_1(P)=\beta_2(P)+\frac{\imath\pi}{2}$ and we are able to 
work with just the one function, which we will choose to be 
\begin{align}\label{betadef2}
\beta_1(P)=
\frac14\left\{ \frac{\theta_1'[P]}{\theta_1[P]}+
\frac{\theta_3'[P]}{\theta_3[P]}\right\} =\int_{P_0}\sp{P}\gamma_{\infty}(P')+\frac{\im\pi}4.
\end{align}
Combining these expressions yields the Baker-Akhiezer function $\Phi(z,P)$ for our problem;
its chief properties are given by:

\begin{lemma}\label{bafunction}
 $\Phi(z,P)$ defined by
\begin{align}\label{bafull}
\Phi(z,P)&=\chi(P)\begin{pmatrix}-\theta_3(\alpha(P))\theta_2(\alpha(P)-z/2) \\
\theta_1(\alpha(P))\theta_4(\alpha(P)-z/2)
\end{pmatrix}\,
\frac{e\sp{\beta_1(P) z}}{\theta_2(z/2)}\\
\intertext{where}
\label{bachi}
\chi(P)&=
\frac{\theta_2(1/4)\theta_3(1/4)}{\theta_3(0)
\theta_1(\alpha(P)-1/4)\theta_4(\alpha(P)+1/4)}
\end{align}
satisfies
\begin{enumerate}[(i)]
\item $ \Phi(z,P)$ is meromorphic for $P\in\mathcal{C}\setminus\{\infty_1,\infty_2\}$ and with poles at
$\alpha(P)=\pm1/4$.
\item $ \Phi(z,P)$ has simple poles at $z=\pm1$ and is regular for $z\in(-1,1)$.
\item $\displaystyle{\qquad
\lim_{P=P(\zeta,\eta)\rightarrow\infty_i} \Phi(z,P)\,e\sp{-z \nu_i\sp\dagger {\zeta}(P)}=
\lim_{P=P(\zeta,\eta)\rightarrow\infty_i} \Phi(z,P)\,e\sp{z \frac{\eta} {\zeta}(P)}=
\begin{pmatrix} \delta_{i1}\\ \delta_{i2} \end{pmatrix}.}$

\item $\displaystyle{  \Phi(0,P)= \begin{pmatrix} g_1(P)\\  g_2(P) \end{pmatrix}}$.
\item
$\displaystyle{  c_{12}= \lim_{P=P(\zeta,\eta)\rightarrow\infty_2} \zeta\,g_1(P)=-\im k'=
-c_{21}=- \lim_{P=P(\zeta,\eta)\rightarrow\infty_1} \zeta\,g_2(P) }$.

\end{enumerate}
Hence $\Phi(z,P)$ is a Baker-Akhiezer function for the charge $2$ spectral curve.
\end{lemma}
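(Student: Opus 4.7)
The plan is to verify the five listed properties in turn; the concluding assertion that $\Phi(z,P)$ is the Baker-Akhiezer function for $\mathcal{C}$ will then follow from the uniqueness characterization recalled in Section \ref{background}, since (i) together with (iii) supplies exactly the defining pole structure (divisor of degree $g+n-1=1+2-1=2$) and the infinity normalization (\ref{BAnorm}). The tools I expect to use are standard theta quasi-periodicities, the Abelian images in (\ref{abelimages}), the uniformisation in Lemma \ref{parcurve}, and the expansion of $\beta_1$ provided by Lemma \ref{intgammainf}.

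For (i) and (ii) I would first observe that every theta factor in the numerator of (\ref{bafull}) is entire in $(z,P)$, so $P$-poles can only come from $\chi(P)$. The denominator $\theta_1(\alpha(P)-1/4)\,\theta_4(\alpha(P)+1/4)$ has simple zeros precisely at $\alpha(P)=\pm 1/4$ (mod $\Lambda$), giving the required divisor of degree $2$; by (\ref{abelimages}) neither $\alpha(\infty_1)=0$ nor $\alpha(\infty_2)=-(1+\tau)/2$ equals $\pm 1/4$, so no spurious cancellation or extra pole occurs at the infinities. The only $z$-factor that can vanish on $[-1,1]$ is $\theta_2(z/2)$, whose zero locus $1+2\mathbb{Z}+2\tau\mathbb{Z}$ meets this interval only at $z=\pm 1$, producing simple poles there and regularity in between.

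For (iii) I would extract the leading behaviour of $\beta_1(P)$ from (\ref{betadef2}) and Lemma \ref{intgammainf}, namely $\beta_1(P)\sim -\eta/\zeta + (\tilde\nu_i+i\pi/4)$ as $P\to\infty_i$, which by (\ref{nus}) is $-\eta/\zeta$ at $\infty_1$ and $-\eta/\zeta+i\pi/2$ at $\infty_2$. At $\infty_1$, where $\alpha=0$, the factor $\theta_1(\alpha(P))$ kills the second entry; the first entry I would simplify via $\theta_1(1/4)=\theta_2(1/4)$, $\theta_3(1/4)=\theta_4(1/4)$ and evenness of $\theta_2$ to $\theta_2(-z/2)/\theta_2(z/2)=1$, producing $(1,0)^T$ after multiplication by $e^{z\eta/\zeta}$. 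At $\infty_2$ the factor $\theta_3(\alpha(P))=\theta_3(-(1+\tau)/2)=0$ kills the first entry, and the surviving second entry has to be evaluated by translating $\theta_1$ and $\theta_4$ through the half-period $-(1+\tau)/2$; the quasi-periodicity phase this generates must cancel the extra $e^{i\pi z/2}$ produced by $\beta_1$. This last cancellation is the step I expect to be the main obstacle, since the quasi-periodicity prefactors under half-period shifts must align exactly with the exponential correction coming from $\beta_1$.

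For (iv) I would set $z=0$, noting that the exponential collapses and that $\theta_2(z/2)$ becomes $\theta_2(0)$, and then invoke the three-theta identity already recorded above the lemma, together with $\theta_1(1/4)=\theta_2(1/4)$ and $\theta_3(1/4)=\theta_4(1/4)$, to rewrite $\chi(P)$ as $-\theta_2(0)/(\theta_2[P]\theta_3[P]-\theta_1[P]\theta_4[P])$; the two components of $\Phi(0,P)$ then match the given formulae for $g_1(P)$ and $g_2(P)$ directly. For (v) I would use those explicit rational expressions for $g_{1,2}$ in $(\zeta,\eta)$, expand $\eta\sim(-1)^j\,iK\zeta^2/2 + O(1)$ on sheet $j$ as $\zeta\to\infty$, and read off $\lim_{P\to\infty_2}\zeta g_1(P)=-ik'$ and $\lim_{P\to\infty_1}\zeta g_2(P)=+ik'$; note that this also verifies the reality constraint (\ref{esrealst}). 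Combining (i) and (iii) with the uniqueness of Baker-Akhiezer functions will complete the identification.
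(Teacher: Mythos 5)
Your verification is essentially correct and uses exactly the ingredients the paper itself records but does not assemble into a written proof (the paper defers the construction to \cite{Braden2010d}): the addition theorem $\theta_1(x+y)\theta_4(x-y)\theta_2\theta_3=\theta_1(x)\theta_4(x)\theta_2(y)\theta_3(y)+\theta_1(y)\theta_4(y)\theta_2(x)\theta_3(x)$ together with $\theta_1(1/4)=\theta_2(1/4)$, $\theta_3(1/4)=\theta_4(1/4)$ for the pole structure of $\chi$ and for (iv); Lemma \ref{intgammainf} and (\ref{nus}) for the asymptotics $\beta_1(P)\sim-\eta/\zeta+\tilde\nu_i+\im\pi/4$ in (iii); the half-period shift identities listed in \S\ref{sectioncurve} for the phase cancellation at $\infty_2$; and the sheet expansions $\eta_j\sim(-1)^j\im K\zeta^2/2$ for (v). Your checks of (iii)--(v) are right (in particular $\chi(\infty_1)=-1/\theta_3(0)$ and $\zeta g_1\to-\im k'$ on sheet $2$, $\zeta g_2\to \im k'$ on sheet $1$), and the concluding appeal to uniqueness of the Baker--Akhiezer function given the degree $g+n-1=2$ pole divisor and the normalization (\ref{BAnorm}) is the standard and correct route. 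One imprecision to flag: $\theta_4(\alpha+1/4)$ vanishes at $\alpha\equiv-1/4+\tau/2\pmod\Lambda$, not at $\alpha\equiv-1/4$, so the second pole of $\chi$ sits at $-1/4+\tau/2$; the lemma's own phrase ``poles at $\alpha(P)=\pm1/4$'' is the same shorthand (cf.\ the remark following the lemma about the discrete ambiguity in the pole divisor), and nothing in your argument depends on the exact location --- your check that neither $\alpha(\infty_1)=0$ nor $\alpha(\infty_2)=-(1+\tau)/2$ lies on the pole divisor goes through either way --- but the word ``precisely'' should be softened.
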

We remark that the reality conditions (\ref{esrealst}) determine the pole structure of $\Phi$, here 
$\alpha(P)=\pm1/4$, only up to a discrete number of choices (see \cite{Braden2010d}). We will work throughout with the above.

\subsection{Nahm Data and expansions}
The Nahm data for the $n=2$ spectral curve has long been known. With 
$T_j(z)=\frac{\sigma_j}{2\imath}\, f_j(z)$ Nahm's equation reduce to the equations of the spinning top
$ \dot f_1=f_2\,f_3$ (and cyclic) with solutions
\begin{equation}
\begin{split}\label{nahmsolution}
f_1(z)&={K}\,\frac{\mathrm{dn}\,{K}z
}{\mathrm{cn}\,{K}z}=\frac{\pi\theta_2 \theta_3 }{2}
\,\frac{\theta_3(z/2)}{\theta_2(z/2)},\quad f_2(z)={K}
k'\,\frac{\mathrm{sn}\,{K}z
}{\mathrm{cn}\,{K}z}=\frac{\pi\theta_3 \theta_4 }{2}
\,\frac{\theta_1(z/2)}{\theta_2(z/2)},\\
 f_3(z)&={K}k'\,\frac{1}{\mathrm{cn}\,{K}z}=\frac{\pi\theta_2 \theta_4 }{2}
\,\frac{\theta_4(z/2)}{\theta_2(z/2)}.
\end{split}
\end{equation}
(This choice of solution yields the spectral curve (\ref{curve})%
\footnote{
\begin{align*}
|\eta 1_2-L(\zeta)|&=
\eta^2+ \frac14(f_1^2-f_2^2)\zeta^4+\frac12(f_1^2+f_2^2 -2\,f_3^2)\zeta^2+\frac14(f_1^2-f_2^2)
=\eta^2+\frac14\boldsymbol{K}^2\left(\zeta^4+2(k^2-k'\sp2)\zeta^2+1\right)
\end{align*}
}.) These solutions were derived from first principles for the $n=2$ curve in the work of \cite{ercolani_sinha_89} and (with corrections in) \cite{Braden2010d}. We shall rederive this solution using the recent general approach of  \cite{Braden2018b}; this enables us to introduce a number of functions and their expansions that will be used throughout.

As noted in our general description, $h(z)$ may be constructed directly
for a monopole spectral curve \cite{Braden2018b}. The $n=2$ example of that reference gives
\begin{equation}\label{charge2hfinal}
h(z)=\frac1{K}\begin{pmatrix}f_1&-f_2\\-f_2&f_1 \end{pmatrix},
\qquad
h\sp{-1}(z)=\frac1{K}\begin{pmatrix}f_1&f_2\\f_2&f_1 \end{pmatrix},
\qquad
\dot h h\sp{-1}=-{f_3}\begin{pmatrix}0&1\\1&0 \end{pmatrix}.
\end{equation}
To put the Nahm data into standard gauge one solves the differential equation $\alpha=\alpha\sp\dagger$
(with $h\sp{-1}=C\sp\dagger C$) 
$$C\sp{-1}\dot C= \dot C C\sp{-1}=\frac{f_3}2\begin{pmatrix}0&1\\1&0\end{pmatrix}.
$$
Upon writing 
\begin{align}
C(z)&=\left(\begin{array}{cc}F(z)&G(z)\\G(z)&F(z)
\end{array} \right),
\end{align}
we find  $ \dot{F}={f_3} G/2$, $ \dot{G}={f_3} F/2$ with solution
\[ F=\cosh\left(\int_0\sp{z} f_3(s)ds/2 \right)=\left[ p(z)+1/p(z)\right]/2,\quad
G=\sinh\left(\int_0\sp{z} f_3(s)ds/2 \right)=\left[ p(z)-1/p(z)\right]/2,\]
where\footnote{
Here we have made use of
\begin{equation*}
\int \frac{du}{\mathrm{cn}\,u } =\frac{1}{k'}\,\mathrm{ln}\frac{
\mathrm{dn}u + k' \mathrm{sn} u }{\mathrm{cn} u}.\label{primitive}
\end{equation*}
}
\begin{align*}
p(z)&=\exp\left(\int_0\sp{z} f_3(s)ds/2 \right)=\exp\left(k' K\int_0\sp{z} \frac{ds}{\mathrm{cn}\,{K}z} \right)
= \left[ \frac{ \mathrm{dn}\,{K}z + k'\mathrm{sn}\,{K}z}{ \mathrm{cn}\,{K}z}
\right]\sp{1/2}.
\end{align*}
{Now
\begin{align} \label{FGrels}
G^2(z)&=\frac12\left( \frac{\mathrm{dn} (Kz;k)}{\mathrm{cn}
(Kz;k)}-1\right)=\frac12\left( \frac{f_1}{K}-1\right),&\quad 2F(z)G(z)&=k'\frac{\mathrm{sn}(Kz;k)}{\mathrm{cn}(Kz;k)}=\frac{f_2}{K}, \\
F^2-G^2&=1,& F^2+G^2&=\frac{f_1}{K}.
 \nonumber
\end{align}

The Nahm data now follows from
$$\tilde T_3=\frac{\im}2 C(\dot h h\sp{-1} )C\sp{-1}= \frac{\im}2 f_3 \sigma_1,
\qquad
\beta= \tilde T_1+\im \tilde T_2= g\nu g\sp{-1}=-\frac{\im}2 \begin{pmatrix}f_1&f_2\\-f_2&-f_1 \end{pmatrix},
$$
which leads to
$$
\tilde T_1(z)=\frac{1}{2\im}f_1(z)\sigma_3,\quad
\tilde T_2(z)
=\frac{1}{2\im}f_2(z) \sigma_2,\quad 
\tilde T_3(z)=-\frac{1}{2\im} f_3(z) \sigma_1.
$$
Now $g$ and $C$ 
are only defined up to left multiplication by a constant unitary matrix.
Let $\mathcal{O}$ be the orthogonal matrix
\begin{equation}
\mathcal{O}=\frac1{\sqrt{2} }\begin{pmatrix}1&-1\\1&1\end{pmatrix}, 
\label{Odef}
\end{equation}
for which
\[  \mathcal{O}\sp{-1}\sigma_1 \mathcal{O}=\sigma_3,\qquad 
    \mathcal{O}\sp{-1}\sigma_2 \mathcal{O}=\sigma_2,\qquad 
    \mathcal{O}\sp{-1}\sigma_3 \mathcal{O}=-\sigma_1,
\]
With this we obtain
\begin{align}\label{standardnahm}
T_j(z) = \mathcal{O}\, \tilde T_j(z)  \mathcal{O}\sp{-1}=\frac{\sigma_j}{2\imath}\, f_j(z), \quad j=1,2,3.
\end{align}

\subsubsection{Expansions}
For later use we record
\begin{align}
f_1(1-\xi)&=\frac1{\xi}+\frac16\,{K}^{2} \left( {k}^{2}+1 \right) \xi +\mathcal{O}(\xi^3),
\nonumber\\
f_2(1-\xi)&=\frac1{\xi}-\frac16\,{K}^{2} \left(- {k}^{2}+2 \right) \xi +\mathcal{O}(\xi^3),
\label{expandfs}\\
f_3(1-\xi)&=\frac1{\xi}+\frac16\,{K}^{2} \left( -2{k}^{2}+1 \right)\xi +\mathcal{O}(\xi^3),
\nonumber
\intertext{and note that}
f_1(\xi-1)&=f_1(1-\xi)+\mathcal{O}(\xi^2),\nonumber\\
f_2(\xi-1)&=-f_2(1-\xi)+\mathcal{O}(\xi^2), \label{expandfsm}\\
 f_3(\xi-1)&=f_3(1-\xi)+\mathcal{O}(\xi^2).\nonumber
\end{align}
Then
\begin{align}
T_j(1-\xi)&\sim -\frac{\im}{2}\frac{\sigma_j}{\xi}
+\mathcal{O}(\xi),\quad j=1,2,3,\label{nahmexpandp}\\
T_j(-1+\xi)&\sim \begin{cases}  -\frac{\im}{2}\frac{\sigma_j}{\xi} +\mathcal{O}(\xi), & j=1,3,\\
\phantom{-}\frac{\im}{2}\frac{\sigma_j}{\xi}+\mathcal{O}(\xi),&j=2.
\end{cases}\label{nahmexpandm}
\end{align}

The  expansion of the entries of $F(z)$ and $G(z)$ in the vicinity of points $z=\pm 1$ may be obtained
as follows.
Taking into account the expressions for $F^2$ and $G^2$ we find that
\begin{align*}
F(\pm1\mp \xi)&=\pm \left( \frac{1}{ \sqrt{\pi} \theta_3(0) \sqrt{\xi}} +\frac14  \sqrt{\pi} \theta_3(0)\sqrt{\xi} +O(\xi^{3/2}) \right)\\
G(\pm1\mp \xi)&=\pm \left( \frac{1}{\sqrt{\pi} \theta_3(0)\sqrt{\xi}} -\frac14  \sqrt{\pi} \theta_3(0)\sqrt{\xi} +O(\xi^{3/2} \right)
\end{align*}
The final choice of sign follows from  the relation $2F(z)G(x)=k' \mathrm{tn}(K z;k)$ from which  it follows that the coefficients of $\xi^{-1/2}$ in $F$ and $G$ should be of the same sign at $z=1-\xi$ and of opposite sign at $z=-1+\xi$. Therefore we will fix the signs as follows
\begin{align}\begin{split}
F(1-\xi)&=  \frac{1}{ \sqrt{\pi} \theta_3(0) \sqrt{\xi}} +\frac14  \sqrt{\pi} \theta_3(0)\sqrt{\xi} +\mathcal{O}(\xi^{3/2}) ,\\
G(1- \xi)&=  \frac{1}{\sqrt{\pi} \theta_3(0)\sqrt{\xi}} -\frac14  \sqrt{\pi} \theta_3(0)\sqrt{\xi} +\mathcal{O}(\xi^{3/2}), \\
F(-1+\xi)&= - \frac{1}{ \sqrt{\pi} \theta_3(0) \sqrt{\xi}} -\frac14  \sqrt{\pi} \theta_3(0)\sqrt{\xi} +O(\xi^{3/2}) ,\\
G(-1+ \xi)&= \frac{1}{\sqrt{\pi} \theta_3(0)\sqrt{\xi}} -\frac14  \sqrt{\pi} \theta_3(0)\sqrt{\xi} +\mathcal{O}(\xi^{3/2}).
\end{split}\label{expansions}
\end{align}
These then yield
\begin{align}
C(1-\xi)&=\frac1{\sqrt {\xi} }\frac1{ \sqrt {2K}} \begin{pmatrix}
1+\xi\,K/2&1-\xi\,K/2
\\ \noalign{\medskip}1-\xi\,K/2 &1+\xi\,K/2
\end{pmatrix}
+\mathcal{O}(\xi^{3/2}),\label{expancp}
\\
C(\xi-1)&=\frac1{\sqrt {\xi} }\frac1{ \sqrt {2K}} \begin{pmatrix}
-1-\xi\,K/2&1-\xi\,K/2
\\ \noalign{\medskip}1-\xi\,K/2 &-1-\xi\,K/2
\end{pmatrix}
+\mathcal{O}(\xi^{3/2}).\label{expancm}
\end{align}

\subsection{Asymptotic Expansions}\label{asympsection}

We consider the expansion of the Weyl equation $\Delta\sp\dagger v=0$,
\begin{equation*}
\left[\frac{\mathrm{d}}{\mathrm{d} z}+\frac12 \left(\sum_{j=1}^3 \sigma_j\otimes \sigma_j f_j(z) \right)
-\left(\sum_{j=1}^3 \sigma_j \otimes x_j 1_2 \right) \right]
\boldsymbol{v}(z,\boldsymbol{x})=0,
\end{equation*}
in the vicinity of the pole 
$z=\pm1$.  First, with $z=1-\xi$ we find from (\ref{nahmexpandp}) and (\ref{expandfs})
the leading behaviour
\begin{equation}
\left[\frac{\mathrm{d}}{\mathrm{d} \xi}-\frac1{2\xi}\left(\sum_{j=1}^3 \sigma_j\otimes \sigma_j \right)+
\left(\sum_{j=1}^3 \sigma_j \otimes x_j 1_2 \right)  +\mathcal{O}(\xi)\right]\boldsymbol{v}_1(\xi,\boldsymbol{x})=0.\label{approxdelta1a}.
\end{equation}
Now $\frac1{2}\sum_{j=1}^3 \sigma_j\otimes \sigma_j $ has an eigenvector
$(0,1,-1,0)\sp{T}$  with eigenvalue $-3/2$ and  eigenvectors $
(0,0,0,1)\sp{T}$, 
$(0,1,1,0)\sp{T}$,
$(1,0,0,0)\sp{T}$ each with eigenvalue $1/2$. The singular solution at $z=1$ then behaves as
\begin{equation}\label{asymvp}
\boldsymbol{v}_1(\xi,\boldsymbol{x})=
\frac{1}{\xi\sp{3/2}}\begin{pmatrix}0\\ 1\\ -1\\ 0 \end{pmatrix}+
\frac{1}{\xi\sp{1/2}}\begin{pmatrix} \im x_2-x_1\\ x_3\\ x_3\\  \im x_2 +x_1\end{pmatrix}+
 \xi\sp{1/2}\begin{pmatrix}a\\ b-r^2/2\\ b+r^2/2\\ c \end{pmatrix}+
\mathcal{O}( \xi\sp{3/2}).
\end{equation}
The undetermined coefficients $a,b,c$ here reflect that we can get contributions from the regular solutions that begin at this order.

A similar analysis at $z=-1$ now using (\ref{nahmexpandm}) leads to consideration of the matrix
$\frac1{2}\left(\sum_{j=1,3}\sigma_j\otimes \sigma_j -\sigma_2\otimes \sigma_2 \right)$ which has the
eigenvector
$(1,0,0,1)\sp{T}$  with eigenvalue $3/2$ and eigenvectors $
(0,1,0,0)\sp{T}$, 
$(0,0,1,0)\sp{T}$,
$(1,0,0,-1)\sp{T}$ each with eigenvalue $-1/2$. We then obtain the expansion of the singular solution at
$z=-1$ to be
\begin{equation}\label{asymvm}
\boldsymbol{v}_{-1}(\xi,\boldsymbol{x})=
\frac{1}{\xi\sp{3/2}}\begin{pmatrix}1\\ 0\\ 0\\ 1\end{pmatrix}+
\frac{1}{\xi\sp{1/2}}\begin{pmatrix} -x_3\\ \im x_2-x_1\\  -\im x_2 -x_1\\  x_3\end{pmatrix}+
 \xi\sp{1/2}\begin{pmatrix}a'-r^2/2\\ b'\\ c'\\ -a'-r^2/2\end{pmatrix}+
\mathcal{O}( \xi\sp{3/2}),
\end{equation}
where again the coefficients $a',b',c'$  reflect that we can get contributions from the regular solutions at this order.

\section{Identities}\label{sectionidentities}
This section will gather a number of useful identities  to be used in the sequel.  Recall we have defined the Atiyah-Ward equation to be the quartic equation (in general, the degree $2n$ equation) obtained by substituting the mini-twistor relation  (\ref{AtiyahWard}) into the equation for the curve (\ref{curve}). Throughout we let $\{P_k=(\zeta_k,\eta_k)\}_{k=1}\sp{4}$ be the corresponding four solutions to this and denote their Abelian images by
\begin{equation}
\alpha_k =\frac14 \int_{\infty_1}^{P_k} \frac{\mathrm{d}  \zeta}{\eta} ,\qquad k=1,\ldots,4.
\end{equation} 
Throughout all our calculations have been checked numerically and some comment now may be helpful.
The general Abel image $\alpha_k$ is only defined up to a shift by a lattice point reflecting the ambiguity in choice of contour integral. Abel's theorem tells us that the degree zero divisor $\sum_i( \mathfrak{p}_i -\mathfrak{q}_i)$ corresponds to that of a function if and only if its Abel image is a lattice point. When constructing this function we typically specify sheets by choosing $\sum_i( \mathfrak{p}_i -\mathfrak{q}_i)=0$ (see Mumford \cite{mumford_theta}). The real structure of our curve will enable us to further specify our choice of contours and we shall see that we may take
\begin{equation}\label{abjacconst}
   \alpha_1+\alpha_2+\alpha_3+\alpha_4 = N \tau ,\quad   N\in \mathbb{Z},\end{equation}
where $N$ reflects a remaining choice of contours in the Abel-map.  Assume henceforth  that such a choice has been made and $N$ is then fixed for a given set of solutions to the Atiyah-Ward equation.
The points may generically be ordered by
\[ P_{i+2}= \mathfrak{I}(P_i).\]
We also introduce the important functions
\begin{equation}\label{defmu}
\mu_k=\beta_1(P_k)+\im \pi N -\im \left[(x_1-i x_2)\zeta_k-i x_3 -x_4\right]
\end{equation}
and derive a number of relations for them. This function combines the exponential term in Nahm's ansatz
(\ref{Nansatz}) and that coming from the Baker-Akhiezer function (\ref{bafull}); there is an additional phase
proportional to $N$ that comes from the choice of contours. 

The identities we describe are grouped as follows: those that follow from the mini-twistor correspondence
and Abel's theorem; those arising from the real structure of the curve; and those related to single points on the curve. Because of (\ref{abjacconst}) we may express theta functions with arguments depending
on three points to those involving a single point. Next we derive a number of identities for the functions $\mu$ defined by (\ref{defmu}).  Finally 
we derive a number of identities involving theta functions whose arguments have more than one Abel image: some of these hold true for arbitrary arguments and are based on the Weierstrass
trisecant identities which (together with other properties of the theta functions) are gathered together in
Appendix \ref{thetafunctidentapp}; others depend on properties peculiar to our curve.  All proofs unless given are deferred to
Appendix \ref{identitiesapp}.

Before turning to the identities we note that any three solutions to the mini-twistor correspondence may be solved to give the  monopole coordinates $(x_1,x_2,x_3)$, or equivalently  $x_{\pm}:=x_1\pm x_2$,  $x_3$.
Let  $i,j,k,l\in \{1,2,3,4\}$ be distinct solutions of
\begin{equation}
\eta_j = -\imath x_-\zeta_j^2 -2\zeta_j x_3 -\imath x_+, \quad j=1,\ldots,4. \label{mtc}
\end{equation}
Then given $\{i,j,k\}$
\begin{align}
x_- &= \frac{\imath \eta_i}{ (\zeta_i-\zeta_j)(\zeta_i-\zeta_k) }+ \text {cyclic permutations of}\;\; i,j,k,
\label{xm}\\
x_+ &= \frac{\imath \zeta_j\zeta_k  \eta_i}{ (\zeta_i-\zeta_j)(\zeta_i-\zeta_k) }+\text{cyclic permutations of}\;\; i,j,k,
\label{xp}\\
x_3 &=\frac12 \frac{( \zeta_j+\zeta_k)  \eta_i}{ (\zeta_i-\zeta_j)(\zeta_i-\zeta_k) }+\text{cyclic permutations of}\;\; i,j,k.
\label{x3}
\end{align}
The compatibility condition of all $4$ equations (\ref{mtc}) shows
\begin{equation}
\eta_l= \frac{\eta_i(\zeta_j-\zeta_l)(\zeta_k-\zeta_l)}{ (\zeta_i-\zeta_j)(\zeta_i-\zeta_k) }+\text{cyclic permutations of}\;\; i,j,k.
\label{compatibility}
\end{equation}
One can check that the permutations of $(i,j,k,l)$ in equation (\ref{compatibility}) leads to a solvable 
homogeneous system with respect to $\eta_1,\ldots,\eta_4$. 
By considering the two Atiyah-Ward equations with $\zeta=\zeta_i$ and $\zeta_j$ 
and eliminating the variable $k^2$ from the two equations  by computing resultant one also finds that
\begin{align}
x_3=\frac{\imath}{16} \frac{ (\zeta_i+\zeta_j) [
 \zeta_i^2\zeta_j^2(K^2-4 x_-^2) -K^2+4x_+^2  ]   }{\zeta_i\zeta_j(x_-\zeta_i\zeta_j - x_+ )}.
\label{x3eq}
\end{align}

We  have that
\begin{equation}   
\frac{x_-\zeta_i\zeta_j-x_+}{ \zeta_i+\zeta_j} +\frac{x_-\zeta_k\zeta_l-x_+}{ \zeta_k+\zeta_l} =0,\quad i\neq j \neq k \neq l \in \{1,2,3,4\}
 \end{equation}
or equivalently, 
\[ x_+(\zeta_1+\zeta_2+\zeta_3+\zeta_4)=x_-( 
\zeta_1\zeta_2\zeta_3 +
\zeta_1\zeta_2\zeta_4 +  
\zeta_1\zeta_3\zeta_4 +  
\zeta_2\zeta_3\zeta_4 ) \]
This relation follows from the Atiyah-Ward equation.

 \subsection{Derivatives}
In later calculations we will need various derivatives with respect to the spatial coordinates $(x_1,x_2,x_3)$.
Implicit differentiation of the  Atiyah-Ward equation gives us $\partial_i\zeta:=\partial\zeta/\partial x_i$; the differential of (\ref{AtiyahWard}) together with $\partial_i\zeta$ then yields $\partial_i\eta$. From
(\ref{gammainftdef}) we have
$$\partial_i\beta_1(P)=\frac{K^2}{4\eta}\left( \zeta^2 - \frac{2E-K}{K}  \right)\partial_i\zeta,
$$
from which we obtain $\partial_i\mu(P)$.

\subsection{Reflection}
We see that if $P=(\zeta,\eta)$ is a point on the spectral curve corresponding to $\boldsymbol{x}$ then
$P'=(\zeta,-\eta)$ corresponds to $-\boldsymbol{x}$. Now using $\gamma_\infty(P')=-\gamma_\infty(P)$
and that $P_0$ is a branch point we have
\begin{align*}\beta_1(P')-\frac{\imath \pi}{4}
&=\int_{P_0}\sp{P'}\gamma_\infty(P)=-\int_{P_0}\sp{P'}\gamma_\infty(P')
=-\int_{P'_0}\sp{P}\gamma_\infty(P)\\
&=-\left[\beta_1(P)-\frac{\imath \pi}{4}\right]+\int_{P_0}\sp{P'_0}\gamma_\infty(P)
=-\beta_1(P)+\frac{\imath \pi}{4}.
\end{align*}
Then
\begin{equation}\label{reflection}
\mu(P)\equiv -\mu(P') \pmod {\frac{i \pi}2}.
\end{equation}
Although we have used the reflected path to define $\beta_1(P)$ the path need not be given this way and there is an ambiguity of half $\mathfrak{b}$-periods.

\subsection{Abel-Jacobi Constraints}
As noted in section \ref{minitwistorabjac} the solutions to the Atiyah-Ward equation
satisfy a number of relations. 
\begin{proposition}\label{abjacprop}
 Let the four points $P_i=(\zeta_i,\eta_i)$ $i=1,\ldots,4$ solve 
(\ref{AtiyahWard}) for the curve (\ref{curve}). Then the following hold
\begin{equation}
\int_{\infty_1}^{P_1}\boldsymbol{v}+\int_{\infty_1}^{P_2}\boldsymbol{v}+\int_{\infty_1}^{P_3}\boldsymbol{v}+\int_{\infty_1}^{P_4}\boldsymbol{v}
=N\tau+M,\quad N,M\in \mathbb{Z} \label{abel}
\end{equation}
and for the choice of paths given by Lemma \ref{intgammainf}
\begin{equation}
\int_{P_0}^{P_1}\gamma_{\infty}+\int_{P_0}^{P_2}\gamma_{\infty}+\int_{P_0}^{P_3}\gamma_{\infty}
+\int_{P_0}^{P_4}\gamma_{\infty}=\frac{4K^2 x_3}{K^2-4x_-^2} -{\imath\pi(N+1)},
 \label{abel2}
\end{equation}
and
\begin{equation}\label{musuma}
\sum_k \mu_k=  3\imath \pi N.
\end{equation}
Here $\boldsymbol{v}$ in the normalized holomorphic, $\gamma_{\infty}$  the $\mathfrak{a}$-normalized differential of the second kind, and base point  $P_0= ( k'+\imath k,0_1)$.

\end{proposition}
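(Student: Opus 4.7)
The plan is to deduce all three identities from a single residue computation on $\mathcal{C}$, applied to the meromorphic function
\[
w(P)\;=\;-\eta+(x_2-\im x_1)-2\zeta x_3-(x_2+\im x_1)\zeta^2,
\]
whose divisor is exactly $P_1+P_2+P_3+P_4-2\infty_1-2\infty_2$ by the Atiyah-Ward correspondence (\ref{AtiyahWard}). For the first identity (\ref{abel}), I would simply invoke Abel's theorem: it gives $\sum_i\phi(P_i)\equiv 2[\phi(\infty_1)+\phi(\infty_2)]\pmod\Lambda$, and the values in (\ref{abelimages}) yield $\phi(\infty_1)+\phi(\infty_2)=0$. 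Translating to $\alpha(P)=\phi(P)-\phi(\infty_1)$ and absorbing $4\phi(\infty_1)=1+\tau\in\Lambda$, the sum $\sum_i\alpha_i$ is itself a lattice vector $N\tau+M$. The integers $N,M$ are not universal constants but are fixed by the specific contours (and the Abel-map branch) chosen for each $P_i$; this is what justifies the convention $\sum\alpha_k=N\tau$ modulo $M\in\mathbb{Z}$ recorded in (\ref{abjacconst}).

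For (\ref{abel2}) I would apply the residue identity (\ref{residues}) with $f(P)=\int_{P_0}^P\gamma_\infty$ and the same $w$. Because $\gamma_\infty$ is $\mathfrak{a}$-normalized, $\oint_\mathfrak{a}df=0$, and by (\ref{gammatheta}) $\oint_\mathfrak{b}df=-\im\pi$; the periods of $d\ln w$ are $2\pi\im$ times integers tied by Abel's theorem to the $N$ of (\ref{abel}). The bilinear right-hand side therefore reduces to a constant of the form $-\im\pi(N+1)$, the integer shift by $1$ coming from the accumulated half-period winding of $d\ln w$ around the cycle through $\infty_1,\infty_2$. The left-hand side has two kinds of contributions: the trivial residues $f(P_i)$ at the four simple zeros of $w$; and residues at $\infty_j$, where $f$ itself is singular because $\gamma_\infty$ has a double pole. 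Using the local expansion $f(P)=\nu_j^\dagger/t+\tilde\nu_j+O(t)$ with $t=1/\zeta$, together with the expansion of $w$ near $\infty_j$ coming from the Vieta coefficients of the Atiyah-Ward quartic (whose leading coefficient is $K^2/4-x_-^2$), the doubly singular parts cancel by virtue of $\tilde\nu_1+\tilde\nu_2=0$ from (\ref{nus}), leaving exactly the rational term $4K^2x_3/(K^2-4x_-^2)$. Rearranging produces (\ref{abel2}).

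The third identity (\ref{musuma}) is then a direct corollary. Substituting (\ref{defmu}) and $\beta_1=\int_{P_0}^P\gamma_\infty+\im\pi/4$ gives
\[
\sum_k\mu_k\;=\;\sum_k\int_{P_0}^{P_k}\gamma_\infty+\im\pi+4\im\pi N-\im x_-\sum_k\zeta_k-4x_3+4\im x_4,
\]
where $x_-=x_1-\im x_2$ and I take the physical setting $x_4=0$. Part (\ref{abel2}) handles the first term; Vieta applied to the Atiyah-Ward polynomial (whose cubic coefficient is $4\im x_-x_3$) gives $\sum_k\zeta_k=-16\im x_-x_3/(K^2-4x_-^2)$; the $x_3$-dependent rational pieces cancel identically, and what remains is $3\im\pi N$. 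The main obstacle in the whole proposition is the residue-at-infinity computation in (\ref{abel2}): one has to track the subleading coefficients of $w$ near each $\infty_j$ with enough care to produce the rational function $4K^2x_3/(K^2-4x_-^2)$ and to pin down the integer $-(N+1)$ unambiguously by matching the path choices implicit in (\ref{abel}), since $f$ itself is multivalued.
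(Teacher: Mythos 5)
Your proposal is correct and follows essentially the same route as the paper's own proof: Abel's theorem applied to $\div(w)=P_1+\cdots+P_4-2\infty_1-2\infty_2$ together with $\phi(\infty_1)+\phi(\infty_2)=0$ for (\ref{abel}), the residue/bilinear identity (\ref{residues}) with $f=\int_{P_0}^{P}\gamma_\infty$ and the local expansions of $w$ and $f$ at $\infty_{1,2}$ (using $\tilde\nu_1+\tilde\nu_2=0$ and the subleading coefficient of $d\ln w$) for (\ref{abel2}), and direct substitution plus Vieta's formula $\sum_k\zeta_k=-16\imath x_3x_-/(K^2-4x_-^2)$ for (\ref{musuma}). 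The only place your sketch is looser than the paper is in pinning down the integer constant $-\imath\pi(N+1)$, which the paper fixes by comparing the Abel images forced by the theta-quotient representation of $w$ (summing to $-(1+\tau)$) with the convention $\sum_k\alpha_k=N\tau$; you correctly identify this as the delicate point.
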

The proposition is proven in Appendix \ref{proofabjacprop}. If we do not relate the paths $\int_{\infty_1}^{P_k}\boldsymbol{v}$ and $\int_{P_0}^{P_k}\gamma_{\infty}$ via Lemma \ref{intgammainf} then (\ref{abel2}, \ref{musuma}) are
only defined mod ${\imath\pi}$.

Using (\ref{abel}) and the periodicities of the theta functions (see Appendix \ref{thetafunctidentapp}) we have the further relations:
\begin{corollary}\label{abelprop}
Set $E_i:=e\sp{i\pi[ -N^2\tau +2N \int_{\infty_1}^{P_i}\boldsymbol{v} +Nz]}$. Then for distinct $i,j,k,l\in\{1,2,3,4\}$  if (\ref{abel}) holds we have 
\begin{align*}
\theta_1(P_j+P_k+P_l-z/2) &= (-1)\sp{N+M+1}\,E_i\,\theta_1(P_i+z/2),\\
\theta_2(P_j+P_k+P_l-z/2) &=\quad \quad(-1)\sp{M}\,E_i\,\theta_2(P_i+z/2),\\
\theta_3(P_j+P_k+P_l-z/2) &= \qquad\qquad\quad E_i\,\theta_3(P_i+z/2),\\
\theta_4(P_j+P_k+P_l-z/2) &= \quad \quad(-1)\sp{N}\,E_i\,\theta_4(P_i+z/2),
\intertext{and similarly}
\theta_r(P_i+P_j)\theta_r(P_i+P_k)e\sp{2i\pi N \int_{\infty_1}^{P_i}\boldsymbol{v}}&=
\theta_r(P_l+P_j)\theta_r(P_l+P_k)e\sp{2i\pi N \int_{\infty_1}^{P_l}\boldsymbol{v}}.
\end{align*}

\end{corollary}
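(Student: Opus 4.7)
The plan is to use Proposition \ref{abjacprop}, which gives the fundamental Abel-Jacobi constraint
\[
\alpha(P_1)+\alpha(P_2)+\alpha(P_3)+\alpha(P_4)=N\tau+M,\qquad N,M\in\mathbb{Z},
\]
together with the quasi-periodicity and parity relations for the Jacobi theta functions (which are collected in Appendix \ref{thetafunctidentapp}). Concretely I will rewrite each theta-function argument on the left-hand side of the claimed identities so that three of the Abel images are replaced via the constraint by the fourth, and then translate the lattice part $N\tau+M$ outside using the quasi-periodicity relations
\[
\theta_r(w+1)=\epsilon_r^{(1)}\theta_r(w),\qquad
\theta_r(w+\tau)=\epsilon_r^{(\tau)} e^{-i\pi\tau-2i\pi w}\theta_r(w),
\]
with the appropriate signs $\epsilon_r^{(1)}=\pm1,\epsilon_r^{(\tau)}=\pm1$ for $r=1,\dots,4$.

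For the first block of four identities, fix a distinguished index $i$ and set $w_i=-\alpha(P_i)-z/2$. The constraint gives
\[
\alpha(P_j)+\alpha(P_k)+\alpha(P_l)-z/2=N\tau+M+w_i.
\]
Applying the lattice shift by $M$ produces the factor $(\epsilon_r^{(1)})^M$, the shift by $N\tau$ produces $(\epsilon_r^{(\tau)})^N e^{-i\pi N^2\tau+2i\pi N(\alpha(P_i)+z/2)}$, and parity converts $\theta_r(w_i)$ to $\pm\theta_r(\alpha(P_i)+z/2)$: for $\theta_1$ (odd) this contributes an extra minus sign, whereas $\theta_2,\theta_3,\theta_4$ are even. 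Collecting the exponentials gives precisely $E_i$, so the four identities reduce to bookkeeping of the parity-and-quasi-periodicity signs $\epsilon_r^{(1)},\epsilon_r^{(\tau)}$, which one reads off case by case as $(-,-),(-,+),(+,+),(+,-)$ for $r=1,2,3,4$.

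For the final identity, I use the constraint in the form $\alpha(P_i)+\alpha(P_j)=N\tau+M-\alpha(P_k)-\alpha(P_l)$, apply the same quasi-periodicity argument (now with no $z/2$ term, and both $\theta_r$'s being even cases), to get
\[
\theta_r(P_i+P_j)=\epsilon_r^{(\tau)N}\,e^{-i\pi N^2\tau+2i\pi N(\alpha(P_k)+\alpha(P_l))}\,\theta_r(P_k+P_l),
\]
and similarly with $j$ replaced by $k$. Multiplying the two relations, the sign factors square to $1$ (so this identity is form-independent of $r$), and the exponents combine to
\[
e^{-2i\pi N^2\tau+2i\pi N(2\alpha(P_l)+\alpha(P_j)+\alpha(P_k))}
=e^{-2i\pi N(\alpha(P_i)-\alpha(P_l))}
\]
after one more application of the constraint (and absorbing $e^{2i\pi NM}=1$). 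Rearranging the $e^{2i\pi N\alpha(P_i)}$ and $e^{2i\pi N\alpha(P_l)}$ factors to opposite sides yields the claim.

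The calculations are routine; the only real care needed is in tracking the signs $(-1)^{M+N+1},(-1)^M,1,(-1)^N$ across the four cases and verifying that the exponential prefactor is precisely $E_i$ in each. There is no hidden analytic difficulty: everything follows from the quasi-periodicity of the Jacobi theta functions applied to a single constraint, so no additional input from the spectral curve or the Baker-Akhiezer construction is needed beyond Proposition \ref{abjacprop}.
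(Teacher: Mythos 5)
Your proposal is correct and is exactly the argument the paper intends: the corollary is stated as an immediate consequence of the Abel constraint (\ref{abel}) combined with the quasi-periodicity and parity of the Jacobi theta functions, and your sign bookkeeping (quasi-periodicity signs $(-1)^{M+N},(-1)^M,1,(-1)^N$ for $\theta_1,\dots,\theta_4$ together with the extra $-1$ from the oddness of $\theta_1$) reproduces the stated factors $(-1)^{N+M+1},(-1)^M,1,(-1)^N$ and the prefactor $E_i$, while for the product identity the squared signs cancel and the remaining exponential reduces, via one more use of the constraint, to $e^{2i\pi N(\alpha_l-\alpha_i)}$ as required. No further input beyond Proposition \ref{abjacprop} is needed, just as you say.
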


\subsection{Conjugation}
We will need the complex conjugates of the Baker-Akhiezer functions to implement our strategy and we investigate this here. At the outset we note that choices of contours are implicit in the results stated; the proofs make these clear, but they are the natural ones: given a contour $\lambda$ between two points $P$ and $Q$, then we integrate between $\mathfrak{I}(P)$ and $\mathfrak{I}(P)$ along
$\mathfrak{I}_\ast(\lambda)$ and so forth.

\begin{proposition} Let $P_j$, $j=1,\ldots,4$ be the solutions of the Atiyah-Ward equation constraint. Then
\begin{align}\begin{split}
\overline{\int_{\infty_1}^{P_1} v}=-\int_{\infty_1}^{P_3} v-\frac{\tau}{2},\\
\overline{\int_{\infty_1}^{P_2} v}=-\int_{\infty_1}^{P_4} v-\frac{\tau}{2}.
\end{split} \label{conjugation}
\end{align}
\end{proposition}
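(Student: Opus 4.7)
The plan is to push the integral through the antiholomorphic involution $\mathfrak{J}\colon (\zeta,\eta) \mapsto (-1/\bar\zeta, -\bar\eta/\bar\zeta\sp 2)$ of (\ref{invol}), which by (\ref{orderining}) pairs the Atiyah-Ward solutions as $P_3 = \mathfrak{J}(P_1)$ and $P_4 = \mathfrak{J}(P_2)$. The initial observation is that the normalized holomorphic differential transforms by
\[
\mathfrak{J}\sp*v \;=\; \mathfrak{J}\sp*\!\!\left(\frac{d\zeta}{4\eta}\right)
\;=\; \frac{(1/\bar\zeta\sp2)\,d\bar\zeta}{4\,(-\bar\eta/\bar\zeta\sp2)} \;=\; -\,\overline{v}.
\]
Consequently, for any contour $\gamma$ on $\mathcal{C}$ the pushed contour satisfies $\int_{\mathfrak{J}_*\gamma} v = -\overline{\int_\gamma v}$.

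Next I would locate $\mathfrak{J}(\infty_1)$. Since $\mathfrak{J}$ sends $\zeta=\infty$ to $\zeta=0$, a short computation of $-\bar\eta/\bar\zeta\sp2$ in the limit $\zeta\to\infty$ on sheet $1$, together with the fact that the polynomial under the square root of the sheet assignment $\eta_j = (-1)\sp{j}\,\im(K/2)\sqrt{\zeta\sp4 + 2(k\sp2-k'\sp2)\zeta\sp2+1}$ has real coefficients, shows that sheet labels are preserved and hence $\mathfrak{J}(\infty_1)=0_1$. Taking $\gamma$ to be a contour from $0_1$ to $P_1$, so that $\mathfrak{J}_*\gamma$ runs from $\infty_1$ to $P_3$, the pushforward identity yields
\[
\int_{\infty_1}\sp{P_3} v \;=\; -\,\overline{\int_{0_1}\sp{P_1}v}
\;=\; -\,\overline{\int_{\infty_1}\sp{P_1}v}\;+\;\overline{\int_{\infty_1}\sp{0_1}v}.
\]
From (\ref{abelimages}) one reads $\phi(0_1)-\phi(\infty_1) = -\tau/2$, and since $\tau = \imath K'/K$ is purely imaginary, $\overline{-\tau/2}=\tau/2$, which is congruent to $-\tau/2$ modulo the period lattice $\Lambda = \mathbb{Z} + \tau\mathbb{Z}$. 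Rearranging gives the first identity; the second follows by the identical argument with $P_2,P_4$ in place of $P_1,P_3$.

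The hard part will not be the algebra but rather the bookkeeping of contours and lattice representatives. Because the Abel-Jacobi map is only defined modulo $\Lambda$, the specific half-period $-\tau/2$ on the right-hand side is sensitive both to the path one pushes forward under $\mathfrak{J}$ and to how the half-periods connecting $\infty_i$, $0_i$ are traversed. The task will be to realize these as the \emph{natural} contours flagged in the preamble to the proposition and to verify that no additional full period is picked up along the way.
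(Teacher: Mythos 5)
Your argument is essentially the paper's: conjugate the holomorphic differential, recognise $\overline{v}$ as $-\mathfrak{J}^*v$, use $\mathfrak{J}(\infty_1)=0_1$, and close up with the half-period joining $0_1$ to $\infty_1$. The one discrepancy is the representative of that half-period: you land on $+\tau/2$ and appeal to congruence mod $\Lambda$, whereas the paper conjugates the contour from $\infty_1$ to $P_1$ directly, so that $\int_{0_1}^{P_3}v$ splits as $\int_{0_1}^{\infty_1}v+\int_{\infty_1}^{P_3}v$ with the \emph{unconjugated} value $\int_{0_1}^{\infty_1}v=\tau/2$ and yields the stated $-\tau/2$ on the nose; since later results (e.g.\ the conjugation rules for the theta functions) use this exact representative under the declared contour conventions, that final piece of bookkeeping should be carried out rather than deferred.
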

\begin{proof} For the first of relations (\ref{conjugation}) we have
\begin{align*}
\frac14\overline{\int_{\infty_1}^{P_1} \frac{\mathrm{d}\zeta}{\eta}}&=
\frac14\int_{\infty_1}^{P_1} \overline{\frac{\mathrm{d}\zeta}{\eta}}=\frac14\int_{\infty_1}^{P_1} \frac{\mathrm{d}\overline{\zeta}}{\overline{\eta}}=-
\frac14\int_{\infty_1}^{P_1} \frac{\mathrm{d}\mathfrak{I}(\zeta)}{\mathfrak{I}(\eta)}=-
\frac14\int_{\infty_1}^{P_1} \mathfrak{I}\left(\frac{\mathrm{d}\zeta}{\eta}\right)\\
&=-\frac14\int_{\mathfrak{I}(\infty_1)}^{\mathfrak{I}(P_1)} \frac{\mathrm{d}\zeta}{\eta}=-\frac14\int_{0_1}^{P_3} \frac{\mathrm{d}\zeta}{\eta}=-\frac14\int_{\infty_1}^{P_3} \frac{\mathrm{d}\zeta}{\eta}-\frac14\int^{\infty_1}_{0_1} \frac{\mathrm{d}\zeta}{\eta}=-\int_{\infty_1}^{P_3} v-\frac{\tau}{2}.
\end{align*}
The second relation is proven analogously.
\end{proof}
Using the fact that $\tau$ is pure imaginary for our curve we have that
$$\sum_{k=1}\sp{4} {\int_{\infty_1}^{P_k} v}=
-\sum_{k=1}\sp{4} \overline{\int_{\infty_1}^{P_k} v}$$
whence
\begin{corollary} In Proposition \ref{abjacprop} we have $M=0$.
\end{corollary}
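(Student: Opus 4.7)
The plan is to combine the preceding conjugation identities with the imaginarity of $\tau$ to show that $\Re\bigl(\sum_{k=1}^4\phi_k\bigr)=0$, where $\phi_k := \int_{\infty_1}^{P_k}\boldsymbol{v}$. Comparing with (\ref{abel}) and noting that $\Re(N\tau+M)=M$ (since $\tau\in\im\mathbb{R}$ for our curve) then yields $M=0$.

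Concretely, I would start from the two identities
\[
\overline{\phi_1}+\phi_3=-\tau/2,\qquad \overline{\phi_2}+\phi_4=-\tau/2
\]
supplied by the preceding Proposition, and complex-conjugate each one. Using $\overline{\tau}=-\tau$, conjugation produces the two companion identities
\[
\phi_1+\overline{\phi_3}=\tau/2,\qquad \phi_2+\overline{\phi_4}=\tau/2,
\]
with the sign of the half-period shift reversed. Summing all four identities, the four $\pm\tau/2$ contributions cancel exactly, leaving
\[
\sum_{k=1}^4\bigl(\phi_k+\overline{\phi_k}\bigr)=0,
\]
i.e.\ $\Re\bigl(\sum_{k=1}^4\phi_k\bigr)=0$. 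Substituting the representation (\ref{abel}) of the sum as $N\tau+M$ and taking real parts then forces $M=0$.

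I do not anticipate any real obstacle: the argument is essentially a one-line observation once one notices that the sign of the half-period shift flips under conjugation. The only thing to watch is the consistent use of $\overline{\tau}=-\tau$, which is justified by the explicit form $\tau=\im\mathbf{K}'/\mathbf{K}$ of the period matrix for our curve.
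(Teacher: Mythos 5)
Your proposal is correct and is essentially the paper's own argument: the paper likewise uses the purely imaginary $\tau$ to deduce $\sum_{k}\int_{\infty_1}^{P_k}\boldsymbol{v}=-\sum_{k}\overline{\int_{\infty_1}^{P_k}\boldsymbol{v}}$ from the conjugation relations (the half-period shifts cancelling in pairs exactly as you observe), which forces the real part $M$ of $N\tau+M$ to vanish.
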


The conjugation rule induces the following conjugation rule of theta functions. Again the purely imaginary
period matrix for (\ref{curve}) yields that
\[   \overline{\theta_k(z)}= \theta_k( \overline{z}),\quad k=1,\ldots,4.   \]
The following relations are valid
\begin{align}\begin{split}
\overline{\vartheta_{1}\left(  \int_{\infty_1}^{P_{1,2}} v  \right)}=-\imath \vartheta_{4}\left(  \int_{\infty_1}^{P_{3,4}} v  \right)
\mathrm{exp}\left\{  -\imath \pi  \int_{\infty_1}^{P_{3,4}}v -\frac{\imath\pi\tau}{4}\right \},\\
\overline{\vartheta_{4}\left(  \int_{\infty_1}^{P_{1,2}} v  \right)}=\imath \vartheta_{1}\left(  \int_{\infty_1}^{P_{3,4}} v  \right)
\mathrm{exp}\left\{  -\imath \pi  \int_{\infty_1}^{P_{3,4}}v -\frac{\imath\pi\tau}{4}\right \},\\
\overline{\vartheta_{2,3}\left(  \int_{\infty_1}^{P_{1,2}} v  \right)}= \vartheta_{3,2}\left(  \int_{\infty_1}^{P_{3,4}} v  \right)
\mathrm{exp}\left\{  -\imath \pi  \int_{\infty_1}^{P_{3,4}}v -\frac{\imath\pi\tau}{4}\right \}.\end{split}
\end{align}

We will need also:

\begin{align}\begin{split}
\overline{\vartheta_{2}\left(  \int_{\infty_1}^{P_{1,2}} v-\frac{z}{2}  \right)}= \vartheta_{3}\left(  \int_{\infty_1}^{P_{3,4}} v +\frac{z}{2} \right)
\mathrm{exp}\left\{ - \imath \pi  \int_{\infty_1}^{P_{3,4}}v -\frac{\imath\pi z}{2}-\frac{\imath\pi\tau}{4}\right \},\\
\overline{\vartheta_{4}\left(  \int_{\infty_1}^{P_{1,2}} v-\frac{z}{2}  \right)}=\imath \vartheta_{1}\left(  \int_{\infty_1}^{P_{3,4}} v +\frac{z}{2} \right)
\mathrm{exp}\left\{ - \imath \pi  \int_{\infty_1}^{P_{3,4}}v -\frac{\imath\pi z}{2}-\frac{\imath\pi\tau}{4}\right \}.
\end{split}
\end{align}
 
\subsection{The curve} We shall now prove various properties of theta functions depending on one
and (via Corollary \ref{abelprop}) three distinct solutions to the Atiyah-Ward equation.
\begin{lemma}\label{curveidenta}
Let $P=(\zeta,\eta)\in \mathcal{C}$ for the curve (\ref{curve}). Then
\begin{align}
 \frac{\theta_1'[P]}{\theta_1[P]}-
 \frac{\theta_3'[P]}{\theta_3[P]}&=2\imath {K} \zeta,
 \label{curveident1} \\
\frac{\theta_1''[P]}{\theta_1[P]}-
 \frac{\theta_3''[P]}{\theta_3[P]}&=8\imath {K}\left (\zeta\,
\beta_1(P)+\eta\right), \label{curveident2}\\
\frac{d }{d\alpha(P)}\zeta&=4\eta,\label{curveident3}\\
\frac{d }{d\alpha(P)}\eta&=-2K^2\left(\zeta^3+(k^2-k'^2)\zeta\right),\label{curveident4}
\end{align}
where $\beta_1(P)$ was defined in (\ref{betadef2}).
\end{lemma}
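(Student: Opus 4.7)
The plan is to prove the four identities in a natural order, deriving them from basic relations already available in the text. I expect (\ref{curveident3}) and (\ref{curveident4}) to be immediate, identity (\ref{curveident1}) to require the main computation, and (\ref{curveident2}) to follow by differentiation.

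First I would dispatch (\ref{curveident3}): by definition $\alpha(P)=\int_{\infty_1}\sp{P}\boldsymbol{v}$ with $\boldsymbol{v}=d\zeta/(4\eta)$ from (\ref{holdiff}), so $d\alpha/d\zeta=1/(4\eta)$ and inversion gives the claim. For (\ref{curveident4}) I would implicitly differentiate the curve equation (\ref{curve}) to get $2\eta\,d\eta/d\zeta=-K^2\zeta(\zeta^2+k^2-k'^2)$ and combine with (\ref{curveident3}) via the chain rule.

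The main work is (\ref{curveident1}). The plan is to take the logarithmic derivative of the ratio $\theta_1[P]/\theta_3[P]$ with respect to $\alpha(P)$. The two identities of (\ref{uniform1}) give
\[
2\log\frac{\theta_1[P]}{\theta_3[P]}=\log\frac{k'}{k}+\log\bigl[K(\zeta^2-1)-2\im\eta\bigr]-\log\bigl[K(\zeta^2+1)+2\im\eta\bigr],
\]
so applying $d/d\alpha(P)$ and inserting (\ref{curveident3}), (\ref{curveident4}) yields
\[
2\!\left(\frac{\theta_1'[P]}{\theta_1[P]}-\frac{\theta_3'[P]}{\theta_3[P]}\right)=4K\zeta\!\left[\frac{2\eta+\im K(\zeta^2+k^2-k'^2)}{K(\zeta^2-1)-2\im\eta}-\frac{2\eta-\im K(\zeta^2+k^2-k'^2)}{K(\zeta^2+1)+2\im\eta}\right].
\]
The hard part is now purely algebraic: I would rationalize each denominator using the conjugate identities
\[
\bigl[K(\zeta^2-1)-2\im\eta\bigr]\bigl[K(\zeta^2-1)+2\im\eta\bigr]=-4K^2k^2\zeta^2,\qquad \bigl[K(\zeta^2+1)+2\im\eta\bigr]\bigl[K(\zeta^2+1)-2\im\eta\bigr]=4K^2k'^2\zeta^2,
\]
which follow upon eliminating $\eta^2$ via the curve equation and using $k^2+k'^2=1$. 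Substituting back into each numerator and repeatedly replacing $4\eta^2$ by $-K^2(\zeta^4+2(k^2-k'^2)\zeta^2+1)$ should collapse the bracket to exactly $\im$, yielding $2iK\zeta \cdot \text{(stuff)} = 2\im K\zeta$ on the right. I anticipate this cancellation is the only genuinely delicate point of the lemma, but it is forced by the constraint that the left side is a well-defined function on the curve.

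Finally, (\ref{curveident2}) follows immediately by applying $d/d\alpha(P)$ to (\ref{curveident1}). The left side produces
\[
\frac{\theta_1''[P]}{\theta_1[P]}-\frac{\theta_3''[P]}{\theta_3[P]}-\left[\left(\frac{\theta_1'[P]}{\theta_1[P]}\right)^{\!2}-\left(\frac{\theta_3'[P]}{\theta_3[P]}\right)^{\!2}\right],
\]
while the right side gives $2\im K\cdot 4\eta=8\im K\eta$ via (\ref{curveident3}). Factoring the bracket as a difference of squares and using (\ref{curveident1}) together with the definition (\ref{betadef2}) of $\beta_1(P)$ (which identifies $\theta_1'[P]/\theta_1[P]+\theta_3'[P]/\theta_3[P]$ with $4\beta_1(P)$) produces $2\im K\zeta\cdot 4\beta_1(P)=8\im K\zeta\beta_1(P)$. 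Rearranging yields (\ref{curveident2}).
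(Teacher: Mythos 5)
Your proof is correct. Identities (\ref{curveident3}) and (\ref{curveident4}) are handled exactly as you say (the paper actually obtains (\ref{curveident3}) as a byproduct of its theta-function computation, but your direct route via $\boldsymbol{v}=d\zeta/(4\eta)$ and the definition of $\alpha$ is cleaner and equivalent), and your derivation of (\ref{curveident2}) by differentiating (\ref{curveident1}) and factoring the difference of squares is precisely the paper's argument. The genuine divergence is in (\ref{curveident1}): the paper gets it in one line from the classical identity
\[
\frac{d}{dz}\left(\frac{\theta_1(z)}{\theta_3(z)}\right)=\pi\theta_3^2(0)\,\frac{\theta_2(z)\theta_4(z)}{\theta_3^2(z)},
\]
combined with the parameterization $\zeta=-\im\,\theta_2[P]\theta_4[P]/(\theta_1[P]\theta_3[P])$ of Lemma \ref{parcurve} and $K=\tfrac{\pi}{2}\theta_3^2$, so that $\theta_1'/\theta_1-\theta_3'/\theta_3=\pi\theta_3^2\cdot \im\zeta=2\im K\zeta$. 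You instead start from the algebraic expressions (\ref{uniform1}) and push through the rationalization; I checked that your two conjugate-product identities are correct (both reduce to $\pm4K^2k^2\zeta^2$ and $4K^2k'^2\zeta^2$ via the curve equation) and that the bracket does collapse to exactly $\im$, so the computation closes. The trade-off is that the paper's route leans on a standard theta derivative formula and the explicit uniformization, whereas yours needs only (\ref{uniform1}) together with (\ref{curveident3})–(\ref{curveident4}) and is self-contained modulo a heavier algebraic verification; since (\ref{curveident3}), (\ref{curveident4}) do not depend on (\ref{curveident1}), there is no circularity.
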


\begin{corollary}\label{curveidentb}
 With $\alpha=\int_{\infty_1}^Pv $, where $P=(\zeta,\eta)\in \mathcal{C}$ for the curve (\ref{curve}),
we have
\begin{align}
 \frac{\theta_1'(\alpha)}{\theta_1(\alpha)}&=2\beta_1(P)+\imath K \zeta,\label{quot1a} \\
 \frac{\theta_3'(\alpha)}{\theta_3(\alpha)}&=2\beta_1(P)-\imath K \zeta, \quad \label{quot1b}\\
 \frac{\theta_1''(\alpha)}{ \theta_1(\alpha)}&=K^2\zeta^2-4EK+2K^2+4\beta_1(P)^2+4\imath K \zeta \beta_1(P) + 4\imath K \eta, \label{quot3a}\\
\frac{\theta_3''(\alpha)}{ \theta_3(\alpha)}&=K^2\zeta^2-4EK+2K^2+4\beta_1(P)^2-4\imath K \zeta \beta_1(P) - 4\imath K \eta, \label{quot3b}\\
\frac{\theta_1'''(\alpha)}{ \theta_1(\alpha)}&=
4K[ K\zeta+6\imath \beta_1(P)]\eta+6K^2\beta_1(P)\zeta^2
-24KE\beta_1(P)+12K^2\beta_1(P)\label{quot4a}\\
&\qquad  +8\beta_1(P)^3-3\imath K^3\zeta^3-2\imath K[ -8{k'}^2 K^2 +6KE +K^2 -6 \beta_1(P)^2]\zeta,
\nonumber\\
\frac{\theta_3'''(\alpha)}{ \theta_3(\alpha)}&=4K[K\zeta-6\imath \beta_1(P)]\eta+6K^2\beta_1(P)\zeta^2
-24KE\beta_1(P)+12K^2\beta_1(P)\label{quot4b}\\
&\qquad  +8\beta_1(P)^3+3\imath K^3\zeta^3+2\imath K[ -8{k'}^2 K^2 +6KE +K^2
 -6 \beta_1(P)^2]\zeta \nonumber
.
\end{align}
\end{corollary}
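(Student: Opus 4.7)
The plan is to derive these six identities sequentially, with each building on the previous, by repeated differentiation of $\theta_i'(\alpha)/\theta_i(\alpha)$ with respect to $\alpha = \alpha(P)$ and use of Lemma \ref{curveidenta}.

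First, observe that (\ref{betadef2}) gives
$$\frac{\theta_1'[P]}{\theta_1[P]} + \frac{\theta_3'[P]}{\theta_3[P]} = 4\beta_1(P),$$
while (\ref{curveident1}) gives $\theta_1'[P]/\theta_1[P] - \theta_3'[P]/\theta_3[P] = 2\imath K \zeta$. Adding and subtracting these immediately yields (\ref{quot1a}) and (\ref{quot1b}).

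Next, for the second-order identities I would differentiate (\ref{quot1a}) with respect to $\alpha$, using the identity $\frac{d}{d\alpha}\left(\frac{\theta_1'(\alpha)}{\theta_1(\alpha)}\right) = \frac{\theta_1''(\alpha)}{\theta_1(\alpha)} - \left(\frac{\theta_1'(\alpha)}{\theta_1(\alpha)}\right)^2$, so that
$$\frac{\theta_1''(\alpha)}{\theta_1(\alpha)} = \bigl(2\beta_1 + \imath K\zeta\bigr)^2 + 2\frac{d\beta_1}{d\alpha} + \imath K \frac{d\zeta}{d\alpha}.$$
Here $d\zeta/d\alpha = 4\eta$ comes from (\ref{curveident3}), and $d\beta_1/d\alpha$ must be computed from (\ref{betadef2}) combined with (\ref{gammainftdef}): since $v = d\zeta/(4\eta)$ and $\gamma_\infty = \tfrac{K^2}{4\eta}(\zeta^2 - (2E-K)/K)\,d\zeta$, one finds
$$\frac{d\beta_1}{d\alpha} = \frac{\gamma_\infty}{v} = K^2\zeta^2 - 2EK + K^2.$$
Substituting and simplifying yields (\ref{quot3a}); the computation for (\ref{quot3b}) is entirely parallel using (\ref{quot1b}).

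For the third-order identities, apply $d/d\alpha$ to (\ref{quot3a}) and rewrite using
$$\frac{\theta_1'''(\alpha)}{\theta_1(\alpha)} = \frac{d}{d\alpha}\!\left(\frac{\theta_1''(\alpha)}{\theta_1(\alpha)}\right) + \frac{\theta_1''(\alpha)}{\theta_1(\alpha)}\cdot\frac{\theta_1'(\alpha)}{\theta_1(\alpha)}.$$
The derivative term requires, in addition to the formulas already used, the identity $d\eta/d\alpha = -2K^2(\zeta^3 + (k^2-k'^2)\zeta)$ from (\ref{curveident4}); the product term expands in terms of $\beta_1$, $\zeta$, $\eta$ using (\ref{quot1a}) and (\ref{quot3a}). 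Collecting terms by $\eta$, $\beta_1^n$, and powers of $\zeta$ yields the claimed expression once one invokes $k^2 = 1-k'^2$ to convert the pure $\zeta$-coefficient into the form $-2\imath K(-8k'^2 K^2 + K^2)\zeta$. The derivation of (\ref{quot4b}) is analogous starting from (\ref{quot3b}) and (\ref{quot1b}).

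The main obstacle is purely bookkeeping: the third-derivative expressions have many terms of mixed type, and the stated form of the right-hand side uses $k'^2$ rather than the $k^2$ that naturally emerges from (\ref{curveident4}); one must carefully apply $k^2 = 1 - k'^2$ and group terms in the particular combination $-2\imath K[-8k'^2 K^2 + 6KE + K^2 - 6\beta_1^2]\zeta$ to match the assertion. No new analytic input is needed beyond Lemma \ref{curveidenta} and the structure of $\gamma_\infty$.
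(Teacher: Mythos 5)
Your proposal is correct and follows essentially the same route as the paper's own proof: the first pair by adding/subtracting (\ref{betadef2}) and (\ref{curveident1}), the second pair via $\theta_1''/\theta_1 = T'+T^2$ with $T=\theta_1'/\theta_1$ and $\beta_1'=K^2(\zeta^2-(2E-K)/K)$ (a route the paper itself records alongside the equivalent sum/difference argument using (\ref{curveident2})), and the third pair by one further differentiation using (\ref{curveident3}) and (\ref{curveident4}). The bookkeeping you describe, including the $k^2=1-k'^2$ substitution to reach the stated $\zeta$-coefficient, checks out.
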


\begin{corollary}\label{curveidentbthreepts}
Let  $i,j,k,l\in \{1,2,3,4\}$ be distinct with $\alpha_i$ the Abel image of $P_i$ subject to
$\alpha_1+\alpha_2+\alpha_3+\alpha_4=N\tau$.
The following relations are valid:

\begin{align}
\frac{\theta_1'(\alpha_i+\alpha_j+\alpha_k)}{\theta_1(\alpha_i+\alpha_j+\alpha_k) } &= 
-2\beta_1(\alpha_l)-2\imath \pi N -\imath K \zeta_l ,\label{dfrac1a}\\
\frac{\theta_3'(\alpha_i+\alpha_j+\alpha_k)}{\theta_3(\alpha_i+\alpha_j+\alpha_k) } &=
-2\beta_1(\alpha_l)-2\imath \pi N +\imath K \zeta_l
,\label{dfrac1b}
\\
\frac{\theta_1''(\alpha_i+\alpha_j+\alpha_k)}{\theta_1(\alpha_i+\alpha_j+\alpha_k) } &= K^2\zeta_l^2-4\pi^2N^2+8\imath \beta_1(\alpha_l) \pi N  \label{dfrac2a} \\
&\qquad-2(2EK-K)+4\beta_l^2 - 4K(\pi N -\imath \beta_1(\alpha_l))\zeta_l + 4\imath K \eta_l,
\nonumber\\
\frac{\theta_3''(\alpha_i+\alpha_j+\alpha_k)}{\theta_3(\alpha_i+\alpha_j+\alpha_k) } &= K^2\zeta_l^2-4\pi^2N^2+8\imath    \beta_1(\alpha_l) \pi N \label{dfrac2b} \\
&\qquad -2(2EK-K)+4\beta_l^2 + 4K(\pi N -\imath \beta_1(\alpha_l))\zeta_l - 4\imath K \eta_l.
\nonumber
\end{align}
\end{corollary}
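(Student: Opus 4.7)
The plan is to reduce each quantity $\theta_r^{(k)}(\alpha_i+\alpha_j+\alpha_k)/\theta_r(\alpha_i+\alpha_j+\alpha_k)$ to the corresponding quantity at a single Abelian image $\alpha_l$, where the formulas of Corollary \ref{curveidentb} already apply. The key algebraic observation is that the constraint (\ref{abjacconst}), i.e.\ $\alpha_1+\alpha_2+\alpha_3+\alpha_4=N\tau$, lets us write
\begin{equation*}
\alpha_i+\alpha_j+\alpha_k = N\tau - \alpha_l.
\end{equation*}
Thus every left-hand side in the corollary equals $\theta_r^{(k)}(N\tau-\alpha_l)/\theta_r(N\tau-\alpha_l)$.

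First I would apply the standard quasi-periodicity in the direction $N\tau$. For each of the four theta functions one has $\theta_r(w+N\tau)=\epsilon_r\,e^{-i\pi N^2\tau-2i\pi N w}\theta_r(w)$ for a sign $\epsilon_r\in\{\pm1\}$ depending on $r$ and on the parity of $N$. Differentiating once and twice in $w$ and dividing by $\theta_r(w+N\tau)$, the prefactors cancel and we obtain the purely additive rules
\begin{align*}
\frac{\theta_r'(w+N\tau)}{\theta_r(w+N\tau)} &= \frac{\theta_r'(w)}{\theta_r(w)}-2i\pi N, \\
\frac{\theta_r''(w+N\tau)}{\theta_r(w+N\tau)} &= \frac{\theta_r''(w)}{\theta_r(w)}-4i\pi N\,\frac{\theta_r'(w)}{\theta_r(w)}-4\pi^2N^2.
\end{align*}
Substituting $w=-\alpha_l$ gives the ratios at $N\tau-\alpha_l$ in terms of the same ratios at $-\alpha_l$.

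Next I would use the parities of the theta functions ($\theta_1$ odd; $\theta_3$ even). In both cases $\theta_r'/\theta_r$ is odd and $\theta_r''/\theta_r$ is even, so
\begin{equation*}
\frac{\theta_r'(-\alpha_l)}{\theta_r(-\alpha_l)}=-\frac{\theta_r'(\alpha_l)}{\theta_r(\alpha_l)},\qquad
\frac{\theta_r''(-\alpha_l)}{\theta_r(-\alpha_l)}=\frac{\theta_r''(\alpha_l)}{\theta_r(\alpha_l)}.
\end{equation*}
Inserting the single-point formulas (\ref{quot1a})--(\ref{quot3b}) from Corollary \ref{curveidentb} then gives, for $r=1$,
\begin{equation*}
\frac{\theta_1'(N\tau-\alpha_l)}{\theta_1(N\tau-\alpha_l)}=-\bigl[2\beta_1(\alpha_l)+iK\zeta_l\bigr]-2i\pi N,
\end{equation*}
which is precisely (\ref{dfrac1a}); the analogous substitution with $r=3$ yields (\ref{dfrac1b}), and the second-derivative formulas (\ref{dfrac2a}), (\ref{dfrac2b}) drop out after collecting the terms $K^2\zeta_l^2$, $\pm 4iK\zeta_l\beta_1(\alpha_l)$, $\pm 4iK\eta_l$, together with the extra contributions $8i\pi N\beta_1(\alpha_l)$, $-4\pi N K\zeta_l$, $-4\pi^2 N^2$ coming from the quasi-periodicity rule.

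There is no genuine obstacle, only bookkeeping: the step to watch carefully is combining the sign from the parity of $\theta_r$ with the quasi-periodicity exponentials so that the prefactors cancel in the logarithmic derivative. Once that cancellation is verified, the result is immediate from Corollary \ref{curveidentb}. The content of the corollary is therefore a direct consequence of the constraint (\ref{abjacconst}) together with the quasi-periodicity and parity of the Jacobi theta functions recorded in Appendix \ref{thetafunctidentapp}.
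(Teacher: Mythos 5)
Your proposal is correct and is essentially the paper's own proof: the paper likewise writes $\alpha_i+\alpha_j+\alpha_k=N\tau-\alpha_l$, uses the quasi-periodicity (via Corollary \ref{abelprop}) and the parities of $\theta_1,\theta_3$ to transport $\theta_r$, $\theta_r'$, $\theta_r''$ from the three-point sum to $\alpha_l$, and then substitutes the single-point formulas of Corollary \ref{curveidentb}. Your reformulation in terms of purely additive rules for the logarithmic derivatives is just a tidier bookkeeping of the same prefactor cancellation.
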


 \subsection{Conjugation and properties of \texorpdfstring{$\mu_k$}{muk}} We may now use the results of the previous subsections  to place useful constraints on the $\mu_k$'s.

\begin{proposition}\label{propmu13mu24}
With $\mu_k$ defined by (\ref{defmu}) the ordering $\mathcal{J}(P_i)=P_{i+2}$ and the conjugate
contours  then
the following relations for $\mu_k$ are valid for all  $(x_1,x_2,x_3)$ \begin{equation}
\mu_1+\overline{\mu}_3 =- \frac{\imath \pi}{2}, \quad \mu_2+\overline{\mu}_4 \equiv - \frac{ \imath \pi}{2} .
 \label{mu13mu24}
\end{equation}
\end{proposition}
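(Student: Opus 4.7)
The plan is to reduce $\mu_1 + \overline{\mu}_3$ to the evaluation at $P_1$ of a single function
$\varphi(P) := \beta_1(P) + \overline{\beta_1(\mathcal{J}P)} + \eta(P)/\zeta(P)$
on $\mathcal{C}$, show $\varphi$ is constant on $\mathcal{C}$, and then compute that constant by taking $P \to \infty_1$.

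Writing $\mu_k = \beta_1(P_k) + \imath\pi N + f_k$ with $f_k := -\imath[(x_1 - \imath x_2)\zeta_k - \imath x_3 - x_4]$, the $\imath\pi N$ contributions cancel between $\mu_1$ and $\overline{\mu}_3$. Using $\overline{\zeta_3} = -1/\zeta_1$ (which follows from $\mathcal{J}(P_1) = P_3$) together with the identities $x_2 \mp \imath x_1 = \mp\imath(x_1 \pm \imath x_2)$, one computes
\begin{equation*}
f_1 + \overline{f}_3 = -\imath(x_1 - \imath x_2)\zeta_1 - \imath(x_1 + \imath x_2)/\zeta_1 - 2x_3,
\end{equation*}
which by the Atiyah-Ward equation at $P_1$ collapses to $\eta_1/\zeta_1$. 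Thus $\mu_1 + \overline{\mu}_3 = \varphi(P_1)$.

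The key step is $d\varphi \equiv 0$. Since $\mathcal{J}$ is antiholomorphic and $d\beta_1 = \gamma_\infty$, we have $d[\overline{\beta_1(\mathcal{J}P)}] = \overline{\mathcal{J}^*\gamma_\infty}$. Applying $\mathcal{J}^*\zeta = -1/\overline{\zeta}$ and $\mathcal{J}^*\eta = -\overline{\eta}/\overline{\zeta}^2$ in \eqref{gammainftdef} gives
$\overline{\mathcal{J}^*\gamma_\infty} = -K(K - (2E-K)\zeta^2)/(4\eta\zeta^2)\,d\zeta$;
separately, differentiating the curve equation \eqref{curve} yields $d(\eta/\zeta) = K^2(1-\zeta^4)/(4\eta\zeta^2)\,d\zeta$, and combining with \eqref{gammainftdef} one finds that $\gamma_\infty + d(\eta/\zeta)$ is precisely the negative of $\overline{\mathcal{J}^*\gamma_\infty}$. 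Hence $d\varphi = 0$ and $\varphi$ is a constant on $\mathcal{C}$.

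To pin down the constant I take $P \to \infty_1$. Lemma \ref{intgammainf} gives $\beta_1(P) + \eta/\zeta \to \tilde\nu_1 + \imath\pi/4 = 0$, and since $\mathcal{J}(\infty_1) = 0_1$ with $\alpha(0_1) = -\tau/2$ by \eqref{abelimages}, I still need $\beta_1(0_1)$. The standard half-period transformations $\theta_1(z+\tau/2) = \imath e^{-\imath\pi(\tau/4+z)}\theta_4(z)$ and $\theta_3(z+\tau/2) = e^{-\imath\pi(\tau/4+z)}\theta_2(z)$, together with $\theta_2'(0) = \theta_4'(0) = 0$, give $\theta_1'/\theta_1(\tau/2) = \theta_3'/\theta_3(\tau/2) = -\imath\pi$, and the quasi-periodicity $\theta_j'/\theta_j(z+\tau) = \theta_j'/\theta_j(z) - 2\imath\pi$ then delivers $\beta_1(0_1) = \imath\pi/2$. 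Therefore $\varphi = \overline{\imath\pi/2} = -\imath\pi/2$, proving the first identity; the same argument with $P_2$ replacing $P_1$ gives the second. The main obstacle is consistency of contours: $\beta_1$ and $N$ are individually determined only modulo a $\mathfrak{b}$-cycle shift that changes $\beta_1$ by $-\imath\pi$, so without the evaluation at $\infty_1$ as an anchor the identity is only well-defined modulo $\imath\pi$ -- this is presumably why the second relation is stated with $\equiv$ rather than $=$.
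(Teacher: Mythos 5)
Your proposal is correct. The first step — reducing $\mu_1+\overline{\mu}_3$ to $\beta_1(P_1)+\overline{\beta_1(P_3)}+\eta_1/\zeta_1$ via the Atiyah--Ward equation and $\overline{\zeta_3}=-1/\zeta_1$ — is exactly the paper's. Where you diverge is in handling this quantity: the paper works pointwise at $P_1$, using the conjugation rule (\ref{conjugation}) for Abel images together with the theta representation (\ref{betadef2}) to rewrite $\overline{\beta_1(P_3)}$ as $-\tfrac14\bigl[\theta_2'/\theta_2+\theta_4'/\theta_4\bigr](\alpha_1)-\tfrac{\imath\pi}{2}$, so that the whole sum becomes $-\tfrac14\,\tfrac{d}{d\alpha_1}\ln\zeta(\alpha_1)+\eta_1/\zeta_1-\tfrac{\imath\pi}{2}$ and the first two terms cancel by (\ref{curveident3}). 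You instead show that $\varphi(P)=\beta_1(P)+\overline{\beta_1(\mathcal{J}P)}+\eta/\zeta$ has identically vanishing differential — a direct check that $\gamma_\infty+d(\eta/\zeta)=-\overline{\mathcal{J}^*\gamma_\infty}$ from (\ref{gammainftdef}) and the curve equation — and then anchor the constant at $\infty_1$ using $\tilde\nu_1=-\imath\pi/4$ and $\beta_1(0_1)=\imath\pi/2$ (your identification $\mathcal{J}(\infty_1)=0_1$ and the half-period computation both check out). Your route buys independence from the theta-conjugation formulae and makes transparent why the relation holds for all $\boldsymbol{x}$ at once (the constant is computed once, at $\infty_1$); the cost is the extra evaluation of $\beta_1$ at $0_1$ and the global constancy argument. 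Your closing remark on the contour ambiguity matches the paper's own caveat that with arbitrary contours the relations hold only $\bmod\ \imath\pi$.
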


\begin{proof}  Upon noting that $\mathcal{J}(P_1)=P_3$ we have
\begin{align*}
\mu_1+\overline{\mu}_3&=
\left[\beta_1(P_1) -x_3- (x_2+\imath x_1)\zeta_1\right]  + \overline{\left[\beta_1(P_3) 
-x_3- (x_2+\imath x_1)\zeta_3\right]}\\
&= \beta_1(P_1) +\overline{\beta_1(P_3)}
- (x_2+\imath x_1)\zeta_1-2 x_3+ (x_2-\imath x_1)\frac{1}{\zeta_1} \\
&=\beta_1(P_1) +\overline{\beta_1(P_3)} + \frac{\eta_1}{\zeta_1}.
\end{align*}
Taking into  account (\ref{gammatheta}) and (\ref{conjugation}) we find 
\begin{align*}
\mu_1+\overline{\mu}_3
&=\frac14 \left[  \frac{\theta_1'(P_1)}{\theta_1(P)}+ \frac{\theta_3'(P_1)}{\theta_3(P)} 
-   \frac{\theta_2'(P_1)}{\theta_2(P)}- \frac{\theta_4'(P_1)}{\theta_4(P)} \right]-\frac{\imath\pi}{2}
 + \frac{\eta_1}{\zeta_1}.
\end{align*}
Upon using the representation, 
\[  \zeta(P) =-\imath \frac{\theta_2(P)\theta_4(P) }{ \theta_1(P) \theta_3(P)} \]
this may be transformed into the form 
\begin{align*}
\mu_1+\overline{\mu}_3
&=-\frac14 \frac{\mathrm{d}}{ \mathrm{d} \alpha_1} \mathrm{\ln} \;\zeta(\alpha_1) 
+ \frac{\eta_1}{\zeta_1 } 
- \frac{\imath \pi}{2}.
\end{align*}
Finally, the first two terms cancel because of the relation (\ref{curveident3}).
Thus the first of the stated relation follows; the second is proven in analogous way. 
\end{proof}
Here we have chosen contours in a specified way; if we had chosen arbitrary contours then the
relations are only defined mod $\imath\pi$.
We also prove in  Appendix \ref{proofmux1x2starplane} that
\begin{proposition}\label{mux1x2starplane}
{\em Suppose that $\zeta_1+\zeta_*=0$,  $\zeta_2+\zeta_{*'}=0$. Then 
 for the $(x_1,x_2)$ plane, $x_3=0$ , }
\begin{equation}
\mu_1+\mu_* \equiv 0 \  \pmod {\imath \pi }, \quad \mu_2+\mu_{*'}
 \equiv 0 \  \pmod {\imath \pi } .   \label{mu1starmu2star}
\end{equation}
\end{proposition}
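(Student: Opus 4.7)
The plan is to exploit the holomorphic involution $\sigma:(\zeta,\eta)\mapsto(-\zeta,\eta)$ of the spectral curve (\ref{curve}), which is a genuine automorphism because only even powers of $\zeta$ appear in the curve equation; this is distinct from the antiholomorphic $\mathfrak{J}$ used in Proposition \ref{propmu13mu24} and it is the symmetry adapted to the plane $x_3=0$.

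First I would trigger the cancellation. In the $(x_1,x_2)$ plane (so $x_3=0$, and $x_4=0$ in the monopole setting) the definition (\ref{defmu}) reads $\mu_k=\beta_1(P_k)+\imath\pi N-\imath(x_1-\imath x_2)\zeta_k$. The hypothesis $\zeta_1+\zeta_*=0$ kills the linear-in-$\zeta$ term in the sum, leaving $\mu_1+\mu_*=\beta_1(P_1)+\beta_1(P_*)+2\imath\pi N$. Setting $x_3=0$ in the mini-twistor equation (\ref{mtc}) gives $\eta=-\imath x_-\zeta^2-\imath x_+$, depending on $\zeta$ only through $\zeta^2$; hence $\eta_*=\eta_1$ and $P_*=(-\zeta_1,\eta_1)=\sigma(P_1)$. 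The proposition thus reduces to the assertion that $\beta_1(P)+\beta_1(\sigma(P))\equiv 0\pmod{\imath\pi}$ for all $P\in\mathcal{C}$.

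Next I would compute the action of $\sigma$ on the Abel map and on $\beta_1$. Since $\boldsymbol{v}=d\zeta/(4\eta)$ satisfies $\sigma^{\ast}\boldsymbol{v}=-\boldsymbol{v}$, and $\sigma$ fixes each of $\infty_1,\infty_2$ (the sheet at $\zeta=\infty$ is labelled by $\eta/\zeta^2$, which is $\sigma$-invariant), one obtains $\alpha(\sigma(P))\equiv-\alpha(P)\pmod{\Lambda}$. The theta-function representation (\ref{betadef2}) then does the work: $\theta_1$ is odd and $\theta_3$ is even, so both logarithmic derivatives $\theta_1'/\theta_1$ and $\theta_3'/\theta_3$ are odd in $z$, and consequently $\beta_1(\sigma(P))=-\beta_1(P)$ when the two sides are evaluated at matched representatives of $\pm\alpha(P)$ in $\mathbb{C}$.

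The main obstacle is bookkeeping the multi-valuedness carefully. The theta quotient defining $\beta_1$ requires a specific lift of $\alpha(P)$ to $\mathbb{C}$; shifting this lift by an $\mathfrak{a}$-cycle leaves each log derivative unchanged, while a shift by a $\mathfrak{b}$-cycle changes each of $\theta_1'/\theta_1$ and $\theta_3'/\theta_3$ by $-2\imath\pi$, giving $\beta_1$ a total $\mathfrak{b}$-cycle shift of $-\imath\pi$. This matches exactly the $\mathfrak{b}$-period $\oint_{\mathfrak{b}}\gamma_\infty=-\imath\pi$ of Lemma \ref{intgammainf}, so $\beta_1$ is a well-defined function of $P$ modulo $\imath\pi$; the odd-parity identity thereby upgrades to $\beta_1(P)+\beta_1(\sigma(P))\equiv 0\pmod{\imath\pi}$. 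Assembling, $\mu_1+\mu_*\equiv 2\imath\pi N\equiv 0\pmod{\imath\pi}$, and the same argument with $P_1$ replaced by $P_2$ handles $\mu_2+\mu_{*'}$.
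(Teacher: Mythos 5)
Your proof is correct, and while it rests on the same underlying symmetry as the paper's argument, the execution is genuinely different. The paper works directly with the incomplete elliptic integral representation of $\beta_1$ based at the branch point $P_0=(k'+\imath k,0_1)$: it uses the evenness of the integrand of $\gamma_\infty$ under $z\to -z$ (which is your $\sigma^{\ast}\gamma_\infty=-\gamma_\infty$ in integral form) to collapse $\int_{a}^{\zeta_1}+\int_{a}^{-\zeta_1}$ into a single integral between the branch points $\pm a$, and then evaluates that explicitly as $\tfrac12\oint_{\mathfrak{a}}\gamma_\infty+\tfrac12\oint_{\mathfrak{b}}\gamma_\infty=\tfrac{\imath\pi}{2}$ using the $\mathfrak{a}$-normalization and the known $\mathfrak{b}$-period; combined with the two $\tfrac{\imath\pi}{4}$ offsets from (\ref{betadef2}) this gives $\imath\pi\equiv 0$. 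You instead lift the symmetry to the curve as the holomorphic involution $\sigma$, check that it fixes $\infty_{1,2}$ and negates the Abel map, and conclude from the odd parity of $\theta_1'/\theta_1$ and $\theta_3'/\theta_3$ together with the well-definedness of $\beta_1$ modulo $\imath\pi$ (your bookkeeping of the $\mathfrak{b}$-cycle shift of $-\imath\pi$ is exactly right and is the step that substitutes for the paper's explicit half-period evaluation). Your packaging is the more conceptual and would transfer to any locus preserved by an automorphism of the spectral curve; the paper's computation has the minor advantage of exhibiting the half-period value $\tfrac{\imath\pi}{2}$ explicitly rather than only modulo $\imath\pi$. One small point worth making explicit in your write-up: the claim $P_*=\sigma(P_1)$ uses that $\sigma$ maps solutions of the Atiyah-Ward equation at $x_3=0$ to solutions, which you do justify via $\eta=-\imath x_-\zeta^2-\imath x_+$, and the $x_4=0$ convention you invoke is indeed the one in force throughout the paper.
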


\subsection{Combined results} 
We next present a number of nontrivial identities that arise from the Weierstrass trisecant identities
or combining our expressions for the curve together with the Weierstrass trisecant identities.

\begin{lemma}\label{weierstrassthreept}
For arbitrary $\alpha_i  ,  \alpha_j  ,  \alpha_k$ the following relations are valid
\begin{align}
\begin{split}
&\left[\theta_1(\alpha_i+\alpha_j+\alpha_k)\theta_1(\alpha_i)\theta_1(\alpha_j)\theta_1(\alpha_k)+\theta_3(\alpha_i+\alpha_j+\alpha_k) \theta_3(\alpha_i)\theta_3(\alpha_j)\theta_3(\alpha_k) \right]\\&\hskip 3cm\times \theta_1(\alpha_i-\alpha_j)\theta_1(\alpha_j-\alpha_k)\theta_1(\alpha_i-\alpha_k)\\
&=\sum_{\mathrm{cyclic\  permutations} \; i,j,k}\theta_1(\alpha_i)\theta_3(\alpha_i)\theta_1(\alpha_j)\theta_3(\alpha_j)                       \theta_1(\alpha_i-\alpha_j)\theta_3(\alpha_i+\alpha_j)\theta_2(2\alpha_k),
\end{split}\label{relation1}
\\
\nonumber
\\
\begin{split}
&\theta_1(\alpha_i+\alpha_j+\alpha_k)\theta_1(\alpha_i)\theta_1(\alpha_j)\theta_1(\alpha_k)-\theta_3(\alpha_i+\alpha_j+\alpha_k) \theta_3(\alpha_i)\theta_3(\alpha_j)\theta_3(\alpha_k) \\
&\qquad\qquad= -\theta_3(0)  \theta_1(\alpha_i+\alpha_j)
\theta(\alpha_j+\alpha_k)\theta_1(\alpha_i+\alpha_k).
\end{split}\label{relation2a}
\end{align}
\end{lemma}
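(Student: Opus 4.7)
Both displayed identities are classical one-variable theta identities. My strategy is to regard each side as an entire function of a single $\alpha$-variable (with the other two held fixed), use the Jacobi quasi-periodicity to place both sides in a common finite-dimensional space of theta functions, and close the argument with the Weierstrass/Riemann relations collected in Appendix \ref{thetafunctidentapp}. I would treat (\ref{relation2a}) first and then use it as an input for (\ref{relation1}).

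For (\ref{relation2a}) I would fix $\alpha_j,\alpha_k$ and vary $\alpha_i$. The three two-factor blocks that appear, namely $\theta_1(\alpha_i+\alpha_j+\alpha_k)\theta_1(\alpha_i)$, $\theta_3(\alpha_i+\alpha_j+\alpha_k)\theta_3(\alpha_i)$, and $\theta_1(\alpha_i+\alpha_j)\theta_1(\alpha_i+\alpha_k)$, are each invariant under $\alpha_i\mapsto\alpha_i+1$ and pick up the common factor $\exp(-2i\pi\tau-2i\pi(2\alpha_i+\alpha_j+\alpha_k))$ under $\alpha_i\mapsto\alpha_i+\tau$. Hence LHS and RHS share quasi-periodicity and order, so live in the same low-dimensional theta space. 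The identity then follows either by applying the standard four-term Riemann identities for $\theta_1(x+y)\theta_1(x-y)$ and $\theta_3(x+y)\theta_3(x-y)$ to expand both products on the left and collect, or by a specialisation argument: comparing values at two convenient points such as $\alpha_i=0$ and $\alpha_i=-\alpha_j$ (where the respective factors simplify drastically) is sufficient to pin the identity down.

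For (\ref{relation1}) the same template applies, but the right-hand side is a cyclic sum and so the quasi-periodicity argument only closes once the full sum is assembled. My plan is to use (\ref{relation2a}) first to eliminate the $\theta_1(\alpha_i+\alpha_j+\alpha_k)\theta_1(\alpha_i)\theta_1(\alpha_j)\theta_1(\alpha_k)$ contribution on the LHS in favour of $\theta_3(\alpha_i+\alpha_j+\alpha_k)\theta_3(\alpha_i)\theta_3(\alpha_j)\theta_3(\alpha_k)$ together with a product of three $\theta_1$'s of pairwise sums. The remaining problem is a cleaner theta identity in which the duplication formula for $\theta_2(2\alpha_k)$ together with the three-term Weierstrass addition formulas should recognise the cyclic summands as the remaining combination from the left multiplied by the symmetric prefactor $\theta_1(\alpha_i-\alpha_j)\theta_1(\alpha_j-\alpha_k)\theta_1(\alpha_i-\alpha_k)$.

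The main obstacle will be the cyclic-sum bookkeeping: the three summands on the right of (\ref{relation1}) individually do not have the full quasi-periodicity of the LHS, so the Riemann relations must be combined delicately so that the symmetric prefactor emerges only after summation. I expect the cleanest closure to be a check at the specialisation $\alpha_i=\alpha_j$, where the LHS vanishes through the $\theta_1(\alpha_i-\alpha_j)$ factor and forces a cancellation of two of the three cyclic summands on the right, combined with a second check at $\alpha_k=0$ to fix any residual constant. This reduces the cyclic identity to the same finite-dimensional uniqueness statement that drove the proof of (\ref{relation2a}).
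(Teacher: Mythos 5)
Your treatment of (\ref{relation2a}) is sound and is a genuinely different route from the paper's: the paper obtains it in one line as the Weierstrass relation (W6) evaluated at $\boldsymbol{\alpha}=(\alpha_k,\alpha_j,\alpha_l,-\alpha_j-\alpha_k-\alpha_l)$, whereas you re-derive it by matching quasi-periodicity multipliers and order (both sides are second-order theta functions of $\alpha_i$) and then specialising. Your two test points do suffice here, provided you add the remark that a nonzero element of that two-dimensional space has exactly two zeros whose sum is fixed mod the lattice, and that $0+(-\alpha_j)$ avoids that value for generic $\alpha_k$. One caution: if you actually carry out the $\alpha_i=0$ check you will find the identity as printed fails; the pairwise-sum factors on the right must sit inside $\theta_3$, not $\theta_1$ (this is what (W6) produces, and it is the form actually substituted later in the proof of (\ref{sub1a})). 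Your specialisation method is precisely the tool that detects this, but you have not executed it.

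For (\ref{relation1}) there is a genuine gap. Each side, viewed as a function of $\alpha_i$ alone, is a theta function of order four (the left side is a product of a second-order bracket with $\theta_1(\alpha_i-\alpha_j)\theta_1(\alpha_i-\alpha_k)$; on the right, $\theta_2(2\alpha_k)$ contributes order four in $\alpha_k$, and the other factors contribute order four in $\alpha_i$). The relevant space is therefore four-dimensional, and a nonzero element has four zeros in a fundamental domain; your proposed closure — one vanishing check at $\alpha_i=\alpha_j$ and one evaluation at $\alpha_k=0$ — cannot pin the identity down. You would need five independent specialisations (or four whose divisor is excluded), and the extra points are not easy to produce explicitly. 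The paper instead proceeds structurally: fixing $(i,j,k)=(1,2,3)$, it pairs the first term of the left side with the first cyclic summand on the right and likewise for the second pair, factorises each pairing by one application of (W6) and one of (W2), and then shows the two factored expressions plus the remaining cyclic summand cancel by a third application of (W2). Your alternative first step — eliminating the $\prod\theta_1$ block via (\ref{relation2a}) — does not visibly simplify the cyclic sum and is not how the paper closes the argument; the decisive pairing-and-trisecant mechanism is absent from your plan, so the crucial step remains unproved.
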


\begin{lemma}\label{Relations}
Let  $i,j,k,l\in \{1,2,3,4\}$ be distinct and $P_{i,j,k,l}$ be points on the curve corresponding to
solutions of the Atiyah-Ward equation. Let
 $\alpha_i$ be  the Abel image of $P_i$ subject to
$\alpha_1+\alpha_2+\alpha_3+\alpha_4=N\tau$.
Set $\beta_i:=\beta_1(P_i)$, $x_{\pm}=x_1\pm \imath x_2$ and 
$ \mu_i  =  \beta_i+\imath\pi N-(x_2+\imath x_1)\zeta_i-x_3 $.
The following relations are valid
\begin{align}
\begin{split}
&\pi\theta_3(0)\frac{\theta_1(\alpha_i+\alpha_j+\alpha_k)\theta_1(\alpha_i)\theta_1(\alpha_j)\theta_1(\alpha_k)}
{ \theta_3(\alpha_i+\alpha_j)
\theta_3(\alpha_i+\alpha_k)\theta_3(\alpha_j+\alpha_k)}=  2 x_--K,
\end{split}\label{sub1a}
\\
\nonumber
\\
\begin{split}
&\pi\theta_3(0)\frac{\theta_3(\alpha_i+\alpha_j+\alpha_k)\theta_3(\alpha_i)\theta_3(\alpha_j)\theta_3(\alpha_k)}
{ \theta_3(\alpha_i+\alpha_j)
\theta_3(\alpha_i+\alpha_k)\theta_3(\alpha_j+\alpha_k)}= 2 x_-+K,
\end{split}\label{sub1b}
\\
\nonumber
\\
\begin{split}
&\theta_1(\alpha_i+\alpha_j+\alpha_k)\theta_1(\alpha_i)\theta_1(\alpha_j)\theta_1(\alpha_k)+\theta_3(\alpha_i+\alpha_j+\alpha_k) \theta_3(\alpha_i)\theta_3(\alpha_j)\theta_3(\alpha_k) \\
&\qquad\qquad=4\frac{x_1-\imath x_2}{\pi \theta_3(0)}  \theta_3(\alpha_i+\alpha_j)
\theta_3(\alpha_j+\alpha_k)\theta_3(\alpha_i+\alpha_k),
\end{split}\label{relation2}
\\
\nonumber
\\
\begin{split}
&\frac{\theta_1'(\alpha_i+\alpha_j+\alpha_k)\theta_1(\alpha_i)\theta_1(\alpha_j)\theta_1(\alpha_k)}
{ \theta_3(\alpha_i+\alpha_j)
\theta_3(\alpha_i+\alpha_k)\theta_3(\alpha_j+\alpha_k)}=-( 2\beta_1(\alpha_l)+2\imath\pi N+\imath K \zeta_l ) 
\frac{2x_--K}{\pi \theta_3(0)} ,
\end{split}\label{sub2a}
\\
\nonumber
\\
\begin{split}
&\frac{\theta_3'(\alpha_i+\alpha_j+\alpha_k)\theta_3(\alpha_i)\theta_3(\alpha_j)\theta_3(\alpha_k)}
{ \theta_3(\alpha_i+\alpha_j)
\theta_3(\alpha_i+\alpha_k)\theta_3(\alpha_j+\alpha_k)}=-( 2\beta_1(\alpha_l)
+2\imath\pi N-\imath K \zeta_l ) \frac{2x_- +K}{\pi \theta_3(0)} ,\label{sub2b}
\end{split}
\\
\nonumber
\\
\begin{split}
&\theta_1'(\alpha_j+\alpha_k+\alpha_l)\theta_1(\alpha_j)\theta_1(\alpha_k)\theta_1(\alpha_l)-\theta_3'(\alpha_j+\alpha_k+\alpha_l) \theta_3(\alpha_j)\theta_3(\alpha_k)\theta_3(\alpha_l)\\
&\quad=2(\mu_i+x_3) \theta_3(0) \theta_3(\alpha_j+\alpha_k)\theta_3(\alpha_k+\alpha_l)\theta_3(\alpha_j+\alpha_l),
\end{split}\label{relation3a}
\\
\nonumber
\\
\begin{split}
&\frac{\theta_1''(\alpha_i+\alpha_j+\alpha_k)\theta_1(\alpha_i)\theta_1(\alpha_j)\theta_1(\alpha_k)}
{ \theta_3(\alpha_i+\alpha_j)
\theta_3(\alpha_i+\alpha_k)\theta_3(\alpha_j+\alpha_k)}=\left[ -4K(\pi N -\imath \beta_1(\alpha_l))+4\imath K \eta_l \right.\\
&\hskip2cm \left.+K^2\zeta_l^2-4\pi^2N^2+8\imath\pi\beta_1(\alpha_l) N-2K(2E-K)+4\beta_1(\alpha_l)^2 \right] \frac{2x_- -K}{\pi \theta_3(0)} ,
\end{split}\label{sub3a}
\\
\nonumber
\\
\begin{split}
&\frac{\theta_3''(\alpha_i+\alpha_j+\alpha_k)\theta_3(\alpha_i)\theta_3(\alpha_j)\theta_3(\alpha_k)}
{ \theta_3(\alpha_i+\alpha_j)
\theta_3(\alpha_i+\alpha_k)\theta_3(\alpha_j+\alpha_k)}=\left[4K(\pi N -\imath \beta_1(\alpha_l))-4\imath K \eta_l\right.\\
&\hskip2cm \left.+K^2\zeta_l^2-4\pi^2N^2+8\imath\pi\beta_1(\alpha_l) N-2K(2E-K)+4\beta_1(\alpha_l)^2\right]\frac{2x_-+K}{\pi \theta_3(0)},
\end{split}\label{sub3b}
\\
\nonumber
\\
\begin{split}
&\frac{
\theta_3''(\alpha_i+\alpha_j+\alpha_k)
\theta_3(\alpha_i)\theta_3(\alpha_j)\theta_3(\alpha_k)
-
\theta_1''(\alpha_i+\alpha_j+\alpha_k)
\theta_1(\alpha_i)\theta_1(\alpha_j)\theta_1(\alpha_k) 
}
{\theta_3(0)\theta_3(\alpha_i+\alpha_j) \theta_3(\alpha_i+\alpha_k) \theta_3(\alpha_k+\alpha_k)   }\\
&=K^2\zeta_l^2+8(\pi N-\imath\beta_1(\alpha_l))x_--4(\pi N -\imath\beta_1(\alpha_l))^2-8\imath \eta_l x_--2K(2E-K),
\end{split}
\label{relation4}
\\
\nonumber
\\
\begin{split}
\imath\pi \zeta_l\theta_3(0)^2& [\theta_1(\alpha_i+\alpha_j+\alpha_k)\theta_1(\alpha_i)\theta_1(\alpha_j)                  \theta_1(\alpha_k)+\theta_3(\alpha_i+\alpha_j+\alpha_k) \theta_3(\alpha_i)\theta_3(\alpha_j)\theta_3(\alpha_k)] \\
 =&-2 [\theta_1'(\alpha_i+\alpha_j+\alpha_k)\theta_1(\alpha_i)\theta_1(\alpha_j)\theta_1(\alpha_k)-\theta_3'(\alpha_i+\alpha_j+\alpha_k) \theta_3(\alpha_i)\theta_3(\alpha_j)\theta_3(\alpha_k)]\\
&+4 \theta_3(0)(\imath\pi  N+ \beta_l)\theta_3(\alpha_i+\alpha_j)\theta_3(\alpha_j+\alpha_k)\theta_3(\alpha_i+\alpha_k).
\end{split}\label{relation3}
\end{align}
\end{lemma}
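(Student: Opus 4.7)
The identities split into a hierarchy by the ``derivative degree'' of the theta functions appearing: the degree-zero identities (\ref{sub1a}), (\ref{sub1b}), (\ref{relation2}); the degree-one identities (\ref{sub2a}), (\ref{sub2b}), (\ref{relation3a}); the degree-two identities (\ref{sub3a}), (\ref{sub3b}), (\ref{relation4}); and the linear combination (\ref{relation3}). The plan is to establish them in this order, each level building on the previous together with Corollary \ref{curveidentbthreepts}.

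For the degree-zero block I would begin with the Weierstrass trisecant identities (\ref{relation1}) and (\ref{relation2a}) of Lemma \ref{weierstrassthreept}. Using the Abel--Jacobi constraint $\alpha_1+\alpha_2+\alpha_3+\alpha_4=N\tau$ from Proposition \ref{abjacprop}, Corollary \ref{abelprop} rewrites $\theta_r(\alpha_i+\alpha_j+\alpha_k)$ as $\theta_r(\alpha_l)$ times explicit exponential/sign factors in $N$. The parameterization formulas (\ref{uniform1}) reduce $\theta_r[P_l]$-quotients to rational functions of $(\zeta_l,\eta_l)$, and Vieta's relations applied to the Atiyah--Ward quartic (\ref{AtiyahWard}) translate the symmetric combinations of the $\zeta_j$'s that emerge into the monopole coordinates $x_\pm$, $x_3$ (cf. (\ref{xm})--(\ref{x3eq})). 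Matching coefficients yields the right-hand sides $2x_-\mp K$ of (\ref{sub1a})--(\ref{sub1b}) and the symmetric sum identity (\ref{relation2}).

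The degree-one identities (\ref{sub2a})--(\ref{sub2b}) then follow by multiplying (\ref{sub1a})--(\ref{sub1b}) by the logarithmic derivative ratios $\theta_r'(\alpha_i+\alpha_j+\alpha_k)/\theta_r(\alpha_i+\alpha_j+\alpha_k)$ given explicitly by (\ref{dfrac1a})--(\ref{dfrac1b}) of Corollary \ref{curveidentbthreepts}. The identity (\ref{relation3a}) comes from a specific combination of (\ref{sub2a}) and (\ref{sub2b}): forming their difference, the $\pm K$ on the two right-hand sides produces an overall factor of $K$, and the linear combination $2\beta_l+2\imath\pi N-2\imath\zeta_l x_-$ collapses to $2(\mu_l+x_3)$ by the definition (\ref{defmu}) of $\mu_l$. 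Cleaning up prefactors uses the classical evaluation $\pi\theta_3(0)^2=2K$ tied to our period $\tau=\imath K'/K$, which converts the factor $4K/(\pi\theta_3(0))$ into $2\theta_3(0)$. The degree-two identities (\ref{sub3a})--(\ref{sub3b}) are derived analogously using the second-derivative ratios (\ref{dfrac2a})--(\ref{dfrac2b}) of Corollary \ref{curveidentbthreepts}, and (\ref{relation4}) is the difference, rescaled by $\theta_3(0)\prod\theta_3(\alpha_r+\alpha_s)$, where the $\pm K$ in the two right-hand sides again combines to leave only the terms proportional to $x_-$ and $K^2\zeta_l^2-2K(2E-K)$. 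Finally, (\ref{relation3}) is a direct rearrangement: using (\ref{relation2}) to substitute for the bracketed sum on its left-hand side and (\ref{relation3a}) for the bracketed difference on the right-hand side (with $\pi\theta_3(0)^2=2K$ once more), both sides reduce to $-4(\mu_l+x_3)\theta_3(0)\prod\theta_3(\alpha_r+\alpha_s)$.

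The main obstacle I anticipate is the careful bookkeeping of phases. Each application of Corollary \ref{abelprop} introduces sign and exponential factors depending on $N$; these must be tracked through products of four theta values and must combine correctly with the contour-dependent phases already built into $\beta_1(P)$ via Lemma \ref{intgammainf} and into $\mu_l$ via (\ref{defmu}). One must also confirm that the same choice of $N$ and of contours that makes Proposition \ref{abjacprop} an exact equality (rather than modulo $\imath\pi$) is consistent across all the derivative identities; otherwise the clean equalities asserted in the lemma would drift by discrete shifts. Once the phase conventions are fixed consistently, however, every identity in the lemma reduces to the two mechanical ingredients: a Weierstrass trisecant identity plus Corollary \ref{curveidentbthreepts}.
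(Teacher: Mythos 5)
Your proposal is correct and follows essentially the same route as the paper: the degree-zero identities come from the Weierstrass trisecants of Lemma \ref{weierstrassthreept} combined with the theta parameterization of the curve and the mini-twistor expressions (\ref{xm})--(\ref{x3eq}) for $x_-$, the first- and second-derivative identities are obtained by multiplying (\ref{sub1a})--(\ref{sub1b}) by the logarithmic-derivative ratios of Corollary \ref{curveidentbthreepts}, and the differences collapse via the definition of $\mu_l$ and $\pi\theta_3(0)^2=2K$ exactly as you describe. The only minor divergence is in the degree-zero block, where the paper substitutes the theta-form (\ref{xm1a}) of $x_-$ directly and reduces to the already-proven (\ref{relation1}), whereas your route through Corollary \ref{abelprop} and (\ref{uniform1}) would need a little extra care with the sign of the square root of the resulting ratio of theta products.
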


We note that from (\ref{sub1a}, \ref{sub1b}) we may also express
\begin{align}
x_-=-\frac{\pi \theta_3^2(0)}{4}
\frac{\theta_1(\alpha_l)\theta_1(\alpha_i)
\theta_1(\alpha_j)\theta_1(\alpha_k)+
\theta_3(\alpha_l) \theta_3(\alpha_i)
\theta_3(\alpha_j)\theta_3(\alpha_k)}
{\theta_1(\alpha_l)\theta_1(\alpha_i)
\theta_1(\alpha_j)\theta_1(\alpha_k)-
\theta_3(\alpha_l) \theta_3(\alpha_i)\theta_3(\alpha_j)\theta_3(\alpha_k)} .\label{xm1}
\end{align}

Finally we consider expressions of the form  
\begin{align}
\frac{\theta_3'(\alpha_i+\alpha_j)}{ \theta_3(\alpha_i+\alpha_j) },\quad i,j, \in \{ 1,2, 3,4\},
\end{align}
involving the addition of two points.
Using that $\alpha_k+\alpha_l=N\tau - \alpha_i-\alpha_j$ we first note that 

\begin{align}\label{twotheta3sum}
\frac{\theta_3'(\alpha_i+\alpha_j)}{ \theta_3(\alpha_i+\alpha_j) }+\frac{\theta_3'(\alpha_k+\alpha_l)}
{ \theta_3(\alpha_k+\alpha_l) }=- 2\imath \pi N.
\end{align}

\begin{proposition}\label{twoptrelations}
Let  $i,j,k,l\in \{1,2,3,4\}$ be distinct and $P_{i,j,k,l}$ be points on the curve corresponding to
solutions of the Atiyah-Ward equation. Let
 $\alpha_i$ be  the Abel image of $P_i$ subject to
$\alpha_1+\alpha_2+\alpha_3+\alpha_4=N\tau$.
Then
\begin{align}
&\frac{\theta_3'(\alpha_i+\alpha_j)}{ \theta_3(\alpha_i+\alpha_j) } 
=2(\mu_i+\mu_j) \;\; \mathrm{mod} (2\imath \pi ) ,\label{2varlogdiff1}\\
&\frac{\theta_1'(\alpha_i+\alpha_j)}{ \theta_1(\alpha_i+\alpha_j) } 
=2(\mu_i+\mu_j)+ \frac{4\imath (x_-\zeta_i\zeta_j-x_+)}{ \zeta_i+\zeta_j} \;\;
 \mathrm{mod} (2\imath \pi ) ,\label{2varlogdiff2}\\
\begin{split}
&\frac{\theta_3''(\alpha_i+\alpha_j)}{ \theta_3(\alpha_i+\alpha_j) } 
= 4(2\imath x_3-x_-(\zeta_k+\zeta_l))^2+4(\imath \pi N-\mu_k-\mu_l)^2\\ &\hskip1.98cm -K^2(\zeta_k+\zeta_l)^2-4K(E-K{k'}^2)  \;\; \mathrm{mod} (2\imath \pi ) ,\end{split} \label{2varlogdiff3}\\
\begin{split}&\frac{\theta_1''(\alpha_i+\alpha_j)}{ \theta_1(\alpha_i+\alpha_j) } 
=\frac{\theta_3''(\alpha_i+\alpha_j)}{ \theta_3(\alpha_i+\alpha_j) }
+4K^2\zeta_i^2\zeta_j^2\\&\hskip3cm +
16\imath (\beta_i+\beta_j) \frac{x_-\zeta_i\zeta_j-x_+}
{\zeta_i+\zeta_j}  \;\; \mathrm{mod} (2\imath \pi ) .\end{split}
\label{2varlogdiff4}
\end{align}

\end{proposition}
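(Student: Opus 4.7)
The strategy is to derive the four identities as logical consequences of three ingredients already assembled: the single- and three-point log-derivative formulas of Corollaries \ref{curveidentb} and \ref{curveidentbthreepts}; the symmetric four-point identities (\ref{sub1a})--(\ref{sub1b}) of Lemma \ref{Relations}; and the $\mu$-sum identity (\ref{musuma}) together with the quasi-periodicity constraint (\ref{twotheta3sum}).

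The plan for (\ref{2varlogdiff1}) is to start from identity (\ref{sub1b}) of Lemma \ref{Relations} and take its logarithmic derivative with respect to a spatial coordinate that annihilates the right-hand side (for instance $x_3$, since $x_-=x_1-\imath x_2$ is $x_3$-independent). Using $d\alpha_m/dx_n$ computed from implicit differentiation of the Atiyah-Ward equation and applying the known single-point formula (\ref{quot1b}) and three-point formula (\ref{dfrac1b}) to eliminate all the one-point and three-point logarithmic derivatives, one obtains a linear relation among the three two-point quantities $\theta_3'(\alpha_a+\alpha_b)/\theta_3(\alpha_a+\alpha_b)$ for $\{a,b\}\subset\{i,j,k\}$. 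Combined with the three copies of the quasi-periodicity relation (\ref{twotheta3sum}) (applied to the pairings $\{(i,j),(k,l)\}$, $\{(i,k),(j,l)\}$, $\{(j,k),(i,l)\}$) and (\ref{musuma}), this system is linearly determined and yields (\ref{2varlogdiff1}) modulo $2\imath\pi$; the modular ambiguity reflects the half-$\mathfrak{b}$-period freedom in $\beta_1$ pointed out after Lemma \ref{intgammainf}.

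Identity (\ref{2varlogdiff2}) will follow by the same template but starting from (\ref{sub1a}) rather than (\ref{sub1b}). Equivalently one may derive it by combining (\ref{2varlogdiff1}) with the genus-1 ratio $\theta_1^2/\theta_3^2$ and the curve parameterization (\ref{eqparam}); the extra inhomogeneous term $4\imath(x_-\zeta_i\zeta_j-x_+)/(\zeta_i+\zeta_j)$ is produced by rewriting the difference of $\theta_1'/\theta_1$ and $\theta_3'/\theta_3$ using Corollary \ref{curveidentb} and recognising the mini-twistor expression (\ref{x3eq}) for $x_3$ in terms of $\zeta_i,\zeta_j$.

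For the second-derivative formulas (\ref{2varlogdiff3})--(\ref{2varlogdiff4}) I would differentiate the first-order identities (\ref{2varlogdiff1})--(\ref{2varlogdiff2}) once more with respect to a spatial coordinate (using $d\zeta/d\alpha=4\eta$ from (\ref{curveident3}) for the chain rule), then use
$\theta_3''(u)/\theta_3(u)=(\theta_3'(u)/\theta_3(u))^2+\partial_u(\theta_3'(u)/\theta_3(u))$ to convert the derivative into the desired ratio. The explicit closed forms $K^2(\zeta_k+\zeta_l)^2$ and $K^2\zeta_i^2\zeta_j^2$, and the constant $-4K(E-K{k'}^2)$, will fall out after substituting the Atiyah-Ward relation (\ref{mtc}), the compatibility relation (\ref{compatibility}), and the expressions (\ref{xm})--(\ref{x3}) for $(x_+,x_-,x_3)$ in terms of the $\zeta$'s and $\eta$'s. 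The main obstacle I foresee is purely algebraic: the derivative step produces contributions involving \emph{all four} points, and the nontrivial work is to repackage terms in $\eta_i,\eta_j$ into the symmetric combinations of $(\mu_k+\mu_l)$ and $(\zeta_k+\zeta_l)$ displayed in the proposition. This rewriting relies essentially on the four-point relation (\ref{compatibility}) to trade $\eta_i,\eta_j$ for $\eta_k,\eta_l$ and on (\ref{musuma}) to swap between $\mu_i+\mu_j$ and $\mu_k+\mu_l$.
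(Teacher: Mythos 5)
Your plan takes a genuinely different route from the paper's, and it has two concrete gaps. The paper does not go through the three-point identities at all for (\ref{2varlogdiff1}): it writes $\theta_3(\alpha_i+\alpha_j)\theta_3(\alpha_i-\alpha_j)\theta_4^2(0)=\theta_4^2(\alpha_i)\theta_3^2(\alpha_j)-\theta_1^2(\alpha_i)\theta_2^2(\alpha_j)$, applies $\partial_{\alpha_i}+\partial_{\alpha_j}$ (which kills the $\theta_3(\alpha_i-\alpha_j)$ factor), and reduces everything to \emph{single-point} logarithmic derivatives --- including $\theta_2'/\theta_2$ and $\theta_4'/\theta_4$, which are not in Corollary \ref{curveidentb} and have to be derived separately from (\ref{uniform1}). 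The residue is one algebraic identity in the $\zeta$'s, verified by eliminating $k^2$ between two copies of the Atiyah--Ward equation and invoking (\ref{x3eq}). The second-derivative formulas are obtained by differentiating this addition-theorem identity once more in $\alpha_i$ (an off-shell identity, so this is legitimate), using the analogous $\theta_{2,4}''/\theta_{2,4}$ formulas.

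The first gap in your plan is the assertion that your linear system for the six quantities $\theta_3'(\alpha_a+\alpha_b)/\theta_3(\alpha_a+\alpha_b)$ is ``linearly determined.'' Its coefficient matrix is built from the derivatives $\partial_3\alpha_m=\partial_3\zeta_m/(4\eta_m)$, and you neither exhibit its rank nor show consistency; checking either requires exactly the kind of Atiyah--Ward algebra (cf.\ (\ref{k2}), (\ref{x3eq})) that the paper must do anyway, so the reorganisation does not buy you a shortcut and, as written, does not pin down the unknowns. The second gap is in your ``equivalent'' derivation of (\ref{2varlogdiff2}): the relations (\ref{uniform1}) and (\ref{curveident1}) hold only at arguments that are Abel images of points \emph{on the curve}, whereas $\alpha_i+\alpha_j$ is generically not such an image, so you cannot use them to produce the inhomogeneous term $4\imath(x_-\zeta_i\zeta_j-x_+)/(\zeta_i+\zeta_j)$; the difference of (\ref{2varlogdiff2}) and (\ref{2varlogdiff1}) is manifestly not of the single-point form $2\imath K\zeta$. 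Your route to (\ref{2varlogdiff3})--(\ref{2varlogdiff4}) --- differentiating the on-shell first-order identities along a spatial direction and dividing by $\partial_n\alpha_i+\partial_n\alpha_j$ --- is sound in principle (and differs from the paper's off-shell $\alpha$-differentiation), but it inherits the unresolved first-order identities and requires the nonvanishing of that denominator.
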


\begin{corollary} Let $i,j,k,l\in \{1,2,3,4\}$ be distinct  and $P_{i,j,k,l}$ be points on the curve corresponding to solutions of the Atiyah-Ward equation. Let $\alpha_i$ be  the Abel image of $P_i$ subject to
$\alpha_1+\alpha_2+\alpha_3+\alpha_4=N\tau$.
Then with the paths specified in Lemma (\ref{intgammainf})
 the following relation is valid
\begin{equation}
\frac{\theta_3'(\alpha_i+\alpha_j)}{ \theta_3(\alpha_i+\alpha_j) } 
=-2(\mu_k+\mu_j) +2\imath \pi N. \label{musum}
\end{equation}
\end{corollary}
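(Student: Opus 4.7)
\medskip
\noindent\textbf{Proof proposal.} The plan is to promote the congruence~(\ref{2varlogdiff1}) to an exact equality by applying it to the complementary index pair $(k,l)$ and then using~(\ref{twotheta3sum}), with the specific path choices of Lemma~\ref{intgammainf} fixing the integer ambiguity.

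First I would note that by~(\ref{2varlogdiff1}) (with indices $k,l$ in place of $i,j$),
\[
\frac{\theta_3'(\alpha_k+\alpha_l)}{\theta_3(\alpha_k+\alpha_l)} \equiv 2(\mu_k+\mu_l)\pmod{2\imath\pi}.
\]
Substituting this into the exact identity~(\ref{twotheta3sum}) gives
\[
\frac{\theta_3'(\alpha_i+\alpha_j)}{\theta_3(\alpha_i+\alpha_j)} \equiv -2\imath\pi N - 2(\mu_k+\mu_l)\pmod{2\imath\pi},
\]
and since $N\in\mathbb{Z}$ we may replace $-2\imath\pi N$ by $+2\imath\pi N$ modulo $2\imath\pi$, producing the claimed formula modulo $2\imath\pi$. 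Equivalently one can reach the same congruence by applying~(\ref{2varlogdiff1}) directly to $(i,j)$ and eliminating $\mu_i+\mu_j$ via the path-specified sum $\sum_r\mu_r=3\imath\pi N$ from~(\ref{musuma}); the two routes give consistent results and provide a useful internal check.

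The second step is to resolve the $2\imath\pi$ ambiguity to an exact equality. The paths specified in Lemma~\ref{intgammainf} determine $\beta_1(P)$ (and hence each $\mu_k$) as an honest function of $P$, not merely modulo $\mathfrak{b}$-periods, and this is precisely what makes~(\ref{musuma}) and~(\ref{twotheta3sum}) hold on the nose. With these conventions in place I would fix the integer offset by evaluating both sides in one convenient configuration---for instance a point on the $x_3=0$ plane where Proposition~\ref{mux1x2starplane} constrains the $\mu_k$'s, or in the limit where $\boldsymbol{x}$ approaches a point of bitangency and two of the $P_r$ coalesce---and then propagate the resulting equality by continuity in $\boldsymbol{x}$ through the connected component of the configuration space. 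Because the $\theta$-quotient on the left and the linear expression in the $\mu_r$'s on the right are both continuous functions of $\boldsymbol{x}$ away from the discrete bitangency locus, any integer offset must be constant and is therefore pinned down by the single evaluation.

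The main obstacle I anticipate is the bookkeeping in the base-point check: one has to verify carefully that the path used to define $\beta_1(P_i)+\beta_1(P_j)$ via Lemma~\ref{intgammainf} lines up with the path implicit in the single evaluation of $\theta_3'(\alpha_i+\alpha_j)/\theta_3(\alpha_i+\alpha_j)$, so that no spurious $\mathfrak{b}$-cycle is picked up. This is exactly the same kind of delicacy that was dealt with at the end of the proof of Proposition~\ref{propmu13mu24}, where the $\imath\pi/2$ constant was fixed by pairing $P_i$ with $\mathcal{J}(P_i)$; here one pairs $(i,j)$ with its complement $(k,l)$ instead. Once this matching is done the corollary follows.
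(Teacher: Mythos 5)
Your derivation is correct and is essentially the argument the paper intends: the relation follows by combining (\ref{2varlogdiff1}) for the complementary pair with the exact identity (\ref{twotheta3sum}) (equivalently, via (\ref{musuma})), with the path conventions of Lemma \ref{intgammainf} removing the residual $2\imath\pi$ ambiguity, and your continuity-plus-single-evaluation device for pinning the integer is a legitimate way to make that last step explicit. Note only that you have implicitly read the right-hand side as $-2(\mu_k+\mu_l)$, which is the form consistent with (\ref{2varlogdiff1}), (\ref{twotheta3sum}) and (\ref{musuma}); the $\mu_j$ appearing in the printed statement is a typographical slip.
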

Observe the consistency between (\ref{musuma}, \ref{twotheta3sum}) and (\ref{musum}).

\section{Special Loci}\label{sectionloci}
In this section we shall describe those loci in $\mathbb{R}^3$ where the general analysis simplifies, in particular describing the corresponding transcendent $\mu$. These loci include each of the coordinate axes
and we shall give an alternate parameterization for these involving Jacobi elliptic functions that will facilitate comparison with results in the literature.

For charge $2$ the Atiyah-Ward equation is a general quartic and while this is solvable, there are several
simplifying loci for which it becomes a biquadratic. This occurs for the $x_3$-axis and the $x_3=0$ plane,
the latter including the $x_{1,2}$ axes.

\medskip
\noindent{$\boldsymbol{x_3}$-\textbf{axis.}} Here $\eta=-2x_3\zeta$. For $x_3<Kk'/2$ we may take
$$\zeta _{{1}}={\frac {i\sqrt {{K}^{2}{k}^{2}+4\,{x_{{3}}}^{2}}+\sqrt {{
K}^{2}{{ k'}}^{2}-4\,{x_{{3}}}^{2}}}{K}}.
$$
Then $|\zeta_1|=1$ and so with $P_1=(\zeta _{{1}},\eta _{{1}})$ our ordering yields $P_3
=(-1/\overline{\zeta_1}, -\overline{\eta_1}/\overline{\zeta_1}^2)=-(\zeta _{{1}},\eta _{{1}})$. We may then take
$P_2=(\overline{\zeta_1}, \overline{\eta_1})$ and $P_4=-(\overline{\zeta_1}, \overline{\eta_1})$.
For this range $|\zeta_i|=1$.

In general we may set $\beta=k^2-k'^2+8x_3^2/K^2$ and
$$\zeta _{{1}}=i\left[ \sqrt{\frac{\beta+1}2} -  \sqrt{\frac{\beta-1}2}\right]
$$
which reduces to the previous. For $x_3>Kk'/2$ now  $\zeta_1$ is purely imaginary and
$$\zeta _{{3}}=-i\left[ \sqrt{\frac{\beta+1}2} + \sqrt{\frac{\beta-1}2}\right],\ 
\zeta _{{2}}=-i\left[ \sqrt{\frac{\beta+1}2} -  \sqrt{\frac{\beta-1}2}\right]=-\zeta_1=\overline{\zeta}_1,\
\zeta _{{4}}=-\zeta_3.
$$

\medskip
\noindent{$\boldsymbol{x_3=0}$-\textbf{plane.}} We give the solutions of the general plane $x_3=0$  before
specialising to the simpler cases of the $x_1$ and $x_2$ axes.
Upon setting $ x_\pm=x_1\pm\imath x_2$ and
$ \eta = - \imath x_-\zeta^2-\imath x_+$  the Atiyah-Ward equation becomes
\begin{equation*}
\left( \frac14 K^2 - x_-^2  \right)\zeta^4
+\left(\frac12 K^2-K^2{k'}^2-2x_+x_-  \right)\zeta^2+\left( \frac14 K^2 - x_+^2  \right)=0.
\end{equation*}
This has solutions  $\tilde \zeta_i(\boldsymbol{x})$, $i\in \{1,\ldots,4\}$  \begin{align*}
\tilde \zeta_1(x_1,x_2) &= S_3^{-1}\sqrt{2K^2{k'}^2 +2KS_1-S_4^2},
&\tilde \zeta_2(x_1,x_2) =& S_3^{-1}\sqrt{2K^2{k'}^2 -2KS_1-S_4^2},\\
\tilde  \zeta_3(x_1,x_2) &=-\tilde \zeta_1(x_1,x_2),
&\tilde  \zeta_4(x_1,x_2) =&-\tilde \zeta_2(x_1,x_2),
\end{align*}
where we have set
\begin{align*}
S_1&=S_1(x_1,x_2)=\sqrt{ -{k'}^2(K^2k^2-4 x_+x_-)+(x_+-x_-)^2},
&S_2&= S_2(x_1,x_2)=\sqrt{K^2-4 x_+^2},\\
S_3&=S_3(x_1,x_2)=\sqrt{K^2-4 x_-^2}, & S_4&=S_4(x_1,x_2)=\sqrt{K^2-4 x_+x_-}.
\end{align*}
We need to order the roots.
We are free to choose $\zeta_1=\tilde \zeta_1$ once and for all, but the choice of $\zeta_3:=-1/
 \overline{\zeta}_1$ depends on $(x_1,x_2)$ according to whether $\pm S_1^2 >0$. Noting that
 $\tilde \zeta_1 \tilde \zeta_2 =S_2/S_3$ then
 $$
  \overline{\zeta}_1=
 \begin{cases}\tilde \zeta_2 \dfrac{S_3}{S_2} =\dfrac1{\tilde \zeta_1}=-\dfrac1{\tilde \zeta_3}
 &\mbox{if } S_1^2<0,\\
 \tilde \zeta_1 \dfrac{S_3}{S_2} =\dfrac1{\tilde \zeta_2}=-\dfrac1{\tilde \zeta_4}
 &\mbox{if }  S_1^2>0.
 \end{cases}
 $$
 Thus
 $$\zeta_3 =\tilde \zeta_3 \mbox{ if } S_1^2<0,\qquad \zeta_3 =\tilde \zeta_4 \mbox{ if } S_1^2>0.$$
 We observe that if $S_1^2<0$ we have four roots $\zeta_i$ each of modulus 1.
 We are  similarly free to choose $\zeta_2 =\tilde \zeta_2$, giving a double root $\zeta_1=\zeta_2$
 when  $S_1^2=0$. As we cross $S_1^2=0$ the ordering of $P_3$ and $P_4$ interchanges:
 \begin{align*}
 S_1^2<0:& \qquad \zeta_1 =\tilde  \zeta_1 ,\   \zeta_2 =\tilde  \zeta_2 ,\   \zeta_3 =\tilde  \zeta_3 ,\ 
  \zeta_4 =\tilde  \zeta_4 ,\\
  S_1^2>0:&\qquad  \zeta_1 =\tilde  \zeta_1 ,\   \zeta_2 =\tilde  \zeta_2 =1/\overline{\zeta}_1,\   \zeta_3 =\tilde  \zeta_4 ,\ 
  \zeta_4 =\tilde  \zeta_3.\\
 \end{align*}
 The solutions are ordered to satisfy 
\begin{align}\begin{split}
\zeta_1(\boldsymbol{0})& = k'+\imath k, \quad\hskip0.25cm  \zeta_2(\boldsymbol{0}) = k'-\imath k,\\
\zeta_3(\boldsymbol{0}) &= -k'-\imath k, \quad  \zeta_4(\boldsymbol{0}) =- k'+\imath k.
\end{split} \label{ordering}
\end{align}

When we restrict to the coordinate axes we can say more.

\medskip
\noindent{$\boldsymbol{x_2}$-\textbf{axis.}} On the $x_2$-axis $ S_1^2<0$ and so each $|\zeta_i|=1$.
Further, as $\eta =x_2(1-\zeta^2)$ is invariant under complex conjugation we have the four
points
\begin{align}\label{x2axisps}
P_1&=(\zeta, \eta),\  &P_3&=(-1/\overline{\zeta}, -\overline{\eta}/\overline{\zeta}^2)=(-\zeta, \eta),
&P_2&=(\overline{\zeta}, \overline{\eta}),
&P_4&=(-1/\zeta, -\eta/\zeta^2)=(-\overline{\zeta}, \overline{\eta}).
\end{align}
Explicitly we take
\begin{align}
\zeta_1=\frac{\imath Kk + \sqrt{K^2{k'}^2+4x_2^2}}{\sqrt{K^2+4x_2^2}},\quad 
 \label{x2roots}
\end{align}

\medskip
\noindent{$\boldsymbol{x_1}$-\textbf{axis.}} 
Finally consider the $x_1$-axis. Here $\eta =-\imath x_1(1+\zeta^2)$ is invariant under
$ (\zeta, \eta)\rightarrow (\pm\overline{\zeta}, -\overline{\eta})$.  
We have ordered the points
$P_1=(\zeta, \eta)$, $P_3=(-1/\overline{\zeta}, -\overline{\eta}/\overline{\zeta}^2)$, and with $P_4=\mathcal{J}(P_2)$ we are left to determine $P_2=(\zeta_2,\eta_2)$. 
We determine these depending on the sign of $ S_1^2$.
\begin{align}\label{x1axispssmall}
&x_1\mbox{-axis: }&
& P_1=(\zeta, \eta),&  &P_3=(-1/\overline{\zeta}, -\overline{\eta}/\overline{\zeta}^2)=(-\zeta, \eta), \\
&S_1^2<0& &P_2=(\overline{\zeta}, -\overline{\eta}),&
&P_4=(-1/\zeta, \eta/\zeta^2)=(-\overline{\zeta}, \overline{\eta}).\nonumber
\end{align}
We take here
\begin{equation}
\zeta_{1}=\zeta_1(x_1)=\frac{Kk'+\imath \sqrt{K^2k^2-4x_1^2}}{\sqrt{K^2-4x_1^2}}. \label{x1axis}
\end{equation}
For $ S_1^2>0$ we have chosen $\overline{\zeta_1}=1/\zeta_2$.
\begin{align}\label{x1axispsbig}
&x_1\mbox{-axis: }&
& P_1=(\zeta, \eta),&  &P_3=(-1/\overline{\zeta}, -\overline{\eta}/\overline{\zeta}^2), \\
&S_1^2>0& &P_2=(1/\overline{\zeta},
-\overline{\eta}/\overline{\zeta}^2),&
&P_4=(-\zeta, \eta).\nonumber
\end{align}
We remark that the solutions $\zeta_i(\boldsymbol{x})=\zeta_i(x_1,x_2) $ are singular on the $x_1$-axis
when
\begin{equation}
\zeta_i(K/2,0)=\infty, \quad i=1,\ldots,4.
\end{equation}
At this point the degree of the AWC is no longer a quartic. With  $0\leq k \leq 1$ we see that  $\zeta_1$ has a different analytic behaviour  for each of the following domains of  the $x_1$-axis
\begin{align*}\mathbf{I}: \hskip1cm& x_1\in\left[ - Kk/2 , \;  kK/2\right] ,\\
\mathbf{II}: \hskip1cm &x_1\in \left[ - K/2 , \; - kK/2 \right]   \cup  \left[  Kk/2 ,\;   K/2 \right] , \\
\mathbf{III}: \hskip1cm &x_1\in \left[ - \infty ,\;  - K/2 \right]   \cup  \left[  K/2 , \;  \infty\right] .
 \end{align*}
The interval {\bf I} corresponds to $S_1^2<0$ and we have $\overline{\zeta}_1=\zeta_2$.  
On interval {\bf II} we have that  $\zeta_1$ is real while on interval {\bf III}  it is purely imaginary. It follows from $\eta=-\imath x_1(1+\zeta^2)$ that $\eta$ is purely imaginary on intervals  {\bf II}  and  {\bf III}.

\subsection{\texorpdfstring{$\boldsymbol{\mu_i}$}{muis}}
We now calculate the transcendents $\mu$ for these loci.

\medskip
\noindent{$\boldsymbol{x_2}$-\textbf{axis.}} 
The invariance under complex conjugation gives
\begin{align}
\overline{\alpha(P_1)}&=\overline{\int_{\infty_1}^{P_1} v}
=\int_{\overline{\infty_1}}\sp{\overline{P_1}}v=
\int_{\overline{\infty_1}}\sp{{P_2}}v=
\int_{{\infty_1}}\sp{{P_2}}v+\int_{\overline{\infty_1}}\sp{{\infty_1}}v=
\alpha(P_2)+\int_{{\infty_2}}\sp{{\infty_1}}v\nonumber 
=
\alpha(P_2)+\frac{1+\tau}2 ,
\end{align}
where we have used our definition of sheets to give $\overline{\infty_1}={\infty_2}$. 
Therefore, from (\ref{betadef2}), we have that
$$\overline{\beta(P_1)}=\beta(P_2) -\frac{\imath\pi}2$$
and consequently from  (\ref{defmu}) and that $\overline{\zeta}_1=\zeta_2$ we see that
$$\overline{\mu}_1= \mu_2  - \frac{\imath\pi}2.$$
Combining this with Proposition (\ref{mux1x2starplane}) we  obtain

\begin{proposition} \label{propmux2axis}
The transcendents $\mu_i$ on the $x_2$-axis are given$\pmod{\imath\pi}$  by:
\begin{align}\label{muconjugate2}
 \mu_1=\lambda_2+\frac{\imath\pi}4, \
  \mu_2=\lambda_2+\frac{\imath\pi}4, \
 \mu_3=-\lambda_2-\frac{\imath\pi}4,\
 \mu_4=-\lambda_2-\frac{\imath\pi}4, \  \lambda_2\in \mathbf{ R}.
\end{align}
\end{proposition}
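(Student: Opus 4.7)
The strategy is to combine three ingredients already at our disposal: the conjugation relation $\overline{\mu}_1 = \mu_2 - i\pi/2$ derived in the displayed computation immediately above the statement; its analogue for the pair $(P_3,P_4)$; and Propositions \ref{propmu13mu24} and \ref{mux1x2starplane}. The four preimages (\ref{x2axisps}) on the $x_2$-axis admit two commuting involutions, namely $\mathfrak{I}$ (which swaps $P_1\leftrightarrow P_3$ and $P_2\leftrightarrow P_4$) and complex conjugation (which swaps $P_1\leftrightarrow P_2$ and $P_3\leftrightarrow P_4$); the second of these is available precisely because $\eta=x_2(1-\zeta^2)$ is real-symmetric when $x_1=x_3=x_4=0$, and it is this extra symmetry that makes the $x_2$-axis special.

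First I would repeat the derivation given immediately above the proposition, replacing $(P_1,P_2)$ by $(P_3,P_4)$: the identities $\overline{\alpha(P_3)}=\alpha(P_4)+(1+\tau)/2$ and $\overline{\beta_1(P_3)}=\beta_1(P_4)-i\pi/2$ follow in exactly the same way, and with $\overline{\zeta_3}=\zeta_4$ together with $x_1=x_3=x_4=0$ they give the companion relation $\overline{\mu}_3=\mu_4-i\pi/2$. From the two conjugation relations one reads off $\Re\mu_1=\Re\mu_2$, $\Re\mu_3=\Re\mu_4$ and $\Im\mu_1+\Im\mu_2\equiv\Im\mu_3+\Im\mu_4\equiv\pi/2\pmod\pi$.

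Next I would invoke Proposition \ref{mux1x2starplane} on the $x_3=0$ plane with the identifications $\zeta_{*}=-\zeta_1=\zeta_3$ and $\zeta_{*'}=-\zeta_2=\zeta_4$ (both visible from (\ref{x2axisps})); this gives $\mu_1+\mu_3\equiv 0$ and $\mu_2+\mu_4\equiv 0\pmod{i\pi}$. Combining with the conjugation relations yields $\mu_1\equiv\mu_2$ and $\mu_3\equiv\mu_4\pmod{i\pi}$, and hence $2\Im\mu_1\equiv\pi/2\pmod\pi$, so $\Im\mu_1\equiv\pi/4\pmod{\pi/2}$. Setting $\lambda_2:=\Re\mu_1$ and using $\mu_3\equiv-\mu_1\pmod{i\pi}$ then gives the claimed four formulae, with $\lambda_2\in\mathbb{R}$ automatic from the fact that $|\zeta_1|=1$ on the $x_2$-axis together with the already established reflection identity (\ref{reflection}).

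The main obstacle is branch tracking: Propositions \ref{propmu13mu24} and \ref{mux1x2starplane} are stated only modulo $i\pi$, so the pure symmetry input leaves both $+i\pi/4$ and $-i\pi/4$ as a priori admissible representatives for $\Im\mu_1\pmod\pi$. To single out $+i\pi/4$ one can either evaluate at the reference point $x_2=0$ (where $\zeta_1=k'+ik$ by (\ref{ordering}) and $\beta_1(P_1)$ is directly computable from the theta-quotient expressions (\ref{uniform1})) and then propagate by continuity in $x_2$, or, equivalently, align the contours defining $\beta_1(P_j)$ with those specified in Lemma \ref{intgammainf} and keep track of the resulting half-periods throughout; either way this is the only step that requires more than bookkeeping from the symmetry relations already in hand.
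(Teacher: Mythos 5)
Your route is the paper's own: the conjugation symmetry $\overline{P_1}=P_2$, $\overline{P_3}=P_4$ of the $x_2$-axis preimages (\ref{x2axisps}) gives $\overline{\mu}_1=\mu_2-\imath\pi/2$, Proposition \ref{mux1x2starplane} with $\zeta_3=-\zeta_1$, $\zeta_4=-\zeta_2$ gives $\mu_1+\mu_3\equiv\mu_2+\mu_4\equiv0\pmod{\imath\pi}$, and the choice of representative is fixed at the origin. There is, however, a genuine gap at the pivotal step: the claim that ``combining with the conjugation relations yields $\mu_1\equiv\mu_2\pmod{\imath\pi}$'' does not follow from the relations you have actually deployed there. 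The four relations $\overline{\mu}_1=\mu_2-\imath\pi/2$, $\overline{\mu}_3=\mu_4-\imath\pi/2$, $\mu_1+\mu_3\equiv0$ and $\mu_2+\mu_4\equiv0$ are all satisfied by $\mu_1=a$, $\mu_2=a+\imath\pi/2$, $\mu_3=-a$, $\mu_4=-a+\imath\pi/2$ with $a$ real, yet $\mu_1\not\equiv\mu_2\pmod{\imath\pi}$; indeed your second conjugation relation for the pair $(P_3,P_4)$ is, modulo $\imath\pi$, a formal consequence of the first together with Proposition \ref{mux1x2starplane}, so it adds nothing. Since you then obtain $\Im\mu_1\equiv\pi/4$ \emph{from} $\mu_1\equiv\mu_2$, the quarter-period is never actually pinned down by your chain of deductions.

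The repair is the ingredient you name in your opening sentence but never invoke: Proposition \ref{propmu13mu24}. From $\mu_1+\overline{\mu}_3=-\imath\pi/2$ together with $\mu_3\equiv-\mu_1\pmod{\imath\pi}$ one gets $\mu_1-\overline{\mu}_1\equiv-\imath\pi/2\pmod{\imath\pi}$, i.e.\ $\Im\mu_1\equiv\pi/4\pmod{\pi/2}$; only then does $\overline{\mu}_1=\mu_2-\imath\pi/2$ give $\Im\mu_2=\pi/2-\Im\mu_1\equiv\Im\mu_1\pmod{\pi}$ and hence $\mu_1\equiv\mu_2\pmod{\imath\pi}$, with $\mu_3\equiv\mu_4\equiv-\mu_1$ following from Proposition \ref{mux1x2starplane}. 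With this inserted, your sign-fixing in the last paragraph (evaluation at $x_2=0$, where $\zeta_1(\boldsymbol{0})=k'+\imath k=P_0$ gives $\mu_1(\boldsymbol{0})=\imath\pi/4$, then continuity) is exactly what the paper does immediately after the proposition, and the argument closes. One small further remark: once you set $\lambda_2:=\Re\mu_1$ its reality is automatic, so the appeal to $|\zeta_1|=1$ and (\ref{reflection}) at that point is superfluous -- the substantive content of ``$\lambda_2\in\mathbb{R}$'' is precisely $\Im\mu_1\equiv\pi/4$, i.e.\ the step repaired above.
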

Thus on the $x_2$-axis there is only one transcendental function $\lambda_2=\lambda_2(x_2)$ to evaluate;
we shall identify this function shortly.
Now
\begin{align}\label{muorigin}
 \mu_1(\boldsymbol{0})=  \mu_2(\boldsymbol{0}) = \frac{\imath \pi}{4}, \quad  
\mu_3(\boldsymbol{0})=  \mu_4(\boldsymbol{0}) = \frac{3\imath \pi}{4}. 
\end{align}
The first equality is evident as $\zeta_1(\boldsymbol{0})=k'+\imath k=a$. 
To prove $\mu_2(\boldsymbol{0}) = {\imath \pi}/{4}$ we note 
\[ \frac{\imath K}{2} \int_{k'+\imath k}^{k'-\imath k} \frac{(z^2-c)\mathrm{d}z}{\sqrt{(x^2-a^2)(x^2-b^2)}}=0   \]
because of the normalization condition $\int_{\mathfrak{a}}\gamma_\infty=0$. The remaining equalities follow from the propositions. Thus we have that $\lambda_2(0)=0$. 

\medskip
\noindent{$\boldsymbol{x_1}$-\textbf{axis.}} 
The behaviour of $\mu$ depends on which interval $x_1$ belongs. A similar proof to the above
(given in Appendix \ref{proofmux1axis}) shows that:

\begin{proposition}\label{propmux1axis}
The transcendents $\mu_{1,2}$  on the $x_1$-axis behave$\pmod{\imath\pi}$ as
\begin{align}
{\bf I}&:
\mu_1(x_1,0) = \lambda_1 +\frac{ \imath \pi}{4}, \quad \mu_2(x_1,0) = -\lambda_1 +\frac{ \imath \pi}{4}, 
 \mu_3=-\lambda_1 -\frac{\imath\pi}4,\
 \mu_4=\lambda_1 -\frac{\imath\pi}4,\nonumber
\\
{\bf II}&: \mu_1=\imath\lambda'_1, \  \mu_2=-\imath\lambda'_1-\frac{\imath\pi}2, \
 \mu_3=\imath\lambda'_1+\frac{\imath\pi}2,\
 \mu_4=-\imath\lambda'_1,
 \label{conjugate1} \\
{\bf III}&:
\mu_1=\lambda_1'', \
 \mu_2=\lambda_1''-\frac{\imath\pi}2, \
 \mu_3=-\lambda_1''+\frac{\imath\pi}2,\
 \mu_4=-\lambda_1'',\nonumber
\end{align}
where $\lambda_1=\lambda_1(x_1)$, $\lambda'_1=\lambda'_1(x_1)$, $\lambda_1''=\lambda''(x_1) \in \mathbb{R}$ are such that $$0=\lambda_1(0)=\lambda_1(\pm Kk/2)=\lambda_1''(\pm K/2),\qquad \lambda'_1(\pm Kk/2)=\frac{\pi}4,$$
and $\lambda'_1(\pm K/2)=0\pmod{\pi}$.
\end{proposition}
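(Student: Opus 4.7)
The plan is to mirror the proof of Proposition \ref{propmux2axis}, but to adapt it to each interval because the sheet structure of $\bar P_i$ relative to the orderings (\ref{x1axispssmall}) and (\ref{x1axispsbig}) changes as $x_1$ crosses $\pm Kk/2$ and $\pm K/2$. Throughout I will use the two general pairwise identities Propositions \ref{propmu13mu24} and \ref{mux1x2starplane}, together with the conjugation rule $\overline{\int_{\infty_1}^P v}=\int_{\infty_2}^{\bar P}v$ (valid since $\bar\infty_1=\infty_2$ and $v=d\zeta/(4\eta)$ has real coefficients with $\tau\in i\mathbb{R}$) and the hyperelliptic identity $\int_{\infty_1}^{\sigma Q}v=-\int_{\infty_1}^{Q}v$ coming from $\sigma^*v=-v$ and $\sigma\infty_2=\infty_1$.

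For interval \textbf{I}, $|\zeta_i|=1$ and $\zeta_2=\bar\zeta_1$; a short calculation gives $\bar P_1=\sigma P_2$, whence $\overline{\alpha(P_1)}=-\alpha(P_2)$. Inserting this into the representation $\beta_1(P)=\tfrac14\bigl[\theta_1'(\alpha)/\theta_1(\alpha)+\theta_3'(\alpha)/\theta_3(\alpha)\bigr]$ and exploiting the parities of $\theta_1,\theta_3$ (odd, even) together with the reality of their coefficients yields $\beta_1(P_1)=-\overline{\beta_1(P_2)}$. Substituting this and $\zeta_2=\bar\zeta_1$ into the definition (\ref{defmu}) produces the clean relation $\bar\mu_1+\mu_2=0$. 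Combining with $\mu_1+\mu_3\equiv0$, $\mu_2+\mu_4\equiv0\pmod{i\pi}$ from Proposition \ref{mux1x2starplane} (since $\zeta_1+\zeta_3=\zeta_2+\zeta_4=0$) and with $\mu_1+\bar\mu_3=-i\pi/2$ from Proposition \ref{propmu13mu24} pins $\mathrm{Im}(\mu_1)\equiv\pi/4\pmod{\pi/2}$; continuity together with the origin value $\mu_1(\boldsymbol{0})=i\pi/4$ already established in the proof of Proposition \ref{propmux2axis} selects the representative $\mu_1=\lambda_1+i\pi/4$, and then $\mu_2=-\bar\mu_1$ forces $\mu_2=-\lambda_1+i\pi/4$, after which $\mu_3,\mu_4$ follow.

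For interval \textbf{II}, $\zeta_1\in\mathbb{R}$ and $\eta_1\in i\mathbb{R}$, so $\bar P_1=\sigma P_1$ and hence $\overline{\alpha(P_1)}=-\alpha(P_1)$, showing $\alpha_1\in i\mathbb{R}$; the same holds for every $\alpha_i$ since all $\zeta_i$ are real. The reality properties of the theta series with $q=e^{i\pi\tau}\in\mathbb{R}$ then give $\beta_1(P_i)\in i\mathbb{R}$, and because $-ix_1\zeta_i\in i\mathbb{R}$ also, every $\mu_i$ is purely imaginary. The exact relation $\mu_1+\bar\mu_3=-i\pi/2$ fixes $\mu_3=i\lambda_1'+i\pi/2$, and Proposition \ref{mux1x2starplane}, now applied with the pairings $\zeta_1+\zeta_4=0$ and $\zeta_2+\zeta_3=0$, fixes the remaining $\mu_i$ modulo $i\pi$. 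The boundary values $\lambda_1'(\pm Kk/2)=\pi/4$ and $\lambda_1'(\pm K/2)\equiv0\pmod\pi$ are obtained by continuous matching at the interval endpoints, where a double root $P_1=P_2$ (respectively a coalescence sending $\zeta_i\to\infty$) forces the matching value. An analogous analysis on interval \textbf{III} uses $\zeta_1\in i\mathbb{R}$ and $\bar P_1=\sigma P_4$ to give $\overline{\alpha(P_1)}=-\alpha(P_4)$, hence $\beta_1(P_4)=-\overline{\beta_1(P_1)}$, and (with $\bar\zeta_1=-\zeta_1$) the identity $\mu_4=-\bar\mu_1$. Together with $\mu_1+\mu_4\equiv0\pmod{i\pi}$ this forces $\mathrm{Im}(\mu_1)\equiv 0\pmod{\pi/2}$, and continuity at $x_1=\pm K/2$ with the interval II endpoint selects the real representative $\mu_1=\lambda_1''$, whereupon Propositions \ref{propmu13mu24} and \ref{mux1x2starplane} yield $\mu_2,\mu_3,\mu_4$.

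The main obstacle will be the bookkeeping of mod-$i\pi$ ambiguities and sheet/path choices. Except for the first relation of Proposition \ref{propmu13mu24}, all the identities involved are only defined modulo $i\pi$, and the complex-conjugation arguments must carefully use that the base point $P_0=(k'+ik,0_1)$ is itself non-real. Continuity in $x_1$ and a small number of reference values (at the origin, at $x_1=\pm Kk/2$ where $P_1$ and $P_2$ coalesce, and at $x_1=\pm K/2$ where a root of the Atiyah-Ward quartic escapes to infinity) are what ultimately lift each $\pmod{i\pi}$ assertion to a specific representative, and verifying the coalescence behaviour at the interval boundaries is the most delicate part of the argument.
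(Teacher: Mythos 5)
Your proposal is correct and follows essentially the same route as the paper's own proof (Appendix \ref{proofmux1axis}): on each interval you use the interval-dependent reality of the roots to conjugate the Abel map and so relate $\overline{\mu}_i$ to some $\mu_j$, then combine with Propositions \ref{propmu13mu24} and \ref{mux1x2starplane} and continuity/boundary matching to fix the representatives. The only cosmetic difference is that on interval \textbf{III} you route the conjugation through $\overline{P}_1=\sigma P_4$ rather than observing directly that $\overline{\alpha(P_1)}=\alpha(P_1)$; both give the same reality statement for $\mu_1$ modulo $\imath\pi$.
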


\medskip
\noindent{$\boldsymbol{x_3=0}$-\textbf{plane.}} More generally the same symmetry arguments together
with (\ref{x2axisps}),  (\ref{x1axispssmall}) and  (\ref{x1axispsbig}) show that 
\begin{corollary}\label{muaxes}
 For the plane $x_3=0$ and the choices  above, we have that$\pmod {\imath \pi }$
 \begin{align*}
 S_1^2<0:&\quad
 \mu_1=\lambda+\frac{\imath\pi}4, \
 \mu_3=-\lambda-\frac{\imath\pi}4,\
  \mu_2=\lambda'+\frac{\imath\pi}4, \
 \mu_4=-\lambda'-\frac{\imath\pi}4, \\
   S_1^2>0:&\quad  
 \mu_1=\lambda''+\imath\alpha, \
 \mu_3=-\lambda''+\imath\alpha+\frac{\imath\pi}2,\
  \mu_2=\lambda''-\imath\alpha-\frac{\imath\pi}2, \
 \mu_4=-\lambda''-\imath\alpha,
 \end{align*}
where $\lambda, \lambda', \lambda'', \alpha\in \mathbb{R}$.
\end{corollary}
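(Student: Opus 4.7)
My approach is to combine two inputs already available: Proposition \ref{propmu13mu24}, which gives the reality-type relations $\mu_1+\overline{\mu}_3\equiv -\imath\pi/2$ and $\mu_2+\overline{\mu}_4\equiv -\imath\pi/2\pmod{\imath\pi}$; and Proposition \ref{mux1x2starplane}, which says that whenever we are in the $x_3=0$ plane and a pair of roots satisfies $\zeta_i+\zeta_j=0$, we have $\mu_i+\mu_j\equiv 0\pmod{\imath\pi}$. The whole content of the corollary is then the purely algebraic consequence of these two relations, once the pairing prescribed by the ordering conventions of (\ref{x2axisps}), (\ref{x1axispssmall}) and (\ref{x1axispsbig}) has been read off in each of the two regions $S_1^2<0$ and $S_1^2>0$.

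For $S_1^2<0$, the orderings give $\zeta_3=-\zeta_1$ and $\zeta_4=-\zeta_2$; thus Proposition \ref{mux1x2starplane} yields $\mu_1+\mu_3\equiv 0$ and $\mu_2+\mu_4\equiv 0\pmod{\imath\pi}$. Substituting $\mu_3\equiv -\mu_1$ (whence $\overline{\mu}_3\equiv -\overline{\mu}_1\pmod{\imath\pi}$) into Proposition \ref{propmu13mu24} gives $\mu_1-\overline{\mu}_1\equiv -\imath\pi/2\pmod{\imath\pi}$, forcing $\Im\mu_1\in\{\pi/4,-\pi/4\}\pmod\pi$. The branch $\Im\mu_1=\pi/4$ is selected by continuity from the origin, where we have $\mu_1(\boldsymbol{0})=\imath\pi/4$ by (\ref{muorigin}); since the imaginary part varies continuously in $(x_1,x_2)$ but is constrained to a discrete set modulo $\pi$, it is constant on the connected component of $\{S_1^2<0\}$ containing the origin. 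Thus $\mu_1=\lambda+\imath\pi/4\pmod{\imath\pi}$ with $\lambda\in\mathbb R$, and the relation $\mu_3\equiv -\mu_1$ produces the stated $\mu_3=-\lambda-\imath\pi/4$. The same argument applied to the pair $(\mu_2,\mu_4)$ gives $\mu_2=\lambda'+\imath\pi/4$, $\mu_4=-\lambda'-\imath\pi/4$.

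For $S_1^2>0$, the orderings of (\ref{x1axispsbig}) give instead $\zeta_3=-\zeta_2$ and $\zeta_4=-\zeta_1$, so Proposition \ref{mux1x2starplane} now pairs $(\mu_1,\mu_4)$ and $(\mu_2,\mu_3)$: $\mu_1+\mu_4\equiv 0$, $\mu_2+\mu_3\equiv 0\pmod{\imath\pi}$. Substituting $\mu_3\equiv -\mu_2$ into $\mu_1+\overline{\mu}_3\equiv -\imath\pi/2$ gives
\begin{equation*}
\mu_1-\overline{\mu}_2\equiv -\frac{\imath\pi}{2}\pmod{\imath\pi}.
\end{equation*}
Writing $\mu_1=a+\imath b$ and $\mu_2=c+\imath d$ with $a,b,c,d\in\mathbb R$ and splitting real and imaginary parts, the congruence forces $a=c$ and $b+d\equiv -\pi/2\pmod\pi$. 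Setting $\lambda'':=a$ and $\alpha:=b$ then yields $\mu_1=\lambda''+\imath\alpha$ and $\mu_2=\lambda''-\imath\alpha-\imath\pi/2\pmod{\imath\pi}$, and the pairing relations $\mu_3\equiv -\mu_2$, $\mu_4\equiv -\mu_1$ give the remaining two expressions of the corollary.

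The only genuinely non-mechanical point is identifying the correct pairing of roots in each region, which is why the corollary is stated separately in the two cases $S_1^2\lessgtr 0$ — this is where the interchange of $P_3\leftrightarrow P_4$ across $S_1^2=0$ recorded in the text enters. Everything else is a straightforward algebraic combination of the two propositions together with a continuity argument for branch selection, so I anticipate no further obstacle.
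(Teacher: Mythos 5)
Your proposal is correct and follows essentially the route the paper intends: the text immediately following the corollary attributes the reduction precisely to Proposition \ref{propmu13mu24} combined with Proposition \ref{mux1x2starplane} and the root pairings $\zeta_1+\zeta_3=0,\ \zeta_2+\zeta_4=0$ (resp. $\zeta_1+\zeta_4=0,\ \zeta_2+\zeta_3=0$) in the two regions, which is exactly your algebraic combination. Your continuity argument from (\ref{muorigin}) to fix the $+\imath\pi/4$ branch in the $S_1^2<0$ case supplies the one step the paper leaves implicit, and it is sound since the roots remain distinct (the bitangency locus (\ref{bitangency2}) is exactly $S_1^2=0$) on the connected region containing the origin.
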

We observe that Proposition \ref{propmu13mu24} tells us that in the general case we have two complex functions to consider; Proposition \ref{mux1x2starplane} and Corollary  \ref{muaxes}
reduces this to two real functions. Although the transcendents $\mu_1(\boldsymbol{x})$ and  $\mu_2(\boldsymbol{x})$ are analytically independent they are related when we reduce to either of the axes $(x_1,0)$,  $(0,x_2)$.

We record:
\begin{proposition}\label{propmuderivatives}
The derivatives of $\mu_{1,2}$ are
\begin{align}\begin{split}
\frac{\partial}{\partial x_1} \mu_1(x_1,x_2) &= -\frac{2\imath (\imath (E-K)x_2-Ex_1)(-\imath S_1 x_2+K {k'}^2x_1)K\zeta_1}
{S_1(x_-\zeta_1^2+x_+)x_+S_3^2}\\
&-\frac{\imath(2EK-S_4^2)(K {k'}^2+S_1)x_1}{S_1\zeta_1x_+S_3^2} ,\\
\frac{\partial}{\partial x_1} \mu_2(x_1,x_2) &= \frac{2\imath (\imath (E-K)x_2-Ex_1)(\imath S_1 x_2+K {k'}^2x_1)K\zeta_2}
{S_1(x_-\zeta_2^2+x_+)x_+S_3^2}\\
&+\frac{\imath(2EK-S_4^2)(K {k'}^2-S_1)x_1}{S_1\zeta_2x_+S_3^2} ,\\
\frac{\partial}{\partial x_2} \mu_1(x_1,x_2) &= -\frac{2\imath (\imath (E-K)x_2-Ex_1)(\imath S_1 x_1-K {k}^2x_2)K\zeta_1}
{S_1(x_-\zeta_1^2+x_+)x_+S_3^2}\\
&-\frac{\imath(2EK-S_4^2)(-K {k}^2+S_1)x_2}{S_1\zeta_1x_+S_3^2} ,\\
\frac{\partial}{\partial x_2} \mu_2(x_1,x_2) &= -\frac{2\imath (\imath (E-K)x_2-Ex_1)(\imath S_1 x_1+K {k}^2x_2)K\zeta_2}
{S_1(x_-\zeta_2^2+x_+)x_+S_3^2}\\
&-\frac{\imath(2EK-S_4^2)(K {k}^2+S_1)x_2}{S_1\zeta_2x_+S_3^2}.
\end{split}\label{muderivatives}
\end{align}
\end{proposition}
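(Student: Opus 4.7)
The plan is to differentiate the defining relation (\ref{defmu}) of $\mu_k$ directly, using the formula for $\partial_i\beta_1(P)$ that follows from (\ref{gammainftdef}) and implicit differentiation of the Atiyah--Ward equation to obtain $\partial_i\zeta_k$. On the $x_3=x_4=0$ plane the definition simplifies to $\mu_k = \beta_1(P_k) + \im\pi N - \im x_-\zeta_k$, so
\begin{align*}
\partial_1\mu_k &= \partial_1\beta_1(P_k) - \im\zeta_k - \im x_-\,\partial_1\zeta_k, \\
\partial_2\mu_k &= \partial_2\beta_1(P_k) - \zeta_k - \im x_-\,\partial_2\zeta_k,
\end{align*}
and by (\ref{gammainftdef}),
$\partial_i\beta_1(P_k) = \frac{K^2}{4\eta_k}\bigl(\zeta_k^2 - \tfrac{2E-K}{K}\bigr)\partial_i\zeta_k$.

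Next I would compute $\partial_i\zeta_k$ by implicit differentiation of the Atiyah--Ward quartic $F(\zeta,x_1,x_2) = (\tfrac{K^2}{4}-x_-^2)\zeta^4 + (\tfrac{K^2}{2}-K^2 k'^2-2x_+x_-)\zeta^2 + (\tfrac{K^2}{4}-x_+^2)$. A short calculation, in which the mini-twistor correspondence $\eta = -\im x_-\zeta^2 - \im x_+$ is used to simplify the $x_i$-derivatives, gives
$\partial_{x_1}F = -2(\zeta^2+1)(x_-\zeta^2+x_+) = -2\im\eta(\zeta^2+1)$ and $\partial_{x_2}F = 2\im(\zeta^2-1)(x_-\zeta^2+x_+) = -2\eta(\zeta^2-1)$. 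For the denominator, the factorization $F(\zeta)=\tfrac14 S_3^2(\zeta^2-\zeta_1^2)(\zeta^2-\zeta_2^2)$ together with $\zeta_1^2-\zeta_2^2 = 4KS_1/S_3^2$ yields $\partial_\zeta F|_{\zeta_1} = 2KS_1\zeta_1$ and $\partial_\zeta F|_{\zeta_2} = -2KS_1\zeta_2$. Hence
$\partial_1\zeta_1 = \im\eta_1(\zeta_1^2+1)/(KS_1\zeta_1)$, $\partial_2\zeta_1 = \eta_1(\zeta_1^2-1)/(KS_1\zeta_1)$, and the analogous formulas with opposite sign for $\zeta_2$.

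I would then substitute these into the expressions for $\partial_i\mu_k$. The identity $x_-\zeta_k^2+x_+ = \im\eta_k$ is used to produce the denominator factor $(x_-\zeta_k^2+x_+)$ appearing in (\ref{muderivatives}); the relations $S_3^2 = K^2-4x_-^2$, $S_4^2 = K^2-4x_+x_-$, and the explicit quadratic $S_3^2\zeta_1^2 = 2K^2k'^2+2KS_1-S_4^2$ allow one to rewrite the remaining quadratic combinations in $\zeta_k$ in terms of the $S_i$'s and $x_\pm$.

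The main obstacle will be the final algebraic manipulation. The direct substitution produces a compact expression in $\zeta_k$ and $\eta_k$ of the schematic form $\frac{\im}{2S_1\zeta_k}\bigl[(Kk'^2-E)(\zeta_k^2+1) + S_1(1-\zeta_k^2)\bigr] \pm \text{etc.}$, whereas (\ref{muderivatives}) presents a two-term answer whose numerators factor as $(\im(E-K)x_2-Ex_1)(-\im S_1 x_2 + K k'^2 x_1)$ and $(2EK-S_4^2)(Kk'^2\pm S_1)x_{1,2}$. Bridging these two forms requires expanding the quadratic $\zeta_k^2 = (2K^2k'^2\pm 2KS_1-S_4^2)/S_3^2$, reorganizing the resulting polynomial in $(x_1,x_2)$ using $x_\pm = x_1\pm \im x_2$, and recognizing the stated factorization; the identity $S_3^2-S_4^2 = 8\im x_- x_2$ (and its $x_1$-analogue) plays a role in extracting these factors. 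The computation is mechanical but lengthy, and I would verify the final form by checking a convenient symmetry or limiting case (e.g.\ restriction to the $x_2$-axis, where Proposition \ref{propmux2axis} determines the structure of $\mu_1=\mu_2$ up to the single transcendent $\lambda_2$) as a consistency test.
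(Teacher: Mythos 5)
Your proposal is correct and follows essentially the same route the paper takes: the paper gives no standalone proof of Proposition \ref{propmuderivatives} beyond the method sketched in \S4.1 and repeated after the Lemma of \S5.4 (differentiate the definition (\ref{defmu}), use $\partial_i\beta_1(P)=\frac{K^2}{4\eta}(\zeta^2-\frac{2E-K}{K})\partial_i\zeta$ with $\partial_i\zeta$ from implicit differentiation of the Atiyah--Ward quartic, then simplify). Your intermediate steps check out — in particular $\partial_\zeta F|_{\zeta_1}=2KS_1\zeta_1$ and the compact form $\frac{\im}{2S_1\zeta_k}[(Kk'^2-E)(\zeta_k^2+1)+S_1(1-\zeta_k^2)]-\,$(etc.) are correct — and the remaining reorganisation into the stated factored form is exactly the "after simplification" step the paper leaves implicit.
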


\medskip
\noindent{$\boldsymbol{x_3}$-\textbf{axis.}} We show in Appendix \ref{proofpropmux3axis}
that

\begin{proposition}  \label{propmux3axis}
The transcendents $\mu_i$ on the $x_3$-axis are given$\pmod{\imath\pi}$  by:
\begin{align*}\label{muconjugate2}
x_3<\frac{Kk'}2&:\quad
 \mu_1=\imath \lambda&
  \mu_2&=- \imath\lambda-\frac{\imath\pi}2,&
 \mu_3&= \imath\lambda+\frac{\imath\pi}2,&
& \mu_4=-\imath\lambda, \qquad  \lambda\in  \mathbf{ R},\\
 x_3>\frac{Kk'}2&:\quad
  \mu_1=\lambda'_3-\frac{\imath\pi}2,& 
  \mu_2&=\lambda'_3,&
 \mu_3&=-\lambda'_3,&
& \mu_4=-\lambda'_3+\frac{\imath\pi}2, \quad  \lambda'_3\in \mathbf{ R}.
\end{align*}
\end{proposition}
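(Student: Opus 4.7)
The strategy is parallel to the proofs of Propositions \ref{propmux2axis} and \ref{propmux1axis}: first simplify $\mu_k$ on the axis, then use the symmetries of the $\{P_k\}$ on the $x_3$-axis together with the conjugation identity of Proposition \ref{propmu13mu24} to pin down the structure. On the $x_3$-axis $x_1=x_2=x_4=0$, so (\ref{defmu}) collapses to $\mu_k = \beta_1(P_k) + \imath\pi N - x_3$ with $x_3\in\mathbb{R}$. The task is thus to understand $\beta_1(P_k)$ under the curve symmetries acting on the four roots of the (now biquadratic) Atiyah-Ward equation.

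Two symmetries are available. The curve automorphism $\sigma:(\zeta,\eta)\mapsto(-\zeta,-\eta)$ leaves both $v=d\zeta/4\eta$ and $\gamma_{\infty}$ invariant, so $\beta_1(\sigma(P))-\beta_1(P)=c_\sigma$ is a constant equal to $\int_{P_0}^{\sigma(P_0)}\gamma_\infty$; using the explicit theta representation (\ref{gammatheta}) and the $\mathfrak{a}$-normalisation at the branch points one obtains $c_\sigma\equiv \imath\pi/2 \pmod{\imath\pi}$. Complex conjugation gives the analogue of the $x_2$-axis calculation in the proof of Proposition \ref{propmux2axis}: using $\mathfrak{J}(\infty_1)=\infty_2$ together with the conjugate-path rule for $\int \gamma_\infty$ yields $\overline{\beta_1(P)}\equiv \beta_1(\overline{P})-\imath\pi/2 \pmod{\imath\pi}$, where $\overline{P}:=(\overline{\zeta},\overline{\eta})$. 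Together with the ordering of roots recorded in Section \ref{sectionloci} these two identities translate directly into constraints on the $\mu_k$.

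For $x_3<Kk'/2$ we have $P_2=\overline{P_1}$ and $P_3=\sigma(P_1)$, $P_4=\sigma(P_2)$, so the identities above give (all $\pmod{\imath\pi}$) $\mu_3\equiv\mu_1+\imath\pi/2$, $\mu_4\equiv\mu_2+\imath\pi/2$, and $\overline{\mu_1}\equiv\mu_2-\imath\pi/2$. Combined with Proposition \ref{propmu13mu24}, i.e.\ $\mu_1+\overline{\mu_3}\equiv -\imath\pi/2$, this forces $\overline{\mu_1}\equiv -\mu_1$, so $\mu_1=\imath\lambda$ for some real $\lambda=\lambda(x_3)$; the remaining $\mu_k$'s follow directly. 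The normalisation $\lambda(0)\equiv \pi/4 \pmod{\pi}$ is consistent with (\ref{muorigin}), since the origin lies also on the $x_2$-axis.

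For $x_3>Kk'/2$ the roots become purely imaginary and the labelling changes: now $\sigma(P_1)=P_2=\overline{P_1}$ (both coincide because $\overline{\zeta_1}=-\zeta_1$) and $P_3=\mathfrak{J}(P_1)$, $P_4=\sigma(P_3)=\overline{P_3}$. The same two identities become $\mu_2\equiv\mu_1+\imath\pi/2$, $\mu_4\equiv\mu_3+\imath\pi/2$, and $\overline{\mu_1}\equiv \mu_1$ (so $\mu_1$ is real $\pmod{\imath\pi}$), and likewise for $\mu_3$. Proposition \ref{propmu13mu24} then gives $\mu_1+\mu_3\equiv -\imath\pi/2$; writing $\mu_1=\lambda'_3-\imath\pi/2$ and $\mu_3=-\lambda'_3$ with $\lambda'_3\in\mathbb{R}$ yields the stated form. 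The main obstacle will be the careful bookkeeping of the $\pmod{\imath\pi}$ ambiguities throughout, and the explicit evaluation of the shifts $c_\sigma$ and the conjugation constant via the quasi-periodicities of the theta functions; continuity of $\mu_k$ in $x_3$ at the transition point $x_3=Kk'/2$, where the biquadratic degenerates and the root labelling swaps, fixes the branch choices and reconciles the two regimes.
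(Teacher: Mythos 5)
Your proposal is correct and follows essentially the same route as the paper: the paper's (very terse) proof likewise rests on the parity shift $\mu(\sigma(P))=\mu(P)+\tfrac{\imath\pi}{2}$ computed from $\int_{-a}^{a}\gamma_\infty=\tfrac12\oint_{\mathfrak a}\gamma_\infty+\tfrac12\oint_{\mathfrak b}\gamma_\infty$, the conjugation rule borrowed from the $x_2$-axis argument, and Proposition \ref{propmu13mu24}. You merely spell out the combination of these three ingredients in both regimes more explicitly than the paper does.
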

Again there is only one transcendental function  to evaluate.

\subsection{Parameterizing of the axes in terms of Jacobi's Elliptic functions}\label{axesandjacobi}
To compare with existing results shall need in the sequel 
to parameterize the axes in terms of Jacobi's Elliptic functions. For reasons that will later be clearer we
take
\begin{align*}
x_1:&  &  \mathrm{sn}^2(t)&=\frac{4x_1^2}{k^2 K^2},
& \zeta^2&= \frac{k'^2- k^2  \mathrm{cn}^2(t)\pm 2\imath k k'  \mathrm{cn}(t)}{\mathrm{dn}^2(t)},
& \zeta&= \pm \frac{k' \pm \imath k\,  \mathrm{cn}(t)}{\mathrm{dn}(t)},
\\
x_2:& &  \mathrm{dn}^2(t)&=-\frac{4x_2^2}{K^2}
& \zeta^2&=1+ \frac{-2\pm 2 \, \mathrm{cn}(t)}{ \mathrm{sn}^2(t) },
& \zeta&= \pm \imath  \frac{1 \pm \mathrm{cn}(t)}{\mathrm{sn}(t)},
\\
x_3:&  &  \mathrm{cn}^2(t)&=-\frac{4x_3^2}{k^2 K^2},
& \zeta^2&= 2\, \mathrm{dn}^2(t)-1 \pm 2\imath k\, \mathrm{sn}(t) \mathrm{dn}(t),
& \zeta&= \pm (  \mathrm{dn}(t) \pm \imath k\, \mathrm{sn}(t) ).
\end{align*}
There are $4$ choices of $\zeta$ in each of the above corresponding to the $4$ signs, each point giving
a solution to the Atiyah-Ward equation.
Given one solution the other solutions are generated by
 $t\rightarrow t+2K$ and $t\rightarrow t+ 2\imath K'$
so only one solution must be found. Further we note that if
\begin{equation}
\label{transaxes}
t=t' +\imath K'=t''+K+\imath K' \quad\rm{then}\quad
\imath  \frac{1 + \mathrm{cn}(t)}{\mathrm{sn}(t)}= \mathrm{dn}(t') + \imath k\, \mathrm{sn}(t') 
=\frac{k'  + \imath k\,  \mathrm{cn}(t'')}{\mathrm{dn}(t'')}.
\end{equation}

Previously we have parameterised 
$\zeta=-i\,{\theta_2[P]\theta_4[P]}/{\theta_1[P]\theta_3[P]}$,
{and now we have, for example on the $x_2$-axis }
$$\zeta=\pm \imath  \frac{1 \pm \mathrm{cn}(t)}{\mathrm{sn}(t)}=
\pm \imath\, \frac{ \theta_2 \theta_4(z)\pm \theta_4 \theta_2(z)}{\theta_3 \theta_1(z)}
$$
where $z=t/(2 K)$. Can we relate the Abel images of $P$ to $z$? We show in Appendix \ref{proofjacobipar}
\begin{proposition}\label{jacobipar} 
The Jacobi parameterizations of the axes, given above, follow upon taking
$z=-2\,\alpha(P)-1/2-\tau/2$ for the $x_1$-axis, 
$z=-2\,\alpha(P)$ for the $x_2$-axis and
$z=-2\,\alpha(P)-\tau/2$ for the $x_3$-axis.
\end{proposition}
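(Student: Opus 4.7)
The plan is to prove Proposition \ref{jacobipar} by direct substitution into the theta parameterization $\zeta(P) = -i\,\theta_2[P]\theta_4[P]/(\theta_1[P]\theta_3[P])$ of Lemma \ref{parcurve}. Since $\zeta$ together with the choice of sheet determines a point on $\mathcal{C}$ (hence also fixes $\eta$), it suffices to match the $\zeta$ values under the proposed change of parameter and then check the sheet assignment.

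First I would treat the $x_2$-axis, the cleanest case, where $z = -2\alpha(P)$ gives $\alpha = -z/2$. Using the parity $\theta_1(-w) = -\theta_1(w)$ and the evenness of $\theta_{2,3,4}$, Lemma \ref{parcurve} yields
\[
\zeta = i\,\frac{\theta_2(z/2)\theta_4(z/2)}{\theta_1(z/2)\theta_3(z/2)}.
\]
On the other side, translating $\mathrm{sn}(t)$ and $\mathrm{cn}(t)$ into thetas via the classical formulas $\mathrm{sn}(t) = (\theta_3/\theta_2)\,\theta_1(z)/\theta_4(z)$ and $\mathrm{cn}(t) = (\theta_4/\theta_2)\,\theta_2(z)/\theta_4(z)$ with $z = t/(2K)$, the tabulated Jacobi expression $i(1\pm\mathrm{cn}(t))/\mathrm{sn}(t)$ becomes $i\bigl(\theta_2\theta_4(z)\pm\theta_4\theta_2(z)\bigr)/\bigl(\theta_3\theta_1(z)\bigr)$. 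Equality of the two forms then reduces to the standard duplication identities
\[
\theta_1(z) = \frac{2\,\theta_1(z/2)\theta_2(z/2)\theta_3(z/2)\theta_4(z/2)}{\theta_2\theta_3\theta_4},
\qquad
\theta_2\theta_4(z)+\theta_4\theta_2(z) = 2\,\theta_2(z/2)\theta_4(z/2)\,\theta_3,
\]
together with the analogous half-argument expressions for $\theta_2\theta_4(z)-\theta_4\theta_2(z)$. These are all classical consequences of Jacobi's theta addition theorem (see Appendix \ref{thetafunctidentapp}).

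For the $x_1$- and $x_3$-axes the same strategy applies, but I first have to absorb the extra quarter-period shifts $c = -\tfrac14 - \tfrac{\tau}{4}$ and $c = -\tfrac{\tau}{4}$. Under $\alpha \to \alpha + \tfrac12$ the theta functions interchange as $\theta_1 \leftrightarrow \theta_2$, $\theta_3 \leftrightarrow \theta_4$ (up to one sign), while $\alpha \to \alpha + \tfrac{\tau}{2}$ induces the rotation $\theta_1 \mapsto i\theta_4,\ \theta_2 \mapsto \theta_3,\ \theta_3 \mapsto \theta_2,\ \theta_4 \mapsto i\theta_1$ multiplied by a common exponential factor $e^{-i\pi(\alpha+\tau/4)}$ that cancels between numerator and denominator of the ratio defining $\zeta$. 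Inserting these shifts and then repeating the duplication argument of the $x_2$-case, the theta formula for $\zeta$ collapses to the expected $\pm(k' \pm ik\,\mathrm{cn}(t))/\mathrm{dn}(t)$ and $\pm(\mathrm{dn}(t) \pm ik\,\mathrm{sn}(t))$ on the $x_1$- and $x_3$-axes respectively; equivalently one checks the squared forms $\zeta^2$ using the relations (\ref{uniform1}) specialised to the corresponding axis.

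The main obstacle I anticipate is not any deep identity but the bookkeeping of discrete choices. Each axis supports four Jacobi branches labelled by independent $\pm$ signs, matching the four Atiyah-Ward solutions $P_1,\ldots,P_4$; likewise the theta parameterization carries ambiguities from the choice of contour in the Abel map and from full-period shifts. To pin down that the shifts $c = 0,\ -\tfrac14-\tfrac{\tau}{4},\ -\tfrac{\tau}{4}$ realise the sheet orderings of (\ref{x2axisps}), (\ref{x1axispssmall}), (\ref{x1axispsbig}) and the $x_3$-axis discussion, rather than some permutation thereof, I would anchor the argument at a reference point: for the $x_2$-axis, $\boldsymbol{x}=\boldsymbol{0}$ gives $P_1 = (k'+ik,0_1) = P_0$, so $\alpha(P_1)=0$ must correspond to $z=0$, forcing $c=0$; analogous evaluations at $\boldsymbol{x}=\boldsymbol{0}$ for the $x_1$- and $x_3$-axes, combined with the Abel images $\phi(\infty_j), \phi(0_j)$ recorded in (\ref{abelimages}), fix the remaining two constants.
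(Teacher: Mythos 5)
Your strategy is sound and proves the proposition, but it is organised differently from the paper's argument. The paper does not verify the substitution by duplication formulae: it writes the required relation, e.g.\ for the $x_2$-axis
$\theta_2[P]\theta_4[P]\,\theta_3\,\theta_1(z)=\theta_1[P]\theta_3[P]\,\theta_2\,\theta_4(z)+\theta_1[P]\theta_3[P]\,\theta_4\,\theta_2(z)$,
recognises the whole four-term relation as an instance of the Weierstrass trisecant identity (W3) of Appendix \ref{thetafunctidentapp}, and \emph{solves} for the argument by exhibiting the quadruple $\boldsymbol{\alpha}=(0,P,-P,2P)$, which forces $z=-2\alpha(P)$; the $x_1$- and $x_3$-cases are handled by first absorbing the half-period shifts (exactly as you do) and then finding the corresponding quadruples. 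Your route instead \emph{assumes} the stated $z$ and verifies it by reducing both sides to half-argument data via the duplication formula for $\theta_1$ and its companions. Since the trisecant identities and the duplication/addition formulae are both consequences of Jacobi's addition theorem, the underlying mathematics is the same; what the paper's packaging buys is that one never needs to guess $z$ (it drops out of the trisecant system), while yours is more elementary and self-checking. Your extra anchoring step at $\boldsymbol{x}=\boldsymbol{0}$ is a reasonable way to pin the additive constant, although the paper sidesteps it by observing that the residual ambiguity $z\to z+1$, $z\to z+\tau$ merely permutes the four $\pm$ branches already present in the parameterization, and that by (\ref{transaxes}) only one axis need be treated.

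Two slips to repair, neither fatal. First, the half-argument identity you quote is wrong as written: at $z=0$ it reads $2\theta_2\theta_4=2\theta_2\theta_4\theta_3$. The correct statement, obtained from $\theta_2(x+y)\theta_4(x-y)\theta_2\theta_4=\theta_2(x)\theta_4(x)\theta_2(y)\theta_4(y)-\theta_1(x)\theta_3(x)\theta_1(y)\theta_3(y)$ at $x=\pm y=z/2$, is
\begin{equation*}
\theta_2\,\theta_4(z)+\theta_4\,\theta_2(z)=\frac{2\,\theta_2^2(z/2)\,\theta_4^2(z/2)}{\theta_2\theta_4},
\end{equation*}
and with this the $x_2$-axis verification closes. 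Second, in the anchoring you assert $\alpha(P_1)=0$ at the origin because $P_1=P_0$; but $\alpha$ is based at $\infty_1$, and by (\ref{abelimages}) one has $\alpha(P_0)=-(1+\tau)/4\neq 0$. The check still succeeds once corrected: $z=-2\alpha(P_0)=(1+\tau)/2$ gives $t=K+iK'$, and $i(1+\mathrm{cn}(K+iK'))/\mathrm{sn}(K+iK')=k'+ik=\zeta_1(\boldsymbol{0})$, confirming $c=0$ (alternatively, anchor at $P=\infty_1$, where $\alpha=0$, $z=0$ and both sides give $\zeta=\infty$).
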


\subsection{Expressions for \texorpdfstring{$\mu$}{mu} in terms of the Jacobi Zeta function}\label{muandjacobi}
We have as yet to identify the functions $\lambda_{1,2,3}$ beyond their definition: we do this now.
Set
\begin{align}
\lambda_1&:= \frac12 K Z( \mathrm{sn}^{-1}\left( \frac{2x_1}{k K},k \right),k ),\label{deflambda1}\\
\lambda'_2 &:= \frac12 K Z( \mathrm{dn}^{-1}\left( \frac{2\imath x_2}{ K},k \right),k ),\label{deflambda2p}\\
\lambda_3& := \frac12 K Z( \mathrm{cn}^{-1}\left( \frac{2\imath x_3}{ K k},k \right),k ).\label{deflambda3}
\end{align}
where $Z(v)$ is the Jacobi Zeta function. Then

\begin{lemma}
\begin{align*}
\frac{\mathrm{d} \mu_1(x_1,0,0)}{\mathrm{d}  x_1} &=
\frac{\mathrm{d} \lambda_1}{\mathrm{d}  x_1} 
=-\frac{EK-K^2+4x_1^2}{\sqrt{K^2-4x_1^2}\sqrt{K^2{k}^2-4x_1^2}},
\\
\frac{\mathrm{d} \mu_1(0,x_2,0)}{\mathrm{d}  x_2} &=
\frac{\mathrm{d} \lambda'_2}{\mathrm{d}  x_2} =- \frac{EK+4x_2^2}{\sqrt{K^2+4x_2^2}\sqrt{K^2{k'}^2+4x_2^2}},
\\
\frac{\mathrm{d} \mu_1(0,0,x_3)}{\mathrm{d}  x_3} &=
\frac{\mathrm{d} \lambda_3}{\mathrm{d}  x_3}
=\frac{K^2 k^2+E K-K^2+4 x_3^2}{\sqrt{K^2 k^2+4x_3^2}\sqrt{-K^2{k'}^2+4x_3^2}}.
\end{align*}
\end{lemma}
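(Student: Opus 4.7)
The plan is to prove all three identities in a uniform manner by converting the derivative along each coordinate axis to a derivative with respect to the Jacobi parameter $t$ introduced in Subsection \ref{axesandjacobi}, then simplifying using the definitions of $\mu_1$ and $\lambda_j$ together with standard Jacobi identities.

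First, I would specialise the general definition (\ref{defmu}) (with $x_4=0$) to each axis. With our conventions this yields $\mu_1=\beta_1(P_1)+\imath\pi N-\imath x_1\zeta_1$ on the $x_1$-axis, $\mu_1=\beta_1(P_1)+\imath\pi N-x_2\zeta_1$ on the $x_2$-axis, and $\mu_1=\beta_1(P_1)+\imath\pi N-x_3$ on the $x_3$-axis. Differentiating using the formula $\partial_i\beta_1(P)=\tfrac{K^2}{4\eta}\bigl(\zeta^2-(2E-K)/K\bigr)\partial_i\zeta$ from the subsection on derivatives reduces everything to knowing $d\zeta/dt$ on the axis. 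The key observation is that Proposition \ref{jacobipar} states $z_\theta=\pm 2\alpha(P)+\text{const}$ with $z_\theta=t/(2K)$, so $d\alpha(P)/dt=\pm 1/(4K)$, which combined with $\boldsymbol{v}=d\zeta/(4\eta)$ gives
\begin{equation*}
\frac{d\zeta}{dt}=\pm\frac{\eta}{K}.
\end{equation*}
This identity can also be verified directly from the explicit parameterisations in Subsection \ref{axesandjacobi} and the derivative rules for $\sn,\cn,\dn$.

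With this in hand, $d\beta_1/dt=\tfrac{K^2}{4\eta}(\zeta^2-(2E-K)/K)\cdot(\pm\eta/K)=\pm\bigl[K\zeta^2/4-(2E-K)/4\bigr]$. I would then assemble $d\mu_1/dt$ on each axis, substituting the explicit parameterisations of $\zeta$ and $x_j$ and using $\dn^2=k'^2+k^2\cn^2=1-k^2\sn^2$. For the $x_1$-axis one computes
\begin{equation*}
\frac{d\mu_1}{dt}=\frac{K\zeta^2}{4}-\frac{2E-K}{4}-\frac{\imath k K\cn\dn\,\zeta}{2}-\frac{k^2 k' K\sn^2\zeta}{2\dn},
\end{equation*}
and substituting $\zeta=(k'+\imath k\cn)/\dn$ the imaginary parts collapse to zero by the identity $1-k^2\sn^2=\dn^2$, while the real parts simplify using the factorisation $k'^2-k^2\cn^2-2k^2k'^2\sn^2=\dn^2(k'^2-k^2)$ to give $\tfrac{K}{2}(\dn^2-E/K)$. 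The analogous computations on the $x_2$- and $x_3$-axes proceed identically, with the imaginary parts again cancelling.

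On the other hand, because $Z'(u)=\dn^2(u)-E/K$ is a standard identity for the Jacobi Zeta function, direct differentiation of the definitions (\ref{deflambda1})--(\ref{deflambda3}) gives $d\lambda_j/dt=\tfrac{K}{2}(\dn^2(t)-E/K)$ on each axis (up to sign conventions absorbed in $\pm$ above). Hence $d\mu_1/dt=d\lambda_j/dt$. Finally, converting back via $dx_j/dt$ computed from the explicit parameterisation (for instance $dx_1/dt=kK\cn\dn/2$ on the $x_1$-axis) and expressing $\sn,\cn,\dn$ in terms of $x_j$ yields the stated closed forms. The main obstacle is the algebraic simplification showing that the imaginary parts of $d\mu_1/dt$ cancel identically; this is where the Jacobi identity $\dn^2=1-k^2\sn^2$ is essential, and is also the step that makes the computation pleasantly independent of the phase constants in Propositions \ref{propmux1axis}--\ref{propmux3axis}.
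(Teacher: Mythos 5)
Your proposal is correct, and it reaches the result by a route that differs in an instructive way from the paper's. The paper computes $\mathrm{d}\mu_1/\mathrm{d}x_j$ directly by the chain rule in the spatial coordinate,
\begin{equation*}
\frac{\mathrm{d}\mu_1}{\mathrm{d}x_2}=\frac{\mathrm{d}\beta_1}{\mathrm{d}\alpha_1}\,\frac{\mathrm{d}\alpha_1}{\mathrm{d}\zeta_1}\,\frac{\mathrm{d}\zeta_1}{\mathrm{d}x_2}-\zeta_1-x_2\frac{\mathrm{d}\zeta_1}{\mathrm{d}x_2},
\end{equation*}
taking $\mathrm{d}\beta_1/\mathrm{d}\alpha_1$ from (\ref{diffbetac}), $\mathrm{d}\alpha_1/\mathrm{d}\zeta_1=1/(4\eta_1)$ from (\ref{curveident3}), and $\mathrm{d}\zeta_1/\mathrm{d}x_2$ by implicit differentiation of the Atiyah--Ward quartic, then simplifying to the stated closed form; the matching of this with $\mathrm{d}\lambda'_2/\mathrm{d}x_2$ is a separate (routine) verification. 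You instead reparameterise each axis by the Jacobi argument $t$, using Proposition \ref{jacobipar} to get $\mathrm{d}\alpha/\mathrm{d}t=\mp 1/(4K)$ and hence $\mathrm{d}\zeta/\mathrm{d}t=\mp\eta/K$, and show that both $\mathrm{d}\mu_1/\mathrm{d}t$ and $\mathrm{d}\lambda_j/\mathrm{d}t$ equal $\tfrac{K}{2}\bigl(\mathrm{dn}^2(t)-E/K\bigr)=\tfrac{K}{2}Z'(t)$. This avoids the implicit differentiation of the quartic, makes the cancellation of imaginary parts transparent via $\mathrm{dn}^2=1-k^2\mathrm{sn}^2$ (I checked your factorisation $k'^2-k^2\mathrm{cn}^2-2k^2k'^2\mathrm{sn}^2=(k'^2-k^2)\,\mathrm{dn}^2$ and the real-part reduction; both are correct), and explains conceptually why the Jacobi Zeta function appears at all. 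The price is a dependence on the sign and shift conventions of Proposition \ref{jacobipar} and on the branch/ordering of $\zeta_1$, which you correctly flag and absorb into the overall $\pm$; as in the paper, the additive constants relating $\mu_1$ to $\lambda_j$ still have to be fixed separately at special points, which is done in (\ref{defmus}) and is not part of this lemma.
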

This may be proven directly or via (\ref{muderivatives}). For example,
\begin{align*}
\frac{\mathrm{d} \mu_1}{\mathrm{d}  x_2}= \frac{  \mathrm{d} \beta_1}{ \mathrm{d} \alpha_1}
 \frac{  \mathrm{d} \alpha_1}{ \mathrm{d} \zeta_1} \frac{  \mathrm{d} \zeta_1}{ \mathrm{d} x_2}-\zeta_1
 - x_2\frac{\mathrm{d} \zeta_1}{ \mathrm{d} x_2} ,
\end{align*}
and $\mathrm{d} \beta_1/{ \mathrm{d} \alpha_1}$ was determined in (\ref{diffbetac}) while from 
(\ref{curveident3}) we have
$\mathrm{d} \alpha_1/{ \mathrm{d} \zeta_1} =1/(4\eta_1)$. Finally the ${\mathrm{d} \zeta_1}/{ \mathrm{d} x_2}$ term comes by implicit differentiation of the Atiyah-Ward equation. After simplification we obtain the derivative given above. 

It follows then that $\mu_1(0,x_2,0)=\lambda'_2+$constant and similarly for the other axes. The constant
may be determined by comparison at the origin. 
(There is a choice of square root here that may be appropriately chosen.) Now $\dn(K\pm i K')=0$ and
$ K Z(K \pm i K')/2 = \mp i\pi/4$ and we may identify $\mu_1(0,x_2,0)=\mp \lambda'_2(x_2)$. The choice of
sign ultimately makes no difference (it will correspond to the symmetry of the $x_2$ Higgs field about the origin) and we choose  $\mu_1(0,x_2,0)= \lambda'_2(x_2)$.
For the identification on the $x_3$ axis we recall that $\cn(K)=0$ and $ K Z(K )/2 = 0$.
Noting (\ref{muorigin}) we find\footnote{We remark that Maple's inbuilt function ${\tt InverseJacobiDN}$
has the unwanted behaviour 
$${\tt Re( InverseJacobiDN(JacobiDN(x, k), k))= |x|}$$
and must be used cautiously.
}
\begin{align}
\mu_1(x_1,0,0)&=\lambda_1+\frac{\imath \pi}4, \qquad\qquad (x_1\in {\bf I}),\nonumber\\
\quad  \mu_1(0,x_2,0)&=\lambda_2+\frac{\imath \pi}4= \lambda'_2(x_2) ,
\label{defmus}\\
\quad  \mu_1(0,0,x_3)&=\lambda_3+\frac{\imath \pi}4=\imath \lambda, \qquad (x_3 <Kk'/2).\nonumber
\end{align}
We plot these in Figures \ref{lambdax1}, \ref{lambdax2}, \ref{lambdax3} respectively and note that the points
of discontinuity on the $x_1$ and $x_3$ axes correspond to the vanishing of the Atiyah-Ward discriminant
corresponding to points of bitangency; these will be described in the next section.
\begin{figure}
\begin{center}
\includegraphics[width=0.4\textwidth]{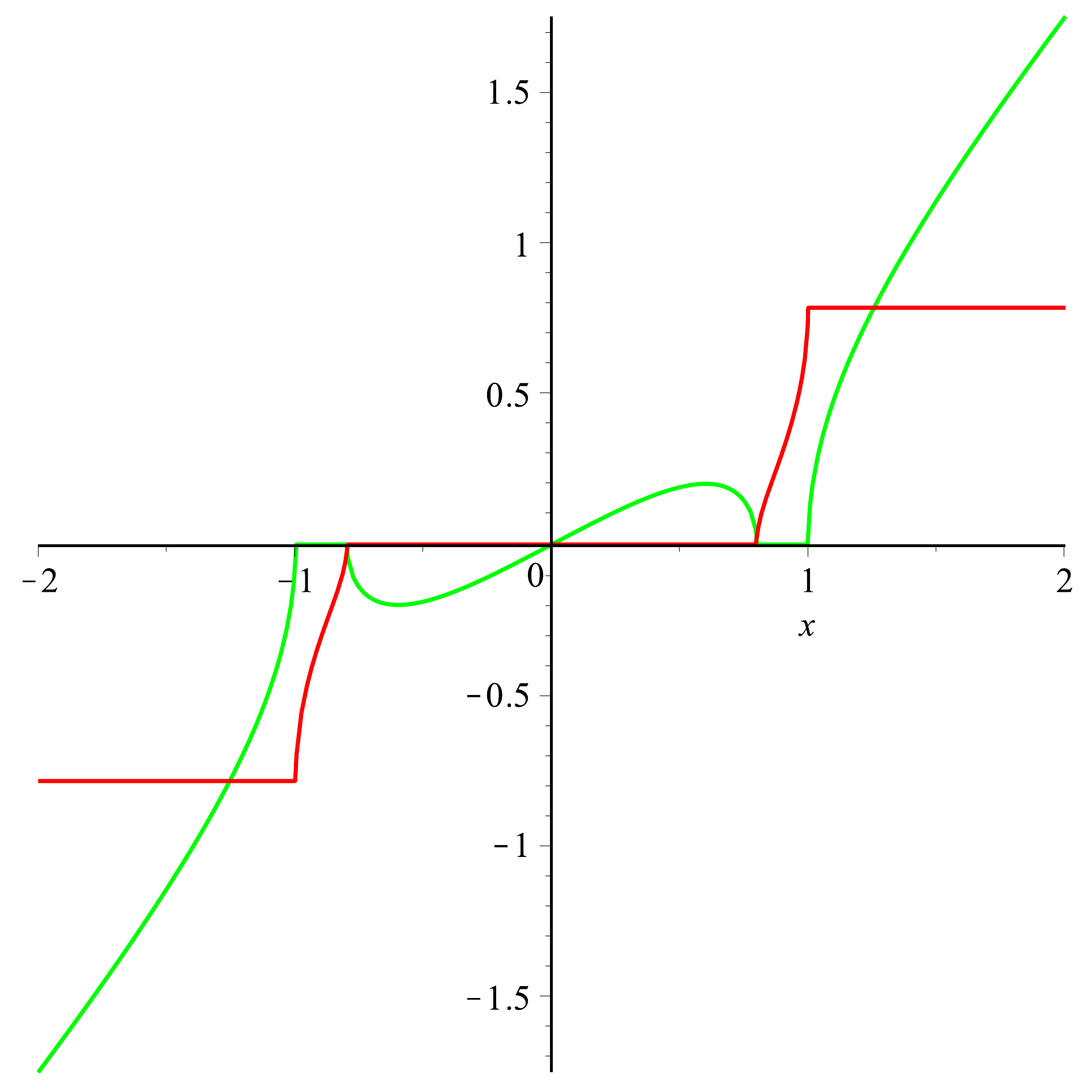}
\end{center}
\caption{The real (green) and imaginary (red) parts of $\lambda_1$ restricted to $x_1$-axis  $k=0.8$. 
The points of discontinuity correspond to  points of bitangency.}
\label{lambdax1}
\end{figure}
\begin{figure}
\begin{center}
\includegraphics[width=0.4\textwidth]{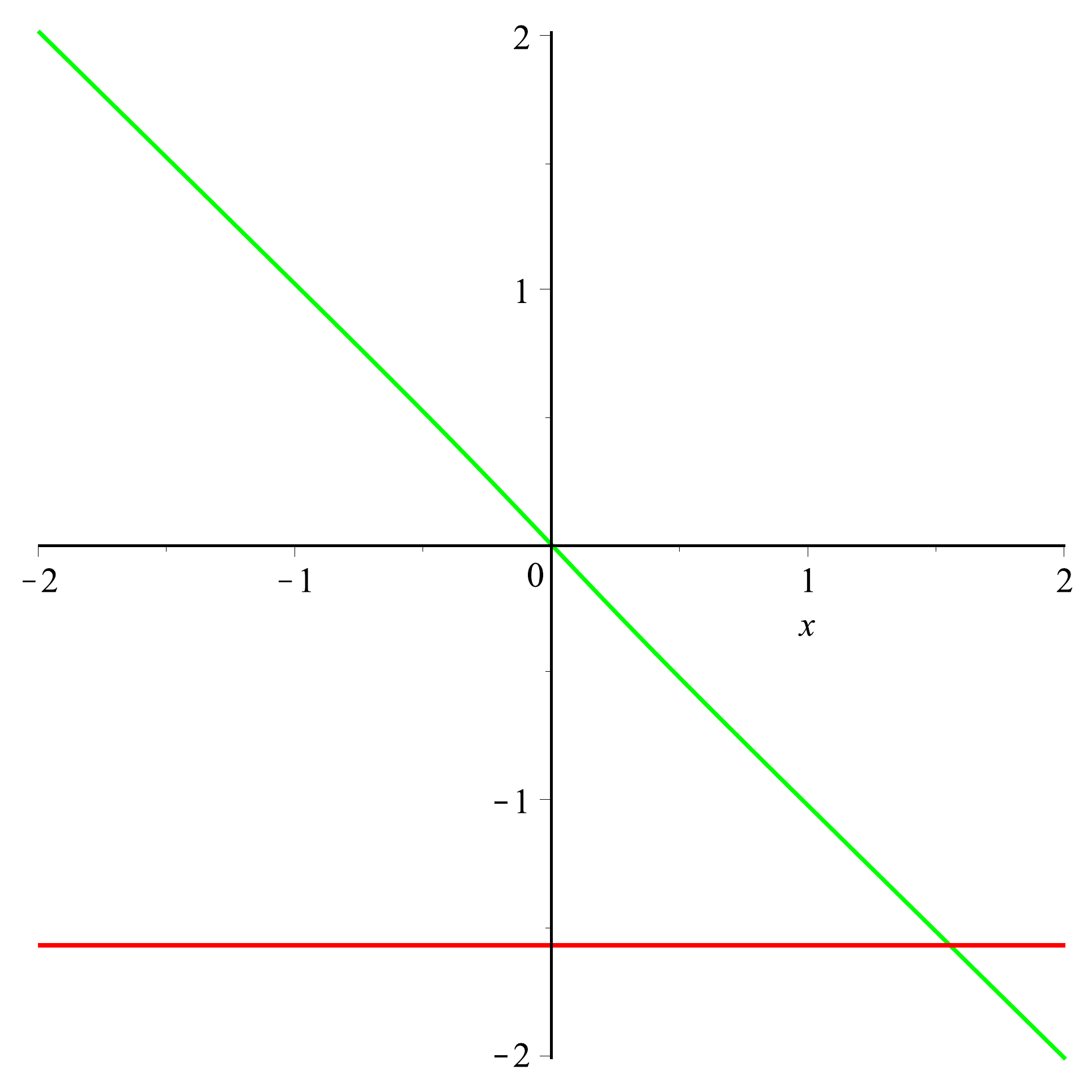}
\end{center}
\caption{The real (green) and imaginary (red) parts of $\lambda_2$ restricted to $x_2$-axis  $k=0.8$ }
\label{lambdax2}
\end{figure}
\begin{figure}
\begin{center}
\includegraphics[width=0.4\textwidth]{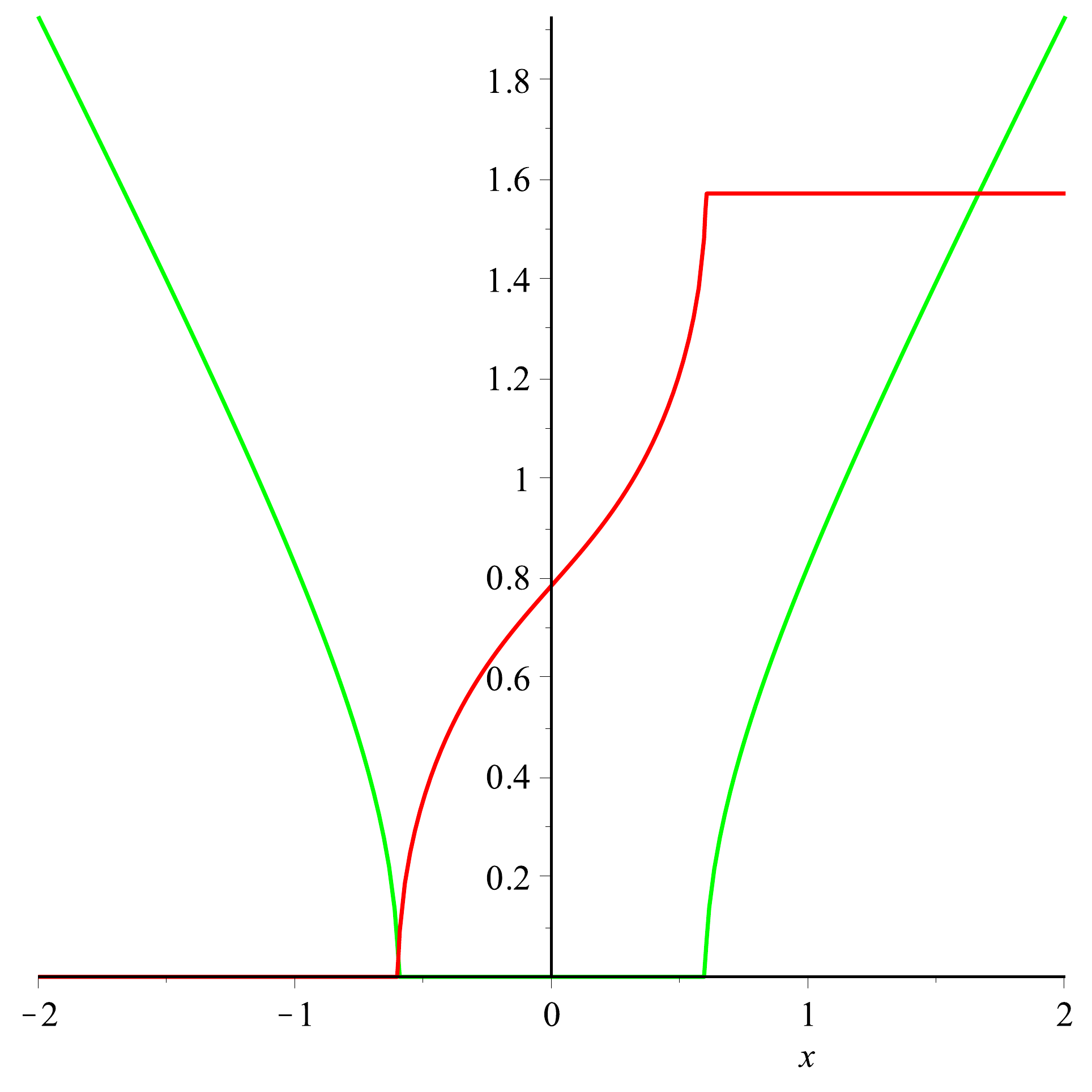}
\end{center}
\caption{The real (green) and imaginary (red) parts of $\mu_1$ restricted to $x_3$-axis  $k=0.8$.
The points of discontinuity correspond to  points of bitangency. }
\label{lambdax3}
\end{figure}

\section{The points of bitangency}\label{sectionbitangency}

Substituting the Atiyah-Ward constraint 
$$\eta=(x_2-\imath x_1)-2\zeta x_3-(x_2+\imath x_1)\zeta^2$$
into the equation of the spectral curve $P(\zeta,\eta)=0$ gives an equation of (in general) degree $2n$ in $\zeta$ and generically this has $2n$ solutions. There are however loci of points $\boldsymbol{x}\in \mathbb{R}^3$  for which we have multiple roots, the points of bitangency we have referred to;
calculating the discriminant of our equation describes the locus. This discriminant is of degree $2n$ in the $x_i$'s but because of the reality conditions it may be expressed as a polynomial of degree $n$ in the $x_i^2$'s with no odd power of $x_i$ appearing.
Focussing on the $n=2$ case  the discriminant $Q$ is a quartic in $X=x_1^2$, $Y=x_2^2$, $Z=x_3^2$. This discriminant is not very enlightening: we find  real loci
\begin{align}
0&={K}^{2}{{\it k'}}^{2}-4\,{{\it k'}}^{2}{x_{{1}}}^{2}-4\,{x_{{3}}}^{2}\label{bitangency1}
\\
{x_{{1}}}^{2}&=\frac14\,{\frac { \left( {K}^{2}{{\it k'}}^{2}+4\,{x_{{2}}}^
{2} \right) {k}^{2}}{{{\it k'}}^{2}}}.\label{bitangency2}
\end{align}
Thus on the $x_1$ axis there are the 4 solutions $\pm K/2$, $\pm Kk/2$, no solutions on the
$x_2$ axis and $\pm Kk'/2$ on the $x_3$ axis. These are the points we have previously noted.
We remark that in the case of (\ref{bitangency1}) we obtain double roots
with modulus less than one and greater than one while for (\ref{bitangency2}) we find the roots of the 
Atiyah-Ward constraint have modulus one. Hurtubise \cite{hurtubise85a} in his study of the asymptotics
of the Higgs field gave the first of these loci. In Appendix \ref{comparingcurves} we will relate  Hurtubise's curve and our own.

\section{The Matrix \texorpdfstring{${W}$}{W}}\label{sectionW}
In this section we will determine the matrix $W$  (\ref{nahmW}) and its inverse. We shall calculate $W\sp{-1}$ via cofactors and this will involve a number of the identities established in the previous section. We will also look at the behaviour of $W$ at the $z=\pm1$.

The $4\times4$ matrix
\[ W =\left(\boldsymbol{w}\sp{(k)}(z,x)\right)\]
is constructed from
\begin{align*}
\boldsymbol{w}\sp{(k)}(z,x)
&=\left(1_2\otimes C(z)\right) \left(
(1_2+\boldsymbol{\widehat{ u}}(\zeta)\cdot\boldsymbol{\sigma})\,
e\sp{-i z\left[(x_1-i x_2)\zeta-i x_3 -x_4\right]}|s>\otimes\,
\boldsymbol{\Phi}(z,P_k)
\right)
\end{align*}
where $P_k=(\zeta_k,\eta_k)$ are solutions to the mini-twistor constraint. From (\ref{bafull}, \ref{bachi})
the Baker-Akhiezer function takes the form
\[
\boldsymbol{\Phi}(z,P_k)= \begin{pmatrix}a_k \\ b_k \end{pmatrix}\,\mathcal{D}'_k
\]
where (again with $\alpha_k:=\alpha(P_k)$) we have
\begin{equation}\label{abbadef}
a_k=-\theta_3(\alpha_k) \theta_2(\alpha_k-z/2),\quad b_k=
\theta_1(\alpha_k) \theta_4(\alpha_k -z/2),
\end{equation}
and
$$
\mathcal{D}'_k= \frac{\theta_2(1/4)\theta_3(1/4)}{\theta_3(0)
\theta_1(\alpha_k-1/4)\theta_4(\alpha_k+1/4)}\,
\frac{e\sp{\beta_1(P_k) z}}{\theta_2(z/2)}
.
$$
Now the kernel of
$$1_2+\boldsymbol{\widehat{ u}}(\zeta)\cdot\boldsymbol{\sigma}=
\frac{2}{1+|\zeta|^2}\, \begin{pmatrix}1&0\\ 0&\im\zeta \end{pmatrix}
 \begin{pmatrix} 1& -\im {\bar \zeta}\\  1& -\im {\bar \zeta}\end{pmatrix}
$$
has basis $\begin{pmatrix}\im {\bar \zeta}\\ 1\end{pmatrix}$ for finite $\zeta$ and 
 $\begin{pmatrix}1\\0 \end{pmatrix}$ for the infinite case. Thus we may take for our construction
\[|s>=\begin{pmatrix}\frac12\\ 0\end{pmatrix}\]
for all directions apart from $\zeta=\infty$ which gives us 
\[
(1_2+\boldsymbol{\widehat{ u}}(\zeta)\cdot\boldsymbol{\sigma})\,
[|s>=\frac{1}{1+|\zeta|^2}\begin{pmatrix}1\\ i\zeta\end{pmatrix}.
\]
We may therefore write
\begin{equation}\label{formW}
W=\left(1_2\otimes C(z)\right) \Psi\, \mathcal{D},\qquad \mathcal{D}=\diag(\mathcal{D}_k),
\end{equation}
where
\begin{equation}\label{formD}
\quad
\mathcal{D}_k=\frac{1}{1+|\zeta_k|^2}\, \frac{\theta_2(1/4)\theta_3(1/4)}{\theta_3(0)
\theta_1(\alpha_k-1/4)\theta_4(\alpha_k+1/4)}\,
\frac{e\sp{\beta_1(P_k) z-i z\left[(x_1-i x_2)\zeta_k-i x_3 -x_4\right]}}{\theta_2(z/2)}
\end{equation}
and
\begin{equation}\label{formphi}
\Psi=\left( \begin{pmatrix}1\\ i\zeta_k\end{pmatrix}\otimes
\begin{pmatrix}a_k \\ b_k \end{pmatrix}\right)
=\begin{pmatrix} a_{{1}}&a_{{2}}&a_{{3}}&a_{{4}}
\\ \noalign{\medskip}b_{{1}}&b_{{2}}&b_{{3}}&b_{{4}}
\\ \noalign{\medskip}i\zeta_{{1}}a_{{1}}&i\zeta_{{2}}a_{{2}}&i\zeta_{{3}}
a_{{3}}&i\zeta_{{4}}a_{{4}}\\ \noalign{\medskip}i\zeta_{{1}}b_{{1}}&i
\zeta_{{2}}b_{{2}}&i\zeta_{{3}}b_{{3}}&i\zeta_{{4}}b_{{4}}\end {pmatrix}.
\end{equation}

We remark that if we work with with the standard Nahm matrices (\ref{standardnahm}) then
the corresponding conjugation by $\mathcal{O}$ replaces 
${\widehat{ \boldsymbol{w}}(z)}=C(z)\boldsymbol{\Psi}(z,\zeta,\eta)$ by $\mathcal{O}\,C(z)\boldsymbol{\Psi}(z,\zeta,\eta)$ and consequently we would have
$$W=\left(1_2\otimes \mathcal{O}\,C(z)\right) \Psi\, \mathcal{D}.$$

We shall calculate $W\sp{-1}$ via cofactors. Upon noting that
\[ |W|=|C(z)|\, | \Psi|\, |\mathcal{D}|= | \Psi|\, |\mathcal{D}|\]
in the following subsections we shall calculate $|\mathcal{D}|$, ${|{\Psi}|}$ and finally
the adjoint $\adj\Psi$ but before turning to this a helpful check is to look at the pole structure of $W$
which provides a nontrivial check of the solution.

\subsection{The pole structure of \texorpdfstring{$\boldsymbol{W}$}{W}} 
We have seen in Lemma \ref{bafunction} that for the case at hand the Baker-Akhiezer function
only has simple poles at $z=\pm1$. (For general $n$ one has that $\dot h h\sp{-1}$ only has simple
poles \cite{ercolani_sinha_89, Braden2010d} though the the Baker-Akhiezer function has
higher order poles  \cite{Braden2010d}.) Using 
\begin{equation}\label{theta2expanp}
\theta_2\left( (1-\xi)/2\right)=\theta_1\left( \xi/2\right)=
\frac{\xi}{2}\,\theta_1'(0)+\frac{\xi^3}{48}\,\theta_1'''(0)+\mathcal{O}(\xi^5),\qquad
\theta_1'(0)
=\pi \,\theta_{{2}} \theta_{{3}} \theta_{{4}}, \end{equation}
we obtain (in the following $c$, $c'$ etc are constants)
\begin{align*}
\boldsymbol{\Phi}(1-\xi,P_1) &=\frac{2 c}{\xi} \begin{pmatrix}
-{\theta_{{3}} \left( P_{{1}}
 \right) \theta_{{2}} \left( P_{{1}}-1/2+\xi /2 \right) {{\rm e}^{
\beta \left( P_{{1}} \right)  \left( 1-\xi \right) }}}\\ 
{\theta_{{1}}
 \left( P_{{1}} \right) \theta_{{4}} \left( P_{{1}}-1/2+\xi /2
 \right) {{\rm e}^{\beta \left( P_{{1}} \right)  \left( 1-\xi \right) 
}}}\end {pmatrix}  +\mathcal{O}(\xi) \\
&=
\frac{ c}{\xi} \begin{pmatrix}
-2\,\theta_{{3}} \left( P_{{1}} \right) 
\theta_{{1}} \left( P_{{1}} \right)  +
\theta_{{3}} \left( P_{{1}} \right) \xi\, \left( 2\,\beta \left( P_{{1
}} \right) \theta_{{1}} \left( P_{{1}} \right) -
\theta_{{1}}'  \left( P_{{1}} \right)  \right) 
\\ 
\noalign{\medskip}2\,\theta_{{3}} \left( P_{{1}}
 \right) \theta_{{1}} \left( P_{{1}} \right)
 -\theta_{{1}} \left( P_{{1}} \right) \xi\, \left( 2\,\beta \left( P_{{
1}} \right) \theta_{{3}} \left( P_{{1}} \right) -
\theta_{{3}}'  \left( P_{{1}} \right)  \right) 
 \end {pmatrix} {{\rm e}^{\beta \left( P_{{1}}
 \right) }}
 +\mathcal{O}(\xi)
\end{align*}
and we see this simple pole behaviour. Now from (\ref{expancp}) and (\ref{Odef}) we have
\begin{align*}
\mathcal{O}\,C(1-\xi)&=\frac1{\sqrt {\xi} }\frac1{ \sqrt {K}} \begin{pmatrix}
\xi\,K/2&-\xi\,K/2
\\ \noalign{\medskip}1 &1
\end{pmatrix}
+\mathcal{O}(\xi^{3/2}),
\intertext{and consequently that}
\mathcal{O}\,C(1-\xi) \boldsymbol{\Psi}(1-\xi,P_1) &=
\frac{c'}{\sqrt {\xi} }\begin{pmatrix}
-2\,\theta_{{3}} \left( P_{{1}} \right) 
\theta_{{1}} \left( P_{{1}} \right)  K
 \\
\theta_{{1}} \left( P_{{1}} \right) \theta_{{3}}'  \left( P_{{1}} \right) -\theta_{{3}} \left( P_{{1}} \right) 
\theta_{{1}}' \left( P_{{1}} \right) 
\end{pmatrix}
+\mathcal{O}(\xi^{1/2}).
\intertext{Upon making use of (\ref{curveident1}) this takes the form}
&=
\frac{c_1}{\sqrt {\xi} }\begin{pmatrix}
1\\ \im \zeta_1 \end{pmatrix} +\mathcal{O}(\xi^{1/2}).
\end{align*}
Therefore the pole structure of the first column (and similarly the remaining columns) of $W$ takes the form
\[W_1(1-\xi)=\frac{c_1}{\sqrt {\xi} }\begin{pmatrix}
1\\ \im \zeta_1 \end{pmatrix}\otimes \begin{pmatrix}
1\\ \im \zeta_1 \end{pmatrix} +\mathcal{O}(\xi^{1/2})=
\frac{c_1}{\sqrt {\xi} }\begin{pmatrix}
1\\ \im \zeta_1 
\\ \im \zeta_1 \\ -\zeta_1^2 \end{pmatrix} +\mathcal{O}(\xi^{1/2}).
\]
Linear combinations of the columns of $W$ therefore give us three solutions at $z=1$ with singular behaviour 
$1/\xi^{1/2}$ proportional to the vectors $(1,0,0,0)\sp{T}$, $(0,0,0,1)\sp{T}$ and $(0,1,1,0)\sp{T}$. This
is what the discussion of section \ref{asympsection} requires, with the orthogonal direction, spanned by
$(0,1,-1,0)\sp{T}$, corresponding to the solution with $\xi^{3/2}$ behaviour. We note that to get this behaviour requires the use of the identity (\ref{curveident1}). We shall see
significantly more complicated identities are required when we examine the pole behaviour of $V$.
Using the $\theta$-constant relations
\begin{equation}
\frac{\theta_i''(0)}{\theta_i(0)}- \frac{\theta_j''(0)}{\theta_j(0)}=\pi^2 \theta_k^4(0)
\end{equation}
for $(ijk)\in\{ (4,3,2), (4,2,3), (3,2,4)\}$,
we find the analogous expansion near $z=-1$,
\begin{align*}
W_1(-1+\xi) &=  
\frac{c_1}{\sqrt{\xi}}\, 
 \left( \begin{array}{c}  \imath \zeta_1\\ 
-1 \\ - \zeta_1^2\\ - \imath \zeta_1 \end{array} \right)
\, e^{-2\left( {\beta_1(P_1) -i \left[(x_1-i x_2)\zeta_1-i x_3 -x_4\right]}  \right)}
+ \mathcal{O}(\xi^{1/2})
\\
&= \frac{c_1}{\sqrt{\xi}}\, 
\begin{pmatrix}
0&1&0&0\\ -1&0&0&0
\\ 0&0&0&1\\ 0&0&-1&0\end{pmatrix}
\begin{pmatrix}
1\\ \im \zeta_1 
\\ \im \zeta_1 \\ -\zeta_1^2 \end{pmatrix}
\, e^{-2\left( {\beta_1(P_1) -i \left[(x_1-i x_2)\zeta_1-i x_3 -x_4\right]}  \right)}
 +\mathcal{O}(\xi^{1/2}).
\end{align*}
The reason for our writing the expansion in the second form will be made clearer in due course.

\subsubsection{Higher expansion terms of $W$}
The identities of the last section yield further terms in the  expansion  of the matrix ${W}(\mp1\pm \xi)$, where
$W(z, \boldsymbol{x}) = (\boldsymbol{W}_1(z, \boldsymbol{x}), \ldots, \boldsymbol{W}_4(z, \boldsymbol{x})   )$.
Introduce the 2-vectors
\begin{align}
 \boldsymbol{w}_{0,k}&= \left(  \begin{array}{c}  1 \\ \imath \zeta_k \end{array}\right),
\qquad  \boldsymbol{w}_{1,k}=\left( \begin{array} {c} \imath x_-\zeta_k+x_3\\\\
 x_+ - \imath \zeta_k x_3 \end{array}\right),\\  \nonumber \\
 \boldsymbol{w}_{2,k}&=\left(\begin{array}{c}\frac18 \left(  K^2 -4x_-^2\right)\zeta_k^2 +\imath x_-x_3\zeta_k -\frac{1}{24}(2{k'}^2-1)K^2+\frac12x_3^2\\ \\
\left( -\frac{\imath}{24}( 2{k'}^2 -1)K^2 + \frac12 \imath x_3^2 \right)\zeta_k - x_+x_3 + 
\frac{\imath}{8\zeta_k} \left(   K^2 -4x_+^2  \right) 
\end{array}
\right),
\end{align}
and the 4-vectors
\begin{align}
\boldsymbol{W}_{0,k} = \boldsymbol{ w}_{0,k} \otimes  \boldsymbol{w}_{0,k}, \quad 
\boldsymbol{W}_{1,k}= 
 \boldsymbol{w}_{0,k} \otimes \boldsymbol{ w}_{1,k}, \quad  
\boldsymbol{W}_{2,k}= \boldsymbol{w}_{0,k} \otimes \boldsymbol{w}_{2,k}.
\end{align}
Then we find
\begin{align}\label{expanwk}\begin{split}
\boldsymbol{W}_k(1-\xi) &= c_k  \left\{  \frac{1}{\sqrt{\xi}} \boldsymbol{W}_{0,k}+ 
\sqrt{\xi}\boldsymbol{W}_{1,k}+ \xi^{3/2}\boldsymbol{W}_{2,k} +\mathcal{O}(\xi^{5/2})\right\} ,\\
\boldsymbol{W}_k(-1+\xi) &=U  c_k  \left\{ \frac{1}{\sqrt{\xi}} \boldsymbol{W}_{0,k}- 
\sqrt{\xi}\boldsymbol{W}_{1,k}+ \xi^{3/2}\boldsymbol{W}_{2,k} 
+\mathcal{O}(\xi^{5/2})\right\} \mathrm{e}^{-2\mu_k},
\end{split}
\end{align}
where $U=1_2\otimes \begin{pmatrix}0&1\\ -1& 0\end{pmatrix}$ is the matrix already encountered and
$c_k$ are constants that we need not specify.

\subsection{The determinant \texorpdfstring{$\boldsymbol{|\mathcal{D}|}$}{|D|}}

From (\ref{formD}) we obtain
\begin{align*}
|\mathcal{D}|&=\prod_{k=1}\sp4 \mathcal{D}_k
=\frac1{\theta_2^4(z/2)}\,\frac{\theta_2^4(1/4)\theta_3^4(1/4)}{\theta_3^4(0)}
\prod_{k=1}\sp4 \frac{1}{1+|\zeta_k|^2}\,
\frac{e\sp{\beta_1(P_k) z-i z\left[(x_1-i x_2)\zeta_k-i x_3 -x_4\right]}}{\theta_1(\alpha_k-1/4)\theta_4(\alpha_k+1/4)} .
\end{align*}
Recalling that we have ordered the roots so that
\[ (\zeta_1,\zeta_2,\zeta_3,\zeta_4)=(\zeta_1,\zeta_2,-1/\bar{\zeta_1},-1/\bar{\zeta_2})
\]
we consequently find that
\begin{align*}
\prod_{k=1}\sp4 \frac{1}{1+|\zeta_k|^2}&=\frac{ \zeta_1\zeta_2\zeta_3\zeta_4}{(\zeta_1-\zeta_3)^2(\zeta_2-\zeta_4)^2}\\
&=\frac1{\theta_2^2(0)\theta_4^2(0)}\,
\frac{ \prod_{k=1}\sp4 \theta_1(\alpha_k)\theta_2(\alpha_k)\theta_3(\alpha_k)\theta_4(\alpha_k)}
{\theta_1^2(\alpha_1-\alpha_3)\theta_3^2(\alpha_1+\alpha_3)\theta_1^2(\alpha_2-\alpha_4)\theta_3^2(\alpha_2+\alpha_4)}.
\end{align*}

\subsection{The determinant \texorpdfstring{$\boldsymbol{|{\Psi}|}$}|Psi}}
As the identities needed to evaluate  $\boldsymbol{|{\Psi}|}$ are illustrative of the more complicated
identities that are employed in calculating $\Adj\Psi$ we shall describe these in the text, and leave
the latter to  Appendix  \ref{proofthmadjpsi}.
We begin by observing that
\begin{align*}
|\Psi|&=\begin{vmatrix} a_{{1}}&a_{{2}}&a_{{3}}&a_{{4}}
\\ \noalign{\medskip}b_{{1}}&b_{{2}}&b_{{3}}&b_{{4}}
\\ \noalign{\medskip}\zeta_{{1}}a_{{1}}&\zeta_{{2}}a_{{2}}&\zeta_{{3}}
a_{{3}}&\zeta_{{4}}a_{{4}}\\ \noalign{\medskip}\zeta_{{1}}b_{{1}}&
\zeta_{{2}}b_{{2}}&\zeta_{{3}}b_{{3}}&\zeta_{{4}}b_{{4}}\end {vmatrix}\\
&=
\begin{vmatrix} a_{{1}}&a_{{2}}\\b_{{1}}&b_{{2}}\end {vmatrix}
\begin{vmatrix} a_{{3}}&a_{{4}}\\b_{{3}}&b_{{4}}\end {vmatrix}
(\zeta_1 \zeta_2+\zeta_3\zeta_4)
-
\begin{vmatrix} a_{{1}}&a_{{3}}\\b_{{1}}&b_{{3}}\end {vmatrix}
\begin{vmatrix} a_{{2}}&a_{{4}}\\b_{{2}}&b_{{4}}\end {vmatrix}
(\zeta_1 \zeta_3+\zeta_2\zeta_4)\\
&\qquad +
\begin{vmatrix} a_{{1}}&a_{{4}}\\b_{{1}}&b_{{4}}\end {vmatrix}
\begin{vmatrix} a_{{2}}&a_{{3}}\\b_{{2}}&b_{{3}}\end {vmatrix}
(\zeta_1 \zeta_4+\zeta_2\zeta_3).
\end{align*}
Noting (\ref{abbadef}) we may may show
\[\begin{vmatrix} a_{{i}}&a_{{j}}\\b_{{i}}&b_{{j}}\end {vmatrix}
=\theta_3(0) \theta_2(z/2) \theta_1(\alpha_i-\alpha_j) \theta_4(\alpha_i+\alpha_j-z/2). 
\]
Now it is clear that the determinant vanishes when $\zeta_1=\zeta_2=\zeta_3=\zeta_4$
and so
\[0=
\begin{vmatrix} a_{{1}}&a_{{2}}\\b_{{1}}&b_{{2}}\end {vmatrix}
\begin{vmatrix} a_{{3}}&a_{{4}}\\b_{{3}}&b_{{4}}\end {vmatrix}
-
\begin{vmatrix} a_{{1}}&a_{{3}}\\b_{{1}}&b_{{3}}\end {vmatrix}
\begin{vmatrix} a_{{2}}&a_{{4}}\\b_{{2}}&b_{{4}}\end {vmatrix}
+
\begin{vmatrix} a_{{1}}&a_{{4}}\\b_{{1}}&b_{{4}}\end {vmatrix}
\begin{vmatrix} a_{{2}}&a_{{3}}\\b_{{2}}&b_{{3}}\end {vmatrix}
\]
whence
\begin{align*}
|\Psi|&=
\begin{vmatrix} a_{{1}}&a_{{2}}\\b_{{1}}&b_{{2}}\end {vmatrix}
\begin{vmatrix} a_{{3}}&a_{{4}}\\b_{{3}}&b_{{4}}\end {vmatrix}
\left[(\zeta_1 \zeta_2+\zeta_3\zeta_4)-(\zeta_1 \zeta_4+\zeta_2\zeta_3)\right]\\
&\quad -
\begin{vmatrix} a_{{1}}&a_{{3}}\\b_{{1}}&b_{{3}}\end {vmatrix}
\begin{vmatrix} a_{{2}}&a_{{4}}\\b_{{2}}&b_{{4}}\end {vmatrix}
\left[(\zeta_1 \zeta_3+\zeta_2\zeta_4)-(\zeta_1 \zeta_4+\zeta_2\zeta_3)\right]\\
&=
\begin{vmatrix} a_{{1}}&a_{{2}}\\b_{{1}}&b_{{2}}\end {vmatrix}
\begin{vmatrix} a_{{3}}&a_{{4}}\\b_{{3}}&b_{{4}}\end {vmatrix}
(\zeta_1-\zeta_3)(\zeta_2-\zeta_4)
-
\begin{vmatrix} a_{{1}}&a_{{3}}\\b_{{1}}&b_{{3}}\end {vmatrix}
\begin{vmatrix} a_{{2}}&a_{{4}}\\b_{{2}}&b_{{4}}\end {vmatrix}
(\zeta_1-\zeta_2)(\zeta_3-\zeta_4).
\end{align*}
Now
\begin{align*}
i\left[\zeta_j-\zeta_k\right]&=
\frac{\theta_2(\alpha_j)\theta_4(\alpha_j)}{\theta_1(\alpha_j)\theta_3(\alpha_j)}-
\frac{\theta_2(\alpha_k)\theta_4(\alpha_k)}{\theta_1(\alpha_k)\theta_3(\alpha_k)}\\
&=
\frac{\theta_2(\alpha_j)\theta_4(\alpha_j)\theta_1(\alpha_k)\theta_3(\alpha_k) -
\theta_1(\alpha_j)\theta_3(\alpha_j)\theta_2(\alpha_k)\theta_4(\alpha_k)}
{\theta_1(\alpha_j)\theta_3(\alpha_j)\theta_1(\alpha_k)\theta_3(\alpha_k)}
\end{align*}
and upon using (W3, Appendix \ref{thetafunctidentapp}) with $\boldsymbol{\alpha}=(\alpha_j,\alpha_k,\alpha_k,\alpha_j)$
we obtain
\begin{equation}\label{zetadiff}
\zeta_j-\zeta_k = i\, \theta_2(0)\theta_4(0)\,
\frac{\theta_1(\alpha_j-\alpha_k)\theta_3(\alpha_j+\alpha_k)}
{\theta_1(\alpha_j)\theta_3(\alpha_j)\theta_1(\alpha_k)\theta_3(\alpha_k)}.
\end{equation}
Thus
\begin{align*}
|\Psi|&=-\frac{\theta_2^2(z/2) \theta_2^2(0)\theta_3^2(0)\theta_4^2(0)}
{\prod_{j=1}\sp{4}\theta_1(\alpha_j)\theta_3(\alpha_j)}\,
\theta_1(\alpha_1-\alpha_2) \theta_1(\alpha_3-\alpha_4)
\theta_1(\alpha_1-\alpha_3) \theta_1(\alpha_2-\alpha_4)\\
&\quad\times\left[
\theta_3(\alpha_1+\alpha_3) \theta_3(\alpha_2+\alpha_4)
\theta_4(\alpha_1+\alpha_2-z/2) \theta_4(\alpha_3+\alpha_4-z/2)\right. \\
&\qquad -\left.
\theta_3(\alpha_1+\alpha_2) \theta_3(\alpha_3+\alpha_4)
\theta_4(\alpha_1+\alpha_3-z/2) \theta_4(\alpha_2+\alpha_4-z/2)
\right].
\end{align*}
Next, using (W4, Appendix \ref{thetafunctidentapp}) with $\boldsymbol{\alpha}=(\alpha_1+\alpha_2-z/2,\alpha_3+\alpha_4-z/2,
\alpha_1+\alpha_3,\alpha_2+\alpha_4)$ we find
\begin{align*}
&\theta_3(\alpha_1+\alpha_3) \theta_3(\alpha_2+\alpha_4)
\theta_4(\alpha_1+\alpha_2-z/2) \theta_4(\alpha_3+\alpha_4-z/2)\\
&\quad-
\theta_3(\alpha_1+\alpha_2) \theta_3(\alpha_3+\alpha_4)
\theta_4(\alpha_1+\alpha_3-z/2) \theta_4(\alpha_2+\alpha_4-z/2)\\
&= 
 \theta_2 (z/2) \theta_2\left(\sum_{k=1}\sp4 \alpha_k-z/2\right)\,
 \theta_1(\alpha_1-\alpha_4)\theta_1(\alpha_2-\alpha_3)
\end{align*}
where the sign is determined for example by considering $\alpha_1=-\alpha_3$, 
$\alpha_2=-\alpha_4$. Then
\begin{equation*}
|\Psi|= -\theta_2^3 (z/2) \theta_2\left(\sum_{k=1}\sp4 \alpha_k-z/2\right)\,
{\theta_2^2(0)\theta_3^2(0)\theta_4^2(0)}\,
\frac{\prod_{i<j}\theta_1(\alpha_i-\alpha_j)}{\prod_{j=1}\sp{4}\theta_1(\alpha_j)\theta_3(\alpha_j)}.
\end{equation*}

Upon using (\ref{abel}) 
\[
\theta_2\left(\sum_{k=1}\sp4 \alpha_k-z/2\right)=\theta_2\left(N\tau+M-z/2\right)=
\theta_2\left(z/2\right)\,e\sp{-i\pi[M+N^2\tau-Nz]}
\]
giving finally that
\begin{equation}\label{detphi}
|\Psi|= -\theta_2^4 (z/2) \,
{\theta_2^2(0)\theta_3^2(0)\theta_4^2(0)}\,
\frac{\prod_{i<j}\theta_1(\alpha_i-\alpha_j)}{\prod_{j=1}\sp{4}\theta_1(\alpha_j)\theta_3(\alpha_j)}
\, e\sp{-i\pi[M+N^2\tau-Nz]} .
\end{equation}

\subsection{The Adjoint  of \texorpdfstring{$\Psi$}{Psi}}
Developing the method just employed we show in Appendix  \ref{proofthmadjpsi} that
\begin{theorem}\label{thmadjpsi}
The $i$-th column of $\Adj(\Psi)\sp{T}$ takes the form 
(for $i,j,k,l$ distinct)
\begin{equation}
\begin{pmatrix}\label{STPHI}
i \zeta_j\,\frac{\theta_2(z/2)\,\theta_2(\sum_{s\ne i}\alpha_s-z/2)}{\prod_{s\ne i}\theta_3(\alpha_s)}
-\frac{ \theta_2(0)\theta_4(0) \theta_3(\alpha_j+\alpha_k) \theta_3(\alpha_j+\alpha_l) \theta_4(\alpha_j-z/2) 
\theta_4(\alpha_k+\alpha_l-z/2) }{\theta_3(\alpha_j)\,\prod_{s\ne i}\theta_1(\alpha_s)\theta_3(\alpha_s)}
\\ \\
i \zeta_j\,\frac{\theta_2(z/2)\,\theta_4(\sum_{s\ne i}\alpha_s-z/2)}{\prod_{s\ne i}\theta_1(\alpha_s)}
-\frac{ \theta_2(0)\theta_4(0) \theta_3(\alpha_j+\alpha_k) \theta_3(\alpha_j+\alpha_l) \theta_2(\alpha_j-z/2) 
\theta_4(\alpha_k+\alpha_l-z/2) }{\theta_1(\alpha_j)\,\prod_{s\ne i}\theta_1(\alpha_s)\theta_3(\alpha_s)}
\\ \\
- \frac{\theta_2(z/2)\,\theta_2(\sum_{s\ne i}\alpha_s-z/2)}{\prod_{s\ne i}\theta_3(\alpha_s)}\\ \\
- \frac{\theta_2(z/2)\,\theta_4(\sum_{s\ne i}\alpha_s-z/2)}{\prod_{s\ne i}\theta_1(\alpha_s)}
\end{pmatrix}\, d_i
\end{equation}
where
\begin{equation}
\label{NDi} d_i = -\epsilon_{ijkl}\,\theta_2(0)\theta_3(0)\theta_4(0) \theta_1(\alpha_j-\alpha_k) \theta_1(\alpha_j-\alpha_l) \theta_1(\alpha_k-\alpha_l)\,\theta_2(z/2).
\end{equation}
\end{theorem}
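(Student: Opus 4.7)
The plan is to mirror the calculation of $|\Psi|$ just performed. Each entry of the $i$-th column of $\Adj(\Psi)^T$ is $(-1)^{r+i} M_{ri}$, where $M_{ri}$ is the $3\times 3$ minor obtained by deleting row $r$ and column $i$ of $\Psi$; fixing $i$ and the triple $\{j,k,l\}=\{1,2,3,4\}\setminus\{i\}$ we must evaluate $M_{1i},M_{2i},M_{3i},M_{4i}$. For each $r$, I expand $M_{ri}$ by Laplace along a single row into a sum of three terms, each a matrix entry times a $2\times 2$ subdeterminant. Because rows 3 and 4 of $\Psi$ differ from rows 1 and 2 only by the factors $\imath\zeta_k$ in column $k$, every such $2\times 2$ subdeterminant has the form $\zeta_m^{\epsilon_m}\zeta_n^{\epsilon_n}\!\begin{vmatrix} a_m & a_n \\ b_m & b_n\end{vmatrix}$ with $\epsilon_{m,n}\in\{0,1\}$, so I can substitute
\[
\begin{vmatrix} a_m & a_n \\ b_m & b_n\end{vmatrix}
=\theta_3(0)\,\theta_2(z/2)\,\theta_1(\alpha_m-\alpha_n)\,\theta_4(\alpha_m+\alpha_n-z/2)
\]
and use (\ref{zetadiff}) to reduce $\zeta$-differences to theta-function form.

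As in the $|\Psi|$ computation, the minor $M_{ri}$ vanishes whenever two $\zeta$'s coincide; this furnishes a free linear relation among the three Laplace terms which eliminates one and collapses the remaining pair into a single trisecant-type product. I then apply the Weierstrass identities (W3), (W4) from Appendix \ref{thetafunctidentapp} (with arguments chosen to match the pattern used for $|\Psi|$) to combine the resulting four-theta products into the compact shapes appearing in (\ref{STPHI}). At the last step I invoke the Abel sum constraint $\alpha_1+\alpha_2+\alpha_3+\alpha_4=N\tau$ (with $M=0$ from the corollary to Proposition \ref{abjacprop}) to rewrite $\sum_{s\ne i}\alpha_s-z/2=N\tau-\alpha_i-z/2$, and use the quasi-periodicity of $\theta_{1,\ldots,4}$ together with parity to produce the arguments $\alpha_j\pm z/2$ (and the companion $\alpha_k+\alpha_l-z/2$ terms) displayed in the statement; the quasi-periodic exponentials and the $\epsilon_{ijkl}$ sign combine with the cofactor sign $(-1)^{r+i}$ to yield the universal prefactor $d_i$.

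The main obstacle is purely bookkeeping: one must check that the same $d_i$ emerges from all four row-deletions and that the four entries assemble into precisely the four theta-function shapes in (\ref{STPHI}). The structural reason this works is transparent, however: the two entries of (\ref{STPHI}) carrying an explicit $\zeta_j$ come from deleting rows 3 or 4 of $\Psi$, so the $\zeta$-free row retained (either $(a_m)$ or $(b_m)$) survives untouched in the Laplace expansion; the two entries without a $\zeta_j$ arise from deleting rows 1 or 2, where both remaining $\zeta$-bearing rows contribute the extra $\zeta_k\zeta_l$ factors that ultimately cancel after applying (\ref{zetadiff}) and the trisecant identity. The $\theta_2\leftrightarrow\theta_4$ swap between the first/third entries and the second/fourth entries of (\ref{STPHI}) is then simply a reflection of the $a_k\leftrightarrow b_k$ exchange in $\binom{a_k}{b_k}$ under deletion of the complementary row.
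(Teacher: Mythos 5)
Your outline is essentially the paper's own proof: cofactor expansion of the minors, the evaluation $\bigl|\begin{smallmatrix} a_m & a_n\\ b_m & b_n\end{smallmatrix}\bigr|=\theta_3(0)\theta_2(z/2)\theta_1(\alpha_m-\alpha_n)\theta_4(\alpha_m+\alpha_n-z/2)$, a linear relation among the three Laplace coefficients, the difference formula (\ref{zetadiff}), and the trisecant identities. Two of your supporting claims are wrong, however, and should be corrected before the computation is carried out. First, the minor $M_{ri}$ does \emph{not} vanish when merely two of the surviving $\zeta$'s coincide (subtracting $\imath\zeta$ times the $b$-row from the $\imath\zeta b$-row leaves a nonzero $3\times3$ determinant); it vanishes only when all three coincide, since only then does a $\zeta$-bearing row become proportional to its $\zeta$-free partner. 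Equivalently, the relation $a+b+c=0$ you need among the Laplace coefficients is the expansion of a determinant with a repeated row, and that is the relation the paper uses --- your "two coincide" statement gives nothing.

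Second, your structural explanation is backwards. The entry in position $r$ of the $i$-th column of $\Adj(\Psi)^T$ is the cofactor obtained by deleting row $r$ of $\Psi$; deleting row $1$ or $2$ leaves both $\zeta$-bearing rows $(\imath\zeta_k a_k)$, $(\imath\zeta_k b_k)$ in the minor, so it is the \emph{first two} entries of (\ref{STPHI}) (the ones carrying an explicit $\zeta_j$) that come from the bilinear cofactors, while deleting row $3$ or $4$ gives the linear cofactors and hence entries $3$ and $4$. Moreover the $\zeta$'s do not "ultimately cancel" in the bilinear case: using $a+b+c=0$ one finds $\Adj(\Psi)_{i1}=-\imath\zeta_j\,\Adj(\Psi)_{i3}+(\zeta_j-\zeta_k)(\zeta_j-\zeta_l)\times(\text{a single Laplace coefficient})$, and it is exactly this two-term reduction --- an explicit $\zeta_j$ times the linear cofactor plus a $\zeta$-free remainder simplified via (\ref{zetadiff}) --- that produces the two-term shape of the first two entries of (\ref{STPHI}). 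With these corrections your plan coincides with the argument in Appendix \ref{proofthmadjpsi}.
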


\section{The matrix \texorpdfstring{$\overline{V}$}{barV}}\label{sectionV}
Recall that
\[ V=W\sp{\dagger\,-1}=\left(\mathcal{D}\sp{-1}\,\Psi\sp{-1}\, \left(1_2\otimes C\sp{-1}(z)\right)\right)\sp{\dagger}=\left(1_2\otimes C\sp{-1}(z)\right)\,\Psi\sp{\dagger\,-1}\,{\bar{\mathcal{D}}}\sp{-1}
\]
and so
\[
{\overline V}=\left(1_2\otimes C\sp{-1}(z)\right)\,\Adj(\Psi)\sp{T}\,\mathcal{D}\sp{-1} |\Psi|\sp{-1},
\]
where we have made use of the fact that $C(z)$ is real and symmetric. 
Here $\overline V$ is the complex conjugate of the matrix $V$ and it will be convenient to deal with this for the time being. If we work with the standard Nahm basis (\ref{standardnahm}) we have instead
\[
{\overline V}=\left(1_2\otimes \mathcal{O}\,C\sp{-1}(z)\right)\,\Adj(\Psi)\sp{T}\,\mathcal{D}\sp{-1} |\Psi|\sp{-1}
\]
as $\mathcal{O}$ is orthogonal.
Collecting the $z$ dependent factors together and utilising (\ref{abel}) this may be rewritten as
\[
{\overline V}=\left(1_2\otimes \mathcal{O}\,C\sp{-1}(z)\right)\,\frac1{\theta_2^2(z/2)}{\overline \Lambda}\,\tilde{\mathcal{D}}
\]
where $\tilde{\mathcal{D}}$ is a $z$-independent diagonal matrix that we shall not need and
\begin{equation}
{\overline \Lambda}_i
=
\begin{pmatrix}\label{LAMBDABAR}
-i \zeta_j\,\frac{\theta_2(z/2)\,\theta_2(\sum_{r\ne i}\alpha_r-z/2)}{\prod_{s\ne i}\theta_3(\alpha_s)}
+\frac{ \theta_2(0)\theta_4(0) \theta_3(\alpha_j+\alpha_k) \theta_3(\alpha_j+\alpha_l) \theta_4(\alpha_j-z/2) 
\theta_4(\alpha_k+\alpha_l-z/2) }{\theta_3(\alpha_j)\,\prod_{s\ne i}\theta_1(\alpha_s)\theta_3(\alpha_s)}
\\ \\
-i \zeta_j\,\frac{\theta_2(z/2)\,\theta_4(\sum_{r\ne i}\alpha_r-z/2)}{\prod_{s\ne i}\theta_1(\alpha_s)}
+\frac{ \theta_2(0)\theta_4(0) \theta_3(\alpha_j+\alpha_k) \theta_3(\alpha_j+\alpha_l) \theta_2(\alpha_j-z/2) 
\theta_4(\alpha_k+\alpha_l-z/2) }{\theta_1(\alpha_j)\,\prod_{s\ne i}\theta_1(\alpha_s)\theta_3(\alpha_s)}
\\ \\
\frac{\theta_2(z/2)\,\theta_2(\sum_{r\ne i}\alpha_r-z/2)}{\prod_{s\ne i}\theta_3(\alpha_s)}\\ \\
\frac{\theta_2(z/2)\,\theta_4(\sum_{r\ne i}\alpha_r-z/2)}{\prod_{s\ne i}\theta_1(\alpha_s)}
\end{pmatrix}\, 
e\sp{-z\mu_i}
\end{equation}
where we may choose $(i,j,k,l)$ as a cyclic permutation of $(1,2,3,4)$. Here we encounter $\mu_k$
defined in (\ref{defmu}).

\subsection{The pole structure of  \texorpdfstring{$\overline V$}{barV}} We need the expansion (\ref{LAMBDABAR})
to determine the projector and for calculating the Higgs field via the Panagopoulos formulae.
Set
\begin{equation}\label{defbarvi}
\boldsymbol{\overline v}_i(z)=\left(1_2\otimes \mathcal{O}\,C\sp{-1}(z)\right)\,\frac1{\theta_2^2(z/2)}\,
{\overline \Lambda}_i = \sum_{s\ge 0} \boldsymbol{\overline v}_{i, s}\, \xi^{s-5/2},
\qquad \text{where}\ z=1-\xi,
\end{equation}
and
\begin{equation}
{\overline \Lambda}_i = 
\begin{pmatrix} -\im \zeta_j  \\   1\end{pmatrix} \otimes \begin{pmatrix} A  \\  B \end{pmatrix}
+ 
 \begin{pmatrix} 1 \\   0\end{pmatrix} \otimes \begin{pmatrix} \alpha \\   \beta \end{pmatrix}.
\end{equation}
Here
\begin{align*}
A&= {\theta_2(z/2)}\, \frac{\theta_2(\sum_{r\ne i}\alpha_r-z/2)}{\prod_{s\ne i}\theta_3(\alpha_s)}\,
e\sp{-z\mu_i} ,
\\
B&=  {\theta_2(z/2)}\,\frac{\theta_4(\sum_{r\ne i}\alpha_r-z/2)}{\prod_{s\ne i}\theta_1(\alpha_s)}\,
e\sp{-z\mu_i},\\
\alpha&=\frac{ \theta_2(0)\theta_4(0) \theta_3(\alpha_j+\alpha_k) \theta_3(\alpha_j+\alpha_l) \theta_4(\alpha_j-z/2) 
\theta_4(\alpha_k+\alpha_l-z/2) }{\theta_3(\alpha_j)\,\prod_{s\ne i}\theta_1(\alpha_s)\theta_3(\alpha_s)}
\,e\sp{-z\mu_i},
\\
 \beta&=
\frac{ \theta_2(0)\theta_4(0) \theta_3(\alpha_j+\alpha_k) \theta_3(\alpha_j+\alpha_l) \theta_2(\alpha_j-z/2) 
\theta_4(\alpha_k+\alpha_l-z/2) }{\theta_1(\alpha_j)\,\prod_{s\ne i}\theta_1(\alpha_s)\theta_3(\alpha_s)}
\, e\sp{-z\mu_i}.
\end{align*}

A lengthy calculation given in Appendix  \ref{appendixfiniteterm} shows that

\begin{theorem} \label{finiteterm} 
Each column of $\boldsymbol{\overline v}_i$ has expansion at $z=1-\xi$
\begin{equation}\label{barviexp}
\boldsymbol{\overline v}_i =
N_i \left(   
\frac{1}{\xi^{3/2}} \begin{pmatrix} 0\\1\\-1\\ 0\end{pmatrix}+
\frac{1}{\xi^{1/2}} \begin{pmatrix} -x_1 -\im x_2 \\x_3\\ x_3 \\ x_1 -\im x_2\end{pmatrix}+
\frac{\boldsymbol{\overline v}_{i, 3}}{N_i}\,\xi^{1/2}
+\mathcal{O}
(\xi^{3/2})
\right)
\end{equation}
where
\[
N_i:= c \sqrt {K}\,\theta_{{2}}  \theta_{{4}}\, \frac{\prod_{j<k \  j,k\ne i}\theta_3(P_j+P_k)}
{\prod_{r\ne i}\theta_1(P_r)\theta_3(P_r)}\, {{\rm e}^{-\mu_{{i}}}}
\]
and the finite term ${\boldsymbol{\overline v}_{i, 3}}/{N_i}$ has the equivalent expansions
(for $j\ne i$)
\begin{equation}
\frac{\boldsymbol{\overline v}_{i, 3}}{N_i}= 
\left(\begin{array}{c}
\dfrac\imath4\,(\zeta_i^2+\zeta_i\zeta_j+\zeta_j^2)\zeta_j\,X-x_+x_3
-2\imath\,\left( r^2 -3 x_3^2+ 2 \lambda \right)\zeta_j-4\,x_-x_3\,\zeta_j(\zeta_i+\zeta_j)\\
\\
-\dfrac18 X \,\zeta_i^2-2\imath \, x_-x_3\,\zeta_i+\lambda-\dfrac32x_3^2\\
\\
-\dfrac18 X\, \zeta_i^2-2\imath\, x_-x_3\,\zeta_i+\lambda+x_+x_-  - \dfrac12x_3^2\\
\\
\dfrac\imath4\, X\,\zeta_i - x_- x_3
\end{array}\right),
\end{equation}
\begin{equation}
\frac{\boldsymbol{\overline v}_{i, 3}}{N_i}= 
\left(\begin{array}{c}
-\dfrac\imath4\,X\,\zeta_i^3  +4\, x_-x_3\,\zeta_i^2
+2\imath\,\left( r^2 -3 x_3^2+ 2 \lambda \right)\zeta_i+3\,x_+x_3\\
\\
-\dfrac18 X \,\zeta_i^2-2\imath \, x_-x_3\,\zeta_i+\lambda-\dfrac32x_3^2\\
\\
-\dfrac18 X\, \zeta_i^2-2\imath\, x_-x_3\,\zeta_i+\lambda+x_+x_-  - \dfrac12x_3^2\\
\\
\dfrac\imath4\, X\,\zeta_i - x_- x_3
\end{array}\right),
\end{equation}
\begin{equation}
\frac{\boldsymbol{\overline v}_{i, 3}}{N_i}= 
\left(\begin{array}{c}
\dfrac\imath4\dfrac{(K^2-4x_+^2)}{\zeta_i}-x_+x_3
\\ 
\\
-\dfrac18 X \,\zeta_i^2-2\imath \, x_-x_3\,\zeta_i+\lambda-\dfrac32x_3^2\\
\\
-\dfrac18 X\, \zeta_i^2-2\imath\, x_-x_3\,\zeta_i+\lambda+x_+x_-  - \dfrac12x_3^2\\
\\
\dfrac\imath4\, X\,\zeta_i - x_- x_3
\end{array}\right),
\end{equation}
with
\[ x_\pm= x_1\pm\imath x_2,\qquad \lambda = \frac18\, K^2(1-2k^2), \qquad  X=  K^2-4 x_-^2. \]
\end{theorem}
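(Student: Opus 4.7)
The plan is to expand each of the three factors composing $\boldsymbol{\overline v}_i$ in (\ref{defbarvi}) at $z=1-\xi$ and combine them, tracking terms through order $\xi^{1/2}$. The leading singularity $1/\theta_2^2(z/2)\sim 4/(\xi^2\theta_1'(0)^2)$ combines with the $1/\sqrt{\xi}$ in $C^{-1}(z)$ (from inverting (\ref{expancp})) and with the singular theta expansions in $\overline{\Lambda}_i$; the net leading behaviour is $\xi^{-3/2}$. One must verify that the $\xi^{-3/2}$ and $\xi^{-1/2}$ coefficients match the general form (\ref{asymvp}) dictated by the ADHMN analysis of Section \ref{asympsection}, and then extract the finite term.

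Concretely, I would expand $\theta_2(z/2)=\theta_1(\xi/2)$ using (\ref{theta2expanp}), compute $C^{-1}$ from the inverse of (\ref{expancp}), and Taylor expand each theta factor appearing in $\overline{\Lambda}_i$ at $z=1$ by writing $\theta_m(\alpha_r-z/2)=\theta_m(\alpha_r-1/2+\xi/2)$, using the quasi-periodicities (Appendix \ref{thetafunctidentapp}) to re-express $\theta_m(\alpha_r-1/2)$ in terms of a standard $\theta_{m'}(\alpha_r)$, and expanding $e^{-z\mu_i}=e^{-\mu_i}e^{\xi\mu_i}$. After multiplying and organising by powers of $\xi$, the theta quotients are collapsed into rational expressions in $\zeta_i,\eta_i,\mu_i,x_j$ using the identities gathered in Section \ref{sectionidentities}: cancellation into $(0,1,-1,0)^T$ at order $\xi^{-3/2}$ uses (\ref{curveident1}) together with (\ref{sub1a})--(\ref{sub1b}) and (\ref{relation2}); the $\xi^{-1/2}$ coefficient follows from the first-derivative quotients (\ref{quot1a})--(\ref{quot1b}) and (\ref{sub2a})--(\ref{sub2b}); and the finite $\xi^{1/2}$ piece requires the second-derivative identities (\ref{quot3a})--(\ref{quot3b}), (\ref{sub3a})--(\ref{sub3b}), together with the two-point relations (\ref{2varlogdiff3})--(\ref{2varlogdiff4}). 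The prefactor $N_i$ then emerges as the common factor containing $\sqrt{K}$, the leading theta ratios, and $e^{-\mu_i}$.

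The three displayed forms for $\boldsymbol{\overline v}_{i,3}/N_i$ must coincide since the calculation could be carried out with any choice of $j\ne i$, and their equivalence provides a useful consistency check. The passage from the second form to the third follows by multiplying the first component through by $\zeta_i$ and substituting the Atiyah-Ward equation
\[
(K^2-4x_-^2)\zeta_i^4+16\imath x_-x_3\zeta_i^3+\bigl(16x_3^2-8x_+x_-+2K^2(k^2-k'^2)\bigr)\zeta_i^2+16\imath x_+x_3\zeta_i+(K^2-4x_+^2)=0
\]
to eliminate $\zeta_i^4$; the cubic and quadratic terms cancel identically because $2\lambda=-\frac{K^2}{4}(k^2-k'^2)$, leaving precisely $\imath(K^2-4x_+^2)/(4\zeta_i)-x_+x_3$. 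The passage from the first form to the second follows from the same quartic and the symmetric-function relations (\ref{xm})--(\ref{x3}): the $j$-dependence must cancel once the symmetric combinations produced by Atiyah-Ward are collected.

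The main obstacle is the sheer bookkeeping: each theta factor enters a Taylor expansion to several orders, and the alternating cancellations needed to project onto the expected singular directions demand that the Section \ref{sectionidentities} identities be applied to the theta quotients of $\overline{\Lambda}_i$ before expansion rather than after. The appearance of three distinct-looking equivalent forms at the finite order is not a redundancy but reflects the nontrivial constraints the Atiyah-Ward quartic places on its roots, and serves as an internal verification that no $\xi$-dependent terms have been mishandled.
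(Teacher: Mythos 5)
Your proposal is correct and follows essentially the same route as the paper's Appendix \ref{appendixfiniteterm}: expand $1/\theta_2^2(z/2)$, the gauge factor (which the paper organises as $\diag(p(z),p(-z))\mathcal{O}$ rather than inverting (\ref{expancp}) directly), and the theta/exponential factors of $\overline{\Lambda}_i$, then collapse the resulting theta quotients order by order using (\ref{curveident1}), the trisecant-derived relations (\ref{sub1a})--(\ref{sub1b}), (\ref{relation2}), the derivative quotients (\ref{quot1a})--(\ref{quot3b}), and the multi-point identities of Lemmas \ref{Relations} and \ref{twoptrelations}, with the equivalence of the three finite-term forms coming from the Atiyah--Ward quartic exactly as you describe. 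The one detail worth making explicit when you carry this out is that the naive leading order is $\xi^{-5/2}$ and it is the cancellation $\alpha_0=\beta_0$ (a consequence of (\ref{thetashift2})--(\ref{thetashift4})) that reduces the pole to $\xi^{-3/2}$.
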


Several observations are in order. First, 
up to normalisation, this takes the form of (the complex conjugate of) (\ref{asymvp}) with $\boldsymbol{\overline v}_{i, 3}'=(a,b-r^2/2,b+r^2/2,c)\sp{T}$. The common pole structure means that
we may determine a projector (see later) onto the normalisable solution.
Next, using the second representation we find that
$$ |\boldsymbol{\overline v}_{1, 3},\, \boldsymbol{\overline v}_{2, 3} ,\, 
\boldsymbol{\overline v}_{3, 3},\, \boldsymbol{\overline v}_{4,3}|=
{\frac {r^2}{128}}\, N_1N_2N_3N_4 (K-2x_-)^3 (K+2x_-)^3
\prod_{i<j}(\zeta_i-\zeta_j).
 $$
We have already noted that points of bitangency of the spectral curve yield solutions with multiplicity to the
mini-twistor constraint and at these nongeneric points 
we need to take further terms in our expansions to get a basis for solutions to
$\Delta W=0$ and $\Delta\sp\dagger V=0$; this occurs when $\prod_{i<j}(\zeta_i-\zeta_j)=0$.
It naively appears that the solutions are also linearly dependent whenever $K=\pm 2x_-$: what is happening here? The reality of $K$ means this
only occurs for the lines  $x_2=0$, $x_1=\pm K/2$ and $x_3$ arbitrary; this means one of the roots of the mini-twistor constraint is $\zeta=0$. Indeed if we call this vanishing point $P_1$ we have
\begin{itemize}
\item $x_1=K/2$, $x_2=0$,  $x_3$ arbitrary: $P_1=0_1$, $P_3=\infty_1$; $P_2,P_4$ determined by $x_3$.

\item $x_1=-K/2$, $x_2=0$,  $x_3$ arbitrary: $P_1=0_2$, $P_3=\infty_2$; $P_2,P_4$ determined by $x_3$.

\end{itemize}
Upon noting that the normalizations $N_i$ ($i=2,3,4$) have a denominator $\theta_1(P_1)$ there is a precise cancellation between numerator and denominators and the determinant does not vanish along these lines. Thus we only need to consider further terms in the expansion at (nongeneric) points of bitangency.

\subsection{The behaviour at \texorpdfstring{$z=-1$}{z-1} and Monodromy}\label{section_monodromy}
Although we must take the expansions of the solutions at both $z=\pm1$ these are related. We establish
in Appendix \ref{proofsmonodromy}
\begin{theorem} \label{smonodromy}
Let $\boldsymbol{\overline v}_i(1-\xi)=
 \sum_{s\ge 0} \boldsymbol{\overline v}_{i, s}\, \xi^{s-5/2}$. Then
 $\boldsymbol{\overline v}_i(-1+\xi)=\pm \sum_{s\ge 0} \boldsymbol{\overline v}_{i, s}'\, \xi^{s-5/2}$
where
\[  \boldsymbol{\overline v}_{i, s}'=
(-1)^s \left( 1_2\otimes \begin{pmatrix}0&1\\ -1& 0\end{pmatrix} e\sp{2\mu_i}\right)
 \boldsymbol{\overline v}_{i, s}.
\]
\end{theorem}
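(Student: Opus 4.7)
The plan is to deduce the transformation of $\overline V$ between the two endpoints from the already-established transformation of $W$ at the endpoints in (\ref{expanwk}), using $V = (W^\dagger)^{-1}$. First I would note that $V = (W^\dagger)^{-1}$ yields $\overline V = (W^T)^{-1}$, equivalently $W^T\overline V = I_{2n}$. The expansion (\ref{expanwk}) can be restated as
\begin{equation*}
 W(-1+\xi) = U\, (TW)(1-\xi)\, D ,\qquad D = \diag\bigl(e^{-2\mu_k}\bigr),
\end{equation*}
where $U = 1_2\otimes\bigl(\begin{smallmatrix}0&1\\-1&0\end{smallmatrix}\bigr)$ and $T$ denotes the formal operation sending a series $\sum_s A_s\,\xi^{s+\kappa}$ to $\sum_s(-1)^s A_s\,\xi^{s+\kappa}$ (sign-flipping successive coefficients, regardless of the half-integer offset $\kappa$).

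The second step is to record two useful properties of $T$ and $U$: the map $T$ is a ring homomorphism on formal series, $T(AB)=(TA)(TB)$, so it commutes with inversion and transposition of matrix-valued series; and $U^2 = -I_4$, which gives $U^T = -U$ and $(U^T)^{-1} = U$. Taking transposes and inverses of the displayed formula and using $W^T\overline V = I$ together with $T(W^T\overline V)=I$ then yields
\begin{equation*}
 \overline V(-1+\xi) \;=\; (U^T)^{-1}\,\bigl[T(W^T)(1-\xi)\bigr]^{-1}\,D^{-1} \;=\; U\,(T\overline V)(1-\xi)\,D^{-1}.
\end{equation*}
Reading this column by column with $D^{-1}_{ii}=e^{2\mu_i}$ and expanding $\overline v_i(1-\xi)=\sum_{s\ge 0}\overline v_{i,s}\,\xi^{s-5/2}$, the sign-flip map $T$ converts the $s$-th coefficient to $(-1)^s\overline v_{i,s}$, giving
\begin{equation*}
 \overline v_{i,s}' \;=\; (-1)^s\,\Bigl(1_2\otimes\bigl(\begin{smallmatrix}0&1\\-1&0\end{smallmatrix}\bigr)\,e^{2\mu_i}\Bigr)\,\overline v_{i,s},
\end{equation*}
exactly as claimed; the overall $\pm$ in the statement absorbs the branch ambiguity in identifying the two half-integer-power expansions.

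The main obstacle is justifying that (\ref{expanwk}) -- and hence the factorisation $W(-1+\xi) = U\,(TW)(1-\xi)\,D$ -- is valid to \emph{all} orders in $\xi^{s-1/2}$ rather than merely the first three terms displayed in the text. This is a symmetry statement: Nahm's constraints (\ref{constraint}) give $T_j(-z) = T_j(z)^T$, and this together with the explicit pole structure (\ref{nahmexpandp})--(\ref{nahmexpandm}) implies that the solution $W(z)$ and the transformed solution $U\,W(-z)\,D$ satisfy the same ODE $\Delta W=0$ with matching leading behaviour at $z=-1$, so they agree. An alternative verification, more in keeping with the style of Theorem~\ref{finiteterm}, would be to track the change $z=1-\xi\mapsto z=-1+\xi$ directly in the theta-function formula (\ref{LAMBDABAR}) for $\overline\Lambda_i$: the exponential prefactor $e^{-z\mu_i}$ produces $e^{2\mu_i}$ together with a coefficientwise sign-flip, the quasi-periodicities $\theta_{1,2}(u+1)=-\theta_{1,2}(u)$, $\theta_{3,4}(u+1)=\theta_{3,4}(u)$ handle the shift in arguments $\alpha_j-z/2\mapsto\alpha_j+z/2$, and the expansions (\ref{expancp})--(\ref{expancm}) of $\mathcal O C^{-1}$ combine with these theta signs to produce precisely the matrix $U$ of the statement.
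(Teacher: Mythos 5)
Your main argument takes a genuinely different route from the paper. The paper's proof (Appendix \ref{proofsmonodromy}) is a direct computation on $\overline{V}$ itself: it writes $\boldsymbol{\overline v}_i(z)=\frac{1}{\theta_2^2(z/2)}\frac{1}{\sqrt 2}\left(1_2\otimes\operatorname{diag}(p(z),p(-z))\right)\lambda_i$, proves a small lemma that the four theta-quotients satisfy $A(-1+\xi)=\hat A(-\xi)e^{2\mu_i}$, $B(-1+\xi)=-\hat B(-\xi)e^{2\mu_i}$, $\alpha(-1+\xi)=\hat\alpha(-\xi)e^{2\mu_i}$, $\beta(-1+\xi)=-\hat\beta(-\xi)e^{2\mu_i}$ (a one-line consequence of the quasi-periodicity of $\theta_{2,4}$ under a shift by $1$ in $z/2$ together with the prefactor $e^{-z\mu_i}$), and then assembles the matrix $U$ from the swap $p(z)\leftrightarrow p(-z)$ and the relation $p(1+\xi)=\pm\im\,p(1-\xi)$, fixing the residual sign by continuity. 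This is essentially your ``alternative verification.'' Your primary argument instead transports the known endpoint relation (\ref{expanwk}) for $W$ across the identity $W^T\overline V=1_4$, and the algebra there is sound: $T$ is multiplicative on Puiseux series once a branch is fixed, $(U^T)^{-1}=U$ since $U^2=-1_4$, and the branch/offset bookkeeping only contributes an overall sign, which the theorem's $\pm$ absorbs. What this buys is economy — no re-derivation of theta identities for $\overline\Lambda_i$ — at the cost of needing (\ref{expanwk}) to all orders.

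That is the one genuine gap, and you have correctly located it, but your first proposed fix does not close it as stated. The paper asserts (\ref{expanwk}) only through $\mathcal{O}(\xi^{5/2})$, and the symmetry argument via $T_j(-z)=T_j(z)^T$ needs more care: conjugating the term $\sum_j\sigma_j\otimes T_j$ of $\Delta$ by $U=1_2\otimes(\im\sigma_2)$ sends $\sigma_j\otimes T_j(z)^T$ to $-\sigma_j\otimes T_j(z)$ while leaving the $\sigma_j\otimes \im x_j1_n$ terms fixed, so $U\,W(-z)\,D$ does not satisfy the same equation as $W(z)$ by this naive conjugation — one is effectively comparing the problem at $\boldsymbol{x}$ with the problem at $-\boldsymbol{x}$, and an additional ingredient is needed. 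The honest all-orders justification is the theta-function one: the relation between $z=-1+\xi$ and $z=1+\xi$ is a shift by a full period of the arguments $\alpha_j\mp z/2$ in (\ref{LAMBDABAR}) (respectively in the Baker--Akhiezer columns of $W$), and quasi-periodicity gives the exact, all-orders identity of which (\ref{expanwk}) displays the first three terms. Once you invoke that, your inversion argument — or equally the paper's direct computation — goes through.
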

This explains the origin of the matrix $U=1_2\otimes \begin{pmatrix}0&1\\ -1& 0\end{pmatrix}$ 
we encountered earlier relating solutions at the endpoints.
In particular the expansion at 
 $z=-1+\xi$ then takes the form
\begin{equation}\label{barviexm}
\boldsymbol{\overline v}_i =
N_i \, e^{2\mu_i}\,\left(   
\frac{1}{\xi\sp{3/2}}\begin{pmatrix}1\\ 0\\ 0\\ 1\end{pmatrix}+
\frac{1}{\xi\sp{1/2}}\begin{pmatrix} -x_3\\  -\im x_2-x_1\\  \im x_2 -x_1\\  x_3\end{pmatrix}
+\mathcal{O}(\xi^{1/2})
\right).
\end{equation}

\subsection{A convenient normalization}

We have just established that up to the normalisation factors $N_i$ the pole terms of $\bar V$ have
a common form; these normalisations may be removed by left multiplication by a constant matrix
and is convenient to define
\begin{align}\label{EVBAR}
\bar{\mathcal{V}}:&=\bar {V}\diag(1/N_1,1/N_2,1/N_3,1/N_4).\\
\intertext{Then}
\bar{\mathcal{V}}_i(1-\xi)&=\frac{1}{\xi\sp{3/2}}\boldsymbol{\bar{\mathfrak{v}}}_0+
\frac{1}{\xi\sp{1/2}}\boldsymbol{\bar{\mathfrak{v}}}_1+{\xi\sp{1/2}}\boldsymbol{\bar{\mathfrak{v}}}_{i,2}+
\mathcal{O}(\xi^{3/2})
\intertext{where}
\boldsymbol{\bar{\mathfrak{v}}}_0:&=
\begin{pmatrix} 0\\1\\-1\\ 0\end{pmatrix}, \quad
\boldsymbol{\bar{\mathfrak{v}}}_1:=
\begin{pmatrix} -x_1 -\im x_2 \\x_3\\ x_3 \\ x_1 -\im x_2\end{pmatrix},
\quad
\boldsymbol{\bar{\mathfrak{v}}}_{i,2}:=\boldsymbol{\overline v}_{i, 3}/N_i .
\label{defcalv}
\end{align}
Similarly we have the non-conjugated quantities
$$
{\mathcal{V}}_i(1-\xi)=\frac{1}{\xi\sp{3/2}}\boldsymbol{{\mathfrak{v}}}_0+
\frac{1}{\xi\sp{1/2}}\boldsymbol{{\mathfrak{v}}}_1+{\xi\sp{1/2}}\boldsymbol{{\mathfrak{v}}}_{i,2}+
\mathcal{O}(\xi^{3/2}),\quad\text{where}\quad
\boldsymbol{{\mathfrak{v}}}_1:=
\begin{pmatrix} -x_1 +\im x_2 \\x_3\\ x_3 \\ x_1 +\im x_2\end{pmatrix}
$$
and so forth.
Now we have shown that
$$\bar{\mathcal{V}}_i(-1+\xi)= U\left( \frac1{\xi^{3/2}}\boldsymbol{\bar{\mathfrak{v}}}_0-
\frac1{\xi^{1/2}}\boldsymbol{\bar{\mathfrak{v}}}_1+\xi^{1/2}\boldsymbol{\bar{\mathfrak{v}}}_{i,2}+\mathcal{O}(\xi^{3/2})
\right) \exp(2\mu_i)
$$
where
\begin{equation}\label{defmonU}
{U}:= 1_2\otimes \begin{pmatrix}0&1\\ -1& 0\end{pmatrix}=\begin{pmatrix}
0&1&0&0\\ -1&0&0&0
\\ 0&0&0&1\\ 0&0&-1&0\end{pmatrix}
\end{equation}
Thus
\[
\bar{\mathcal{V}}(-1+\xi)=U\left[ \bar{\mathcal{V}}_+(1-\xi) -  \bar{\mathcal{V}}_-(1-\xi)\right]\mathcal{M}
\]
where $ \bar{\mathcal{V}}_\pm$ are the even (odd) terms $\boldsymbol{\bar{\mathfrak{v}}}_{l}$ and
\begin{equation}\label{defmonM}
\mathcal{M}:=\diag(e^{2\mu_1},e^{2\mu_2},e^{2\mu_3},e^{2\mu_4}).
\end{equation}

With this normalisation $\mathcal{W}:=\mathcal{V}\sp{\dagger\, -1}$ has columns
\begin{equation}\label{SMATW}
\mathcal{W}_k:=
\frac1{\theta_2(z/2)}\,
\left( 1_2\otimes \begin{pmatrix} 1/p(z)&0 \\  0& p(z) \end{pmatrix} \mathcal{O} \right)
\left( \begin{pmatrix}1\\ i\zeta_k\end{pmatrix}\otimes
\begin{pmatrix}a_k \\ b_k \end{pmatrix}\right)\,
e^{z \left[\mu_k -\im\pi N\right]} \, d_k
\end{equation}
where
\begin{equation}
d_k:=- c\,
\sqrt{\frac{\pi}2} \, \theta_1[ P_k] \theta_3[P_k]\, e\sp{-\mu_k}\,
\left( \frac{ \theta_2(0)}{\theta_2[\sum_{j=1}\sp{4} P_j]}
\right)
\frac{\prod_{\substack{r<s \\  r,s\ne k}}\theta_3[P_r+P_s]}{\prod_{l\ne k}\theta_1[P_l-P_k]}.
\end{equation}
From our expansion (\ref{expanwk}) we find that
\begin{align*}
\mathcal{W}_k(1-\xi) &= \mathfrak{d} _k  \left\{  \frac{1}{\sqrt{\xi}} \boldsymbol{W}_{0,k}+ 
\sqrt{\xi}\boldsymbol{W}_{1,k}+ \xi^{3/2}\boldsymbol{W}_{2,k} +\mathcal{O}(\xi^{5/2})\right\}\\
&:= \frac{1}{\sqrt{\xi}} \mathfrak{w}_{0,k}+ 
\sqrt{\xi}\mathfrak{w}_{1,k}+ \xi^{3/2}\mathfrak{w}_{2,k} +\mathcal{O}(\xi^{5/2})
\end{align*}
where
\begin{align*}
\mathfrak{d} _k =   -d_k \sqrt{ \frac{2}{\pi}}  \frac{\theta_1(\alpha_k) \theta_3(\alpha_k)}{\theta_2(0)\theta_4(0)} \mathrm{e}^{\mu_k-\imath\pi N} .
\end{align*}

We remark that from  $\mathcal{W}\sp{T}(z)\overline{\mathcal{V}}(z)=1_4$ the expansion at $z=1$ yields the consistency relations
\begin{align}\begin{split}
0&=\mathfrak{w}_0^T.\boldsymbol{\bar{\mathfrak{v}}}_0,\\
0&=\mathfrak{w}_1^T.\boldsymbol{\bar{\mathfrak{v}}}_0+\mathfrak{w}_0^T.\boldsymbol{\bar{\mathfrak{v}}}_1,\\
1_4 &=\mathfrak{w}_0^T.\boldsymbol{\bar{\mathfrak{v}}}_2+\mathfrak{w}_1^T.\boldsymbol{\bar{\mathfrak{v}}}_1+\mathfrak{w}_2^T\boldsymbol{\bar{\mathfrak{v}}}_0.\end{split}\label{ort}
\end{align}
The first two of these are easily seen to be true while, upon noting
$\mathfrak{w}_1^T.\boldsymbol{\bar{\mathfrak{v}}}_1=0$, the third simplifies to
\begin{equation}\label{VWrel}
1_4 =\mathfrak{w}_0^T.\boldsymbol{\bar{\mathfrak{v}}}_2+\mathfrak{w}_2^T\boldsymbol{\bar{\mathfrak{v}}}_0.
\end{equation}
This then follows from the relation
(for  each $j\in\{1,\ldots,4\}$)
\begin{align}
\mathfrak{D}_j&=1/{\mathfrak{d}_j },\label{NEW3}
\intertext{where}
\mathfrak{D}_j&: = -4x_-x_3\zeta_j^2-\imath ( (2{k'}^2 -1) K^2 +4 r^2 -12 x_3^2 )\zeta_j 
-12 x_+x_3+\frac{\imath S_+^2}{\zeta_j},\label{frakD}
\end{align}
which holds for  $\zeta_j,\eta_j$ satisfying the Atiyah-Ward equation. Theorem \ref{smonodromy} then means that the expansion at $z=-1$ also holds true.

\subsection{Derivation of \texorpdfstring{${\mathfrak{v}}_2$}{v2} from \texorpdfstring{$\mathcal{W}$}{W}-data and vice versa}
In the preceding sections we have derived  ${\mathfrak{v}}_2$ by calculating the general inverse of $W$ and taking its expansion. In this subsection we show that the same result can be obtained from the expansion of $W$ itself; this is a useful check.
We shall show that using (\ref{ort}) we can in fact determine the ${\mathfrak{v}}_2$ term of the expansion
of $\mathcal{V}$ from that of  $\mathcal{W}$ and vice versa. 

Recall from (\ref{asymvp}) that we
have an expansion for $\overline{\mathcal{V}}$ of the form
\begin{equation}\boldsymbol{\bar{\mathfrak{v}}}_{0,j}=\frac{1}{\xi^{3/2}}\left( \begin{array}{r} 0\\ 1\\-1\\0 \end{array}\right), \quad 
\boldsymbol{\bar{\mathfrak{v}}}_{1,j}=\frac{1}{\xi^{1/2}}\left( \begin{array}{r} -x_+\\ x_3\\x_3\\x_- \end{array}\right), \quad
\widetilde{\boldsymbol{\bar{\mathfrak{v}}}}_{2,j} =\xi^{1/2} \left( \begin{array} {c}   a_j\\ b_j- r^2/2 \\ b_j+r^2/2\\ c_j \end{array}  \right) ,
\end{equation}
while that of  $\mathcal{W}$  is (up to the constants $\mathfrak{d}_j$) in terms of 
${\boldsymbol{W}}_{0,j}=\boldsymbol{w}_{0,j}\otimes\boldsymbol{w}_{0,j}$,
${\boldsymbol{W}}_{1,j}=\boldsymbol{w}_{0,j}\otimes\boldsymbol{w}_{1,j}$ and
${\boldsymbol{W}}_{2,j}=\boldsymbol{w}_{0,j}\otimes\boldsymbol{w}_{2,j}$.
We choose to rewrite
$$\boldsymbol{w}_{2,j}= \left( \begin{array}{c} \frac18 S_-^2\zeta_j^2+\imath x_-x_3\zeta_j- \frac{1}{24}(2{k'}^2-1)K^2+\frac12 x_3^3\\ \\ \imath\left( -\frac{1}{24}(2{k'}^2-1) K^2 + \frac12 x_3^2\right)\zeta_j-x_-x_3+\frac{1}{8\zeta_j}S_+^2 \end{array}\right)$$
with $S_{\pm}=\sqrt{  K^2-4 x_{\pm}^2}$, $x_{\pm}=x_1\pm\imath x_2$. 
The first two identities of (\ref{ort}) hold and we find

\begin{proposition}\label{Dream} 
The matrix  $ \mathfrak{D}$ in  the following relation
\begin{align}
{\boldsymbol{W}}_2^T . \boldsymbol{\bar{\mathfrak{v}}}_0+{\boldsymbol{W}}_1^T . \boldsymbol{\bar{\mathfrak{v}}}_1+{\boldsymbol{W}}_0^T . \widetilde{\boldsymbol{\bar{\mathfrak{v}}}}_2=
 \mathfrak{D}\label{ort1}
\end{align}  
is diagonal if and only if the quatities $a_j,b_j,c_j$ are given by the formulae 
\begin{align}\begin{split}
a_j&=\frac{\imath}{4} \frac{S_+^2}{\zeta_j}-x_+x_3,\\
b_j&=-\frac18 S_-^2\zeta_j^2-2\imath x_-x_3\zeta_j+\frac18K^2(1-2k^2)+\frac12x_+x_--x_3^2,\\
c_j&=\frac{\imath}{4}S_-^2\zeta_j - x_-x_3,\end{split}\label{abc}
\end{align}
where $\zeta_j$, $j=1,\ldots,4$  are solutions of the Atiyah-Ward equation and 
the $j$-th diagonal element $\mathfrak{D}_j$ is given by (\ref{frakD}).
\end{proposition}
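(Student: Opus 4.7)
The plan is a direct substitution-and-verification exploiting the tensor factorization of $\boldsymbol{W}_{s,i}$ to reduce all the $4$-vector pairings to $2$-vector ones. Let
\[
\mathfrak{D}_{ij} := \boldsymbol{W}_{2,i}^T\cdot \boldsymbol{\bar{\mathfrak{v}}}_0 + \boldsymbol{W}_{1,i}^T\cdot\boldsymbol{\bar{\mathfrak{v}}}_1 + \boldsymbol{W}_{0,i}^T\cdot\widetilde{\boldsymbol{\bar{\mathfrak{v}}}}_{2,j}
\]
denote the $(i,j)$-entry of the left-hand side of (\ref{ort1}). Since each $\boldsymbol{W}_{s,i}=\boldsymbol{w}_{0,i}\otimes\boldsymbol{w}_{s,i}$ with the common factor $\boldsymbol{w}_{0,i}=(1,\imath\zeta_i)^T$, writing any $4$-vector $\boldsymbol{v}$ in block form $(\boldsymbol{v}^{\mathrm{u}},\boldsymbol{v}^{\mathrm{d}})^T$ gives $\boldsymbol{W}_{s,i}^T\cdot\boldsymbol{v}=\boldsymbol{w}_{s,i}^T\bigl(\boldsymbol{v}^{\mathrm{u}}+\imath\zeta_i\,\boldsymbol{v}^{\mathrm{d}}\bigr)$. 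Each $\mathfrak{D}_{ij}$ then collapses to an explicit Laurent polynomial in $\zeta_i$ whose coefficients depend linearly on the three unknowns $(a_j,b_j,c_j)$ and otherwise only on $\boldsymbol{x}$ and on the curve parameters $K,k$. Note that the first two summands are independent of $j$, so all the $j$-dependence is carried by the piece $\boldsymbol{W}_{0,i}^T\cdot \widetilde{\boldsymbol{\bar{\mathfrak{v}}}}_{2,j}$.

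For the uniqueness direction $(\Rightarrow)$, fix $j$ and impose $\mathfrak{D}_{ij}=0$ for the three values $i\in\{1,2,3,4\}\setminus\{j\}$. This is a linear system in $(a_j,b_j,c_j)$; inspection shows its coefficient matrix has a generalised Vandermonde structure in the pairwise distinct $\zeta_i$'s and is therefore invertible, so the solution is unique. Explicit elimination (using only elementary operations on the $\zeta_i$-polynomials) recovers the formulae (\ref{abc}). For the converse $(\Leftarrow)$ I substitute the ansatz (\ref{abc}) directly into $\mathfrak{D}_{ij}$ and reduce to zero for $i\neq j$ in three stages: first eliminate $x_3$ using the biquadratic identity (\ref{x3eq}); next use the Atiyah-Ward equation (\ref{AtiyahWard}) at $\zeta_i$ and $\zeta_j$ to remove any squares of $\eta_i$, $\eta_j$; and finally invoke the compatibility relation (\ref{compatibility}) to collapse the $\eta_i\eta_j$ cross-term. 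The diagonal entry $\mathfrak{D}_{jj}$ is then obtained by putting $i=j$ in the same expression and reducing via the Atiyah-Ward equation at $\zeta_j$ alone; collecting powers of $\zeta_j$ delivers the formula (\ref{frakD}).

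The main obstacle is the algebraic bookkeeping for the off-diagonal cancellation: the raw form of $\mathfrak{D}_{ij}$ consists of a handful of Laurent monomials in $(\zeta_i,\zeta_j)$ whose coefficients are themselves polynomials in the $x_p$'s, in $K$ and $k$. The vanishing ultimately follows from the biquadratic structure of the Atiyah-Ward equation and the relations in Lemmas \ref{curveidenta}--\ref{Relations}, but the coefficient tracking is delicate and is most efficiently confirmed by a symbolic-algebra computation. By contrast the diagonal computation is essentially a one-variable polynomial identity in $\zeta_j$ and is considerably shorter. Once (\ref{frakD}) is in hand, consistency with the normalisation identity (\ref{NEW3}) amounts to verifying that the theta-function expansion of $\mathfrak{d}_j$ reproduces $1/\mathfrak{D}_j$, which fixes the overall constant $c$ in $\mathcal{W}$.
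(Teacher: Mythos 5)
Your proposal is correct and follows essentially the same route as the paper: the off-diagonal entries of column $j$ give a $3\times3$ Vandermonde-type linear system in $(a_j,b_j,c_j)$ (the paper solves the $(2,1),(3,1),(4,1)$ entries by Cramer's rule), whose solution is then rewritten in the one-variable form (\ref{abc}) using the mini-twistor/Atiyah--Ward relations — the paper does this via the elementary symmetric functions (\ref{symmetric}) of the quartic's roots, which is equivalent to your use of (\ref{x3eq}) and (\ref{compatibility}) — and the diagonal entry (\ref{frakD}) follows by substitution. Your separate $(\Leftarrow)$ verification is redundant once the linear system has been solved uniquely, but it is harmless.
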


Analogously, given $\boldsymbol{{\mathfrak{v}}}_2$ one can construct $\boldsymbol{W}_2$.

\section{The Projector}\label{sectionprojector}

The common pole structure at $z=\pm1$ means that the construction of the projection
matrix becomes algebraic. It is clear from (\ref{defcalv}) that
\begin{equation}\label{proj1}
\bar{\mathcal{V}}\begin{pmatrix} 1 &1& 1 \\
-1 & 0& 0 \\ 0& -1& 0 \\ 0&0& -1
\end{pmatrix}
\end{equation}
gives three vectors vanishing at $z=1$ and with behaviour at $z=-1$ going as
\begin{equation}\label{proj2}
\left(e^{2\mu_1} - e^{2\mu_r}  \right)\,\left(   
\frac{1}{\xi\sp{3/2}}\begin{pmatrix}1\\ 0\\ 0\\ 1\end{pmatrix}+
\frac{1}{\xi\sp{1/2}}\begin{pmatrix} -x_3\\  -\im x_2-x_1\\  \im x_2 -x_1\\  x_3\end{pmatrix}
+\mathcal{O}(\xi^{1/2})
\right)
\end{equation}
for $r=2,3,4$. While it is possible for $e^{2\mu_1} = e^{2\mu_2} $ for certain $\boldsymbol{x}$ (see the
$x_2$-axis below) a consequence of proposition (\ref{propmu13mu24}) is that   
$$e^{2\mu_1} -e^{2\mu_3}= e^{2\mu_1} +e^{-2\bar\mu_1}\ne0.$$
Therefore the rank of
$$
\begin{pmatrix}  e^{2\mu_1} - e^{2\mu_3}   &0 \\
e^{2\mu_2} - e^{2\mu_1}&e^{2\mu_1} - e^{2\mu_4} 
 \\0& e^{2\mu_3} - e^{2\mu_1}\end{pmatrix}.
$$
is always 2. Thus we may construct from these the two required normalisable solutions by taking
\[
\bar{\mathcal{V}}\begin{pmatrix} 1 &1& 1 \\
-1 & 0& 0 \\ 0& -1& 0 \\ 0&0& -1
\end{pmatrix}
\begin{pmatrix}  e^{2\mu_1} - e^{2\mu_3}   &0 \\
e^{2\mu_2} - e^{2\mu_1}&e^{2\mu_1} - e^{2\mu_4} 
 \\0& e^{2\mu_3} - e^{2\mu_1}\end{pmatrix}.
\]
We observe that the projector is $z$ independent, as required.\footnote{If we had used the 
projector
\[
\begin{pmatrix}  e^{2\mu_1} - e^{2\mu_3}   &e^{2\mu_1} - e^{2\mu_4}  \\
e^{2\mu_2} - e^{2\mu_1}&0 \\0& e^{2\mu_2} - e^{2\mu_1}\end{pmatrix}
\]
instead, then on the $x_2$ axis the first vector $\bar{\mathcal{V}}_1 -\bar{\mathcal{V}}_2$ is already normalizable at both ends and we construct the remaining vector as an appropriate linear combination of the final two columns.
}
Thus we have the projector
\begin{equation}\label{defmunorm}
\overline{\mu}:=\begin{pmatrix} 1 &1& 1 \\
-1 & 0& 0 \\ 0& -1& 0 \\ 0&0& -1
\end{pmatrix}
\begin{pmatrix}  e^{2\mu_1} - e^{2\mu_3}   &0 \\
e^{2\mu_2} - e^{2\mu_1}&e^{2\mu_1} - e^{2\mu_4} 
 \\0& e^{2\mu_3} - e^{2\mu_1}\end{pmatrix}.
 \end{equation}
which is such that
$$(1,1,1,1)\overline{\mu}=(0,0),\qquad (1,1,1,1)\mathcal{M}\overline{\mu}=(0,0),$$
the latter showing this holds for $z=-1$ as well.

\section{An Example: the \texorpdfstring{$x_2$}{x2} axis}\label{sectionx2axis}
Before turning to the general formulae its helpful to see an example
of our formalism to reconstruct the Higgs field on the $x_2$-axis.  
We shall first construct the two normalisable solutions, then use the
Panagopoulos formalism to calculate their normalisation and then finally calculate the Higgs field. Aleady at this stage we obtain a new analytic result for the depth of the well.

\subsection{The normalizable solutions}

Utilising Theorem \ref{finiteterm} and restricting to the $x_2$ direction we find that  at $z=1-\xi$:

\begin{proposition} Let $x_1=x_3=0$ and the points $P_i$ be given by  (\ref{x2roots}). Then the first column of the expansion of the fundamental solution takes the form
\begin{align*}
\left.\overline{\boldsymbol{v}}_1(1-\xi)\right|_{\xi\sim 0}= N_1 &\left\{ 
\xi^{-3/2}  \left(\begin{array}{r} 0\\1\\-1\\ 0   \end{array}\right) + 
\xi^{-1/2}\left(\begin{array}{c} \imath x_2\\0\\0\\ \imath x_2   \end{array}\right)  \right.
      \\
 &\qquad+ \left.
  \xi^{1/2} \, \frac{Kk}{4} \left(
\begin{array}{c}
\sqrt{K^2+4x_2^2}\left(1 + \imath \dfrac1{Kk}\sqrt{K^2{k'}^2+4x_2^2} \right)   \\   
-\imath \sqrt{K^2{k'}^2 +4 x_2^2}-\dfrac{2 x_2^2}{K k}              \\ \\
- \imath \sqrt{K^2{k'}^2 +4 x_2^2}+\dfrac{2x_2^2}{K k}            \\ 
- \sqrt{K^2-4x_2^2}\left(1 + \imath \dfrac1{Kk}\sqrt{K^2{k'}^2+4x_2^2} \right)
      \end{array}  
      \right) \right\}.
\end{align*}
\end{proposition}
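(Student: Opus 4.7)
The proposition is a direct specialisation of Theorem \ref{finiteterm} (combined with the normalisation (\ref{defcalv}) and the $x_2$-axis data) to the one-dimensional locus $x_1=x_3=0$. My plan is therefore to apply the general expansion verbatim and then carry out the algebraic simplification.

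First I would record the relevant substitutions on the $x_2$-axis: $x_\pm=\pm\imath x_2$, $x_+x_-=x_2^2$, $r^2=x_2^2$, together with the constants $\lambda=\tfrac18K^2(1-2k^2)$ and $X=K^2-4x_-^2=K^2+4x_2^2$ that appear in Theorem \ref{finiteterm}. The leading two terms $\xi^{-3/2}\boldsymbol{\bar{\mathfrak{v}}}_0$ and $\xi^{-1/2}\boldsymbol{\bar{\mathfrak{v}}}_1$ from (\ref{defcalv}) then reduce immediately to the stated vectors, since $\boldsymbol{\bar{\mathfrak{v}}}_0$ is universal and $\boldsymbol{\bar{\mathfrak{v}}}_1$ collapses to a multiple of $(1,0,0,1)^T$ on the $x_2$-axis.

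For the $\xi^{1/2}$ coefficient I would use the third of the three equivalent representations of $\boldsymbol{\overline v}_{i,3}/N_i$ in Theorem \ref{finiteterm}, since each of its entries vanishes at $x_3=0$ in the simplest way. Specialising gives
\begin{align*}
\frac{\boldsymbol{\overline v}_{1,3}}{N_1}=\begin{pmatrix}\tfrac{\imath}{4}(K^2+4x_2^2)/\zeta_1\\ -\tfrac18(K^2+4x_2^2)\zeta_1^2+\tfrac18K^2(1-2k^2)\\ -\tfrac18(K^2+4x_2^2)\zeta_1^2+\tfrac18K^2(1-2k^2)+x_2^2\\ \tfrac{\imath}{4}(K^2+4x_2^2)\zeta_1\end{pmatrix}.
\end{align*}
The step that does the real work is combining this with the explicit root (\ref{x2roots}), namely $\zeta_1=(B+\imath Kk)/A$ with $A=\sqrt{K^2+4x_2^2}$, $B=\sqrt{K^2{k'}^2+4x_2^2}$. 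Since $|\zeta_1|=1$ one has $1/\zeta_1=A(B-\imath Kk)/(B^2+K^2k^2)=A(B-\imath Kk)/A^2=(B-\imath Kk)/A$, and $\zeta_1^2=((1-2k^2)K^2+4x_2^2+2\imath Kk B)/A^2$. Substituting these closes the last entry into $\tfrac{Kk}{4}A(1+\imath B/(Kk))$ (first entry) and into its counterpart for the fourth entry, and collapses the two middle entries because the $(1-2k^2)K^2$ term cancels against $\lambda$, leaving only $-x_2^2/2\mp\imath KkB/4$ which after rearrangement matches the stated form.

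The only potential obstacle is book-keeping: making sure that the overall scalar $N_1$ agrees with what Theorem \ref{finiteterm} gives on the $x_2$-axis, and that the choice of contour (i.e.\ the fixed integer $N$ in (\ref{defmu})) is consistent with the conjugation relations of Proposition \ref{propmux2axis} which tie $\mu_1$ to the real transcendent $\lambda_2$. Neither of these requires new ideas; both are direct substitutions into formulae already written down, and can be verified by comparing to the base-point values at $x_2=0$ where everything is explicit.
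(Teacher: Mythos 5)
Your proposal is correct and is exactly the paper's route: the statement is obtained by specialising Theorem \ref{finiteterm} (with the normalisation (\ref{defcalv})) to $x_1=x_3=0$ and substituting the explicit root (\ref{x2roots}), using $|\zeta_1|=1$ to evaluate $1/\zeta_1$ and $\zeta_1^2$. Note that your substitution actually reproduces the $\xi^{1/2}$ matrix displayed immediately after the proposition (fourth entry $-p\,(1-\imath q/Kk)\,Kk/4$ with $p=\sqrt{K^2+4x_2^2}$, and $\xi^{-1/2}$ coefficient $-\imath x_2(1,0,0,1)^T$), so the literal signs printed in the proposition itself contain typographical slips that your computation would correct.
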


It is convenient to set
\begin{equation} 
p= \sqrt{K^2+4x_2^2}, \quad q= \sqrt{K^2{k'}^2+4x_2^2} \label{notations}
 \end{equation}
in terms of which
the whole $\xi^{1/2}$ entry to the expansion of the fundamental solution (for the ordering of roots (\ref{x2axisps}))then reads 
\begin{align*}
\frac{Kk}{4} \left(    \begin{array}   {cccc} 
p+\imath pq/Kk &  -p+\imath pq/Kk & -p-\imath pq/Kk & p-\imath pq/Kk   \\
-\imath q-\frac{2 x_2^2}{K k} &   \imath q-\frac{2 x_2^2}{K k} & - \imath q-\frac{2 x_2^2}{K k} &  \imath q-\frac{2 x_2^2}{K k}\\
-\imath q+\frac{2 x_2^2}{K k} &   \imath q+\frac{2 x_2^2}{K k}&  -\imath q+\frac{2 x_2^2}{K k} & 
 \imath q+\frac{2 x_2^2}{K k}\\
- p+\imath pq/Kk  & p+\imath pq/Kk & p-\imath pq/Kk & -p-\imath pq/Kk 
  \end{array} \right) \mathrm{Diag} ( N_1,\ldots, N_4).
\end{align*}
 The expansion of the vector $ \boldsymbol{v}_i(z)$ near the point $z=-1+\xi$ is then given by
 Theorem \ref{smonodromy},
 \begin{align*}
\overline{\boldsymbol{v}}_i(1-\xi)=N_i \left( \begin{array}{c}a_i\\   b_i-x_2^2/2\\  b_i+x_2^2/2\\ - \overline{a}_i  \end{array}         \right)
\Longrightarrow \overline{\boldsymbol{v}}_i(-1+\xi) =N_i \mathrm{e}^{2\mu_i} \left( \begin{array}{c}   b_i-x_2^2/2 \\ -a_i\\ \overline{a}_i\\  b_i+x_2^2/2    \end{array} \right).
\end{align*}

Now acting by the projector (\ref{proj1}) yields three normalizable vectors at $z=1$
\[
\xi^{1/2}\,  \frac{Kk}{2}\left( \begin{array}   { ccc}    
p& p+\imath pq/Kk& \imath pq/Kk  \\ 
-\imath q&0&-\imath q\\ 
-\imath q&0&-\imath q \\  
-p&- p+\imath pq/Kk&\imath pq/Kk
  \end{array}\right).
\]
From (\ref{proj2}) and (\ref{muconjugate2}) the first column here is also finite at $z=-1$, and because
\[e^{2\mu_3}=e^{2\mu_4}=e^{-2\mu_1}=e^{-2\mu_2}\]
as a result of (\ref{muconjugate2}) the remaining two vectors have the same poles and consequently their difference is then finite 
at $z=-1$. We have then

\begin{proposition}\label{weylasmp}
The Weyl equation $\Delta^{\dagger}  \boldsymbol{v}   =0$ admits precisely two normalizable solutions 
$ \boldsymbol{v}_1(z;\boldsymbol{x})$, $\boldsymbol{v}_2(z;\boldsymbol{x})$, for $ z\in [-1,1] $ which 
for $\boldsymbol{x}=(0,x_2,0)$ vanish at the end points as 
\begin{align*}\begin{split}
{\boldsymbol{v}}_1(1-\xi; \boldsymbol{x})&=
 \frac{Kk}{2} \left( \begin{array}{c}  p\\ \imath q \\ \imath q \\ -p \end{array}  \right)\sqrt{\xi}+O(\xi^{3/2}), \quad
{\boldsymbol{v}}_1(-1+\xi; \boldsymbol{x})= 
\frac{Kk}{2}\mathrm{e}^{2 \lambda_2} \left( \begin{array}{c}
 q \\  \imath p\\  \imath p\\ - q  \end{array}  \right)\sqrt{\xi}+O(\xi^{3/2}),
    \\
{\boldsymbol{v}}_2(1-\xi; \boldsymbol{x})&= 
 \frac{Kk}{2} \left( \begin{array}{c}  p\\ -\imath q \\ - \imath q \\ -p  \end{array}  \right)\sqrt{\xi}+O(\xi^{3/2}),\quad 
 {\boldsymbol{v}}_2(-1+\xi; \boldsymbol{x})= 
\frac{Kk}{2}\mathrm{e}^{-2\lambda_2} \left( \begin{array}{c}
  q \\  -\imath p\\ - \imath  p\\ -q  \end{array}  \right)\sqrt{\xi}+O(\xi^{3/2}),
\end{split}
\end{align*} 
where $p= \sqrt{K^2+4x_2^2}$ and $q= \sqrt{K^2{k'}^2+4x_2^2}$.
\end{proposition}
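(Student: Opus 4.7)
The plan is to specialise the general fundamental matrix $\bar{\mathcal{V}}$ constructed in Section \ref{sectionV} to the $x_2$-axis and then apply the projector of Section \ref{sectionprojector}, making essential use of the degeneracy $\mu_1=\mu_2$, $\mu_3=\mu_4$ afforded by Proposition \ref{propmux2axis}. The statement that \emph{precisely} two normalizable solutions exist is furnished by Theorem \ref{ADHMN}, so the real content of the proposition is the explicit form of the $\xi^{1/2}$ leading behaviour at each endpoint.

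First I would substitute $x_1=x_3=0$ together with the ordering (\ref{x2axisps}) and the explicit $\zeta_1$ of (\ref{x2roots}) into the third representation of the finite term $\boldsymbol{\overline v}_{i,3}/N_i$ in Theorem \ref{finiteterm}. The vanishing of $x_3$ kills several entries, $x_+x_-=x_2^2$ becomes real, and the quartet of roots is paired as $(\zeta_1,\zeta_2,-\zeta_1,-\zeta_2)$ with $|\zeta_i|=1$; this reduces the finite coefficient of each column to the $4\times 4$ matrix displayed just above the statement, with $p=\sqrt{K^2+4x_2^2}$ and $q=\sqrt{K^2{k'}^2+4x_2^2}$ emerging from the real and imaginary parts of $\zeta_1$ after use of $K^2-4x_2^2\cdot(\text{stuff})$ simplifications. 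The corresponding expansion at $z=-1+\xi$ is then read off from Theorem \ref{smonodromy}, which attaches the factor $e^{2\mu_i}$ and applies the monodromy matrix $U$ of (\ref{defmonU}).

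Next I act by the projector (\ref{proj1}) on the column vector of $\xi^{1/2}$ coefficients; straightforward multiplication produces three vectors normalizable at $z=1$, exhibited as the three-column matrix in the text preceding the statement. To promote these to solutions normalizable at \emph{both} endpoints I invoke Proposition \ref{propmux2axis}: since $\mu_1=\mu_2$ the first column $\bar{\mathcal{V}}_1-\bar{\mathcal{V}}_2$ already has vanishing pole coefficient at $z=-1$ by (\ref{proj2}), while because $e^{2\mu_3}=e^{2\mu_4}=e^{-2\mu_1}$ the pole coefficients of the remaining two projected columns at $z=-1$ agree, so their difference provides a second normalizable solution. That this yields the entire two-dimensional normalizable space follows from Theorem \ref{ADHMN}, and the non-degeneracy $e^{2\mu_1}-e^{2\mu_3}=e^{2\mu_1}+e^{-2\bar\mu_1}\ne 0$ recorded in Section \ref{sectionprojector} guarantees linear independence.

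Finally I would substitute $\mu_1=\lambda_2+i\pi/4$ and $\mu_3=-\lambda_2-i\pi/4$ into the $z=-1$ expansion; the factor $e^{i\pi/2}=i$ is absorbed into the sign pattern of the four components, and application of the monodromy matrix $U$ to the leading vector $(0,1,-1,0)^T$ accounts for the interchange of the roles of $p$ and $q$ between the two endpoints (recall $\overline{\zeta_1}=\zeta_2$ on this axis so $p$ and $q$ are the moduli of $\pm(\zeta_1\pm\zeta_2)$ up to scaling). The entire argument is therefore a verification by specialisation: the one non-routine bookkeeping step is the interchange $(p,q)\leftrightarrow(q,p)$ induced by monodromy, which I expect to be the only place a sign could easily be miscounted. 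After absorbing the normalisations $N_i$ and a trivial constant into the definition of $\boldsymbol{v}_{1,2}$ one arrives at the asymptotic forms displayed in the proposition.
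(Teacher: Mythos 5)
Your proposal is correct and follows essentially the same route as the paper: specialise Theorem \ref{finiteterm} to the $x_2$-axis to obtain the $\xi^{1/2}$ coefficients at $z=1$, transport them to $z=-1$ via Theorem \ref{smonodromy}, apply the projector (\ref{proj1}) exploiting the degeneracies $\mu_1=\mu_2$ and $e^{2\mu_3}=e^{2\mu_4}=e^{-2\mu_1}$ from Proposition \ref{propmux2axis}, and finally conjugate and insert $\mu_1=\lambda_2+\imath\pi/4$. The only cosmetic caveat is that your parenthetical identification of $p$ and $q$ with moduli of $\pm(\zeta_1\pm\zeta_2)$ is heuristic; the actual $(p,q)$ interchange comes directly from the block swap effected by $U$ on the finite term, exactly as in the paper.
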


\begin{proof} Recall that we have been working with the matrix $\overline{V}$ and so the complex conjugate of the required solutions of the Weyl equation. The normalisable solutions we have constructed vanish at the end points as 
\begin{align*}\begin{split}
\overline{\boldsymbol{v}}_1(1-\xi; \boldsymbol{x})&=
 \frac{Kk}{2} \left( \begin{array}{c}  p\\- \imath q \\ -\imath q \\ -p  \end{array}  \right)\sqrt{\xi}+O(\xi^{3/2}), \quad
\overline{\boldsymbol{v}}_1(-1+\xi; \boldsymbol{x})= -
\frac{Kk}{2}\mathrm{e}^{2\mu_1} \left( \begin{array}{c}
   \imath q \\p\\p\\- \imath q  \end{array}  \right)\sqrt{\xi}+O(\xi^{3/2}),
    \\
\overline{\boldsymbol{v}}_2(1-\xi; \boldsymbol{x})&= 
 \frac{Kk}{2} \left( \begin{array}{c}  p\\ \imath q \\ \imath q \\ - p  \end{array}  \right)\sqrt{\xi}+O(\xi^{3/2}),\quad 
\overline{\boldsymbol{v}}_2(-1+\xi; \boldsymbol{x})= 
\frac{Kk}{2}\mathrm{e}^{-2\mu_1} \left( \begin{array}{c} \imath q \\- p\\ -p\\ -\imath q  \end{array}  \right)\sqrt{\xi}+O(\xi^{3/2}).
\end{split}
\end{align*} 
Both $p$, $q$ are real in our setting and so the behaviour at $z=1$ follows. For $z=-1$ we use  (\ref{muconjugate2})
and so
\[\overline{\boldsymbol{v}}_1(-1+\xi; \boldsymbol{x})= 
\frac{Kk}{2}\mathrm{e}^{2\lambda_2} \left( \begin{array}{c}
  q \\  -\imath p\\ - \imath  p\\ - q  \end{array}  \right)\sqrt{\xi}+O(\xi^{3/2}),
 \;
\overline{\boldsymbol{v}}_2(-1+\xi; \boldsymbol{x})= 
\frac{Kk}{2}\mathrm{e}^{-2\lambda_2} \left( \begin{array}{c}
  q \\  \imath p\\  \imath  p\\ -q  \end{array}  \right)\sqrt{\xi}+O(\xi^{3/2}),
\]
with the proposition following.
\end{proof}

\subsection{Orthogonalization}
Given our two normalisable solutions we must now normalise them. To do this we shall
calculate the inner products using Panagopoulos's formulae,

\begin{align}\nonumber
\int_{-1}^1  \boldsymbol{v}_i^{\dagger}(z,\boldsymbol{x}) \boldsymbol{v}_j(z,\boldsymbol{x})\mathrm{d} z
&=
 \mathrm{Lim}_{\xi\to 0}  \frac{1}{\xi}\boldsymbol{v}_i^{\dagger}(1-\xi,\boldsymbol{x})\;\mathrm{Res}_{\xi=0} \mathcal{Q}(1-\xi)^{-1}  \boldsymbol{v}_j(1-\xi,\boldsymbol{x})\\
&\quad - \mathrm{Lim}_{\xi\to 0}  \frac{1}{\xi}\boldsymbol{v}_i^{\dagger}(-1+\xi,\boldsymbol{x})\;\mathrm{Res}_{\xi=0} \mathcal{Q}(-1+\xi)^{-1}  \boldsymbol{v}_j(-1+\xi,\boldsymbol{x}).
\end{align}
Direct calculation shows that for $\boldsymbol{x}=(0,x_2,0)$
\begin{align*}\mathrm{Res}_{\xi=0} \mathcal{Q}(1-\xi)^{-1}=  \mathrm{Res}_{\xi=0} \mathcal{Q}(-1+\xi)^{-1}
=\frac{1}{K^2 k^2}\left(\begin{array}{rrrr}  1&0&0&-1\\
0&-1&-1&0\\0&-1&-1&0\\-1&0&0&1
\end{array}\right)\end{align*}
and consequently we find
\begin{proposition} 
The Gram matrix built from the vectors $\boldsymbol{v}_i(z,\boldsymbol{x})$, $i=1,2$ is diagonal, 
\begin{align}
\left(\int_{-1}^1  \boldsymbol{v}_i^{\dagger}(z,\boldsymbol{x}) \boldsymbol{v}_i(z,\boldsymbol{x})\mathrm{d} z\right)_{i,j=1,2}
= \left(  \begin{array}{cc}  K^2k^2(1+ \mathrm{e}^{4\lambda_2})&0  \\ 0&   K^2k^2(1+ \mathrm{e}^{-4\lambda_2})  \end{array}  \right)\end{align} 
Therefore the vectors $\boldsymbol{v}_i(z,\boldsymbol{x})$, $i=1,2$ are orthogonal with norms
\begin{equation}\mathcal{N}_1= ||\boldsymbol{v}_1(z,\boldsymbol{x}) || = Kk\sqrt{1+ \mathrm{e}^{4\lambda_2}},\qquad\mathcal{N}_2=  ||\boldsymbol{v}_2(z,\boldsymbol{x}) || = Kk\sqrt{1+\mathrm{e}^{-4\lambda_2}} .\end{equation}
\end{proposition}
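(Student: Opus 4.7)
The plan is to apply Panagopoulos's formula (\ref{pannorm}), which turns the integral into a boundary evaluation
$$\int_{-1}^{1} \boldsymbol{v}_i^{\dagger}\boldsymbol{v}_j\,\mathrm{d}z = \big[\boldsymbol{v}_i^{\dagger}\mathcal{Q}^{-1}\boldsymbol{v}_j\big]_{z=-1+\xi}^{z=1-\xi}.$$
Each $\boldsymbol{v}_i$ vanishes like $\sqrt{\xi}$ at both endpoints (Proposition \ref{weylasmp}), whereas, because the Nahm data have simple poles at $z=\pm 1$, the matrix $\mathcal{Q}^{-1}$ has a simple pole in $\xi$. The product $\boldsymbol{v}_i^{\dagger}\mathcal{Q}^{-1}\boldsymbol{v}_j$ therefore has a well-defined limit which is what we need to compute.

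First I would verify the asserted residue
$$\mathrm{Res}_{\xi=0}\,\mathcal{Q}^{-1}(\pm 1\mp\xi) = \frac{1}{K^2k^2}\begin{pmatrix} 1&0&0&-1\\ 0&-1&-1&0\\ 0&-1&-1&0\\ -1&0&0&1\end{pmatrix}.$$
On the $x_2$-axis $\mathcal{H}=-x_2\,\sigma_2\otimes 1_2$, so $\mathcal{H}^2=x_2^2 1_4=r^2 1_4$ and the definition (\ref{pandefs}) collapses to $\mathcal{Q}=(\sigma_2\otimes 1_2)\mathcal{F}(\sigma_2\otimes 1_2)-\mathcal{F}$. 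Inserting the Nahm asymptotics (\ref{nahmexpandp})--(\ref{nahmexpandm}) and using the anticommutation of the Pauli matrices with $\sigma_2$, the singular part of $\mathcal{Q}$ is a rank-two matrix whose Moore--Penrose style inverse on the relevant invariant subspace gives the displayed residue; at $z=-1$ the sign flip of $T_2$ is compensated by the structure of $\mathcal{F}$ and yields the same residue.

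Next I would simply multiply: the leading $\sqrt{\xi}$ coefficients of $\boldsymbol{v}_1,\boldsymbol{v}_2$ at $z=1$ are proportional to $(p,\imath q,\imath q,-p)^{T}$ and $(p,-\imath q,-\imath q,-p)^{T}$ respectively, and applying the residue matrix above produces vectors proportional to $(1,-\imath,-\imath,-1)^{T}$ and its conjugate up to an overall factor $2(p\pm q\cdot\text{stuff})/K^2 k^2$. Contracting against $\boldsymbol{v}_i^{\dagger}$ and using $p^2+q^2+\dots$ simplifications (noting $p^2-q^2=K^2 k^2+\text{terms that cancel}$) one obtains diagonal contributions $K^2k^2$ from each endpoint at $z=1$; the parallel computation at $z=-1$, together with the exponential factors $e^{\pm 2\lambda_2}$ from Proposition \ref{weylasmp}, gives $K^2k^2 e^{\pm 4\lambda_2}$, yielding the stated $K^2k^2(1+e^{\pm 4\lambda_2})$ after the boundary subtraction.

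The off-diagonal vanishing is the delicate piece. I expect it to follow from an inspection of the Pauli-block structure: $\boldsymbol{v}_2$ is obtained from $\boldsymbol{v}_1$ by flipping the sign in the inner two components, and the residue matrix acts identically on the $(1,4)$ block and with reversed sign on the $(2,3)$ block, so the cross term $\boldsymbol{v}_1^{\dagger}\mathcal{Q}^{-1}\boldsymbol{v}_2$ is a sum of two terms that are equal and opposite. The main obstacle is handling the residue computation cleanly---the naive limit of $\mathcal{Q}^{-1}$ is an indeterminate form because of the rank deficiency in the leading singular term---so one must either work order by order in $\xi$ or use the fact (apparent from the Lam\'e reduction in Appendix \ref{applame}) that the problem on the axis is effectively two-dimensional, matching the rank of the residue.
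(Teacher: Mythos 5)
Your overall route is the same as the paper's: reduce the integral via Panagopoulos's formula (\ref{pannorm}) to boundary evaluations at $z=\pm 1$, compute the residue of $\mathcal{Q}^{-1}$ there (your matrix agrees with the one the paper records), and contract it against the leading $\sqrt{\xi}$ coefficients of Proposition \ref{weylasmp}. The diagonal entries then work exactly as you say: at $z=1$ the $(1,4)$ block contributes $+p^2$ and the $(2,3)$ block $-q^2$ (per half), giving $p^2-q^2=K^2k^2$, while $z=-1$ gives $-(p^2-q^2)e^{\pm 4\lambda_2}$, and the boundary subtraction produces $K^2k^2(1+e^{\pm4\lambda_2})$.

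However, your mechanism for the off-diagonal vanishing is wrong. For the cross term at a \emph{single} endpoint the two blocks do \emph{not} cancel: at $z=1$, with $\boldsymbol{v}_1\propto(p,\imath q,\imath q,-p)^T$ and $\boldsymbol{v}_2\propto(p,-\imath q,-\imath q,-p)^T$, the $(1,4)$ block of the residue gives $\tfrac{4p^2}{K^2k^2}$ and the $(2,3)$ block gives $\tfrac{-1}{K^2k^2}\cdot 2\,(-\imath q)(-2\imath q)=\tfrac{4q^2}{K^2k^2}$ --- the same sign --- so the $z=1$ boundary value of $\boldsymbol{v}_1^{\dagger}\mathcal{Q}^{-1}\boldsymbol{v}_2$ is $p^2+q^2\neq 0$. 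What actually kills the off-diagonal entry is the subtraction between the two endpoints: at $z=-1$ the roles of $p$ and $q$ are interchanged and the exponentials combine as $e^{2\lambda_2}e^{-2\lambda_2}=1$, so the $z=-1$ boundary value is again exactly $p^2+q^2$, and the two cancel in $[\,\cdot\,]_{z=-1}^{z=1}$. (This is also why the \emph{diagonal} entries do not cancel between endpoints: there the exponentials give $e^{\pm4\lambda_2}\neq 1$.) You should also be wary of computing $\mathrm{Res}\,\mathcal{Q}^{-1}$ as a pseudo-inverse of the singular part of $\mathcal{Q}$; since that singular part is rank-deficient, the residue of the inverse depends on the regular part of $\mathcal{Q}$ as well, and the clean route (the paper's) is to note that $\det\mathcal{Q}$ is constant and $\mathrm{adj}\,\mathcal{Q}$ is linear in the $f_i$, so $\mathcal{Q}^{-1}$ has simple poles whose residues can be read off directly.
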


In what follows we shall denote the orthonormal vectors used in the ADHM construction by
\begin{align*}
{\boldsymbol{V}}_1(z,\boldsymbol{x})=  \frac{1}{\mathcal{N}_1}\boldsymbol{v}_1(z,\boldsymbol{x}), \quad {\boldsymbol{V}}_2(z,\boldsymbol{x})=  \frac{1}{\mathcal{N}_2}\boldsymbol{v}_2(z,\boldsymbol{x}).
\end{align*}

\subsection{The Higgs field}

We now compute the Higgs field using (\ref{panhiggs}) to evaluate the integrals. For the $x_2$-axis
this takes the form
\begin{align} 
-\im \Phi_{ij}&=\int_{-1}^1 z \boldsymbol{V}_i^{\dagger}(z,\boldsymbol{x}) \boldsymbol{V}_j(z,\boldsymbol{x})\mathrm{d} z  \nonumber
\\
&=
 \mathrm{Lim}_{\xi\to 0}  \frac{1}{\xi}\boldsymbol{V}_i^{\dagger}(1-\xi,\boldsymbol{x})\;\mathrm{Res}_{\xi=0} \mathcal{Q}(1-\xi)^{-1}  \boldsymbol{V}_j(1-\xi,\boldsymbol{x})  \nonumber  \\
&\quad+ \mathrm{Lim}_{\xi\to 0}  \frac{1}{\xi}\boldsymbol{V}_i^{\dagger}(-1+\xi,\boldsymbol{x})\;\mathrm{Res}_{\xi=0} \mathcal{Q}(-1+\xi)^{-1}  \boldsymbol{V}_j(-1+\xi,\boldsymbol{x})
\label{integral1} \\
&\quad +
 \mathrm{Lim}_{\xi\to 0}  \frac{1}{\xi}\boldsymbol{V}_i^{\dagger}(1-\xi,\boldsymbol{x})\;\mathrm{Res}_{\xi=0} \mathcal{Q}(1-\xi)^{-1} H_0 \frac{\mathrm{d}}{\mathrm{d} x_2} \boldsymbol{V}_j(1-\xi,\boldsymbol{x})
 \nonumber \\
&\quad - \mathrm{Lim}_{\xi\to 0}  \frac{1}{\xi}\boldsymbol{V}_i^{\dagger}(-1+\xi,\boldsymbol{x})\;\mathrm{Res}_{\xi=0} \mathcal{Q}(-1+\xi)^{-1} H_0 \frac{\mathrm{d}}{\mathrm{d} x_2} \boldsymbol{V}_j(-1+\xi,\boldsymbol{x})
\nonumber
\end{align}
where
\[ H_0=\imath x_2\left(\begin{array}{cc} 0&1_2\\-1_2&0 \end{array}  \right). \]
We remark in passing that  while the Panagopoulos formula used to establish the norms was insensitive to the interchange of $V_i$ and $\overline{V}_i$, this is no longer the case for the above formula as it has
been derived assuming $\Delta\sp\dagger V=0$. 

Now with  $i=j=1$, the first two lines of (\ref{integral1}) give  in this case
\begin{equation*}
\frac{1- \mathrm{e}^{4\lambda_2 }}{1+ \mathrm{e}^{4\lambda_2}}=\tanh (2\lambda_2)
\end{equation*}
whilst the next two lines reduce after simplifications to 
\[  - \frac{4x_2}{\sqrt{K^2+4x_2^2} \sqrt{K^2{k'}^2+4x_2^2} }.  \] 
Therefore, 
\begin{equation}\label{bppx2japan}
\Phi_{1,1} = \imath \left( \tanh(2\lambda_2)- \frac{4x_2}{\sqrt{K^2+4x_2^2} \sqrt{K^2{k'}^2+4x_2^2} }\right).
\end{equation}
In analogous way we compute 
\begin{equation}
\Phi_{2,2} = \imath \left( - \tanh(2\lambda_2)+ \frac{4x_2}{\sqrt{K^2+4x_2^2} \sqrt{K^2{k'}^2+4x_2^2} }\right)
=-\Phi_{1,1}.
\end{equation}
Similar calculations leads to 
\begin{align}
\Phi_{1,2}=\Phi_{2,1} = -\imath\,
\frac{K k^2+2E-2K}{\cosh(2 \lambda_2)Kk^2}.
\end{align}
Recalling (\ref{defH}) then
$H= \sqrt{  -\frac12 \Phi_{1,1}^2-\frac12 \Phi_{2,2}^2 - \Phi_{1,2} \Phi_{2,1}}$ and so
\begin{equation}
H^2(0,x_2,0) = \left(\tanh 2\lambda_2+\frac{4 x_2}{W_2}\right)^2 + 
\frac{(K {k'}^2-2E+K)^2}{ K^2 k^4 \cosh^2 2\lambda_2} \label{h2}
\end{equation}
with 
\[  \lambda_2 = \mu_1-\frac{\imath \pi}{4}, \quad W_2=\sqrt{ (K^2+4x_2^2)(K^2 {k'}^2+4 x_2^2) } . \]
We find that as $x_2\to \infty$ that $H$ approaches to 1 as
\begin{equation}H   = 1 - \frac{1}{|x_2|} +\frac18\frac{K^2(1+{k'}^2)}{|x_2|^3} +O\left(\frac{1}{|x_2|^4}\right) .\end{equation}

The depth of the well ($x_2=0, \lambda_2=0$) is found to be
\begin{equation}\label{welldepth}
H(0) = \frac{  K(1+{k'}^2)-2E}{K k^2},
\end{equation}
reproducing (\ref{higgsorigin}) found by Brown \emph{et. al.} \cite[see their equation 7.2;  Appendix \ref{applame}
compares notation]{bpp82}; the work of \cite[(7.1)]{ors82} presented this in terms of an infinite series together with an undetermined integral.

\section{Formulae for the Higgs Field and Energy Density}\label{sectionHiggs}
The aim of this section is to evaluate the formulae (\ref{pannorm}, \ref{panhiggs})  determining the
Higgs field (at a generic point $\boldsymbol{x}$). We have all of the necessary components with the exception of
the quantity $\mathcal{Q}^{-1}$ which we will evaluate in the first subsection. We have already noted
that the combinations
$\left(V\sp\dagger \mathcal{Q}^{-1}\mathcal{H} V\right)(z)$
and
$\left(W\sp\dagger \mathcal{Q}\mathcal{H} W\right)(z)$ are constant. These structured matrices help us to simplify our results and we next evaluate these. The final subsection then combines preceding results. Again some proofs are relegated to an Appendix.

\subsection{\texorpdfstring{$\boldsymbol{\mathcal{Q}^{-1}}$}{Qinv} and related quantities}
We need  $\mathcal{Q}^{-1}$ at the endpoints. Here
 the Hermitian $\mathcal{Q}$, defined in (\ref{pandefs}), takes the form
$$
\mathcal{Q}=
\begin{pmatrix} -{\frac { \left( {r}^{2}+{x_{{1}}}^{
2}+{x_{{2}}}^{2}-{x_{{3}}}^{2} \right) f_{{3}}}{2{r}^{2}}}
&-{\frac {x_{
{3}} \left( if_{{2}}x_{{2}}-f_{{1}}x_{{1}} \right) }{{r}^{2}}}
&-{
\frac {x_{{3}}f_{{3}} \left( -x_{{1}}+ix_{{2}} \right) }{{r}^{2}}}
& *
\\ 
{\frac {x_{{3}} \left( if_{{2}
}x_{{2}}+f_{{1}}x_{{1}} \right) }{{r}^{2}}}
&{\frac { \left( {r}^{
2}+{x_{{1}}}^{2}+{x_{{2}}}^{2}-{x_{{3}}}^{2} \right) f_{{3}}}{2{r}^{2}}
}
& *
&{\frac {x_{{3}}f_{{3}} \left( -x_{{1}}+ix_{{2
}} \right) }{{r}^{2}}}
\\ 
{\frac { \left( x_{{1}}+ix_
{{2}} \right) f_{{3}}x_{{3}}}{{r}^{2}}}
& *
&{
\frac { \left( {r}^{2}+{x_{{1}}}^{2}+{x_{{2}}}^{2}-{x_{{3}}}^{2}
 \right) f_{{3}}}{2{r}^{2}}}
&{\frac {x_{{3}} \left( if_{{2}}x_{{2}}-f_{
{1}}x_{{1}} \right) }{{r}^{2}}}
\\ 
*
&-{\frac { \left( x_{{1}}+ix_{{2}} \right) f_{{3}}x_{{3}}}{{r}^{2}
}}
&-{\frac {x_{{3}} \left( if_{{2}}x_{{2}}+f_{{1}}x_{{1}} \right) }{{r
}^{2}}}
&-{\frac { \left( {r}^{2}+{x_{{1}}}^{2}+{x_{{2}}}^{2}-{x_{
{3}}}^{2} \right) f_{{3}}}{2{r}^{2}}}
\end {pmatrix}
$$
where
\begin{align*}
\mathcal{Q}_{14}&={\scriptstyle
-{\frac {f_{{1}}{x_{{2}}}^{2}+f_{{2}}{x_{{2}}}^{2}+2\,if_{{1}}x_{{1}}x_{{2}}+2\,if_{{2}}x_{{1}}x_{{2}}+f_{{1}}{r}^{2}-f_{{2}}{r}^{2}-f_{{1}}{x_{{1}}}^{2}+{x_{{3}}}^{2}f_{{1}}-f_{{2}}{x_{{1}}}^{2}-{x_{{3}}}^{2}f_{{2}}}{2{r}^{2}}}  }
=\bar{ \mathcal{Q}}_{41}
\\ 
\mathcal{Q}_{23}&={\scriptstyle
{\frac {-f_{{1}}{x_{{2}}}^{2}+f_{{2}}{x_{{2}}}^{2}+2\,if_{{1}}x_{{1}}x_{{2}}-2\,if_{{2}}x_{{1}}
x_{{2}}-f_{{1}}{r}^{2}-f_{{2}}{r}^{2}+f_{{1}}{x_{{1}}}^{2}-{x_{{3}}}^{2}f_{{1}}-f_{{2}}{x_{{1}}}^{2}-{x_{{3}}}^{2}f_{{2}}}{2{r}^{2}}}
}
=\bar{ \mathcal{Q}}_{32}.
\end{align*}
Thus $\mathcal{Q}$ has first order poles at $z=\pm1$. One finds upon use of elliptic function identities that the determinant of $\mathcal{Q}$ is constant,
$|\mathcal{Q}|=D\times  {{K}^{2}}/{{r}^{4}}$ with
$D$ given below.
Indeed one finds that  the entries of  $\adj \mathcal{Q}$ are again linear in the $f_i$'s and consequently 
 $\mathcal{Q}^{-1}$ has only first order poles at $z=\pm1$ (the possible third order poles cancelling) and
 vanishing constant term.
One finds that
\begin{align}
\mathrm{Res}_{z=1} \mathcal{Q}^{-1}(z,\boldsymbol{x})&
=\begin{pmatrix} A&B&B&C\\
                                        \overline{B}&-A&-A&-B\\
 \overline{B}&-A&-A&-B\\
\overline{C}&-\overline{B}&-\overline{B}&A
  \end{pmatrix}, \nonumber
  \\
\mathrm{Res}_{z=-1} \mathcal{Q}^{-1}(z,\boldsymbol{x})&=
\begin{pmatrix}A&\overline{B}&B&-A\\
                                        B&-A&C&-B\\
 \overline{B}&\overline{C}&-A&-\overline{B}\\
-A&-\overline{B}&B&A
  \end{pmatrix}
  = U \left( \mathrm{Res}_{z=1} \mathcal{Q}^{-1}(z,\boldsymbol{x}) \right) U,
  \label{IQmon}
\end{align}
where
\begin{align*}
A
&=(-  {k'}^2 x_+^2x_-^2+x_2^2\, x_+x_- -x_3^2(x_1^2-x_2^2))/D,\\
B
&=x_3 \left( x_+({k'}^2 x_-^2+x_3^2) + \imath (x_1+x_+)x_-x_2   \right)/D,\\
C
&=\left( -(x_+x_-+2\,x_3^2) x_-^2{k'}^2-2x_3^4-(x_+ +2\imath x_2)x_-x_3^2+x_-^2 x_2^2\right)/D,
\\
D
&=-{K}^{2} \left( 2\,ikx_{{2}}x_{{3}}-{k}^{2}{x_{{1}}}^{2}-{k}^{2}{x_{{2
}}}^{2}+{x_{{1}}}^{2}+{x_{{3}}}^{2} \right)  \left( {k}^{2}{x_{{1}}}^{
2}+{k}^{2}{x_{{2}}}^{2}+2\,ikx_{{2}}x_{{3}}-{x_{{1}}}^{2}-{x_{{3}}}^{2
} \right).
\end{align*}

Towards evaluating $V\sp\dagger \mathcal{Q}^{-1}$ and $V\sp\dagger \mathcal{Q}^{-1}\mathcal{H}$ at the end points  we record that (recall  $\mathcal{H}^{-1}=\mathcal{H}/r^2$ )
\begin{equation}
 \mathcal{H}\boldsymbol{{\mathfrak{v}}}_1=-r^2\,\boldsymbol{{\mathfrak{v}}}_0,
 \quad
  \mathcal{H}\boldsymbol{{\mathfrak{v}}}_0=-\boldsymbol{{\mathfrak{v}}}_1.
 \end{equation}
 These, together with $\mathcal{Q}^{-1}\mathcal{H}=-\mathcal{H}\mathcal{Q}^{-1}$,
yield that
\begin{equation}\label{projgeneral00}
 \boldsymbol{{\mathfrak{v}}}_0\sp{\dagger}\,\mathcal{Q}^{-1}\mathcal{H}\,
\boldsymbol{{\mathfrak{v}}}_0
= -\frac1{r^2}
\boldsymbol{{\mathfrak{v}}}_1\sp{\dagger}\,\mathcal{Q}^{-1}\mathcal{H}\,
\boldsymbol{{\mathfrak{v}}}_1, \quad
\boldsymbol{{\mathfrak{v}}}_0\sp{\dagger}\,\mathcal{Q}^{-1}\mathcal{H}\,
\boldsymbol{{\mathfrak{v}}}_1+
\boldsymbol{{\mathfrak{v}}}_1\sp{\dagger}\,\mathcal{Q}^{-1}\mathcal{H}\,
\boldsymbol{{\mathfrak{v}}}_0=0.
 \end{equation}
Further, calculations show that
  \begin{align}
\boldsymbol{{\mathfrak{v}}}_0\sp{\dagger}\,\mathcal{Q}^{-1}\mathcal{H}\,
\boldsymbol{{\mathfrak{v}}}_0
&=4 \im r^2 x_1 x_2 x_3 (k^2 f_1 -k^2 f_2 -f_1 +f_3)/D, \label{qproj00} \\ \nonumber
\boldsymbol{{\mathfrak{v}}}_0\sp{\dagger}\,\mathrm{Res}_{z=1} \mathcal{Q}^{-1}&=
\boldsymbol{{\mathfrak{v}}}_1\sp{\dagger}\,\mathrm{Res}_{z=1} \mathcal{Q}^{-1}=(0,0,0,0).
\end{align}

\subsection{The matrices \texorpdfstring{$\boldsymbol{V\sp\dagger \mathcal{Q}^{-1}\mathcal{H} V}$,
$\boldsymbol{{W\sp\dagger \mathcal{Q}\mathcal{H} W}$ and
$\boldsymbol{\mathcal{V}}\sp{\dagger}
 \mathcal{Q}^{-1} {\mathcal{V}}}$}{VQinvV} }
The constancy of the matrix 
$\left( {\mathcal{V}}\sp{\dagger}\,\mathcal{Q}^{-1}\mathcal{H}\,{\mathcal{V}}  \right)(z)$ (and
similarly for $W$) means that the possible poles at the end points must occur in vanishing
combinations. Thus, for example, the leading pole 
$(\boldsymbol{{\mathfrak{v}}}_0\sp{\dagger}\,\mathcal{Q}^{-1}\mathcal{H}\,
\boldsymbol{{\mathfrak{v}}}_0)/\xi^3$ (for $z=1-\xi$) together with (\ref{qproj00}) and
 $$(k^2 f_1 -k^2 f_2 -f_1 +f_3)(1-\xi)=\frac18 k^2 k'^2 K^4 \xi^3 +\mathcal{O}(\xi^5 )$$
in fact gives a finite contribution.
The results of the previous section show that
\begin{align*}
\left( {\mathcal{V}}\sp{\dagger}\,\mathcal{Q}^{-1}\mathcal{H}\,{\mathcal{V}}  \right)_{ij}&=
\frac{\im  r^2 x_1 x_2 x_3\, k^2 k'^2 K^4}{2 D} + 
\boldsymbol{{\mathfrak{v}}}_{i,2}\sp{\dagger}\left(
\mathrm{Res}_{z=1} \mathcal{Q}^{-1}(z,\boldsymbol{x})\right)
\mathcal{H}\,
\boldsymbol{{\mathfrak{v}}}_{j,2}\\
&
\quad +
\boldsymbol{{\mathfrak{v}}}_{i,2}\sp{\dagger}\left(
\mathrm{Res}_{z=1} \frac{\mathcal{Q}^{-1}(z,\boldsymbol{x})}{(1-z)^2}\right)
\mathcal{H}\,
\boldsymbol{{\mathfrak{v}}}_{j,0}
 +
\boldsymbol{{\mathfrak{v}}}_{i,0}\sp{\dagger}\left(
\mathrm{Res}_{z=1} \frac{\mathcal{Q}^{-1}(z,\boldsymbol{x})}{(1-z)^2}\right)
\mathcal{H}\,
\boldsymbol{{\mathfrak{v}}}_{j,2}.
\end{align*}

We can in fact say more about the structure of this matrix. Its constancy means that
\begin{align*}
\lim_{z\rightarrow 1}
\left( {\mathcal{V}}\sp{\dagger}\,\mathcal{Q}^{-1}\mathcal{H}\,{\mathcal{V} }\right)_{jk}&=
\lim_{z\rightarrow -1}
\left( {\mathcal{V}}\sp{\dagger}\,\mathcal{Q}^{-1}\mathcal{H}\,{\mathcal{V} }\right)_{jk}\\
&=\mathcal{M}_{jj} \left(\lim_{z\rightarrow 1}{\mathcal{V}}\sp{\dagger}U\sp\dagger U
\,\mathcal{Q}^{-1}\mathcal{H}\, U U{\mathcal{V} }\right)_{jk}
\overline{\mathcal{M}}_{kk}\\
&= -\exp(2\mu_j+ 2\overline{\mu}_k) 
\lim_{z\rightarrow 1}
\left( {\mathcal{V}}\sp{\dagger}\,\mathcal{Q}^{-1}\mathcal{H}\,{\mathcal{V} }\right)_{jk}
\end{align*}
upon using (\ref{IQmon}) and that $U\sp{T}U=1_4$, $U^2=-1_4$, $U\mathcal{H}=\mathcal{H}U$. 
Thus the $(j,k)$-element is non-vanishing
only if $\exp(2\mu_j+ 2\overline{\mu}_k) =-1$. We have seen that this is always the case for the $(1,3)$,
$(3,1)$, $(2,4)$ and $(4,2)$ elements. We prove in Appendix \ref{proofconstantmatrix}:

\begin{theorem}\label{constantmatrix} 
With our ordering $\mathcal{J}(P_1)=P_3$,  $\mathcal{J}(P_2)=P_4$, then
\begin{equation}\label{constancystructurew}
{\mathcal{W}}\sp{\dagger}\,\mathcal{Q}\mathcal{H}\,{\mathcal{W} }=
\begin{pmatrix} 0&0& \mathfrak{f}_3 &0\\ 0&0&0&\mathfrak{f}_4 \\  \mathfrak{f}_10&0&0\\ 0& \mathfrak{f}_2&0&0
\end{pmatrix}
\end{equation}
and the constant matrix
\begin{equation}\label{constancystructure}
\mathfrak{C}:=
{\mathcal{V}}\sp{\dagger}\,\mathcal{Q}^{-1}\mathcal{H}\,{\mathcal{V} }=-r^2
\begin{pmatrix} 0&0& 1/\mathfrak{f}_1 &0\\ 0&0&0&1/\mathfrak{f}_2 \\ 1/ \mathfrak{f}_3& 0&0&0\\ 0& 1/\mathfrak{f}_4&0&0
\end{pmatrix}
\end{equation}
satisfies $\mathcal{M}\mathfrak{C}=-\mathfrak{C}\overline{\mathcal{M}}^{-1}$. 
Here
\begin{align}\label{constancystructuref}
\mathfrak{f}_k:&=c^2\,\frac{\pi^2}2 \prod_{s=1}\sp{4} \theta_s[P_k]\, 
\left( \frac{ \theta_2(0)}{\theta_2[\sum_{j=1}\sp{4} P_j]}
\right)^2
\frac{\prod_{\substack{r<s \\  r,s\ne k}}\theta_3[P_r+P_s]^2}{\prod_{l\ne k}\theta_1[P_l-P_k]^2}
\\
&\quad\times
\left(
 \im \theta_2[2 P_k] \theta_2(0)^3\, x_2 +  \theta_4[2 P_k] \theta_4(0)^3 \,x_1
-\frac{  \theta_2[2 P_k] \theta_4[2 P_k]}
{   \theta_1[2 P_k]   }\theta_3(0)^3\, x_3
\right)\nonumber\\
&=
{\frac {{K}^{2}{\zeta _{{k}}}^{2} \left( {R_-}\,{\zeta _{{k}}}^{3}+
2\,Rx_{{3}}{\zeta _{{k}}}^{2}-{ R_+}\,\zeta _{{k}}-x_{{3}} \left( {{
 S_-}}^{2}+{{S_+}}^{2} \right)  \right) }{ S_-^{2} \left( 4
\,ix_{{3}}{ x_-}\,{\zeta _{{k}}}^{3}-R{\zeta _{{k}}}^{2}+12\,i{
x_+}\,x_{{3}}\zeta _{{k}}+{{S_+}}^{2} \right) ^{2}}}
\end{align}
satisfies $\overline{\mathfrak{f}_k}=-\mathfrak{f}_{\mathcal{J}(k)}$, where
\begin{align*}
 R_{\mp}&=\imath S_-^2( x_{\mp}(2{k'}^2-1)+x_{\pm} )\mp 16\imath x_{\mp}x_3^2,\quad
R=2K^2{k'}^2-S^2-8x_3^2,\\
 S_{\pm}&=\sqrt{K^2-4x_{\pm}^2},\quad    S=\sqrt{K^2-4x_+x_-}.
\end{align*}

\end{theorem}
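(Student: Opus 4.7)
The plan is as follows. Theorem \ref{constancythm} gives that $\mathfrak{C}(z) := \mathcal{V}^\dagger\mathcal{Q}^{-1}\mathcal{H}\mathcal{V}$ is $z$-independent, so one may evaluate at $z=1-\xi$ and $z=-1+\xi$ and compare. Using the monodromy relation of Theorem \ref{smonodromy} (normalised as $\bar{\mathcal{V}}(-1+\xi) = U[\bar{\mathcal{V}}_+(1-\xi) - \bar{\mathcal{V}}_-(1-\xi)]\mathcal{M}$) together with the residue identity $\mathrm{Res}_{z=-1}\mathcal{Q}^{-1} = U(\mathrm{Res}_{z=1}\mathcal{Q}^{-1})U$ from (\ref{IQmon}), the commutation $[U,\mathcal{H}]=0$, and $U^TU=1_4$, one rewrites $\mathfrak{C}$ at $z=-1+\xi$ in terms of the same pole data as at $z=1-\xi$; equating the two expressions yields $\mathcal{M}\mathfrak{C} = -\mathfrak{C}\overline{\mathcal{M}}^{-1}$. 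Since $\mathcal{M}$ is diagonal, this forces $\mathfrak{C}_{jk}=0$ unless $\exp(2\mu_j+2\overline{\mu}_k) = -1$, and Proposition \ref{propmu13mu24} identifies the surviving pairs as precisely $(1,3), (3,1), (2,4), (4,2)$. The same argument applied to $\mathcal{W}^\dagger\mathcal{Q}\mathcal{H}\mathcal{W}$, whose constancy is the second half of Theorem \ref{constancythm}, gives its block pattern (\ref{constancystructurew}).

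Next I would relate the two matrices by inversion. From $\mathcal{V}=\mathcal{W}^{\dagger,-1}$, together with $\mathcal{H}^2 = r^2 \cdot 1_4$ and the anticommutation $\{\mathcal{H},\mathcal{Q}\}=0$ (an immediate consequence of $\mathcal{Q}=\mathcal{H}\mathcal{F}\mathcal{H}/r^2-\mathcal{F}$), one obtains
\[
\mathfrak{C}^{-1} = \mathcal{W}^\dagger \mathcal{H}^{-1}\mathcal{Q}\mathcal{W} = -\frac{1}{r^2}\,\mathcal{W}^\dagger\mathcal{Q}\mathcal{H}\mathcal{W}.
\]
Inverting the anti-diagonal $2\times 2$ blocks of the right-hand side within the index sets $\{1,3\}$ and $\{2,4\}$ yields the stated reciprocal form $\mathfrak{C}_{1,3}=-r^2/\mathfrak{f}_1$ and so on. The reality property $\overline{\mathfrak{f}_k} = -\mathfrak{f}_{\mathcal{J}(k)}$ follows from the anti-Hermiticity of $\mathcal{Q}^{-1}\mathcal{H}$ (from $(\mathcal{Q}^{-1}\mathcal{H})^\dagger = \mathcal{H}\mathcal{Q}^{-1}=-\mathcal{Q}^{-1}\mathcal{H}$), which makes $\mathfrak{C}$ anti-Hermitian, so $\overline{\mathfrak{C}_{1,3}} = -\mathfrak{C}_{3,1}$, which combined with the ordering $\mathcal{J}(1)=3$, $\mathcal{J}(2)=4$ yields the claim.

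The hard part will be the explicit evaluation of $\mathfrak{f}_k$. Using constancy we may evaluate at $z=1-\xi$. Substituting the expansion $\mathcal{W}_k(1-\xi)=\xi^{-1/2}\mathfrak{w}_{0,k}+\xi^{1/2}\mathfrak{w}_{1,k}+\xi^{3/2}\mathfrak{w}_{2,k}+O(\xi^{5/2})$ together with the Laurent expansion of $\mathcal{Q}\mathcal{H}$ at $z=1$, obtained from (\ref{expandfs}) and the explicit form of $\mathcal{Q}$, the a priori $\xi^{-2}$ and $\xi^{-1}$ singular parts of $(\mathcal{W}^\dagger\mathcal{Q}\mathcal{H}\mathcal{W})_{\mathcal{J}(k),k}$ must cancel (guaranteed by the constancy), and the finite part yields $\mathfrak{f}_k$ as a bilinear combination of $\mathfrak{w}_{0,k}, \mathfrak{w}_{1,k}, \mathfrak{w}_{2,k}$. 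Substituting the closed polynomial forms for these vectors from Section \ref{sectionW} and using the Atiyah-Ward equation to simplify produces the rational expression in $\zeta_k, x_i$ of the statement. The theta-function form then follows by re-expressing the normalisation constants $\mathfrak{d}_k$ of (\ref{SMATW}) as theta-quotients and applying the two-point identities of Proposition \ref{twoptrelations} together with the difference formula (\ref{zetadiff}) to collapse the resulting expression to the displayed theta-function form.
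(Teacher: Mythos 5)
Your overall architecture tracks the paper's quite closely: the relation $\mathcal{M}\mathfrak{C}=-\mathfrak{C}\overline{\mathcal{M}}^{-1}$ via constancy plus the endpoint monodromy is exactly the argument the paper gives in the main text immediately before the theorem, and the inversion step $\mathfrak{C}^{-1}=\mathcal{W}^\dagger\mathcal{H}^{-1}\mathcal{Q}\mathcal{W}=-r^{-2}\,\mathcal{W}^\dagger\mathcal{Q}\mathcal{H}\mathcal{W}$ together with skew-Hermiticity of $\mathfrak{C}$ (giving $\overline{\mathfrak{f}_k}=-\mathfrak{f}_{\mathcal{J}(k)}$) is precisely how the paper's appendix proceeds. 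However, there is a genuine gap in your derivation of the block structure. The monodromy argument yields only a \emph{necessary} condition: $\mathfrak{C}_{jk}\ne 0$ forces $\exp(2\mu_j+2\overline{\mu}_k)=-1$. Proposition \ref{propmu13mu24} shows this condition \emph{holds} for the pairs $(1,3),(3,1),(2,4),(4,2)$; it does not show it \emph{fails} for the remaining twelve pairs, and on special loci it does not fail — for instance on the $x_2$-axis one has $e^{2\mu_1}=e^{2\mu_2}$, so the $(1,4)$ and $(2,3)$ positions also satisfy the condition there. Saying the proposition "identifies the surviving pairs as precisely" those four therefore overstates what it gives. The gap is repairable either by a genericity-plus-continuity argument (the condition fails on a dense open set of $\boldsymbol{x}$ for the other pairs, and the entries depend continuously on $\boldsymbol{x}$), or, as the paper actually does, by direct computation: evaluate $\mathcal{W}^\dagger\mathcal{Q}\mathcal{H}\mathcal{W}$ at the convenient point $z=0$, where $C(0)=1_2$ and $\Phi(0,P_k)=(g_1(P_k),g_2(P_k))^T$ is an explicit rational function of $(\zeta_k,\eta_k)$, and check that for $j\ne\mathcal{J}(i)$ the $(i,j)$-entry is a polynomial in $\zeta_i,\zeta_j$ lying in the ideal generated by the two Atiyah--Ward quartics, hence vanishes identically.

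On the explicit evaluation of $\mathfrak{f}_k$: your plan to extract it from the finite part of the $z=1-\xi$ expansion is workable in principle, since $\mathfrak{w}_{0,k},\mathfrak{w}_{1,k},\mathfrak{w}_{2,k}$ and the order-$\xi$ expansion of $\mathcal{Q}\mathcal{H}$ are all available, but the product has leading singularity $\xi^{-2}$, so you must verify two orders of cancellation before reaching the finite term. The paper instead exploits constancy to evaluate at $z=0$, where the whole computation is algebraic in $(\zeta_k,\eta_k)$ from the outset; this is a substantial simplification you would do well to adopt.
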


We may now use the matrix $\mathfrak{C}$ in evaluating
$\mathcal{V}\sp{\dagger} \mathcal{Q}^{-1} {\mathcal{V}}$.  By inserting $1_4={\mathcal W}\sp{\dagger}{\mathcal{V}},$ 
we have
$$
{\mathcal{V}}\sp{\dagger} 
 \mathcal{Q}^{-1} 
 {\mathcal{V}}
 = \frac1{r^2}\,\mathfrak{C} \left(\mathcal{W}\sp\dagger \mathcal{H} 
 {\mathcal{V}}\right)
 =
-\frac1{r^2}\left(\mathcal{V}\sp\dagger \mathcal{H} 
 {\mathcal{W}} (z)\right) \mathfrak{C}  
 $$
 and so 
$\mathfrak{C} \left(\mathcal{W}\sp\dagger \mathcal{H} 
 {\mathcal{V}}(z)\right)$ is Hermitian for general $z$. Although $\mathcal{W}\sp\dagger \mathcal{H} 
 {\mathcal{V}}(z)$ diverges as $z\rightarrow\pm$ the projector removes these. Thus
 $$\lim_{z\rightarrow 1}\mu\sp\dagger\mathfrak{C} \left(\mathcal{W}\sp\dagger \mathcal{H} 
 {\mathcal{V}}(z)\right)\mu
 =
\mu\sp\dagger
  \mathfrak{C}
 \boldsymbol{{\mathfrak{w}}}_{0}\sp\dagger 
\mathcal{H} \boldsymbol{{\mathfrak{v}}}_{2}
 \mu
 $$
 is Hermitian, whence
$$\left. \mu\sp\dagger \left(
{\mathcal{V}}\sp{\dagger} 
 \mathcal{Q}^{-1} {\mathcal{V}} 
\right)
\mu\right|_{z=1}
 =
 \left.
 \frac1{r^2}\,\mu\sp\dagger \mathfrak{C} \left(\mathcal{W}\sp\dagger \mathcal{H} 
 {\mathcal{V}}\right)
 \mu\right|_{z=1}
 =
  \frac1{r^2}\,\mu\sp\dagger 
 \mathfrak{C} \left(
 \boldsymbol{{\mathfrak{w}}}_{0}\sp\dagger 
\mathcal{H} \boldsymbol{{\mathfrak{v}}}_{2}\right)
 \mu .
 $$
 Therefore
 \begin{align}\nonumber
 \mu_{aj}\sp\dagger \left( \left.{\mathcal{V}}\sp{\dagger}
 \mathcal{Q}^{-1} {\mathcal{V}} \right|_{z=-1}\sp{z=1}   \right)_{jk}\mu_{k b}
 &=
 \frac1{r^2}\, \mu_{aj}\sp\dagger
  \mathfrak{C}_{jl} \left(\left. \mathcal{W}\sp\dagger \mathcal{H} 
 {\mathcal{V}}   \right|_{z=-1}\sp{z=1}   \right)_{lk}\mu_{k b} ,\\
 &=
 \frac1{r^2}\, \mu_{aj}\sp\dagger
  \mathfrak{C}_{jl} 
   \left(
 \boldsymbol{{\mathfrak{w}}}_{0}\sp\dagger 
\mathcal{H} \boldsymbol{{\mathfrak{v}}}_{2}\right)_{lk}
\left[1-\exp(-2\overline{\mu}_l+ 2\overline{\mu}_k) \right]
 \mu_{k b} ,
 \label{normexpa}
 \\
 \mu_{aj}\sp\dagger \left( \left.  z {\mathcal{V}}\sp{\dagger}
 \mathcal{Q}^{-1} {\mathcal{V}} \right|_{z=-1}\sp{z=1}   \right)_{jk}\mu_{k b}
 &=
\frac1{r^2}\, \mu_{aj}\sp\dagger
  \mathfrak{C}_{jl} 
   \left(
 \boldsymbol{{\mathfrak{w}}}_{0}\sp\dagger 
\mathcal{H} \boldsymbol{{\mathfrak{v}}}_{2}\right)_{lk}
\left[1+\exp(-2\overline{\mu}_l+ 2\overline{\mu}_k) \right]
 \mu_{k b}.
 \end{align}

\subsection{Evaluating \texorpdfstring{$\boldsymbol{\int_{-1}\sp{1} \mathrm{d}z\, \boldsymbol{{v}}_a\sp\dagger \boldsymbol{{v}}_b }$}{intzvbarv}}
 
 We have
 \begin{align*}
\int \mathrm{d}z\, \boldsymbol{{v}}_a\sp\dagger \boldsymbol{{v}}_b
&=\mu_{aj}\sp\dagger\left( \int \mathrm{d}z\, {\mathcal{V}}\sp{\dagger} {\mathcal{V}}\right)_{jk}\mu_{k b}
= \mu_{aj}\sp\dagger \left({\mathcal{V}}\sp{\dagger}
 \mathcal{Q}^{-1} {\mathcal{V}}\right)_{jk}\mu_{k b}.
\end{align*}
Thus
 \begin{align*}
\int_{-1}\sp{1} \mathrm{d}z\, \boldsymbol{{v}}_a\sp\dagger \boldsymbol{{v}}_b
&
= \mu_{aj}\sp\dagger \left(
\boldsymbol{{\mathfrak{v}}}_{j,2}\sp{\dagger}
\mathrm{Res}_{z=1} \mathcal{Q}^{-1}(z,\boldsymbol{x})
\boldsymbol{{\mathfrak{v}}}_{k,2}
-
\mathcal{M}_{jj}
\boldsymbol{{\mathfrak{v}}}_{j,2}\sp{\dagger} U\sp{T}
\mathrm{Res}_{z=-1} \mathcal{Q}^{-1}(z,\boldsymbol{x}) U
\boldsymbol{{\mathfrak{v}}}_{k,2}
{\overline{\mathcal{M}}}_{kk}
 \right)\mu_{k b}\\
 &=
 \mu_{aj}\sp\dagger \left(
\boldsymbol{{\mathfrak{v}}}_{j,2}\sp{\dagger}
\mathrm{Res}_{z=1} \mathcal{Q}^{-1}(z,\boldsymbol{x})
\boldsymbol{{\mathfrak{v}}}_{k,2}
\left[1+\exp(2\mu_j+ 2\overline{\mu}_k) \right]
 \right)\mu_{k b}
\end{align*}
where we have used (\ref{IQmon}) together with $U\sp{T}U=1_4$ and $U^2=-1_4$.
Combining this with (\ref{normexpa}) and Theorem \ref{constantmatrix} yields:

\begin{theorem}\label{normalizationthm}
 We have the formulae for the orthogonalisation
  \begin{align}\nonumber
\int_{-1}\sp{1} \mathrm{d}z\, \boldsymbol{{v}}_a\sp\dagger \boldsymbol{{v}}_b
 &=
 \mu_{aj}\sp\dagger \left(
\boldsymbol{{\mathfrak{v}}}_{j,2}\sp{\dagger}
\mathrm{Res}_{z=1} \mathcal{Q}^{-1}(z,\boldsymbol{x})
\boldsymbol{{\mathfrak{v}}}_{k,2}
\left[1+\exp(2\mu_j+ 2\overline{\mu}_k) \right]
 \right)\mu_{k b} ,\\
 &
 =
 \frac1{r^2}\, \mu_{aj}\sp\dagger
  \mathfrak{C}_{jl} 
   \left(
 \boldsymbol{{\mathfrak{w}}}_{0}\sp\dagger 
\mathcal{H} \boldsymbol{{\mathfrak{v}}}_{2}\right)_{lk}
\left[1-\exp(-2\overline{\mu}_l+ 2\overline{\mu}_k) \right]
 \mu_{k b}
, \\ \nonumber
  &=
 \mu_{aj}\sp\dagger \mathfrak{F}_{jl}
 \left(
 \boldsymbol{W}_{0}\sp\dagger 
\mathcal{H} \boldsymbol{{\mathfrak{v}}}_{2}\right)_{lk}
\left[1-\exp(-2\overline{\mu}_l+ 2\overline{\mu}_k) \right]
 \mu_{k b},
\end{align}
where
$$
\mathfrak{F}= \frac1{r^2}\,  \mathfrak{C}\mathfrak{d}\sp\dagger
=
\begin{pmatrix} 0&0& {\overline{\mathfrak{d}_3 /\mathfrak{f}_3}} &0\\ 
0&0&0&{\overline{\mathfrak{d}_4/\mathfrak{f}_4}}\\ {\overline{\mathfrak{d}_1/ \mathfrak{f}_1}}& 0&0&0\\ 
0& {\overline{\mathfrak{d}_2 /\mathfrak{f}_4}}&0&0
\end{pmatrix} = \mathfrak{F}\sp\dagger
$$
and
 \begin{align*}
 \mathfrak{f}_j /\mathfrak{d}_j=\mathfrak{D}_j \mathfrak{f}_j=
 \imath 
\frac{\zeta_j K^2 ( R_-\zeta_j^3 +2R x_3 \zeta_j^2 -
  R_+\zeta_j-x_3(S_+^2+S_-^2)   )}
{(4\imath x_3 x_-\zeta_j^3-R\zeta_j^2+12 \imath x_+x_3\zeta_j+S_+^2)\,  S_-^2
}.
\end{align*}
Here we have defined
\begin{align*}
 R_{\mp}&=\imath S_-^2( x_{\mp}(2{k'}^2-1)+x_{\pm} )\mp 16\imath x_{\mp}x_3^2,\quad
R=2K^2{k'}^2-S^2-8x_3^2 ,\\
 S_{\pm}&=\sqrt{K^2-4x_{\pm}^2},\quad    S=\sqrt{K^2-4x_+x_-}.
\end{align*}
Conjugation acts at these quantities by
\begin{align}
\overline{S}&=S,\quad \overline{S_{\pm}} = S_{\mp},\quad
\overline{R}=R,\quad\overline{R_{\pm}}=-\imath S_+^2(x_{\mp}(2{k'}^2-1) + x_{\pm})
\mp 16\imath x_{\mp}x_3^2
\end{align} 
and our convention is that
\begin{equation} \overline{(\zeta_1,\zeta_2,\zeta_3, \zeta_4)} = (-1/\zeta_3,-1/\zeta_4,-1/\zeta_1, -1/\zeta_2) .
 \end{equation}
\end{theorem}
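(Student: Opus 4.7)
The strategy is to apply the Panagopoulos identity (\ref{pannorm}) at both endpoints of the $z$ integration and then use the monodromy relation of Theorem~\ref{smonodromy}, together with (\ref{IQmon}) and Theorem~\ref{constantmatrix}, to re-express each evaluation in terms of data at a single endpoint. Absorbing the normalisation factors $N_i$ into $\mathcal{V}$ and using the $z$-independent projector $\overline\mu$ of (\ref{defmunorm}), the Panagopoulos formula reduces the integral to
\[
\int_{-1}^{1}\mathrm{d}z\,\boldsymbol{v}_a^{\dagger}\boldsymbol{v}_b
=\mu^{\dagger}_{aj}\Big[\mathcal{V}^{\dagger}\mathcal{Q}^{-1}\mathcal{V}\Big]^{z=1}_{z=-1}{}_{\!jk}\,\mu_{kb}.
\]
A priori the bracketed quantity diverges at the endpoints, since $\mathcal{V}$ has a pole of order $3/2$ there while $\mathcal{Q}^{-1}$ has a simple pole. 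However, the orthogonality relations (\ref{ort}) and the fact that $\boldsymbol{\mathfrak{v}}_0^{\dagger}\cdot\mathrm{Res}_{z=1}\mathcal{Q}^{-1}=0$ from (\ref{qproj00}) guarantee that the $\xi^{-3}$ and $\xi^{-2}$ contributions cancel identically, leaving only the product of the subleading $\boldsymbol{\mathfrak{v}}_{j,2}$ vectors against $\mathrm{Res}_{z=1}\mathcal{Q}^{-1}$.

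For the endpoint at $z=-1$, Theorem~\ref{smonodromy} rewrites $\mathcal{V}(-1+\xi) = U\,\bigl(\xi^{-3/2}\boldsymbol{\mathfrak{v}}_0 - \xi^{-1/2}\boldsymbol{\mathfrak{v}}_1 + \xi^{1/2}\boldsymbol{\mathfrak{v}}_{\cdot,2}+\cdots\bigr)\mathcal{M}$, while (\ref{IQmon}) gives $\mathrm{Res}_{z=-1}\mathcal{Q}^{-1}=U\,(\mathrm{Res}_{z=1}\mathcal{Q}^{-1})\,U$. Since $U^{\dagger}U=1_4$ and $U$ commutes through the residue after pairing, the $U$'s cancel and we only pick up the factor $\mathcal{M}_{jj}\overline{\mathcal{M}}_{kk}=\exp(2\mu_j+2\overline{\mu}_k)$, yielding the first equality of the theorem.

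The second expression is obtained by inserting $1_4=\mathcal{W}^{\dagger}\mathcal{V}$ and using Theorem~\ref{constantmatrix}: since $\mathcal{V}^{\dagger}\mathcal{Q}^{-1}\mathcal{H}\mathcal{V}=\mathfrak{C}$ is constant, multiplication on the right by $\mathcal{H}^{-1}=\mathcal{H}/r^2$ and use of $\mathcal{Q}^{-1}\mathcal{H}=-\mathcal{H}\mathcal{Q}^{-1}$ gives
\[
\mathcal{V}^{\dagger}\mathcal{Q}^{-1}\mathcal{V}=\tfrac{1}{r^{2}}\,\mathfrak{C}\,\mathcal{W}^{\dagger}\mathcal{H}\mathcal{V}.
\]
At $z=1$ the only surviving contribution to $(\mathcal{W}^{\dagger}\mathcal{H}\mathcal{V})_{lk}$ is $(\boldsymbol{\mathfrak{w}}_0^{\dagger}\mathcal{H}\boldsymbol{\mathfrak{v}}_2)_{lk}$, because $\mathcal{H}\boldsymbol{\mathfrak{v}}_0=-\boldsymbol{\mathfrak{v}}_1$ and $\boldsymbol{\mathfrak{w}}_0^{\dagger}\boldsymbol{\mathfrak{v}}_1=0$ by the second line of (\ref{ort}). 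At $z=-1$, applying the monodromy on both $\mathcal{W}$ and $\mathcal{V}$ together with $U\mathcal{H}=\mathcal{H}U$ produces the factor $-\exp(-2\overline{\mu}_l+2\overline{\mu}_k)$, giving the second stated form; the third form follows by substituting the explicit block-anti-diagonal shape of $\mathfrak{C}$ from (\ref{constancystructure}) and collecting $\mathfrak{f}_l/\mathfrak{d}_l=\mathfrak{f}_l\mathfrak{D}_l$ via (\ref{NEW3}).

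The delicate step is the cancellation of the $U$ matrices and the correct tracking of the exponential weights through the monodromy; this is what ultimately produces the Hermitian factor $[1-\exp(-2\overline{\mu}_l+2\overline{\mu}_k)]$ consistent with $\mathfrak{F}=\mathfrak{F}^{\dagger}$, the latter being a consequence of the conjugation property $\overline{\mathfrak{f}_k}=-\mathfrak{f}_{\mathcal{J}(k)}$ recorded in Theorem~\ref{constantmatrix}. Verifying that these factors assemble to give a real positive definite norm matrix, as required on physical grounds, provides a useful consistency check on the sign choices made in Proposition~\ref{propmu13mu24}.
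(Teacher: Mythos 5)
Your proposal is correct and follows essentially the same route as the paper: Panagopoulos's formula, the endpoint expansions controlled by the projector, the monodromy of Theorem~\ref{smonodromy} together with (\ref{IQmon}) (and $U^TU=1_4$, $U^2=-1_4$) to convert the $z=-1$ contribution into $z=1$ data weighted by $\exp(2\mu_j+2\overline{\mu}_k)$, and the insertion of $1_4=\mathcal{W}^{\dagger}\mathcal{V}$ with the constancy of $\mathfrak{C}$ and the factorisation $\boldsymbol{\mathfrak{w}}_{0}=\mathfrak{d}\,\boldsymbol{W}_{0}$ for the second and third forms. One justification is off, though harmlessly so: $\boldsymbol{\mathfrak{w}}_0^{\dagger}\boldsymbol{\bar{\mathfrak{v}}}_1$ does \emph{not} vanish --- the second line of (\ref{ort}) only gives $\mathfrak{w}_0^T.\boldsymbol{\bar{\mathfrak{v}}}_1=-\mathfrak{w}_1^T.\boldsymbol{\bar{\mathfrak{v}}}_0$, and a direct computation with the mini-twistor relation shows $\mathfrak{w}_{0,k}^T\boldsymbol{\bar{\mathfrak{v}}}_1\propto\eta_k$. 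Consequently the reduction of $\mathcal{W}^{\dagger}\mathcal{H}\mathcal{V}$ to $\boldsymbol{\mathfrak{w}}_{0}^{\dagger}\mathcal{H}\boldsymbol{\mathfrak{v}}_{2}$, and likewise the disappearance of the singular cross terms (including those coming from the $O(\xi)$ part of $\mathcal{Q}^{-1}$ paired with the $\xi^{-3/2}$ pole), rests on the $z$-independent projector $\mu$ annihilating the columns proportional to $\boldsymbol{\mathfrak{v}}_0$ and $\boldsymbol{\mathfrak{v}}_1$ --- which you do invoke at the outset --- rather than on any pointwise orthogonality; with that attribution corrected the argument coincides with the paper's.
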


\subsection{Evaluating \texorpdfstring{$\boldsymbol{\int_{-1}\sp{1} \mathrm{d}z\,z \boldsymbol{{v}}_a\sp\dagger \boldsymbol{{v}}_b }$}{intzzvbarv}}
Now we must simplify
\begin{equation}
\int \mathrm{d}z\, z \boldsymbol{{v}}_a\sp\dagger \boldsymbol{{v}}_b
=\mu_{aj}\sp\dagger\left( \int \mathrm{d}z\,z {\mathcal{V}}\sp{\dagger} {\mathcal{V}}\right)_{jk}\mu_{k b}
=\mu_{aj}\sp\dagger \left( {\mathcal{V}}\sp{\dagger} 
 \mathcal{Q}^{-1} \left[ z+\mathcal{H}\,\frac{x\sp{i}}{r^2}\frac{\partial}{\partial x\sp{i}}   \right]
 {\mathcal{V}}
\right)_{jk}
\mu_{k b}.
 \end{equation}
The first term here is treated as before and we have
\begin{align*}
\int_{-1}\sp{1}\mathrm{d}z\, z \boldsymbol{{v}}_a\sp\dagger \boldsymbol{{v}}_b
&=\mu_{aj}\sp\dagger \left(
\boldsymbol{{\mathfrak{v}}}_{j,2}\sp{\dagger}
\mathrm{Res}_{z=1} \mathcal{Q}^{-1}(z,\boldsymbol{x})
\boldsymbol{{\mathfrak{v}}}_{k,2}
\left[1-\exp(2\mu_j+ 2\overline{\mu}_k) \right]
 \right)\mu_{k b}\\
&\qquad +\  \mu_{aj}\sp\dagger \left( {\mathcal{V}}\sp{\dagger} 
 \mathcal{Q}^{-1} \left. \mathcal{H}\,\frac{x\sp{i}}{r^2}\frac{\partial}{\partial x\sp{i}}  
 {\mathcal{V}} \right|_{z=-1}\sp{z=1}
\right)_{jk}
\mu_{k b}.
 \end{align*}
The constancy of $\mathfrak{C}={\mathcal{V}}\sp{\dagger}  \mathcal{Q}^{-1} \mathcal{H}  {\mathcal{V}} $  means that the derivative acts only on the ${\mathcal{V}} $.
Let us further consider the final term. Writing
$$
{\mathcal{V}}\sp{\dagger} 
 \mathcal{Q}^{-1} \mathcal{H}\,\frac{x\sp{i}}{r^2}\frac{\partial}{\partial x\sp{i}}  
 {\mathcal{V}}
 =\mathfrak{C} \left(\mathcal{W}\sp\dagger\frac{x\sp{i}}{r^2}\frac{\partial}{\partial x\sp{i}}  
 {\mathcal{V}}\right)
 $$
then
$$\mu\sp\dagger \left(
{\mathcal{V}}\sp{\dagger} 
 \mathcal{Q}^{-1}  \left. \mathcal{H}\,\frac{x\sp{i}}{r^2}\frac{\partial}{\partial x\sp{i}}  
 {\mathcal{V}} \right|_{z=-1}\sp{z=1}
\right)
\mu
 =\mu\sp\dagger 
 \mathfrak{C} 
  \left(\mathcal{W}\sp\dagger  \left. \frac{x\sp{i}}{r^2}\frac{\partial}{\partial x\sp{i}}  
 {\mathcal{V}}   \right|_{z=-1}\sp{z=1} \right)\mu
 $$
 with the only $z$ dependence in $\mathcal{W}\sp\dagger {\mathcal{V}}'$, where $'$ abbreviates
 $ \frac{x\sp{i}}{r^2}\frac{\partial}{\partial x\sp{i}}  $.
Noting that $x\sp{i}\partial_i \boldsymbol{{\mathfrak{v}}}_{k,0}=0$ and that $x\sp{i}\partial_i \boldsymbol{{\mathfrak{v}}}_{k,1}
=\boldsymbol{{\mathfrak{v}}}_{k,1}$  is annihilated by the projector then
\begin{align*}
\left(\mathcal{W}\sp\dagger \left. 
 \frac{x\sp{i}}{r^2}\frac{\partial}{\partial x\sp{i}}  
 {\mathcal{V}}\right|_{z=1}\right)\mu
 &=
 \boldsymbol{{\mathfrak{w}}}_{0}\sp\dagger 
 \left(\frac{x\sp{i}}{r^2}\frac{\partial}{\partial x\sp{i}} \boldsymbol{{\mathfrak{v}}}_{2}\right) \mu
 \intertext{while}
 \left(\mathcal{W}\sp\dagger \left. 
 \frac{x\sp{i}}{r^2}\frac{\partial}{\partial x\sp{i}}  
 {\mathcal{V}}\right|_{z=-1}\right)\mu
 &=
 \lim_{\xi\rightarrow0}\, 
 \overline{\mathcal{M}}^{-1}\left(
  \frac{1}{\sqrt{\xi}}\boldsymbol{ \mathfrak{w}_{0}}-
\sqrt{\xi}\boldsymbol{\mathfrak{w}_{1}}+ \xi^{3/2}\boldsymbol{\mathfrak{w}_{2}}
 \right)\sp\dagger U\sp\dagger U\\
 &\qquad\times
\left(
\frac{x\sp{i}}{r^2}\frac{\partial}{\partial x\sp{i}}
\left[\left(
 \frac{1}{\xi\sp{3/2}}\boldsymbol{{\mathfrak{v}}}_0-
\frac{1}{\xi\sp{1/2}}\boldsymbol{{\mathfrak{v}}}_1+{\xi\sp{1/2}}\boldsymbol{{\mathfrak{v}}}_{2}
\right) \overline{\mathcal{M}}\right]\right)\mu
\\
&= \overline{\mathcal{M}}^{-1}
 \boldsymbol{{\mathfrak{w}}}_{0}\sp\dagger 
 \left(\frac{x\sp{i}}{r^2}\frac{\partial}{\partial x\sp{i}} \boldsymbol{{\mathfrak{v}}}_{2}\right)
 \overline{\mathcal{M}}\mu
 + \left( \overline{\mathcal{M}}^{-1}
 \frac{x\sp{i}}{r^2}\frac{\partial}{\partial x\sp{i}} 
 \overline{\mathcal{M}}\right)\mu
 \end{align*}
where we have used (\ref{ort}) and (\ref{VWrel}).
Thus
 \begin{align*}\mu_{aj}\sp\dagger \left( {\mathcal{V}}\sp{\dagger} 
 \mathcal{Q}^{-1} \left. \mathcal{H}\,\frac{x\sp{i}}{r^2}\frac{\partial}{\partial x\sp{i}}  
 {\mathcal{V}} \right|_{z=-1}\sp{z=1}
\right)_{jk}
\mu_{k b}
&=\mu_{aj}\sp\dagger \mathfrak{C}_{jl}
 \left(\boldsymbol{{\mathfrak{w}}}_{0}\sp\dagger 
 \left[\frac{x\sp{i}}{r^2}\frac{\partial}{\partial x\sp{i}} \boldsymbol{{\mathfrak{v}}}_{2}\right]\right)_{lk}
 \left[1-\exp(-2\overline{\mu}_l+ 2\overline{\mu}_k) \right]\mu_{k b}\\
 &\qquad
 - \mu_{aj}\sp\dagger \mathfrak{C}_{jk}\left(
 \frac{2 x\sp{i}}{r^2}\frac{\partial}{\partial x\sp{i}}\overline{\mu}_k
 \right)\mu_{k b}
  \end{align*}
and consequently
 \begin{align}
\int_{-1}\sp{1}\mathrm{d}z\, z \boldsymbol{{v}}_a\sp\dagger \boldsymbol{{v}}_b
&=\mu_{aj}\sp\dagger \left(
\boldsymbol{{\mathfrak{v}}}_{j,2}\sp{\dagger}
\mathrm{Res}_{z=1} \mathcal{Q}^{-1}(z,\boldsymbol{x})
\boldsymbol{{\mathfrak{v}}}_{k,2}
\left[1-\exp(2\mu_j+ 2\overline{\mu}_k) \right]
 \right)\mu_{k b} \nonumber\\
&
\label{inthiggs}
\qquad +
\mu_{aj}\sp\dagger \mathfrak{C}_{jl}
 \left(\boldsymbol{{\mathfrak{w}}}_{0}\sp\dagger 
 \left[\frac{x\sp{i}}{r^2}\frac{\partial}{\partial x\sp{i}} \boldsymbol{{\mathfrak{v}}}_{2}\right]\right)_{lk}
 \left[1-\exp(-2\overline{\mu}_l+ 2\overline{\mu}_k) \right]\mu_{k b}\\
 &\qquad
 - \mu_{aj}\sp\dagger \mathfrak{C}_{jk}\left(
 \frac{2 x\sp{i}}{r^2}\frac{\partial}{\partial x\sp{i}}\overline{\mu}_k
 \right)\mu_{k b} . \nonumber
 \end{align}
The derivatives 
 $ \boldsymbol{{\mathfrak{w}}}_{0}\sp\dagger \left({\partial _{i}}  \boldsymbol{{\mathfrak{v}}}_{2}\right) \mu$
 appearing here may be combined in various ways.
   From (\ref{VWrel}) and that $\boldsymbol{{\mathfrak{v}}}_{k,0}$ is constant we have that
 \begin{align*}
 \boldsymbol{{\mathfrak{w}}}_{0}\sp\dagger  \boldsymbol{{\mathfrak{v}}}_{2} \mu=\mu,\qquad
 \boldsymbol{{\mathfrak{w}}}_{0}\sp\dagger 
\left({\partial _{i}}  \boldsymbol{{\mathfrak{v}}}_{2}\right) \mu&=
 -\left({\partial _{i}}   \boldsymbol{{\mathfrak{w}}}_{0}\sp\dagger\right)  \boldsymbol{{\mathfrak{v}}}_{2} \mu 
 .
 \end{align*}
 Also
 \begin{align*}
 {\partial _{i}}   \boldsymbol{\mathfrak{w}}_{k,0}&=
 {\partial _{i}}  \left(   {\mathfrak{d}}_{k} \boldsymbol{W}_{k,0}  \right)=
 \left(  {\partial _{i}}   {\mathfrak{d}}_{k} \right) \boldsymbol{W}_{k,0}  +
 {\mathfrak{d}}_{k}  {\partial _{i}}  
 \begin{pmatrix}1\\ \imath \zeta_k\\  \imath \zeta_k\\-\zeta_k^2 \end{pmatrix},
 \\
& =\left[\frac{  {\partial _{i}}   {\mathfrak{d}}_{k} }{  {\mathfrak{d}}_{k} } +
 \frac{  {\partial _{i}}\zeta_k}{\zeta_k}\right]  \boldsymbol{\mathfrak{w}}_{k,0}
 - \begin{pmatrix}1\\ 0\\0\\ \zeta_k^2 \end{pmatrix}
  {\mathfrak{d}}_{k} \,  \frac{  {\partial _{i}}\zeta_k}{\zeta_k} ,
\intertext{giving}
\boldsymbol{{\mathfrak{w}}}_{0}\sp\dagger 
\left({\partial _{i}}  \boldsymbol{{\mathfrak{v}}}_{2}\right) \mu
&=-\diag \left[\frac{  {\partial _{i}}   {\mathfrak{d}}_{k} }{  {\mathfrak{d}}_{k} } +
 \frac{  {\partial _{i}}\zeta_k}{\zeta_k}\right] \sp\dagger\mu
 +
 \diag\left[ {\mathfrak{d}}_{k}  \frac{  {\partial _{i}}\zeta_k}{\zeta_k}  \right] \sp\dagger
 \begin{pmatrix}
 1&\ldots&1& \\ 0&&0\\0&&0\\ \zeta_1^2 &\ldots&\zeta_4^2
 \end{pmatrix} \sp\dagger
  \boldsymbol{{\mathfrak{v}}}_{2} \mu .
 \end{align*}
Similarly
\begin{align*}
 \boldsymbol{{\mathfrak{w}}}_{0}\sp\dagger  \boldsymbol{{\mathfrak{v}}}_{2}  \overline{\mathcal{M}}\mu= \overline{\mathcal{M}}\mu,
\qquad
 \boldsymbol{{\mathfrak{w}}}_{0}\sp\dagger 
\left({\partial _{i}}  \boldsymbol{{\mathfrak{v}}}_{2}\right) \overline{\mathcal{M}} \mu=
 -\left({\partial _{i}}   \boldsymbol{{\mathfrak{w}}}_{0}\sp\dagger\right)  \boldsymbol{{\mathfrak{v}}}_{2}
  \overline{\mathcal{M}} \mu,
 \end{align*}
and we can express the derivative similarly at $z=-1$.

Bringing these results together gives
\begin{theorem}\label{higgsthm}
\begin{align*}
\nonumber
\int_{-1}\sp{1}\mathrm{d}z\, z \boldsymbol{{v}}_a\sp\dagger \boldsymbol{{v}}_b
&=\mu_{aj}\sp\dagger \left(
\boldsymbol{{\mathfrak{v}}}_{j,2}\sp{\dagger}
\mathrm{Res}_{z=1} \mathcal{Q}^{-1}(z,\boldsymbol{x})
\boldsymbol{{\mathfrak{v}}}_{k,2}
\left[1-\exp(2\mu_j+ 2\overline{\mu}_k) \right]
 \right)\mu_{k b}\\
&\qquad +\  \mu_{aj}\sp\dagger \left( {\mathcal{V}}\sp{\dagger} 
 \mathcal{Q}^{-1} \left. \mathcal{H}\,\frac{x\sp{i}}{r^2}\frac{\partial}{\partial x\sp{i}}  
 {\mathcal{V}} \right|_{z=-1}\sp{z=1}
\right)_{jk}
\mu_{k b}
\\
\nonumber
&=\mu_{aj}\sp\dagger \left(
\boldsymbol{{\mathfrak{v}}}_{j,2}\sp{\dagger}
\mathrm{Res}_{z=1} \mathcal{Q}^{-1}(z,\boldsymbol{x})
\boldsymbol{{\mathfrak{v}}}_{k,2}
\left[1-\exp(2\mu_j+ 2\overline{\mu}_k) \right]
 \right)\mu_{k b} \nonumber\\
&
\qquad +
\mu_{aj}\sp\dagger \mathfrak{C}_{jl}
 \left(\boldsymbol{{\mathfrak{w}}}_{0}\sp\dagger 
 \left[\frac{x\sp{i}}{r^2}\frac{\partial}{\partial x\sp{i}} \boldsymbol{{\mathfrak{v}}}_{2}\right]\right)_{lk}
 \left[1-\exp(-2\overline{\mu}_l+ 2\overline{\mu}_k) \right]\mu_{k b}\\
 &\qquad
 - \mu_{aj}\sp\dagger \mathfrak{C}_{jk}\left(
 \frac{2 x\sp{i}}{r^2}\frac{\partial}{\partial x\sp{i}}\overline{\mu}_k
 \right)\mu_{k b} 
 \nonumber
 \\
  &=
\frac1{r^2}\, \mu_{aj}\sp\dagger
  \mathfrak{C}_{jl} 
   \left(
 \boldsymbol{{\mathfrak{w}}}_{0}\sp\dagger 
\mathcal{H} \boldsymbol{{\mathfrak{v}}}_{2}\right)_{lk}
\left[1+\exp(-2\overline{\mu}_l+ 2\overline{\mu}_k) \right]
 \mu_{k b}
 \\
  &\qquad +
  \mu_{aj}\sp\dagger \mathfrak{C}_{jl}
 \left(\boldsymbol{{\mathfrak{w}}}_{0}\sp\dagger 
 \left[\frac{x\sp{i}}{r^2}\frac{\partial}{\partial x\sp{i}} \boldsymbol{{\mathfrak{v}}}_{2}\right]\right)_{lk}
 \left[1-\exp(-2\overline{\mu}_l+ 2\overline{\mu}_k) \right]\mu_{k b}
   \nonumber
   \\
 &\qquad
 - \mu_{aj}\sp\dagger \mathfrak{C}_{jk}\left(
 \frac{2 x\sp{i}}{r^2}\frac{\partial}{\partial x\sp{i}}\overline{\mu}_k
 \right)\mu_{k b} \\
   \nonumber
  &=
   \mu_{aj}\sp\dagger
  \mathfrak{F}_{jl} 
   \left(
 \boldsymbol{{{W}}}_{0}\sp\dagger 
\mathcal{H} \boldsymbol{{\mathfrak{v}}}_{2}\right)_{lk}
\left[1+\exp(-2\overline{\mu}_l+ 2\overline{\mu}_k) \right]
 \mu_{k b}
 \\
  &\qquad +
  \mu_{aj}\sp\dagger \mathfrak{C}_{jl}
 \left(\boldsymbol{{\mathfrak{w}}}_{0}\sp\dagger 
 \left[\frac{x\sp{i}}{r^2}\frac{\partial}{\partial x\sp{i}} \boldsymbol{{\mathfrak{v}}}_{2}\right]\right)_{lk}
 \left[1-\exp(-2\overline{\mu}_l+ 2\overline{\mu}_k) \right]\mu_{k b}\\
 &\qquad
 - \mu_{aj}\sp\dagger \mathfrak{C}_{jk}\left(
 \frac{2 x\sp{i}}{r^2}\frac{\partial}{\partial x\sp{i}}\overline{\mu}_k
 \right)\mu_{k b} 
 \\
 &=
   \mu_{aj}\sp\dagger
  \mathfrak{F}_{jl} 
   \left(
 \boldsymbol{{{W}}}_{0}\sp\dagger 
\mathcal{H} \boldsymbol{{\mathfrak{v}}}_{2}\right)_{lk}
\left[1+\exp(-2\overline{\mu}_l+ 2\overline{\mu}_k) \right]
 \mu_{k b}
 \\
  &\qquad +
  \mu_{aj}\sp\dagger \mathfrak{F}_{jl}
 \left(\boldsymbol{W_{0}}\sp\dagger 
 \left[x\sp{i} \frac{\partial}{\partial x\sp{i}} \boldsymbol{{\mathfrak{v}}}_{2}\right]\right)_{lk}
 \left[1-\exp(-2\overline{\mu}_l+ 2\overline{\mu}_k) \right]\mu_{k b}\\
 &\qquad
 - \mu_{aj}\sp\dagger \mathfrak{C}_{jk}\left(
 \frac{2 x\sp{i}}{r^2}\frac{\partial}{\partial x\sp{i}}\overline{\mu}_k
 \right)\mu_{k b} 
 \\
 &=
   \mu_{aj}\sp\dagger
  \mathfrak{F}_{jl} 
   \left(
 \boldsymbol{{{W}}}_{0}\sp\dagger 
\mathcal{H} \boldsymbol{{\mathfrak{v}}}_{2}\right)_{lk}
\left[1+\exp(-2\overline{\mu}_l+ 2\overline{\mu}_k) \right]
 \mu_{k b}
 \nonumber
 \\
  &\qquad +
  \mu_{aj}\sp\dagger \mathfrak{F}_{jl}
   \diag\left[  \frac{  {x\sp{i}\,\partial _{i}}\zeta_k}{\zeta_k}  \right]_{ll} \sp\dagger
 \left(
 \begin{pmatrix}
 1&\ldots&1& \\ 0&&0\\0&&0\\ \zeta_1^2 &\ldots&\zeta_4^2
 \end{pmatrix} \sp\dagger
  \boldsymbol{{\mathfrak{v}}}_{2}
  \right)_{lk}
 \left[1-\exp(-2\overline{\mu}_l+ 2\overline{\mu}_k) \right]\mu_{k b}\\
 &\qquad
 - \mu_{aj}\sp\dagger \mathfrak{C}_{jk}\left(
 \frac{2 x\sp{i}}{r^2}\frac{\partial}{\partial x\sp{i}}\overline{\mu}_k
 \right)\mu_{k b} .
 \nonumber
 \end{align*}

\end{theorem}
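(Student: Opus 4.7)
The plan is to start from Panagopoulos's formula (\ref{panhiggs}), writing
\[
\int \mathrm{d}z\, z\,\boldsymbol{v}_a^\dagger\boldsymbol{v}_b
=\mu_{aj}^\dagger\Big(\mathcal{V}^\dagger\mathcal{Q}^{-1}\big[\,z+\mathcal{H}\,\tfrac{x^i}{r^2}\partial_i\,\big]\mathcal{V}\Big)_{jk}\mu_{kb},
\]
and to evaluate the right-hand side as the difference of its boundary values at $z=\pm1$. The $z$-piece is handled exactly as in Theorem~\ref{normalizationthm}: inserting the expansion (\ref{defcalv}) of $\bar{\mathcal{V}}$ at $z=1-\xi$ together with the monodromy relation $\bar{\mathcal{V}}(-1+\xi)=U[\bar{\mathcal{V}}_+(1-\xi)-\bar{\mathcal{V}}_-(1-\xi)]\mathcal{M}$ from Theorem~\ref{smonodromy}, and using the conjugation identity $U^T(\mathrm{Res}_{z=-1}\mathcal{Q}^{-1})U=\mathrm{Res}_{z=1}\mathcal{Q}^{-1}$ from (\ref{IQmon}) with $U^TU=1_4$, $U^2=-1_4$. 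The only difference with Theorem~\ref{normalizationthm} is that the extra factor of $z$ produces $[1-\exp(2\mu_j+2\overline{\mu}_k)]$ in place of $[1+\exp(2\mu_j+2\overline{\mu}_k)]$. This gives the first summand of each form of the answer.

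For the derivative piece I would exploit the constancy of $\mathfrak{C}=\mathcal{V}^\dagger\mathcal{Q}^{-1}\mathcal{H}\mathcal{V}$ established in Theorem~\ref{constantmatrix} to factor
\[
\mathcal{V}^\dagger\mathcal{Q}^{-1}\mathcal{H}\,\tfrac{x^i}{r^2}\partial_i\mathcal{V}
=\mathfrak{C}\,\big(\mathcal{W}^\dagger\,\tfrac{x^i}{r^2}\partial_i\mathcal{V}\big).
\]
All the $z$-dependence now sits inside $\mathcal{W}^\dagger\partial_i\mathcal{V}$. At $z=1$ the leading and sub-leading singularities cancel because of the consistency relations (\ref{ort}) and the fact that $\partial_i\boldsymbol{\mathfrak{v}}_0=0$ and $\partial_i\boldsymbol{\mathfrak{v}}_1$ is again a linear combination of $\boldsymbol{\mathfrak{v}}_0,\boldsymbol{\mathfrak{v}}_1$, which are annihilated by the projector $\mu$; the first finite contribution is $\mathfrak{w}_0^\dagger(x^i\partial_i\boldsymbol{\mathfrak{v}}_2)\mu$. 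At $z=-1$ one repeats the analysis after pulling out $\overline{\mathcal{M}}$ via the monodromy of Theorem~\ref{smonodromy}; here the product rule forces $\partial_i$ also to act on $\overline{\mathcal{M}}$, producing the extra term $-\mu^\dagger\mathfrak{C}(2x^i\partial_i\overline{\mu}_k/r^2)\mu$. Combining the two endpoints yields the second (boundary-derivative) and third ($\partial_i\overline{\mu}_k$) summands of the first equivalent form.

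The remaining equivalent forms are then purely algebraic rewritings. The exchange between expressions involving $\mathfrak{C}$ and expressions involving $\mathfrak{F}:=\mathfrak{C}\mathfrak{d}^\dagger/r^2$ uses the definition (\ref{SMATW}) of $\mathcal{W}$ through its normalisation constants $\mathfrak{d}_k$, together with the formula $\mathfrak{f}_k/\mathfrak{d}_k=\mathfrak{D}_k\mathfrak{f}_k$ that follows from (\ref{NEW3}). The final form, in which the derivative is unpacked, is obtained from
\[
\partial_i\boldsymbol{\mathfrak{w}}_{k,0}
=\Big[\tfrac{\partial_i\mathfrak{d}_k}{\mathfrak{d}_k}+\tfrac{\partial_i\zeta_k}{\zeta_k}\Big]\boldsymbol{\mathfrak{w}}_{k,0}
-\begin{pmatrix}1\\0\\0\\\zeta_k^2\end{pmatrix}\mathfrak{d}_k\,\tfrac{\partial_i\zeta_k}{\zeta_k},
\]
and moving $\partial_i$ from $\boldsymbol{\mathfrak{v}}_2$ onto $\boldsymbol{\mathfrak{w}}_0^\dagger$ via $\boldsymbol{\mathfrak{w}}_0^\dagger\boldsymbol{\mathfrak{v}}_2\mu=\mu$ from (\ref{VWrel}); the diagonal $[\partial_i\mathfrak{d}_k/\mathfrak{d}_k+\partial_i\zeta_k/\zeta_k]$ contribution is absorbed back into $\mathfrak{F}$ and the matrix $\mathrm{diag}[x^i\partial_i\zeta_k/\zeta_k]$ together with the rectangular projector to the $(1,4)$-components survives.

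The main obstacle I expect is the bookkeeping at $z=-1$: one must scrupulously track how the derivative acts on the monodromy matrix $\overline{\mathcal{M}}=\mathrm{diag}(e^{2\mu_k})$, because the $\mu_k(\boldsymbol{x})$ are nontrivial functions of the spatial coordinates (given by Proposition~\ref{propmuderivatives} in the $x_3=0$ plane and by similar expressions elsewhere). It is precisely this contribution, absent at $z=+1$, that produces the anomalous third summand $-\mu_{aj}^\dagger\mathfrak{C}_{jk}(2x^i\partial_i\overline{\mu}_k/r^2)\mu_{kb}$ appearing in every equivalent form and without which the result would fail to be consistent under the exchange of endpoints.
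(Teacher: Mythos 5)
Your plan reproduces the paper's own derivation essentially verbatim: the $z$-term is handled exactly as in Theorem~\ref{normalizationthm} with the sign flipping to $[1-\exp(2\mu_j+2\overline{\mu}_k)]$, the derivative term is factored through the constant matrix $\mathfrak{C}$ as $\mathfrak{C}\,(\mathcal{W}^\dagger\tfrac{x^i}{r^2}\partial_i\mathcal{V})$, the endpoint analysis uses the projector to kill the $\boldsymbol{\mathfrak{v}}_0,\boldsymbol{\mathfrak{v}}_1$ contributions and the monodromy of Theorem~\ref{smonodromy} to produce the $-\mu^\dagger\mathfrak{C}(2x^i\partial_i\overline{\mu}_k/r^2)\mu$ term, and the remaining equalities are the same algebraic rewritings via $\mathfrak{F}$ and the formula for $\partial_i\boldsymbol{\mathfrak{w}}_{k,0}$. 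The only quibble is that (\ref{IQmon}) together with $U^2=-1_4$ gives $U^{T}\bigl(\mathrm{Res}_{z=-1}\mathcal{Q}^{-1}\bigr)U=-\,\mathrm{Res}_{z=1}\mathcal{Q}^{-1}$ rather than the identity as you stated it, but the sign you draw for the bracket is nonetheless the correct one.
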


We conclude with some comments on the Hermiticity of these expressions.
From
 \begin{align*}
  \left( {\mathcal{V}}\sp{\dagger} 
 \mathcal{Q}^{-1} \mathcal{H}\,\frac{x\sp{i}}{r^2}\frac{\partial}{\partial x\sp{i}}  
 {\mathcal{V}} 
\right)\sp\dagger
&=\frac{x\sp{i}}{r^2}\frac{\partial}{\partial x\sp{i}}\left( {\mathcal{V}}\sp{\dagger} 
\mathcal{H}  \mathcal{Q}^{-1}  
 {\mathcal{V}} 
\right)
-
{\mathcal{V}}\sp{\dagger} \left(
\frac{x\sp{i}}{r^2}\frac{\partial}{\partial x\sp{i}}\left[ \mathcal{H} \mathcal{Q}^{-1}\right]\right)
 {\mathcal{V}} 
-
{\mathcal{V}}\sp{\dagger} 
 \mathcal{H} \mathcal{Q}^{-1}\,\frac{x\sp{i}}{r^2}\frac{\partial}{\partial x\sp{i}}  
 {\mathcal{V}} 
 \intertext{together with the homogeneity $ x\sp{i} \partial_i\left(\mathcal{Q}^{-1}
 \mathcal{H} \right)= \mathcal{Q}^{-1}
 \mathcal{H} $ (which easily follows from the simpler
 $ x\sp{i} \partial_i\left(\mathcal{Q}
 \mathcal{H} \right)= \mathcal{Q}
 \mathcal{H} $)
this becomes }
&=\frac{x\sp{i}}{r^2}\frac{\partial}{\partial x\sp{i}}\left( {\mathcal{V}}\sp{\dagger} 
\mathcal{H}  \mathcal{Q}^{-1}  
 {\mathcal{V}} 
\right)
+
\frac1{r^2}\,
{\mathcal{V}}\sp{\dagger} \mathcal{Q}^{-1} \mathcal{H} 
 {\mathcal{V}} 
+
{\mathcal{V}}\sp{\dagger} 
 \mathcal{Q}^{-1} \mathcal{H}\,\frac{x\sp{i}}{r^2}\frac{\partial}{\partial x\sp{i}}  
 {\mathcal{V}} .
  \end{align*}
Then the constancy of $\mathcal{V}\sp{\dagger} 
 \mathcal{Q}^{-1} \mathcal{H}\mathcal{V}$  then shows that
 $$
  \left( {\mathcal{V}}\sp{\dagger} 
 \mathcal{Q}^{-1} \left. \mathcal{H}\,\frac{x\sp{i}}{r^2}\frac{\partial}{\partial x\sp{i}}  
 {\mathcal{V}} \right|_{z=-1}\sp{z=1}
\right)\sp\dagger
=
 \left( {\mathcal{V}}\sp{\dagger} 
 \mathcal{Q}^{-1} \left. \mathcal{H}\,\frac{x\sp{i}}{r^2}\frac{\partial}{\partial x\sp{i}}  
 {\mathcal{V}} \right|_{z=-1}\sp{z=1}
\right)
$$
 is Hermitian. If we consider the hermiticity of (\ref{inthiggs})
 the first term on the right-hand side is manifestly Hermitian while the property $\overline{\mathfrak{f}_k}=-\mathfrak{f}_{\mathcal{J}(k)}$ together with (\ref{mu13mu24})  show that the final term is Hermitian.
Indeed the matrix $ \mathfrak{C}\Lambda$ is Hermitian for any diagonal matrix $\Lambda=
\diag(\lambda_1,\ldots,\lambda_4)$ provided $\overline{\lambda}_j=-\lambda_j$.
By rewriting the middle term 
\begin{align*}
\mu_{aj}\sp\dagger &  \mathfrak{C}_{jl}
 \left(\boldsymbol{{\mathfrak{w}}}_{0}\sp\dagger 
 \left[\frac{x\sp{i}}{r^2}\frac{\partial}{\partial x\sp{i}} \boldsymbol{{\mathfrak{v}}}_{2}\right]\right)_{lk}
 \left[1-\exp(-2\overline{\mu}_l+ 2\overline{\mu}_k) \right]\mu_{k b}\\
 &=
 \mu_{aj}\sp\dagger \left(
\boldsymbol{{\mathfrak{v}}}_{j,2}\sp{\dagger}
\mathrm{Res}_{z=1} \mathcal{Q}^{-1}\mathcal{H}\,
\left[\frac{x\sp{i}}{r^2}\frac{\partial}{\partial x\sp{i}} \boldsymbol{{\mathfrak{v}}}_{k,2}\right]
\left[1+\exp(2\mu_j+ 2\overline{\mu}_k) \right]
 \right)\mu_{k b}
\end{align*}
we see that it also is Hermitian. 
 
\subsection{Calculating the Higgs Field}
We now describe how to calculate the Higgs field $\Phi$ and the more important gauge invariant quantity
\begin{equation}\label{defH}
H(\boldsymbol{x}):=
 \sqrt{-\frac12 \Tr\, \Phi^2}=1-\frac1{r}+O(r\sp{-2}).
 \end{equation}
Define
\begin{align}\label{defGH}
\mathcal{G}&:=\text{Gram}=
   \left( \int_{-1}^1 dz\, \boldsymbol{v}^{\dagger}_a . \boldsymbol{v}_b  \right)_{a,b=1,2},\qquad
\mathcal{H}:=\text{Higgs}'=   \left( \int_{-1}^1 dz\,  z \boldsymbol{v}^{\dagger}_a . \boldsymbol{v}_b  \right)_{a,b=1,2}.
\end{align} 
The (Hermitian and positive definite) Gram matrix $\mathcal{G}$ may be diagonalized and written
as
$$\mathcal{G}= U\sp\dagger\diag U = N\sp\dagger N,\qquad  N:=\sqrt{\diag}\ U.$$
The Higgs field is then (in terms of the unnormalized Higgs$'$ expressions)
$$\Phi= N\sp{\dagger-1}\mathcal{H} N\sp{-1}.$$
When we calculate the Higgs field we will need to calculate this factorization of $\mathcal{G}$ but this
is not necessary to calculate the gauge invariant quantity
$$
H^2=
 -\frac12 \Tr\, \Phi^2=-\frac12 \Tr\, \left(N\sp{\dagger-1}\mathcal{H} N\sp{-1}\right)
 \left(N\sp{\dagger-1}\mathcal{H} N\sp{-1}\right)
 = -\frac12 \Tr\, \mathcal{H}\mathcal{G}\sp{-1} \mathcal{H} \mathcal{G}\sp{-1}  .
 $$
At this stage we have all the needed formulae to evaluate $H^2(\boldsymbol{x})$ (for generic space time
points) and, upon solving the diagonalization, the Higgs field $\Phi$.

Our strategy is then to calculate
\begin{align*}
\text{Gram}&=
\mu_{aj}\sp\dagger \mathfrak{F}_{jl}
 \left(
 \boldsymbol{W}_{0}\sp\dagger 
\mathcal{H} \boldsymbol{{\mathfrak{v}}}_{2}\right)_{lk}
\left[1-\exp(-2\overline{\mu}_l+ 2\overline{\mu}_k) \right]
 \mu_{k b}
 ,\\
\text{Higgs}'_1&= i
 \mu_{aj}\sp\dagger
  \mathfrak{F}_{jl} 
   \left(
 \boldsymbol{{{W}}}_{0}\sp\dagger 
\mathcal{H} \boldsymbol{{\mathfrak{v}}}_{2}\right)_{lk}
\left[1+\exp(-2\overline{\mu}_l+ 2\overline{\mu}_k) \right]
 \mu_{k b}
,\\
\text{Higgs}'_2&= i
\mu_{aj}\sp\dagger \mathfrak{F}_{jl}
   \diag\left[  \frac{  {x\sp{i}\,\partial _{i}}\zeta_k}{\zeta_k}  \right]_{ll} \sp\dagger
 \left(
 \begin{pmatrix}
 1&\ldots&1& \\ 0&&0\\0&&0\\ \zeta_1^2 &\ldots&\zeta_4^2
 \end{pmatrix} \sp\dagger
  \boldsymbol{{\mathfrak{v}}}_{2}
  \right)_{lk}
 \left[1-\exp(-2\overline{\mu}_l+ 2\overline{\mu}_k) \right]\mu_{k b}
,\\
\text{Higgs}'_3&=
 -  i \mu_{aj}\sp\dagger \mathfrak{C}_{jk}\left(
 \frac{2 x\sp{i}}{r^2}\frac{\partial}{\partial x\sp{i}}\overline{\mu}_k
 \right)\mu_{k b} 
\end{align*}
where $\text{Higgs}'_{1,2,3}$ simply correspond to the three terms arising in the evaluation of the integral 
in Theorem \ref{higgsthm} and Gram the terms of Theorem \ref{normalizationthm}.

\subsection{Calculating the Energy Density}
Although one could calculate the gauge fields via (\ref{pangauge}) and from these the energy density 
$\mathcal{E}$,
the easiest way to calculate the energy density is using a formula of Ward \cite{Ward1981b}
\begin{equation}\label{wardenerydensity}
\mathcal{E}(\boldsymbol{x})=-\frac12 \nabla^2 \Tr \Phi^2,
\end{equation}
which is normalized\footnote{
This normalization varies: \cite{manton_sutcliffe_book} chooses $E=\int_{\mathbb{R}^3}\mathcal{E}(\boldsymbol{x})\, d^3x=2\pi n$ while \cite{sutcliffe97} has
$E=\int_{\mathbb{R}^3}\mathcal{E}(\boldsymbol{x})\, d^3x=4\pi n$.
For the charge $n=1$ monopole we have $H(\boldsymbol{x})=\coth(2r)-\frac{1}{2r}$ and
$\mathcal{E}(0)=8/3$.
} such that $E=\int_{\mathbb{R}^3}\mathcal{E}(\boldsymbol{x})\, d^3x=4\pi n$ for the
charge $n$ monopole. Then

\begin{lemma}\label{energydensitylemma}
\begin{align}
-\mathcal{E}(\boldsymbol{x})
&=\,\mathrm{Trace}\left( \left[\frac{\partial \mathcal{H}}{\partial x_i} . \mathcal{G}^{-1}
-\mathcal{H} . \mathcal{G}_{1,i}\right]^2\right) \nonumber\\
&+\,\mathrm{Trace}\left(\left\{\frac{\partial^2 \mathcal{H}}{\partial x_i^2}. \mathcal{G}^{-1}-2\frac{\partial \mathcal{H}}{\partial x_i} . \mathcal{G}_{1,i} + \mathcal{H} . \left[     \mathcal{G}_{1,i}. \frac{\partial \mathcal{G}}{\partial x_i}. \mathcal{G}^{-1}
- \mathcal{G}_{2,i}+ \mathcal{G}^{-1}\,. \frac{\partial \mathcal{G}}{\partial x_i} \,.\mathcal{G}_{1,i}   \right]\right\}.  \mathcal{H}.\mathcal{G}^{-1}\right)
 \label{energy}
\end{align}
where
$$
\mathcal{G}_{1,i} = \mathcal{G}^{-1}\,. \frac{\partial \mathcal{G}}{\partial x_i} \,. \mathcal{G}^{-1},\qquad \mathcal{G}_{2,i} = \mathcal{G}^{-1}\,.
 \frac{\partial^2 \mathcal{G}}{\partial x_i^2} \,. \mathcal{G}^{-1}.
$$
\end{lemma}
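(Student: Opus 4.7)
The plan is to apply Ward's formula $\mathcal{E}(\boldsymbol{x})=-\tfrac12\nabla^2\Tr\Phi^2$ directly to the unnormalized expression
\[
-\tfrac12\Tr\Phi^2=-\tfrac12\Tr\bigl(\mathcal{H}\mathcal{G}^{-1}\mathcal{H}\mathcal{G}^{-1}\bigr)
\]
derived earlier in this section. Set $A:=\mathcal{H}\mathcal{G}^{-1}$ so that $-\tfrac12\Tr\Phi^2=-\tfrac12\Tr(A^2)$, and take the Laplacian $\nabla^2=\sum_i\partial_i^2$ term by term in $i$.

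The computation reduces to repeated applications of the product rule together with the standard matrix-inverse identity
\[
\partial_i\mathcal{G}^{-1}=-\mathcal{G}^{-1}(\partial_i\mathcal{G})\mathcal{G}^{-1}=-\mathcal{G}_{1,i}.
\]
First I would differentiate once to obtain
\[
\partial_i A=(\partial_i\mathcal{H})\,\mathcal{G}^{-1}-\mathcal{H}\,\mathcal{G}_{1,i}.
\]
Using $\partial_i\Tr(A^2)=2\Tr(A\,\partial_iA)$ and then $\partial_i^2\Tr(A^2)=2\Tr\bigl((\partial_iA)^2\bigr)+2\Tr(A\,\partial_i^2 A)$, I identify the first trace in the statement of the lemma as exactly $\Tr((\partial_iA)^2)$. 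Next I differentiate once more to obtain $\partial_i^2 A$, where the only nontrivial ingredient is
\[
\partial_i^2\mathcal{G}^{-1}=\mathcal{G}_{1,i}(\partial_i\mathcal{G})\mathcal{G}^{-1}-\mathcal{G}_{2,i}+\mathcal{G}^{-1}(\partial_i\mathcal{G})\mathcal{G}_{1,i},
\]
obtained by differentiating $-\mathcal{G}_{1,i}$ once and applying the inverse identity twice. Substituting, and writing the result as a trace against $A=\mathcal{H}\mathcal{G}^{-1}$, reproduces the second trace in the statement. Summing over $i$ and multiplying by $-\tfrac12\cdot 2=-1$ (the overall $-\tfrac12$ from Ward's formula combined with the factor of $2$ from differentiating $\Tr(A^2)$) yields the claimed expression for $-\mathcal{E}(\boldsymbol{x})$.

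There is no real obstacle here: the whole statement is a book-keeping exercise in differentiating a matrix-valued rational expression. The only thing to watch is the ordering of factors inside the traces, since $\mathcal{H}$ and $\mathcal{G}^{-1}$ do not commute, so cyclicity of the trace must be used carefully to cast the second trace in the manifestly symmetric form $\{\cdots\}.\mathcal{H}.\mathcal{G}^{-1}$ given in the statement. Once the symbolic identification is made, no further properties of $\mathcal{H}$, $\mathcal{G}$, the spectral curve, or the Baker--Akhiezer function are required.
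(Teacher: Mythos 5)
Your proposal is correct and follows essentially the same route as the paper's own proof: differentiate $\Tr\bigl(\mathcal{H}\mathcal{G}^{-1}\mathcal{H}\mathcal{G}^{-1}\bigr)$ twice using $\partial_i\mathcal{G}^{-1}=-\mathcal{G}_{1,i}$ and its derivative, then collect terms under the trace with cyclicity. The factor bookkeeping in your final step is also consistent with the lemma as stated (the paper's appendix opens with a stray factor of $4$ that your accounting avoids).
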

Using (\ref{normalizationthm},\ref{higgsthm}) all of the derivatives here involve expressions such as
$\partial_i\zeta$ and $\partial_i\mu$ and we have described earlier how these are to evaluated.
Thus the energy density may be calculated analytically; the large number of terms mean this is best
done with computer algebra. As an example we establish in Appendix \ref{appendix_energy_density_origin}

 \begin{proposition}\label{energy_density_origin}
 The Energy density at the origin is given by
\begin{align} 
\mathcal{E}_{\boldsymbol{x}=0}(k)= \frac{32}{k^8{k'}^2K^4} \left[ k^2(K^2{k'}^2 +E^2-4EK+2K^2+k^2)
-2(E-K)^2    \right]^2 \label{energy_density0}  
\end{align}
The limiting values of $\mathcal{E}_{\boldsymbol{x}=0}(k)$ at $k=0$ and $k=1$ are
\begin{align}
\mathcal{E}_{\boldsymbol{x}=0}(0) =\frac{8}{\pi^4}(\pi^2-8)^2,
\qquad \mathcal{E}_{\boldsymbol{x}=0}(1)=0 .\label{energy_density_limits}
\end{align}
\end{proposition}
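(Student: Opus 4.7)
\medskip

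\noindent\textbf{Proof proposal for Proposition \ref{energy_density_origin}.}
The plan is to exploit Ward's formula (\ref{wardenerydensity}) in the form $\mathcal{E}(\boldsymbol{x}) = \nabla^2 H^2$ together with the symmetry of the charge $2$ monopole under inversion $\boldsymbol{x}\mapsto -\boldsymbol{x}$, which forces $H^2$ to be even in each coordinate separately in a neighbourhood of the origin. Consequently the mixed partials $\partial_i\partial_j H^2$ vanish at $\boldsymbol{x}=0$ for $i\ne j$, and
\[ \mathcal{E}_{\boldsymbol{x}=0} = \left.\partial_1^2 H^2\right|_{0}+\left.\partial_2^2 H^2\right|_{0}+\left.\partial_3^2 H^2\right|_{0}, \]
so it suffices to Taylor expand the restriction of $H^2$ to each coordinate axis to second order in the axis variable and sum the three resulting coefficients. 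This reduces a three-dimensional Laplacian to three one-dimensional computations, each of which uses only the axis formulae already developed.

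First I would take the $x_2$-axis formula (\ref{h2}), namely $H^2(0,x_2,0)=(\tanh 2\lambda_2+4x_2/W_2)^2+(Kk'^2-2E+K)^2/(K^2k^4\cosh^2 2\lambda_2)$, and expand it to order $x_2^2$ using the explicit derivative of $\lambda_2$ at the origin obtained from (\ref{deflambda2p}): $\lambda_2'(0)=-E/(Kk')$, so $\tanh 2\lambda_2=-(2E/(Kk'))x_2+O(x_2^3)$ and $1/\cosh^2 2\lambda_2 = 1-(2E/(Kk'))^2 x_2^2+O(x_2^4)$. The cross-term with $4x_2/W_2=4x_2/(K^2 k')+O(x_2^3)$ then gives the coefficient $\partial_2^2 H^2|_0$ explicitly. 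I would then do the analogous computation on the $x_1$- and $x_3$-axes using the companion expressions derived in Section \ref{sectionHiggsonaxes} together with the derivatives of $\lambda_1,\lambda_3$ listed in the lemma of Section \ref{muandjacobi}; the same three ingredients (value, first derivative, and appropriate correction from the algebraic factor analogous to $4x_2/W_2$) appear in each case, making the computation uniform.

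The main obstacle is the algebraic simplification of the resulting sum into the stated perfect square $\bigl[k^2(K^2k'^2+E^2-4EK+2K^2+k^2)-2(E-K)^2\bigr]^2$. The individual axis contributions produce several rational expressions in $E,K,k,k'$ with denominators involving $K^{4}$, $k^{4}$ and $k'^{2}$, and the perfect-square structure is not manifest term by term. I expect this to require repeated use of $k'^2=1-k^2$ together with the elementary derivative identities $dK/dk = (E-k'^2K)/(kk'^2)$ and $dE/dk=(E-K)/k$, along with the Legendre relation $EK'+E'K-KK'=\pi/2$ where it enters through normalization; a careful computer-algebra check is advisable at this step.

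For the two limits the cleanest route is direct substitution into the closed form. At $k=0$ one has $K=E=\pi/2$ and $k'=1$, so the bracket expands as $k^4(1-\pi^2/8)+O(k^6)$ (using $K=\frac{\pi}{2}(1+k^2/4+\cdots)$ and $E=\frac{\pi}{2}(1-k^2/4+\cdots)$); squaring and dividing by $k^8k'^2K^4$ yields $(\pi^2-8)^2/(4\pi^4)$, so $\mathcal{E}_{\boldsymbol{x}=0}(0)=8(\pi^2-8)^2/\pi^4$. At $k=1$ one has $k'\to 0$, $E\to 1$, and $K\to\infty$; inspecting the dominant $K^2$ terms inside the bracket shows $k^2(2K^2-4EK+E^2+k^2)-2(E-K)^2 = -2(E^2-EK)+O(1) = O(K)$ because the $K^2$ terms cancel, so the squared bracket is $O(K^2)$ while the denominator carries a factor $K^4$; moreover $H(\boldsymbol{x}=0)\to 1$, consistent with infinitely separated monopoles, and a short estimate of the remaining $k'^2$ factor gives $\mathcal{E}_{\boldsymbol{x}=0}(1)=0$.
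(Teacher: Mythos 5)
Your strategy is sound and it reaches the stated formula, but it is a genuinely different route from the paper's. The paper (Appendix \ref{appendix_energy_density_origin}) never touches the axis formulae: it evaluates the general matrix expression of Lemma \ref{energydensitylemma} directly, by computing $\mathcal{G}(\boldsymbol{0})$, $\mathcal{H}(\boldsymbol{0})$ and all their first and second derivatives at the origin (using the roots $\zeta_j=\pm k'\pm\imath k$ of the Atiyah--Ward equation) and substituting into (\ref{energy}). You instead reduce $\mathcal{E}_{\boldsymbol{x}=0}=\nabla^2 H^2|_0$ to three one-dimensional Taylor expansions of the closed-form axis restrictions (\ref{h2}), (\ref{HBPP}) and the $x_3$-axis expression, using $\lambda_1(0)=\lambda_2(0)=\lambda_3(0)=0$ and the derivatives from \S\ref{muandjacobi}. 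This is legitimate: each pure second partial $\partial_i^2 H^2|_0$ is by definition the second derivative of the restriction of $H^2$ to the $x_i$-axis, the origin is not a point of bitangency, and the Laplacian contains no mixed partials. Your approach is more economical given that \S\ref{sectionHiggsonaxes} has already done the hard work of deriving the axis formulae; the paper's matrix-level computation has the advantage of being the same machinery that evaluates $\mathcal{E}$ at a general point, so the origin is just a special evaluation rather than a separate argument. Your verification of both limits is essentially the paper's (series expansion of $K$ and $E$), and your leading-order computations check out, including the cancellation $K^2k'^2+E^2-4EK+2K^2=O(k^4)$ that makes the bracket $k^4(1-\pi^2/8)+O(k^6)$.

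Two small cautions. First, your justification that "inversion symmetry forces $H^2$ to be even in each coordinate separately, so mixed partials vanish" is not valid as stated -- $\boldsymbol{x}\mapsto-\boldsymbol{x}$ invariance does not exclude terms like $x_1x_2$ -- but the point is irrelevant, since the Laplacian involves no mixed partials; you should simply drop that sentence. Second, in the $k\to1$ limit the denominator $k'^2K^4$ itself tends to zero, so "the squared bracket is $O(K^2)$ against a $K^4$ in the denominator" is not by itself conclusive; one needs the sharper statement that at $k=1$ the bracket collapses exactly to $K^2k'^2-E^2+1\sim k'^2K^2$, whence $\mathcal{E}_{\boldsymbol{x}=0}\sim 32k'^2\to0$, in agreement with the paper's expansion.
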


We plot $\mathcal{E}_{\boldsymbol{x}=0}(k)$ in Figure \ref{ed_origin}.

\begin{figure}[h]
\centering
\includegraphics[width=0.5\textwidth]{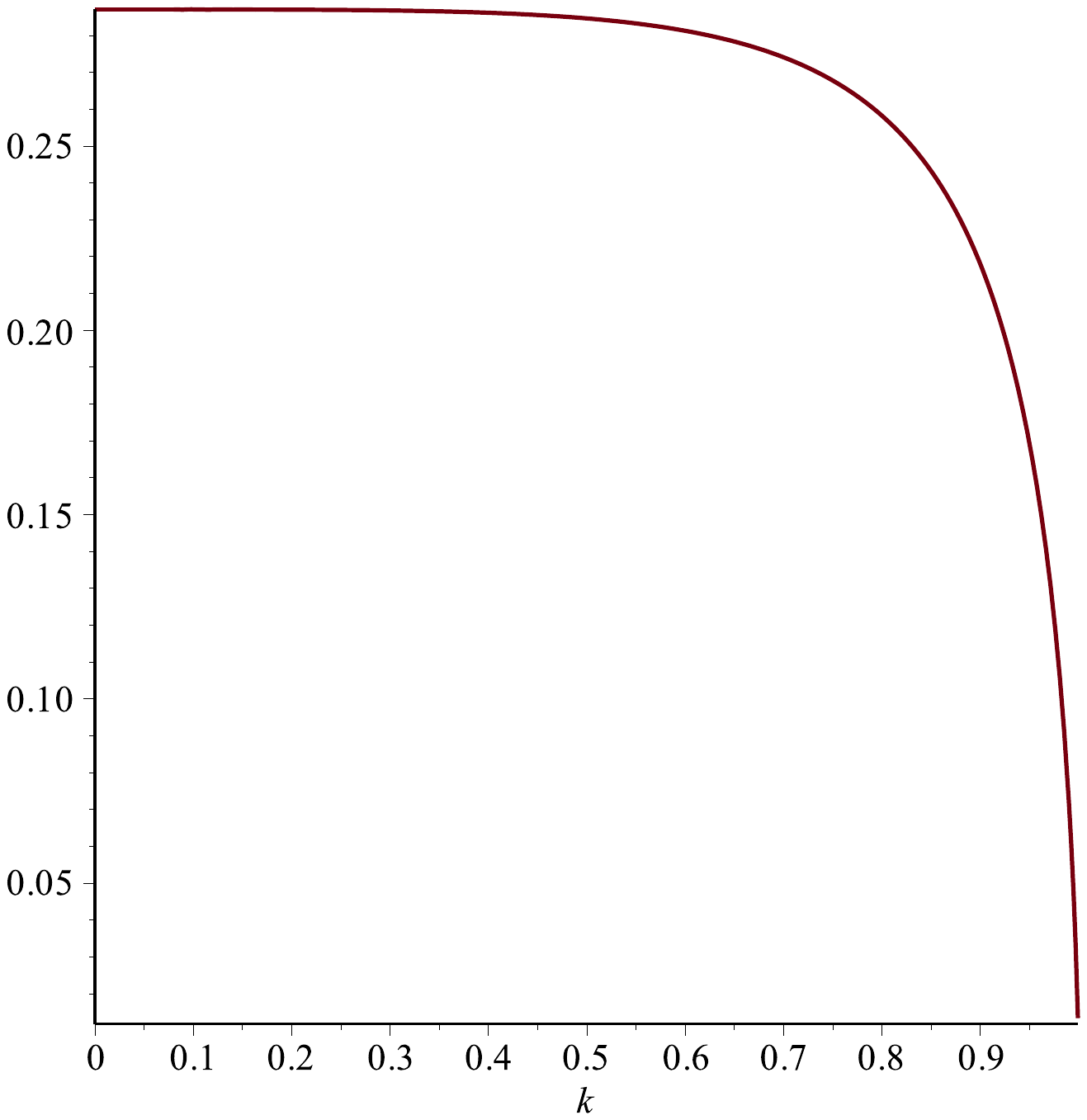}
\caption{  $\mathcal{E}_{\boldsymbol{x}=0}(k)$ }
 \label{ed_origin} 
\end{figure}

\section{The Higgs Field on the coordinate axes}\label{sectionHiggsonaxes}

In this section we shall calculate the Higgs Field on each of the coordinate axes. We have already done this
for the $x_2$ axis using Panagopoulos's formulae for evaluating  the Higgs field and determined $
H(\boldsymbol{x})=\sqrt{-\frac12 \Tr\Phi^2}$. Here we shall employ Theorem \ref{higgsthm}. First though we recall an old approach of 
Brown, Panagopoulos and Prasad \cite{bpp82} sufficient to make comparison with the results here; the full
details are given in Appendix \ref{applame}.

Brown, Panagopoulos and Prasad observed that (a constant) conjugation of the operator $\Delta\sp\dagger$ took the form
$$
\frac{\mathrm{d}}{\mathrm{d} z} 1_4+
\begin {pmatrix} (f_{{3}}+f_{{1}}-f_{{2}})/2&-x
_{{3}}&-x_{{1}}&ix_{{2}}\\ -x_{{3}}&(f_{{3}}-
f_{{1}}+f_{{2}})/2&ix_{{2}}&-x_{{1}}\\ -x_{{1}}&
-ix_{{2}}&(f_{{1}}+f_{{2}}-f_{{3}})/2&x_{{3}}
\\ -ix_{{2}}&-x_{{1}}&x_{{3}}&-(f_{{1}}+f_{
{2}}+f_{{3}})/2\end {pmatrix}
$$
and so on any coordinate axis this reduced to two $2\times2$ matrix equations. They focussed on
the axis joining the two monopoles where they showed that each of the $2\times2$ matrix equations reduced to Lam\'e's equation. For that case they determined the two normalizable solutions to 
$\Delta\sp\dagger v=0$ 
and then showed that two of the three Higgs field components vanished;  denoting the
remaining non-vanishing component by $\phi$
then \cite[6.13] {bpp82} expresses this as
\begin{align}\label{bppx1}
H=-i \phi&=-k' K +\frac{2 k'}{k^2 \sn^2 t -S^2}\left( S-\frac{\sn t}{\cn t \dn t}\frac{dS}{dt}\right),
\intertext{where  \cite[6.11] {bpp82}}
S(t)&=- \frac{\sn t \dn t}{\cn t } \tanh\left( KZ(t)\right)\nonumber
\end{align}
and $t$ is defined through the relation \cite[6.8]{bpp82}
$$4 k'^2 x_{1,BPP}^2 -(1+k^2)= -1-k^2 \cn^2 t
\Longleftrightarrow \sn^2 t= \frac{4 x^2_{1,BE}}{k^2K^2}
.$$

We will relate this to our expressions making use of the results of \S\ref{axesandjacobi} and \S\ref{muandjacobi}. Appendix \ref{applame} performs the analysis for each of the coordinate axes
and relates the conventions and scalings\footnote{
$x_{BPP}= x_{BE}/Kk'$, $\phi_{BE}= \phi_{BPP}/({k' K})$.
} of Brown, Panagopoulos and Prasad to those here; the analysis of the Appendix clarifies some of the
arguments of \cite{bpp82}.

For each coordinate axis we record the calculations of Gram and Higgs$'_i$ ($i=1,2,3$) of Theorem \ref{higgsthm} and then evaluate $H(\boldsymbol{x})$. It is convenient to define
\begin{equation} \label{defDM}
\mathcal{D} (\lambda) = \mathrm{Diag} ( \mathrm{e}^{-2\lambda}, \,
\mathrm{e}^{2 \lambda}  ) ,
\qquad     
\mathcal{M}(\lambda)=
\begin {pmatrix} 
 {{\rm e}^{-2\,\lambda}}\sinh(2\,\lambda )&-1\\
 -1& - \,\ {{\rm e}^{2\,\lambda}}\sinh(2\,\lambda )\\
\end {pmatrix}.
\end{equation}
Then $\Tr\left[\mathcal{D} (\lambda)\mathcal{M}(\lambda)\right]^2/2=\cosh^2(2\,\lambda)$.
We begin first with the simpler case of the $x_2$ axis, recovering our earlier result, and then
turn to the other axes which contain points of bitangency.

\subsection{The \texorpdfstring{$x_2$}{x22} axis} 
With $\mu_1=\lambda_2+i\pi/4$  and
\begin{align*}
\zeta _{{1}}&={\frac {iKk+\sqrt {{K}^{2}{{ k'}}^{2}+4\,{x_{{2}}}^{2}
}}{\sqrt {{K}^{2}+4\,{x_{{2}}}^{2}}}}\\
\intertext{we obtain}
\text{Gram}&= 
 8\, \cosh^3 \left( 2\,\lambda_2 \right){K}^{2}{k}^{2}
 \mathcal{D}(-\lambda_2)
\\
\text{Higgs}'_1&=
8\,i \cosh^2 \left( 2\,\lambda_2 \right) 
\left[K^2k^2 \mathcal{M}(-\lambda_2) +2(K^2+4x_2^2) \sigma_1\right]
\\
\text{Higgs}'_2&= -
{\frac {32\,i  \cosh^3 \left( 2\,\lambda_2 \right)
x_{{2}}{K}^{2}{k}^{2}}{\sqrt {{K}^{2}+4\,{x_{{2}}
}^{2}}\sqrt {{K}^{2}{{k'}}^{2}+4\,{x_{{2}}}^{2}}}}\mathcal{D}(-\lambda_2)
\\
\text{Higgs}'_3&= 
-16\,i \cosh^2(2\lambda_2)(EK+4x_2^2)\,\sigma_1
\end{align*}

Assembling these we again obtain  (\ref{h2}) (for $-\infty<x_2<\infty$)
\begin{equation*}
H^2(0,x_2,0) = \left(\tanh 2\lambda_2+\frac{4 x_2}{W_2}\right)^2 + 
\frac{(K {k'}^2-2E+K)^2}{ K^2 k^4 \cosh^2 2\lambda_2} 
\end{equation*}
with  $ W_2=\sqrt{ (K^2+4x_2^2)(K^2 {k'}^2+4 x_2^2) } $. These are plotted
in Figure \ref{HFIELD2} for different scales using
$\lambda_2=\lambda'_2 -i\pi/4$ where $\lambda'_2$ is given by (\ref{deflambda2p}).

\begin{figure}
\centering
\begin{subfigure}[b]{0.49\linewidth}
\includegraphics[width=0.7\textwidth]{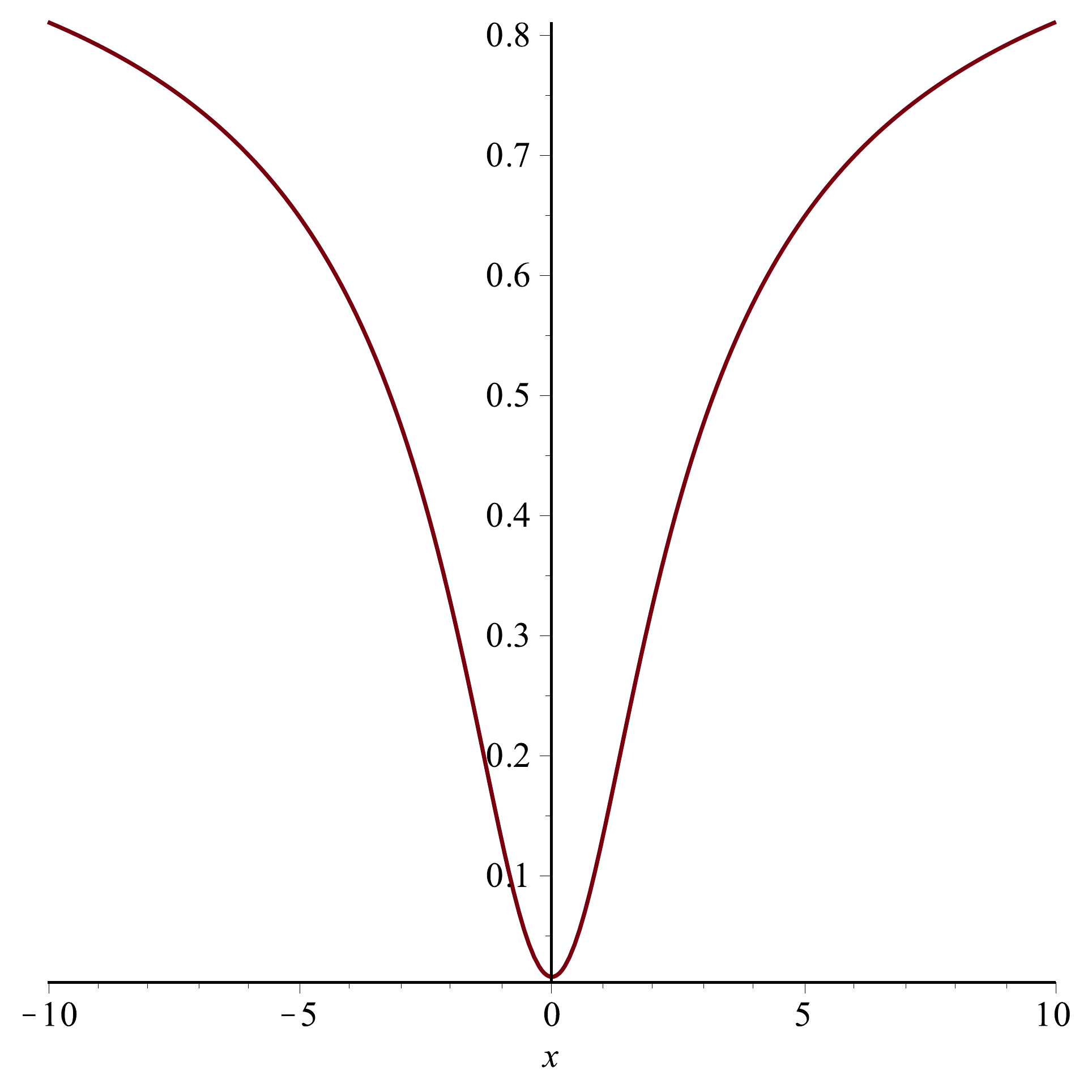}
\end{subfigure}
\begin{subfigure}[b]{0.49\linewidth}
\includegraphics[width=0.7\textwidth]{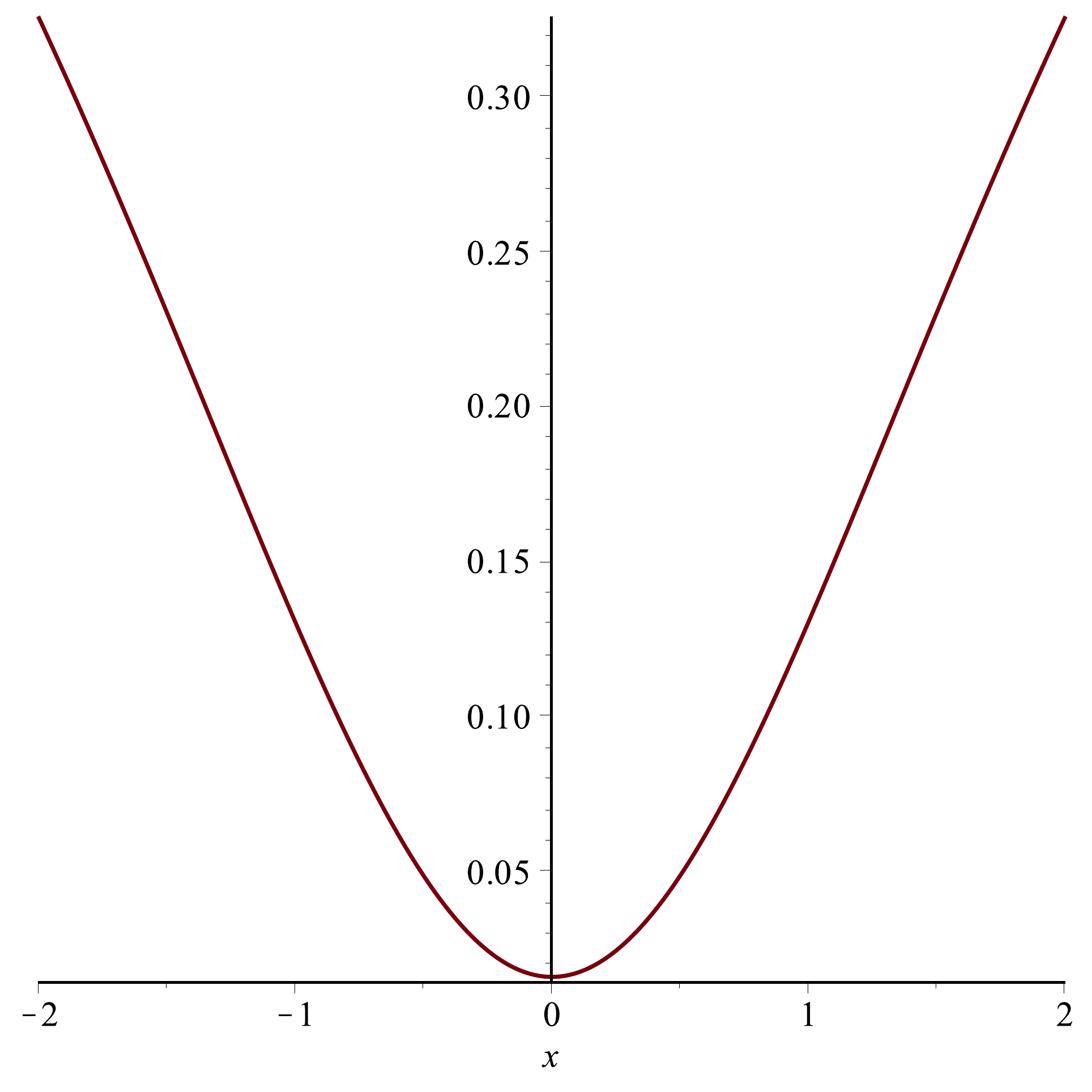}
\end{subfigure}
\caption{$H^2(\boldsymbol{x}):=
 -\frac12 \Tr\, \Phi^2$ restricted to the $x_2$-axis  $k=0.8$ }
 \label{HFIELD2} 
\end{figure}

\subsection{The \texorpdfstring{$x_1$}{x1} axis} We have seen that the points $\pm Kk/2$, $\pm K/2$ of the $x_1$ axis 
correspond to nongeneric points of bitangency to the spectral curve. 
Without including higher order terms
in the expansion of the eigenfunctions we consider separate cases and we show here that these piece
together to one expression for $H^2$. Recall that we identified in (\ref{x1axis})
\[ \zeta_{1}=\zeta_1(x_1)=\frac{Kk'+\imath \sqrt{K^2k^2-4x_1^2}}{\sqrt{K^2-4x_1^2}}\]

\subsubsection{Calculations for \texorpdfstring{$|x_1|<Kk/2$}{x1small}} Here $\mu_1=\lambda_1+ i\pi/4$ with $\lambda_1$ 
given by (\ref{deflambda1}).
\begin{align*}
\text{Gram}&= 
8\,\cosh \left( 2\,\lambda_1 \right)  \left( 
-{K}^{2}{{\ k'}}^{2} \left( \cosh^2 \left( 2\,\lambda_1 \right)  \right) +{K}^{2}-4\,{x_{{1}}}^{2} \right) 
\mathcal{D}(\lambda_1)
\\
\text{Higgs}'_1&= 
8 i\,  \left( 
-{K}^{2}{{\ k'}}^{2} \left( \cosh^2 \left( 2\,\lambda_1 \right)  \right)+{K}^{2}-4\,{x_{{1}}}^{2} \right)\,
\mathcal{ M}(\lambda_1)
\\
\text{Higgs}'_2&= 
- \frac{16\imath K^2 k'^2 \, x_1 \sinh(4\lambda_1)}
{\sqrt{K^2-4x_1^2}\sqrt{K^2k^2-4x_1^2}}\, \mathcal{M}(\lambda_1)
\\
\text{Higgs}'_3&= 
16 i \,
\left( K(E-K)+4
\,{x_{{1}}}^{2} \right)\, \mathcal{M}(\lambda_1)
\end{align*}

\subsubsection{Calculations for $Kk/2<|x_1|<K/2$}

\begin{align*}
\text{Gram}&= 
  -8 \left(K^2 {k'}^2 \sinh^2(2\mu_1)+K^2-4x_1^2  \right) 1_2
\\
\text{Higgs}'_1&= 
 -8\imath \left(K^2 {k'}^2 \sinh^2(2\mu_1)+K^2-4x_1^2  \right) \sigma_1
\\
\text{Higgs}'_2&=  
{\frac {16\imath {K}^{2}{{k'}}^{2}\sinh \left( 4\,\mu_1 \right) x_{{1
}}}{\sqrt {{K}^{2}-4\,{x_{{1}}}^{2}}\sqrt {{K}^{2}{k}^{2}-4\,{x_{{1}}
}^{2}}}}\sigma_1
\\
\text{Higgs}'_3&=- 16 i \,
\left( K(E-K)+4
\,{x_{{1}}}^{2} \right)\, \sigma_1
\end{align*}

\subsubsection{Calculations for $K/2<|x_1|$} With $\mu_1=\lambda''_1$.

\begin{align*}
\text{Gram}&= 8\cosh(2\lambda''_1)( {k'}^2K^2\sinh^2(2\lambda''_1)+K^2-4x_1^2 )\,
\mathrm{Diag} (
{{\rm e}^{-2\,\lambda''_1}}, {{\rm e}^{2\,\lambda''_1}})
\\
\text{Higgs}'_1&=8\imath (K^2{k'}^2\sinh^2(2\lambda''_1)+K^2-4x_1^2)\mathcal{ M}(-\lambda''_1)
 \\
\text{Higgs}'_2&= 
-{\frac {16 \imath {K}^{2}{{ k'}}^{2}\sinh \left( 4\,\lambda''_1 \right) x_{{1
}}}{\sqrt {{K}^{2}-4\,{x_{{1}}}^{2}}\sqrt {{K}^{2}{k}^{2}-4\,{x_{{1}}
}^{2}}}}\,\mathcal{M}(-\lambda''_1)
\\
\text{Higgs}'_3&= 
16\,i \left(K(E-K)+  4\,{x_{{1}}}^{2}
 \right) \mathcal{M}(-\lambda''_1)
\end{align*}

Assembling these pieces yields
\[ H(x_1,0,0)=
\begin{cases}
1+\dfrac{2K(-K {k'}^2 \cosh^2 2 \lambda_1+E) }
{K^2{k'}^2\cosh^2 2\lambda_1-K^2+4x_1^2}-\dfrac{1}{W_1}
\dfrac{2 K^2 {k'}^2 x_1 \sinh 4\lambda_1 }{K^2{k'}^2\cosh^2 2\lambda_1 -K^2+4x_1^2}
&  |x| < \frac{Kk}{2}
\\
1-\dfrac{2K(K {k'}^2 \sinh^2 2\mu_1 +E) }
{K^2{k'}^2\cosh^2 2\mu_1 +K^2k^2-4x_1^2}
+\dfrac{1}{W_1}
\dfrac{2 K^2 {k'}^2 x_1 \sinh 4\mu_1 }{K^2{k'}^2\cosh^2 2\mu_1+K^2k^2-4x_1^2}
& \frac{Kk}{2}<| x| < \frac{K}{2}
\\
1-\dfrac{2K(K {k'}^2 \sinh^2 2\lambda''_1 +E) }
{K^2{k'}^2\cosh^2 2\lambda''_1 +k^2K^2-4x_1^2}
+\dfrac{1}{W_1}
\dfrac{2 K^2 {k'}^2 x_1 \sinh 4\lambda''_1 }{K^2{k'}^2\cosh^2 2\lambda''_1 +K^2k^2-4x_1^2}
& \frac{K}{2}<| x|
\end{cases}
\]
where
\[ W_1= \sqrt{(K^2k^2 - 4 x_1^2)(K^2-4x_1^2)}. \] 
The sign of the square root $W_1$  in regions {\bf II}, {\bf III} requires a little more care and is best done by analytic continuation. We also
note that in all of these formulae there are apparent singularities at $x_1=\pm Kk/2, \pm K/2$ and the zeros
of the denominator 
$$K^2{k'}^2\cosh^2 2\mu_1=4x_1^2-K^2k^2.$$ 
We may in fact solve this transcendental equation: our previous results $\mu_1(\pm Kk/2)=i\pi/4$
and $\mu_1(\pm K/2)=0$ show the roots to again be $x_1=\pm Kk/2, \pm K/2$. 
The signs may be verified as such that give finite values. Analytic continuation away from the axis around these values shows that the signs of region {\bf II}, {\bf III}  coincide with the single
expression for the whole axis
\begin{equation}\label{HBPP}
H(x_1,0,0)=1+\frac{2K(-K {k'}^2 \cosh^2 2\lambda_1 +E) }
{K^2{k'}^2\cosh^2 2\lambda_1 -K^2+4x_1^2}
-\frac{1}{W_1}
\frac{2 K^2 {k'}^2 x_1 \sinh 4\lambda_1 }{K^2{k'}^2\cosh^2 2\lambda_1 -K^2+4x_1^2}
\end{equation}
where $\lambda_1$ is given by (\ref{deflambda1}). Note that $\lambda(x_1)$ is odd and overall $H^2(x_1,0,0,0)$ is even. From the expression when $|x|>K/2$ we see it has the desired asymptotics. We again observe (\ref{higgsorigin}),
a necessary test of consistency. This is plotted in Figure \ref{HFIELD1}.

\begin{figure}
\centering
\begin{subfigure}[b]{0.49\linewidth}
\includegraphics[width=0.7\textwidth]{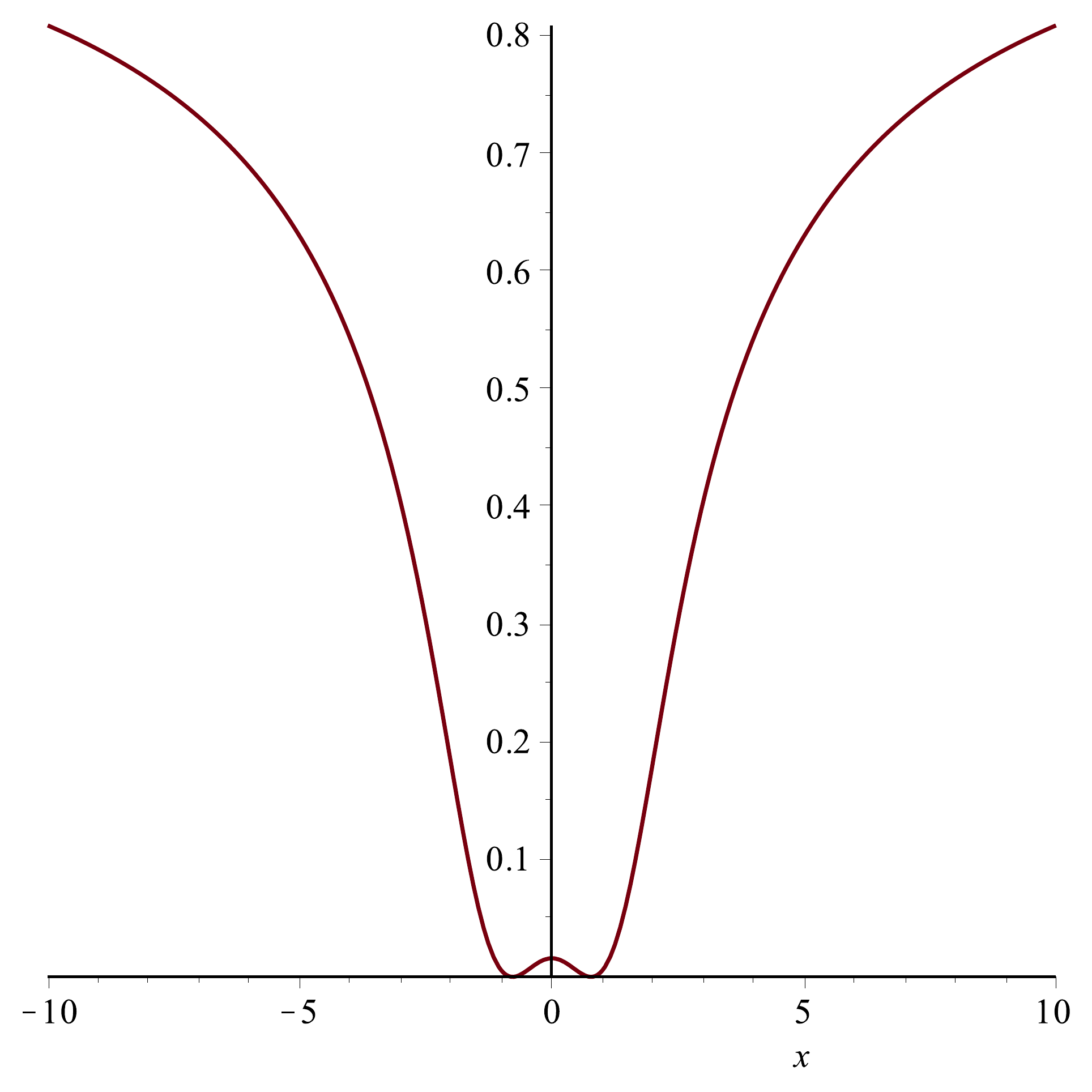}
\end{subfigure}
\begin{subfigure}[b]{0.49\linewidth}
\includegraphics[width=0.7\textwidth]{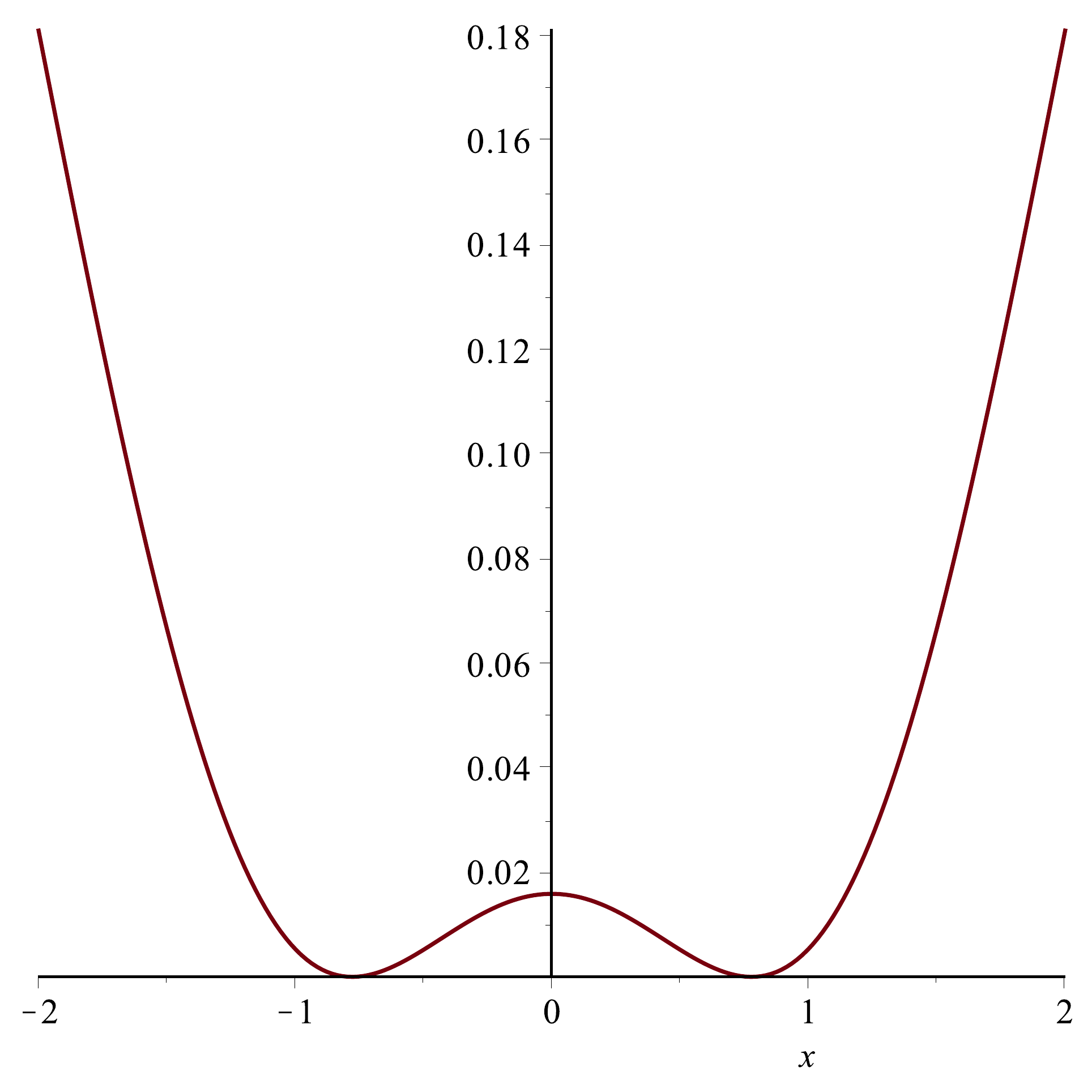}
\end{subfigure}
\caption{$H^2(\boldsymbol{x}):=
 -\frac12 \Tr\, \Phi^2$ restricted to the $x_1$-axis  $k=0.8$ }
 \label{HFIELD1} 
\end{figure}

\subsection{The \texorpdfstring{$x_3$}{x3} axis} 
Again we have the points of bitangency $\pm Kk'/2$ and our analysis proceeds for two intervals which
again may be expressed in terms of a single expression.

\subsubsection{Calculations for $|x_3|<Kk'/2$}  With $\mu_1=i \lambda=\lambda_3+i\pi/4$
and
$$\zeta _{{1}}={\frac {i\sqrt {{K}^{2}{k}^{2}+4\,{x_{{3}}}^{2}}+\sqrt {{
K}^{2}{{ k'}}^{2}-4\,{x_{{3}}}^{2}}}{K}}.
$$

\begin{align*}
\text{Gram}&= 
8 (K^2\sin^2 \left( 2\,\lambda \right) -\,{K}^{2}{{ k'}}^{2}+4\,{x_{{3}}}^{2})1_2
\\
\text{Higgs}'_1&=
8\,i \left( K^2\sin^2 \left( 2\,\lambda \right)-{K}^{2}{{ k'}}^{2}+4\,{x_{{3}}}^{2} \right) 
\sigma_1
\\
\text{Higgs}'_2&= \frac{-16\,\imath\,  \sin(4\lambda) x_3 K^2}{\sqrt{K^2{k'}^2-4x_3^2}\sqrt{K^2k^2+4x_3^2}}\, \sigma_1
\\
\text{Higgs}'_3&= 
-16\,i \left( -\,{K}^{2}{{ k'}}^{2}+KE+4\,{x_{{3}}}^{2} \right) 
\sigma_1
\end{align*}

\subsubsection{Calculations for $Kk'/2<|x_3|$}

\begin{align*}
\text{Gram}&= 
8\,\cosh \left( 2\,\mu_{{1}} \right)  \left(K^2 \cosh^2 \left( 2
\,\mu_{{1}} \right) -\,K^2{k}^{2}-4 x_3^2
 \right) \, \mathcal{D}(-\mu_1)
\\
\text{Higgs}'_1&= 
8\imath  \left(K^2 \cosh^2 \left( 2\,\mu_{{1}} \right) -\,K^2 {k}^{2}
-4\,x_3^2 \right)\,\mathcal{M}(-\mu_1)
\\
\text{Higgs}'_2&=  \frac{16 \sinh(4\mu_1) x_3 K^2}{\sqrt{K^2{k'}^2-4x_3^2}\sqrt{K^2k^2+4x_3^2}}\, \mathcal{M}(-\mu_1)
\\
\text{Higgs}'_3&= 16\imath (-K^2 {k'}^2 +EK + 4 x_3^2)\, \mathcal{M}(-\mu_1)
\end{align*}

Recall that we have shown that for $|x_3|<Kk'/2$ then $\mu_1=\lambda_3+i\pi/4$ with 
$\lambda_3$ given in (\ref{deflambda3}).
These combine to yield the single expression for the $x_3$-axis we have
\begin{equation}
H(0,0,x_3)= 1 - \dfrac{2K(K \cosh^2 2\mu_1 +E-K) } {K^2\cosh^2 2\mu_1 -K^2k^2-4x_3^2}+\dfrac{1}{W_3}\dfrac{2 i \,K^2 x_3 \sinh 4\mu_1 }{K^2\cosh^2 2\mu_1 -K^2k^2-4x_3^2}
\end{equation}
where now
\begin{equation}  
W_3 = \sqrt{ (K^2{k'}^2-4 x_3^2)(K^2k^2+4 x_3^2 ) }.
 \end{equation}
Here the expressions for $\mu_1$ combine with the signs of the square roots $W_3$. The expression
given in fact holds for $x_3>-Kk'/2$ with $\mu_1=\lambda_3+i\pi/4$ and $\lambda_3$ given by (\ref{deflambda3}).
We note that both $W_3$ and the dominators $K^2
\cosh^2 2\mu_1 -K^2k^2-4x_3^2$ vanish at $x_3=Kk'/2$ and again
careful analysis of the function $H^2(0,0,x_3)$ shows these points  it to be regular.
The shape of the field  $H^2(0,0,x_3)$ is shown in Figure \ref{HFIELD3} .
Again we have the consistency check (\ref{higgsorigin})
and the correct asymptotics.

\begin{figure}
\centering
\begin{subfigure}[b]{0.49\linewidth}
\includegraphics[width=0.7\textwidth]{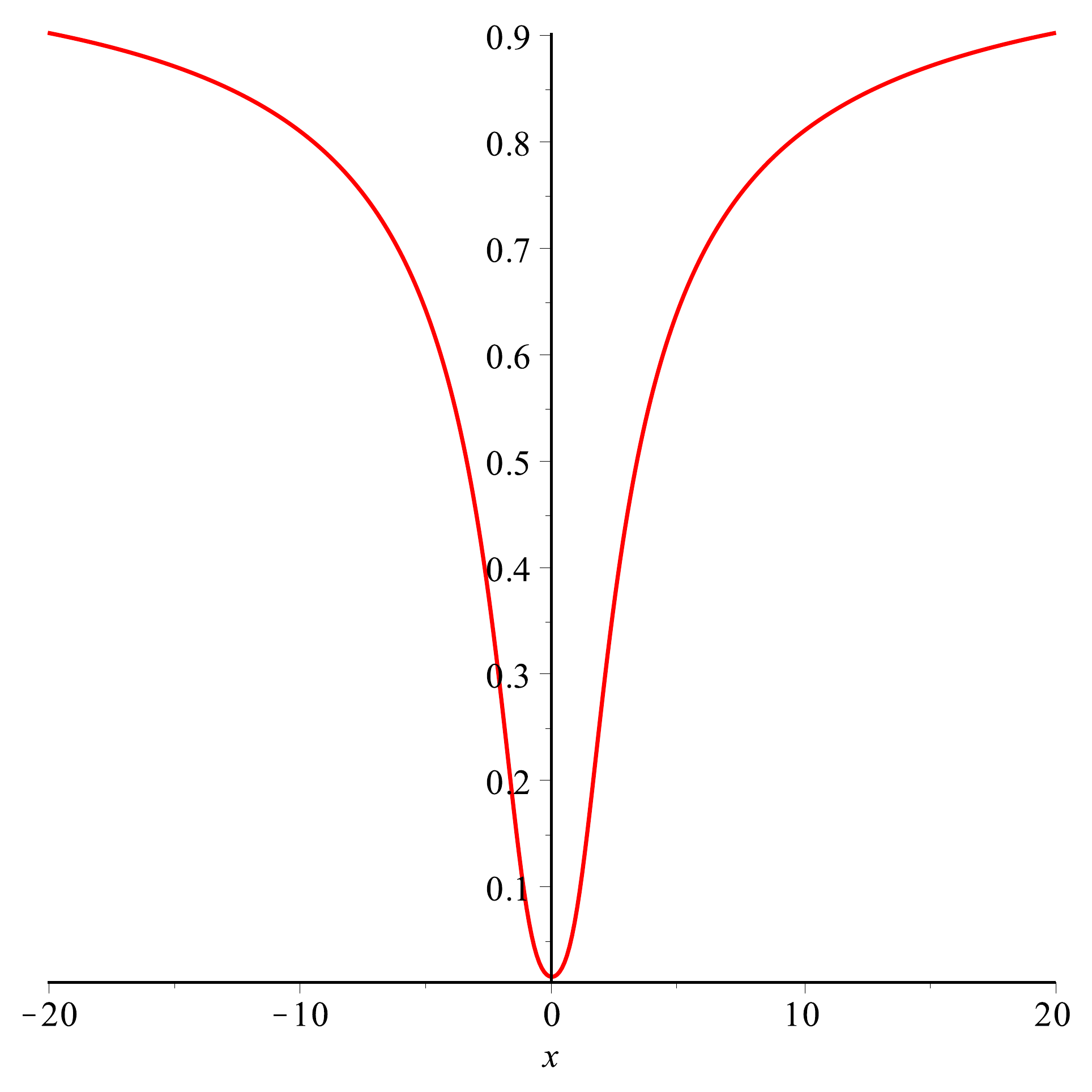}
\end{subfigure}
\begin{subfigure}[b]{0.49\linewidth}
\includegraphics[width=0.7\textwidth]{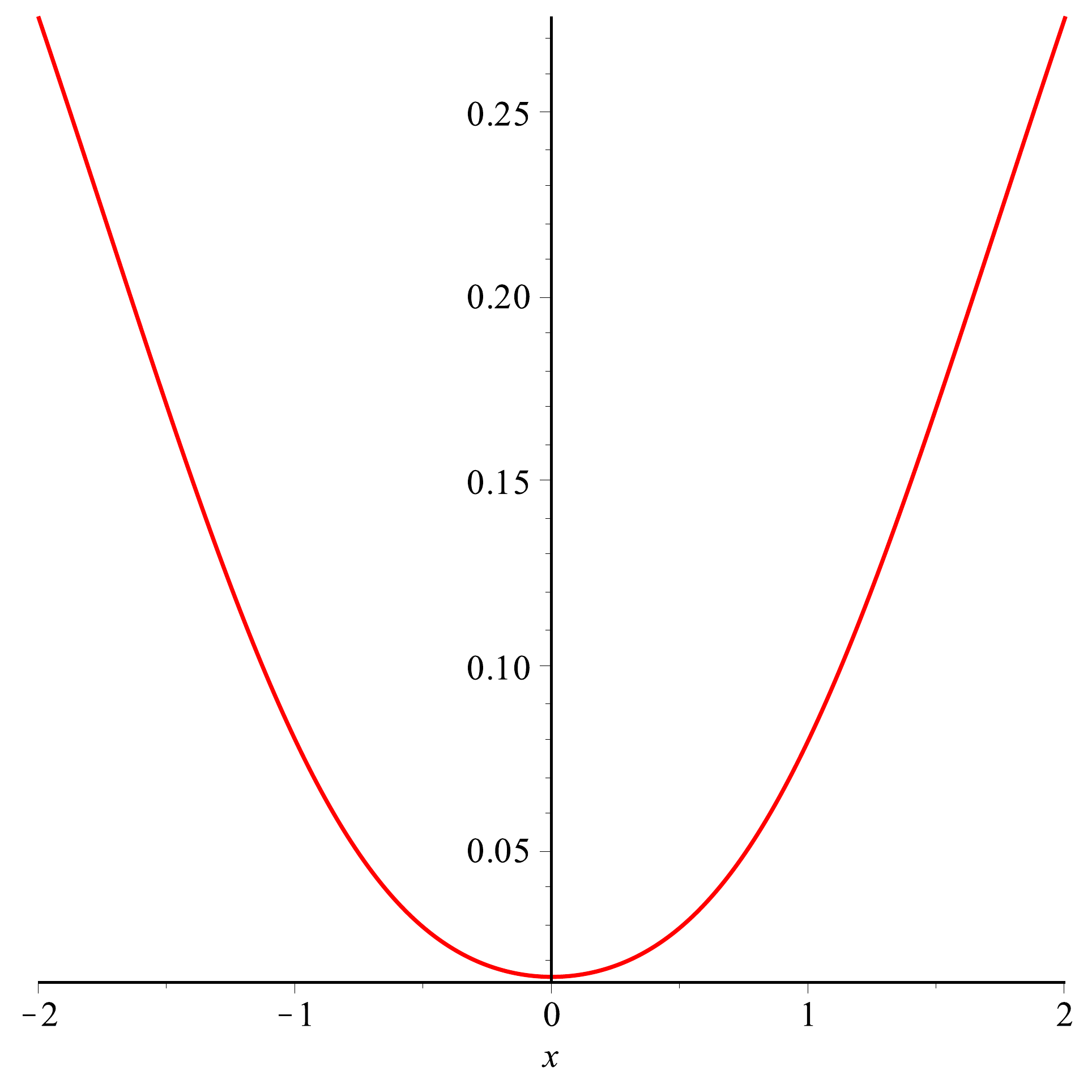}
\end{subfigure}
\caption{$H^2(\boldsymbol{x}):=
 -\frac12 \Tr\, \Phi^2$ restricted to the $x_3$-axis  $k=0.8$ }
 \label{HFIELD3} 
\end{figure}

\subsection{Zeros of the Higgs Field}
The position of the zeros of the Higgs field was an early subject of discussion. Based on the numerical evaluation of their ansatz Forg{\'a}cs,  Horv{\'a}th and Palla \cite[(21)] {forgacs_horvath_palla_82b} gave
these to be (in our units) $\pm kK(k)/2$. Two analytic works then followed. In \cite[\S6, \S9]{ors82}
an analytic expression that needed differentiation was obtained; this \lq very complicated\rq\ expression was evaluated numerically where the zero was found to be \lq very close\rq\ and \lq barely distinguishable\rq\ from $\pm kK(k)/2$. In \cite{bpp82} expansions for the zeros of the Higgs field were given for $k$
near $0$ and $1$ (the latter being situated near $\pm K(k)/2$). In \cite[\S6] {forgacs_horvath_palla_83b}
Forg{\'a}cs,  Horv{\'a}th and Palla expressed that their earlier result was to be viewed as a very good approximation of the zeros.

The position of the zeros of the Higgs field, which lie on the $x_1$ axis, may be found from
(\ref{HBPP}). We have already recorded that at the points of bitangency $\pm k K(k)/2$ and $\pm K(k)/2$
the numerator and denominators of (\ref{HBPP}) vanish, but by l'Hopital's rule for example one sees regular behaviour here. We may express the zero of the Higgs field then as the vanishing of the numerator of
(\ref{HBPP}), but discounting $\pm k K(k)/2$ and $\pm K(k)/2$. With
$$
Y=\exp(4\lambda_1(x)),\qquad W=
\sqrt{(K^2k^2 - 4 x^2)(K^2-4x^2)},
$$
(and $\lambda_1$ again defined in \ref{deflambda1}) we obtain the transcendental equation
\begin{equation}\label{higgszeroposition}
-{Y}^{2} \left( W+4\,x \right) -2\,{\frac {W \left( {K}^{2}{k}^{2}+4\,
EK-3\,{K}^{2}+8\,{x}^{2} \right) Y}{{K}^{2} \left( {k}^{2}-1 \right) }
}-W+4\,x=0.
\end{equation}
The vanishing of $W$ at the points of bitangency makes checking the vanishing of this equation there
straightforward. One finds for $k\in(0,1)$ a further point of vanishing in $(0,k K(k)/2)$. Figure \ref{higgszeroplot} illustrates this zero for which we have found no analytic expression.

\begin{figure}
\centering
\begin{subfigure}[b]{0.49\linewidth}
\includegraphics[width=1.0\textwidth]{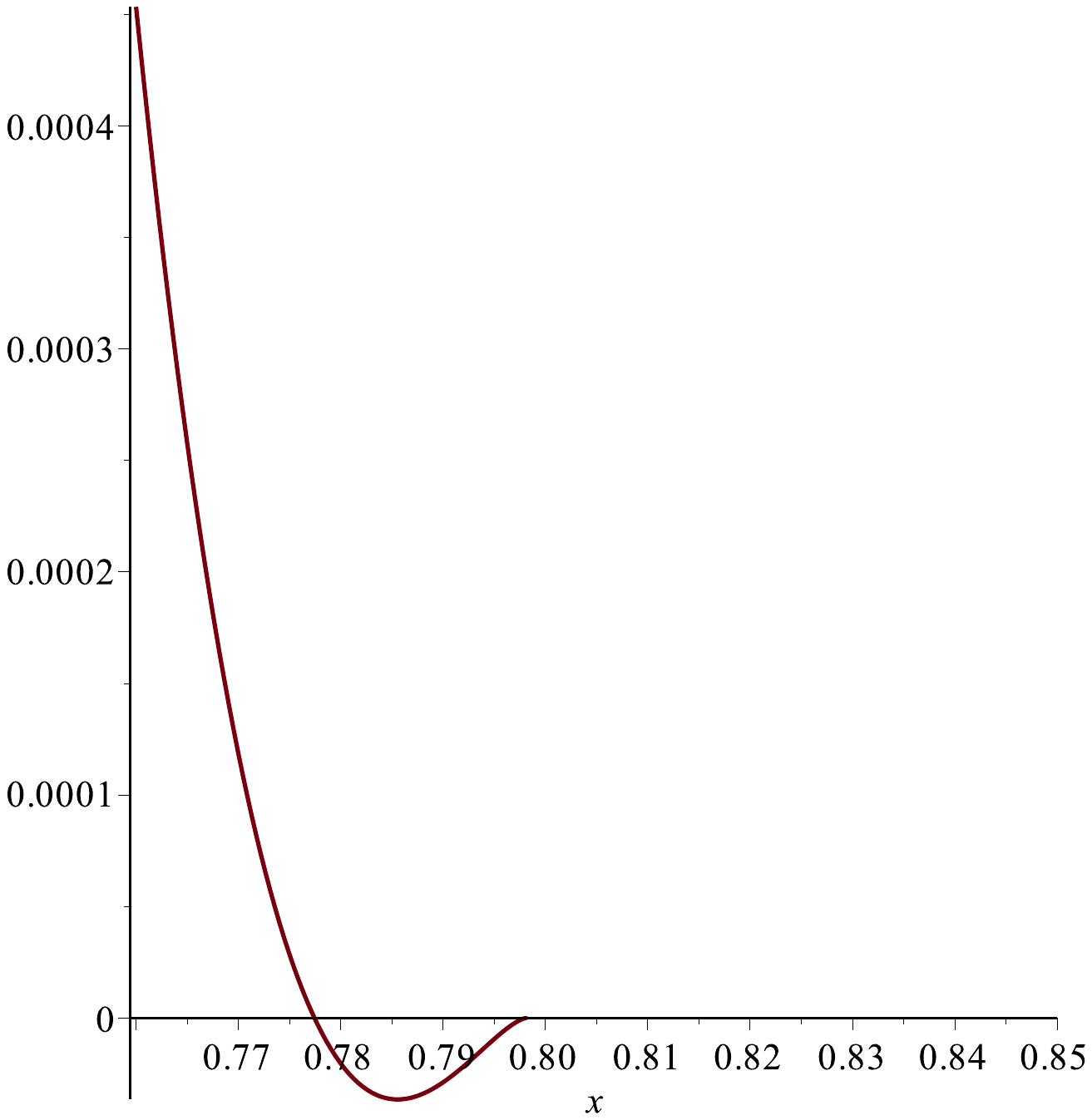}
\end{subfigure}
\begin{subfigure}[b]{0.49\linewidth}
\includegraphics[width=1.0\textwidth]{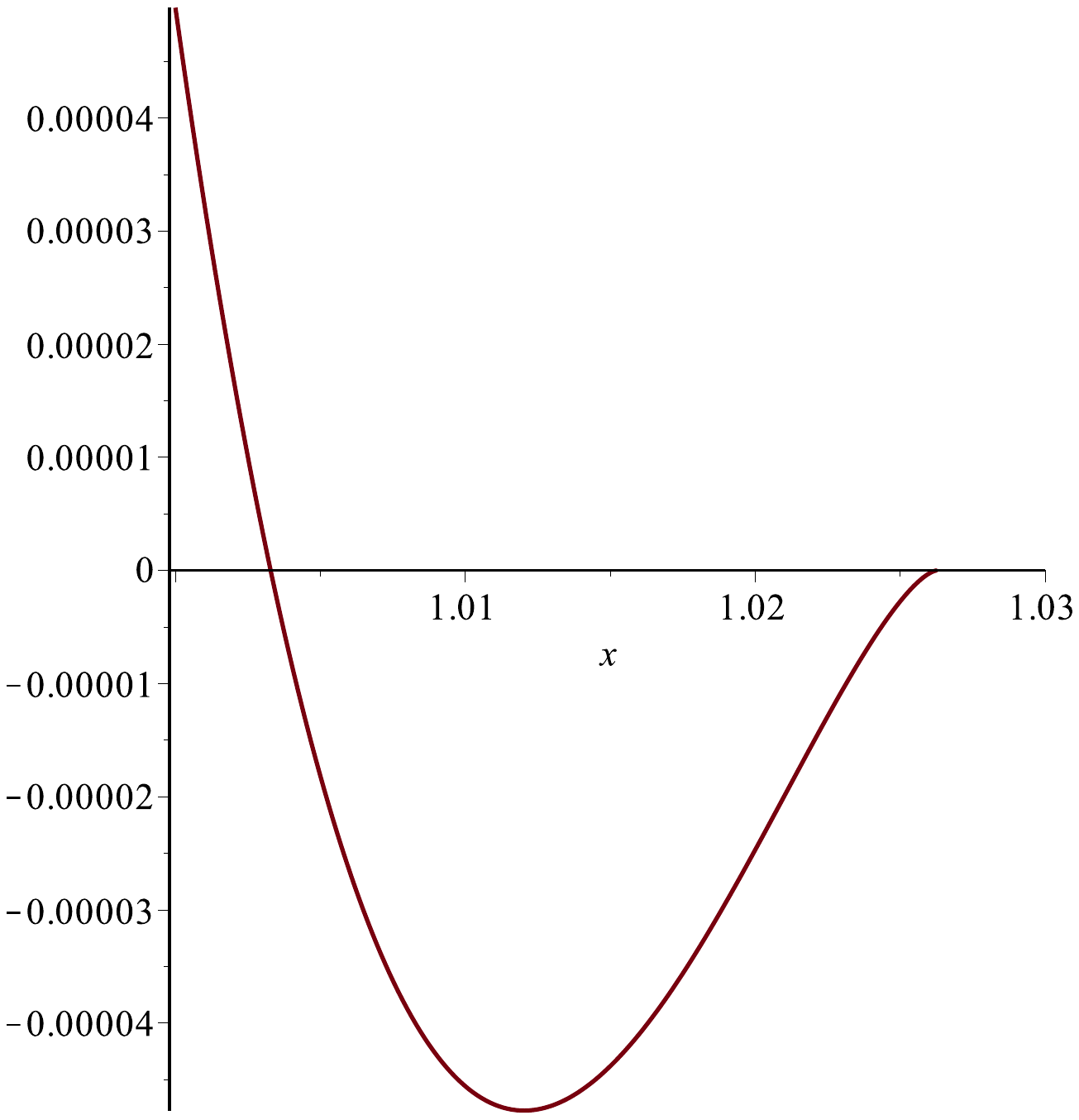}
\end{subfigure}
\caption{Zeros of equation (\ref{higgszeroposition}) for (left) $k=0.8$; $
k K(k)/2=0.798$ is the upper zero and (right) $k=0.9$; $
k K(k)/2=1.026$.  }
 \label{higgszeroplot} 
\end{figure}

\section{ The \texorpdfstring{$k=0$}{k0} limit}\label{sectionk0}
Early analytic studies of monopoles followed work of Manton \cite{manton78} assuming axial symmetry.
One of the surprises discovered was that an axially symmetric monopole corresponded to coincident
charges \cite{hor80}. Ward \cite{Ward1981b} developed the Atiyah-Ward ansatz in the monopole setting and
gave an ansatz that produced a charge $2$ axially symmetric monopole. Our aim in this section is to reproduce Ward's results as the $k\rightarrow 0$ limit of our own. We will first recall Ward's results, then obtain those as a limit and then conclude with a new result (Proposition \ref{higgs14k0x3plane}).

\subsection{Ward's Results}

Ward \cite{Ward1981b}  expresses the Higgs field (with our enumeration of axes) as
\begin{equation}
\Phi= \left( \begin{array}{cc} U&V \mathrm{e}^{-2\imath\psi}\\
W \mathrm{e}^{2\imath\psi}&-U       \end{array}  \right)
\end{equation}
where $x_1+ix_3=|x_1+i x_3|e\sp{i\psi}$. Then on the $x_2$-axis 
\begin{align}\begin{split}
&U=\frac{x_2}{x_2^2+c^2} - \tanh(2z)\\
&V=W=0\end{split} \label{ward1}
\end{align}
while on the  $(x_1,x_3)$-plane
\begin{align}\begin{split}
&U=0\\
&V=W=\frac{c^2 \cosh(2a)[ \sinh(2a)-2a \cosh(2a)  ]}{ a ( a^2 - c^2 \sinh^2(2a))} -1
\end{split}\label{ward2}
\end{align}
where $a=\sqrt{r^2-c^2}$. Ward found that only $c=\pi/4$ gave nonsingular solutions.

\subsection{The \texorpdfstring{$k=0$}{k0} limit of the Atiyah-Ward constraint and relevant quantities}
The $k=0$ limit of the Atiyah-Ward constraint yields the equation
\[\left[ (x_2+\imath x_1)\zeta^2+2x_3\zeta - x_2+\imath x_1\right]^2 + \frac{\pi^2}{16} (\zeta^2-1)^2=0 . \]
With $r^2=x_1^2+x_2^2+x_3^2$ and  $x_{\pm}= x_1\pm \imath x_2 $ 
the solutions are
\begin{align}
\zeta_{1,2}=\frac{ 4\imath x_3 \pm \sqrt{ \pi^2 -16r^2+8\imath\pi x_2  } } { 4 x_--\pi  }, \quad 
\zeta_{3,4}=\frac{ 4\imath x_3 \pm \sqrt{ \pi^2 -16r^2-8\imath\pi x_2  } } { 4 x_-+\pi  },
\label{zeta1}
\end{align}
where we again order the roots according to the conjugation conditions
\begin{equation*}
\zeta_3 = - \frac{1}{\overline{ \zeta}_1}, \quad \zeta_4 = - \frac{1}{\overline{ \zeta}_2}.
\end{equation*}
Noting that $K(0)=E(0)=\pi/2$ one has that the corresponding $\mu_i$ are 
\begin{align}
\begin{split}
\mu_{1,2}&=\left(\frac{\imath \pi}{4}-x_2-\imath x_1\right)\zeta_{1,2}-x_3=\mp \frac{\imath}{4}\sqrt{ \pi^2-16 r^2+8\imath \pi x_2 },\\ 
\mu_{3,4}&=\left(-\frac{\imath \pi}{4}-x_2-\imath x_1\right)\zeta_{3,4}-x_3+\frac{\imath\pi}{2}=
\mp \frac{\imath}{4}\sqrt{ \pi^2-16 r^2-8\imath \pi x_2 }+\frac{\imath\pi}{2}.
\end{split} \label{mu}
\end{align}
One sees that 
\[  \mu_1+\overline{\mu}_3=- \frac{\imath \pi}{2}, \quad   \mu_2+\overline{\mu}_4=- \frac{\imath \pi}{2}. \]
Upon introducing the notation
\begin{align}\label{rprm}
R_+=\sqrt{\pi^2-16 r^2+8\imath\pi x_2 }, \quad R_{-}=\overline{R}_{+}=\sqrt{\pi^2-16 r^2-8\imath\pi x_2 }
\end{align}
 it is convenient to  rewrite the above formulae as
\begin{align}\begin{split}
\zeta_{1,2}&=\frac{4\imath x_3\pm R_+}{4x_--\pi}, \quad
\zeta_{3,4}=\frac{4\imath x_3\pm R_-}{4x_-+\pi},\\
\mu_{1,2}&=\mp \frac{\imath}{4}R_+,\quad \mu_{3,4}=\mp \frac{\imath}{4}R_-+\frac{\imath \pi}{2}.
\end{split}\label{zetamu}
\end{align}

\subsection{The \texorpdfstring{$x_2$}{x2}-axis}
We have previously obtained (\ref{h2}) 
\begin{equation*}
H(0,x_2,0)^2 = \left(\tanh 2\lambda+\frac{4 x_2}{W_2}\right)^2 + 
\frac{(K {k'}^2-2E+K)^2}{ K^2 k^4 \cosh^2 2\lambda} 
\end{equation*}
with 
\[  \lambda = \mu_1-\frac{\imath \pi}{4}, \quad W_2=\sqrt{ (K^2+4x_2^2)(K^2 {k'}^2+4 x_2^2) }. \]
Now on the $x_2$ axis with $k=0$ 
then $
\zeta_{1,2,3,4} = \pm 1$ and $  \mu_{1,2} = \frac{\imath \pi}{4} - x_2 $. In the limit $k\to 0$ the
second term in $H(0,x_2,0)^2$ vanishes and we obtain
\[H(0,x_2,0)_{k=0}^2=\left( -\tanh (2x_2) +  \frac{16x_2}{16 x_2^2+ \pi^2 }\right)^2. \]
This coincides with Ward's result upon his use of $c=\pi/4$.

\subsection{The \texorpdfstring{$(x_1,x_3)$}{x1x3}-plane}
Next we obtain the Higgs field and on the $x_2=0$-plane. We remark that our Higgs field is a gauge transform of Ward's by the gauge transformation $\diag(
e\sp{-i\psi}, e\sp{i\psi})$. In Appendix \ref{higgs13k0app} we establish
\begin{proposition}\label{higgs13k0}
 { The Higgs field at $k=0$  and $x_2=0$ is function of $r=\sqrt{x_1^2+x_3^2}$  
in the whole $(x_1,x_3)$-plane given by}

\begin{align}\begin{split}
&H(x_1,0,x_3)=H(r)\\
&=-1- \frac{2\pi^2 \cos(\frac12\sqrt{\pi^2-16 r^2}  ) 
( 2\sin(\frac12\sqrt{\pi^2-16 r^2}  ) 
-\sqrt{\pi^2-16 r^2} \cos(\frac12\sqrt{\pi^2-16 r^2}  )  )   }
{ \sqrt{\pi^2-16 r^2}(\pi^2\cos^2 
(\frac12\sqrt{\pi^2-16 r^2} )-16r^2)  }
\end{split}\label{Hfield}
\end{align} 

\end{proposition}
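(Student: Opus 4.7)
The plan is to exploit axial symmetry about the $x_2$-axis. At $k=0$ the charge-$2$ monopole is axially symmetric with the $x_2$-axis as its axis of symmetry, so the gauge-invariant quantity $H^2=-\tfrac12 \Tr \Phi^2$ must be a function of $x_2$ and the cylindrical radius $\sqrt{x_1^2+x_3^2}$ only. Restricted to the plane $x_2=0$ this means $H(x_1,0,x_3)$ depends on $r$ alone. First I would verify this directly against our formulae: with $x_2=0$ one has $R_+=R_-=\sqrt{\pi^2-16r^2}$ in (\ref{rprm}), so the transcendents $\mu_i$ in (\ref{zetamu}) depend only on $r$, while an $SO(2)$ rotation in the $(x_1,x_3)$-plane by an angle $\psi$ rotates $(x_\pm,\zeta_i,\eta_i)$ so as to effect a constant gauge transformation $\diag(e^{-\imath\psi},e^{\imath\psi})$ on $\Phi$, as is already visible in Ward's form (\ref{ward2}). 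Since $H^2$ is gauge invariant, this confirms $H(x_1,0,x_3)=H(r)$.

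Second, with axial symmetry in hand the calculation reduces to the $x_3$-axis $(0,0,x_3)$ where $r=|x_3|$, and the formula
\begin{equation*}
H(0,0,x_3)= 1 - \frac{2K(K \cosh^2 2\mu_1 +E-K) } {K^2\cosh^2 2\mu_1 -K^2k^2-4x_3^2}+\frac{1}{W_3}\,\frac{2\imath K^2 x_3 \sinh 4\mu_1 }{K^2\cosh^2 2\mu_1 -K^2k^2-4x_3^2}
\end{equation*}
derived in Section \ref{sectionHiggsonaxes} is already available. I would then take the $k\to 0$ limit term by term using $K(0)=E(0)=\pi/2$, so that $E-K\to 0$; $W_3\to |x_3|\sqrt{\pi^2-16x_3^2}$; and $\mu_1\to -\tfrac{\imath}{4}\sqrt{\pi^2-16x_3^2}$ (which follows from (\ref{zetamu}) specialised to $x_1=x_2=0$). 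Writing $\xi=\tfrac12\sqrt{\pi^2-16 r^2}$ gives $\cosh 2\mu_1\to \cos\xi$ and $\sinh 4\mu_1\to -\imath\sin 2\xi = -2\imath\sin\xi\cos\xi$; substituting, clearing the common denominator and collecting terms against the factor $\pi^2\cos^2\xi - 16 r^2$ should produce the displayed form (\ref{Hfield}).

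Third, I would check that the resulting expression is manifestly analytic in $r$: both $\cos\xi$ and $(\sin\xi-\xi\cos\xi)/\xi$ are even analytic functions of $\xi$, so the right-hand side of (\ref{Hfield}) is real-analytic in $\pi^2-16r^2$, with no branch point at the bitangency radius $r=\pi/4$. This both verifies regularity there and, combined with the axial-symmetry argument, extends the identity from $r<\pi/4$ (where the reduction to the $x_3$-axis with $x_3>0$ is immediate) to the full $(x_1,x_3)$-plane by analytic continuation.

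The main obstacle I anticipate is the algebraic simplification in the second step: the $k\to 0$ limit of the $x_3$-axis formula produces several competing pieces involving $\cos\xi$, $\sin\xi$ and $\xi$ which must be telescoped into the compact numerator $\cos\xi\bigl(2\sin\xi -\sqrt{\pi^2-16r^2}\cos\xi\bigr)$. Keeping track of the sign of $\sinh 4\mu_1$ (which involves a choice of square root) and the orientation of $W_3$ across the bitangency point, so as to match both asymptotic behaviour $H\to 1$ at large $r$ and the value at the origin $H(0)$ consistent with (\ref{higgsorigin}) at $k=0$, will be the delicate part.
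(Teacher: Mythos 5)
Your argument is correct but follows a genuinely different route from the paper's. The paper (Appendix \ref{higgs13k0app}) never invokes axial symmetry as an input: it works directly in the $(x_1,x_3)$-plane, setting $x_2=0$ in the $k=0$ roots (\ref{zeta1}) so that $R_+=R_-=\sqrt{\pi^2-16r^2}$, ordering them to match region {\bf II} of the $x_1$-axis, and then evaluating Gram and $\text{Higgs}'_{1,2,3}$ of Theorem \ref{higgsthm} from scratch; the dependence on $r$ alone emerges as an output of that computation, which also serves as an independent check of the general machinery at a two-parameter family of points. You instead use the rotational invariance of the $k=0$ spectral curve about the $x_2$-axis (established in Appendix \ref{comparisoncurves}, and implying invariance of the gauge-invariant $H^2$) to collapse the problem to the $x_3$-axis, and then take $k\to0$ in the already-derived axis formula. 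This is more economical and makes the regularity at the bitangency radius $r=\pi/4$ transparent via analyticity in $\pi^2-16r^2$, at the cost of relying on the symmetry argument and on the interchange of the $k\to0$ limit with the axis evaluation (for which your use of (\ref{zetamu}) rather than the $k\to0$ limit of (\ref{deflambda3}) is the right move, since the latter degenerates). The one concrete point you must resolve is exactly the sign issue you flagged: with $\mu_1=-\tfrac{\imath}{4}\sqrt{\pi^2-16x_3^2}$ and $\xi=\tfrac12\sqrt{\pi^2-16r^2}$, the limit of the axis formula comes out as
\begin{equation*}
1-\frac{2\pi^2\cos^2\xi}{\pi^2\cos^2\xi-16r^2}+\frac{2\pi^2\sin\xi\cos\xi}{\xi\,(\pi^2\cos^2\xi-16r^2)},
\end{equation*}
which is precisely the \emph{negative} of (\ref{Hfield}); since $H$ is defined as a square root this overall sign is conventional and is fixed, as you propose, by matching $H\to1-1/r$ at infinity (note the paper itself is loose here, quoting $H(\pi/4)\sim0.121$ against a formula whose literal value is $-0.121$). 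The relative sign between the two nonconstant terms, however, is not conventional and does depend on the branch chosen for $\mu_1$, so the asymptotic and origin checks you list are genuinely needed rather than merely reassuring.
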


One can see that $H(0)=0$ accords with earlier formulae for depth of the well.

\subsection{The \texorpdfstring{$(x_1,x_2)$}{x1x2}-plane}
Next we obtain the  Higgs field and on the $x_3=0$-plane.  In Appendix \ref{higgs13k0app1} we establish
\begin{proposition}\label{higgs14k0x3plane}{ Higgs field on to the 
plane $x_3=0$ is given by  
\begin{align}\begin{split}
H(x_1,x_2,0)^2=  \sum_{j=1}^6 \mathcal{H}_j(x_1,x_2)    
\end{split}\label{HfieldxPlanex3}
\end{align}
with 
\begin{align}\begin{split}
\mathcal{H}_1 &= -\frac{S_+^2S_-^2}{R_+^2R_-^2 G^2} (\pi^2-16 x_2^2)\left( -R_+^4R_-^4+2048\pi^4x_2^2+16 \pi^2(R_+^2+R_-^2)   \right)\\
\mathcal{H}_2 &=\frac{8\pi S_+S_-}{G^2} \left( C_-S_+ (\pi + 4\imath x_2)R_-+C_+S_-(\pi-4\imath x_2)R_+  \right)\\
\mathcal{H}_3 &=-\frac{8\pi S_+}{R_+R_-G^2}(C_+C_-+1)(\pi+4\imath x_2)
(R_-C_-(\pi^2+16r^2)-4(\pi-4\imath x_2)\pi S_- )\\
\mathcal{H}_4 &=-\frac{\pi S_-}{R_-G^2}(\pi-4\imath x_2)\left\{  -16 \left[8(C_+C_-+1)(2C_--3C_+)
             -16S_+S_-(C_--C_+)  \right]r^2\right. \\
              &\hskip0.3cm\left.+8\pi^2C_+(C_-C_++1) +64\imath \pi x_2(C_-+C_+)(C_-C_+-S_-S_++1 )  
\right\}\\
\mathcal{H}_5 &=\frac12 \frac{ S_-S_+ R_-R_+   (C_-C_++1)\pi^2  }{ E_-^2+E_+^2  }
 \left\{ 64 \left[ 2S_-S_+(E_-^2-E_+^2)-(3C_-C_++1)E_-^2\right.\right.\\
&\left.\left.\hskip5.5cm+(2E_-^4-2E_-^2E_+^2-1  )E_+^2 +C_-C_+E_+^2     \right] r^2\right.  \\
&\left.\hskip0.3cm -32\imath(E_-^2E_+^2+1)(C_-+C_+)\pi x_2-4(E_-^2+E_+^2)(C_-C_++1)\pi^2 \right\}\\
\mathcal{H}_6 &=\frac{1}{(E_-^2+E_+^2) G} (C_-C_++1)(\pi^2+16r^2) 
\left\{  - 16\left[ 4C_--4C_++(E_-^2E_+^2-1)E_-^2 \right.\right.\\
&\hskip5cm \left.\left.   + 2(C_--2C_+)E_-^2E_+^2 -C_-C_+(E_-^2+E_+2)\right]r^2\right.\\
&\left.+16\imath (E_-^2E_+^2+1)(C_-+C_+)\pi x_2+(E_-^2+E_+^2)(C_-C_++1)\pi^2\right\}
 \end{split} 
\end{align}\label{Hfieldx3plane}
Here
\begin{align}
S_{\pm}=\sin\left( \frac12 R_{\pm}  \right), 
\quad C_{\pm}=\cos\left( \frac12 R_{\pm}  \right)\quad
E_{\pm}=\mathrm{exp}\left( \frac14 \imath R_{\pm}  \right)\label{notations1}
\end{align}
and $R_{\pm}=\sqrt{\pi^2-16 r^2 \pm \imath \pi x_2}$.
}
\end{proposition}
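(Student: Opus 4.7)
The strategy is to specialise the general formulae of Theorems \ref{normalizationthm} and \ref{higgsthm} to the plane $x_3=0$ and then take the limit $k\to 0$, using the explicit parametrisations of $\zeta_i$ and $\mu_i$ already derived in (\ref{zetamu}). Recall that $H^2 = -\tfrac12 \Tr\Phi^2 = -\tfrac12 \Tr\bigl(\mathcal{H}\mathcal{G}^{-1}\mathcal{H}\mathcal{G}^{-1}\bigr)$ with $\mathcal{G}$ and $\mathcal{H}$ the $2\times 2$ Gram and Higgs$'$ matrices of (\ref{defGH}). Both are constructed from the four building blocks $\text{Gram}$, $\text{Higgs}'_1$, $\text{Higgs}'_2$, $\text{Higgs}'_3$ of Theorem \ref{higgsthm}; each of these is a projector sandwich $\mu^\dagger(\cdots)\mu$ involving the matrices $\mathfrak{F}$, $\mathfrak{C}$, the expansion vectors $\boldsymbol{{\mathfrak{v}}}_{i,2}$, the monodromy data $\mathcal{M}=\diag(e^{2\mu_i})$, and the derivatives $\partial_i\zeta_k$ and $\partial_i\overline\mu_k$. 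The six-term decomposition in (\ref{HfieldxPlanex3}) is engineered so that each $\mathcal H_j$ collects one structurally distinct contribution to $\Tr(\mathcal H \mathcal G^{-1})^2$ after the $k\to 0$, $x_3=0$ reductions.

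First I would set $k=0$ and $x_3=0$ throughout. By (\ref{zetamu}) the four points are $\zeta_{1,2}=(\pm R_+)/(4x_--\pi)$, $\zeta_{3,4}=(\pm R_-)/(4x_-+\pi)$, and the transcendents collapse to $\mu_{1,2}=\mp\tfrac{\imath}{4}R_+$, $\mu_{3,4}=\mp\tfrac{\imath}{4}R_-+\tfrac{\imath\pi}{2}$. The quantities $S_\pm,C_\pm,E_\pm$ of (\ref{notations1}) are then precisely $\sin(\tfrac12 R_\pm)$, $\cos(\tfrac12 R_\pm)$ and $e^{\frac{\imath}{4}R_\pm}$, so every $e^{2\mu_i}$ and every $\cosh(2\mu_i)-e^{2\mu_j}$-style combination appearing in the formulae for Gram and Higgs$'$ reduces to rational expressions in $S_\pm$, $C_\pm$, $E_\pm$. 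Next I would compute the scalar building blocks: the denominator $D$ of Section 11.1 and the constants $\mathfrak{f}_k$, $\mathfrak{d}_k$, $\mathfrak{D}_k$ of Theorem \ref{constantmatrix} (which at $k=0$, $x_3=0$ simplify drastically since $R=2K^2{k'}^2-S^2-8x_3^2\to\pi^2/2-S^2$ and the $x_3$-dependent terms vanish), and the derivatives from Proposition \ref{propmuderivatives} with $S_1^2$ taken in its $k=0$ form; these supply the $\partial_i\mu$ appearing in $\text{Higgs}'_3$.

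At this stage the computation reduces to a long but direct algebraic simplification: assemble the $2\times 2$ matrices $\mathcal G$ and $\mathcal H$ as projector images, form $\mathcal H\mathcal G^{-1}$, take its squared trace, and rearrange into the six summands by separating terms according to which pair of indices $(j,k)$ the exponential combinations $1\pm e^{-2\overline\mu_l+2\overline\mu_k}$ couple. The group structure of $\mu$-conjugation established in Proposition \ref{propmu13mu24}, namely $\mu_1+\overline\mu_3=\mu_2+\overline\mu_4=-\imath\pi/2$, is what makes this repackaging into terms of geometric type $S_\pm S_\mp$, $C_\pm C_\mp$, $(E_-^2+E_+^2)$ possible. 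Consistency checks are that $\mathcal H_j$ must be real (guaranteed by the Hermiticity arguments at the end of Section 11.4), the $x_2\to 0$ restriction must recover (\ref{Hfield}) of Proposition \ref{higgs13k0}, and the $x_1\to 0$ restriction must recover Ward's $x_2$-axis formula (\ref{ward1}).

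The main obstacle is purely computational: tracking several dozen $\theta$-function and $\beta_1$ factors, together with the four-fold sums over $(j,k,l)$ indices with non-trivial $\mathfrak{F}_{jl}$, through the simplification to $S_\pm,C_\pm,E_\pm$ without error. Because the projector $\mu$ of (\ref{defmunorm}) is built from differences $e^{2\mu_1}-e^{2\mu_r}$, many apparently singular terms cancel only after the monodromy identity of Theorem \ref{smonodromy} is applied and one uses the ordering-dependent identifications $\overline\zeta_1=-1/\zeta_3$, $\overline\zeta_2=-1/\zeta_4$; keeping the correct branch of $R_\pm$ throughout (consistent with the ordering (\ref{ordering})) is the subtle point. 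Once these cancellations are performed, the six contributions $\mathcal H_1,\ldots,\mathcal H_6$ emerge as stated, with $\mathcal H_1$ collecting the $\text{Higgs}'_1$--$\text{Higgs}'_1$ diagonal piece, $\mathcal H_{2,3,4}$ the mixed $\text{Higgs}'_i$--$\text{Higgs}'_j$ cross-terms mediated by $\mathfrak F$, and $\mathcal H_{5,6}$ the $\text{Higgs}'_3$-type contributions involving the $\partial_i\overline\mu_k$ derivatives.
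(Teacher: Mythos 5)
Your proposal is correct and follows essentially the same route as the paper's own proof: specialise the general Gram/Higgs formulae of Theorems \ref{normalizationthm} and \ref{higgsthm} to $k=0$, $x_3=0$ using the explicit roots and transcendents of (\ref{zetamu}), form $-\tfrac12\Tr(\mathcal{H}\mathcal{G}^{-1})^2$, and verify by reduction to the coordinate axes. The only detail worth adding is that the axis check at $x_1\to 0$ is degenerate — the Gram determinant $G(0,x_2)$ vanishes identically, so that limit must be taken via a series expansion in $x_1$ rather than by direct substitution.
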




\section{Acknowledgements}
We wish to thank Paul Sutcliffe for numerous conversations and correspondence over the course of this project. We are particularly grateful to John Merlin for the notes he was able to provide 
on the works
\cite{merlin_ricketts_87,  hey_maerlin_ricketts_vaughn_williams, atiyah_hitchin_merlin_pottinger_ricketts},
to L\'aci Palla and Perter Forg\'acs for comments provided on their computations, and to Ed Corrigan for 
discussion.
We are indebted to Chris Eilbeck and Marko Puzza for computing input, and to Max Ruffert for producing a \lq movie\rq\ of two monopole collisions from our data.

We are particularly grateful for discussions and the funding of a number of research visits when 
certain partd of this investigation were done, including:
Boris Dubrovin, Scuola Internazionale Superiore di Studi Avanzati (SISSA),
Trieste;
Dmitri Korotkin, Department of Mathematics and Statistics of 
Concordia University, Montreal;
John Harnad, Centre de recherches math\'ematiques (CRM), Montreal;
Jutta Kunz of Carl von Ossietzky Universit\"at Oldenburg;
Claus L\"ammerzahl, ZARM, University of Bremen;
Atsushi Nakayashiki, Tsuda University, Tokyo;
 the 
Hanse-Wissenschaftskolleg Institute for Advanced Study in Delmenhorst
for their support (for VZE) over 10 months; and the Simons Center for Geometry and Physics for their
support (for HWB) for 3 months.

\appendix
\section{The Curve}\label{comparisoncurves}

\subsection{Properties of the curve}
We may see that  the monopoles are on the $x_1$ axis (for $k>0$ and at $k=0$ the
monopoles are axially symmetric about the $x_2$ axis in several ways. First
 our spectral curve takes the form
\begin{align*}
0&=\left(\eta+i \frac{K(k)}2\left[\zeta^2+1\right]\right)\left(\eta-i \frac{K(k)}2\left[\zeta^2+1\right]\right)
-K(k)^2k'^2 \zeta^2\\
&=\left(\eta+i \frac{K(k)}2\left[\zeta^2-1\right]\right)\left(\eta-i \frac{K(k)}2\left[\zeta^2-1\right]\right)
+K(k)^2k^2 \zeta^2.
\end{align*}
Now upon noting that $K(k)\sim \ln(4/k')$  as $k\rightarrow1$  then in this limit this behaves as
\begin{align*}
0&\sim \left(\eta+i \frac{K(k)}2\left[\zeta^2+1\right]\right)\left(\eta-i \frac{K(k)}2\left[\zeta^2+1\right]\right) 
\end{align*}
and so upon comparison with the 1-monopole curve we have two widely separated monopoles at $\pm  (\frac{K(k)}2,0,0)$,  on the $x_1$ axis. Alternately, set
$\tilde\eta =\eta-i {K(k)}\left[\zeta^2+1\right]/2$, which corresponds to a shift by 
${K(k)}/2$ along the $x_1$ axis,
 then the curve may be written as
$$0=\frac{\tilde\eta^2}{K(k)}+i(1+\zeta^2)\tilde\eta-K(k)k'^2 \zeta^2.$$
Now again letting $k\rightarrow1$ we find 
$(1+\zeta^2)\tilde\eta=0.$
If $\tilde\eta=0$ we see the second monopole is at the origin, and so both lie on the $x_1$ axis; if
$\zeta=\pm i$ then $\tilde\eta=\eta=2  \mathbf{y}\cdot  \boldsymbol{x}=2(x_2\mp i x_3)$ corresponds to a line
parallel to the  $x_1$ axis through the point $\tilde\eta=\eta$. Finally, we can read off the axis of symmetry from the curve
as follows. If $k=0$ we have
$$0=\left(\eta+i \frac{K(0)}2\left[\zeta^2-1\right]\right)\left(\eta-i \frac{K(0)}2\left[\zeta^2-1\right]\right)
$$
where $K(0)=\pi/2$, and this corresponds to the (complex) points $(0,\pm i K(0)/2, 0)$. A rotation around the $x_2$ axis leaves this invariant.

\subsection{Comparison of Notation}\label{comparingcurves}
 The charge $2$ spectral curve has appeared with many differing conventions. We give here transformations between these to enable comparison with existing results.

\begin{align*}
\mathcal{C}&&0&=
\eta^2+\frac{K^2}{4}(\zeta^4+2(k^2-k'^2)\zeta\sp2+1)
=|\eta -L(\zeta)| &\textbf{Here, Ercolani-Sinha}\\
\mathcal{C}_{FHP}&&0&=y^2+A\left( x^2+\frac{1}{x^2}  \right)+B
&\textbf{Forg\'acs, Horv\'ath \&Palla} \\
\mathcal{C}_{ORS}&&0&=\gamma^2+1-\frac{\varepsilon^2}4\left(\zeta_B-\zeta_B\sp{-1}\right)^2&
\textbf{ O'Raifeartaigh et. al., Brown}\\
\mathcal{C}_{H83}&&w^2 &= r_1z^3 - r_2z^2 - r_1 z, \qquad r_{1,2}\in \mathbb{R},\ r_1\ge0 &\textbf{Hurtubise 83}\\
\mathcal{C}_{H85}&&w^2 &= \kappa (z^2-s^2)(s^2 z^2-1),  \qquad\kappa>0,\ s\in[0,1) &\textbf{Hurtubise 85}\\
\mathcal{C}_{AH}&&\eta_{AH}^2&=K^2\, \zeta_{AH}\left( kk'[\zeta_{AH}^2-1]+(k^2-k'^2) \zeta_{AH}\right)&
\textbf{Atiyah, Hitchin}
\end{align*}

The reality properties of the spectral curve are preserved by the transformations of  $T\PP\sp1$ 
\begin{equation}\label{moebius}
R:=\begin{pmatrix} p&q\\-\bar q&\bar p\end{pmatrix}\in PSU(2),
\qquad
\zeta\rightarrow  \zeta_R:=\dfrac{\bar p\, \zeta-\bar q}{q\, \zeta+p},
\qquad  \eta\rightarrow \eta_R:= \dfrac{\eta}{(q\, \zeta+p)\sp2},
\end{equation}
which correspond to a spatial rotation. In particular with $p=e\sp{-i\theta/2}$, $q=0$, we may rotate
$(\zeta,\eta)\rightarrow e\sp{i\theta}(\zeta,\eta)$, and so the relative signs between $\eta^2$ and the
highest powers ($\zeta^4$ or $\zeta^3$) may be chosen so that the leading coefficients are positive.
This, for example, enabled Hurtubise \cite{hurtubise_83} to choose his coefficient $r_1\ge0$. 

We will describe our procedure for establishing the needed birational correspondence between our curve
and the other curves with the curve of Forg\'acs, Horv\'ath  and Palla as the example.

\subsubsection{Transforming between $\mathcal{C}$  and $\mathcal{C}_{FHP}$} 
First we record that the parameters of the Forg\'acs, Horv\'ath  and Palla curve are related by.
$$\sqrt{B} =\frac{1}{\sqrt{1+\beta}}K\left(\sqrt{ \frac{2\beta}{1+\beta}  }   \right), \quad     A=\frac12\beta B, \quad \beta\in [-1,0].$$

To see that a transformation (\ref{moebius}) between the curves is possible
we compute the Klein absolute invariants 
of both curves, $j_{\mathcal{C}}$,  $j_{\mathcal{C}_{ORS}}$ and find the  allowed relations between the parameters $k$ and $\beta$ given their prescribed domains.
Here, 
\begin{equation}
j_{\mathcal{C}}=\frac{256(k^4-k^2+1)^3}{k^4(k^2-1)^2}, \quad 
j_{\mathcal{C}_{FHP}}=\frac{64(3\beta^2+1)^3}{\beta^2 (\beta^2-1)^2}.
\end{equation}
The equation $j_{\mathcal{C}}=j_{\mathcal{C}_{FHP}}$
admits a number of solutions, and among them exists one suitable, namely, 
\begin{equation}
\beta =- \frac{k^2}{1+{k'}^2}, \qquad   \beta\in [-1,0] \leftrightarrow  [0,1]\ni k^2 \label{beta_k},
\end{equation}    
and equivalently, 
\begin{equation}
k^2=\frac{2\beta}{\beta-1}\label{k_beta}.
\end{equation}
The relation (\ref{beta_k}) enables us to connect the parameters of  the curves,
\begin{equation}
B= \frac{1+{k'}^2}{2} K(k)^2, \quad\quad K(k)=\sqrt{\frac{2B}{1+{k'}^2}}.
\end{equation}

To find the explicit transformation (\ref{moebius}) one can equate the fractional linear transformations
of branch points of $ \mathcal{C}$ with the branch points  of $\mathcal{C}_{ORS}$. 
There are 24 variants of such homogeneous equations with respect to the parameters of the
fractional linear transformations but only four of them admit non-zero solution. An appropriate transformation is given by
\begin{align}\begin{split}
\zeta&=\frac{\imath x -1}{\imath x +1}, \quad 
\eta=  \frac{2\imath xy}{(\imath-x)^2},  \\ 
x&=\imath \frac{\zeta+1}{\zeta-1}, \quad  y =
\frac{2 \eta}{\zeta^2-1 }.
\end{split}\label{Moebius}
\end{align}
Note, that transformation (\ref{Moebius}) maps  the 4 complex branch points 
$\pm k'\pm\imath k$ of the curve $\mathcal{C}$ to four real branch points 
of the curve $\mathcal{C}_{FHP}$ as follows
\begin{align}\begin{split}
&\pm (k'+\imath k)\; 
\longleftrightarrow\;\pm \sqrt{  \frac{-1+\sqrt{1-\beta^2}}{\beta}  }\equiv \pm \frac{1+k'}{k} ,\\
&\pm (-k'+\imath k)\; 
\longleftrightarrow\;\pm \imath \sqrt{  \frac{1+\sqrt{1-\beta^2}}{\beta}  }\equiv \pm \frac{1-k'}{k}.
\end{split}
\end{align}

\subsubsection{Transforming between $\mathcal{C}$  and $\mathcal{C}_{ORS}$} 
We note that Brown's curve \cite{brown83} is the same as O'Raifeartaigh, Rouhani and  Singh's \cite{ors82}
with $d\leftrightarrow\varepsilon$. Now 
\begin{align*}
j_{\mathcal{C}}=\frac{256({k'}^4-{k'}^2+1)^3}{{k'}^4(1-{k'}^2 )^2 } = \frac{256(\varepsilon^4+\varepsilon^2+1)^3}{\varepsilon^4 (\varepsilon^2+1)^2}=j_{\mathcal{C}_{ORS}} 
\end{align*}
has solutions
$\varepsilon^2=  -{k'}^2,\; -k^2,\; - \frac{1}{{k'}^2},\; -\frac{1}{k^2},\;\; \frac{k^2}{{k'}^2},\;\; \frac{{k'}^2}{ k^2}$.
Take
\begin{equation} \varepsilon= - \frac{\imath }{ k} . \label{epsilon_k} \end{equation}
Following the method outlined  above  we find the transformations
\begin{align}\begin{split}
\zeta&=\frac{1-z}{1+z},\quad \eta= \frac{2wz K(k)}{(1+z)^2} ,\\
z&=\frac{1-\zeta}{1+\zeta},\quad w=\frac{2\eta}{K(k)(\zeta^2-1)  },
\end{split}
\end{align}
(where the parameters $k$ and $\epsilon$ are related according (\ref{epsilon_k})) take
 the branch points $ \pm k' \pm \imath k  $  of $\mathcal{C}$ to the branch points of 
$\mathcal{C}_{ORS}$, $( \pm1 \pm \sqrt{\varepsilon^2+1})/ \varepsilon $.

\subsubsection{Transforming between $\mathcal{C}$  and $\mathcal{C}_{H85}$} 

Consider the transformation
$$
\frac1{\sqrt{2}}
\begin{pmatrix} u& \bar u\\ -u & \bar u
\end{pmatrix}, \qquad u=e\sp{i\pi/4};
\qquad
\zeta=-\frac{i z+1}{iz-1}, \quad \eta=\frac{2 i w}{(iz -1)^2}.
$$
This transforms our curve  (\ref{curve})
into
$$
w^2=\frac{K^2}{4}\left( k z^2 +2z +k\right)\left( k z^2 -2z +k\right).
$$
The substitution
$$
k=\frac{2s}{1+s^2},\qquad k'=\frac{1-s^2}{1+s^2},\qquad s=\frac{k}{1+k'},
$$
then yields
$$
w^2 =\frac{K^2}{ (1+s^2)^2} \ (z^2-s^2)(s^2 z^2-1)
$$
which is Hurtubise's curve upon the identification $\kappa={K^2}/{ (1+s^2)^2}$. When we substitute this transformation into our Atiyah-Ward constraint 
$\eta=(x_2-\imath x_1)-2\zeta x_3-(x_2+\imath x_1)\zeta^2$
we obtain
$$w=  -(x_1-i x_3) +2 x_2 z + (x_1+i x_3)z^2
$$
which corresponds to an interchange $x_2\leftrightarrow x_3$ with Hurtubise's conventions. With the
above identifications we find that our curve of bitangency (\ref{bitangency1})
is that of Hurtubise \cite{hurtubise85a} whose  curve of bitangency is
$$\frac{\kappa (s^4-1)^2}{4}= (s^2-1)^2 x_1^2 +(s^2+1)^2 x_2^2, \qquad x_3=0.$$

\subsubsection{Transforming between $\mathcal{C}$  and $\mathcal{C}_{AH}$} 
Then with the rotation $R=\frac1{\sqrt{1+|a|^2}}\begin{pmatrix} -a&1\\-1&-\bar a\end{pmatrix}$ we have
\begin{align*}
\sqrt{\frac{{\bar a}{\bar b}}{ab}}& (\zeta-a)(\zeta+\frac1{\bar a})(\zeta-b)(\zeta+\frac1{\bar b})
\times \frac1{(\zeta-a)^4}
\rightarrow 
 \frac{(b-a) (1+a{\bar b})}{|a||b| (1+|a|^2)^2}\,
\,
\zeta_R\left(\zeta_R+ \frac{1+{\bar a}b}{b-a}\right)\left(\zeta_R+\frac{{\bar a}-\bar{b}}{1+a\bar{b}} \right).
\end{align*}
We have previously ordered the four roots of our curve as
$a=e\sp{i\alpha}=k'+i k$, $-1/{\bar a}=-k'-i k$, $b=e\sp{-i\alpha}=k'-i k$,  $-1/{\bar b}=-k'+i k$,  
where $2\alpha\ge0$ is the angle between the lines. Then $k=\sin\alpha\ge0$ and so
$$|a-b|=2k,\quad |1+{\bar a}b|=2k', \quad
\frac{1+{\bar a}b}{b-a}=i\,\frac{e\sp{-i\alpha}}{\tan\alpha},\quad 
\frac{{\bar a}-\bar{b}}{1+a\bar{b}}=-i\,{e\sp{-i\alpha}}\,{\tan\alpha},
$$
giving
$$-ie\sp{i\alpha}\zeta_R\,kk'\left( \zeta_R^2-ie\sp{-i\alpha}(\tan\alpha-\cot\alpha)\zeta_R+e\sp{-2i\alpha}\right).
$$
Thus the further rotation
$(\zeta_R,\eta_R)\rightarrow(\zeta',\eta'):= i e\sp{i\alpha}(\zeta_R,\eta_R)$
gives
$$
0=\eta^2+\frac{K^2}{4}(\zeta^4+2(k^2-k'^2)\zeta\sp2+1)\rightarrow 0=
\eta'^2-\frac{k k' K^2}{4}\zeta'(\zeta'- k'/k)(\zeta'+k/k'),
$$
where
$ \cot\alpha=k'/k$,  $ \tan\alpha=k/k'$.
Thus we have the Atiyah-Hitchin curve upon the identifications $(\zeta_{AH},\eta_{AH})=
(\zeta', 2\eta')$.

We note that the sign of the term $(\tan\alpha-\cot\alpha)\zeta'=(k^2-k'^2)/kk'\, \zeta'$ depends on whether
$0\le\alpha<\pi/4$ or $\pi/2<\alpha\le\pi/2$. When $\alpha>\pi/4$ the angle between the lines is obtuse.

\subsection{Calculation of Periods and the Abel Map}\label{appendixperiods}
We determine here the periods of the holomorphic differential $\boldsymbol{v}$, the meromorphic
differential $\gamma_\infty$ and express the Abel map in terms of incomplete elliptic integrals.

\subsubsection{The periods of  $\boldsymbol{v}$}
Upon using the substitutions $\zeta=e\sp{i\theta}$ and
$k\sin u=\sin\theta$ on sheet $1$,
$$\frac{d\zeta}{\eta}=i\frac{2}{K}\frac{d\zeta}{\sqrt{(\zeta\sp2-
e\sp{2 i\alpha})(\zeta\sp2- e\sp{-2 i\alpha})}} =
\frac{-1}{k K}\frac{d\theta}{\sqrt{1-\frac{1}{k^2}\sin\sp2\theta}}=
\frac{-1}{K}\frac{du}{\sqrt{1-{k^2}\sin\sp2 u}}.$$ Thus
$$\oint_\mathfrak{a}\frac{d\zeta}{\eta}=
\frac{-2}{K}\int_\alpha\sp{-\alpha}
\frac{d\theta}{\sqrt{{k^2}-\sin\sp2\theta}}=
\frac{4}{K}\int_0\sp{\pi/2}\frac{du}{\sqrt{1-{k^2}\sin\sp2
u}}=4.$$ Similarly (with $\zeta=\exp i(w+\pi/2)$, $\sin w=k'\sin u$)
$$\frac{d\zeta}{\eta}
=i\frac{2}{K}\frac{d\zeta}{\zeta}\frac1{2\sqrt{\sin\sp2 w- k'^2} }
=
\frac{-1}{K}\frac{d\zeta}{\zeta}\frac1{\sqrt{k'^2-\sin\sp2 w} }
=\frac{-i}{K} \frac{dw}
{\sqrt{k'^2-\sin\sp2 w} }=
\frac{-i}{K}\frac{du}{\sqrt{1-{k'^2}\sin\sp2 u}}.$$ 
One determines the sign of the square root in the second equality, for when $\zeta=i$, $w=0$ and $\eta=-iK k'$ on sheet 1 (and crossing no cuts). We then have
\begin{align*}
\oint_\mathfrak{b}\frac{d\zeta}{\eta}&=
-\frac{2i}{K}\int\sp{\alpha-\pi/2}_{\pi/2-\alpha}
\frac{dw}{\sqrt{{k'^2}-\sin\sp2 w}}=
\frac{4i}{K}\int\sp{\pi/2-\alpha}_{0}
\frac{dw}{\sqrt{{k'^2}-\sin\sp2 w}}\\ &=
\frac{4i}{K}\int_0\sp{\pi/2}\frac{du}{\sqrt{1-{k'^2}\sin\sp2
u}}=\frac{4\im\,\mathbf{K}'(k)}{\mathbf{K}(k)}=4\tau,
\end{align*}
where $\tau={\im\,\mathbf{K}'(k)}/{\mathbf{K}(k)}$ is the period matrix.

\subsubsection{The periods of  $\gamma_\infty$} Consider
\begin{equation*}
\gamma_{\infty}(P) = \frac{K^2}{4\eta}\left( \zeta^2 - \frac{2E-K}{K}  \right)\mathrm{d} \,\zeta.
\end{equation*}
Then with the earlier substitutions and again on sheet 1,
\begin{align*}
\frac{K^2}{4} \frac{\zeta^2\mathrm{d} \,\zeta}{\eta}
&=
i\frac{K}{2}\frac{\zeta\sp2 d\zeta}{\sqrt{(\zeta\sp2-
e\sp{2 i\alpha})(\zeta\sp2- e\sp{-2 i\alpha})}} 
=
\frac{-K}{4k }\frac{ e^{2i\theta }\,d\theta}{\sqrt{1-\frac{1}{k^2}\sin\sp2\theta}}\\
&=
\frac{-K}{4}\frac{(1-2k^2\sin^2u +2ik\sin u \sqrt{1-{k^2}\sin\sp2 u}  )du}{\sqrt{1-{k^2}\sin\sp2 u}}.
\intertext{Then}
\frac{K^2}{4} \oint_\mathfrak{a}\frac{\zeta^2\mathrm{d} \,\zeta}{\eta}
&=
K \int_0\sp{\pi/2}
\frac{( 2\left[ 1-{k^2}\sin\sp2 u\right] -1  )du}{\sqrt{1-{k^2}\sin\sp2 u}}=
K\left( 2E-K\right)
\end{align*}
and so $ \oint_\mathfrak{a}\gamma_\infty=0$. Now 

\begin{align*}
\frac{K^2}{4} \oint_\mathfrak{b} &\left(\zeta^2- \frac{2E-K}{K}\right)\frac{d\zeta}{\eta}
=
-i \frac{K}{2} \int\sp{\alpha-\pi/2}_{\pi/2-\alpha}\frac{\left[ - e\sp{2iw} -(2E-K)/K \right] dw }{\sqrt{{k'^2}-\sin\sp2 w}}
\\ &=
-i K  \int_0\sp{\pi/2-\alpha}\frac{\left[  \cos {2w} +(2E-K)/K \right] dw }{\sqrt{{k'^2}-\sin\sp2 w}}
\\ &=
-i K \int_0\sp{\pi/2}
\frac{( 2\left[1-{k'^{2}}\sin\sp2 u\right] -1  +(2E-K)/K)du}{\sqrt{1-{k'^{2}}\sin\sp2 u}}
\\
&=
-i 2\left( K E'+E'K-KK'\right)=-i\pi =2i\pi \boldsymbol{U},
\end{align*}
where we have use Legendre's relation.

\subsubsection{The Abel map}
We may express the Abel maps $\phi$ and $\alpha$ in terms of incomplete elliptic integrals.
Denote
\begin{align*}
a= k'+\imath k, \quad b = k'-\imath k,  \quad c = \frac{2E-K}{K}\in \mathbb{R}.
\end{align*}
Here $a=P_0$ is the base point of the Abel map
$\phi(\zeta)= \frac14\int_{a}^{\zeta}\frac{\mathrm{d}\zeta}{\eta}$.
One can represent $\phi(\zeta)$ in terms of  Jacobian incomplete integrals
\begin{align}
\phi(\zeta) = \frac{\imath}{2 K b}\left( F\left( \frac{\zeta}{a},\frac{a}{b} \right)
 - F\left( 1,\frac{a}{b} \right)   \right),\label{abelmap}
\end{align}
and this representation accords with the relations of \cite{Braden2010d},
\begin{align*}
\phi(\infty_1)= \frac{1+\tau}{4}=-\phi(\infty_2),\quad\phi(0_1)= \frac{1-\tau}{4}=-\phi(0_2).
\end{align*}
We note the relations
\begin{align*}
\frac{a}{b}&=\frac{1+ik/k'}{1-ik/k'}=\frac{1-\dot k'}{1+\dot k'},\quad \dot k'=-\frac{ik}{k'},\ \ \dot k=\frac{1}{k'},
\quad
K\left(\frac{a}{b} \right)=\frac{1+\dot k'}2 K(\dot k)=\frac{b}2 \left( K'(k) +i K(k)\right).
\end{align*}
Using these our normalized Abel map now reads
\begin{align}
\alpha(\zeta(\boldsymbol{x}))= \phi(\zeta(\boldsymbol{x}))- \phi(\infty_1) = \frac{\imath}{2 K} \frac{1}{b}
 F\left(\frac{\zeta(\boldsymbol{x})}{a},\frac{a}{b}\right) - \frac{\tau}{2}  . \label{abelmap1}
\end{align} 

\subsubsection{Numerical Computation}
 We have shown that
\begin{equation*}
\gamma_{\infty}(P) = \frac{K^2}{4\eta}\left( \zeta^2 - \frac{2E-K}{K}  \right)\mathrm{d} \,\zeta.
\end{equation*}
Then
\begin{equation}
\int_{P_0}\sp{P}\gamma_{\infty}(P') =
\frac{\imath K}{2} \int_{a}^{\zeta(\boldsymbol{x})}\frac{(z^2-c) \mathrm{d}z}{\sqrt{(z^2-a^2)(z^2-b^2)}}=
\frac14\left( \frac{\theta_1' \left( \alpha(\zeta(\boldsymbol{x})) \right)  } {\theta_1 \left(  \alpha(\zeta(\boldsymbol{x})) \right)}
+
\frac{\theta_3' \left(  \alpha(\zeta(\boldsymbol{x})) \right)  } {\theta_3 \left(  \alpha(\zeta(\boldsymbol{x})) \right)   } \right)-\frac{\imath \pi}{4}
\end{equation}
with $ \alpha(\zeta(\boldsymbol{x}))$ being the normalized Abel map (\ref{abelmap1}). Now
\begin{align}\begin{split}
& \int_{a}^{\zeta(\boldsymbol{x})}\frac{(z^2-c) \mathrm{d}z}{\sqrt{(z^2-a^2)(z^2-b^2)}}\\
&=b E \left(\frac{a}{b}\right)- \frac{b^2-c}{b}K\left(\frac{a}{b}\right)
-bE\left(\frac{\zeta(\boldsymbol{x})}{a},\frac{a}{b}   \right)+ \frac{b^2-c}{b}F\left(\frac{\zeta(\boldsymbol{x})}{a},\frac{a}{b}   \right) \end{split}\label{mu1}
\end{align}
where $K(\kappa),E(\kappa)$ and  $F(z,\kappa),E(z,\kappa)$ with  $\kappa=a/b$ are standard complete 
and incomplete elliptic integrals of the first and second kind respectively. 
We remark  that some care is needed in keeping track of the sheets when using this representation.

\subsection{Proof of Lemma \ref{parcurve}}\label{appparcuve} To show that  
(\ref{curve}) is parameterized by
\begin{equation*}
\zeta=-i\,\frac{\theta_2[P]\theta_4[P]}{\theta_1[P]\theta_3[P]},\quad
\eta=  
\frac{i\pi\,\theta_3\theta_2^2\theta_4^2}4\,\frac{\theta_3[2 P]}{\theta_1[P]^2\theta_3[P]^2}.
\end{equation*}
we use (with $\theta_i:=\theta_i(0)$)
\begin{align*}
k=\frac{ \theta_2^2}{ \theta_3^2},\quad k'=\frac{ \theta_4^2}{ \theta_3^2},\quad 
 \theta_2^4+\theta_4^4&=\theta_3^4, \quad {K}=\frac{\pi}2\,  \theta_3^2 \\
 \theta_1[P] ^2\theta_3[P]^2+\theta_2[P]^2\theta_4[P]^2 &=
 \theta_2^2\theta_4\,  \theta_4[2 P] ,\\
 2\,\theta_1[P]\theta_2[P]\theta_3[P]\theta_4[P] &= 
 \theta_2 \theta_3\theta_4\, \theta_1[2 P],\\
 \theta_3(x+y)\theta_3(x-y)\,\theta_4^2&=
  \theta_4^2(x) \theta_3^2(y)- \theta_1^2(x) \theta_2^2(y)=
   \theta_3^2(x) \theta_4^2(y)- \theta_2^2(x) \theta_1^2(y);
\end{align*}
the latter with $x=0$ and $y=2\alpha(P)$. Then
\begin{align*}
\zeta^4+2(k^2-k'^2)\zeta^2+1&=
\frac{\theta_3^4\left( \theta_1[P] ^2\theta_3[P]^2+\theta_2[P]^2\theta_4[P]^2  \right)^2
+4(\theta_4^4-\theta_3^4)\,\theta_1[P]^2\theta_2[P]^2\theta_3[P]^2\theta_4[P]^2}
{\theta_3^4\, \theta_1[P]^4\theta_3[P]^4}\\
&=
\frac{\theta_3^4\theta_2^4\theta_4^2\,  \theta_4[2 P] ^2 -\theta_2^6 \theta_3^2\theta_4^2\,
\theta_1[2 P]^2}
{\theta_3^4\, \theta_1[P]^4\theta_3[P]^4}\\
&=
\frac{\theta_2^4\theta_3^2\theta_4^2\left(\theta_3^2\,  \theta_4[2 P] ^2 -\theta_2^2\,
\theta_1[2 P]^2\right)}
{\theta_3^4\, \theta_1[P]^4\theta_3[P]^4}\\
&=
\frac{\theta_2^4\theta_4^4}{\theta_3^2}\,\frac{\theta_3[2 P]^2}{\theta_1[P]^4\theta_3[P]^4}
\end{align*}
so establishing the lemma.

\subsection{Proof of Lemma \ref{intgammainf}}\label{proofintgammainf}
We will show that both sides have the same periodicity, zeros, poles and residues.
Both sides of (\ref{gammatheta}) are constant under shifts of $\mathfrak{a}$-periods. 
A shift in the theta functions under a $\mathfrak{b}$-period is immediate giving  $-\im\pi 
=2\pi i \boldsymbol{U}$ (using $\boldsymbol{U}=-1/2$). That is the right-hand side shifts by ($2\pi \im$ times) the Ercolani-Sinha vector. But 
\[ \oint_{\mathfrak{b}}\gamma_{\infty}=2\pi \im \boldsymbol{U}\]
is fundamental to its definition and follows from a bilinear relation
\cite{Braden2010d}.
Now observe that
\[ d\ln\theta_1\left(\int_{P_\ast}\sp{P}\boldsymbol{v}\right)=\boldsymbol{v}(P)\,
\frac{ \theta_1'\left( \int_{P_\ast}\sp{P}\boldsymbol{v}\right)}{\theta_1\left(\int_{P_\ast}\sp{P}\boldsymbol{v}\right)}
\]
and that for a local parameter $t$ at $P_\ast$,
\[ \boldsymbol{v}=\frac{d\zeta}{4\eta}=[\mu(P_\ast)+O(t)]dt,\quad
 \int_{P_\ast}\sp{P}\boldsymbol{v}=\mu(P_\ast)t+O(t^2).\]
 Thus
 \[
 d\ln\theta_1\left(\int_{P_\ast}\sp{P}\boldsymbol{v}\right)=dt\,[\mu(P_\ast)+O(t)]\left(
 \frac{ \theta_1'\left( 0\right)}{\theta_1\left(\mu(P_\ast)t\right)}+O(t^2)\right)=
 \frac{dt}{t}\, [1+O(t)]\]
 has a simple pole at $P_\ast$. Thus expanding the right-hand side  of (\ref{gammatheta}) at  $\infty_1$,  for example,  gives
\[\frac14\left\{ \frac{1}{-t/(4\rho_1)}+
\frac{\theta_1'[\infty_1-\infty_2]}{\theta_1[\infty_1-\infty_2]}+\ldots\right\}=
\tilde\nu_1-\frac{\rho_1}{t}+\ldots
\]
where
$$\tilde\nu_1=\frac14\frac{\theta_1'[\infty_1-\infty_2]}{\theta_1[\infty_1-\infty_2]}=-\frac{\im\pi}4.
$$
We know  that in the vicinity of $\infty_i$ the left-hand side looks like,
\[\int_{P_0}^P\gamma_{\infty}(P')=\tilde\nu_i-\frac{\eta}{\zeta}.\]
Thus both sides have the same pole and residue and similarly at $\infty_2$. Finally at $P_0$ both sides vanish so establishing the lemma. As remarked after the lemma, this identifying of the vanishing relates the choice of contours on each side of the identity.

\section{Theta function Identitities}\label{thetafunctidentapp}

\subsection{Weierstrass Trisecant \texorpdfstring{$\theta$}{theta}-formulae}
In this appendix we describe the Weierstrass Trisecant $\theta$-formulae that we implemented in the course of calculation. Following \cite{wei885}[p47] we introduce 3 vectors $\boldsymbol{\alpha}=(\alpha_1,\alpha_2,\alpha_3,\alpha_4 )$,  $\boldsymbol{\alpha}'=(\alpha_1',\alpha_2',\alpha_3',\alpha_4' )$,  $\boldsymbol{\alpha}''=(\alpha_1'',\alpha_2'',\alpha_3'',\alpha_4'' )$ that transformed one to another by the rule:
\[  T:  \boldsymbol{\alpha}^T \rightarrow  {\boldsymbol{\alpha}'}^T=\frac12\left( \begin{array}{r} \alpha_1+\alpha_2+\alpha_3+\alpha_4\\ \alpha_1+\alpha_2-\alpha_3-\alpha_4\\\alpha_1-\alpha_2+\alpha_3-\alpha_4\\
-\alpha_1+\alpha_2+\alpha_3-\alpha_4
\end{array}\right)\]
which leads to the relations:  $T(\boldsymbol{\alpha}) =\boldsymbol{\alpha}'$, $T(\boldsymbol{\alpha}') =\boldsymbol{\alpha}''$, $T(\boldsymbol{\alpha}'') =\boldsymbol{\alpha}$.

The following 6 Weierstrass Trisecant $\theta$-relations are valid

\begin{align*}
\begin{split}
\theta_1(\alpha_1)\theta_1(\alpha_2)\theta_1(\alpha_3)\theta_1(\alpha_4)
+\theta_1(\alpha_1')\theta_1(\alpha_2')\theta_1(\alpha_3')\theta_1(\alpha_4')\\
+\theta_1(\alpha_1'')\theta_1(\alpha_2'')\theta_1(\alpha_3'')\theta_1(\alpha_4'')=0,
\end{split} \hskip1cm(W1)
\end{align*}

\begin{align*}
\begin{split}
\theta_i(\alpha_1)\theta_i(\alpha_2)\theta_1(\alpha_3)\theta_1(\alpha_4)
+\theta_i(\alpha_1')\theta_i(\alpha_2')\theta_1(\alpha_3')\theta_1(\alpha_4')\\
+\theta_i(\alpha_1'')\theta_i(\alpha_2'')\theta_1(\alpha_3'')\theta_1(\alpha_4'')=0,
\end{split}\hskip1cm(W2)
\end{align*}

\begin{align*}
\begin{split}
\theta_i(\alpha_1)\theta_j(\alpha_2)\theta_k(\alpha_3)\theta_1(\alpha_4)
+\theta_i(\alpha_1')\theta_j(\alpha_2')\theta_k(\alpha_3')\theta_1(\alpha_4')\\
+\theta_i(\alpha_1'')\theta_j(\alpha_2'')\theta_k(\alpha_3'')\theta_1(\alpha_4'')=0,
\end{split}\hskip1cm(W3)
\end{align*}

\begin{align*}
\begin{split}
\theta_i(\alpha_1)\theta_i(\alpha_2)\theta_j(\alpha_3)\theta_j(\alpha_4)
-\theta_i(\alpha_1')\theta_i(\alpha_2')\theta_j(\alpha_3')\theta_j(\alpha_4')\\
\pm\theta_k(\alpha_1'')\theta_k(\alpha_2'')\theta_1(\alpha_3'')\theta_1(\alpha_4'')=0,
\end{split}\hskip1cm(W4)
\end{align*}

\begin{align*}
\begin{split}
\theta_2(\alpha_1)\theta_2(\alpha_2)\theta_2(\alpha_3)\theta_2(\alpha_4)
-\theta_3(\alpha_1')\theta_3(\alpha_2')\theta_3(\alpha_3')\theta_3(\alpha_4')\\
+\theta_4(\alpha_1'')\theta_4(\alpha_2'')\theta_4(\alpha_3'')\theta_4(\alpha_4'')=0,
\end{split}\hskip1cm(W5)
\end{align*}

\begin{align*}
\begin{split}
\theta_i(\alpha_1)\theta_i(\alpha_2)\theta_i(\alpha_3)\theta_i(\alpha_4)
-\theta_i(\alpha_1')\theta_i(\alpha_2')\theta_i(\alpha_3')\theta_i(\alpha_4')\\
\pm\theta_1(\alpha_1'')\theta_1(\alpha_2'')\theta_1(\alpha_3'')\theta_1(\alpha_4'')=0.
\end{split}\hskip1cm(W6)
\end{align*}

We present here particular cases of these relations that used in our development.
From (W2) it follows that:
\begin{proposition}Let $\alpha_i,\alpha_j,\alpha_k$ , $i,j,k\in \{1,2,3,4\}$   be three arbitrary complex numbers. Then for $n=2,3,4$ and $z\in\mathbb{C}$ the following trisecant addition formula is valid
\begin{align}
\begin{split}
&\theta_1(\alpha_i)\theta_1(\alpha_j-\alpha_k)\theta_n\left(\alpha_i\pm\frac{z}{2}\right)\theta_n\left(\alpha_j+\alpha_k\pm\frac{z}{2}\right)\\
&\quad\quad +\theta_1(\alpha_k)\theta_1(\alpha_i-\alpha_j)\theta_n\left(\alpha_k\pm\frac{z}{2}\right)\theta_n\left(\alpha_i+\alpha_j\pm\frac{z}{2}\right)\\
&\quad\quad\quad\quad +\theta_1(\alpha_j)\theta_1(\alpha_k-\alpha_i)\theta_n\left(\alpha_j\pm\frac{z}{2}\right)\theta_n\left(\alpha_i+\alpha_k\pm\frac{z}{2}\right)=0.
\end{split}
\end{align}
\end{proposition}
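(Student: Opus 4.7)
The plan is to derive this proposition as a direct specialisation of the Weierstrass trisecant identity (W2) with $i=n$, by choosing the four arguments $(\alpha_1,\alpha_2,\alpha_3,\alpha_4)$ in (W2) so that applying the transformation $T$ twice produces exactly the three summands that appear on the left-hand side of the proposition.

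Concretely, write $a:=\alpha_i$, $b:=\alpha_j$, $c:=\alpha_k$ for brevity and treat the $+z/2$ case first. I would set
\begin{equation*}
\alpha_1=a+\tfrac{z}{2},\qquad \alpha_2=b+c+\tfrac{z}{2},\qquad \alpha_3=a,\qquad \alpha_4=b-c,
\end{equation*}
so the first summand $\theta_n(\alpha_1)\theta_n(\alpha_2)\theta_1(\alpha_3)\theta_1(\alpha_4)$ of (W2) is literally the first summand of the proposition. Next I would compute $\boldsymbol{\alpha}'=T\boldsymbol{\alpha}$ entrywise and obtain $\alpha_1'=a+b+\tfrac{z}{2}$, $\alpha_2'=c+\tfrac{z}{2}$, $\alpha_3'=a-b$, $\alpha_4'=c$, so that the second summand matches the second term of the proposition exactly. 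A further application of $T$ then gives $\alpha_1''=a+c+\tfrac{z}{2}$, $\alpha_2''=b+\tfrac{z}{2}$, $\alpha_3''=a-c$, $\alpha_4''=-b$, which reproduces the third term of the proposition up to two factors $\theta_1(-b)=-\theta_1(b)$ and $\theta_1(a-c)=-\theta_1(c-a)$ whose signs cancel. Thus (W2) with this choice of arguments is precisely the claimed identity for the $+z/2$ branch.

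For the $-z/2$ branch I would simply replace $z$ by $-z$ in the preceding substitution (equivalently, start from $\alpha_1=a-z/2$, $\alpha_2=b+c-z/2$, $\alpha_3=a$, $\alpha_4=b-c$); the $z$-independent coefficients $\theta_1(\cdot)$ are unchanged and the $\theta_n$ arguments pick up the required sign in the shift. Any of the six permutations of $\{i,j,k\}$ merely relabels the same triple.

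There is no substantive obstacle here: the proof is essentially an $8$-component linear-algebra check of the action of $T$ and $T^{2}$ on one chosen $\boldsymbol{\alpha}$. The only place that needs care is the sign bookkeeping in the third summand, where the odd parity of $\theta_1$ produces two minus signs that must be verified to cancel; I have done this above. The remainder is arithmetic. Hence the proposition follows immediately once the substitution is made and the two $T$-iterates are computed.
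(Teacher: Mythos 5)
Your proposal is correct and is exactly the route the paper takes: the paper simply asserts the proposition "From (W2) it follows", and your substitution $\boldsymbol{\alpha}=(\alpha_i+\tfrac{z}{2},\,\alpha_j+\alpha_k+\tfrac{z}{2},\,\alpha_i,\,\alpha_j-\alpha_k)$ with the two $T$-iterates, together with the cancelling pair of signs from the oddness of $\theta_1$ in the third summand, supplies precisely the verification the paper leaves implicit. Nothing is missing.
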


From (W3) it follows that:
\begin{proposition}Let $\alpha_1,\alpha_2,\alpha_3,\alpha_4 $ be four arbitrary complex numbers. Then for any $i,j=1,\ldots 4$ and arbitrary $z\in \mathbb{C}$ the following trisecant addition formula is valid
\begin{align}\begin{split}
&\theta_1(\alpha_i)\theta_2\left(\alpha_j\pm\frac{z}{2}\right)
\theta_3(\alpha_j)\theta_4\left(\alpha_i
\pm\frac{z}{2}\right)
-\theta_1(\alpha_j)\theta_2\left(\alpha_i\pm\frac{z}{2}\right)\theta_3(\alpha_i) \theta_4\left(\alpha_j\pm\frac{z}{2}\right)\\
&\qquad\qquad\qquad=\theta_1(\alpha_j-\alpha_i)\theta_2\left(\frac{z}{2}\right)\theta_3(0)
\theta_4\left(\alpha_i+\alpha_j
\pm\frac{z}{2}\right) .\end{split}
\end{align}
\end{proposition}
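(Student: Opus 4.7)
The plan is to derive this identity as a direct specialization of the Weierstrass trisecant formula (W3) recalled in Appendix \ref{thetafunctidentapp}. Recall that (W3) asserts that for any $4$-tuple $\boldsymbol{\alpha}=(\alpha_1,\alpha_2,\alpha_3,\alpha_4)$ together with its $T$-iterates $\boldsymbol{\alpha}'=T(\boldsymbol{\alpha})$ and $\boldsymbol{\alpha}''=T(\boldsymbol{\alpha}')$, the sum of three theta-quadruples built with a fixed index pattern $(i,j,k,1)$ (with $\{i,j,k\}=\{2,3,4\}$) vanishes identically. The target identity is a three-term relation in four-theta products, so it must arise from (W3) for a suitable specialization of $\boldsymbol{\alpha}$; the job reduces to finding the right input tuple and index assignment.

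The concrete substitution I would use is
\begin{equation*}
(\alpha_1,\alpha_2,\alpha_3,\alpha_4)=(\alpha_j+z/2,\ \alpha_i+z/2,\ -\alpha_j,\ -\alpha_i),
\end{equation*}
for which direct calculation of $T$ (using the formula given in Appendix \ref{thetafunctidentapp}) yields
\begin{equation*}
\boldsymbol{\alpha}'=(z/2,\ \alpha_i+\alpha_j+z/2,\ 0,\ \alpha_i-\alpha_j),\qquad
\boldsymbol{\alpha}''=(\alpha_i+z/2,\ \alpha_j+z/2,\ -\alpha_i,\ -\alpha_j).
\end{equation*}
Reading (W3) with the index pattern $(2,4,3,1)$, the three summands are, respectively, products of factors that match, after invoking the parities $\theta_1(-x)=-\theta_1(x)$ and $\theta_j(-x)=\theta_j(x)$ for $j=2,3,4$, exactly the two products on the left-hand side of the proposition and the single product on the right-hand side: the $\boldsymbol{\alpha}$-term supplies (up to sign) $\theta_1(\alpha_i)\theta_2(\alpha_j+z/2)\theta_3(\alpha_j)\theta_4(\alpha_i+z/2)$, the $\boldsymbol{\alpha}''$-term supplies the partner product with $i,j$ swapped, and the $\boldsymbol{\alpha}'$-term produces $\theta_1(\alpha_j-\alpha_i)\theta_2(z/2)\theta_3(0)\theta_4(\alpha_i+\alpha_j+z/2)$ as required, with the factor $\theta_1(\alpha_i-\alpha_j)=-\theta_1(\alpha_j-\alpha_i)$ providing the sign that relocates this term to the opposite side of the equation.

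For the $\pm$ ambiguity, only the $+z/2$ case needs to be proved. Substituting $z\mapsto -z$ and using $\theta_2(-z/2)=\theta_2(z/2)$ together with the evenness of $\theta_{3,4}$ in the remaining $\pm z/2$ slots converts the $+z/2$ identity directly into the $-z/2$ identity, so the two statements are equivalent.

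The only genuine obstacle I anticipate is sign bookkeeping. The three $\theta_1$ factors $\theta_1(-\alpha_i)$, $\theta_1(-\alpha_j)$, and $\theta_1(\alpha_i-\alpha_j)$ each contribute a minus sign, and these must be balanced against the uniform signs in (W3) so that the final identity emerges with the correct orientation; if a chosen substitution yields the wrong overall sign, a reshuffle of the input tuple or a permutation of the index assignment $(i,j,k,1)$ restores it. A convenient sanity check sits at $z=0$, where the identity degenerates into the classical Jacobi four-term addition formula relating $\theta_1(\alpha_j-\alpha_i)\theta_4(\alpha_i+\alpha_j)\theta_2\theta_3$ to $\theta_1(\alpha_i)\theta_4(\alpha_i)\theta_2(\alpha_j)\theta_3(\alpha_j)$ and its $i\leftrightarrow j$ swap (a companion to the identity $\theta_1(x+y)\theta_4(x-y)\theta_2\theta_3=\theta_1(x)\theta_4(x)\theta_2(y)\theta_3(y)+\theta_1(y)\theta_4(y)\theta_2(x)\theta_3(x)$ already used in the main text); fixing the sign there pins it down for all $z$ by analyticity.
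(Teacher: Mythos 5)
Your strategy coincides with the paper's own, which offers no more justification than the words ``From (W3) it follows that''; supplying the explicit specialization is therefore the right thing to do, and your input tuple $\boldsymbol{\alpha}=(\alpha_j+z/2,\,\alpha_i+z/2,\,-\alpha_j,\,-\alpha_i)$ with index pattern $(2,4,3,1)$ is indeed the correct choice, with $\boldsymbol{\alpha}'$ computed correctly. The reduction of the $-z/2$ case to the $+z/2$ case via $z\mapsto-z$ and the evenness of $\theta_2$ is also fine. But the sign bookkeeping, which you identified as the only risk and then asserted works out ``exactly,'' does not close up as claimed. First, a slip: the fourth component of $\boldsymbol{\alpha}''$ is
\[
\alpha_4''=\tfrac12\bigl(-\alpha_1'+\alpha_2'+\alpha_3'-\alpha_4'\bigr)
=\tfrac12\Bigl(-\tfrac{z}{2}+\alpha_i+\alpha_j+\tfrac{z}{2}+0-(\alpha_i-\alpha_j)\Bigr)=\alpha_j,
\]
not $-\alpha_j$. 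Second, and more substantively, tracking the parities through the three summands of (W3) with the corrected $\boldsymbol{\alpha}''$ gives $-A+\theta_1(\alpha_i-\alpha_j)\theta_2(\tfrac{z}{2})\theta_3(0)\theta_4(\alpha_i+\alpha_j+\tfrac{z}{2})+B=0$, where $A$ and $B$ denote the first and second products on the left of the proposition; that is,
\[
A-B=\theta_1(\alpha_i-\alpha_j)\,\theta_2\bigl(\tfrac{z}{2}\bigr)\theta_3(0)\,\theta_4\bigl(\alpha_i+\alpha_j+\tfrac{z}{2}\bigr),
\]
which is the stated right-hand side with the \emph{opposite} sign.

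No reshuffling of the tuple or of the index assignment can repair this, because the $z=0$ sanity check you propose (but do not carry out) already shows the printed statement is internally inconsistent: specializing the addition formula $\theta_1(x+y)\theta_4(x-y)\theta_2\theta_3=\theta_1(x)\theta_4(x)\theta_2(y)\theta_3(y)+\theta_1(y)\theta_4(y)\theta_2(x)\theta_3(x)$ quoted in the main text at $x=\alpha_i$, $y=-\alpha_j$ produces $\theta_1(\alpha_i-\alpha_j)\theta_4(\alpha_i+\alpha_j)\theta_2(0)\theta_3(0)$ on the right, not $\theta_1(\alpha_j-\alpha_i)(\cdots)$; a lowest-order-in-$q$ expansion confirms the same. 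So your method is sound and in fact detects a sign typo in the proposition as printed (the correct identity carries $\theta_1(\alpha_i-\alpha_j)$, or equivalently the two left-hand products should be interchanged), but your claim that the three summands reproduce the statement verbatim is not correct, and running the $z=0$ check rather than deferring it would have caught both issues.
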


From (W4) it follows that:
\begin{proposition}
Let $\alpha_i,\alpha_j,\alpha_k$ , $i,j,k\in \{1,2,3,4\}$   be three arbitrary complex numbers.
Then for $p=3,q=4$  or $p=4,q=3$ and $z\in\mathbb{C}$ the following trisecant addition formula is valid
\begin{align}
\begin{split}
&\theta_p(\alpha_i)\theta_p(\alpha_j+\alpha_k)\theta_q\left(\alpha_k\pm\frac{z}{2}\right)\theta_q\left(\alpha_i+\alpha_j-\frac{z}{2}\right)\\  &\qquad\qquad -
\theta_p(\alpha_k)\theta_p(\alpha_i+\alpha_j)\theta_q\left(\alpha_i\pm\frac{z}{2}\right)\theta_q\left(\alpha_j+\alpha_k\pm\frac{z}{2}\right)\\
&\qquad\qquad\qquad\qquad=\theta_2\left(\frac{z}{2}\right)\theta_2\left(\alpha_i+\alpha_j+\alpha_k\pm \frac{z}{2}\right)
\theta_1(\alpha_i-\alpha_k)\theta_1(\alpha_j).
\end{split}
\end{align}
Also for arbitrary four complex numbers $\alpha_1,\ldots,\alpha_4$ , $i,j,k\in \{1,2,3,4\}$ the following trisecant addition formula is valid

\begin{align}\begin{split}
&\theta_p(\alpha_4+\alpha_2)\theta_p(\alpha_1+\alpha_3)\theta_q\left(\alpha_2+\alpha_1\pm\frac{z}{2}\right)
\theta_q\left(\alpha_4+\alpha_3\pm\frac{z}{2}\right)\\
&\qquad -
\theta_p(\alpha_4+\alpha_3)\theta_p(\alpha_2+\alpha_1)\theta_q\left(\alpha_4+\alpha_2\pm\frac{z}{2}\right)
\theta_q\left(\alpha_3+\alpha_1\pm\frac{z}{2}\right)\\
&\qquad\qquad - \theta_2\left(\frac{z}{2}\right)
\theta_2\left(\alpha_1+\alpha_2+\alpha_3+\alpha_4\pm\frac{z}{2}\right)
\theta_1(\alpha_2-\alpha_3)\theta_1(\alpha_1-\alpha_4)=0.
\end{split}\end{align}
\end{proposition}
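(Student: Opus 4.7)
Both trisecant addition formulae are specializations of the Weierstrass identity (W4) above, which for any vector $\boldsymbol{\alpha}=(\alpha_1,\alpha_2,\alpha_3,\alpha_4)$ together with its iterates $\boldsymbol{\alpha}'=T(\boldsymbol{\alpha})$ and $\boldsymbol{\alpha}''=T(\boldsymbol{\alpha}')$ records an alternating sum of three products of four Jacobi theta values. My plan, for each of the two target identities, is to exhibit a substitution of $\boldsymbol{\alpha}$ whose orbit under $T$ reproduces the three products appearing on the left- and right-hand sides of the stated formula, and then to simplify using the evenness $\theta_{p}(-x)=\theta_{p}(x)$ for $p=2,3,4$, the oddness $\theta_{1}(-x)=-\theta_{1}(x)$, and the zero-argument constants $\theta_{p}(0)$.

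For the first identity, with the indices in (W4) chosen as $(i,j)=(p,q)$, I would set
\[
\alpha_1=\alpha_i,\qquad \alpha_2=\alpha_j+\alpha_k,\qquad \alpha_3=\alpha_k+\tfrac{z}{2},\qquad \alpha_4=\alpha_i+\alpha_j-\tfrac{z}{2}.
\]
The first product $\theta_p(\alpha_1)\theta_p(\alpha_2)\theta_q(\alpha_3)\theta_q(\alpha_4)$ is then exactly the first term of the target. Because $\alpha_1+\alpha_2=\alpha_3+\alpha_4=\alpha_i+\alpha_j+\alpha_k$, the component $\alpha_2'$ of $T(\boldsymbol{\alpha})$ vanishes; the remaining components are $\alpha_i+\alpha_j+\alpha_k$, $-\alpha_j+z/2$, and $-\alpha_i+\alpha_k+z/2$. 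After computing $\boldsymbol{\alpha}''$ by a second application of $T$ and using the parity relations to restore the standard form, the second product of (W4) should reproduce (up to an overall sign) the second term of the target, while the third product collapses via a $\theta(0)$ factor to the right-hand side $\theta_2(z/2)\theta_2(\alpha_i+\alpha_j+\alpha_k+z/2)\theta_1(\alpha_i-\alpha_k)\theta_1(\alpha_j)$. The companion $\pm z/2$ variant follows by $z\to-z$ together with the theta parity identities.

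The four-variable second identity admits a more symmetric substitution in which the four inputs to (W4) are themselves pair-sums shifted by $\pm z/2$; the matching cancellations among these four arguments again force one component of the $T$-image to vanish and supply the factors $\theta_1(\alpha_2-\alpha_3)\theta_1(\alpha_1-\alpha_4)$ and $\theta_2(\alpha_1+\alpha_2+\alpha_3+\alpha_4\pm z/2)$ on the right. The principal obstacle in both cases is bookkeeping: (W4) packages six distinct relations through its index placement and $\pm$ choice, so one must verify that the specific Jacobi indices $(p,q)\in\{(3,4),(4,3)\}$ correspond to an admissible assignment, and that every sign and every quasi-period shift picked up by normalising the arguments of $\theta_{\ast}$ to lie in a fundamental domain is correctly tracked. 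Should the direct substitution prove intractable, an alternative route is to treat each side of the target as a theta function of $z$, check that the difference has trivial quasi-periodicity, verify that it vanishes at a short list of explicit points --- for instance $z=0$, $\alpha_j=0$, and $\alpha_i=\alpha_k$ --- and conclude by the Riemann--Roch count on the torus that it vanishes identically.
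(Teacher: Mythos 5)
Your overall strategy --- realising both formulae as specialisations of the Weierstrass relation (W4) --- is the same one the paper relies on (the paper offers nothing beyond the phrase ``From (W4) it follows that''), but the substitution you actually exhibit for the first identity fails, and the two claims you make about its consequences are both false. With your choice $\boldsymbol{a}=(\alpha_i,\ \alpha_j+\alpha_k,\ \alpha_k+\tfrac{z}{2},\ \alpha_i+\alpha_j-\tfrac{z}{2})$ one indeed has $a_1+a_2=a_3+a_4=\alpha_i+\alpha_j+\alpha_k$ and hence $a'_2=0$; but it is then the \emph{second} product of (W4), not the third, that acquires the $\theta(0)$ factor. That product becomes
\[
\theta_p(\alpha_i+\alpha_j+\alpha_k)\,\theta_p(0)\,\theta_q(\tfrac{z}{2}-\alpha_j)\,\theta_q(\tfrac{z}{2}-\alpha_i+\alpha_k),
\]
which bears no resemblance to the target's second term $\theta_p(\alpha_k)\theta_p(\alpha_i+\alpha_j)\theta_q(\alpha_i\pm\tfrac{z}{2})\theta_q(\alpha_j+\alpha_k\pm\tfrac{z}{2})$, while the third product works out (for $\{p,q\}=\{3,4\}$) to $\theta_2(\alpha_k+\tfrac{z}{2})\,\theta_2(\alpha_i+\alpha_j-\tfrac{z}{2})\,\theta_1(\alpha_i)\,\theta_1(\alpha_j+\alpha_k)$ up to sign, which is not the stated right-hand side. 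What you have written down is a valid but entirely different, degenerate instance of (W4); it does not prove the proposition.

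The proposition is in fact a \emph{nondegenerate} instance of (W4), and the degeneracy above is the symptom that your pairing of arguments is wrong: the two $z$-shifted arguments must sit together in one pair (carrying $\theta_q$), and one of the unshifted arguments must be negated, so that no component of $T(\boldsymbol{a})$ collapses to zero. Concretely, for the lower sign take
\[
\boldsymbol{a}=\bigl(\alpha_i+\alpha_j-\tfrac{z}{2},\ \alpha_k-\tfrac{z}{2},\ -\alpha_j-\alpha_k,\ \alpha_i\bigr),
\]
with $\theta_q$ assigned to the first pair and $\theta_p$ to the second. Then $T(\boldsymbol{a})=(\alpha_i-\tfrac{z}{2},\ \alpha_j+\alpha_k-\tfrac{z}{2},\ -\alpha_k,\ -\alpha_i-\alpha_j)$ reproduces the second term after using the evenness of $\theta_p$ and $\theta_q$, and $T^2(\boldsymbol{a})=(-\tfrac{z}{2},\ \alpha_i+\alpha_j+\alpha_k-\tfrac{z}{2},\ \alpha_i-\alpha_k,\ \alpha_j)$ gives exactly the right-hand side; the upper sign follows by $z\to-z$. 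For the four-variable identity you offer no substitution at all, only the assertion that a ``more symmetric'' one exists, and your fallback argument via quasi-periodicity and counting zeros is reasonable in principle but is nowhere carried out. As it stands the proposal establishes neither formula.
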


Suppose we now have that
\begin{equation}   \alpha_1+\alpha_2+\alpha_3+\alpha_4 = N \tau ,\quad   N\in \mathbb{Z}\end{equation}
the last trisecant relation turns to the following:

\begin{align}\begin{split}
&\theta_p(\alpha_4+\alpha_2)\theta_p(\alpha_1+\alpha_3)\theta_q\left(\alpha_2+\alpha_1\pm\frac{z}{2}\right)
\theta_q\left(\alpha_4+\alpha_3\pm\frac{z}{2}\right)\\
&\qquad -
\theta_p(\alpha_4+\alpha_3)\theta_p(\alpha_2+\alpha_1)\theta_q\left(\alpha_4+\alpha_2\pm\frac{z}{2}\right)
\theta_q\left(\alpha_3+\alpha_1\pm\frac{z}{2}\right)\\
&\qquad\qquad =\theta_2\left(\frac{z}{2}\right)^2
\theta_1(\alpha_2-\alpha_3)\theta_1(\alpha_1-\alpha_4)\mathrm{exp}\left\{ -\imath\pi(N^2\tau\pm Nz) \right\}
.\end{split}\end{align}

The combination of relations (W3) written in the form 
 \begin{align}\begin{split}
&\theta_1(\alpha_i) \theta_4\left(\alpha_i-\frac{z}{2}\right)\theta_4\left(\alpha_j\right)\theta_1\left(\alpha_j+\frac{z}{2}\right) \\&\hskip1cm-\theta_3\left(\alpha_i\right)\theta_2\left(\alpha_i-\frac{z}{2}\right)\theta_2\left(\alpha_j\right)\theta_3\left(\alpha_j+\frac{z}{2}\right)       \\
&\hskip2cm+\theta_2\left(\frac{z}{2}\right)\theta_2\left(\alpha_i+\alpha_j\right)\theta_3\left(0\right)
\theta_3\left(\alpha_i-\alpha_j-\frac{z}{2}\right)=0
\end{split}
\end{align}
and
\begin{align}\begin{split}
&\theta_1\left(\alpha_i+\frac{z}{2}\right)\theta_2\left(\alpha_j-\frac{z}{2}\right)  \theta_3\left(\alpha_j\right) \theta_4\left(\alpha_i\right)                     \\
&\hskip1cm-\theta_1\left(\alpha_i\right) \theta_2\left(\alpha_j\right)
\theta_3\left(\alpha_j+\frac{z}{2}\right)\theta_4\left(\alpha_i-\frac{z}{2}\right)       \\
&\hskip2cm+\theta_1\left(\alpha_i-\alpha_j-\frac{z}{2}\right)\theta_2\left(\frac{z}{2}\right)\theta_3\left(0\right)\theta_4\left(\alpha_i+\alpha_j\right)
=0
\end{split}
\end{align}
together with (W4) leads to the addition formula that we implemented to calculate the Gram matrix,
\begin{align}\begin{split}
&\hskip0.45cm\theta_1(\alpha_i)\theta_4\left(\alpha_i-\frac{z}{2}\right)
\displaystyle{\left|\begin{array}{cc} \theta_3\left(\frac{z}{2}\right) \theta_2(0) &\theta_1\left(\frac{z}{2}\right)\theta_4(0)\\\\
\theta_1\left(\alpha_j+\frac{z}{2}\right)
\theta_4(\alpha_i)&\theta_3\left(\alpha_j+\frac{z}{2}\right)\theta_2(\alpha_j) \end{array}\right|}\\\\
&+\theta_3(\alpha_i)\theta_2\left(\alpha_i-\frac{z}{2}\right)
\displaystyle{\left|\begin{array}{cc} \theta_3\left(\frac{z}{2}\right)\theta_2(0)&\theta_1\left(\frac{z}{2}\right)\theta_4(0)\\\\
\theta_3\left(\alpha_j+\frac{z}{2}\right)
\theta_2(\alpha_i)&\theta_1\left(\alpha_j+\frac{z}{2}\right)\theta_4(\alpha_j) \end{array}\right|}\\
&=\theta_2\left(\frac{z}{2}\right)^2\theta_3^2(0)\theta_1(\alpha_i+\alpha_j)\theta_4\left(\alpha_i-\alpha_j-\frac{z}{2}\right) .\end{split}\label{addition}
\end{align}

Finally we note:
\begin{proposition} Let $\alpha_1,\alpha_2,\alpha_3,\alpha_4 $ be four complex numbers satisfying condition
\[   \alpha_1+\alpha_2+\alpha_3+\alpha_4 =    N\tau,\quad N\in\mathbb{Z}   \]
and  $z\in \mathbb{C}$ . Then
\begin{align}\begin{split}
(-1)^N\prod_{m=1}^4 \theta_4(\alpha_m)+(-1)^M\prod_{m=1}^4 \theta_2(\alpha_m)
&=\theta_3(0)\theta_3(\alpha_i+\alpha_j)
\theta_3(\alpha_i+\alpha_k)\theta_3(\alpha_j+\alpha_k)\\
&\times\mathrm{exp}\left\{  -2\imath\pi N \alpha_l +\imath \pi N^2 \tau \right\},
\end{split}
\end{align}
and
\begin{align}\begin{split}
(-1)^{(N+M)}\prod_{m=1}^4 \theta_1(\alpha_m)+\prod_{m=1}^4 \theta_3(\alpha_m)
&=\theta_3(0)\theta_3(\alpha_i+\alpha_j)
\theta_3(\alpha_i+\alpha_k)\theta_3(\alpha_j+\alpha_k)\\
&\times\mathrm{exp}\left\{  -2\imath\pi N \alpha_l +\imath \pi N^2 \tau \right\},
\end{split}
\end{align}
with $i\neq j \neq k \neq l \in \{ 1,2,3,4\} $
\end{proposition}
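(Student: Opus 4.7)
The plan is to derive both identities by reducing to the base case $\sum_m\alpha_m=0$ via theta-function quasi-periodicity, where they then follow from the Weierstrass trisecant identities (W5) and (W6) recorded in Appendix \ref{thetafunctidentapp}.

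For the base case, assume $\sum_m\tilde\alpha_m=0$ and apply the Weierstrass transformation $T$. A direct computation gives $\tilde{\boldsymbol{\alpha}}'=(0,\,\tilde\alpha_1+\tilde\alpha_2,\,\tilde\alpha_1+\tilde\alpha_3,\,\tilde\alpha_2+\tilde\alpha_3)$ and, iterating, $\tilde{\boldsymbol{\alpha}}''=(-\tilde\alpha_4,-\tilde\alpha_3,-\tilde\alpha_2,\tilde\alpha_1)$. Substituting these into (W5) and using that $\theta_4$ is even immediately yields the first identity at $N=M=0$, namely $\prod_m\theta_4(\tilde\alpha_m)+\prod_m\theta_2(\tilde\alpha_m)=\theta_3(0)\,\theta_3(\tilde\alpha_1+\tilde\alpha_2)\theta_3(\tilde\alpha_1+\tilde\alpha_3)\theta_3(\tilde\alpha_2+\tilde\alpha_3)$. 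Substituting the same data into (W6) with $i=3$, using that $\theta_1$ is odd (so the $\tilde{\boldsymbol{\alpha}}''$-product of $\theta_1$ equals $-\prod_m\theta_1(\tilde\alpha_m)$) gives the companion base identity with $(\theta_4,\theta_2)$ replaced by $(\theta_3,\theta_1)$ on the LHS.

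For the general case, interpret the constraint as $\sum_m\alpha_m=N\tau+M$ (treating the integer piece $M$ explicitly, as in (\ref{abel})). Set $\tilde\alpha_m=\alpha_m$ for $m=1,2,3$ and $\tilde\alpha_4=\alpha_4-N\tau-M$, so $\sum_m\tilde\alpha_m=0$. Under the quasi-periodicities
$\theta_1(z+N\tau+M)=(-1)^{N+M}E\,\theta_1(z)$,
$\theta_2(z+N\tau+M)=(-1)^M\,E\,\theta_2(z)$,
$\theta_3(z+N\tau+M)=E\,\theta_3(z)$,
$\theta_4(z+N\tau+M)=(-1)^N\,E\,\theta_4(z)$,
with $E=e^{-i\pi N^2\tau-2i\pi N z}$, each four-fold product $\prod_m\theta_\ast(\alpha_m)$ acquires a common phase $e^{-i\pi N^2\tau-2i\pi N\tilde\alpha_4}$ together with a sign that precisely cancels the $(-1)^N$, $(-1)^M$ or $(-1)^{N+M}$ written in the proposition. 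Choosing the distinguished index $l=4$, the substitution $\tilde\alpha_4=\alpha_4-N\tau-M$ converts the common phase into $\exp\{-2i\pi N\alpha_l+i\pi N^2\tau\}$, which is exactly the RHS prefactor. Independence of the resulting identity from the choice of $l$ is a consequence of $\theta_3(\alpha_i+\alpha_j)=e^{-i\pi N^2\tau+2i\pi N(\alpha_i+\alpha_j)}\theta_3(\alpha_k+\alpha_l)$, a direct application of $\alpha_k+\alpha_l=N\tau-(\alpha_i+\alpha_j)$ combined with the quasi-periodicity of $\theta_3$.

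The main obstacle is bookkeeping rather than conceptual depth: one must carefully track the two independent signs $(-1)^N$ and $(-1)^M$ across all four factors in each product and verify that they recombine with the residual phase to produce $\exp\{-2i\pi N\alpha_l+i\pi N^2\tau\}$ unambiguously. A secondary technical point is pinning down the $\pm$ sign appearing in (W6) at the base case, which is fixed once and for all by a nontrivial specialisation (e.g.\ consistency with the case where $\prod_m\theta_1(\tilde\alpha_m)$ is nonzero but a different term is forced) and then propagates automatically to every $(N,M)$.
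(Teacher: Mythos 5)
Your proposal is correct and follows essentially the same route as the paper's (very terse) proof: both identities come from the Weierstrass relations (W5) and (W6) with $i=3$ applied to a quadruple summing to zero, with the $T$-orbit computing to $(0,\alpha_i+\alpha_j,\alpha_i+\alpha_k,\alpha_j+\alpha_k)$ and $(-\alpha_l,\ldots)$, and the $(-1)^N$, $(-1)^M$ signs and the phase $\exp\{-2\imath\pi N\alpha_l+\imath\pi N^2\tau\}$ recovered from the quasi-periodicities via the Abel relation. Your reorganisation into a base case plus a periodicity lift, and your explicit check that the choice of $l$ is immaterial, are just a more detailed write-up of the same argument.
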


\begin{proof}

If we use 
$\boldsymbol{\alpha}=(\alpha_3,\alpha_2,\alpha_4,-\alpha_2-\alpha_3-\alpha_4)$,
$\boldsymbol{\alpha'}=(0,\alpha_2+\alpha_3,\alpha_3+\alpha_4,\alpha_2+\alpha_4)$ and
$\boldsymbol{\alpha''}=(\alpha_2+\alpha_3+\alpha_4,-\alpha_4,-\alpha_2,\alpha_3)$ in (W6) for
$i=3$, together with the Abel relation we obtain the second identity while the first similarly follows from (W5).

\end{proof}

\subsection{Periodicities}
\begin{align*}
\theta_1(z+M+N\tau) &= (-1)\sp{N+M+1}\,e\sp{-i\pi[N^2\tau+2Nz]}\,\theta_1(z),\\
\theta_2(z+M+N\tau) &=\quad \quad(-1)\sp{M}\,e\sp{-i\pi[N^2\tau+2Nz]}\,\theta_2(z),\\
\theta_3(z+M+N\tau) &= \qquad\qquad\quad e\sp{-i\pi[N^2\tau+2Nz]}\,\theta_3(z),\\
\theta_4(z+M+N\tau) &= \quad \quad(-1)\sp{N}\,e\sp{-i\pi[N^2\tau+2Nz]}\,\theta_4(z).
\end{align*}

We also note
\begin{align}
\theta_2(z\pm 1/2) &=\mp \theta_1(z), \label{thetashift2}\\
\theta_4(z\pm 1/2)&= \ \ \theta_3(z). \label{thetashift4}
\end{align}

\section{Identities}\label{identitiesapp}

\subsection{Proof of Proposition \ref{abjacprop}}\label{proofabjacprop}
For the curve (\ref{curve}) we may write
\begin{equation}\label{abeldiffinfty}
w(P)=-\eta+(x_2-\imath x_1)-2\zeta x_3-(x_2+\imath x_1)\zeta^2
=c \,\frac{\prod_{i=1}\sp{4}\theta_1[P-P_i]}{\theta_1[P]^2\theta_1[P-\infty_2]^2},
\end{equation}
some constant $c$.
Here we encounter a subtlety referred to earlier when discussing Abel images. In writing the function
in the specified form with the given theta functions, periodicity requires choosing the Abel images so that
$$\alpha_1'+\alpha_2'+\alpha_3'+\alpha_4'=\sum_k \alpha(P_k)= 2\alpha(\infty_1+\infty_2).$$
The first part of the proposition is then proven upon establishing that
$\phi(2[\infty_1-\infty_2])\in\Lambda$, which follows from either
(\ref{abelimages}) or observing that this is the divisor of the function
\[\eta +\im \frac{K}2 \left(\zeta^2+(k^2-k'^2)\right).
\]
Although our numerical calculation of Abel images was such that $\sum_k \alpha_k=N\tau$, the choice of
of sheets in defining (\ref{abeldiffinfty}) has $\sum_k \alpha_k'=-(1+\tau)$; agreement can be 
achieved simply
by shifting the argument of one of the $\theta_1$'s in (\ref{abeldiffinfty}) by the appropriate lattice point,
for example  $$ \alpha_1=\alpha_1',\quad \alpha_2=\alpha_2',\quad \alpha_3=\alpha_3',\quad
\alpha_4=\alpha_4'+(N+1)\tau+1.$$
If we expand $w(P)$ at $\infty_1$ on sheet 1 we see
$$w(P)=\frac{i}2 (K-2x_-)\zeta^2-2x_3\zeta+\ldots
\qquad c_1= \frac{i}2 (K-2x_-) \frac{\theta_1^2[\infty_1-\infty_2] \theta_1^{'\, 2}}{\prod_{i=1}\sp4 \theta_1[\infty_1-P_i]},
$$
while on sheet 2
$$w(P)=-\frac{i}2 (K+2x_-)\zeta^2-2x_3\zeta+\ldots
\qquad c_2= -\frac{i}2 (K+2x_-) \frac{\theta_1^2[\infty_2-\infty_1] \theta_1^{'\, 2}}{\prod_{i=1}\sp4 \theta_1[\infty_2-P_i]}.
$$
Consistency requires that $c_1=c_2$ or that
$$-\frac{K-2x_-}{K+2x_-}=\frac{ \prod_{i=1}\sp4 \theta_1[\infty_1-P_i]}
{\prod_{i=1}\sp4 \theta_1[\infty_2-P_i]}=\exp(i\pi[\sum_k\alpha_k'+\tau])\,
\frac{ \prod_{i=1}\sp4 \theta_1(\alpha_i')}
{\prod_{i=1}\sp4 \theta_3(\alpha_i)}
=-\frac{ \prod_{i=1}\sp4 \theta_1(\alpha_i')}
{\prod_{i=1}\sp4 \theta_3(\alpha_i')}
$$
upon using
$\theta_1[\infty_2-P_k]=-\theta_1(\alpha_k'+(1+\tau)/2)=-\exp(-i\pi[\alpha_k'+\tau/4)\,\theta_3(\alpha_k')$.
Now from  (\ref{sub1a}, \ref{sub1b})
$$
\frac{K-2x_-}{K+2x_-}
=
-\frac{\theta_1(N\tau-\alpha_4)\theta_1(\alpha_1)\theta_1(\alpha_2)\theta_1(\alpha_3)}
{\theta_3(N\tau-\alpha_4)\theta_3(\alpha_1)\theta_3(\alpha_2)\theta_3(\alpha_3)}=
(-1)\sp{N+1}\frac{ \prod_{i=1}\sp4 \theta_1(\alpha_i)}
{\prod_{i=1}\sp4 \theta_3(\alpha_i)}
$$
and the shifts given above establish the needed consistency.

To establish the second identity we use (\ref{residues}). With $\zeta=1/t$ a local parameter we have
at $\infty_1$ on sheet 1 
$$w(P)=\frac{i}{2}(K-2x_-)\frac1{t^2}\left[ 1-\frac{2x_3}{\frac{i}2 (K-2x_-)}t+\ldots\right]
\qquad 
d\ln w(P)=-2\frac{dt}{t}+\frac{4ix_3}{K-2x_-}\,dt+\ldots
$$
while on sheet 2
$$w(P)=-\frac{i}2 (K+2x_-)\frac1{t^2}\left[ 1+\frac{2x_3}{\frac{i}2 (K-2x_-)}t+\ldots\right]
\qquad 
d\ln w(P)=-2\frac{dt}{t}-\frac{4ix_3}{K+2x_-}\,dt+\ldots
$$
while with 
$f(P)=\int_{P_0}\sp{P}\gamma_\infty$,
$$f(P)\sim_{P\sim \infty_j} \tilde\nu_j -\frac{\rho_j}{t}.$$ Thus
$$0=\sum_{\text{Residues}}f(P) d\ln w(P)
=\sum_k f(P_k)-2(\tilde\nu_1+\tilde\nu_2)-\rho_1\frac{4ix_3}{K-2x_-}+\rho_2\frac{4ix_3}{K+2x_-}
$$
giving
$$\sum_k  \int_{P_0}\sp{P_k}\gamma_\infty= 
\sum_k\frac14
\left\{ \frac{\theta_1'(\alpha_k')}{\theta_1(\alpha_k')}+
\frac{\theta_3'[(\alpha_k')}{\theta_3(\alpha_k')} -{\imath\pi} \right\}
=\frac{4K^2 x_3}{K^2-4x_-^2}
$$
establishing (\ref{abel2}).

A consequence of this result is that
$$\sum_k\frac14
\left\{ \frac{\theta_1'(\alpha_k)}{\theta_1(\alpha_k)}+
\frac{\theta_3'[(\alpha_k)}{\theta_3(\alpha_k)}  \right\}+\imath \pi N
=\sum_k \beta_1(P_k)+\imath \pi N
=\frac{4K^2 x_3}{K^2-4x_-^2}
$$
and so
$$
\sum_k \mu(P_k)=3\imath \pi N
-\sum_k(x_3+i x_-\zeta_k)+\frac{4K^2 x_3}{K^2-4x_-^2}
=3\imath \pi N
-4x_3-i x_-\left (\sum_k \zeta_k\right)+\frac{4K^2 x_3}{K^2-4x_-^2}.
$$
Using $\sum_k \zeta_k=-{16\imath x_3 x_-}/{(K^2-4x_-^2)}$
we obtain
\begin{equation}
\sum_k \mu_k= 3\imath \pi N
\end{equation}
establishing the proposition.

\subsection{Proof of Lemma \ref{curveidenta}}
We use
\begin{align*}
\frac{\theta_1'[P]}{\theta_1[P]}-
 \frac{\theta_3'[P]}{\theta_3[P]}&=\frac{\theta_3[P]}{\theta_1[P]} d\left(
 \frac{\theta_1[P]}{\theta_3[P]}\right)=\frac{\theta_3[P]}{\theta_1[P]} 
 \pi \theta_3^2 \frac{\theta_2[P]\theta_4[P]}{\theta_3[P]^2}=2i \boldsymbol{K}\zeta,\\
 \frac{\theta_1''[P]}{\theta_1[P]}-
 \frac{\theta_3''[P]}{\theta_3[P]}&=\frac{d}{d\alpha(P)}\left(\frac{\theta_1'[P]}{\theta_1[P]}-
 \frac{\theta_3'[P]}{\theta_3[P]}\right)+\left(\frac{\theta_1'[P]}{\theta_1[P]}\right)^2-
 \left(\frac{\theta_3'[P]}{\theta_3[P]}\right)^2\\
 &=
 2\boldsymbol{K} 
 \frac{d }{d\alpha(P)}\left(\frac{\theta_2[P]\theta_4[P]}{\theta_1[P]\theta_3[P]}\right)
 +\left(\frac{\theta_1'[P]}{\theta_1[P]}-\frac{\theta_3'[P]}{\theta_3[P]}\right)
 \left(\frac{\theta_1'[P]}{\theta_1[P]}+\frac{\theta_3'[P]}{\theta_3[P]}\right)\\
 &=
  2\boldsymbol{K} \left[
\frac{\theta_4[P]}{\theta_1[P]} \frac{d}{d\alpha(P)}\left(\frac{\theta_2[P]}{\theta_3[P]}\right)  +  
\frac{\theta_2[P]}{\theta_3[P]}\frac{d}{d\alpha(P)} \left(\frac{\theta_4[P]}{\theta_1[P]}\right)
 \right]+2i \boldsymbol{K}\zeta\left(\frac{\theta_1'[P]}{\theta_1[P]}+\frac{\theta_3'[P]}{\theta_3[P]}\right)\\
 &=8i\boldsymbol{K}\eta+
 2i \boldsymbol{K}\zeta\left(\frac{\theta_1'[P]}{\theta_1[P]}+\frac{\theta_3'[P]}{\theta_3[P]}\right)\\
 &=8i\boldsymbol{K}\eta+
 8i \boldsymbol{K}\zeta\,\beta_1(P).
 \end{align*}
In going from the third last to the penultimate line here we are using standard theta function identities
such as $\left({\theta_2[P]}/{\theta_3[P]}\right)'= -\pi \theta_4^2 \theta_1[P] \theta_4[P]/\theta_3[P]^2$.
Examination of the latter proof shows we have in fact also established (\ref{curveident3}).  The final identity follows upon differentiating both sides of
\[\eta^2=-\frac{K^2}4\left( \zeta^4+2(k^2-k'^2)\zeta^2+1\right)\]
and using  (\ref{curveident3}).

\subsection{Proof of Corollary \ref{curveidentb}}
The first set two relations follow upon combining (\ref{curveident1}) with (\ref{betadef2}),
\begin{equation}
\beta_1(P)=\int_{P_0}^P\gamma_{\infty}(P')-\nu_1 =\frac14 \frac{\theta_1'(\alpha)}{\theta_1(\alpha)}+ \frac14 \frac{\theta_3'(\alpha)}{\theta_3(\alpha)}.\label{secondkindint}
\end{equation}
Next, upon differentiating  (\ref{secondkindint}), using (\ref{quot1a}) and upon noting that the normalized second kind differential  $\gamma_{\infty}$ written in the curve coordinates is
\begin{equation}
\gamma_{\infty}(P) = \frac{K^2}{4\eta}\left( \zeta^2 - \frac{2E-K}{K}  \right)\mathrm{d} \,\zeta
\end{equation}
while $\boldsymbol{v}=d\zeta/(4\eta)$ 
we get
\begin{equation}\label{diffbetac}
 \beta_1'(P):=\frac{d \beta_1(P)}{d\alpha(P)}=K^2\left(\zeta^2- \frac{2E-K}{K}  \right)
\end{equation}
and consequently
\begin{equation}
 \frac{\theta_1''(\alpha)}{ \theta_1(\alpha)} + 
 \frac{\theta_3''(\alpha)}{\theta_3(\alpha)} = 
2K^2\zeta^2-4(2E-K)K +8\beta^2(P). \label{quot2}
\end{equation}
Combining this with (\ref{curveident2}) then yields (\ref{quot3a}) and (\ref{quot3b}).  Equally,
upon defining 
\[
 T:=\frac{\theta_1'(\alpha)}{\theta_1(\alpha)}=2\beta_1(P)+\imath K \zeta
 \]
then
\[
\frac{\theta_1''(\alpha)}{\theta_1(\alpha)}=2\beta_1'(P)+\imath K \zeta' +T^2
 =2 K^2\left(\zeta^2- \frac{2E-K}{K}  \right)+4\imath \eta+T^2.
\]
The final results follow upon further differentiation and using the earlier results.

\subsection{Proof of Corollary \ref{curveidentbthreepts}}
Using the constraint $\alpha_1+\alpha_2+\alpha_3+\alpha_4= N\tau$ the periodicity of the $\theta$-functions (Corollary \ref{abelprop}) yields
\begin{align}
\theta_1(\alpha_i+\alpha_j+\alpha_k )&=(-1)^{N+1}\theta_1(\alpha_l)\mathrm{e}^{ -\imath \pi N^2\tau + 2\imath \pi N \alpha_l  } ,\\
\theta_3(\alpha_i+\alpha_j+\alpha_k )&=\theta_3(\alpha_l)\mathrm{e}^{ -\imath \pi N^2\tau + 2\imath \pi N \alpha_l  } ,\\
\theta_1'(\alpha_i+\alpha_j+\alpha_k )&=(-1)^{N}\left[\theta_1'(\alpha_l)
+2\imath\pi N \theta_1( \alpha_l)\right]\mathrm{e}^{ -\imath \pi N^2\tau + 2\imath \pi N\alpha_l  }, \\
\theta_3'(\alpha_i+\alpha_j+\alpha_k )&=-\left[\theta_3'(\alpha_l)
+2\imath\pi N \theta_3( \alpha_l)\right]\mathrm{e}^{ -\imath \pi N^2\tau + 2\imath \pi N \alpha_l  },
\\
\theta_1''(\alpha_i+\alpha_j+\alpha_k )&=(-1)^{N+1}\mathrm{e}^{ -\imath \pi N^2\tau + 2\imath \pi N \alpha_l  }
\times \left( -4\pi^2 N^2 \theta_1(\alpha_l) + 4\imath \pi N\theta_1'(\alpha_l) + \theta_1''(\alpha_l) \right ) ,\\
\theta_3''(\alpha_i+\alpha_j+\alpha_k )&=\mathrm{e}^{ -\imath \pi N^2\tau + 2\imath \pi N \alpha_l  }
\times \left( -4\pi^2 N^2 \theta_3(\alpha_l) + 4\imath \pi N\theta_3'(\alpha_l)+ \theta_3''(\alpha_l) \right).
\end{align}
The Corollary now follows upon employing Corollary \ref{curveidentb}.
We note that 
\begin{align}\begin{split}
&\frac{\theta_1'(\alpha_i+\alpha_j+\alpha_k)}{\theta_1(\alpha_i+\alpha_j+\alpha_k) }+\frac{\theta_3'(\alpha_i+\alpha_j+\alpha_k)}{\theta_3(\alpha_i+\alpha_j+\alpha_k) }= -4 \beta_1(\alpha_l)-4\imath\pi N ,\\\\
&\frac{\theta_1'(\alpha_i+\alpha_j+\alpha_k)}{\theta_1(\alpha_i+\alpha_j+\alpha_k) }-\frac{\theta_3'(\alpha_i+\alpha_j+\alpha_k)}{\theta_3(\alpha_i+\alpha_j+\alpha_k) }=-2\imath K \zeta_l ,
\end{split}
\end{align}
and
\begin{align}\begin{split}
&\frac{\theta_1''(\alpha_i+\alpha_j+\alpha_k)}{\theta_1(\alpha_i+\alpha_j+\alpha_k) }+\frac{\theta_3''(\alpha_i+\alpha_j+\alpha_k)}{\theta_3(\alpha_i+\alpha_j+\alpha_k) } ,\\
&\hskip1cm = 2K^2\zeta_l^2-8\pi^2N^2+16\imath \beta_1(\alpha_l) \pi N -4K(2E-K)+8\beta_1(\alpha_l)^2\\\\
&\frac{\theta_1''(\alpha_i+\alpha_j+\alpha_k)}{\theta_1(\alpha_i+\alpha_j+\alpha_k) }-\frac{\theta_3''(\alpha_i+\alpha_j+\alpha_k)}{\theta_3(\alpha_i+\alpha_j+\alpha_k) }\\
&\hskip1cm =  - 8K(\pi N -\imath \beta_1(\alpha_l))\zeta_l + 8\imath K \eta_l .
\end{split}
\end{align}

\subsection{Proof of Proposition \ref{mux1x2starplane}}\label{proofmux1x2starplane}
To prove the first of these relations we compute 
\begin{align}\begin{split}
&\mu_1(\boldsymbol{x}) + \mu_*(\boldsymbol{x})\\
&=\frac{\imath K}{2}\left(\int_{a}^{\zeta_1(\boldsymbol{x})} + \int_{a}^{\zeta_*(\boldsymbol{x})} \right)\frac{(z^2-c)\mathrm{d}z}{\sqrt{ (z^2-a^2)(z^2-b^2) }}-(x_2+\imath x_1)(\zeta_1(\boldsymbol{x})+\zeta_*(\boldsymbol{x}))+\frac{\imath\pi}{2}\\
&=\frac{\imath K}{2}\left(\int_{a}^{\zeta_1(\boldsymbol{x})} + \int_{a}^{-\zeta_1(\boldsymbol{x})} \right)\frac{(z^2-c)\mathrm{d}z}{\sqrt{ (z^2-a^2)(z^2-b^2) }}+\frac{\imath\pi}{2}\\
&=\frac{\imath K}{2}\left(\int_{a}^{\zeta_1(\boldsymbol{x})} - \int_{-a}^{\zeta_1(\boldsymbol{x})} \right)
\frac{(z^2-c)\mathrm{d}z}{\sqrt{ (z^2-a^2)(z^2-b^2) }}+\frac{\imath\pi}{2}\\
&=\frac{\imath K}{2}\int_{k'+\imath k}^{-k'-\imath k}\frac{(z^2-c)\mathrm{d}z}{\sqrt{ (z^2-a^2)(z^2-b^2) }}+\frac{\imath\pi}{2} .
\end{split}\label{proof}
\end{align}
Taking into account that 
\begin{align*}
0&=\frac{\imath K}{2}\oint_{\mathfrak{a}}\frac{(z^2-c)\mathrm{d}z}{\sqrt{ (z^2-a^2)(z^2-b^2) }}=
\imath K\int_{k'+\imath k}^{k'-\imath k }\frac{(z^2-c)\mathrm{d}z}{\sqrt{ (z^2-a^2)(z^2-b^2) }}\\
&=-\imath K\int_{-k'-\imath k}^{-k'+\imath k }\frac{(z^2-c)\mathrm{d}z}{\sqrt{ (z^2-a^2)(z^2-b^2) }}
\end{align*}
one can transform the last integral in (\ref{proof}) into
\begin{align*}
&\frac{\imath K}{2}\int_{k'+\imath k}^{-k'-\imath k}\frac{(z^2-c)\mathrm{d}z}{\sqrt{ (z^2-a^2)(z^2-b^2) }} +
\frac{\imath K}{2}\int_{-k'-\imath k}^{-k'+\imath k}\frac{(z^2-c)\mathrm{d}z}{\sqrt{ (z^2-a^2)(z^2-b^2) }} \\
&=\frac{\imath K}{4}\oint_{\mathfrak{b}}\frac{(z^2-c)\mathrm{d}z}{\sqrt{ (z^2-a^2)(z^2-b^2) }}=\frac{\imath \pi}{2}
\end{align*}
which completes the proof. The second relation in (\ref{mu13mu24}) can be proved similarly.

\subsection{Proof of Lemma \ref{weierstrassthreept}}
\begin{description}

\item[(\ref{relation1})] 
Let us fix for definiteness $i=1,j=2,k=3$. 
To prove (\ref{relation1}) group and factorize the first term from the left-hand side of (\ref{relation1}) with the first term of the right-hand side and then do the same with next pair to get
\begin{align}
\begin{split}
-\theta_1(\alpha_1)\theta_1(\alpha_2)\theta_1(\alpha_1-\alpha_2)
[&\theta_3(\alpha_1)\theta_3(\alpha_2)                 
 \theta_3(\alpha_1+\alpha_2)\theta_3(2\alpha_3)\\
&-\theta_1(\alpha_1-\alpha_3)\theta_1(\alpha_2-\alpha_3)                       
\theta_1(\alpha_1+\alpha_2+\alpha_3)\theta_1(\alpha_3))],\end{split}
\label{term1}
\end{align}
and
\begin{align}\begin{split}
-\theta_3(\alpha_2)\theta_3(\alpha_3)\theta_1(\alpha_2-\alpha_3)[&\theta_1(\alpha_2)\theta_1(\alpha_3)                  \theta_3(\alpha_2+\alpha_3)\theta_3(2\alpha_1)\\-& \theta_1(\alpha_1-\alpha_3)\theta_1(\alpha_1-\alpha_2)                       \theta_3(\alpha_1+\alpha_2+\alpha_3)\theta_3(\alpha_1))].\end{split}
\label{term2}
\end{align}
Using the Weierstrass trisecants (W6)
\begin{align}\begin{split}
&\theta_1(\alpha_1+\alpha_2+\alpha_3)\theta_1(\alpha_1-\alpha_3)\theta_1(\alpha_2-\alpha_3)\theta_1(\alpha_3)\\
&\qquad\qquad
-\theta_3(\alpha_2)\theta_3(\alpha_1)\theta_3(\alpha_1+\alpha_2)\theta_3(2\alpha_3)\\
&\qquad\qquad\qquad\qquad
+\theta_3(\alpha_2+\alpha_3-\alpha_1)\theta_3(\alpha_1+\alpha_3)\theta_3(\alpha_2+\alpha_3)\theta_3(\alpha_3)=0,
\end{split}\label{WW6}
\end{align}
and (W2),
\begin{align}\begin{split}
& \theta_1(\alpha_1-\alpha_2)\theta_1(\alpha_1-\alpha_3)\theta_3(\alpha_1+\alpha_2+\alpha_3)\theta_3(\alpha_1)\\
&\qquad\qquad -\theta_1(\alpha_2)\theta_1(\alpha_3)\theta_3(\alpha_3+\alpha_2)\theta_3(2\alpha_1)\\
&\qquad\qquad\qquad\qquad +\theta_1(\alpha_2+\alpha_3-\alpha_1)\theta_1(\alpha_1)\theta_3(\alpha_1+\alpha_3)
\theta_3(\alpha_1+\alpha_2)=0.
\end{split}\label{WW2}
\end{align}
Correspondingly we  factorise expressions (\ref{term1}) and (\ref{term2}) to the form 
 \begin{align*}
-\theta_1(\alpha_1)\theta_1(\alpha_2)\theta_1(\alpha_1-\alpha_2)\theta_3(\alpha_3)
\theta_3(\alpha_1+\alpha_2-\alpha_3))\theta_3(\alpha_1+\alpha_3)\theta_3(\alpha_2+\alpha_3),\end{align*}
and
\begin{align*}
-\theta_3(\alpha_3)\theta_3(\alpha_2)\theta_1(\alpha_2-\alpha_3)\theta_3(\alpha_2+\alpha_3-\alpha_1)
\theta_1(\alpha_1)\theta_3(\alpha_1+\alpha_3)\theta_3(\alpha_1+\alpha_2).
\end{align*}
Adding to these two expressions with the remaining term from (\ref{relation1}) we observe the vanishing of the overall sum because of the trisecant (W2), 
\begin{align*}
&\theta_1(\alpha_3)\theta_1(\alpha_2-\alpha_3)\theta_3(2\alpha_1)\theta_3(\alpha_2)\\
&\qquad\qquad + \theta_1(\alpha_1)\theta_1(\alpha_1-\alpha_2)\theta_3(\alpha_1+\alpha_3)\theta_3(\alpha_1+\alpha_2-\alpha_3)\\
&\qquad\qquad\qquad\qquad-\theta_1(\alpha_1-\alpha_3)\theta_1(\alpha_1+\alpha_3-\alpha_2)\theta_3(\alpha_1)\theta_3(\alpha_1+\alpha_2)=0.
\end{align*}

\item[(\ref{relation2a})] This follows from the Weierstrass trisecant (W6).

\end{description}

\subsection{Proof of Lemma \ref{Relations}}
\begin{description}
\item[(\ref{sub1a})] 
Let us fix  $i=1$, $j=2$, $k=3$. Then from (\ref{xm}),
\begin{align*}
x_- = \frac{\imath \eta_1}{ (\zeta_1-\zeta_2)(\zeta_1-\zeta_3) }+
 \frac{\imath \eta_2}{ (\zeta_2-\zeta_1)(\zeta_2-\zeta_3) }+ \frac{\imath \eta_3}{ (\zeta_3-\zeta_1)(\zeta_3-\zeta_2) }.
\end{align*}
Substituting the $\theta$-functional expressions (\ref{eqparam}) into 
this and using (\ref{zetadiff}) we may rewrite $x_-$ in the form 
\begin{align}
x_-=\frac{\pi \theta_3(0)}{4} \frac{\theta_3(2\alpha_1)\theta_1(\alpha_2) \theta_1(\alpha_3)
\theta_3(\alpha_2) \theta_3(\alpha_3) }{\theta_{1}(\alpha_1-\alpha_2)\theta_{1}(\alpha_1-\alpha_3)
\theta_{3}(\alpha_1+\alpha_2)\theta_{3}(\alpha_1+\alpha_3)   }+\text{cyclic}.
\label{xm1a}
\end{align}  
Substituting (\ref{xm1a}) into (\ref{sub1a}) we get 
\begin{align*}
&2 \theta_1(\alpha_1+\alpha_2+\alpha_3) \theta_1(\alpha_1)\theta_1(\alpha_2)\theta_1(\alpha_3)\\
& \hskip2cm = - \theta_3(0)\theta_3(\alpha_1+\alpha_2)\theta_3(\alpha_1+\alpha_3)\theta_3(\alpha_2+\alpha_3)\\
&\hskip 1cm + \frac{ \theta_1(2\alpha_1)\theta_3(\alpha_2+\alpha_3) \theta_1(\alpha_2)
 \theta_1(\alpha_3) \theta_3(\alpha_2)
 \theta_3(\alpha_3)  }{ \theta_1(\alpha_1-\alpha_2)\theta_1(\alpha_1-\alpha_3)  }+ \text{cyclic }.
\end{align*}  
Now using Weierstrass trisecant (W6) written in the form
\begin{align}\begin{split}
& -\theta_3(0)  \theta_1(\alpha_1+\alpha_2)
\theta(\alpha_2+\alpha_3)\theta_1(\alpha_1+\alpha_3)\\
&\hskip1cm =\theta_1(\alpha_1+\alpha_2+\alpha_3)\theta_1(\alpha_1)\theta_1(\alpha_2)\theta_1(\alpha_3)
\\&\hskip2cm -\theta_3(\alpha_1+\alpha_2+\alpha_3) \theta_3(\alpha_1)\theta_3(\alpha_2)\theta_3(\alpha_3)
\end{split}\label{WT6}
\end{align}
 in the first term of the right hand side  we get 
\begin{align*}
& \theta_1(\alpha_1+\alpha_2+\alpha_3) \theta_1(\alpha_1)\theta_1(\alpha_2)\theta_1(\alpha_3)\\
&\hskip2cm +\theta_3(\alpha_1+\alpha_2+\alpha_3) \theta_3(\alpha_1)\theta_3(\alpha_2)\theta_3(\alpha_3)\\
& =\frac{ \theta_1(2\alpha_1)\theta_3(\alpha_2+\alpha_3) \theta_1(\alpha_2)
 \theta_1(\alpha_3) \theta_3(\alpha_2)
 \theta_3(\alpha_3)  }{ \theta_1(\alpha_1-\alpha_2)\theta_1(\alpha_1-\alpha_3)  }+ \text{cyclic}.
\end{align*} 
After multiplication of both sides by $\theta_1(\alpha_1-\alpha_2)\theta_1(\alpha_1-\alpha_3) 
\theta_1(\alpha_2-\alpha_3)$
the last relation becomes the already proven relation (\ref{relation1}).

\item[ (\ref{sub1b})] Is proven in the same way as (\ref{sub1a}).

\item[ (\ref{relation2})] This  follows immediately from (\ref{sub1a}) and  (\ref{sub1b}).

\item[ (\ref{sub2a}, \ref{sub2b}, \ref{relation3a})]  This group of relations represent composition of (\ref{sub1a}) and  (\ref{quot1a}) and (\ref{sub1b}) and  (\ref{quot1b}) with the final identity given by subtracting them and using the definition of $\mu_l$.

\item[ (\ref{sub3a}, \ref{sub3b}, \ref{relation4})]  This group of relations represent composition of (\ref{sub1a}) and  (\ref{quot4a}) and (\ref{sub1b}) and  (\ref{quot4b}) with the final identity given by subtracting them.

\item[ (\ref{relation3}), (\ref{xm1})] These follow directly from those just obtained.
\end{description}

\subsection{Proof of Proposition \ref{twoptrelations}}

Let us prove (\ref{2varlogdiff1}). Fix values $ i=1$,  $j=2$. 
Taking the partial derivatives  with respect to $\alpha_1$ and  $\alpha_2$ of both sides of the equality
\begin{equation}
\theta_3(\alpha_1+\alpha_2)\theta_3(\alpha_1-\alpha_2)=
\theta_4^{-1}(0)(\theta_4^2(\alpha_1)\theta_3^2(\alpha_2)- 
\theta_1^2(\alpha_1)\theta_2^2(\alpha_2)     ) 
\end{equation}
and the adding the results we get 
\begin{align}\begin{split}
&\frac{\theta_3'(\alpha_1+\alpha_2)}{ \theta_3(\alpha_1+\alpha_2) }= \frac{1}{\theta_4^2(\alpha_1)\theta_3^2(\alpha_2)- 
\theta_1^2(\alpha_1)\theta_2^2(\alpha_2) } [ \theta_4(\alpha_1)\theta_3^2(\alpha_2)\theta_4'(\alpha_1)\\
&\qquad \qquad  - \theta_1(\alpha_1)\theta_2^2(\alpha_2)\theta_1'(\alpha_1)  
+\theta_3(\alpha_2)\theta_4^2(\alpha_1)\theta_3'(\alpha_2)  
-\theta_2(\alpha_2)\theta_1^2(\alpha_1)\theta'_2(\alpha_2)] .\end{split}\label{logdiff}
\end{align}
Now we find from taking logarithmic derivatives of (\ref{uniform1}) and using Corollary \ref{curveidentb} that
\begin{equation}
\frac{\theta_2'(\alpha)}{\theta_2(\alpha)}= 2\beta_1(\alpha)+\frac{\imath K+2\eta}{\zeta}, \quad  
\frac{\theta_4'(\alpha)}{\theta_4(\alpha)}= 2\beta_1(\alpha)+\frac{-\imath K+2\eta}{\zeta}.
\end{equation}
Substituting these expressions together with (\ref{quot1a}),  (\ref{quot1b}) and the expressions for $\theta$-squares  (\ref{uniform1}) into (\ref{logdiff}) we obtain after simplification
\begin{align}
\frac{\theta_3'(\alpha_1+\alpha_2)}{ \theta_3(\alpha_1+\alpha_2) } 
=2(\beta_1(\alpha_1)+\beta_1(\alpha_2))-\imath K (\zeta_1+\zeta_2) \frac{1-X}{1+X}\label{dtheta3}
\end{align} 
with 
\begin{equation}  X= \frac{ K(\zeta_1^2-1)-2\imath \eta_1 }{ K(\zeta_1^2+1)+2\imath \eta_1}\,\cdot\, \frac{ K(\zeta_2^2-1)-2\imath \eta_2 }{ K(\zeta_2^2+1)+2\imath \eta_2}\left( \frac{k'}{k}\right)^2 = \frac{\theta_1^2(\alpha_1)\theta_1^2(\alpha_2) }
{ \theta_3^2(\alpha_1)\theta_3^2(\alpha_2) }.  \end{equation}
Taking into account expression for $\mu_k$, $\mu_k=\beta_1(\alpha_k)+\imath\pi N-\imath (x_-\zeta_k-\imath x_3)  $, 
we conclude that the proof of (\ref{2varlogdiff1}) will follow upon establishing that 
\begin{equation}
X\equiv \frac{ (\zeta_1+\zeta_2)(K-2x_-)+4\imath x_3   }{ (\zeta_1+\zeta_2)(K+2x_-)-4\imath x_3  }
\label{equivalence}
\end{equation}
where in the expression for $X$ the variables $\eta_i$ should be expressed in terms of $\zeta_i$ via the mini-twistor 
correspondence. 

To proceed, one can find $k^2$ from the relation 
$$ P(\zeta_1 , x_2-\imath x_1-2\zeta_1 x_3 - (x_2+\imath x_1)\zeta_1^2 )- 
P(\zeta_2, x_2-\imath x_1-2\zeta_2 x_3 - (x_2+\imath x_1)\zeta_2^2  ) =0  $$
giving
\begin{align}
k^2&=-\frac14(\zeta_1^2+\zeta_2^2)+\frac12+\frac{x_-^2}{K^2}(\zeta_1^2+\zeta_2^2)
+\frac{2x_+x_-}{K^2} -\frac{4x_3^2}{K^2} \nonumber
\\
&\qquad
-\frac{4\imath x_3}{ K^2(\zeta_1 + \zeta_2 )}
[ x_-( \zeta_1^2+\zeta_1\zeta_2+\zeta_2^2 ) +x_+  ] .\label{k2}
\end{align}
Expression (\ref{equivalence}) factorises after using (\ref{k2}) with one 
of the  factors vanishing because of relation (\ref{x3eq}).   

The proof of (\ref{2varlogdiff2}) parallels that of   (\ref{2varlogdiff1}).
We find an  expression for 
$\theta_1'(\alpha_1+\alpha_2)/\theta(\alpha_1+\alpha_2)$ similarly to (\ref{logdiff}). Next computing 
\begin{align}
\frac{\theta_1'(\alpha_1+\alpha_2)}{\theta_1(\alpha_1+\alpha_2)}
-\frac{\theta_3'(\alpha_1+\alpha_2)}{\theta_3(\alpha_1+\alpha_2)}
\end{align}
and making all of the above substitutions the result follows.


To prove (\ref{2varlogdiff3}) compute the $\alpha_1$  derivative of both sides of (\ref{logdiff}) and use the 
expressions $\theta_i'(\alpha)/\theta_i(\alpha)$,  $\theta_i''(\alpha)/\theta_i(\alpha)$ from the list 
of formulae (\ref{quot1a})-(\ref{quot3b}) together with the formulae
\begin{align}
\begin{split}
\frac{\theta_2''(\alpha)}{\theta_2(\alpha)} = 4\beta_1^2(\alpha)+\frac{8\eta_1\beta_1(\alpha)}{\zeta(\alpha)}
+4 K^2 {k'}^2-4EK-K^2\zeta^2(\alpha) + \frac{4\imath K \beta_1(\alpha)}{\zeta(\alpha)} ,\\
\frac{\theta_4''(\alpha)}{\theta_4(\alpha)} = 4\beta_1^2(\alpha)+
\frac{8\eta_1\beta_1(\alpha)}{\zeta(\alpha)}
+4 K^2 {k'}^2-4EK-K^2\zeta^2(\alpha) - \frac{4\imath K \beta_1(\alpha)}{\zeta(\alpha)},
\end{split}
\end{align}
and those for $\theta$-squares to get algebraic expression of $\eta_{1,2},\zeta_{1,2}, \beta_1(\alpha_{1,2})$ 
and $(x_{\pm},x_3)$, $K,k$  
For the right hand side of  (\ref{2varlogdiff3}) one can transform from the group of variables labeled by indices 3 and 4 to variables labeled by 1 and 2 using the formulae, 
\begin{align}
\zeta_3+\zeta_4&=-\frac{16\imath x_3 x_-}{K^2-4x_-^2}-\zeta_1-\zeta_2,\\
\mu_3+\mu_4&= -\mu_1-\mu_2 \pmod {i\pi}.
\end{align} 
Now upon subtracting  theses expressions for the left and right hand sides of  (\ref{2varlogdiff3}) and
using the expressions for $\mu_j$ and those for $\eta_j$ following from the mini-twistor correspondence
one obtains a rather cumbersome expression that again factorises as in the proof of (\ref{2varlogdiff1}).
Here we find a vanishing factor 
\begin{align}
(\zeta_1+\zeta_2)[ \zeta_1^2\zeta_2^2(K^2-4x_-^2) -(K^2-4 x_+^2)  ] +16  \imath \zeta_1\zeta_2(x_+\zeta_1\zeta_2-x_+  )x_3
\end{align}
so proving  (\ref{2varlogdiff3}).

The final expression (\ref{2varlogdiff4}) is proved analogously.

\subsection{Proof of Proposition \ref{propmux1axis}}\label{proofmux1axis}
For $S_1^2<0$ and the invariance of the curve under conjugation we have that
\begin{align}
\overline{\alpha(P_1)}&=\overline{\int_{\infty_1}^{P_1} v}
=\int_{\overline{\infty_1}}\sp{(\overline{\zeta},\overline{\eta})}v=
\int_{{\infty_2}}\sp{(\overline{\zeta},\overline{\eta})}v=
-\int_{{\infty_1}}\sp{(\overline{\zeta},-\overline{\eta})}v=
-\alpha(P_2)\label{abelx1axis}.
\end{align}
This together with (\ref{defmu}) and the even/oddness properties of the theta functions shows that on interval {\bf I}
$$\overline{\beta_1(P_1)}= -\beta_1(P_2).$$
Accordingly
$$\overline{\mu}_1(x_1,0)=- \mu_2(x_1,0), 
\qquad \overline{\mu}_2(x_1,0)=- \mu_1(x_1,0).$$
Taken together with Proposition (\ref{mux1x2starplane})
 we  obtain upon noting $\zeta_1+\zeta_3=0$ that
\begin{align*}
 \mu_1=\lambda_1(x_1) +\frac{\imath\pi}4, \
  \mu_2=-\lambda_1(x_1) +\frac{\imath\pi}4, \
 \mu_3=-\lambda_1(x_1) -\frac{\imath\pi}4,\
 \mu_4=\lambda_1(x_1) -\frac{\imath\pi}4.
\end{align*}
where $\lambda_1(x_1) $ is a real function. The initial conditions of $\mu_i$ give $\lambda_1(0)=0$.

Given the reality properties noted earlier, we have that on the remaining intervals
\begin{align*}
{\bf II}& 
 &\overline{\zeta}&=\zeta,&  &\overline{\eta}=-\eta,&
 &\overline{\alpha(P_1)}=-\alpha(P_1),&  &&\overline{\beta(P_1)}&=-\beta(P_1),&
&\overline{\gamma_\infty}= -\gamma_\infty,& \overline{\imath x_1\zeta}= -\imath x_1\zeta,\\
{\bf III}& 
 &\overline{\zeta}&=-\zeta,&  &\overline{\eta}=-\eta,&
  &\overline{\alpha(P_1)}=\alpha(P_1),&  &&\overline{\beta(P_1)}&=\beta(P_1),&
 &\overline{\gamma_\infty}= \gamma_\infty,& \overline{\imath x_1\zeta}= \imath x_1\zeta.
\end{align*}
From these it follows that
\begin{align}
{\bf II}&: \overline{\mu}_1(x_1,0)=- \mu_1(x_1,0), 
&{\bf III}&: \overline{\mu}_1(x_1,0)=\mu_1(x_1,0).
\label{conjugate1a}
\end{align} 
Again taken together with Proposition (\ref{mux1x2starplane})
$$\mu_1=\lambda''+\imath\alpha, \  \mu_2=\lambda''-\imath\alpha-\frac{\imath\pi}2, \
 \mu_3=-\lambda''+\imath\alpha+\frac{\imath\pi}2,\
  \mu_4=-\lambda''-\imath\alpha,
$$
the result follows. The remaining boundary conditions now follow.

\subsection{Proof of Proposition \ref{propmux3axis} }\label{proofpropmux3axis}
For $x_3<{Kk'}/2$ the proof follows that of the the $x_2$-axis. 
For  $x_3>{Kk'}/2$ then $P_2=-P_1$ and
\begin{align*}
\mu(-\zeta,-\eta)&=\frac{ \imath K}{2}\int_a^{-\zeta} \frac{z^2-c}{-\eta}dz-x_3
=\frac{ \imath K}{2}\int_{-a}^{\zeta} \frac{z^2-c}{\eta}dz-x_3.
=\mu(\zeta,\eta)+\int_{-a}^a \gamma_\infty.
\intertext{Then}
-\int_{-a}^a \gamma_\infty&=\int_{k'+\imath k}^{k'-\imath k} \gamma_\infty + 
\int_{k'-\imath k}^{-k'-\imath k} \gamma_\infty 
=\frac12 \oint_{\mathfrak{a}} \gamma_\infty+\frac12 \oint_{\mathfrak{b}} \gamma_\infty
=-\imath\frac{\pi}2
\end{align*}
and the result follows.

\subsection{Proof of Proposition \ref{jacobipar}}\label{proofjacobipar}
We note that in all cases the
possible signs are generated by $z\rightarrow z+1$ and $z\rightarrow z+\tau$ and because of
(\ref{transaxes}) we need only solve this for one axis to determine the answer for each axis.
We begin by focussing on the $x_2$ axis and a choice of signs such that
\begin{equation}\label{trisecantlame2}
\theta_2[P]\theta_4[P] \theta_3 \theta_1(z)=
\theta_1[P]\theta_3[P]  \theta_2 \theta_4(z)       +\theta_1[P]\theta_3[P] \theta_4 \theta_2(z).
\end{equation}
Now we have (for any distinct $i,j,k\in\{2,3,4\}$) the trisecant identity (W3)
$$\theta_i(\alpha_1)\theta_j(\alpha_2)\theta_k(\alpha_3)\theta_1(\alpha_4)+
\theta_i(\alpha_1')\theta_j(\alpha_2')\theta_k(\alpha_3')\theta_1(\alpha_4')+
\theta_i(\alpha_1'')\theta_j(\alpha_2'')\theta_k(\alpha_3'')\theta_1(\alpha_4'')=0
$$
where $\boldsymbol{\alpha}=(\alpha_1,\alpha_2,\alpha_3,\alpha_4)$, $\boldsymbol{\alpha'}$ and 
$\boldsymbol{\alpha''}$ are described in Appendix \ref{thetafunctidentapp}. Then if we take
$\boldsymbol{\alpha}=(0,P,-P,2 P)$ and $(i,j,k)=(3,4,2)$ we recover (\ref{trisecantlame2}) with
$z=-2\,\alpha(P)$. 

For the $x_1$ axis
$$
\zeta= \pm \frac{k' \pm \imath k\,  \mathrm{cn}(t)}{\mathrm{dn}(t)}
=\pm\, 
\frac{ \theta_4 \theta_4(z)\pm \imath\, \theta_2 \theta_2(z)}{\theta_3 \theta_3(z)}
 $$
 we now wish to solve
 \begin{equation}\label{trisecantlame1}
\theta_2[P]\theta_4[P] \theta_3 \theta_3(z)=\pm\left[  
 \theta_2 \theta_2(z)\pm \imath\, \theta_4 \theta_4(z) \right] \theta_1[P]\theta_3[P] 
.
\end{equation}
This may be rewritten as
\begin{align*}
\theta_2[P]\theta_4[P] \theta_3 \theta_2(z+\tau/2)&=\pm\left[  
 \theta_2 \theta_3(z+\tau/2)\pm  \theta_4 \theta_1(z+\tau/2) \right] \theta_1[P]\theta_3[P] 
 \intertext{or}
 \theta_3 \theta_3(\alpha(P)+1/2) \theta_2(z+\tau/2) \theta_2[P]
 &=\pm
  \theta_2 \theta_2(\alpha(P)+1/2) \theta_3(z+\tau/2) \theta_3[P] \\
&\qquad  \pm  \theta_4   \theta_4(\alpha(P)+1/2) \theta_1(z+\tau/2) \theta_1[P].
\end{align*}
This is in the form if the trisecant identity 
$$
\theta_2(\alpha_1)\theta_2(\alpha_2)\theta_3(\alpha_3)\theta_3(\alpha_4)
-\theta_2(\alpha_1')\theta_2(\alpha_2')\theta_3(\alpha_3')\theta_3(\alpha_4')\\
\pm\theta_4(\alpha_1'')\theta_4(\alpha_2'')\theta_1(\alpha_3'')\vartheta_1(\alpha_4'')=0.
$$
We find a solution $\boldsymbol{\alpha}=(-2 P-1/2, P, 0, P+1/2)$, that is
$z=-2\,\alpha(P)-1/2-\tau/2$. 

Finally, for the $x_3$ axis we have
 $$
  \zeta= \pm (  \mathrm{dn}(t) \pm \imath k\, \mathrm{sn}(t) )
=\pm\, 
\frac{ \theta_4 \theta_3(z)\pm \imath\, \theta_2 \theta_1(z)}{\theta_3 \theta_4(z)}
 $$
 and we are led to
 \begin{equation}\label{trisecantlame3}
\theta_2[P]\theta_4[P] \theta_3 \theta_4(z)=\pm\left[  
 \theta_2 \theta_1(z)\pm \imath\, \theta_4 \theta_3(z) \right] \theta_1[P]\theta_3[P] .
\end{equation}
For an appropriate set of signs we may rewrite this as
$$
\theta_1(z+\tau/2)\theta_2[P] \theta_3 \theta_4[P] =
\theta_1[P]  \theta_2 \theta_3[P] \theta_4(z+\tau/2)
+ \theta_1[P]  \theta_2(z+\tau/2)   \theta_3[P]\theta_4$$
which we solved earlier, 
$z=-2\,\alpha(P)-\tau/2$.

\section{The matrices \texorpdfstring{$W$}{W} and \texorpdfstring{$V$}{V}}
\subsection{Proof of Theorem \ref{thmadjpsi}}\label{proofthmadjpsi}
The form of $(\ref{formphi})$ shows that its principal cofactors are either linear or bilinear in  the $\zeta$'s.
Now $\Psi \Adj(\Psi)=|\Psi| 1_4$ and the first two columns are bilinear in the $\zeta$'s while the third and fourth are linear. Let us consider
first the linear case. We find for example that
\begin{align*}\Adj(\Psi)_{13}&=
i\theta_{{1}} \left( \alpha_{{2}} \right) \theta_{{4}} \left( \alpha_{{2}}-
z/2 \right) \theta_{{1}} \left( \alpha_{{3}}-\alpha_{{4}} \right) \theta_{{4}}
 \left( \alpha_{{4}}-z/2+\alpha_{{3}} \right)\theta_3(0)\theta_2(z/2) \zeta _{{2}}\\
 & \qquad  -i\theta_{{1}}
 \left( \alpha_{{3}} \right) \theta_{{4}} \left( \alpha_{{3}}-z/2 \right) 
\theta_{{1}} \left( \alpha_{{2}}-\alpha_{{4}} \right) \theta_{{4}} \left( \alpha_{{4}
}-z/2+\alpha_{{2}} \right) \theta_3(0)\theta_2(z/2) \zeta _{{3}} \\
&\qquad +i\theta_{{1}} \left( \alpha_{{4}}
 \right) \theta_{{4}} \left( \alpha_{{4}}-z/2 \right) \theta_{{1}}
 \left( \alpha_{{2}}-\alpha_{{3}} \right) \theta_{{4}} \left( \alpha_{{3}}-z/2+\alpha_{
{2}} \right)\theta_3(0) \theta_2(z/2) \zeta _{{4}}\\
&:=a \zeta _{2} +b \zeta _{3} +c \zeta _{4}
\intertext{where $a+b+c=0$}
&=a (\zeta _{2} -\zeta _{4} )+b(\zeta _{3} -\zeta _{4} )
\intertext{and upon using (\ref{zetadiff}) we find}
&= -
 \frac{\theta_2(\sum_{k\ne 1}\alpha_k-z/2)}{\prod_{k\ne 1}\theta_3(\alpha_k)}
\theta_2^2(z/2) {\theta_2(0)\theta_3(0)
\theta_4(0)}
 \prod_{ \genfrac{}{}{0pt}{4}{k<l} {k,l\ne 1} }\theta_1(\alpha_k-\alpha_l).
\end{align*}
Here we have used the trisecant identity to simplify each of the coefficients of $\zeta_i$ in obtaining the
first line, and we remark that whichever of $a$, $b$ or $c$ we eliminate in the third step we arrive at the same final
expression. We have then for the third and fourth columns of $ \Adj(\Psi)$ that
\begin{align*}
 \Adj(\Psi)_{i3}&=  \frac{\theta_2(\sum_{k\ne i}\alpha_k-z/2)}{\prod_{k\ne i}\theta_3(\alpha_k)}
 \left[(-1)^{i}
\theta_2^2(z/2) {\theta_2(0)\theta_3(0)
\theta_4(0)}
\prod_{ \genfrac{}{}{0pt}{4}{k<l} {k,l\ne 1} }\theta_1(\alpha_k-\alpha_l)\prod_{ \genfrac{}{}{0pt}{4}{k<l} {k,l\ne 1} }\theta_1(\alpha_k-\alpha_l)
\right],
\\
\Adj(\Psi)_{i4}&=
 \frac{\theta_4(\sum_{k\ne i}\alpha_k-z/2)}{\prod_{k\ne i}\theta_1(\alpha_k)}
 \left[(-1)^{i}
\theta_2^2(z/2) {\theta_2(0)\theta_3(0)
\theta_4(0)}
\prod_{ \genfrac{}{}{0pt}{4}{k<l} {k,l\ne 1} }\theta_1(\alpha_k-\alpha_l)
\right],
\end{align*}
and we note that the terms ${\theta_{2,4}(\sum_{k\ne i}\alpha_k-z/2)}$ may be rewritten using
Proposition \ref{abelprop}.

Let us now consider the quadratic terms. Taking for example $\Adj(\Psi)_{11}$ with the same
$a$, $b$, $c$ appearing as in $\Adj(\Psi)_{13}$ we have
\begin{align*}
 \Adj(\Psi)_{11}&=  i c\,\zeta_2 \zeta_3+ i b\,\zeta_2 \zeta_4+i a\,\zeta_3 \zeta_4\\
 &=ia\, \zeta_3( \zeta_4-\zeta_2)+ib\, \zeta_2( \zeta_4-\zeta_3)\\
 &=-i\left[ a (\zeta _{2} -\zeta _{4} )+b(\zeta _{3} -\zeta _{4} )\right] \zeta_3
 +ib\, (\zeta _{3} -\zeta _{4} )(\zeta _{3} -\zeta _{2} )\\
 &=-i\zeta _{3} \Adj(\Psi)_{13}+(\zeta _{3} -\zeta _{4} )(\zeta _{3} -\zeta _{2} )\times \\
 &\qquad
 \theta_1(\alpha_3) \theta_1(\alpha_2-\alpha_4) \theta_4(\alpha_3-z/2) \theta_4(\alpha_2+\alpha_4-z/2)
  \theta_3(0)\theta_2(z/2) 
  \intertext{or grouping factors differently}
  &=-i\zeta _{2} \Adj(\Psi)_{13}+(\zeta _{2} -\zeta _{3} )(\zeta _{2} -\zeta _{4} )\times \\
 &\qquad
 \theta_1(\alpha_2) \theta_1(\alpha_3-\alpha_4) \theta_4(\alpha_2-z/2) \theta_4(\alpha_3+\alpha_4-z/2)
  \theta_3(0)\theta_2(z/2) \\
&=-i\zeta _{4} \Adj(\Psi)_{13}+(\zeta _{4} -\zeta _{2} )(\zeta _{4} -\zeta _{2} )\times \\
 &\qquad
 \theta_1(\alpha_4) \theta_1(\alpha_2-\alpha_3) \theta_4(\alpha_4-z/2) \theta_4(\alpha_2+\alpha_3-z/2)
  \theta_3(0)\theta_2(z/2).
 \end{align*}
 In general, for $j,k,l\ne i$ and $k<l$ we have
 \begin{align*}
 \Adj(\Psi)_{i1}&= -i\zeta _{j} \Adj(\Psi)_{i3}+(\zeta _{j} -\zeta _{k} )(\zeta _{j} -\zeta _{l} )\times
 {\rm Coeff}( \Adj(\Psi)_{i1}, \zeta _{k} \zeta _{l}),
 \\
  \Adj(\Psi)_{i2}&= -i\zeta _{j} \Adj(\Psi)_{i4}+(\zeta _{j} -\zeta _{k} )(\zeta _{j} -\zeta _{l} )\times 
 {\rm Coeff} (\Adj(\Psi)_{i2}, \zeta _{k} \zeta _{l}).
 \end{align*}
 which may be written as
 \begin{align*}
 \Adj(\Psi)_{i1}&= -i\zeta _{j} \Adj(\Psi)_{i3}-\epsilon_{ijkl}(\zeta _{j} -\zeta _{k} )(\zeta _{j} -\zeta _{l} )\times \\
 &\qquad
 \theta_1(\alpha_j) \theta_1(\alpha_k-\alpha_l) \theta_4(\alpha_j-z/2) \theta_4(\alpha_k+\alpha_l-z/2)
  \theta_3(0)\theta_2(z/2), 
  \\
 \Adj(\Psi)_{i2}&= -i\zeta _{j} \Adj(\Psi)_{i4}-\epsilon_{ijkl}(\zeta _{j} -\zeta _{k} )(\zeta _{j} -\zeta _{l} )\times \\
 &\qquad
 \theta_3(\alpha_j) \theta_1(\alpha_k-\alpha_l) \theta_2(\alpha_j-z/2) \theta_4(\alpha_k+\alpha_l-z/2)
  \theta_3(0)\theta_2(z/2) .
\end{align*}

We find upon using (\ref{zetadiff}) and taking the transpose

\subsection{Proof of Theorem \ref{finiteterm}}\label{appendixfiniteterm}
Towards expanding (\ref{defbarvi})  we first note that
\[
\mathcal{O}\, C^{-1}(z)=\diag(F+G,F-G)\, \mathcal{O}=\diag(p(z),1/p(z))\, \mathcal{O}.
\]
Then, from the integral representation of $p(z)$ and that $f_3(z)$ is even, we have $p(z)p(-z)=1$
and so
\begin{align}\label{defvbarz}
\boldsymbol{\overline v}_i(z)&=
\frac1{\theta_2^2(z/2)}\,\frac1{\sqrt{2}}\,
\left( 1_2\otimes \begin{pmatrix} p(z)&0 \\  0& p(-z) \end{pmatrix} \right)
\left[
\begin{pmatrix} -\im \zeta_j  \\   1\end{pmatrix} \otimes \begin{pmatrix} A -B \\  A+ B \end{pmatrix}
+ 
 \begin{pmatrix} 1 \\   0\end{pmatrix} \otimes \begin{pmatrix} \alpha - \beta \\  \alpha+ \beta \end{pmatrix}
\right].
\end{align}

We shall now define
\begin{align*}
\hat A( \xi ):= A(1- \xi) &=\sum_{s\ge1} A_s\, \xi^{s},\quad
\hat B(\xi) := B(1-\xi) =\sum_{s\ge1} B_s\, \xi^{s},\\
\hat\alpha(\xi)  := \alpha(1-\xi) &=\sum_{s\ge0} \alpha_s\, \xi^{s},\quad
\hat\beta(\xi) := \beta(1-\xi) =\sum_{s\ge0} \beta_s\, \xi^{s},\\
\frac1{\theta^2_2((1-\xi)/2)}&=\frac{c}{\xi^2}\sum_{s\ge0}c_{2s}\,\xi^{2s},\qquad
c=\frac{4}{(\theta_1'(0))^2},\  c_0=1,\ c_2=-\frac1{12}\frac{\theta_1'''(0)}{\theta_1'(0) },\\
p(1-\xi)&=\frac1{\sqrt{\xi}} \frac{\sqrt{2}}{\sqrt{K}}
\sum_{s\ge0}p_{2s}\,\xi^{2s},\qquad p_0=1,\ p_2=\frac1{24}(2k^2-1)K^2, \\
p(-1+\xi)&=
\sqrt{\xi} \frac{\sqrt{2}}{\sqrt{K}} 
\sum_{s\ge0}q_{2s}\,\xi^{2s},\qquad q_0=\frac{K}2,\  q_2=-p_2 q_0, 
\end{align*}
where here the expansion of $A$ and $B$ begin with  $\xi$ because of
\begin{equation*}
\theta_2\left( (1-\xi)/2\right)=\theta_1\left( \xi/2\right)=
\frac{\xi}{2}\,\theta_1'(0)+\frac{\xi^3}{48}\,\theta_1'''(0)+\mathcal{O}(\xi^5)
=\xi\,\frac{\pi \,\theta_{{2}} \theta_{{3}} \theta_{{4}}}{2}\,\left( 1-\frac12 c_2\xi^2+ \mathcal{O}(\xi^4)  \right).
\end{equation*}
We then have  that
\begin{equation}\label{vsexpan}
 \boldsymbol{\overline v}_{i, s}=
 \frac{c}{\sqrt{K}}
 \sum_{2l+2m+n=s} c_{2l}\left[
 \begin{pmatrix} -\im \zeta_j  \\   1\end{pmatrix} \otimes
  \begin{pmatrix} p_{2m}(A _n -B_n)\\  
  q_{2m}(A _{n-1}+B_{n-1})\end{pmatrix}
+ 
 \begin{pmatrix} 1 \\   0\end{pmatrix} \otimes \begin{pmatrix} p_{2m}( \alpha_n - \beta_n )
 \\ q_{2m}( \alpha_{n-1} +\beta_{n-1})\end{pmatrix}
 \right].
\end{equation}

In particular we have
\begin{align}
 \boldsymbol{\overline v}_{i, 0}&= \frac{c}{\sqrt{K}}\, 
  \begin{pmatrix} 1 \\   0\end{pmatrix} \otimes \begin{pmatrix} \alpha_0 - \beta_0 \\0 \end{pmatrix}=0,
  \intertext{where we have used that $\alpha_0=\beta_0$  which follows from  (\ref{thetashift2}, \ref{thetashift2}). Making use of this then yields}
 \boldsymbol{\overline v}_{i, 1}&= \frac{c}{\sqrt{K}}\,
 \begin{pmatrix}  \alpha_1-\beta _1-\im\zeta_2(A_1-B_1)  \\ 
\alpha_0 K \\  
 A_1-B_1  \\
0 \\
\end{pmatrix},\\
 \boldsymbol{\overline v}_{i, 2}&= \frac{c}{\sqrt{K}}\, 
\begin{pmatrix}  \alpha_2-\beta _2-\im\zeta_2(A_2-B_2)  \\ 
\left(\alpha_1+\beta_1-\im \zeta_2(A_1+B_1)\right) K/2 \\  
 A_2-B_2  \\
(A_1+B_1)K/2 \\
\end{pmatrix},\\
\boldsymbol{\overline v}_{i, 3}&= \frac{c}{\sqrt{K}}\, 
\begin{pmatrix}  \alpha_3-\beta _3-\im\zeta_2(A_3-B_3)  \\ 
\left(\alpha_2+\beta_2-\im \zeta_2(A_2+B_2)\right) K/2 \\  
 A_3-B_3  \\
(A_2+B_2)K/2 \\
\end{pmatrix}-
2Kp_2\alpha_0\,\frac{c}{\sqrt{K}}\, \begin{pmatrix} 0\\1\\0\\0 \end{pmatrix}
+
(c_2+p_2)  \boldsymbol{\overline v}_{i, 1}.\label{finitepiece}
\end{align}

\begin{align*}
A_1&=\frac12\,{\frac {\pi \,\theta_{{2}}\theta_{{3}}  \theta_{{4}}\, \theta_{{1}}\left( P_{{4}}+P_{
{3}}+P_{{2}} \right)  }{\theta_{{3}} \left( P_{{3
}} \right) \theta_{{3}} \left( P_{{2}} \right) \theta_{{3}} \left( P_{
{4}} \right) }}\,  {{\rm e}^{-\mu_{{1}}}},&
A_2&=\left( \frac12
 \frac{ \theta_{{1}}' \left( P_{{4}}+P_{{3}}+P_{{2}} \right) }{\theta_{{1}} \left( P_{{4}}+P_{{3}}+P_{{2}} \right) }+\mu_1\right) A_1,\\
 B_1&=\frac12\,{\frac {\pi \,\theta_{{2}}\theta_{{3}}  \theta_{{4}}\, \theta_{{3}}\left( P_{{4}}+P_{
{3}}+P_{{2}} \right)  }{\theta_{{1}} \left( P_{{3
}} \right) \theta_{{1}} \left( P_{{2}} \right) \theta_{{1}} \left( P_{
{4}} \right) }}\,  {{\rm e}^{-\mu_{{1}}}},&
B_2&=\left( \frac12
 \frac{ \theta_{{3}}' \left( P_{{4}}+P_{{3}}+P_{{2}} \right) }{\theta_{{3}} \left( P_{{4}}+P_{{3}}+P_{{2}} \right) }+\mu_1\right) B_1,\\
 \alpha_0&=
 {\frac {\theta_{{2}}  \theta_{{4}}
\theta_{{3}} \left( P_{{4}}+P_{{3}} \right) \theta_{{3}} \left( P_{{3}
}+P_{{2}} \right) \theta_{{3}} \left( P_{{4}}+P_{{2}} \right)  }{\theta_{{1}} \left( P_{{2}} \right) \theta_{{3}}
 \left( P_{{2}} \right) \theta_{{1}} \left( P_{{3}} \right) \theta_{{3
}} \left( P_{{3}} \right) \theta_{{1}} \left( P_{{4}} \right) \theta_{
{3}} \left( P_{{4}} \right) }}\,  {{\rm e}^{-\mu_{{1}}}},&
\alpha_1&=\left( \frac12
 \frac{ \theta_{{3}}' \left( P_{{4}}+P_{{3}} \right) }{\theta_{{3}} \left( P_{{4}}+P_{{3}} \right) }+
 \frac12
 \frac{ \theta_{{3}}' \left( P_{{2}} \right) }{\theta_{{3}} \left(P_{{2}} \right) }+\mu_1\right)  \alpha_0,
\\
\beta_0&=\alpha_0,   &
\beta_1&=\left( \frac12
 \frac{ \theta_{{3}}' \left( P_{{4}}+P_{{3}} \right) }{\theta_{{3}} \left( P_{{4}}+P_{{3}} \right) }+
 \frac12
 \frac{ \theta_{{1}}' \left( P_{{2}} \right) }{\theta_{{1}} \left(P_{{2}} \right) }+\mu_1\right)  \alpha_0,
\end{align*}

 \begin{align*}
 A_3&= \left( \frac18
 \frac{ \theta_{{1}}'' \left( P_{{4}}+P_{{3}}+P_{{2}} \right) }{\theta_{{1}} \left( P_{{4}}+P_{{3}}+P_{{2}} \right) }+\frac{\mu_1 }2 \frac{ \theta_{{1}}'\left( P_{{4}}+P_{{3}}+P_{{2}} \right) }{\theta_{{1}} \left( P_{{4}}+P_{{3}}+P_{{2}} \right) }+ \frac{\mu_1^2 }2 -\frac12 c_2
 \right) A_1,\\
 B_3&= \left( \frac18
 \frac{ \theta_{{3}}'' \left( P_{{4}}+P_{{3}}+P_{{2}} \right) }{\theta_{{3}} \left( P_{{4}}+P_{{3}}+P_{{2}} \right) }+\frac{\mu_1 }2 \frac{ \theta_{{3}}'\left( P_{{4}}+P_{{3}}+P_{{2}} \right) }{\theta_{{3}} \left( P_{{4}}+P_{{3}}+P_{{2}} \right) }+ \frac{\mu_1^2 }2 -\frac12 c_2
 \right) B_1,\\
 \alpha_2&=\left( \frac18
 \frac{ \theta_{{3}}''\left( P_{{4}}+P_{{3}} \right) }{\theta_{{3}} \left( P_{{4}}+P_{{3}} \right) }+
 \frac18
 \frac{ \theta_{{3}}'' \left( P_{{2}} \right) }{\theta_{{3}} \left(P_{{2}} \right) }
 +
 \frac14
 \frac{ \theta_{{3}}'\left( P_{{4}}+P_{{3}} \right) }{\theta_{{3}} \left( P_{{4}}+P_{{3}} \right) }
 \,
 \frac{ \theta_{{3}}' \left( P_{{2}} \right) }{\theta_{{3}} \left(P_{{2}} \right) }
 +
 \frac12\left(
 \frac{ \theta_{{3}}' \left( P_{{4}}+P_{{3}} \right) }{\theta_{{3}} \left( P_{{4}}+P_{{3}} \right) }+
 \frac{ \theta_{{3}}' \left( P_{{2}} \right) }{\theta_{{3}} \left(P_{{2}} \right) }
 \right) \mu_1
 + \frac{\mu_1^2 }2
 \right)  \alpha_0,
\\
\beta_2&=\left( \frac18
 \frac{ \theta_{{3}}''\left( P_{{4}}+P_{{3}} \right) }{\theta_{{3}} \left( P_{{4}}+P_{{3}} \right) }+
 \frac18
 \frac{ \theta_{{1}}'' \left( P_{{2}} \right) }{\theta_{{1}} \left(P_{{2}} \right) }
 +
 \frac14
 \frac{ \theta_{{3}}'\left( P_{{4}}+P_{{3}} \right) }{\theta_{{3}} \left( P_{{4}}+P_{{3}} \right) }
 \,
 \frac{ \theta_{{1}}' \left( P_{{2}} \right) }{\theta_{{1}} \left(P_{{2}} \right) }
 +
 \frac12\left(
 \frac{ \theta_{{3}}' \left( P_{{4}}+P_{{3}} \right) }{\theta_{{3}} \left( P_{{4}}+P_{{3}} \right) }+
 \frac{ \theta_{{1}}' \left( P_{{2}} \right) }{\theta_{{1}} \left(P_{{2}} \right) }
 \right) \mu_1
 + \frac{\mu_1^2 }2
 \right)  \alpha_0,\\
  \alpha_3&=\left[
 \frac18  \left( 
 \frac{ \theta_{{3}}'''\left( P_{{4}}+P_{{3}} \right) }{\theta_{{3}} \left( P_{{4}}+P_{{3}} \right) }+
 3\, \frac{ \theta_{{3}}''\left( P_{{4}}+P_{{3}} \right) }{\theta_{{3}} \left( P_{{4}}+P_{{3}} \right) }
  \frac{ \theta_{{3}}' \left( P_{{2}} \right) }{\theta_{{3}} \left(P_{{2}} \right) }
+ 
3 \, \frac{\theta_{{3}}'\left( P_{{4}}+P_{{3}} \right) }{\theta_{{3}} \left( P_{{4}}+P_{{3}} \right) }
  \frac{ \theta_{{3}}'' \left( P_{{2}} \right) }{\theta_{{3}} \left(P_{{2}} \right) }
+
 \frac{ \theta_{{3}}''' \left( P_{{2}} \right) }{\theta_{{3}} \left(P_{{2}} \right) }
 \right) 
 \right. \\
 &\qquad +
 \frac{3\mu_1}4\left(
  \frac{\theta_{{3}}''\left( P_{{4}}+P_{{3}} \right) }{\theta_{{3}} \left( P_{{4}}+P_{{3}} \right) }+
2\, \frac{ \theta_{{3}}' \left( P_{{4}}+P_{{3}} \right) }{\theta_{{3}} \left( P_{{4}}+P_{{3}} \right) }
 \frac{ \theta_{{3}}' \left( P_{{2}} \right) }{\theta_{{3}} \left(P_{{2}} \right) }+ 
 \frac{ \theta_{{3}}'' \left( P_{{2}} \right) }{\theta_{{3}} \left(P_{{2}} \right) }
  \right) \\
  &\qquad
 +
  \frac{3\mu^2_1}2\left(
   \frac{\theta_{{3}}'\left( P_{{4}}+P_{{3}} \right) }{\theta_{{3}} \left( P_{{4}}+P_{{3}} \right) }+
   \frac{ \theta_{{3}}' \left( P_{{2}} \right) }{\theta_{{3}} \left(P_{{2}} \right) }
   \right) 
 + \mu_1^3 \Bigg]
 \frac{\alpha_0}6, \\
  \beta_3&=\left[
 \frac18  \left( 
 \frac{ \theta_{{3}}'''\left( P_{{4}}+P_{{3}} \right) }{\theta_{{3}} \left( P_{{4}}+P_{{3}} \right) }+
 3\, \frac{ \theta_{{3}}''\left( P_{{4}}+P_{{3}} \right) }{\theta_{{3}} \left( P_{{4}}+P_{{3}} \right) }
  \frac{ \theta_{{1}}' \left( P_{{2}} \right) }{\theta_{{1}} \left(P_{{2}} \right) }
+ 
3 \, \frac{\theta_{{3}}'\left( P_{{4}}+P_{{3}} \right) }{\theta_{{3}} \left( P_{{4}}+P_{{3}} \right) }
  \frac{ \theta_{{1}}'' \left( P_{{2}} \right) }{\theta_{{1}} \left(P_{{2}} \right) }
+
 \frac{ \theta_{{1}}''' \left( P_{{2}} \right) }{\theta_{{1}} \left(P_{{2}} \right) }
 \right) 
 \right. \\
 &\qquad +
 \frac{3\mu_1}4\left(
  \frac{\theta_{{3}}''\left( P_{{4}}+P_{{3}} \right) }{\theta_{{3}} \left( P_{{4}}+P_{{3}} \right) }+
2\, \frac{ \theta_{{3}}' \left( P_{{4}}+P_{{3}} \right) }{\theta_{{3}} \left( P_{{4}}+P_{{3}} \right) }
 \frac{ \theta_{{1}}' \left( P_{{2}} \right) }{\theta_{{1}} \left(P_{{2}} \right) }+ 
 \frac{ \theta_{{1}}'' \left( P_{{2}} \right) }{\theta_{{1}} \left(P_{{2}} \right) }
  \right) \\
  &\qquad
 +
  \frac{3\mu^2_1}2\left(
   \frac{\theta_{{3}}'\left( P_{{4}}+P_{{3}} \right) }{\theta_{{3}} \left( P_{{4}}+P_{{3}} \right) }+
   \frac{ \theta_{{1}}' \left( P_{{2}} \right) }{\theta_{{1}} \left(P_{{2}} \right) }
   \right) 
 + \mu_1^3 \Bigg]
 \frac{\alpha_0}6, 
\end{align*}

Now
\begin{align*}
 A_1-B_1 &=
 \frac12\,{
 \frac {\pi \,\theta_{{2}}\theta_{{3}}  \theta_{{4}} 
 \left(
 \theta_{{1}}\left( P_{{4}}+P_{{3}}+P_{{2}} \right) {\theta_{{1}} \left( P_{{3}} \right) \theta_{{1}} \left( P_{{2}} \right) \theta_{{1}} \left( P_{{4}} \right) }
 -
  \theta_{{3}}\left( P_{{4}}+P_{{3}}+P_{{2}} \right) {\theta_{{3}} \left( P_{{3}} \right) \theta_{{3}} \left( P_{{2}} \right) \theta_{{3}} \left( P_{{4}} \right) }
\right)
         }
        {
\theta_{{1}} \left( P_{{2}} \right) \theta_{{3}} \left( P_{{2}} \right) \theta_{{1}} \left( P_{{3}} \right) \theta_{{3
}} \left( P_{{3}} \right) \theta_{{1}} \left( P_{{4}} \right) \theta_{{3}} \left( P_{{4}} \right)
        }
         }
\,  {{\rm e}^{-\mu_{{1}}}},\\
&=-K\, \alpha_0,
\end{align*}
where we have employed the Weierstrass trisecant identity (W6)  which says for any $\alpha_{1,2,3}$ that
\[
\begin{split}
&\theta_1(\alpha_1+\alpha_2+\alpha_3)\theta_1(\alpha_1)\theta_1(\alpha_2)\theta_1(\alpha_3)-\theta_3(\alpha_1+\alpha_2+\alpha_3) \theta_3(\alpha_1)\theta_3(\alpha_2)\theta_3(\alpha_3) \\
&\qquad\qquad= -\theta_3(0)  \theta_3(\alpha_1+\alpha_2)
\theta_3(\alpha_2+\alpha_3)\theta_3(\alpha_1+\alpha_3)
\end{split}
\]
and $K=\pi \theta_3^2/2$.
Further
\[
 \alpha_1-\beta _1-\im\zeta_2 (A_1-B_1) =
 \left( 
 \frac12 \frac{ \theta_{{3}}' \left( P_{{2}} \right) }{\theta_{{3}} \left(P_{{2}} \right) }-
  \frac12 \frac{ \theta_{{1}}' \left( P_{{2}} \right) }{\theta_{{1}} \left(P_{{2}} \right) }
  +\im\zeta_2 K
\right)  \alpha_0=0
\]
upon making use of (\ref{curveident1}) and we then have the leading order pole term of  $\boldsymbol{\overline v}$ at $z=1-\xi$ behaving as
\[
\frac{1}{\xi^{3/2}}  {c\,\alpha_0 }{\sqrt{K}}\begin{pmatrix} 0\\1\\-1\\ 0\end{pmatrix}.
\]

Now from (\ref{relation2}) we see that
\[
\frac{A_1+B_1}{2\alpha_0}=x_1 -\im x_2.
\]
Writing
\begin{align*}
\frac{\alpha_1+\beta_1-\im \zeta_2(A_1+B_1)}{2\alpha_0}&=
 \frac12 \left(
 \frac{ \theta_{{3}}' \left( P_{{4}}+P_{{3}} \right) }{\theta_{{3}} \left( P_{{4}}+P_{{3}} \right) }+
\frac12  \frac{ \theta_{{3}}' \left( P_{{2}} \right) }{\theta_{{3}} \left(P_{{2}} \right) }
 +
\frac12 \frac{ \theta_{{1}}' \left( P_{{2}} \right) }{\theta_{{1}} \left(P_{{2}} \right) }
 -\frac12
 \frac{ \theta_{{3}}' \left( P_{{2}} \right) }{\theta_{{3}} \left(P_{{2}} \right) }+
 2\mu_1\right)  \\ \\
 &\qquad  -\im \zeta_2[ x_1 -\im x_2]  
 \intertext{and making use of (\ref{quot1a},  \ref{quot1b}, \ref{musum}) yields}
 &= \frac12 \left(
\left[ -\im K +2(x_2+\im x_1)\right]\zeta_2 -2\mu_1+2 x_3 +\im K \zeta_2 +2 \mu_1 \right)
 -\im \zeta_2[ x_1 -\im x_2]  \\
 &=x_3.
 \end{align*}
Similarly making use of (\ref{sub1a}, \ref{sub1b}, \ref{sub2a}, \ref{sub2b}) we find
\[A_2-B_2=\alpha_0 K x_3,
\]

\begin{align*}
\frac{\alpha_2-\beta_2-\im \zeta_2(A_2-B_2)}{\alpha_0}&=\frac{\alpha_2-\beta_2}{\alpha_0}
-\im \zeta_2 K x_3\\
&=
\left( 
 \frac18 \frac{ \theta_{{3}}'' \left( P_{{2}} \right) }{\theta_{{3}} \left(P_{{2}} \right) }
 -
  \frac18 \frac{ \theta_{{1}}'' \left( P_{{2}} \right) }{\theta_{{1}} \left(P_{{2}} \right) }
 +
 \frac14\left[ 
 \frac{ \theta_{{3}}'\left( P_{{4}}+P_{{3}} \right) }{\theta_{{3}} \left( P_{{4}}+P_{{3}} \right) }
 + 2\mu_1 \right] 
 \left[
 \frac{ \theta_{{3}}' \left( P_{{2}} \right) }{\theta_{{3}} \left(P_{{2}} \right) }
 -
 \frac{ \theta_{{1}}' \left( P_{{2}} \right) }{\theta_{{1}} \left(P_{{2}} \right) } 
 \right] 
 \right)\\
 &\qquad\qquad -\im \zeta_2 K x_3
 \intertext{and using ( \ref{quot1a},  \ref{quot1b}, \ref{quot3a}, \ref{quot3b},\ref{musum}) we obtain }
 &=
 -\im K\left( \beta(P_2) \zeta_2 +\eta_2\right)-\frac{\im K}2 \zeta_2\left(
 2(x_2+\im x_1)\zeta_2 +2 x_3 -2\beta(P_2)
 \right)
 -\im \zeta_2 K x_3\\
 &=-\im K \left(  2 x_3\zeta_2 +[ x_2+\im x_1]\zeta_2^2 +\eta_2\right)
 \intertext{and upon making use of the mini-twistor constraint simplifies to}
 &=-\im K (x_2-\im x_1)\\
 &=K (-x_1 -\im x_2).
\end{align*}
Thus we obtain the subleading pole 
\[
\frac{1}{\xi^{1/2}}  {c\,\alpha_0 }{\sqrt{K}}\begin{pmatrix} -x_1 -\im x_2 \\x_3\\ x_3 \\ x_1 -\im x_2\end{pmatrix}
\]
which agrees with (the complex conjugate of) (\ref{asymvp}). At this stage we have shown that the
first column has an expansion
\[
c \sqrt {K}\,{\frac {\theta_{{2}}  \theta_{{4}}  \theta_{{3}} \left( P_{{3}}+P_{{2}} \right) \theta_{{3}}
 \left( P_{{4}}+P_{{2}} \right) \theta_{{3}} \left( P_{{4}}+P_{{3}}
 \right) }{\theta_{{1}} \left( P_{{4}} \right) 
\theta_{{3}} \left( P_{{4}} \right) \theta_{{1}} \left( P_{{2}}
 \right) \theta_{{3}} \left( P_{{2}} \right) \theta_{{1}} \left( P_{{3
}} \right) \theta_{{3}} \left( P_{{3}} \right) }}\,  {{\rm e}^{-\mu_{{1}}}}\,
\left(   
\frac{1}{\xi^{3/2}} \begin{pmatrix} 0\\1\\-1\\ 0\end{pmatrix}+
\frac{1}{\xi^{1/2}} \begin{pmatrix} -x_1 -\im x_2 \\x_3\\ x_3 \\ x_1 -\im x_2\end{pmatrix}+\mathcal{O}
(\xi^{1/2})
\right)
\]
and analogously each column has expansion at $z=1-\xi$
\begin{equation}\label{barviexp1}
\boldsymbol{\overline v}_i =
N_i \left(   
\frac{1}{\xi^{3/2}} \begin{pmatrix} 0\\1\\-1\\ 0\end{pmatrix}+
\frac{1}{\xi^{1/2}} \begin{pmatrix} -x_1 -\im x_2 \\x_3\\ x_3 \\ x_1 -\im x_2\end{pmatrix}+\mathcal{O}
(\xi^{1/2})
\right)
\end{equation}
where
\[
N_i:= c \sqrt {K}\,\theta_{{2}}  \theta_{{4}}\, \frac{\prod_{j<k \  j,k\ne i}\theta_3(P_j+P_k)}
{\prod_{r\ne i}\theta_1(P_r)\theta_3(P_r)}\, {{\rm e}^{-\mu_{{i}}}}.
\]

The remaining terms of the theorem follow from (\ref{finitepiece}) and the relations of Lemma \ref{Relations}
and Lemma \ref{twoptrelations}.
Thus for example the fourth entry of $\boldsymbol{\overline v}_i$ uses (\ref{relation2}) and the second entry
(\ref{relation3}).

\subsection{Proof of Theorem \ref{smonodromy}}\label{proofsmonodromy}
We have seen that
\begin{align*}
\boldsymbol{\overline v}_i(1-\xi)&=
\frac1{\theta_1^2(\xi/2)}\,\frac1{\sqrt{2}}\,
\left( 1_2\otimes \begin{pmatrix} p(1-\xi)&0 \\  0& p(-1+\xi) \end{pmatrix} \right)\,
\lambda_i(\xi)
\intertext{where we now define}
\lambda_i(\xi)&=\begin{pmatrix} -\im \zeta_j  \\   1\end{pmatrix} \otimes 
\begin{pmatrix} \hat A(\xi)-\hat B(\xi)\\  \hat A(\xi)+\hat B(\xi)\ \end{pmatrix}
+ 
 \begin{pmatrix} 1 \\   0\end{pmatrix} \otimes 
 \begin{pmatrix} \hat\alpha(\xi)-\hat\beta(\xi)  \\  \hat\alpha(\xi)+\hat\beta(\xi)   \end{pmatrix}
.
\end{align*}
Next we easily obtain
\begin{lemma}
\begin{align*}
 A(-1+\xi)&=\hat A( -\xi )e\sp{2\mu_i},\quad
B(-1+\xi)=-\hat B(-\xi)e\sp{2\mu_i},\\
\alpha(-1+\xi)&=\hat\alpha(-\xi)e\sp{2\mu_i},\quad
\beta(-1+\xi)=-\hat\beta(-\xi)e\sp{2\mu_i},
\end{align*}
\end{lemma}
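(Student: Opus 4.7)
The plan is to verify each of the four stated identities by direct substitution using the explicit formulas for $A,B,\alpha,\beta$ displayed just above Theorem \ref{finiteterm}, together with three ingredients: (i) the theta half-period identities (\ref{thetashift2}), (\ref{thetashift4}); (ii) the theta quasi-periodicity relations under shifts by $N\tau$ from Appendix \ref{thetafunctidentapp}; and (iii) the Abel constraint $\alpha_1+\alpha_2+\alpha_3+\alpha_4=N\tau$ of Proposition \ref{abjacprop}. The whole lemma is really a statement that $A,B,\alpha,\beta$, as functions of $z$, satisfy a simple functional equation under $z\mapsto z-2$, namely that their theta-function prefactors are multiplied by a definite sign, and the exponential $e^{-z\mu_i}$ contributes $e^{2\mu_i}$.

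Concretely, since $\hat A(-\xi)=A(1+\xi)$, I would compute $A(-1+\xi)/A(1+\xi)$. The half-period identity $\theta_2(z\pm1/2)=\mp\theta_1(z)$ gives $\theta_2((-1+\xi)/2)=\theta_1(\xi/2)$ and $\theta_2((1+\xi)/2)=-\theta_1(\xi/2)$, supplying one factor of $-1$. In the second $\theta_2$-factor, writing $\sum_{r\ne i}\alpha_r=N\tau-\alpha_i$ and applying the $N\tau$-quasi-periodicity $\theta_2(z+N\tau)=e^{-i\pi[N^2\tau+2Nz]}\theta_2(z)$ followed by the same half-period identity yields $\pm\theta_1(\alpha_i\pm\xi/2)$ up to exponential factors; taking the ratio, the $\tau$-dependent pieces cancel and the remaining phase is $e^{-2i\pi N}=1$, while the two half-period sign flips cancel. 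The only surviving ratio is therefore $e^{-(-1+\xi)\mu_i}/e^{-(1+\xi)\mu_i}=e^{2\mu_i}$, giving the first claim. The second identity is obtained identically except that the factor $\theta_4(\sum_{r\ne i}\alpha_r-z/2)$, by $\theta_4(z\pm1/2)=\theta_3(z)$, contributes \emph{no} sign, while the $\theta_2(z/2)$ prefactor still contributes $-1$; this produces the overall minus sign in the formula for $B$.

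For $\alpha$ and $\beta$ the roles of $\theta_2$ and $\theta_4$ are swapped. In $\alpha$ both $\theta_4(\alpha_j-z/2)$ and $\theta_4(\alpha_k+\alpha_l-z/2)$ transform by $\theta_4(z\pm1/2)=\theta_3(z)$ with no signs, so after applying the $N\tau$-periodicity to the second (for which $\alpha_k+\alpha_l=N\tau-\alpha_i-\alpha_j$) the ratio is again purely $e^{2\mu_i}$. In $\beta$ the factor $\theta_2(\alpha_j-z/2)$ replaces $\theta_4(\alpha_j-z/2)$, and the identity $\theta_2(z\pm1/2)=\mp\theta_1(z)$ supplies exactly one extra minus sign, giving the claimed $-\hat\beta(-\xi)e^{2\mu_i}$.

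The main obstacle is pure bookkeeping: keeping track of the signs coming from $(-1)^N$ in $\theta_4$-periodicity versus the signless $\theta_3$-periodicity, combined with the half-period $\mp$ signs from $\theta_2$, and ensuring the residual phase $e^{\pm i\pi N}$ factors produce an overall $e^{-2i\pi N}=1$. Once this is organised, no curve-specific input beyond $\sum_k\alpha_k=N\tau$ is needed, and the four identities follow line by line. This lemma then feeds directly into the proof of Theorem \ref{smonodromy} by substituting into the formula (\ref{defvbarz}) for $\boldsymbol{\overline v}_i(z)$, using $p(z)p(-z)=1$ and the parity of $\theta_1^2(\xi/2)$ in $\xi$ to extract the stated relation between the coefficients $\boldsymbol{\overline v}_{i,s}'$ and $\boldsymbol{\overline v}_{i,s}$.
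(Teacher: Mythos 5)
Your verification is correct and is exactly the direct computation the paper intends (the paper simply asserts this lemma with ``we easily obtain'' and supplies no argument): each identity follows from the half-period shifts $\theta_2(w\pm 1/2)=\mp\theta_1(w)$ and $\theta_4(w\pm 1/2)=\theta_3(w)$ applied to the $z$-dependent theta factors, combined with the exponential ratio $e^{-(-1+\xi)\mu_i}/e^{-(1+\xi)\mu_i}=e^{2\mu_i}$, and your sign bookkeeping ($-1$ from each $\theta_2$-factor, none from the $\theta_4$-factors) reproduces the stated signs in all four cases. One small simplification: the Abel constraint and the $N\tau$ quasi-periodicity are not actually needed here, since the half-period identity applies directly to $\theta_2\bigl(\sum_{r\ne i}\alpha_r - z/2\bigr)$ regardless of the fixed part of its argument, and, as you observe, the $N$-dependent phases cancel in the ratio in any case.
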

Then from (\ref{defvbarz}) and the previous lemma,
\begin{align*}
\boldsymbol{\overline v}_i(-1+\xi)
&=
\frac1{\theta_1^2(\xi/2)}\,\frac1{\sqrt{2}}\,
\left( 1_2\otimes \begin{pmatrix} p(-1+\xi)&0 \\  0& p(1-\xi) \end{pmatrix} \right)
\left(1_2\otimes \begin{pmatrix}0&1\\ 1& 0\end{pmatrix}\right)\, \lambda_i(-\xi) e\sp{2\mu_i}\\
&=
\left( 1_2\otimes \begin{pmatrix}0&1\\ 1& 0\end{pmatrix} e\sp{2\mu_i}\right)
\frac1{\theta_1^2(\xi/2)}\,\frac1{\sqrt{2}}\,
\left( 1_2\otimes \begin{pmatrix} p(1-\xi)&0 \\  0& p(-1+\xi) \end{pmatrix} \right)
\, \lambda_i(-\xi) \\
&=
\left( 1_2\otimes \begin{pmatrix}0&\frac{p(1+\xi)}{ p(1-\xi)}\\ 
\frac{p(1-\xi)}{ p(1+\xi)}& 0\end{pmatrix} e\sp{2\mu_i}\right)
\frac1{\theta_1^2(\xi/2)}\,\frac1{\sqrt{2}}\,
\left( 1_2\otimes \begin{pmatrix} p(1+\xi)&0 \\  0& p(-1-\xi) \end{pmatrix} \right)
\, \lambda_i(-\xi) \\
\intertext{where we have used $p(z)p(-z)=1$. Upon comparing this with}
\boldsymbol{\overline v}_i(1-\xi)&=
\frac1{\theta_1^2(\xi/2)}\,\frac1{\sqrt{2}}\,
\left( 1_2\otimes \begin{pmatrix} p(1-\xi)&0 \\  0& p(-1+\xi) \end{pmatrix} \right)
\lambda_i(\xi)= \sum_{s\ge 0} \boldsymbol{\overline v}_{i, s}\, \xi^{s-5/2}\\
\intertext{we obtain}
\boldsymbol{\overline v}_i(-1+\xi)&=- \im 
\left( 1_2\otimes \begin{pmatrix}0&\frac{p(1+\xi)}{ p(1-\xi)}\\ 
\frac{p(1-\xi)}{ p(1+\xi)}& 0\end{pmatrix} e\sp{2\mu_i}\right)
\sum_{s\ge 0} (-1)^s\,\boldsymbol{\overline v}_{i, s}\, \xi^{s-5/2}.
\end{align*}
Here we use the definition of $p(z)$ and the periodicity of the theta functions to see that
$p^2(1-\xi) =- p^2(1+\xi)$ to give that $p(1+\xi) =\pm \im p(1-\xi)$ yielding
\begin{equation}\label{monodromy}
\boldsymbol{\overline v}_i(-1+\xi)=\pm
\left( 1_2\otimes \begin{pmatrix}0&1\\ -1& 0\end{pmatrix} e\sp{2\mu_i}\right)
\sum_{s\ge 0} (-1)^s\,\boldsymbol{\overline v}_{i, s}\, \xi^{s-5/2}.
\end{equation}
Finally we may use continuity and the explicit formula (\ref{defvbarz}) to determine the overall sign,
which is found to be $1$, so establishing the theorem.

\subsection{Proof of Proposition \ref{Dream}}\label{proofDream}
 Write the (2,1), (3,1) and (4,1)  matrix elements of the matrix equation (\ref{ort1}) which are linear equations with respect to $a_1,b_1,c_1$. Solving these via Kramer's rule we find quantities $a_1,b_1,c_1$ in the form of symmetric functions of $\zeta_{2,3,4}$. Thus, for example,
\begin{equation}
a_1= -\frac{\imath S_+^2}{8} \frac{\zeta_2\zeta_3+\zeta_2\zeta_4+\zeta_3\zeta_4}{\zeta_2\zeta_3\zeta_4} 
+ \frac{\imath}{8}\zeta_2\zeta_3\zeta_4 S_-^2 +x_+x_3 .
\end{equation}
Now from the mini-twistor constraint,  
\begin{align}\begin{split}
&\zeta_1\zeta_2\zeta_3\zeta_4= \frac{K^2-4x_+^2}{K^2-4 x_-^2} ,\\
&\zeta_1+\zeta_2+\zeta_3+\zeta_4=-\frac{16 \imath x_3 x_-}{K^2-4 x_-^2} ,\\
&\zeta_1(\zeta_2+\zeta_3+\zeta_4)+\zeta_2\zeta_3+\zeta_2\zeta_4+\zeta_3\zeta_4=\frac{-8x_+x_-
+16x_3^2+2K^2(1-2{k'}^2)}{K^2-4x_-^2}.\end{split} \label{symmetric}
\end{align}
Using these we may solve to give $a_1$ as given by (\ref{abc}); we similarly obtain $b_1$ and $c_1$.
Now the (1,1) diagonal entry is
\begin{align}
\frac{\imath}{8}S_-^2\zeta_j^3-(c_1-x_-x_3)\zeta_1^2+2\imath b_1 \zeta_j-x_+x_3+a_1+\frac{\imath}{8}\frac{S_+^2}{\zeta_j}=\mathfrak{D}_{1,1}\delta_{1,j} .\label{eqns}
\end{align}
Substituting the expressions for  $a_1$, $b_1$ and $c_1$ leads to (\ref{frakD}).
The same arguments work for the remaining  entries to $\widetilde{\boldsymbol{\bar{\mathfrak{v}}}}_2$. The  mini-twistor constraint is used at all stages of the derivation.

\subsection{Proof of Theorem \ref{constantmatrix}}\label{proofconstantmatrix}
The matrix is skew-hermitian and the block structure is preserved by left and right multiplication by diagonal matrices, so it suffices to show that 
$$ ( {{V}}\sp{\dagger}\,\mathcal{Q}^{-1}\mathcal{H}\,{{V} })\sp{-1}=
{{W}}\sp{\dagger}\,\mathcal{H}^{-1}\mathcal{Q}\,{{W} }= -\frac1{r^2} \,
{{W}}\sp{\dagger}\,\mathcal{Q}\mathcal{H}\,{{W} }
$$
has the desired structure. The constancy of the matrix enables us to choose any convenient $z$ to evaluate this; we will choose $z=0$ where $C(0)=1_2$. Then (\ref{eqparam}, \ref{uniform1}, \ref{formW}) give
\begin{align*}
W_k&= \begin{pmatrix}1\\ i\zeta_k\end{pmatrix}\otimes\mathcal{O}
\begin{pmatrix}-\im \dfrac{K(\zeta_k ^2+1)+2\im \eta_k}{2 K k' \zeta_k} \\  1\end{pmatrix} d_k ,\\
\overline{W_k}&= \begin{pmatrix}\zeta_{\mathcal{J}(k)  }\\ i \end{pmatrix}\otimes\mathcal{O}
\begin{pmatrix}-\im \dfrac{K(\zeta_{\mathcal{J}(k)  } ^2+1)+2\im \eta_{\mathcal{J}(k)  }}{2 K k' \zeta_{\mathcal{J}(k)  }} \\  1\end{pmatrix} d_{\mathcal{J}(k)  }' ,
\end{align*}
for appropriate nonzero $d_k$, $d_k'$. Now 
$$(\mathcal{Q}\mathcal{H})(0)=
\begin{pmatrix}
0&iKx_{{2}}&-K{k'}\, \left( -x_{{1}}+
ix_{{2}} \right) &-x_{{3}}K\\ iKx_{{2}}&0&-x_{{3}}K&
K{k'}\, \left( -x_{{1}}+ix_{{2}} \right) \\ -K{
k'}\, \left( ix_{{2}}+x_{{1}} \right) &x_{{3}}K&0&-iKx_{{2}}
\\ x_{{3}}K&K{k'}\, \left( ix_{{2}}+x_{{1}}
 \right) &-iKx_{{2}}&0\end{pmatrix}.
$$
Substitution of these into ${{W}}\sp{\dagger}\,\mathcal{H}^{-1}\mathcal{Q}\,{{W} }$ and using
(\ref{AtiyahWard}) yields $(i,j)$-matrix entries which, for $j\ne \mathcal{J}(i)$ have the form 
poly$(\zeta_i, \zeta_j)/\zeta_i\zeta_j$ and this polynomial is in the ideal generated by each of the
quartics that  $\zeta_{i,j}$ individually satisfy.\footnote{An elementary way to verify this is as follows.
Let $q_i$ be the quartic that $\zeta_i$ satisfies. Then the resultant of poly$(\zeta_i, \zeta_j)$ and
$q_j$ with respect to $\zeta_j$ is a polynomial in $\zeta_i$ with a factor (amongst others) of $q_i^3$.}

The nonzero elements of the matrix (\ref{constancystructure}) $\mathfrak{f}_j$ $j=1,\ldots,4$ may be represented in the $\theta$ function form (\ref{constancystructuref}) by
using duplication $\theta$-formulae, formulae (7.8) 
representing $\theta$-quotients in term of coordinates of the curve
together with the relations (for all permutations of $\alpha_j$)
\begin{align}
&\theta_1^2(\alpha_1)\theta_3^2(\alpha_1)\frac{ \theta_3(\alpha_2+\alpha_3) \theta_3(\alpha_2+\alpha_4) \theta_3(\alpha_3+\alpha_4) }{ 
 \theta_1(\alpha_2-\alpha_1) \theta_1(\alpha_3-\alpha_1)\theta_1(\alpha_4-\alpha_1) }=\frac{4\zeta_1 (kk')^{3/2} K^4 \mathrm{e}^{-\imath \pi N^2\tau}}{\pi^2 (4x_-x_3\zeta_1^3+\imath
R\zeta_1^2 +12 x_+ x_3 \zeta_1-\imath S_+^2 )} .
\end{align}

\subsection{Proof of Lemma \ref{energydensitylemma}}\label{proofenergydensitylemma}
We want to compute 
\begin{align*}
-4\mathcal{E}(\boldsymbol{x})=
\left\{\frac{\partial^2}{\partial x_1^2}+\frac{\partial^2}{\partial x_2^2}+\frac{\partial^2}{\partial x_3^2} \right\}\Tr
\left(  \mathcal{H}.\mathcal{G}^{-1}. \mathcal{H}.\mathcal{G}^{-1} \right).
\end{align*}
Recall we have defined
\begin{align*}
\mathcal{G}_{1,i} &= \mathcal{G}^{-1}\,. \frac{\partial \mathcal{G}}{\partial x_i} \,. \mathcal{G}^{-1},\quad \mathcal{G}_{2,i} = \mathcal{G}^{-1}\,.
 \frac{\partial^2 \mathcal{G}}{\partial x_i^2} \,. \mathcal{G}^{-1}.
\end{align*}
First observe that 
\begin{align*}
\frac{\partial}{\partial x_i} \left( \mathcal{H}.\mathcal{G}^{-1}\right)&=\frac{\partial \mathcal{H}}{\partial x_i} . \mathcal{G}^{-1}
+\mathcal{H} . \frac{\partial  \mathcal{G}^{-1}}{\partial x_i} 
=\frac{\partial \mathcal{H}}{\partial x_i} . \mathcal{G}^{-1}
-\mathcal{H} . \mathcal{G}^{-1} . \frac{\partial  \mathcal{G}}{\partial x_i}. \mathcal{G}^{-1} 
=\frac{\partial \mathcal{H}}{\partial x_i} . \mathcal{G}^{-1}
-\mathcal{H} . \mathcal{G}_{1,i}   .
\end{align*}
Further, 
\begin{align*}
\frac{\partial^2}{\partial x_i^2}\left(
  \mathcal{H}.\mathcal{G}^{-1}\right)&=\frac{\partial^2 \mathcal{H}}{\partial x_i^2}. \mathcal{G}^{-1}              -2\frac{\partial \mathcal{H}}{\partial x_i}. \mathcal{G}_{1,i}  - \mathcal{H} . \frac{\partial \mathcal{G}_{1,i}}{\partial x_i}  .
\end{align*}
Because 
\begin{align*}
\frac{\partial \mathcal{G}_{1,i}}{\partial x_i} &= \frac{\partial \mathcal{G}^{-1}}{\partial x_i}\,. \frac{\partial \mathcal{G}}{\partial x_i}. \mathcal{G}^{-1}
+ \mathcal{G}^{-1}\,. \frac{\partial^2 \mathcal{G}}{\partial x_i^2} \,. \mathcal{G}^{-1}+ \mathcal{G}^{-1}\,. \frac{\partial \mathcal{G}}{\partial x_i} \,.\frac{\partial \mathcal{G}^{-1}}{\partial x_i}\\
&= -\mathcal{G}_{1,i}. \frac{\partial \mathcal{G}}{\partial x_i}. \mathcal{G}^{-1}
+ \mathcal{G}_{2,i}- \mathcal{G}^{-1}\,. \frac{\partial \mathcal{G}}{\partial x_i} \,.\mathcal{G}_{1,i}
\end{align*}
we get 
\begin{align*}
\frac{\partial^2}{\partial x_i^2} \left( \mathcal{H}.\mathcal{G}^{-1}\right)
&=\frac{\partial^2 \mathcal{H}}{\partial x_i^2}. \mathcal{G}^{-1}-2\frac{\partial \mathcal{H}}{\partial x_i} . \mathcal{G}_{1,i} + \mathcal{H} . \left[   \mathcal{G}_{1,i}. \frac{\partial \mathcal{G}}{\partial x_i}. \mathcal{G}^{-1}
- \mathcal{G}_{2,i}+ \mathcal{G}^{-1}\,. \frac{\partial \mathcal{G}}{\partial x_i} \,.\mathcal{G}_{1,i}   \right] .
\end{align*}
We conclude that
\begin{align*}
\frac{\partial^2}{\partial x_i^2} &\left(  \mathcal{H}.\mathcal{G}^{-1}. \mathcal{H}.\mathcal{G}^{-1} \right)\\
&=\left\{\frac{\partial^2 \mathcal{H}}{\partial x_i^2}. \mathcal{G}^{-1}-2\frac{\partial \mathcal{H}}{\partial x_i} . \mathcal{G}_{1,i} + \mathcal{H} . \left[     \mathcal{G}_{1,i}. \frac{\partial \mathcal{G}}{\partial x_i}. \mathcal{G}^{-1}
- \mathcal{G}_{2,i}+ \mathcal{G}^{-1}\,. \frac{\partial \mathcal{G}}{\partial x_i} \,.\mathcal{G}_{1,i}   \right]\right\}.  \mathcal{H}.\mathcal{G}^{-1}\\
&\quad
+ \mathcal{H}.\mathcal{G}^{-1}.\left\{\frac{\partial^2 \mathcal{H}}{\partial x_i^2}. \mathcal{G}^{-1}
-2\frac{\partial \mathcal{H}}{\partial x_i} . \mathcal{G}_{1,i} + \mathcal{H} . \left[   \mathcal{G}_{1,i}. \frac{\partial \mathcal{G}}{\partial x_i}. \mathcal{G}^{-1}
- \mathcal{G}_{2,i}+ \mathcal{G}^{-1}\,. \frac{\partial \mathcal{G}}{\partial x_i} \,.\mathcal{G}_{1,i}   \right]\right\}\\
&\quad +2 \left[\frac{\partial \mathcal{H}}{\partial x_i} . \mathcal{G}^{-1}
-\mathcal{H} . \mathcal{G}_{1,i}\right]^2 .
\end{align*}
Upon noting
\begin{align*}
\frac{\partial^2}{\partial x_i^2} &\Tr
\left( \mathcal{H}.\mathcal{G}^{-1}. \mathcal{H}.\mathcal{G}^{-1} \right)\\
&=2\,\mathrm{Trace}\left(\left\{\frac{\partial^2 \mathcal{H}}{\partial x_i^2}. \mathcal{G}^{-1}-2\frac{\partial \mathcal{H}}{\partial x_i} . \mathcal{G}_{1,i} + \mathcal{H} . \left[     \mathcal{G}_{1,i}. \frac{\partial \mathcal{G}}{\partial x_i}. \mathcal{G}^{-1}
- \mathcal{G}_{2,i}+ \mathcal{G}^{-1}\,. \frac{\partial \mathcal{G}}{\partial x_i} \,.\mathcal{G}_{1,i}   \right]\right\}.  \mathcal{H}.\mathcal{G}^{-1}\right)\\
&\quad +2\,\mathrm{Trace}\left( \left[\frac{\partial \mathcal{H}}{\partial x_i} . \mathcal{G}^{-1}
-\mathcal{H} . \mathcal{G}_{1,i}\right]^2\right)
\end{align*}
the Lemma follows.

\subsection{Proof of Proposition \ref{energy_density_origin}}\label{appendix_energy_density_origin}

Let matrices $\mathcal{G}$ and $\mathcal{H}$ be given by (\ref{defGH}) and use (\ref{energy}) .
Each of the terms depend on the solutions $\zeta_j$ of the Atiyah-Ward equation and their derivatives. 
We note that at $\boldsymbol{x}=0$  we have $\zeta_j =\pm k' \pm \imath k  $. Then,

\begin{align*}
\mathcal{G}(\boldsymbol{0})&=8k^2K^2\left( \begin{array}{cc} 1&0\\0&1 \end{array} \right),\\
\left.\frac{\partial}{\partial x_1} \mathcal{G}(\boldsymbol{x})\right|_{\boldsymbol{x}=0}&=16Kk(E-K)
\left( \begin{array}{rr} -1&0\\0&1 \end{array} \right) ,\\
\left.\frac{\partial}{\partial x_2} \mathcal{G}(\boldsymbol{x})\right|_{\boldsymbol{x}=0}&=16EK\frac{k^2}{{k'}^2}
\left( \begin{array}{rr} -1&0\\0&1 \end{array} \right),\\
   \left.\frac{\partial}{\partial x_3} \mathcal{G}(\boldsymbol{x})\right|_{\boldsymbol{x}=0}&=0,\\
\left.\frac{\partial^2}{\partial x_1^2} \mathcal{G}(\boldsymbol{x})\right|_{\boldsymbol{x}=0}&=16EK\frac{64}{k^2}
\left((E-K)^2(2k^2-1)-k^2\right)
\left( \begin{array}{rr} 1&0\\0&1 \end{array} \right),\\
\left.\frac{\partial^2}{\partial x_2^2} \mathcal{G}(\boldsymbol{x})\right|_{\boldsymbol{x}=0}&=\frac{128E^2k^2}{{k'}^2}
\left( \begin{array}{rr} 1&0\\0&1 \end{array} \right) ,\\
\left.\frac{\partial^2}{\partial x_3^2} \mathcal{G}(\boldsymbol{x})\right|_{\boldsymbol{x}=0}&=-\frac{64}{k^2{k'}^2}
\left[ ( E - {k'}^2K )^2-k^2{k'}^2  \right]
\left( \begin{array}{rr} 1&0\\0&1 \end{array} \right).
\end{align*}

Also

\begin{align*}
\mathcal{H}(\boldsymbol{0})&=8\imath K [ K(1+{k'}^2)-2E ]\left( \begin{array}{cc} 0&1\\1&0 \end{array} \right),\\
\left.\frac{\partial}{\partial x_1} \mathcal{H}(\boldsymbol{x})\right|_{\boldsymbol{x}=0}&=16\imath (E-K)[ 2E-K(1+{k'}^2) ]
\left( \begin{array}{rr} -1&0\\0&1 \end{array} \right),\\
\left.\frac{\partial}{\partial x_2} \mathcal{H}(\boldsymbol{x})\right|_{\boldsymbol{x}=0}&=\frac{16\imath k^2}{k'}(EK-2)
\left( \begin{array}{rr} 1&0\\0&-1 \end{array} \right),\\
   \left.\frac{\partial}{\partial x_3} \mathcal{H}(\boldsymbol{x})\right|_{\boldsymbol{x}=0}&=0,\\
\left.\frac{\partial^2}{\partial x_1^2} \mathcal{H}(\boldsymbol{x})\right|_{\boldsymbol{x}=0}&=\frac{64\imath}{Kk^2}[ K(1+{k'}^2)-2E ]
\left( \begin{array}{rr} 1&0\\0&1 \end{array} \right)
\\
&+ \frac{64\imath}{K k^2}\left[ (K(E-K)-2)(E-K){k'}^2-Kk^2  \right]\left( \begin{array}{rr} 0&1\\1&0 \end{array} \right) ,\\
\left.\frac{\partial^2}{\partial x_2^2} \mathcal{H}(\boldsymbol{x})\right|_{\boldsymbol{x}=0}&=\frac{64\imath k^2E}{K{k'}^2}(2-EK)
\left( \begin{array}{rr} 1&0\\0&1 \end{array} \right)\\
&+\frac{64\imath E^2}{K {k'}^2} [K(1+{k'}^2)-2E] \left( \begin{array}{rr} 0&1\\1&0 \end{array} \right), \\
\left.\frac{\partial^2}{\partial x_3^2} \mathcal{H}(\boldsymbol{x})\right|_{\boldsymbol{x}=0}&=
-\frac{64\imath}{K{k'}^2k^2} [ K(E-K{k'}^2)^2+Kk^2{k'}^2+2(K{k'}^2-E)  ]
\left( \begin{array}{rr} 0&1\\1&0 \end{array} \right).
\end{align*}
Bringing all these together in (\ref{energy}) yields (\ref{energy_density0}).
To find (\ref{energy_density_limits}) we use for $k=0,1$ the corresponding expansions of $K(k)$ and $E(k)$,

\begin{align}\begin{split}
\mathcal{E}_{\boldsymbol{x}=0,k\sim 0} &=\frac{8}{\pi^4}(\pi^2-8)^2+\frac{(\pi^2-8)(\pi^2-16)}{4\pi^4} k^4 +O(k^6),\\
\mathcal{E}_{\boldsymbol{x}=0,k'\sim 0} &=32 {k'}^2+O({k'}^4).
\end{split}
\end{align}

\subsection{Proof of Proposition \ref{higgs13k0}}\label{higgs13k0app}
In the case $x_2=0$ we have from (\ref{rprm}) that $R_+=R_-=R=\sqrt{\pi^2-16r^2}$, $x_{+}=x_{-}=x_1$. We order the solutions
of Atiyah-Ward equation in such the way that in the limit $x_2=x_3=0$ they coincide with our case {\bf II}, 
$|x_1|\leq \frac{\pi}{4}$ of the $x_1$-axis, namely, 
\begin{align}
\zeta_1=\frac{4\imath x_3-R}{4 x_1-\pi},\; \zeta_2=\frac{4\imath x_3+R}{4 x_1+\pi},\; 
\zeta_3=\frac{4\imath x_3-R}{4 x_1+\pi},\; \zeta_4=\frac{4\imath x_3+R}{4 x_1-\pi}.
\end{align}
The corresponding exponentials $\mu_i$ are
\begin{align}
\mu_1=\lambda,\;\; \mu_2=-\lambda+\frac12\imath\pi,\;\; \mu_3=\lambda+\frac12\imath\pi,\;\; \mu_4=-\lambda,
\end{align} 
with
\begin{equation}
\lambda= \frac{\imath}{4}R=\frac{\imath}{4}\sqrt{\pi^2 -16 r^2 }, \quad   r=\sqrt{x_1^2+x_3^3}.
\end{equation}
We then obtain
\begin{align} \begin{split}
\text{Gram}&=32r^2-2\pi^2\cosh^2(2\lambda)1_2\\
\text{Higgs}'_1&=\imath(32r^2-2\pi^2\cosh^2(2\lambda)) \sigma_1\\
\text{Higgs}'_2&=\frac{4\pi^2}{R}\sinh(4\lambda) \sigma_1\\
\text{Higgs}'_3&=-64\imath r^2\sigma_1
\end{split}
\end{align}
and the corresponding normalized Higgs field is
\begin{equation}
\Phi_{\mathrm{norm}}=\imath \sigma_1\left[
\frac{ \cosh(2\lambda)^2\pi^2R+2 \imath\pi^2\sinh(4\lambda)+16r^2R } 
{R(\pi^2\cosh^2(2\lambda)-16 r^2)}\right]
\end{equation}
and formula (\ref{Hfield}) follows.

Although this formula was derived under the assumption $16r^2 < \pi^2$ the discussion of \S11 shows that
 (\ref{Hfield}) admits  a continuation  to the area   $16r^2 > \pi^2$. Indeed the points $r= \pm \pi/4$ are ordinary points of the function
$H(r)$, namely $$H(\pm \pi/4)=-\frac{12-\pi^2}{3(\pi^2-4)}\sim 0.121.$$ 


\subsection{Proof of Proposition \ref{higgs14k0x3plane}} \label{higgs13k0app1}
In the case $x_3=0$ we have from (\ref{rprm})
\begin{equation}
R_+=\sqrt{\pi^2-16 r^2 +8\imath \pi x_2}, 
\quad R_-=\sqrt{\pi^2-16 r^2 -8\imath \pi x_2}, \quad r^2=x_1^2+x_2^2.
\end{equation}
Solutions of Atiyah-Ward equation are ordered as follows
\begin{align}
\zeta_1=\frac{R_+}{\pi-4 x_-}, \;\zeta_2=\frac{R_-}{\pi+4 x_-},\;
\zeta_3=-\frac{R_-}{\pi+4 x_-}, \;\zeta_4=-\frac{R_+}{\pi-4 x_-}.
\label{zetaPlaneX3}
\end{align}
The associated $\mu_i$ are then
\begin{align}
\mu_1=\frac{1}{4}\imath R_+,\; \mu_2=-\frac{\imath}{4}R_-+\frac12\imath \pi,\; 
\mu_3=\frac{\imath}{4}R_++\frac12\imath \pi, \; \mu_4=\frac{1}{4}\imath R_-. \label{muPlaneX3}
\end{align}

The usual calculation leads to the following expression for Gram matrix, 
\begin{align}
\begin{split}
\text{Gram}=\frac12 G(x_1,x_2)\mathrm{Diag}
\left(\mathrm{exp}\left( \frac{\imath}{2} ( R_+-R_- ) \right),\mathrm{exp}\left( -\frac{\imath}{2} ( R_+-R_- ) \right)   \right)
\end{split}\label{GramX3}
\end{align}
where
\begin{equation}
G(x_1,x_2)= (\pi^2+16 r^2)\sin\frac{R_+}{2} \sin\frac{R_-}{2} 
-R_+R_-\left(\cos\frac{R_+}{2} \cos\frac{R_-}{2}+1 \right) .\label{GramX3A}
\end{equation} 

It is easy to see that $  G(0,x_2) \equiv 0$ and we have the
series expansion  in $\xi\sim 0$,
\begin{equation}  
G(\xi,x_2)=-\frac{32\mathrm{cosh}^2(2x_2)\pi^2}
{\pi^2+16 x_2^2}\xi^2+ O(\xi^4) .\label{GramExpan}      \end{equation}
Introducing the shorthand, 
\begin{align}
S_{\pm}=\sin\left( \frac12 R_{\pm}  \right), 
\quad C_{\pm}=\cos\left( \frac12 R_{\pm}  \right)\quad
E_{\pm}=\mathrm{exp}\left( \frac14 \imath R_{\pm}  \right) ,\label{notations2}
\end{align}
we arrive at the formula (\ref{higgsthm}) upon
substitution  of (\ref{zetaPlaneX3}), (\ref{muPlaneX3}) 
into the Higgs field $\Phi$.

To check the result obtained, consider the reduction of   (\ref{HfieldxPlanex3})   
to the $x_1$ and $x_2$ axes.
To reduce expression (\ref{HfieldxPlanex3})  $x_1$-axis set $x_2=0$ and 
\begin{equation}  R_+=R_-=\sqrt{\pi^2-16x_1^2}=R .   \end{equation}
Substituting these into the general formula together with $k=0$, $K=E=\pi/2$ one obtains
Ward's expression (\ref{ward2}) with $r=x_1$,
\begin{equation}
H(x_1,0,0)=-1 - \frac{2\pi^2 C(2S-RC)}{R(\pi^2C^2-16 x_1^2)},
\end{equation}
where  $S=\sin(R/2), C=\cos(R/2)$. 

To reduce expression (\ref{HfieldxPlanex3}) to the  $x_2$-axis set $x_1=0$ and 
\begin{equation}  R_+=\pi+4\imath x_2, \quad  R_-=\pi-4\imath x_2.  \end{equation}
Now in this case $G(0,x_2)=0$ for all $x_2 \in \mathbb{R}$ while also the numerator of the 
expression vanishes and therefore the value of $H(0,x_2,0)$ results from the limit as $x_1\to 0$. 
To do that expand the quantities $R_{\pm}$ with $x_1=\xi$ up to order 4,  
\begin{equation}
R_{\pm}=\pi \pm 4\imath x_2 - \frac{8\xi^2}{\pi \pm 4\imath x_2}- \frac{32\xi^4}{(\pi \pm 4\imath x_2)^3}+O(\xi^6).
\end{equation}
Substituting these into the numerator and denominator one finds that both vanish to order $\xi^4$ and the quotient reads
\begin{equation}
H(0,x_2,0)^2=\left(-\tanh(2x_2)+\frac{16 x_2}{16 x_2^2 + \pi^2}\right)^2
\end{equation}  
which again coincides with Ward's answer in this case.

\section{Lam\'e's Equation}\label{applame}
Here we adopt the approach of Brown, Panagopoulos and Prasad \cite{bpp82}  to conjugate the equation $\Delta\sp\dagger v=0$ into a convenient form. The aim of this appendix is to show that
for each coordinate axis we may reduce the matrix differential equation $\Delta\sp\dagger v=0$ to solving
\[
\frac {d^{2}u}{d{z}^{2}
} \left( z \right) 
+\mathcal{U} (z)\,u(z)=\lambda_1\,u(z), 
\]
or the same equation for a shifted argument giving $w(z)$. Here
$$ \lambda_1=x_1^2-\frac14 {(1+ {k}^{2}){K}^{2} }, \qquad 
\lambda_2=x_2^2-\frac14 {{k}^{2}{K}^{2} }, \qquad 
\lambda_3=x_3^2-\frac14 {{K}^{2} }.
$$
With $Kz = 3K+2iK'+2 s$ we may put $\mathcal{U} (z)$ into the standard Lam\'e form
$$\left(\frac{d^2}{ds^2}-2 k^2  \mathrm{sn}^2(s)\right)F(s)=-\lambda\, F(s)$$
with $\lambda=1+k^2  \mathrm{cn}^2(t)=-4\lambda_j/K^2$ and the parameterization of
\S \ref{axesandjacobi}:
\begin{align*}
x_1:&\quad    \mathrm{sn}^2(t)=\frac{4x_1^2}{k^2 K^2},\qquad
x_2:\quad   \mathrm{dn}^2(t)=-\frac{4x_2^2}{K^2},\qquad
x_3:\quad    \mathrm{cn}^2(t)=-\frac{4x_3^2}{k^2 K^2}.
\end{align*}

We caution at the outset that  both the order of our tensor products and our Nahm data differ from those of \cite{bpp82}  and we shall relate our conventions shortly.
Set
$$\mathcal{P}:=\begin {pmatrix} 1&0&0&0\\ 0&0&1&0
\\0&1&0&0\\ 0&0&0&1\end {pmatrix},
\quad
\mathcal{Q}:=\frac1{\sqrt{2}}\begin {pmatrix} 1&1&0&0\\ 0&0&1&1
\\0&0&1&-1\\ 1&-1&0&0\end {pmatrix},
\quad
\mathcal{R}:=\frac1{\sqrt{2}}\begin {pmatrix} -i &0&0&i\\ 0&-1&-1&0
\\-i&0&0&-i\\ 0&-1&1&0\end {pmatrix}.
$$
Then we have $\mathcal{P}\left(\sum_{j=1}^3 \sigma_j \otimes x_j 1_2 \right) \mathcal{P}=
\sum_{j=1}^3 x_j 1_2  \otimes \sigma_j $. Then with
\begin{align*}\Delta\sp\dagger v&=
\left[\frac{\mathrm{d}}{\mathrm{d} z}+\frac12 \left(\sum_{j=1}^3 \sigma_j\otimes \sigma_j f_j(z) \right)
-\left(\sum_{j=1}^3 \sigma_j \otimes x_j 1_2 \right) \right]
\boldsymbol{v}(z,\boldsymbol{x})\\
&=
\left[\frac{\mathrm{d}}{\mathrm{d} z} 1_4+
\begin {pmatrix} f_{{3}}/2-x_{{3}}&0&-x_{{1}}+ix_{{2}}&
f_{{1}}/2-f_{{2}}/2 \\ 0&-f_{{3}}/2-x_{{3}}&
f_{{1}}/2+f_{{2}}/2&-x_{{1}}+ix_{{2}}\\ -x_{{1
}}-ix_{{2}}&f_{{1}}/2+f_{{2}}/2&-f_{{3}}/2+x_{{3}}&0
\\ f_{{1}}/2-f_{{2}}/2&-x_{{1}}-ix_{{2}}&0&
f_{{3}}/2+x_{{3}}\end {pmatrix} 
\right]\boldsymbol{v}(z,\boldsymbol{x})
\end{align*}
the conjugation
\begin{align}
{\tilde\Delta}\sp\dagger& :=
\mathcal{Q}\sp{-1}\mathcal{P}\,\Delta\sp\dagger\,
\mathcal{P}\mathcal{Q}
\nonumber
\\&=
\frac{\mathrm{d}}{\mathrm{d} z} 1_4+
\begin {pmatrix} (f_{{3}}+f_{{1}}-f_{{2}})/2&-x
_{{3}}&-x_{{1}}&ix_{{2}}\\ -x_{{3}}&(f_{{3}}-
f_{{1}}+f_{{2}})/2&ix_{{2}}&-x_{{1}}\\ -x_{{1}}&
-ix_{{2}}&(f_{{1}}+f_{{2}}-f_{{3}})/2&x_{{3}}
\\ -ix_{{2}}&-x_{{1}}&x_{{3}}&-(f_{{1}}+f_{
{2}}+f_{{3}})/2\end {pmatrix}
\label{conjdeltadaggerV}
\end{align}
brings this to the \emph{form} (6.2) of \cite{bpp82}.  This is the form of the equation to be studied.
The key observation of \cite{bpp82} is that for a coordinate axis (there the $x_2$-axis) the $4\times4$
problem (\ref{conjdeltadaggerV}) reduces to two $2\times2$ uncoupled equations. These in turn may be reduced to an  $n=1$ Lam\'e equation. Before turning to each of these reductions we first comment on the relation of the solutions to our earlier general solution and relate the conventions of \cite{bpp82} with those here.

In general the solutions $\mathcal{V}_{BPP}$ to ${\tilde\Delta}\sp\dagger\mathcal{V}_{BPP}=0$ are related to our earlier solutions by
\begin{align*}
\mathcal{V_{BPP}}&= \mathcal{Q}\sp{-1}\mathcal{P}\, V = \mathcal{Q}\sp{-1}\mathcal{P}\,
\left(1_2\otimes \mathcal{O}\,C\sp{-1}(z)\right)\,\frac1{\theta_2^2(z/2)}\,\Lambda_i 
\end{align*}
where (the conjugate of) $\Lambda_i$ was given in (\ref{LAMBDABAR}). Thus
\begin{align*}
{\overline {\mathcal{V}}}_{BPP}&= 
\mathcal{Q}\sp{-1}\mathcal{P}\,
\left(1_2\otimes \diag(p(z),p^{-1}(z))\,\mathcal{O}\right)\,\frac1{\theta_2^2(z/2)}\,{\overline{ \Lambda_i}}
\\
&=\frac1{\theta_2^2(z/2)}\,\frac12 \,
\begin {pmatrix} 
p \left( z \right) &-p \left( z \right) &
p \left( -z \right) &p \left( -z \right) \\ \noalign{\medskip}p
 \left( z \right) &-p \left( z \right) &-p \left( -z \right) &-p
 \left( -z \right) \\ \noalign{\medskip}p \left( -z \right) &p \left( 
-z \right) &p \left( z \right) &-p \left( z \right) 
\\ \noalign{\medskip}-p \left( -z \right) &-p \left( -z \right) &p
 \left( z \right) &-p \left( z \right)
\end{pmatrix} 
\,{\overline {\Lambda_i}}
\\
&=
\frac1{\theta_2^2(z/2)}\,
\begin {pmatrix} 
 \left( 1/2\,iB-1/2\,iA \right) p \left( z
 \right) \zeta_i+ \left( 1/2\,\alpha-1/2\,\beta \right) p \left(z
 \right) + \left( 1/2\,B+1/2\,A \right) p \left( -z \right) 
\\ \noalign{\medskip} \left( 1/2\,iB-1/2\,iA \right) p \left( z
 \right) \zeta_i+ \left( 1/2\,\alpha-1/2\,\beta \right) p \left( z
 \right) + \left( -1/2\,B-1/2\,A \right) p \left( -z \right) 
\\ \noalign{\medskip} \left( -1/2\,iB-1/2\,iA \right) p \left( -z
 \right) \zeta_i+ \left( -1/2\,B+1/2\,A \right) p \left( z \right) +
 \left( 1/2\,\alpha+1/2\,\beta \right) p \left( -z \right) 
\\ \noalign{\medskip} \left( 1/2\,iB+1/2\,iA \right) p \left( -z
 \right) \zeta_i+ \left( -1/2\,B+1/2\,A \right) p \left( z \right) +
 \left( -1/2\,\alpha-1/2\,\beta \right) p \left( -z \right) 
\end{pmatrix}
\end{align*}
With our expansion
\begin{equation}\label{asymptildev1}
V(1-\xi)\sim
\begin{pmatrix}
{\dfrac {-x_{{1}}+ix_{{2}}}{\sqrt {\xi}}}+
\sqrt {\xi}\,a\\ \noalign{\medskip}{\xi}^{-3/2}+{\dfrac {x_{{3}}}{\sqrt {
\xi}}}+\sqrt {\xi} \left( b-1/2\,{r}^{2} \right) \\ \noalign{\medskip}
-{\xi}^{-3/2}+{\dfrac {x_{{3}}}{\sqrt {\xi}}}+\sqrt {\xi} \left( b+1/2
\,{r}^{2} \right) \\ \noalign{\medskip}{\dfrac {x_{{1}}+ix_{{2}}}{
\sqrt {\xi}}}+\sqrt {\xi}\,c
\end{pmatrix}
\quad\textrm{then}\quad
{{\mathcal{V}}}_{BPP}(1-\xi)\sim
\begin{pmatrix}
\dfrac{\left( a+c \right)}{\sqrt {2} }\sqrt {\xi
}+{\dfrac {i\sqrt {2}x_{{2}}}{\sqrt {\xi}}}\\
\dfrac{ \left( a-c \right)}{\sqrt {2} } \sqrt {\xi}-{\dfrac {\sqrt {2}x_{{1}}}{
\sqrt {\xi}}}
\\ 
\sqrt {2}\,b\,\sqrt {\xi}+{\dfrac {\sqrt 
{2}x_{{3}}}{\sqrt {\xi}}}
\\ \dfrac{ {r}^{2}}{\sqrt {2} }
\sqrt {\xi}-{\dfrac {\sqrt {2}}{{\xi}^{3/2}}}\end{pmatrix}
\end{equation}
and we find ${\tilde\Delta}\sp\dagger{{\mathcal{V}_{BPP}}}=O(\xi^{1/2})$.
Similarly
\begin{equation}\label{asymptildev2}
V(\xi-1)\sim
\begin{pmatrix}
{\xi}^{-3/2}-{\dfrac {x_{{3}}}{\sqrt {\xi}}}+
\sqrt {\xi} \left( b-1/2\,{r}^{2} \right) \\ \noalign{\medskip}{\dfrac 
{-x_{{1}}+ix_{{2}}}{\sqrt {\xi}}}-\sqrt {\xi}a\\ \noalign{\medskip}-{
\dfrac {x_{{1}}+ix_{{2}}}{\sqrt {\xi}}}+\sqrt {\xi}c
\\ \noalign{\medskip}{\xi}^{-3/2}+{\dfrac {x_{{3}}}{\sqrt {\xi}}}-
\sqrt {\xi} \left( b+1/2\,{r}^{2} \right)
\end{pmatrix}
\quad\textrm{then}\quad
{{\mathcal{V}}}_{BPP}(\xi-1)\sim
\begin{pmatrix}
-\dfrac{{r}^{2}}{\sqrt {2}}{\sqrt {\xi}}+{\dfrac {\sqrt {2}}{{\xi}^{3/2}}}
\\  \sqrt {2}\,b\sqrt {\xi}-{
\dfrac {\sqrt {2}x_{{3}}}{\sqrt {\xi}}}
\\ -\dfrac{\left( a-c \right) }{\sqrt {2}} \sqrt {\xi}-{\dfrac {\sqrt {2}x_{{1}}}{
\sqrt {\xi}}}\\ 
\dfrac{ \left( a+c \right) }{\sqrt{2}}
\sqrt {\xi}-{\dfrac {i\sqrt {2}x_{{2}}}{\sqrt {\xi}}}
\end{pmatrix}.
\end{equation}

\subsection{Comparison of Notation}
To compare with \cite{bpp82}  we note their
choice of the functions $f_j(z)$ are cyclically shifted from ours and their  spatial coordinate are the opposite  
of ours. Denoting the \cite{bpp82} choices by $\tilde f_j$ , $\tilde x_j$ and $\tilde z$ then 
$$\tilde f_1=f_3/Kk',\quad \tilde f_2=f_1/Kk',\quad \tilde f_3=f_2/Kk'.$$
By comparing 
\begin{align*}
\mathcal{R}&\,\Delta\sp\dagger\,
\mathcal{R}\sp{-1}
=
\frac{\mathrm{d}}{\mathrm{d} z} 1_4
\\&\quad
 +
 \begin {pmatrix}\dfrac{Kk'}2 (\tilde f_{{3}}+\tilde f_{{1}}-\tilde f_{{2}})&-x
_{{2}}&-x_{{3}}&ix_{{1}}\\ -x_{{2}}&\dfrac{Kk'}2(\tilde f_{{3}}-
\tilde f_{{1}}+\tilde f_{{2}})&ix_{{1}}&-x_{{2}}\\ -x_{{3}}&
-ix_{{1}}&\dfrac{Kk'}2(\tilde f_{{1}}+\tilde f_{{2}}-\tilde f_{{3}})&x_{{2}}
\\ -ix_{{1}}&-x_{{3}}&x_{{2}}&-\dfrac{Kk'}2(\tilde f_{{1}}+\tilde f_{
{2}}+\tilde f_{{3}})\end {pmatrix}
\end{align*}
with \cite{bpp82} we see that
$$\tilde x_1=-x_3/Kk',\quad \tilde x_2=-x_1/Kk',\quad \tilde x_3=-x_2/Kk', \qquad \tilde z =Kk'z.$$

\subsection{The \texorpdfstring{$x_2$}{x2} axis}
We shall now reduce  (\ref{conjdeltadaggerV}) for each of the coordinate axes in turn to a
$n=1$ Lam\'e's equation; having done that we will solve for this, and so solve
 (\ref{conjdeltadaggerV}) for the three axes. First the $x_2$ axis.
With $x_1=0=x_3$ the equations ${\tilde\Delta}\sp\dagger\mathcal{V}=0$ decouple into
\[
\begin{pmatrix}
{\dfrac {d}{dz}}v_{{1}} \left( z \right) +
 \left( 1/2\,f_{{3}} \left( z \right) +1/2\,f_{{1}} \left( z \right) -
1/2\,f_{{2}} \left( z \right)  \right) v_{{1}} \left( z \right) +ix_{{
2}}v_{{2}} \left( z \right) \\ \noalign{\medskip}{\dfrac {d}{dz}}v_{{2}
} \left( z \right) -ix_{{2}}v_{{1}} \left( z \right) + \left( -1/2\,f_
{{1}} \left( z \right) -1/2\,f_{{2}} \left( z \right) -1/2\,f_{{3}}
 \left( z \right)  \right) v_{{2}} \left( z \right) \end {pmatrix}=0
\]
and
\[
\begin{pmatrix} 
{\dfrac {d}{dz}}w_{{1}} \left( z \right) +
 \left( 1/2\,f_{{3}} \left( z \right) -1/2\,f_{{1}} \left( z \right) +
1/2\,f_{{2}} \left( z \right)  \right) w_{{1}} \left( z \right) +ix_{{
2}}w_{{2}} \left( z \right) \\ \noalign{\medskip}{\dfrac {d}{dz}}w_{{2}
} \left( z \right) -ix_{{2}}w_{{1}} \left( z \right) + \left( 1/2\,f_{
{1}} \left( z \right) +1/2\,f_{{2}} \left( z \right) -1/2\,f_{{3}}
 \left( z \right)  \right) w_{{2}} \left( z \right)\end {pmatrix}=0.
\]
Consider the first of these. Solving the first entry for $v_2$ yields
\[
v_2(z)=
\frac {i}{2x_{{2}}}\left( v_{{1}} \left( z \right) f_{{3}} \left( z
 \right) +v_{{1}} \left( z \right) f_{{1}} \left( z \right) -v_{{1}}
 \left( z \right) f_{{2}} \left( z \right) +2\,{\frac {d}{dz}}v_{{1}}
 \left( z \right)  \right) 
\]
which after substituting in the second component gives
a second order equation of the form 
$$v_1''(z)-f_2(z)\, v_1'(z)+\tilde  A(z)v_1(z)=0.$$
Now
\begin{equation}
f_1 = \frac{d}{dz}\ln(f_2+f_3),\quad 
f_2 = \frac{d}{dz}\ln(f_3+f_1),\quad 
f_3 = \frac{d}{dz}\ln(f_1+f_2),
\end{equation}
and so upon introducing the integrating factor
$$v_1(z)=u(z)\,\exp(\frac12 \int\sp{z}f_2(s)ds )=u(z)\,\sqrt{f_1(z)+f_3(z)}$$
we obtain the equation
\[
\frac {d^{2}u}{d{z}^{2}
} \left( z \right) 
+A (z)\,u(z)=0
\]
where
\[ A(z)=
 -1/4\, \left( f_{{3}} \left( z \right)  \right) ^{2}-1/2\,f_{{
1}} \left( z \right) f_{{3}} \left( z \right) -1/4\, \left( f_{{1}}
 \left( z \right)  \right) ^{2}+1/2\,f_{{1}} \left( z \right) f_{{2}}
 \left( z \right) -{x_{{2}}}^{2}+1/2\,f_{{3}} \left( z \right) f_{{2}}
 \left( z \right) .
\]
We shall show that this is a Lam\'e equation. In terms of $u(z)$ we have
\begin{equation}
v_2(z)=\frac{i}{2 x_2}\left[ 2 u'(z)+(f_3(z)+f_1(z))\, u(z)\right]\,\sqrt{f_1(z)+f_3(z)}.
\end{equation}

When performing the same eliminations for the second set of equations we find that the integrating factor is the inverse of that found. We have
\[ w_1(z)= \frac{w(z)}{ \sqrt{f_1(z)+f_3(z)} },\quad
w_2(z)=\frac{i}{2 x_2}\frac{ 2w'(z)+   (f_3(z)-f_1(z))\, w(z)   }{ \sqrt{f_1(z)+f_3(z)} },
\]
where now
\begin{align*}
\frac {d^{2}w}{d{z}^{2}
} \left( z \right) 
+B(z)\,w(z)=0,\quad
B(z)=A(z)+f_3(z)\left[ f_1(z)-f_2(z)\right].
\end{align*}

At this stage we have a solution of the form
\[\mathcal{V}=
c_1\mathcal{V}_1+
c_2\mathcal{V}_2
\]
where
\begin{align}\label{v1x2}
\mathcal{V}_1&=
\begin{pmatrix}u(z)\,\sqrt{f_1(z)+f_3(z)} \\0\\0\\
\dfrac{i}{2 x_2}\left[ 2 u'(z)+(f_3(z)+f_1(z))\, u(z)\right]\,\sqrt{f_1(z)+f_3(z)} \end{pmatrix},
\\ \label{v2x2}
\mathcal{V}_2&=
\begin{pmatrix}0\\ \dfrac{w(z)}{ \sqrt{f_1(z)+f_3(z)} }\\ \dfrac{i}{2 x_2}\dfrac{ 2w'(z)+   (f_3(z)-f_1(z))\, w(z)   }{ \sqrt{f_1(z)+f_3(z)} }\\ 0 \end{pmatrix}
\end{align}
and this has precisely the asymptotics of (\ref{asymptildev1}, \ref{asymptildev2}) for regular solutions $u$ and $w$.

To proceed we now construct the relevant Lam\'e equations. Although our choices for the $f_j$'s differ from
those of \cite{bpp82} we will obtain the same equation though for shifted arguments.
First, noting that $f_{{3}} ^{\,2} \left( z \right) =-{k}^{2}{K}^{2}+ f_{{1}}^{\,2} \left( z \right) 
$
we obtain the equation
\begin{equation}\label{lame1}
\frac {d^{2}u}{d{z}^{2}
} \left( z \right) 
+\mathcal{U} (z)\,u(z)=\lambda_2\,u(z), \qquad  \lambda_2=x_2^2-\frac14 { {k}^{2}{K}^{2} },
\end{equation}
with
\begin{align*}
\mathcal{U} (z)&=
\frac12\left[ f_{{2}} \left( z \right) f_{{3}} \left( z \right) -f_{{1}} \left( z \right) f_
{{3}} \left( z \right) +f_{{1}} \left( z \right) f_{{2}} \left( z
 \right)- f
_{{1}}  ^{\,2}  \left( z \right)\right],
\end{align*}
and also 
\[
\frac {d^{2}w}{d{z}^{2}
} \left( z \right) 
+\mathcal{W} (z)\,w(z)=\lambda_2\,w(z), \qquad  \lambda_2=x_2^2-\frac14 { {k}^{2}{K}^{2} },
\]
with
\begin{align*}
\mathcal{W} (z)&=
\frac12\left[ -f_{{2}} \left( z \right) f_{{3}} \left( z \right) +f_{{1}} \left( z \right) f_
{{3}} \left( z \right) +f_{{1}} \left( z \right) f_{{2}} \left( z
 \right)- f
_{{1}}  ^{\,2}  \left( z \right)\right].
\end{align*}

Let us record the translation properties of our functions $f_j(z)$ (\ref{nahmsolution}):
\begin{align*}
f_1(\pm z)&=f_1(z),&& f_1(2\pm z)=-f_1(z), && f_1(2\tau\pm z)=f_1(z),\\
f_2(\pm z)&=\pm f_2(z),&& f_2(2\pm z)=\pm f_2(z), && f_2(2\tau\pm z)=\mp f_2(z),\\
f_3(\pm z)&=f_3(z),&& f_3(2\pm z)=-f_3(z), && f_3(2\tau\pm z)=-f_3(z).
\end{align*}
Thus
\[ \mathcal{W} (z)=\mathcal{U} (2\tau-z),\qquad w(z)=u (2\tau-z),
\]
and our analysis reduces to the study of $\mathcal{U}(z)$.

Observe that from our asymptotics (\ref{expandfs}, \ref{expandfsm}) of $f_j(z)$ that
$$\mathcal{U}(1-\xi)=O(\xi), \qquad \mathcal{U}(\xi-1)\sim -\frac{2}{\xi^2}-\frac{1+k^2}6 K^2,$$
which means that from (\ref{lame1})
\begin{align}\label{asux2p}
u(1-\xi)&= \textrm{constant} + O(\xi),\\ \label{asux2m}
u(\xi-1)&=\frac1\xi-\left[ \frac1{12}(1+k^2)K^2+\frac12\lambda_2\right]\xi+O(\xi^2)\nonumber \\ &=
\frac1\xi-\left[\frac{x_2^2}2+ \frac1{24}(2-k^2)K^2\right]\xi+O(\xi^2).
\end{align}
When substituted into (\ref{v1x2}) this yields (\ref{asymptildev1}). We also see that both  $w(1-\xi)$ and
$w(\xi-1)$ are regular.

Finally, let
$$Kz = 3K+2iK'+2 s.$$
Then
$$\mathcal{U}(z)=-\frac12 k^2 K^2 \mathrm{sn}^2(s)$$
and we arrive at the first Lam\'e equation
\[
\frac {d^{2}F}{d{s}^{2}
} (s)
- 2 k^2  \mathrm{sn}^2(s)\,F(s)
=\left(\frac{4 x_2^2}{K^2}-k^2\right)\,F(s):=-\lambda\,F(s), \qquad F(s)=u(z).
\]
If we set (this is the parameterization of \S \ref{axesandjacobi})
$$\lambda=1+k^2  \mathrm{cn}^2(t)=k^2-\frac{4 x_2^2}{K^2}=-\frac{4\lambda_2}{K^2},\quad\textrm{or}\quad 
 \mathrm{dn}^2(t)=-\frac{4 x_2^2}{K^2},
$$
and noting that if $F(s)$ is a solution so too is  $F(-s)$ (because $\mathrm{sn}^2(-s)=\mathrm{sn}^2(s)$),
then in section \S\ref{lamesect} we show that Hermite's eigenfunctions in this case are
\[ \frac{H(s+t)}{\Theta(s)}\exp\left\{-s Z(t)\right\}, \qquad \frac{H(-s+t)}{\Theta(s)}\exp\left\{s Z(t)\right\}.\]
We note that
$$x_2=0 \Longleftrightarrow  t= K+iK'.$$

\subsection{The \texorpdfstring{$x_1$}{x1} axis}
With $x_2=0=x_3$ the equations ${\tilde\Delta}\sp\dagger\mathcal{V}=0$ decouple into
\[
\begin{pmatrix}
{\dfrac {d}{dz}}v_{{1}} \left( z \right) +
 \left( 1/2\,f_{{3}} \left( z \right) +1/2\,f_{{1}} \left( z \right) -
1/2\,f_{{2}} \left( z \right)  \right) v_{{1}} \left( z \right) -x_{{
1}}v_{{2}} \left( z \right) \\ \noalign{\medskip}{\dfrac {d}{dz}}v_{{2}
} \left( z \right) -x_{{1}}v_{{1}} \left( z \right) + \left( 1/2\,f_
{{1}} \left( z \right) +1/2\,f_{{2}} \left( z \right) -1/2\,f_{{3}}
 \left( z \right)  \right) v_{{2}} \left( z \right) \end {pmatrix}=0
\]
and
\[
\begin{pmatrix} 
{\dfrac {d}{dz}}w_{{1}} \left( z \right) +
 \left( 1/2\,f_{{3}} \left( z \right) -1/2\,f_{{1}} \left( z \right) +
1/2\,f_{{2}} \left( z \right)  \right) w_{{1}} \left( z \right) -x_{{
1}}w_{{2}} \left( z \right) \\ \noalign{\medskip}{\dfrac {d}{dz}}w_{{2}
} \left( z \right) -x_{{1}}w_{{1}} \left( z \right) + \left( -1/2\,f_{
{1}} \left( z \right) -1/2\,f_{{2}} \left( z \right) -1/2\,f_{{3}}
 \left( z \right)  \right) w_{{2}} \left( z \right)\end {pmatrix}=0.
\]

Following the same steps as before we now have a solution of the form
\[\mathcal{V}=
c_1\mathcal{V}_1+
c_2\mathcal{V}_2
\]
where
\begin{align*}
\mathcal{V}_1&=
\begin{pmatrix}
 \dfrac{u(z)}{\sqrt{f_2(z)+f_3(z)}  }\\
0\\
\dfrac{1}{2 x_1}\dfrac{2 u'(z)+(f_3(z)-f_2(z))\, u(z)}{\sqrt{f_2(z)+f_3(z)}}\\
0\\
  \end{pmatrix},
\\
\mathcal{V}_2&=
\begin{pmatrix}  
 0\\
 {w(z)}\,{ \sqrt{f_2(z)+f_3(z)} }\\ 
0\\ 
\dfrac{1}{2 x_1}\left[ 2w'(z)+   (f_3(z)+f_2(z))\, w(z)   \right]{ \sqrt{f_2(z)+f_3(z)} }
 \end{pmatrix}.
\end{align*}
Here
\[
\frac {d^{2}u}{d{z}^{2}
} \left( z \right) 
+\mathcal{U} (z)\,u(z)=\lambda_1\,u(z), \qquad  \lambda_1=x_1^2-\frac14 {(1+ {k}^{2}){K}^{2} },
\]
where again
\begin{align*}
\mathcal{U} (z)&=
\frac12\left[ f_{{2}} \left( z \right) f_{{3}} \left( z \right) -f_{{1}} \left( z \right) f_
{{3}} \left( z \right) +f_{{1}} \left( z \right) f_{{2}} \left( z
 \right)- f
_{{1}}^{\,2} \left( z \right)   \right],
\end{align*}
and also
\[
\frac {d^{2}w}{d{z}^{2}
} \left( z \right) 
+\mathcal{W} (z)\,w(z)=\lambda_1\,w(z), \qquad  \lambda_1=x_1^2-\frac14 {(1+ {k}^{2}){K}^{2} },
\]
with
\begin{align*}
\mathcal{W} (z)&=
\frac12\left[ -f_{{2}} \left( z \right) f_{{3}} \left( z \right) +f_{{1}} \left( z \right) f_
{{3}} \left( z \right) +f_{{1}} \left( z \right) f_{{2}} \left( z
 \right)- f
_{{1}} ^{\,2} \left( z \right)  \right].
\end{align*}

Now we have the parameterisation (this is the parameterization of \S \ref{axesandjacobi})
$$\lambda=1+k^2  \mathrm{cn}^2(t)=\frac4{K^2}\left( \frac14(1+k^2)K^2-x_1^2 \right)
\quad\textrm{or}\quad 
 \mathrm{sn}^2(t)=\frac{4 x_1^2}{k^2 K^2}.
$$
We note that depending on whether $x_1<Kk/2$ or not, the value of $t$ changes from real to 
a general complex number. This effects the nature of the functions $u(z)$ and the argument of Brown
\emph{et al.} for a real solution breaks down.

\subsection{The \texorpdfstring{$x_3$}{x3} axis}
There are a few differences in this case. With $x_1=0=x_2$ we have the decoupled equations
\[
\begin{pmatrix}
{\dfrac {d}{dz}}v_{{1}} \left( z \right) +
 \left( 1/2\,f_{{3}} \left( z \right) +1/2\,f_{{1}} \left( z \right) -
1/2\,f_{{2}} \left( z \right)  \right) v_{{1}} \left( z \right) -x_{{
3}}v_{{2}} \left( z \right) \\ \noalign{\medskip}{\dfrac {d}{dz}}v_{{2}
} \left( z \right) -x_{{3}}v_{{1}} \left( z \right) + \left( 1/2\,f_
{{3}} \left( z \right) -1/2\,f_{{1}} \left( z \right) +1/2\,f_{{2}}
 \left( z \right)  \right) v_{{2}} \left( z \right) \end {pmatrix}=0
\]
and
\[
\begin{pmatrix} 
{\dfrac {d}{dz}}w_{{1}} \left( z \right) +
 \left( 1/2\,f_{{1}} \left( z \right) +1/2\,f_{{2}} \left( z \right) -
1/2\,f_{{3}} \left( z \right)  \right) w_{{1}} \left( z \right) +x_{{
3}}w_{{2}} \left( z \right) \\ \noalign{\medskip}{\dfrac {d}{dz}}w_{{2}
} \left( z \right) +x_{{3}}w_{{1}} \left( z \right) + \left( -1/2\,f_{
{1}} \left( z \right) -1/2\,f_{{2}} \left( z \right) -1/2\,f_{{3}}
 \left( z \right)  \right) w_{{2}} \left( z \right)\end {pmatrix}=0.
\]

Again  solutions have  the form
\[\mathcal{V}=
c_1\mathcal{V}_1+
c_2\mathcal{V}_2
\]
where now
\begin{align*}
\mathcal{V}_1&=
\begin{pmatrix} \dfrac{u(z)}{\sqrt{f_1(z)+f_2(z)}  }\\
\dfrac{1}{2 x_3}\dfrac{2 u'(z)+(f_1(z)-f_2(z))\, u(z)}{\sqrt{f_1(z)+f_2(z)}}  \\0\\0 \end{pmatrix},
\\
\mathcal{V}_2&=
\begin{pmatrix} 
\\0\\0\\
 {w(z)}\,{ \sqrt{f_1(z)+f_2(z)} }\\ 
 - \dfrac{1}{2 x_3}\left[ 2w'(z)+   (f_1(z)+f_2(z))\, w(z)   \right]{ \sqrt{f_1(z)+f_2(z)} } \end{pmatrix}.
\end{align*}
Here
\[
\frac {d^{2}u}{d{z}^{2}
} \left( z \right) 
+\mathcal{U} (z)\,u(z)=\lambda_3\,u(z), \qquad  \lambda_3=x_3^2-\frac14 {{K}^{2} },
\]
where again
\begin{align*}
\mathcal{U} (z)&=
\frac12\left[ f_{{2}} \left( z \right) f_{{3}} \left( z \right) -f_{{1}} \left( z \right) f_
{{3}} \left( z \right) +f_{{1}} \left( z \right) f_{{2}} \left( z
 \right)- f
_{{1}} ^{\,2} \left( z \right)  \right],
\end{align*}
but now
\[
\frac {d^{2}w}{d{z}^{2}
} \left( z \right) 
+{\tilde{\mathcal{W} }}(z)\,w(z)=\lambda_3\,w(z), \qquad  \lambda_3=x_3^2-\frac14 {{K}^{2} },
\]
with
\begin{align*}
{\tilde{\mathcal{W} }}(z)&=\mathcal{W} (2K+z)=
\frac12\left[ f_{{2}} \left( z \right) f_{{3}} \left( z \right) +f_{{1}} \left( z \right) f_
{{3}} \left( z \right) -f_{{1}} \left( z \right) f_{{2}} \left( z
 \right)- f
_{{1}} ^{\,2}  \left( z \right) \right].
\end{align*}
Thus
\[ {\tilde{\mathcal{W} }}(z)=\mathcal{U} (2K+2iK'-z),\qquad w(z)=u (2K+2iK'-z),
\]
and our analysis again reduces to the study of $\mathcal{U}(z)$.

\subsection{\texorpdfstring{$n=1$}{n1} Lam\'e Equation}\label{lamesect}
We have shown that our matrix differential equation may be  reduced to the same Lam\'e equation for each of the coordinate axes and here we recall the solutions to this. 
We have that
\[\Phi(u,a)=\frac{\sigma(u+a)}{\sigma(u)\sigma(a)}\exp\{-u\,\zeta(a)\} \]
solves
\[ \left[\dfrac{d^2}{du^2}-2\wp(u)\right] \Phi(u,a)= \wp(a)\,\Phi(u,a). \]

The Jacobi functions are
\[ H(u)=\theta_1(u/\theta_3^2(0)), \quad \Theta(u)=\theta_4(u/\theta_3^2(0)),\quad Z(u)=\frac{ \Theta'(u)}{ \Theta(u)}.
\]
Set
$\omega_1=\frac12\, \theta_3^2(0) =K$. Then
\begin{align*}
 \sigma(u)&= \frac{2\omega_1}{ \theta_1'(0)}\, \theta_1( u/(2\omega_1) )\,\exp(\eta_1 u^2/(2\omega_1) ),\\
 \zeta(u)&=\frac{\eta_1 u}{\omega_1}+\frac1{2\omega_1}\frac{ \theta_1'( u/(2\omega_1) )}{\theta_1( u/(2\omega_1) )}
\end{align*}
and
\[\Phi(u,a) = c\, \frac{\theta_1( [u+a]/(2\omega_1) )}{\theta_1( u/(2\omega_1) )}\,
\exp\left\{-\frac{u}{2\omega_1}\, \frac{ \theta_1'( a/(2\omega_1) }{\theta_1( a/(2\omega_1) } \right\}.
\]
Thus
\begin{align*}
\Phi(u-\omega_1 \tau, a+\omega_1 \tau)&=
c\, \frac{H(u+a)}{\theta_1( u/(2\omega_1) -\tau/2)}\,
\exp\left\{-\left(\frac{u}{2\omega_1}-\frac\tau2\right)\, \frac{ \theta_1'( a/(2\omega_1)+\tau/2 )}{\theta_1( a/(2\omega_1+\tau/2) } \right\}
\\
&=c'\, \frac{H(u+a)}{\theta_4( u/(2\omega_1) )\exp(i\pi  u/(2\omega_1) )}\,
\exp\left\{-\frac{u}{2\omega_1} 
\left[\frac{ \theta_4'( a/(2\omega_1) )}{\theta_4( a/(2\omega_1) )} -i \pi \right]\right\}
\\
&=c'\, \frac{H(u+a)}{\Theta(u)}\, \exp\left\{-u Z(a)\right\}
\end{align*}
and using $\tau=\omega_3/\omega_1$,
\[\wp(u+\omega_3)=-\frac13(1+k^2)+k^2\mathrm{sn}^2(u)\]
we obtain Hermite's solution.

We may write our solutions to (\ref{lame1}) as
\begin{equation}\label{lame1sol}
u(z)=
\dfrac{\theta_4\left( \frac{z+1}4 +\frac{t}{2K} \right)}{\theta_1\left( \frac{z+1}4  \right)}\,
\exp\left\{- \frac{z+1}4 \dfrac{ \theta_4'\left( \frac{t}{2K} \right) }   { \theta_4\left( \frac{t}{2K} \right) }              \right\}
\end{equation}
Now
\begin{align*}
\frac{\theta_4\left(v+\alpha\right)  \theta_1'(0) }{\theta_1\left(v\right)\theta_4\left(\alpha\right)}
\exp\left\{  -v \frac{ \theta_4'\left(\alpha\right)}{ \theta_4\left(\alpha\right)}
\right\}&=
\frac1{v}+\frac12\left[  \frac{ \theta_4''\left(\alpha\right)}{ \theta_4\left(\alpha\right)}  
-\left( \frac{ \theta_4'\left(\alpha\right)}{ \theta_4\left(\alpha\right)} \right)^2
-\frac13 \frac{ \theta_1'''(0)}{ \theta_1'\left(0\right)}
\right]v+O(v^2)\\
&=\frac1{v}-{2K^2}\,\wp(2K\alpha+\omega_3)\,v +O(v^2)
\\
&=\frac1{v}-16\left[ \frac1{12}(1+k^2)K^2+\frac12\lambda_2\right]\,v +O(v^2)
\end{align*}
which gives (\ref{asux2m}).

Using our observation that if $F(s)$ is a solution of Lam\'e's equation then so is $F(-s)$ we may construct a solution vanishing at $z=-1$ by taking
$$u(z)=
\dfrac{\theta_4\left( \frac{z+1}4 +\frac{t}{2K} \right)}{\theta_1\left( \frac{z+1}4  \right)}\,
\exp\left\{- \frac{z+1}4 \dfrac{ \theta_4'\left( \frac{t}{2K} \right) }   { \theta_4\left( \frac{t}{2K} \right) }              \right\}
-
\dfrac{\theta_4\left( - \frac{z+1}4 +\frac{t}{2K} \right)}{\theta_1\left( \frac{z+1}4  \right)}\,
\exp\left\{ \frac{z+1}4 \dfrac{ \theta_4'\left( \frac{t}{2K} \right) }   { \theta_4\left( \frac{t}{2K} \right) }              \right\}
.
$$

\section{Monopole Numerics and Visualisation\\ (by
David E. Braden, Peter Braden and H.W. Braden)}\label{appBBB}

\begin{center}{\small{
The numerical evaluation and visualisation of the charge $2$ monopole is described.}}

\end{center}

We describe here the numerical evaluation and visualisation of the charge $2$ monopole. The code and
numerical evaluation of the energy density are available via
\href{https://github.com/harrybraden/monopole}{GitHub}.
The numerical evaluation of the Higgs field and energy density implements the functions of the main text in python: the main procedures are described below. The key procedures are those that calculate $-\frac12 \Tr \Phi^2$ and the energy density $\mathcal{E}$ 
for a given $k$ and point in space $(x_1, x_2, x_3)$. These are then utilised to calculate the same quantities
on planes or axes as desired. We have given the energy density output in the directory \emph{python\_smoothed} for $k$ from $k=0.01$ to $k=0.99$ in steps of $0.01$.\footnote{Each of the subdirectories,
for example \emph{python\_smoothed/$k=0.01$}, contains $60$ files, each with the 
results of an $xy$-plane with a specified $z$-value from $z=0.025$ to $z=2.975$ in steps of $0.05$. 
The  $xy$-plane themselves are $60\times60$ arrays of double precision output for $x,y$ from
$0.025$ to $2.975$ in steps of $0.05$. The procedures allow arbitrary grids to be specified.
}

We have provided three tools to visualize the five dimensional datasets ($\mathcal{E},k,x_1,x_2,x_3$): two are interactive, and the third graphical. These may be also used for more complicated monopole configurations. We consider the data as three dimensional volumetric data.
Each of the interactive viewers allow data to be dragged around and resized. The
first \href{https://www.maths.ed.ac.uk/~hwb/browse.html}{https://www.maths.ed.ac.uk/~hwb/browse.html}
(see the first of Figure  \ref{bothvisualisers})
uses the energy density as opacity, and hides all volumes below a specified value in order to look inside the
volume. One may vary $k$ and the threshold energy density value. The opacity here is on a $0-255$
scale and the double precision energy density is converted to byte format. (The code also includes
options for \emph{ab initio} creating  energy density in byte form.)

The second visualizer defines a threshold above
which to consider as solid, and uses the Marching Cubes algorithm \cite{lh87}
to construct
a mesh of that threshold's contour.  These meshes can be visualised with many 
mesh viewers, or even 3D printed (see  the second of Figure  \ref{bothvisualisers}). The procedure \emph{
generatemesh.py} will take the value $k=0.6$ and threshold $0.55$ to produce a standard .obj file via \lq
{\tt python generatemesh.py 0.6 0.55 > test.obj}\rq. The resolution of the cubes is that coming from the numerical evaluation (in the Figure these are cubes of size $0.05^3$).

The third method of visualizing the data is a \lq Tomogram\rq\ that takes slices through the volume. We can plot the contours on
these images, or use colour to represent the density at that slice; Figure \ref{bothtoms} shows Tomograms with uniform and nonuniform colourings.
The second last column of these figures correspond to the $k$ value of Figure  \ref{bothvisualisers}.

\begin{center}
{\bf{Numerical and Visualisation Scripts}}
\end{center}
The key scripts will now be described, breaking these into the numerical determination of the relevant
quantities and then their visualisation.

\noindent{\bf{Numerical Scripts.}} The requirements for running these are
python 2.7
and the following pip packages: numpy, scipy, mpmath.

\begin{description}

\item[higgs\_squared.py] For input $(k, x_1, x_2, x_3)$
calculates $-\frac12 \Tr \Phi^2$ at the point $(x_1, x_2, x_3)$ of space and a parameter $k$ (between $0$  and $1$). This file also calculates $-\frac12 \Tr \Phi^2$ for various planes. It may be run by

{{ \tt python -c "from higgs\_squared import higgs\_squared; print(higgs\_squared(0.8, 0.1, 0.2, 0.3))"}}

\item[energy\_density.py]	 For input $(k, x_1, x_2, x_3)$
calculates the trace of the Higgs field squared for a point $(x_1, x_2, x_3)$ of space and a parameter $k$ (between $0$  and $1$). This file also calculates $-\frac12 \Tr \Phi^2$ for various planes. The file tests if
the spatial point corresponds to a multiple root or branch point.  It may be run by

{
{ \tt python -c "from energy\_density import energy\_density; print(energy\_density(0.8, 0.1, 0.2, 0.3))"}}

\item[Scripts for basic Functions] These functions are in the previous scripts. Given $k$, determining the curve, and a point in space $(k, x_1, x_2, x_3)$ the elementary operations are

\begin{itemize}
\item  \emph{quartic\_roots}$(k, x_1, x_2, x_3)$ gives (unordered) solutions $\zeta_i$  to the Atiyah-Ward constraint.

\item  \emph{order\_roots(roots)} orders the roots using the real structure (\ref{ordering}).

\item  \emph{calc\_zeta}$(k, x_1, x_2, x_3)=$ \emph{order\_roots(quartic\_roots}$(k, x_1, x_2, x_3))$

\item  \emph{calc\_eta}$(k, x_1, x_2, x_3)$ gives the corresponding $\mu_i$'s.

\item \emph{calc\_abel(k, zeta, eta)}  calculates the Abel image of a point ${P}=(\zeta, \eta)$. To get the correct choice of contour we compare the $\eta$-value given by the theta function (\ref{eqparam}) given by \emph{calc\_eta\_by\_theta}$(k, z)$ and use \emph{abel\_select}: if they agrees the Abel image is accepted and if not it is shifted by the half period to the correct sheet.

\item  \emph{calc\_mu}$(k, x1, x2, x3, \zeta, abel)$ calculates the one transcendental function $\mu$.

\item \emph{is\_awc\_multiple\_root}$(k, x_1, x_2, x_3)$  tests if there are multiple roots.

\item \emph{is\_awc\_branch\_point}$(k, x_1, x_2, x_3)$   tests if we get a branch point as a roots; these are numerically unstable.

\end{itemize}

\item[python\_expressions, modified\_expressions] These directories contains the code for such  expressions as the Gram matrix,
the Higgs, and the various first and second derivatives of $\zeta_i$'s, $\mu_i$'s. When the the expressions are very long, the appropriate matrix element of the matrices is given.

\item[python\_smoothed] As described above, this directory gives the double precision output for 
the energy density  for $k$ from $k=0.01$ to $k=0.99$ in steps of $0.01$.

\end{description}

\noindent{\bf{Visualisation Scripts.}}
The dependencies for the visualisation tools may be installed via the Makefile. The README.md contains instructions for using the tools.%
We have
\begin{description}

\item[contours-image.py] Generates the Tomogram image.

\item[generatemesh.py] Generates a 3d {\tt .obj} file from the data.
\item[generate-image-data.py] Generates the image data for the interactive web visualiser.

\item[visualise] This directory contains the interactive visualiser.

\end{description}

\noindent{\bf{File Handling and Smoothing.}} A number of files deal with file handling.

\begin{description}

\item[array\_tools.py, array\_tools\_float.py] General file handling and reflection of first quadrant data.

\item[data.py] Load and manipulate the data.

\item[files.py] read and write floating point files.

\item[file\_converter.py] converts floating point to bytes.

\item[file\_smoother.py] smooths the data on the exceptional loci arising from bitangency.

\item[simplify\_script.py] modifies a number of python expressions in order to evaluate them faster; the resulting files are in the modified\_expressions directory.

\item[smoothing\_tools.py] for smoothing arrays.

\end{description}


\bibliographystyle{plain}

\def\cprime{$'$} \def\cdprime{$''$} \def\cprime{$'$}

\end{document}